\newtheorem{assumption}{Assumption}
\newcommand{\p}{\mathcal P}
\newcommand{\I}{{\mathcal I }}
\newcommand{\lt}{ {\mathcal L^2}}
\DeclareMathOperator*{\argmin}{arg\,min}
\begin{document}

\title{Spatio-temporal  model via Locally Adaptive Regression Splines}

\author{\name Carlos Misael Madrid Padilla \email carlos.madrid@cimat.mx \\
       \addr Department of Statistics and Data Science\\
       Washington University in St. Louis\\
       1 Brookings Drive, St. Louis, MO 63130, USA
       \AND
       \name Oscar Hernan Madrid Padilla \email oscar.madrid@stat.ucla.edu \\
       \addr Department of Statistics and Data Science\\
       University of California, Los Angeles\\
       8125 C/D Math Sciences Bldg.
        Los Angeles, CA 90095-1554, USA
       \AND
       \name Daren Wang \email darenwangoffcampus@gmail.com \\
       \addr Department of Mathematics\\
       University of California, San Diego\\
       9500 Gilman Drive, La Jolla, CA 92093-0112, USA
       }

\editor{}

\maketitle

\begin{abstract}
This paper focuses on the estimation of a non-parametric regression function in the presence of data with spatio-temporal dependencies. In such a context, we study {\color{black}{Locally Adaptive Regression Splines}}, a nonparametric estimator introduced by \cite{mammen1997locally} and \cite{rudin1992nonlinear}. To the best of our knowledge, this estimator has not previously been examined in a similar context. For univariate settings, the signals we consider are assumed to have a kth weak derivative with bounded total variation, allowing for a general degree of smoothness. In the multivariate setting, we study a variant of the $K$-Nearest Neighbor fused lasso estimator studied by \cite{padilla2018adaptive}. For this case,  the function is required to have bounded variation and satisfy a property that extends a piecewise Lipschitz continuity criterion, or the function is assumed to be piecewise Lipschitz. We develop an ADMM algorithm for practical computation. By aligning with lower bounds, the minimax optimality of our univariate and multivariate estimators is shown. A unique phase transition phenomenon, previously unprecedented in {\color{black}{Locally Adaptive Regression Splines}} studies, emerges through our analysis. Both simulation studies and real data applications underscore the superior performance of our method when compared with established techniques in the existing literature.
\end{abstract}

\begin{keywords}
  Spatio-temporal data, Locally Adaptive Regression Splines, Trend Filtering, $K$-Nearest Neighbor fused lasso, minimax rates, nonparametric regression
\end{keywords}

\section{Introduction}

The recent advancements in technology have transformed data collection and analysis,
leading to increased availability of spatio-temporal data in various fields such as
environmental science \citep[e.g.,][]{cressie2015statistics,wen2019novel},
epidemiology \citep[e.g.,][]{aswi2019bayesian,briz2020spatio},
urban planning \citep[e.g.,][]{liu2020high,tian2021assessing},
agriculture \citep[e.g.,][]{dong20174d,jayasiri2022spatio}, and others.
With the surge in the availability of rich datasets that capture both temporal and
spatial dimensions, there has been a corresponding rise in the development and
application of spatio-temporal models. These models address the unique challenges and
harness the opportunities presented by data that span time and space. For a
comprehensive overview of advancements in spatio-temporal modeling, readers are
directed to \cite{wikle2019spatio}.

In this paper, we consider a spatio-temporal nonparametric setting in which the
observations
\(
\{(x_{i,j}, y_{i,j})\}_{i=1,j=1}^{n,m_i} \subseteq [0,1]^d \times \mathbb{R}
\)
are generated according to
\begin{align}\label{eq:model l2}
    y_{i,j}
    = f^*(x_{i,j}) + \delta_i(x_{i,j}) + \epsilon_{i,j},
    \qquad i=1,\ldots,n,\;\; j=1,\ldots,m_i .
\end{align}
Here, $\{x_{i,j}\}_{i=1,j=1}^{n,m_i}\subseteq[0,1]^d$ denote the random spatial locations
where the (noisy) spatio-temporal observations
$\{y_{i,j}\}_{i=1,j=1}^{n,m_i}\subseteq\mathbb{R}$ are collected.
The function $f^*:[0,1]^d\to\mathbb{R}$ represents the unknown deterministic mean
structure, $\{\delta_i:[0,1]^d\to\mathbb{R}\}_{i=1}^n$ denotes spatial noise that varies
over time, and $\{\epsilon_{i,j}\}_{i=1,j=1}^{n,m_i}$ are measurement errors.
In this context, the index $i$ represents the temporal dimension (e.g., days or
replicates), while the index $j$ corresponds to spatial locations observed at each time
point, with $m_i$ denoting the number of spatial observations at time $i$.
We refer to Assumption~\ref{assume:tv functions} for detailed technical conditions on the
model.

Estimating the unknown regression function $f^*$ has been the subject of extensive
research. Wavelet shrinkage methods have been studied for data with long-range
dependence \citep{wang1996function}, while \cite{rice1991estimating} focused on estimating
mean and covariance structures when curves are observed at different locations.
For regression models with correlated errors and smooth mean functions,
\cite{hall1990nonparametric} proposed a nonparametric approach.
In settings with correlated measurements,
\cite{opsomer2001nonparametric} and \cite{rice2001nonparametric} investigated smoothing
methods, while \cite{johnstone1997wavelet} studied wavelet-based approaches and
\cite{cai2011optimal} focused on kernel methods.
Unlike these works, we allow for dependence of the measurements
$\{(x_{i,j}, y_{i,j})\}_{i=1,j=1}^{n,m_i}$ across the temporal index $i$.

{\color{black}{Over the last decade, there has been substantial interest in a class of
$\ell_1$-regularized nonparametric regression estimators based on total variation
penalties, introduced by \citet{mammen1997locally}.
This approach, known as \emph{locally adaptive regression splines}, was further developed
and studied in, e.g.,
\citet{tibshirani2005sparsity,kim2009ell_1,tibshirani2014adaptive}, through a
closely related estimator referred to as \emph{Trend Filtering}.
These methods aim to perform nonparametric regression, typically under independent
errors, by fitting piecewise polynomial functions.
In one dimension, this class of estimators is known to perform well for estimating
signals whose $k$th weak derivative has bounded total variation
\citep{mammen1997locally,tibshirani2014adaptive}.
Specifically, the locally adaptive regression spline estimator solves the optimization
problem given by}}
\begin{equation}\label{eqn:e1}
    \underset{f \in \mathcal{F}_k}{\min}
    \left\{
    \sum_{i=1}^{n} \sum_{j=1}^{m_i}
    \big(y_{i,j}-f(x_{i,j})\big)^2
    + \lambda J_k(f)
    \right\},
\end{equation}
where $J_k(f)=\mathrm{TV}(f^{(k-1)})$, $f^{(k-1)}$ denotes the $(k-1)$st weak derivative of
$f$, and $\lambda>0$ is a tuning parameter. The function class is defined as
\[
\mathcal{F}_k
:= \Big\{
f:[0,1]\to\mathbb{R}
:\; f \text{ is $(k-1)$ times weakly differentiable and }
\mathrm{TV}(f^{(k-1)})<\infty
\Big\}.
\]
For a function $f:[0,1]\to\mathbb{R}$, its total variation is given by
\begin{equation}\label{eqn:tv_operator}
\mathrm{TV}(f)
=
\sup_{\substack{0\le a_1\le\cdots\le a_N\le 1\\ N\in\mathbb{N}}}
\sum_{l=1}^{N-1} |f(a_l)-f(a_{l+1})|.
\end{equation}
When $f$ is differentiable, $\mathrm{TV}(f)=\int_0^1 |f'(t)|\,dt$, and thus the penalty
controls the overall roughness of the function.

A key property of the variational problem \eqref{eqn:e1} is that its minimizer is a
piecewise polynomial function with adaptively chosen knots.
Indeed, \citet{mammen1997locally} showed that the minimizer is a $k$th degree spline.
Thus, for $k=1$ the estimator is piecewise constant, for $k=2$ piecewise linear, for
$k=3$ piecewise quadratic, and so on.

{\color{black}{Closely related to the variational formulation \eqref{eqn:e1} is the estimator commonly
referred to as Trend Filtering, which replaces the continuous total variation
penalty $\mathrm{TV}(f^{(k-1)})$ by a discretized finite-difference analogue defined on a
grid \citep{tibshirani2014adaptive}.
As discussed in \citet{tibshirani2014adaptive}, for $k=0$ and $k=1$ the resulting estimator
coincides exactly with the corresponding locally adaptive regression spline.
For $k\ge 2$, the two procedures are formally distinct, but they are asymptotically
equivalent and empirically very similar in a wide range of settings.
Throughout this paper, we work with the locally adaptive regression spline formulation
\eqref{eqn:e1}, and view Trend Filtering as a closely related discrete analogue based on
the same total variation regularization principle.}}

It is worth noting that in all the existing literature on {\color{black}{locally adaptive regression spline and}} Trend Filtering, the focus has only been on models where $m_i = 1$ for all $i$ and with independent error terms. While such models can be useful for nonparametric regression, they fall short when dealing with spatio-temporal data, where $m_i$ can be larger than $1$, and there can be dependence between the error terms. In this paper, we overcome these limitations by studying a version of {\color{black}{Locally Adaptive Regression Splines}} for spatio-temporal data that also allows for dependence of the measurements across $i$.

In addressing the challenges outlined, {\color{black}{we work with the continuous variational formulation
of locally adaptive regression splines, following \cite{sadhanala2019additive}.  However, we also acknowledge that the literature often utilizes a discrete version commonly referred to as Trend Filtering, see for example \cite{tibshirani2014adaptive}, \cite{guntuboyina2020adaptive}, among others.}} In practice, it is indeed the discrete version, Trend Filtering, that is typically employed to compute the estimator. Next, we provide some notation before presenting a summary of our results.


\subsection{Notation}  \label{sec:notation}
Throughout, given $s \in \mathbb{N}$, we denote by $[s]$ the set $\{1, \ldots, s\}$.
Given a distribution $Q$ supported on $[0,1]^d, $ and, $\{x_{i,j}\}_{i=1,j=1}^{n,m_i}$ identically distributed according to $Q$. Denote by $Q_{nm}$ the associated empirical distribution where $m=\Big(\frac{1}{n}\sum_{i=1}^{n}\frac{1}{m_i}\Big)^{-1}$.
For any function $f:\ [0,1]^d\to \mathbb R$ and for $1 \leq p < \infty$, define $\|f\|_{p} = (\int_{[0,1]^d} |f(x)|^p d Q (x))^{\frac{1}{p}}$ and for $p = \infty$, define $\|f\|_{\infty} = \sup_{x\in{[0,1]}}\vert f(x)\vert$. Denote by $\mathcal{L}_p(Q) = \{f: \, [0, 1]^d \to \mathbb{R}, \, \|f\|_p<\infty \}$. For $p=2$, define  $\langle f,g\rangle_{2}=\int_{[0,1]}f(x)g(x)dQ(x)$  where $f,g:\ [0,1]^d \to \mathbb R$. Similarly, for $Q_{nm}$ we define  $\langle f,g\rangle_{nm}=\frac{1}{n}\sum_{i=1}^{n}\frac{1}{m_i}\sum_{j=1}^{m_i}f(x)g(x)dx$  with corresponding norm $\vert\vert \cdot\vert\vert_{nm}^2=\langle \cdot,\cdot\rangle_{nm}$. Define,  for $t>0$
\begin{equation} 
\label{eqn:tv_ball}
\begin{aligned}
&B_{\mathrm{TV}}^{(k-1)}(t) := \bigg\{ f:[0,1]\to \mathbb{R}: \mathrm{TV}(f^{(k-1)}) \le t \bigg\}, \\
&\text{and} \quad
\,
B_{\infty}(t) := \bigg\{ f:[0,1]\to \mathbb{R}: \| f \|_\infty \le t \bigg\},
\end{aligned}
\end{equation}
where $\mathrm{TV}(\cdot)$ is the total variation operator defined in (\ref{eqn:tv_operator}).

Given a set of elements $\{\theta_{i,j}\}_{i=1,j=1}^{n,m_i}\subset \mathbb{R}$, we denote it as a vector  $\theta\in{\mathbb{R}^{\sum_{i=1}^{n}m_i}}$ and each entry is of the form $\theta_{\llbracket i,j\rrbracket}=\theta_{i,j}$ where $\llbracket i,j\rrbracket=\sum_{k=0}^{i-1}m_k+j$ with $m_0=0$,  for any $i\in\{1,\ldots,n\},j\in\{1,\ldots m_i\}$.
Similarly we make the abuse of notation for $\theta\in{\mathbb{R}^{\sum_{i=1}^{n}m_i}}$ by seeing it as  $\{\theta_{i,j}\}_{i=1,j=1}^{n,m_i}\subset \mathbb{R}$ where $\theta_{i,j}=\theta_{\llbracket i,j\rrbracket}$.
Let now $\widehat{\theta}\in \mathbb{R}^{\sum_{i=1}^nm_i}$ be an estimator of the set of elements  $\{\theta_{i,j}\}_{i=1,j=1}^{n,m_i}\subset \mathbb{R}$, the Mean Square Error (MSE) of $\widehat{\theta}$, is defined as $\frac{1}{n}\sum_{i=1}^{n}\frac{1}{m_i}\sum_{j=1}^{m_i}(\theta_{i,j}-\widehat{\theta}_{i,j})^2.$ For any $A\subset\mathbb{R}^d$ consider $\mathbf{1}_{A}(x)=\Big\{\begin{array}{cc}
     1 &  \text{if} \ x\in A \\
    0 &  \text{otherwise} 
\end{array}$. For $s\in\mathbb{N}$, we set $\mathbf{1}_s=(1,\ldots,1)^T\in\mathbb{R}^s.$ 
Given integers $a$ and $b$, we use $a \% b$ to denote the remainder of $a$ when divided by $b$ (i.e., the modulo operation). For two positive sequences $\{a_n\}_{n\in \mathbb N^+ }$ and $\{b_n\}_{n\in \mathbb N ^+ }$, we write $a_n = O(b_n)$ or $a_n\lesssim b_n$ if $a_n\le Cb_n$ with some constant $C > 0$ that does not depend on $n$, and $a_n = \Theta(b_n)$ or $a_n\asymp b_n$ if $a_n = O(b_n)$ and $b_n = O(a_n)$. Let $X_n$ be a sequence of random $\mathbb{R}$-valued variables.  For a deterministic or random $\mathbb{R}$-valued sequence $a_n$,  we write  $X_n=O_{\mathbb{P}}\left(a_n\right)$ if $\lim _{M \rightarrow \infty} \lim \sup _n \mathbb{P}\left(\left|X_n\right| \geq M a_n\right)=0$. We also write $X_n=o_{\mathbb{P}}\left(a_n\right)$ if $\limsup _n \mathbb{P}\left(\left|X_n\right| \geq M a_n\right)=0$ for all $M>0$. Let $X$ be a random variable in $\mathbb{R}^d$. The $\sigma$-algebra generated by $X$ is denoted as $\sigma(X)$. For a set of random variables $\left(X_t, t \in I\right)\subset \mathbb{R}^d$, $\sigma\left(X_t, t \in I\right)$ represents the $\sigma$-algebra generated by these variables.
\subsection{Summary of Results}
\label{sec:results}

In this paper, for $k \geq 1 $ and $d=1$, we start by studying \textcolor{black}{a modified version of} the constrained estimator 
 \begin{equation}
 	\label{eqn:e2}
 	 \widehat f _{k,V_n}\,=\,   \underset{f}{ \argmin }  \,\frac{1}{ n} \sum_{i=1}^n \frac{1}{m_i} \sum_{j=1}^{m_i} \big (  y_{i,j} -f(x_{i,j})  \big)^2 \quad  \text{subject to} \quad    \mathrm{TV}(f^{(k-1)}) \leq V_n,
 \end{equation}
for a tuning parameter $V_n>0$. The optimization problem given in Equation (\ref{eqn:e2}) can be viewed as a constrained version of {\color{black}{Locally Adaptive Regression Splines}} for spatio-temporal data.  This can be seen by comparing it to the optimization problem in Equation (\ref{eqn:e1}) and the loss function in \cite{cai2011optimal}. In Equation (\ref{eqn:e2}), $V_n >0$ is a tuning parameter that controls the $(k-1)$th order total variation. Moreover, the estimator defined in Equation (\ref{eqn:e2}) can be considered as a variational version of the estimator studied in \cite{wakayama2021trend}, which addresses some computational aspects of {\color{black}{Locally Adaptive Regression Splines and}} Trend Filtering for functional data. 

Equation (\ref{eqn:e1}) and Equation (\ref{eqn:e2}) are quite different, especially when dealing with issues found in spatio-temporal data such as spatial noise and changing amounts of data over time. Equation (\ref{eqn:e2}) addresses these by adjusting its loss function to inversely weight observations based on the number of observations at each time point, ensuring a balanced impact from each time point. This method works better on spatio-temporal data by making its predictions more accurate despite the noise and fluctuating data amounts.

Our framework supposes that the sequences $\{ (x_{i,1},\ldots,x_{i,m_i}) \}_{i=1}^n $,  $\{ (\epsilon_{i,1},\ldots,\epsilon_{i,m_i}) \}_{i=1}^n$, and $\{\delta_i\}_{i=1}^n \subset \mathcal{L}_2([0,1])$ have elements that are identically distributed and $\beta$-mixing with coefficients having exponential decay, see Assumption  
\ref{assume:tv functions} for the explicit details.  Moreover, we require that $\delta_i(x)$ is subGaussian, with mean zero, at every location $x$ and $i \in \{1,\ldots,n\}$.  Thus our data can have temporal dependence across $i$ and spatial dependence across $j$. Under these conditions, we show that \textcolor{black}{a variant } $\widehat f_{k,V_n}$ of the estimator defined in (\ref{eqn:e2}) satisfies 
\begin{equation}
	\label{eqn:e3}
	\| \widehat f _{k,V_n}- f^*\|_2 ^2 =O_{\mathbb{P}}\Big(   V_n^2   \frac{\log^{\frac{4k}{2k+1}}(n)}{(mn)^{\frac{ 2 k   }{2 k +1 }} }    +  \frac{\log^{\frac{5}{2}} (n)}{  n }\Big),
\end{equation}
provided that  $V_n \geq \mathrm{TV}((f^*)^{(k-1)})$ and $m_i \asymp m$ for all $i$. Hence, ignoring logarithmic factors, (\ref{eqn:e3}) implies that $\widehat f _{k,V_n}$  has a convergence rate of order $(nm)^{\frac{-2k}{2k+1}} + n^{-1}$.

Our result in (\ref{eqn:e3})  extends the previous convergence rate established in Theorem 2.1 of \cite{guntuboyina2020adaptive} for fixed design and $m_i=O(1)$, to the case where the $m_i$'s can diverge, the data are dependent, and spatial noise is present. Moreover, we also allow for temporal dependence across $i$ while \cite{guntuboyina2020adaptive} required independence of the data. Additionally, the rate $(nm)^{\frac{-2k}{2k+1}} +  n^{-1}$ is minimax optimal when estimating signals in the class of functions $B_{\mathrm{TV}}^{(k-1)}(V_n)$ with $V_n = O(1)$.   This follows from Theorem 3.1 in \cite{cai2011optimal} where the authors considered Sobolev function classes. In the existing literature, the condition $V_n=O(1)$ is commonly referred to as the canonical scaling for functions residing in a total variation ball with radius $V_n.$ See for instance \cite{madrid2022risk} and \cite{sadhanala2016total}.

In addition to the constrained estimator, we study \textcolor{black}{a modified version of} the penalized estimator defined as 

\begin{equation}
	\label{eqn:e4}
	\widehat f _{k,\lambda}\,= \,\underset{f}{ \argmin }\,   \left\{  \frac{1}{ n} \sum_{i=1}^n \frac{1}{m_i} \sum_{j=1}^{m_i} \big (  y_{i,j} -f(x_{i,j})  \big)^2    \,+\,   \lambda \mathrm{TV}(f^{(k-1)})  \right\},
\end{equation}
for a  tuning parameter $\lambda >0$.  \textcolor{black}{We show that for an appropriate choice of $\lambda$, the resulting variant, which we also denote as $\widehat f _{k,\lambda}$, satisfies}
\begin{equation}
	\label{eqn:e51}
	\| \widehat f _{k,V_n}- f^*\|_2 ^2 =O_{\mathbb{P}}\Big(  \frac{\log^{\frac{4k}{2k+1}}(n)}{(mn)^{\frac{ 2 k   }{2 k +1 }} }\left(V_n^*\right)^2+\frac{\log^{\frac{5}{2}}(n)}{n}\Big),
\end{equation}
with $V_n^* =   \mathrm{TV}(  (f^*)^{(k-1)}  )\ge1$.  Thus, when  $V_n    =  O(V_n^*)$ both the constrained and penalized estimator attain the same rate of convergence.

This paper also addresses the problem of nonparametric spatio-temporal data estimation for dimension $d>1$. To this end, we propose the spatio-temporal data version of the $K$-Nearest Neighbor fused lasso estimator ($K$-NN-FL) studied in \cite{elmoataz2008nonlocal,ferradans2014regularized,madrid2020adaptive}. The $K$-NN-FL estimator is a two-step method. First, we construct a $K$-nearest neighbor graph $G_K$ with edge set $E_K$, where two pairs  $(i_1,j_1)$ and $(i_2,j_2)$ are connected by an edge if and only if $x_{i_1,j_1}$ is among the $K$-nearest neighbors of $x_{i_2,j_2}$, or vice versa. Second, we run the graph fused lasso estimator in the graph $G_K$. 
However, our version of the $K$-NN-FL differs from that in \cite{madrid2020adaptive} in the loss function that we use given the spatio-temporal data. Specifically, our objective function is structured around comparing the data $y_{i,j}$ with the function evaluation $f(x_{i,j})$ across different time points $i$. Recognizing that these time points have varying numbers of measurements $m_i$, we use a weighting mechanism in the loss function, proportionate to the number of observations $m_i$ at each time point. This adjustment ensures a more balanced contribution from each time period, see \cite{cai2011optimal}. To address the complexities introduced by this weighted loss function, we develop an alternating direction method of multipliers (ADMM) specific to our objective function. For more details on ADMM algorithms, see \cite{boyd2011distributed}.

We show that the resulting $K$-NN-FL estimator attains a convergence rate of order $(nm)^{-1/d} + n^{-1}$, in terms of the square of the $\ell_2$ distance, if we ignore logarithmic factors. This rate holds under the assumption that the true function $f^*$ has bounded variation and satisfies a property which is a generalization of piecewise Lipschitz continuity. Our result effectively extends Theorems 1 and 2 in  \cite{madrid2020adaptive}  to the  spatio-temporal data setup. These findings are particularly relevant for nonparametric spatio-temporal data analysis in multivariate settings, where the proposed ADMM algorithm can be used to efficiently estimate the $K$-NN-FL estimator.

Our study shows that our proof methods are quite adaptable. They can easily be applied to other models exhibiting short-range dependence on spatio-temporal data. Specifically, for the univariate model ($d=1$), we introduce unprecedented concentration inequalities that work even with spatial noise. This is shown by the coupling of empirical and $L_2$ norms, (refer to Lemma \ref{lemma2}). See Appendix \ref{sec-proof-thm2} for concrete details. Notably, our findings go beyond what \cite{tibshirani2014adaptive} discussed. The mentioned study, \cite{tibshirani2014adaptive}, based its findings on data without any spatial noise and assumed that covariates and measurement errors are independent. Furthermore, we have come up with a way to bound the supremum norm for functions that have an empirical mean of zero and a bounded empirical $L_2$ norm, as detailed in Lemma \ref{lemma1} in Appendix \ref{sec-proof-thm2}. This adds to what was established in Lemma 7 by \cite{tibshirani2014adaptive}. For the multivariate framework ($d>1$), we have successfully expanded on everything found in the study \cite{madrid2020adaptive}, adapting them to spatio-temporal data characterized by short-range dependence.

To contextualize the results above, it is crucial to acknowledge their deviation from traditional {\color{black}{Locally Adaptive Regression Splines}} and Trend Filtering theories. These differences are mainly in two ways: the way data interacts and the inclusion of spatial noise. The analysis of the former requires a rigorous examination of beta-mixing block-independent instances. On the other hand, tackling spatial noise calls for sophisticated techniques, such as symmetrization and the Ledoux-Talagrand inequality, among others.

\subsection{Other relevant work}

While there is a vast body of literature on total variation denoising with fixed design, such as \citep{wang2015trend,padilla2017dfs,sadhanala2021multivariate}, the techniques used in the analysis of such settings do not directly apply to the random design setup. In the latter case, some other notable works using total variation include the additive Trend Filtering framework proposed in \cite{sadhanala2019additive}; the quantile $K$-Nearest Neighbor fused lasso estimator studied in \cite{ye2021non}; and the recent work from \cite{hu2022voronoigram}, which has very strong adaptivity properties, but is computationally much more intensive than the $K$-NN-FL estimator studied for spatio-temporal data in this paper.

{\color{black}{
Complementary Bayesian studies have established frequentist guarantees for scalable spatial and spatio-temporal modeling. \citet{JMLR:v23:20-543} derive minimax-optimal posterior convergence rates in $L_2$ for both the varying-coefficient functions and the induced mean regression function within a distributed framework using a multivariate Gaussian-process prior. \citet{10.1214/22-STS868} establish nearly minimax-optimal Bayes $L_2$-risk rates for combined pseudo-posteriors arising from a subclass of distributed spatial Gaussian-process methods and provide upper bounds on the number of subsets. Using basis expansions and random data sketching, \citet{JMLR:v26:23-0505} establish posterior contraction rates for estimating varying coefficients and predicting outcomes at new locations. In contrast, we develop frequentist total-variation estimators for a deterministic mean function under temporal dependence, time-varying spatial noise, and measurement error.
}}

\subsection{Outline}

The rest of the paper is organized as follows. In Section \ref{MainRes}, the underlying assumptions of the spatio-temporal model are delineated, followed by a presentation of our theoretical findings. Section \ref{sec:uni_constrained} delves into the univariate setting, highlighting that a constrained estimator is a minimax optimal. This is further complemented in Section \ref{sec:uni_Pen}, which underscores the minimax optimality of a penalized estimator. In contrast, the multivariate setting is presented in Sections \ref{sec:Mult_Cons} and \ref{sec:Mult_Pen}, introducing minimax optimal constrained and penalized estimators, respectively.  Section \ref{simu-data} evaluates the practical performance of our proposed techniques via various simulations and real data analysis. Outlines of our theoretical proofs can be found in Sections \ref{OutlineT1-2} and \ref{OutlineT3-4}. 
Finally, Section \ref{sec-conclusion} concludes
with a discussion. All proofs can be found in the Appendix.

\section{Main results}\label{MainRes}

\subsection{Assumptions}

In this paper, we focus on a specific statistical framework for nonparametric spatio-temporal data estimation. This framework is described by Assumption \ref{assume:tv functions} below, which outlines the assumptions on the data generating process for the model given in Equation \eqref{eq:model l2}.

\begin{assumption}\label{assume:tv functions}
  The data $\{(x_{i, j}, y_{i, j})\}_{i = 1,j = 1}^{n,m_i} \subseteq [0,1]^d \times \mathbb{R}$ are generated by  model \eqref{eq:model l2}.
\\
{\bf a.}  For each  fixed $i\in \{1,\ldots, n\}$, the random design   $\{ x_{i,j}  \}_{ j = 1}^{ m_i} \subseteq [0, 1]^d$ are  independently   sampled from a common density function $v:\, [0,1]^d \to \mathbb R$. Here $0< v_{\text{min}} \leq v(x) \leq v_{\text{max}}$  for all $x \in [0,1]^d$. Denote $ \sigma _x(i) =\sigma(\{ x_{i,j}  \}_{ j = 1}^{ m_i} )   $. Suppose that  $\{ \sigma_x(i)\}_{i=0}^n$ is $\beta$-mixing with beta coefficients  \footnote{{See Appendix \ref{BetaMsection} for the $\beta$-mixing definition and Section \ref{sec:notation} for the definition of $\sigma(\cdot)$}.} satisfying $\beta_x(l)\le e^{-C_\beta l}$ for some constant $C_\beta>0.$  
\\
{\bf b.} The regression  function $f^* : \mathbb [0,1]^d \to \mathbb R$ satisfies the following:
\vskip 0.1cm
\noindent
 1. For $d=1$,   suppose  that $V_n^* :=  \mathrm{TV}( (f^*)^{(k-1)}  )  < \infty$. Here   $V_n^*$  depends on $n$,  and   $k \in \mathbb Z^+ $.  
\vskip 0.1cm
\noindent
2. For $d>1$, suppose $f^*$ has bounded variation {\color{red}\footnote{See  Appendix \ref{app:BV-def} for the definition of bounded variation in the multidimensional case. }} and it satisfies a more general condition than being piecewise Lipschitz or it 
  is piecewise Lipschitz.\footnote{See  Appendix \ref{plsection} for more details. }
Additionally, suppose  that 
$\| f^*\| _{\infty}\le C_{f^*}$ for some positive constant $C_{f^*}$.\\
{\bf c.} The spatial noise  $\left\{\delta_i\right\}_{i=0}^n \subset \mathcal{L}_2\left([0,1]^d\right)$is identically distributed such that 
$$ \mathbb{E}\left(\delta_i (x) \right) = 0 \ \ \text{and} \ \
\sup _{x \in[0,1]^d} \mathbb{E}\left\{\exp \left(t \delta_i(x)\right)\right\} \leq \exp \left(\varpi_\delta^2 t^2\right)     \text{ for all } t \in \mathbb{R} .
$$
Here  $\varpi_\delta>0$ is an absolute constant. Moreover,  suppose that 
$\{ \sigma(\delta_i)\}_{i=0}^n$ are $\beta$-mixing with beta coefficients  satisfying $\beta_\delta(l)\le e^{-C_{\beta}l}$.
\\
{\bf d.} For each  fixed $i\in \{1,\ldots, n\}$,  the measurement errors $\{ \epsilon_{i,j}\}_{  j=1}^{ m_i}  \subset \mathbb R $ are    identically distributed    such that 
$  \mathbb{E}(\epsilon_{i,j})= 0 $  and $\mathbb{E}\{ \exp ( t \epsilon_{i,j}) \} \le  \exp(\varpi_\epsilon^2 t ^2) $     for all  $ t\in \mathbb R$.  
Here  $\varpi_\epsilon>0$ is an absolute constant. Moreover, let  $ \sigma _\epsilon (i) =\sigma( \{\epsilon_{i,j}\}_{  j=1}^{ m_i} ) $ and suppose that 
$\{ \sigma_\epsilon (i)\}_{i=0}^n$ are $\beta$-mixing with beta coefficients  satisfying  $\beta_\epsilon(l)\le e^{-C_\beta l}$.  
\\
{\bf e.} Let $\sigma_x=\sigma(\{x_{i,j}\}_{i=1,j=1}^{n,m_i}), \sigma_\delta=\sigma(\{\delta_{i}\}_{i=1}^{n})$ and $\sigma_\epsilon=\sigma(\{\epsilon_{i,j}\}_{i=1,j=1}^{n,m_i})$.
Suppose $\sigma_x, \sigma_\delta$ and $\sigma_\epsilon$ are independent.  
\\
{\bf f.}
For any $i\in\{1,...,n\}$, we assume that $cm\le m_i\le Cm$, where $m$ is the harmonic mean of $m_i$ defined by \eqref{Har-mean}. Here, $c$ and $C$ are positive constants and 
\begin{equation}\label{Har-mean}
m= \left(\frac{1}{n}\sum_{i=1}^{n}\frac{1}{m_i}\right)^{-1}.
\end{equation}
Thus, we essentially require that $m_i \asymp m_j$ for all $i$ and $j$.   
  \end{assumption}

Assumption \ref{assume:tv functions}\textbf{a} allows the observed spatio-temporal data to be represented on random design points. Moreover,  the sampling distribution $v$ must be upper and lower bounded on the support $[0, 1]^d$. This restriction is common in nonparametric regression, for instance, see \cite{madrid2020adaptive} and references therein. The intuition behind the $\beta$-mixing condition here is to quantify the extent of dependence between observations as they become more temporally separated. Specifically, $\beta$-mixing ensures that this dependence diminishes, which is crucial for the reliability of statistical inference in the presence of dependent data. Exponential decay of the beta coefficients, $\beta_x(l)\le e^{-C_\beta l}$ for some constant $C_\beta>0$, indicates a rapid decrease in dependency, facilitating the incorporation of short-range dependencies in the spatio-temporal model.



Assumption \ref{assume:tv functions}\textbf{b} pertains to the smoothness properties of the mean function $f^*$ of the observed spatio-temporal data. In the univariate case, we assume that $f^*$ belongs to a $B_{\mathrm{TV}}^{(k-1)}(V_n)$ ball, which accommodates greater fluctuations in the local level of smoothness of $f^*$ and includes more traditional function classes such as H\"{o}lder and Sobolev spaces as particular cases. In the multivariate case, we allow for functions that can have discontinuities, meaning they can be piecewise Lipschitz, or can satisfy a more general notion than being  piecewise Lipschitz along with a bounded variation restriction. We refer the interested reader to  \cite{madrid2020adaptive} for examples and discussion.



Our model takes into account two distinct sources of noise that can affect the observed spatio-temporal data. The first type of noise is spatial noise, which arises from the inherent variability of the underlying spatio-temporal processes. In other words, even if we were to observe the true underlying spatio-temporal process without any measurement error, we would still see some level of variability due to factors such as natural randomness, biological variation, or other sources of variation specific to the application at hand. This type of noise is described in Assumption \ref{assume:tv functions}\textbf{c} of our model.

The second type of noise is measurement error given as $\{ \epsilon_{i,j}\}_{i=1, j=1}^{n,m_i}$, which results from imperfect measurement devices or other sources of external interference. Measurement error can introduce systematic bias or random fluctuations in the observed data that are not reflective of the true underlying process. This type of noise is described in Assumption \ref{assume:tv functions}\textbf{d} of our model.

By including both spatial noise and measurement error in our model, we are accounting for the fact that the observed data may not be a perfect reflection of the true underlying process, and that there may be variability and errors that need to be taken into account when analyzing the data. This allows to consider a more realistic modeling framework. 

In {\color{black}{Locally Adaptive Regression Splines}} and Trend Filtering analysis, one of the often overlooked aspects has been the dependence of covariates. The nature of the dependence among covariates, which can often be observed in datasets, can play a pivotal role in ensuring the stability and clarity of the trends identified. Recognizing this, our approach goes beyond conventional boundaries by allowing the covariates to be dependent, see Assumption \ref{assume:tv functions}{\bf{a}}. Additionally, we acknowledge the presence of dependence within the spatial noise and also within the measurement errors, see Assumption \ref{assume:tv functions}{\bf{c}} and {\bf{d}}, respectively. In the literature, such a comprehensive consideration has been notably absent from {\color{black}{Locally Adaptive Regression Splines or}} Trend Filtering studies. This comprehensive approach brings a new perspective to our methods, enhancing the reliability and effectiveness of {\color{black}{Locally Adaptive Regression Splines and}} Trend Filtering. By taking these factors into account, we lay the groundwork for a more in-depth and complete understanding of the underlying processes.

Assumption \ref{assume:tv functions}{\bf{f}} is standard in the literature with spatial noise. For instance, \cite{wang2022low} focuses on the nonparametric estimation of covariance functions for functional data and employs a similar assumption, as shown on Page 3 of that paper. 

Overall, our model provides a flexible and comprehensive framework for analyzing spatio-temporal data that can accommodate various sources of noise and variability, making it suitable for a wide range of applications in which spatio-temporal data are encountered.

\subsection{Univariate constrained {\color{black}{Locally Adaptive Regression Splines}}}
\label{sec:uni_constrained}

In this subsection, we present our main result for the constrained spatio-temporal data {\color{black}{Locally Adaptive Regression Splines}} estimator.

\begin{theorem}  \label{thm:main tv} Consider $k\ge 1$, and  observations $\{(x_{i,j}, y_{i,j})\}_{i=1,j=1}^{n,m_i}$ satisfying Assumption 
 \ref{assume:tv functions}. Let $\bar{y}$ be given as 
 \[
    \bar{y} \,:=\, \frac{1}{ \sum_{i=1}^n m_i }    \sum_{i=1}^n \sum_{j=1}^{m_i} y_{i,j}.
 \]
 Consider  the estimator, $    \widehat f_{k,V_n} :=   \widehat g_{k,V_n}  \,+\,       \bar{y} $ where 
\begin{equation}
\label{eqn:constrained}
\begin{aligned}
\widehat g_{k,V_n} := \underset{f}{\argmin} \ \frac{1}{ n} \sum_{i=1}^n \frac{1}{m_i} \sum_{j=1}^{m_i} 
\big (  y_{i,j} -f(x_{i,j}) - \bar{y} \big)^2, 
 \\
 \text{subject to} \quad  
J_k(f) \le V_n \quad \text{and} \quad 
\sum_{i=1}^{n}\sum_{j=1}^{m_i}f(x_{i,j}) = 0.
\end{aligned}
\end{equation}
It holds that $$  \| \widehat f_{k,V_n}  - f^*\|_2 ^2 =O_{\mathbb{P}}\Big(  V_n^2  \frac{\log^{\frac{4k}{2k+1}}(n)}{(mn)^{\frac{ 2 k   }{2 k +1 }} }   +  \frac{\log^{\frac{5}{2}} (n)}{  n }\Big),$$
provided that $V_n \geq \max\{V_n^*,1\}$, with $V_n^*$ as in Assumption \ref{assume:tv functions}{\bf{b}}.
\end{theorem}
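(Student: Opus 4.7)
My plan is to run a basic-inequality-plus-empirical-process argument for constrained regression, adapted to the presence of spatial noise and $\beta$-mixing dependence along $i$. The two summands in the target rate come from the two noise sources: the measurement error $\epsilon_{i,j}$ will contribute the $(mn)^{-2k/(2k+1)}$ piece, while the spatial noise $\delta_i$ will contribute the $n^{-1}$ piece, reflecting the fact that there are effectively only $n$ independent spatial-noise draws irrespective of $m$.

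I would start by exploiting the feasibility of $f^*$: since $V_n \ge V_n^*$ and $\int f^* = 0$, $f^*$ is (up to a negligible centering correction) feasible for \eqref{eqn:constrained}, and the optimality of $\widehat f_{k,V_n}$ yields the basic inequality
\begin{equation*}
\|\widehat f_{k,V_n} - f^*\|_{nm}^2 \;\le\; \frac{2}{n}\sum_{i=1}^n \frac{1}{m_i}\sum_{j=1}^{m_i} \epsilon_{i,j}\,g(x_{i,j}) \;+\; \frac{2}{n}\sum_{i=1}^n \frac{1}{m_i}\sum_{j=1}^{m_i} \delta_i(x_{i,j})\,g(x_{i,j}),
\end{equation*}
where $g := \widehat f_{k,V_n} - f^*$ satisfies $J_k(g) \le 2V_n$ by the triangle inequality, has empirical mean zero, and, via Lemma~\ref{lemma1}, also satisfies an $\ell_\infty$ bound polylogarithmic in $n$ with high probability. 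Coupling $\|g\|_{nm}^2$ to $\|g\|_2^2$ via Lemma~\ref{lemma2} then lets me aim for a bound on $\|g\|_2^2$ directly, provided I can control the two stochastic cross-terms uniformly over the class $\mathcal{G}_n := B_{\mathrm{TV}}^{(k-1)}(2V_n)\cap B_\infty(C\log n)$.

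To control the $\epsilon$-cross-term I would condition on $\sigma_x\cup\sigma_\delta$ and handle the $\beta$-mixing in $i$ by a Berbee-type blocking argument: partition $\{1,\ldots,n\}$ into blocks of length $\asymp \log n$, couple with independent copies of the blocks at a total cost bounded by $\sum_\ell \beta_x(\ell) + \sum_\ell \beta_\epsilon(\ell) = O(1)$ thanks to the exponential decay in Assumption~\ref{assume:tv functions}{\bf a},{\bf d}, and then run a standard symmetrization plus Dudley chaining argument over $\mathcal{G}_n$ at effective sample size $mn$. The classical entropy bound $\log N(\varepsilon,\mathcal{G}_n,\|\cdot\|_\infty) \lesssim (V_n/\varepsilon)^{1/k}$ combined with a peeling device in $t = \|g\|_2$ produces the $V_n^2 \log^{4k/(2k+1)}(n)(mn)^{-2k/(2k+1)}$ contribution after solving the resulting quadratic inequality.

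The main obstacle is the $\delta$-cross-term, because $\delta_i$ is a random element of $\mathcal L^2([0,1])$ rather than a scalar noise attached to each observation, so I cannot exploit an iid-in-$j$ martingale structure; the inner sum over $j$ must be treated as a noisy evaluation of the functional $g \mapsto \langle \delta_i, g \rangle$. My plan here is to condition on $\sigma_x$, apply $\beta$-mixing blocking in $i$ together with Rademacher symmetrization with multipliers $\xi_i$, and then invoke the Ledoux--Talagrand contraction inequality to strip off the Lipschitz dependence on $g$, leaving a process of the form $n^{-1}\sum_i \xi_i \langle \delta_i, g \rangle_{nm,i}$. Because only $n$ independent spatial-noise draws are available, this supremum is of order $\sqrt{\log n / n}\,\|g\|_2$ over $\mathcal{G}_n$, and a second peeling over $\|g\|_2$ yields the $\log^{5/2}(n)/n$ contribution. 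Combining the two cross-term bounds with the empirical-to-$L^2$ coupling in a single quadratic inequality in $\|g\|_2$ delivers the stated rate.
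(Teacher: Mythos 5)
Your overall architecture tracks the paper's proof closely: the basic inequality from optimality and convexity, the bound $J_k(g)\le 2V_n$, an $\ell_\infty$ envelope obtained from the polynomial event of Lemma \ref{lemma1}, the empirical-to-$L^2$ coupling of Lemma \ref{lemma2}, blocking with coupled independent copies for the $\beta$-mixing, and symmetrization plus Ledoux--Talagrand plus the $\varepsilon^{-1/k}$ entropy bound for the measurement-error cross-term are all present in the paper (Steps 0--6 of Appendix D). However, your treatment of the spatial-noise cross-term has a genuine gap. Writing $Z_i(g)=m_i^{-1}\sum_{j}\delta_i(x_{i,j})g(x_{i,j})$, you symmetrize only over $i$ with multipliers $\xi_i$ and assert that the resulting supremum is of order $\sqrt{\log n/n}\,\|g\|_2$ uniformly over the class. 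That linear-in-$\|g\|_2$ bound with coefficient $n^{-1/2}$ is valid only for the \emph{population} part $\langle g,\delta_i\rangle_2$, where Cauchy--Schwarz gives $\tfrac1n\sum_i\langle g,\delta_i\rangle_2\le\|\tfrac1n\sum_i\delta_i\|_2\,\|g\|_2$ with $\|\tfrac1n\sum_i\delta_i\|_2=O_{\mathbb P}(n^{-1/2})$ by the mixing covariance bound (Lemmas \ref{lemma7BetaMing} and \ref{expected-value-b}); this is exactly the paper's term $T_3$, and it requires no symmetrization at all. It does \emph{not} hold for the empirical functional $Z_i(g)$: uniformly over $B_{\mathrm{TV}}^{(k-1)}(2V_n)\cap B_\infty(L_n)$, a process symmetrized only in $i$ can be controlled solely by chaining at effective sample size $n$, which yields $n^{-1/2}t^{1-1/(2k)}$ at scale $t=\|g\|_2$ and hence a contribution of order $n^{-2k/(2k+1)}$ --- strictly worse than both $n^{-1}$ and $(nm)^{-2k/(2k+1)}$, so the stated rate would fail. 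The repair is the paper's key decomposition $Z_i(g)=\{Z_i(g)-\langle g,\delta_i\rangle_2\}+\langle g,\delta_i\rangle_2$: the centered fluctuation (the paper's $T_2$) is, conditionally on $\{\delta_i\}$, symmetrized over all $(i,j)$ pairs within the independent blocks (Lemma \ref{lemma6}) and the random weights are stripped by Ledoux--Talagrand using $\max_{i,j}|\delta_i^*(x_{i,j}^*)|\lesssim\sqrt{\log(nm)}$, so it lives at effective sample size $nm$ and contributes at the $(nm)^{-2k/(2k+1)}$ scale, while only the mean part produces the $n^{-1}$ term.

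Three secondary inaccuracies in your plan, all repairable: (i) the sup-norm covering numbers of a total-variation ball are infinite already for $k=1$ (jumps at arbitrary locations), so your Dudley chaining must be run in the empirical $L_2$ norm, which is what Lemma \ref{lemma5} provides; (ii) the $\ell_\infty$ envelope the paper extracts is $L_n\asymp V_n$, not polylogarithmic --- it comes from projecting onto degree-$(k-1)$ polynomials, controlling the polynomial part by Lemma \ref{lemma1} and the orthogonal part by Lemma 5 of Sadhanala and Tibshirani, and this matters because Lemma \ref{lemma2} is applied to the rescaled class $B_{\mathrm{TV}}^{(k-1)}(1)\cap B_\infty(1)$ and the factor $L_n^2\asymp V_n^2$ is precisely what produces the $V_n^2$ prefactor in the final rate; (iii) the centering correction is not negligible for free: $f^*$ itself violates the constraint $\sum_{i,j}f(x_{i,j})=0$, so the paper compares against $\widetilde{f^*}=f^*-\mathcal A_{f^*}$ and must separately show, via Hoeffding's inequality applied blockwise, that $\mathcal A_{f^*}^2$ and its cross-terms with $\epsilon$ and $\delta$ (the terms $T_4,T_5,T_6$) are $O_{\mathbb P}$ of the target rate.
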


Some comments are in order. First, the constrained estimator in Theorem \ref{thm:main tv} involves the constraint 
\begin{equation}
    \label{eqn:sum_zero}
    \sum_{i=1}^{n}\sum_{j=1}^{m_i}f(x_{i,j})=0.
\end{equation}
This is standard in {\color{black}{Locally Adaptive Regression Splines and}} Trend Filtering with arbitrary design points, see for instance a similar constraint in Problem 8 in \cite{sadhanala2019additive}. Furthermore, the condition $V_n \geq V_n^*$ is meant to ensure that $f^*$ is feasible for the optimization problem (\ref{eqn:constrained}). A similar condition appeared in Theorem 1 in \cite{madrid2022risk}.   \textcolor{black}{We note that the result in Theorem \ref{thm:main tv} does not correspond exactly to the estimator defined in (\ref{eqn:e2}). This discrepancy arises because our proof relies on existing covering number results for total variation  balls, which are only known to hold under an additional constraint involving the $\|\cdot\|_{\infty}$ norm. Since this constraint is not included in the construction of our estimator, we instead incorporate the condition (\ref{eqn:sum_zero}), following \citet{sadhanala2019additive}. This adjustment enables the use of an orthogonality argument based on projection onto the space of polynomials and its orthogonal complement; see Appendix \ref{sec-proof-thm2} for details. }

As for the upper bound,  Theorem \ref{thm:main tv}  implies that the constrained {\color{black}{Locally Adaptive Regression Splines}} estimator achieves a rate up to logarithmic factors, in terms of $\ell_2$ estimation, of order $(nm)^{\frac{-2k}{2k+1}} +  n^{-1}$ in the canonical setting where  $V_n^*  = O(1)$ and $V_n \asymp V_n^*$.  Notably, an interesting phase transition phenomenon is revealed. Specifically, with a boundary at $m=n^{\frac{1}{2 k}} (\log (n))^{\frac{-2k-5}{4k}}$. When the sampling frequency $m$ satisfies $m\lesssim n^{\frac{1}{ 2 k}} (\log (n))^{ \frac{-  2k-5}{4k} }$, the  rate is of  order $ (\log (n))^{\frac{4k}{2k+1}}  (n m)^{\frac{-2 k}{ 2 k+1}}$ which depends jointly on the values of both $m$ and $n$. In the case of high sampling frequency with $m \gtrsim  n^{\frac{1}{2 k}}(\log( n))^{ \frac{-  2k-5}{4k} }$, the  rate is of order $ (\log^{\frac{5}{2}} (n))  n^{-1}$ and does not depend on $m$. Importantly, the rate  $(nm)^{\frac{-2k}{2k+1}} +  n^{-1}$ is minimax optimal for estimating functions in the class of signals $B_{\mathrm{TV}}^{(k-1)}(V_n)$. This follows from Theorem 3.1 in \cite{cai2011optimal}. In particular, the authors showed that for any estimate $\widetilde{f}$ based on observations $\{(x_{i,j},y_{i,j})\}_{i=1,j=1}^{n,m_i}$, it holds that 
\begin{equation}
    \label{eqn:lower1}
     \underset{ n \rightarrow \infty}{\lim\sup}\,\underset{f^*  \in   \mathcal{G}_k(U)   }{ \sup     }\,\mathbb{P}\left(  \| \widetilde{f}-f^* \|_2^2 \,\geq \,  C\left(    \frac{1}{n }  \,+\, \frac{1}{  (nm)^{\frac{2k}{2k+1}} } \right) \right)\,>0\,,
\end{equation}
where  $U>0$ is a constant, $C >0 $ depends  on $U$, the observations satisfy Assumption \ref{assume:tv functions} but instead of Assumption \ref{assume:tv functions}\textbf{b} they require $f^*\in \mathcal{G}_k(U) $ and assume that the data are independent. Here, 
\[
\mathcal{G}_k(U)   \,:=\, \left\{  f:[0,1]\to \mathbb{R}  \,:   \,  \int_0^1 \vert f^{(k)}(t) \vert^2 dt \le U \right\}.
\]
Hence, if $V_n \geq 1$  with $V_n =  O(1)$, and $U = V_n-1$, for $f \in \mathcal{G}_k(U)$, from the inequality
\[
 \mathrm{TV}(f^{(k-1)})  \,=\, \int_0^1 \vert f^{(k)}(t) \vert dt \,\leq\, \int_0^1 \vert f^{(k)}(t) \vert^2 dt \,+\, 1   \,\leq\,   V_n,
\]
we obtain that $f  \in B_{\mathrm{TV}}^{(k-1)}(V_n) $, with $B_{\mathrm{TV}}^{(k-1)}(\cdot)$ defined in (\ref{eqn:tv_ball}). Therefore, (\ref{eqn:lower1}) implies  
\begin{equation}
    \label{eqn:lower2}
     \underset{ n \rightarrow \infty}{\lim\sup}\,\underset{f^*  \in  B_{\mathrm{TV}}^{(k-1)}(V_n)   }{ \sup     }\,\mathbb{P}\left(  \| \widetilde{f}-f^* \|_2^2 \,\geq \,  C\left(    \frac{1}{n }  \,+\, \frac{1}{  (nm)^{\frac{2k}{2k+1}} } \right) \right)\,>0.
\end{equation}
The latter allows to conclude that $\widehat f_{k,V_n} $ defined in (\ref{eqn:constrained}), except for logarithmic factors, is minimax optimal for estimation in the function class $B_{\mathrm{TV}}^{(k-1)}(V_n)  $ when $V_n \asymp 1$.


Notably, to the best of our knowledge, Theorem \ref{thm:main tv} is the first result for {\color{black}{Locally Adaptive Regression Splines, or}} Trend Filtering, in univariate settings where the data are allowed to be dependent. 

\subsection{Univariate penalized {\color{black}{Locally Adaptive Regression Splines}}}
\label{sec:uni_Pen}

In this subsection, we investigate the statistical performance of penalized {\color{black}{Locally Adaptive Regression Splines}} for univariate spatio-temporal data. While the constrained estimator studied in the previous subsection can be formulated as a quadratic program, this formulation can be computationally expensive when dealing with large sample sizes. To overcome such computational burden, penalized estimators offer a natural alternative. Efficient algorithms for penalized {\color{black}{Locally Adaptive Regression Splines and}} Trend Filtering have been studied in the literature, as discussed in \cite{johnson2013dynamic,barbero2018modular}  and \cite{ramdas2016fast}.


For a tuning parameter $\lambda>0$, the penalized {\color{black}{Locally Adaptive Regression Splines}}  estimator of order $k \geq 1$ for spatio-temporal data is defined as   $\widehat{f}_{k,\lambda}  =   \widehat{g}_{k,\lambda}  + \bar{y}$ where 
$$
\widehat{g}_{k,\lambda}  =  \underset{f}{\arg \min } \frac{1}{n} \sum_{i=1}^n \frac{1}{m_i} \sum_{j=1}^{m_i}\left(y_{i ,j}- \bar{y} -f\left(x_{i, j}\right)\right)^2+\lambda J_k(f), \quad  \text{subject to} \quad \ \sum_{i=1}^{n}\sum_{j=1}^{m_i}f(x_{i,j})=0.
$$
With this notation, we are now ready to state our main result for the penalized estimator. 

\begin{theorem}\label{Penalized} Let $k \geq 1$ and $\{(x_{i,j}, y_{i,j})\}_{i=1,j=1}^{n,m_i}$ be observations  satisfying Assumption \ref{assume:tv functions}.  For  the estimator $\widehat{f}_{k,\lambda} $, it holds that
$$
\left\|\widehat{f}_{k,\lambda}-f^*\right\|_{2}^2=O_{\mathbb{P}}\left(r_{n,m}\right), \ \text{where} \ r_{n,m}:=\frac{\log^{\frac{4k}{2k+1}}(n)}{(mn)^{\frac{ 2 k   }{2 k +1 }} }\left(J_k^2\left(f^*\right)+1\right)+\frac{\log^{\frac{5}{2}}(n)}{n},
$$
for a choice of $\lambda$ satisfying $\lambda \asymp r_{n,m}\left(J_k\left(f^*\right)\right)^{-1}$, and provided that $r_{n,m}=O(1)$.
\end{theorem}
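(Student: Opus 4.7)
The plan is to reduce Theorem \ref{Penalized} to the empirical-process machinery already developed in the proof of Theorem \ref{thm:main tv}, starting from the first-order optimality of the penalized estimator. Set $\hat f = \widehat{f}_{k,\lambda}$, $g = \hat f - f^*$, $\hat V = J_k(\hat f)$, and $V^* = J_k(f^*)$. Expanding $y_{i,j} = f^*(x_{i,j}) + \delta_i(x_{i,j}) + \epsilon_{i,j}$ in the basic inequality satisfied by $\hat f$ yields
$$\|g\|_{nm}^2 + \lambda \hat V \;\leq\; 2\,\langle g,\, \delta + \epsilon\rangle_{nm} + \lambda V^*,$$
and the seminorm triangle inequality $J_k(g) \leq \hat V + V^*$ lets us control the stochastic inner product by a bound indexed by $J_k(g)$ rather than by the a priori unknown $\hat V$.

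The crux is to apply to $g$ the same high-probability empirical-process estimate that drives the proof of Theorem \ref{thm:main tv}, but now via a peeling argument across the dyadic scales $J_k(g) \in [2^s,\, 2^{s+1}]$ to accommodate the fact that $\hat V$ is random and unbounded a priori. Writing $s_{n,m} := \log^{4k/(2k+1)}(n)/(nm)^{2k/(2k+1)}$ and $t_n := \log^{5/2}(n)/n$, I expect the peeling to deliver, with probability $1-o(1)$, an inequality of the form
$$|\langle g,\, \delta + \epsilon\rangle_{nm}| \;\leq\; \tfrac{1}{4}\|g\|_{nm}^2 \;+\; C_1\, s_{n,m}\bigl(J_k(g) + 1\bigr) \;+\; C_2\, t_n.$$
The $s_{n,m}$ term reflects the metric entropy of the TV-$(k-1)$ ball together with a $\beta$-mixing block decoupling and Young's inequality applied to the measurement-error component $\epsilon$; the $t_n$ term isolates the spatial-noise contribution $\delta$, which does not benefit from averaging over $j$ since $\delta_i$ is shared within each spatial block and which must be controlled via symmetrization combined with a Ledoux-Talagrand concentration argument, exactly as in Theorem \ref{thm:main tv}.

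Substituting into the basic inequality, using $J_k(g) \leq \hat V + V^*$, and rearranging give
$$\tfrac{1}{2}\|g\|_{nm}^2 \;+\; \bigl(\lambda - 2 C_1 s_{n,m}\bigr)\hat V \;\leq\; 2 C_1\, s_{n,m}\,(V^* + 1) + 2 C_2\, t_n + \lambda V^*.$$
The choice $\lambda \asymp r_{n,m}/V^*$ dictated by the theorem delivers both $\lambda V^* \asymp r_{n,m}$ and $\lambda \gtrsim s_{n,m}$ (the latter because $r_{n,m} \geq s_{n,m}$ irrespective of $V^*$); after enlarging the absolute constant in $\lambda$ if necessary, the $(\lambda - 2 C_1 s_{n,m})\hat V$ term on the left is nonnegative and may be dropped, while the right-hand side is $O(r_{n,m})$ since $s_{n,m}(V^* + 1) \leq 2\, s_{n,m}(V^{*2} + 1) \leq 2\, r_{n,m}$. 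Finally, converting $\|g\|_{nm}^2$ to $\|g\|_2^2$ through the empirical-to-$L^2$ norm coupling already invoked in the proof of Theorem \ref{thm:main tv} yields $\|g\|_2^2 = O_{\mathbb{P}}(r_{n,m})$, as claimed.

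\textbf{Main obstacle.} The principal technical difficulty is establishing the peeling-based empirical-process inequality in the second paragraph: because $\hat V$ has no a priori deterministic bound, one must slice on $J_k(g) \in [2^s, 2^{s+1}]$ and take a union bound over the $O(\log n)$ relevant dyadic scales while preserving the $\beta$-mixing block structure underlying the concentration arguments developed for Theorem \ref{thm:main tv}. Once that peeling inequality is in place, the remainder of the proof is a routine rearrangement driven entirely by the choice $\lambda \asymp r_{n,m}/V^*$.
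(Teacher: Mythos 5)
Your route is viable but genuinely different from the paper's. You handle the a priori unbounded $J_k(\widehat f_{k,\lambda})$ by dyadic peeling over slices $J_k(g)\in[2^s,2^{s+1}]$ with a multiplier inequality, then absorb $\widehat V$ using $\lambda\gtrsim s_{n,m}$. The paper avoids peeling entirely via a convexity trick: it defines the event $\mathcal{E}_3=\{\sup_{f\in\Lambda:\ \|f-f^*\|_2^2\le\widetilde{\widetilde\eta}^2}J_k(f)\le 5J_k(f^*)\}$ over the segment $\Lambda=\{t\widehat f_{k,\lambda}+(1-t)f^*\}$, and on $\mathcal{E}_3^c$ rescales via $h_f=tf+(1-t)f^*$ with $t=4J_k(f^*)/J_k(\Delta_f)$ so that $J_k(\Delta_{h_f})=4J_k(f^*)$ exactly; this pins the TV radius at $O(J_k(f^*))$ on both branches and lets the entire Theorem~\ref{thm:main tv} machinery (the fixed-radius family $B_{TV}^{k-1}(6J_k(f^*))\cap B_\infty(L_n)$, Lemma~\ref{lemma2}, and the deviation bounds on $T_1,\dots,T_6$) be reused verbatim, at the price of the somewhat delicate event-splitting. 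Your peeling buys a more generic argument that does not rely on convexity of the path between $\widehat f$ and $f^*$, but it requires the uniform-over-scales machinery you flag as the main obstacle; your Young-inequality bookkeeping and the verification that $\lambda\asymp r_{n,m}/J_k(f^*)$ dominates $s_{n,m}$ (via $(V^{*2}+1)/V^*\ge 2$) are correct.

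Two concrete gaps remain beyond the conceded peeling lemma. First, feasibility: the penalized estimator is defined with the constraint $\sum_{i,j}f(x_{i,j})=0$, which $f^*$ need not satisfy, so your basic inequality $\|g\|_{nm}^2+\lambda\widehat V\le 2\langle g,\delta+\epsilon\rangle_{nm}+\lambda V^*$ is not valid as written. The paper compares against the centered $\widetilde{f^*}=f^*-\mathcal{A}_{f^*}$ with $\mathcal{A}_{f^*}=\frac1n\sum_i\frac1{m_i}\sum_j f^*(x_{i,j})$, which generates the additional terms $T_4,T_5,T_6$ involving $\mathcal{A}_{f^*}$; these are what produce part of the $\log^{5/2}(n)/n$ contribution and must be bounded separately (the paper's Steps 6-A through 6-C). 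Second, the empirical-to-$L_2$ conversion is not a black box: Lemma~\ref{lemma2} couples the norms only over $B_{TV}^{(k-1)}(1)\cap B_\infty(\rho)$, so on each peeling slice you must first establish $\|g\|_\infty\lesssim$ (slice radius), which in the paper takes real work — decomposing $\widetilde\Delta$ into a degree-$(k-1)$ polynomial plus an orthogonal part, invoking Lemma~\ref{lemma1} for the polynomial and Lemma 5 of \cite{sadhanala2019additive} for the remainder — and the coupling's failure probability and critical radius (which scales with $\rho^{2k/(2k+1)}$) must then be tracked uniformly over your $O(\log n)$ slices. Relatedly, the spatial-noise mean term $T_3(f)=\frac1n\sum_i\frac1{m_i}\sum_j\langle f-f^*,\delta_i\rangle_{2}$ is controlled by $\|\frac1n\sum_i\delta_i\|_2\,\|g\|_2$, i.e.\ by the \emph{population} norm, so it cannot be absorbed into $\frac14\|g\|_{nm}^2$ before the coupling is in place; and the centered spatial-noise term $T_2$ contributes an entropy term of the same $s_{n,m}(J_k(g)+1)$ type as $\epsilon$, not merely the $t_n$ remainder your attribution suggests — harmless for your displayed inequality, but the claim that $\delta$ "does not benefit from averaging over $j$" and yields only $C_2 t_n$ is inaccurate as stated.
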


Theorem \ref{Penalized} demonstrates that with the canonical scaling $J_k(f^*) = O(1)$, the penalized {\color{black}{Locally Adaptive Regression Splines}} estimator of order $k \geq 1$ for spatio-temporal data attains the rate $(nm)^{\frac{-2k}{2k+1}} + n^{-1}$, up to logarithmic factors. This rate is the same as that attained by the constrained estimator in Theorem \ref{thm:main tv}. Therefore, as discussed in Section \ref{sec:uni_constrained}, the penalized estimator $\widehat{f}_{k,\lambda}$, like the constrained estimator, is also minimax optimal for estimation in the function class $B_{\mathrm{TV}}^{(k-1)}(V_n)$ with $V_n = O(1)$.

We need to address the minor differences in our upper bound compared to existing in the literature. For instance, when 
$k=1$ and $d=1$, our upper bound depends on 
$V_n^2$. This contrasts with other research, such as the study by \cite{guntuboyina2020adaptive}, which establishes a tighter upper bound depending on 
$V_n^{\frac{2}{3} }$. However, this distinction is not critical in the context of our primary research goal, which is to achieve minimax optimality for our estimator.  This goal is primarily considered under the assumption of canonical scaling, meaning that $V_n = O(1)$, as discussed by \cite{sadhanala2016total} among others. Under this assumption, the specific dependence of the upper bound on $V_n$ does not significantly impact the efficiency of the estimate. 

\subsection{Constrained \texorpdfstring
{$K$-}{K}NN fused lasso for spatio-temporal data}
\label{sec:Mult_Cons}

Now, we focus on the multivariate case ($d>1$). We start by introducing a constrained estimator. Let  $G_K$ be the $K$-NN graph with edge set $E_K$ constructed from $\{ x_{i,j}\}_{i=1, j = 1}^{n,m_i}$ as discussed in Section  \ref{sec:results}. Then, for a signal $\theta \in  \mathbb{R}^{  \sum_{i=1}^n m_i  }$ we define its total variation along the graph $G_K$ as
\[
 \| \nabla_{G_K}\theta  \|_1\,:=\, \sum_{ \{ (i_1,j_1), (i_2,j_2) \}  \in E_K  } \vert \theta_{i_1 ,j_1} - \theta_{ i_2, j_2  } \vert.
\]
Intuitively, if the signal $\theta $ consists of function evaluations, then one would expect that $\| \nabla_{G_K}\theta  \|_1 << nm$ since the $K$-NN graph would capture the geometry of the domain of the $x_{i,j}$'s and most of the quantities $\vert \theta_{i_1, j_1} - \theta_{ i_2 ,j_2  } \vert$ would be small. Such intuition is made precise in Lemma \ref{Lemma-nabla}, where we show that, with high probability,  
\begin{equation}
    \label{eqn:tv_bound}
    \| \nabla_{G_K}\theta^*  \|_1 \,\asymp \, (nm)^{1-\frac{1}{d}},
\end{equation}
with $\theta^*_{i,j} = f^*(x_{i,j})$, and  provided that either $f^*$  has bounded variation, along with an additional condition
that generalizes piecewise Lipschitz continuity; or $f^*$ is piecewise Lipschitz. This result extends  the one shown in \cite{madrid2020adaptive} to the spatio-temporal  scenario considered in Assumption \ref{assume:tv functions}.
Thus, a natural estimator for $\theta^*$ is   $\widehat{\theta}_{V_n}$ given as
\begin{equation}
    \label{eqn:theta_hat_knn}
    \widehat{\theta}_{V_n}\,=\,  \underset{ \theta \in  \mathbb{R}^{  \sum_{i=1}^n m_i  },\,\,\,  \left\|\nabla_{G_K}\theta\right\|_1 \leq V_n  }{\arg \min  }\,\left\{\frac{1}{n}\sum_{i=1}^n \frac{1}{m_i}\sum_{j=1}^{m_i}\left(y_{i, j}-\theta_{i, j}\right)^2 \right\},   
\end{equation}
for a tuning parameter $V_n>0$. Once $\widehat{\theta}_{V_n}$ is constructed, we let $\widehat{f}_{V_n} : [0,1]^d \,\rightarrow \, \mathbb{R} $ be given as
\begin{equation}
    \label{eqn:est1}
     \widehat{f}_{V_n}(x_{i,j})   \,=\,    (\widehat{\theta}_{V_n})_{i,j}\,\,\,\,\,\, \text{for}\,\,\,\,  i = 1,\ldots,n,\,\,\,\, j =1 ,\ldots, m_i,
\end{equation}
and 
\begin{equation}
    \label{eqn:est2}
     \widehat{f}_{V_n}(x)   \,=\,   \frac{1}{ \sum_{i=1}^n \sum_{j=1}^{m_i}  \mathcal{I}(x,x_{i,j})  }  \sum_{i=1}^n \sum_{j=1}^{m_i}  \mathcal{I}(x,x_{i,j})   (\widehat{\theta}_{V_n})_{i,j}, 
\end{equation}
for $x \in  [0,1]^d \backslash \{ x_{i,j}\}_{i=1, j = 1}^{n, m_i}$, where  $\mathcal{N}_K(x)$ is the set of $K$-nearest neighbors of $x$ among $\{ x_{i,j}\}_{i=1, j = 1}^{n,m_i}$, and $\mathcal{I}(x,x_{i,j}) = 1$ if $x_{i,j} \in \mathcal{N}_K(x)$ and $\mathcal{I}(x,x_{i,j}) = 0$ otherwise.    Notice in (\ref{eqn:est2})
we are using the same estimation rule described in \cite{madrid2020adaptive}.

With this notation, we are ready to state our main result regarding the constrained $K$-NN fused lasso for spatio-temporal data   (\ref{eqn:est1})--(\ref{eqn:est2}). 

\begin{theorem}\label{Cest-d>1} Let $\{(x_{i,j},y_{i,j})\}_{i=1,j=1}^{n,m_i}$ generated based on \eqref{eq:model l2} and Assumption \ref{assume:tv functions} with $d>1$. Suppose in addition that
 $K \asymp \log^{2+l}(nm)$ for some $l>0$. The estimator $\widehat{\theta}_{V_n}$ defined in (\ref{eqn:theta_hat_knn}) satisfies that 
\begin{equation}
    \label{eqn:e41}
    \frac{1}{n} \sum_{i=1}^n \frac{1}{m_i}\sum_{j=1}^{m_i}\left((\widehat{\theta}_{V_n})_{i,j}-\theta_{i, j}^*\right)^2 \, = \, O_{\mathbb{P}}\left(r_{n,m}\right),
\end{equation}
where
$$r_{n,m}=\frac{\log^{8+2l}(nm)}{n}+\left\{\frac{\log ^{5+l}(n m)}{n m}\right\}  V_n,$$
and provided that $V_n \geq \left\|\nabla_{G_K} \theta^*\right\|_1 $. In addition,  $\widehat{f}_{V_n}$ defined in (\ref{eqn:est1})--(\ref{eqn:est2})  satisfies that 
\begin{equation}
    \label{eqn:e5}
     \| \widehat{f}_{V_n} -f^*\|_2^2 = O_{\mathbb{P}}(\widetilde{r}_{n,m}),
\end{equation}
for 
\[
\widetilde{r}_{n,m}=\frac{\log^{8+2l}(nm)}{n}+\left\{\frac{\log ^{5+l}(n m)}{n m}\right\}  V_n + \frac{\log^{\frac{2+l}{d}}(nm)}{(nm)^{1/d}}.
\]
\end{theorem}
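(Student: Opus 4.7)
\textbf{Proof plan for Theorem \ref{Cest-d>1}.}
The strategy is to adapt the graph fused lasso analysis of \cite{madrid2020adaptive} to the temporal-spatial setting, where the two extra difficulties are the $\beta$-mixing temporal dependence across $i$ and the spatial noise $\delta_i(\cdot)$ that is shared across all $j$ for fixed $i$. I would start from the basic inequality implied by the optimality of $\widehat{\theta}_{V_n}$ together with the feasibility of $\theta^*$ (guaranteed by Lemma \ref{Lemma-nabla} and the hypothesis $V_n\ge \|\nabla_{G_K}\theta^*\|_1$):
$$\|\widehat{\theta}_{V_n}-\theta^*\|_{n,m}^2 \,\le\, \frac{2}{n}\sum_{i=1}^n\frac{1}{m_i}\sum_{j=1}^{m_i}\big(\epsilon_{i,j}+\delta_i(x_{i,j})\big)\big((\widehat{\theta}_{V_n})_{i,j}-\theta^*_{i,j}\big).$$
This reduces the problem to a uniform control of the two noise cross-terms over $u := \widehat{\theta}_{V_n}-\theta^*$, which lies in the graph ball $\{u:\|\nabla_{G_K}u\|_1\le 2V_n\}$.

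Next, I would decompose $u = \bar u \mathbf{1} + \tilde u$ into its empirical mean and its centered part, and for the centered part invoke the duality
$|\langle \eta,\tilde u\rangle_{n,m}| \le \|\nabla_{G_K}^+ \eta\|_\infty \cdot \|\nabla_{G_K}\tilde u\|_1$,
where $\nabla_{G_K}^+$ is the Moore--Penrose pseudoinverse of the graph incidence operator. For the measurement-error contribution $\eta_{i,j}=\epsilon_{i,j}$, the entries are subGaussian and independent across $j$ within each $i$, so combining a maximal inequality with a $\beta$-mixing block decomposition across $i$ yields $\|\nabla_{G_K}^+\epsilon\|_\infty = O_{\mathbb P}(\log^{(5+l)/2}(nm)/(nm))$, producing the term $V_n\log^{5+l}(nm)/(nm)$. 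For the spatial-noise contribution $\eta_{i,j}=\delta_i(x_{i,j})$, because $\delta_i$ is shared across all $j$, the effective sample size along this direction is $n$ rather than $nm$; the same duality combined with subGaussian chaining over the graph ball then produces the $\log^{8+2l}(nm)/n$ term. The two mean components $\bar u \langle \eta,\mathbf{1}\rangle_{n,m}$ are absorbed using Bernstein-type inequalities for $\beta$-mixing sequences. After applying $2ab\le a^2/2+2b^2$ to turn the cross-term bound into an MSE bound, one obtains \eqref{eqn:e41}.

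To pass from the empirical bound \eqref{eqn:e41} to the population bound \eqref{eqn:e5}, I would use two ingredients. First, concentration of the empirical norm $\|\cdot\|_{n,m}$ around $\|\cdot\|_2$ for the $\beta$-mixing random design of Assumption \ref{assume:tv functions}\textbf{a}, which transfers the bound at the design points to a bound in $\|\cdot\|_2$ restricted to a neighborhood of the sample. Second, the extrapolation rule \eqref{eqn:est1}--\eqref{eqn:est2} incurs a $K$-NN bias equal to the oscillation of $f^*$ over a ball of radius of order $(K/(nm))^{1/d}$ around each query point; under $K\asymp \log^{2+l}(nm)$ and the bounded-variation/piecewise-Lipschitz condition in Assumption \ref{assume:tv functions}\textbf{b}, this contributes the extra term $\log^{(2+l)/d}(nm)/(nm)^{1/d}$, yielding $\widetilde r_{n,m}$.

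The main obstacle is that the $K$-NN graph $G_K$ couples observations across different values of $i$, so a naive block decomposition in $i$ severs many edges of $G_K$ and threatens to break the dual-norm bound on $\|\nabla_{G_K}^+\eta\|_\infty$. The resolution is to analyze the subgraph induced inside each $\beta$-mixing block separately, using the density bounds of Assumption \ref{assume:tv functions}\textbf{a} to control the number of severed edges through a doubling-type covering of $[0,1]^d$, and absorbing the resulting combinatorial losses into poly-logarithmic factors; carefully tracking these is what produces the specific exponents $8+2l$ and $5+l$ appearing in $r_{n,m}$.
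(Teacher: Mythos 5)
There is a genuine gap at the heart of your argument: you apply the dual-norm bound $|\langle \eta,\tilde u\rangle| \le \|\nabla_{G_K}^{+}\eta\|_\infty \|\nabla_{G_K}\tilde u\|_1$ \emph{directly} to the $K$-NN incidence operator and then assert $\|\nabla_{G_K}^{+}\epsilon\|_\infty = O_{\mathbb{P}}\bigl(\log^{(5+l)/2}(nm)/(nm)\bigr)$. No such bound is available: controlling $\|(D^{\dagger})^{\top}\eta\|_\infty$ requires a bound on the inverse scaling factor $\rho=\max_v\|D^{\dagger}_{\cdot,v}\|_2$, and the results of \cite{hutter2016optimal} invoked in the paper (Lemmas \ref{H-R-prop4} and \ref{H-R-prop6}) hold only for \emph{regular grid graphs}; for a random $K$-NN graph $\rho$ is itself random and uncontrolled, which is precisely why the paper (following \cite{madrid2020adaptive} and the valid-grid idea of \cite{JMLR:v15:vonluxburg14a}) never applies the pseudoinverse to $\nabla_{G_K}$. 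The missing idea is the lattice embedding: build a grid with $N^d \asymp nm/K$ cells, establish on high-probability design events that every cell contains $\asymp K$ points and that points in the same or adjacent cells are $K$-NN-connected (Lemma \ref{Omega-Oscar-Paper}), prove the two comparison inequalities $|e^{\top}(\theta-\theta_I)| \le 2\|e\|_\infty\|\nabla_{G_K}\theta\|_1$ and $\|D\theta^{I}\|_1 \le \|\nabla_{G_K}\theta\|_1$ (Lemma \ref{lemma8-Oscar}), and only then run the dual-norm argument on the grid incidence matrix $D$ with noises aggregated over cells (Lemma \ref{lemma11-oscar}); the aggregation of $\asymp K$ sub-Gaussian terms per cell is also where the factors of $K$, and hence the exponents $8+2l$ and $5+l$, actually originate --- they are not produced by counting severed edges. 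Relatedly, your proposed resolution of the mixing/graph interaction (analyzing the subgraph induced within each $\beta$-mixing block and bounding severed edges by a doubling cover) is not what is needed and would not obviously recover the TV duality: in the paper the blocking and coupling to independent copies are applied to the sums over $i$ in the empirical processes, while the graph-TV comparisons are deterministic statements on the joint design events, so no edges are ever severed.

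A secondary inaccuracy concerns the spatial-noise term. The $\log^{8+2l}(nm)/n$ contribution does not come from ``sub-Gaussian chaining over the graph ball''; in the paper the cross-term $\sum_{i,j}\Delta_{i,j}\tilde\delta_i(x_{i,j})$ is split into (i) an approximation term $\sum_{i,j}\{\Delta_{i,j}-\widetilde{\Delta}(x_{i,j})\}\tilde\delta_i(x_{i,j})$ bounded by $\max_{i,j}|\tilde\delta_i(x_{i,j})|\cdot\|\nabla_{G_K}\Delta\|_1 \lesssim \sqrt{\log(nm)}\,V_n$ via Lemma \ref{lemma8-Oscar}, (ii) a centered empirical-process term handled by blocking, symmetrization and the grid dual norms, and (iii) the mean term $\sum_i m_i\langle\widetilde{\Delta},\delta_i\rangle_{2}$, which is the source of the $n^{-1}$ rate: it is bounded by Cauchy--Schwarz using $\bigl\|\frac{1}{n}\sum_i\delta_i\bigr\|_{2}=O_{\mathbb{P}}(n^{-1/2})$ (summability of $\beta$-mixing covariances, Lemmas \ref{expected-value-b} and \ref{lemma7}) together with $\|\widetilde{\Delta}\|_{2}^2 \lesssim (K/(nm))\|\Delta\|_2^2$ (Lemma \ref{lemma8}). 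Your basic inequality and your Step B sketch (empirical-to-$L_2$ transfer plus a $K$-NN bias of order $(K/(nm))^{1/d}$ under the piecewise-Lipschitz condition) do match the paper's Steps 0A and B, but without the lattice embedding the core of Step A does not go through as written.
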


Note that  (\ref{eqn:tv_bound}) implies that     $V_n  = \|\nabla_{G } \theta^*\|_1 \asymp (nm)^{1-\frac{1}{d}}$ with high probability. Therefore, Theorem \ref{Cest-d>1} implies that the $K$-NN-FL estimator for spatio-temporal data achieves 
\begin{equation}
    \label{eqn:e5 1}
     \| \widehat{f}_{V_n} -f^*\|_2^2 =\mathrm{poly}(\log( nm)) \bigg\{ \frac{ 1  }{  n}  + \frac{  1 }{  (nm)^{ 1/d}}  \bigg\} ,
\end{equation}
  where $\mathrm{poly}(\cdot) $ indicates a polynomial function. Ignoring the logarithmic factors, we see that the rate becomes $n^{-1}$ when $ n^{d-1}\lesssim m$, while in the sparse design regime $m \lesssim n^{d-1}$ the rate is order $(nm)^{\frac{-1}{d}}$. In particular, in the case $m \asymp 1$, we recover the upper bound obtained in Theorem 2 \cite{madrid2020adaptive}.  However, unlike \cite{madrid2020adaptive}, we allow for dependent measurements and spatial noise. 


In Theorem~\ref{Cest-d>1}, we assume that the discrete total variation $\left \|\nabla_{G_K} \theta^*\right\|_1 $ of the unknown regression function is bounded. 
In Lemma~\ref{Lemma-nabla}, we show that piecewise Lipschitz functions possess bounded discrete total variation. Consequently,  the risk bound in Theorem~\ref{Cest-d>1} is also valid for piecewise Lipschitz functions. In the following lemma, we justify that the risk bound established in Theorem~\ref{Cest-d>1} is minimax optimal up to logarithmic factors   by establishing a matching lower bound in  the class of piecewise Lipschitz functions.

\begin{lemma}\label{Mlb-M}
Consider the model described in Assumption 
 \ref{assume:tv functions} with $d\ge2.$ Let $\widetilde{f}$ any estimate based on the observations $\{(x_{i,j},y_{i,j})\}_{i=1,j=1}^{n,m_i}$, it holds that
\[
    \underset{ n \rightarrow \infty}{\lim\sup}\,\underset{f^* \in \mathcal{F}(L)     }{ \sup     }\,\mathbb{P}\left(  \| \widetilde{f}-f^* \|_2^2 \,\geq \,  C_{\text{opt}}\left(    \frac{1}{n }  \,+\, \frac{1}{  (nm)^{\frac{1}{d}} } \right) \right)\,>0.\,
   \]
   where $C_{\text{opt}} > 0$ is constant, and $\mathcal{F}(L)$ is the class of piecewise  Lipschitz functions, with  Lipschitz constant $L >0$.
\end{lemma}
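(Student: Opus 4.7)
The plan is to decompose the target lower bound $n^{-1}+(nm)^{-1/d}$ into its two additive pieces and prove each separately, combining them via $a+b\le 2\max(a,b)$. The $(nm)^{-1/d}$ piece is essentially a direct invocation of Proposition~2 in \cite{willett2005faster}; the $n^{-1}$ piece requires a fresh two-point Le~Cam argument that exploits the non-averaging character of the spatial noise. Since we are establishing a lower bound, it suffices to restrict to any subfamily of laws allowed by Assumption~\ref{assume:tv functions}; I will work inside the subfamily where $\{\delta_i\},\{\epsilon_{i,j}\},\{x_{i,j}\}$ are mutually independent with i.i.d.\ structure across $i$, which is a trivial case of the $\beta$-mixing hypothesis with exponentially decaying coefficients.

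For the $(nm)^{-1/d}$ bound, further specialize to $\delta_i\equiv 0$ with $\epsilon_{i,j}\sim\N(0,\varpi_\epsilon^2)$ i.i.d. The data then reduce to $N=\sum_i m_i\asymp nm$ i.i.d.\ pairs $(x,f^*(x)+\epsilon)$ with design density bounded above and below, which is precisely the setting of Proposition~2 in \cite{willett2005faster}, yielding a minimax lower bound of order $N^{-1/d}$ over the piecewise Lipschitz class $\mathcal{F}(L)$. Since reintroducing spatial noise $\delta_i$ only inflates the minimax risk (an adversary can always add independent noise as a post-processing step on top of the cleaner model), the bound transfers: the minimax risk in the full model is at least $c\,(nm)^{-1/d}$ for some $c>0$.

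For the $n^{-1}$ bound, take $f_0\equiv 0$ and $f_1\equiv cn^{-1/2}$ with $c>0$ small; both are constants and hence lie in $\mathcal{F}(L)$, with $\|f_1-f_0\|_2^2=c^2/n$. Choose $\delta_i(x)=\xi_i$ with $\xi_i\sim\N(0,1)$ i.i.d.\ (constant random functions are sub-Gaussian uniformly in $x$, hence admissible under Assumption~\ref{assume:tv functions}\textbf{c}) and $\epsilon_{i,j}\sim\N(0,\sigma^2)$ i.i.d. Conditionally on the design, the block $(y_{i,1},\ldots,y_{i,m_i})$ is Gaussian with covariance $\Sigma_i=\mathbf 1\mathbf 1^T+\sigma^2 I_{m_i}$, and the two hypotheses differ only by the mean shift $cn^{-1/2}\mathbf 1$. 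A Sherman-Morrison computation gives
\begin{equation*}
\mathrm{KL}(P_{1,i}\|P_{0,i}) \,=\, \tfrac12\cdot\tfrac{c^2}{n}\cdot\mathbf 1^T\Sigma_i^{-1}\mathbf 1 \,=\, \tfrac{c^2}{2n}\cdot\tfrac{m_i}{\sigma^2+m_i} \,\le\, \tfrac{c^2}{2n},
\end{equation*}
and independence across $i$ yields $\mathrm{KL}(P_1\|P_0)\le c^2/2$ uniformly in $m$. Picking $c$ so that this is below $\log 2$ forces $\mathrm{TV}(P_0,P_1)$ below a universal constant strictly less than one, and the standard Le~Cam two-point inequality then produces a constant-probability lower bound on $\|\widetilde f-f_j\|_2^2$ of order $c^2/n$, which is the claimed form.

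The decisive observation, and the only step that is really subtle, is the bound $m_i/(\sigma^2+m_i)\le 1$: regardless of how many within-block measurements one takes, each block contributes at most a constant to the KL divergence, so with only $n$ independent blocks the total KL is $O(1)$ rather than $O(m)$. This is the precise mechanism by which spatial noise creates an irreducible $n^{-1}$ term in the minimax risk, a term absent from classical nonparametric regression lower bounds (including \cite{willett2005faster}) where spatial noise does not appear.
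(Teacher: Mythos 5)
Your proof is correct, and it follows the same two-part strategy as the paper's: split the rate into the $(nm)^{-1/d}$ and $n^{-1}$ pieces, obtain the first from Proposition 2 of the Willett--Nowak paper in a submodel with the spatial noise switched off, and obtain the second from a Le Cam argument exploiting the fact that within-block replication cannot average out noise shared across an entire block. The execution differs on both pieces, though, in ways worth recording. First, where you restrict to the i.i.d.\ subfamily (legitimately: i.i.d.\ laws satisfy Assumption \ref{assume:tv functions}, and a minimax lower bound only requires exhibiting admissible instances), the paper insists on arguing under a general $\beta$-mixing law and therefore runs its blocking/coupling machinery (Appendix \ref{BetaMsection}) through both halves: it couples the data with independent block copies $\{x^*_{i,j}\},\{\delta^*_i\}$ on a high-probability event $\Omega$, applies the i.i.d.\ lower bound within each of the $O(\log n)$ blocks $\mathcal{J}_{e,l},\mathcal{J}_{o,l}$, and then pieces the block bounds back together with a containment argument on the error events; your route avoids all of this at no loss for the stated lemma. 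Second, for the $n^{-1}$ floor the paper sets $\epsilon_{i,j}\equiv 0$ and uses perfectly repeated within-block values (conditionally on the design, each block is a single Gaussian repeated $m_i$ times, so the block KL is one-dimensional and Le Cam reduces the problem to estimating a scalar mean from $\asymp n$ observations), whereas you keep $\epsilon_{i,j}\sim\mathcal{N}(0,\sigma^2)$ alive and quantify the same saturation via Sherman--Morrison, $\mathbf{1}^T\Sigma_i^{-1}\mathbf{1}=m_i/(\sigma^2+m_i)\le 1$. Yours is the cleaner and slightly stronger derivation, since it shows the $n^{-1}$ term persists in a model where both noise sources are simultaneously present rather than in a degenerate submodel.

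One shared blemish deserves mention: your hypotheses $f_0\equiv 0$ and $f_1\equiv c n^{-1/2}$ violate the normalization $\int f^*(t)\,dt=0$ of Assumption \ref{assume:tv functions}\textbf{b}, and this is not cosmetic — a mean-zero perturbation would be essentially orthogonal to the space-constant noise direction, the within-block quadratic form $v^T\Sigma_i^{-1}v$ would then grow like $m_i/\sigma^2$, and the construction would collapse. However, the paper's own proof uses the same constant-shift hypotheses, and the class $\mathcal{F}(L)$ over which the lemma's supremum is taken (Appendix \ref{plsection}) does not impose the normalization, so you are no worse off than the paper on this point.
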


See Appendix \ref{ProofLemma1} for the proof of Lemma \ref{Mlb-M}.

\subsection{Penalized \texorpdfstring{$K$-}{K}NN fused lasso for spatio-temporal data  }\label{sec:Mult_Pen}

Although the constrained estimator $\widehat{f}_{V_n}$ defined in (\ref{eqn:est1})--(\ref{eqn:est2}) is minimax optimal for estimating piecewise Lipschitz functions, its computational cost increases significantly for large  $n$ or $m$, due to the need to solve the quadratic program given in (\ref{eqn:theta_hat_knn}). This computational burden can pose a significant challenge in practice. To address this issue, we propose  the penalized $K$-NN fused lasso for spatio-temporal data defined as follows. First,  we compute 
\begin{equation}
    \label{eqn:penalized_knn}
    \widehat{\theta}_{\lambda}=\underset{\theta \in \mathbb{R}^{\sum_{i=1}^nm_i}}{\arg \min } \frac{1}{n}\sum_{i=1}^n \frac{1}{m_i}\sum_{j=1}^{m_i}\left(y_{i, j}-\theta_{i, j}\right)^2+\lambda\left\|\nabla_{G_K} \theta\right\|_1,
\end{equation}
for a tuning parameter $\lambda >0$. Once $\widehat{\theta}_{\lambda}$ is constructed, we define $\widehat{f}_{\lambda}$  as in (\ref{eqn:est1})--(\ref{eqn:est2}) but replacing $\widehat{\theta}_{V_n}$  with $\widehat{\theta}_{\lambda}$. 

To construct the estimator defined in (\ref{eqn:penalized_knn}), we develop an ADMM algorithm (see e.g. \cite{boyd2011distributed}), exploiting the max-flow algorithm from \cite{chambolle2009total}.
To arrive at our ADMM, we first notice that 
\begin{align*}
    &\underset{\theta \in \mathbb{R}^{\sum_{i=1}^{n}m_i}}{\arg \min } \frac{1}{n}\sum_{i=1}^n \frac{1}{m_i}\sum_{j=1}^{m_i}\left(y_{i,j}-\theta_{i, j}\right)^2+\lambda\left\|\nabla_{G_K} \theta\right\|_1\\
    =&
    \underset{\theta \in \mathbb{R}^{\sum_{i=1}^{n}m_i}}{\arg \min } \frac{1}{n}\sum_{i=1}^n \frac{1}{m_i}\sum_{j=1}^{m_i}\left(y_{i,j}-\theta_{i, j}\right)^2+\lambda\left\|\nabla_{G_K} z\right\|_1 \ \text{subject to} \quad    \theta-z=0. \
\end{align*}
Here, $\lambda>0$ is a tuning parameter, and $z$ is a slack variable. Then, proceeding with ADMM, (\ref{eqn:penalized_knn}) can be solved iteratively with the updates
\begin{align}
\theta^{l+1} & := \underset{\theta \in \mathbb{R}^{\sum_{i=1}^{n}m_i}}{\arg \min }\left(\sum_{i=1}^n \sum_{j=1}^{m_i}\frac{1}{nm_i}\left(y_{i,j}-\theta_{i, j}\right)^2+(\rho / 2)\left\|\theta- z^l+u^l\right\|_2^2\right),\label{eq:theta_min} \\  
z^{l+1} & := \underset{z \in \mathbb{R}^{\sum_{i=1}^{n}m_i}}{\arg \min }\left(\lambda\left\|\nabla_{G_K} z\right\|_1+(\rho / 2)\left\|\theta^{l+1}-z+u^l\right\|_2^2\right), \label{eq:z_min}\\
u^{l+1} & :=u^l+\theta^{l+1}- z^{l+1},
\end{align}
where $u$ is the scaled dual variable, and $\rho$ is the ADMM penalty parameter. To find the minimizer in  (\ref{eq:theta_min}), we use the optimality condition which leads to 
\begin{equation*}
   \theta^{l+1}_{i,j}=\frac{-\rho(y_{i,j}-z^{l}_{i,j}+u^{l}_{i,j})}{\rho
+\frac{2}{nm_i}}+y_{i,j}. 
\end{equation*}
As for the minimizer in (\ref{eq:z_min}), we use the max-flow algorithm from \cite{chambolle2009total}, invoking the Matlab function \texttt{knfl}, with penalty parameter $\frac{\lambda}{\rho}$. We vary $\rho$ using the scheme,
$$
\rho^{l+1}:= \begin{cases}\tau^{\text {incr }} \rho^l & \text { if }\left\|r^l\right\|_2>\mu\left\|s^l\right\|_2, \\ \rho^l / \tau^{\text {decr }} & \text { if }\left\|s^l\right\|_2>\mu\left\|r^l\right\|_2, \\ \rho^l & \text { otherwise },\end{cases}
$$
where $\mu>1, \tau^{\text {incr }}>1$, and $\tau^{\text {decr }}>1$ are parameters and $r^l$ corresponds to the primal residual, $r^l=\theta^l-z^l$, and $s^l$ corresponds to the dual residual, $s^l=\rho^l(z^{l}-z^{l-1})$. In practice, we choose  $\mu=10$ and $\tau^{\text {incr }}=\tau^{\text {decr }}=2$.

Next, with an algorithm in hand for computing the estimator defined in (\ref{eqn:penalized_knn}),  we provide an upper bound on its estimation error.

\begin{theorem}\label{Pest-d>1}
	Let $\{(x_{i,j},y_{i,j})\}_{i=1,j=1}^{n,m_i}$ be generated based on (\ref{eq:model l2}) and  suppose that Assumption  \ref{assume:tv functions} holds. Suppose in addition that
	$K \asymp \log^{1+l}(nm)$ for some $l>0$. It holds that, with an appropriate choice of $\lambda$ satisfying $\lambda \asymp  r_{n,m}\left\|\nabla_G \theta^*\right\|_1^{-1}$, we have 
	\begin{equation}
	    \label{PT-eq1}
     \frac{1}{n} \sum_{i=1}^n \frac{1}{m_i} \sum_{j=1}^{m_i}\left(\widehat{\theta}_{i ,j}-\theta_{i, j}^*\right)^2 =O_{\mathbb{P}} \left(r_{n,m}\right),
	\end{equation}
	where
	$$r_{n,m}=\frac{\log^{\alpha_1}(nm)}{n}+\frac{\log^{\alpha_2}(nm)}{(nm)^{\frac{1}{d}}},
	$$
	for $\alpha_1=8+2l$ and $\alpha_2=7+2l+\frac{2+l}{d}.$ Moreover,
	\begin{equation}
	    \label{PT-eq2}
     \vert\vert f^*- \widehat{f}_{V_n}\vert\vert_2^2=O_{\mathbb{P}} \left(\frac{\log^{  \widetilde{\alpha_1} }(nm)}{n}+\frac{\log^{ \widetilde{\alpha_2} }   (nm)}{(nm)^{\frac{1}{d}}}\right),
	\end{equation}
	where $\widetilde{\alpha_1}=8+2l$ and $\widetilde{\alpha_2}=7+2l+\frac{4+2l}{d}.$
\end{theorem}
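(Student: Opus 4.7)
The proof will follow the standard basic-inequality paradigm for $\ell_1$-penalized M-estimators, specialized to the temporal-spatial $K$-NN fused lasso. The essential point is that nearly all of the heavy empirical-process machinery needed here has already been built to prove Theorem \ref{Cest-d>1}, so the argument for the penalized version is mostly a rearrangement plus a correctly calibrated choice of $\lambda$. Writing $e_{i,j} := \delta_i(x_{i,j}) + \epsilon_{i,j}$ and using optimality of $\widehat{\theta}_\lambda$ in (\ref{eqn:penalized_knn}) against the competitor $\theta^*$, one obtains
\[
 \|\widehat{\theta}_\lambda - \theta^*\|_{nm}^2 \,\leq\, 2 \langle e,\, \widehat{\theta}_\lambda - \theta^*\rangle_{nm} \,+\, \lambda\bigl(\|\nabla_{G_K}\theta^*\|_1 - \|\nabla_{G_K}\widehat{\theta}_\lambda\|_1\bigr).
\]
The goal is then to control the cross term on the right and pick $\lambda$ so that the penalty dominates the TV part of the noise contribution.

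For the empirical-process step, write $v := \widehat{\theta}_\lambda - \theta^*$ and split $v = v_0 + v_1$ with $v_0 \in \ker(\nabla_{G_K})$ (constant on each connected component of $G_K$) and $v_1$ in its orthogonal complement. By a Nemirovski-type dual bound exploiting the pseudo-inverse of the incidence matrix of the $K$-NN graph, one gets
\[
 \bigl|\langle e, v\rangle_{nm}\bigr| \,\leq\, a_{nm}\|v_0\|_{nm} \,+\, b_{nm} \|\nabla_{G_K} v\|_1 \,+\, c_{nm},
\]
with high probability, where $a_{nm}^2 + c_{nm} \lesssim \log^{\alpha_1}(nm)/n$ and $b_{nm} \lesssim \log^{\alpha_2 - 1}(nm)/(nm)^{1-1/d}$ (up to the explicit logarithmic powers in $r_{n,m}$). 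This is exactly the type of estimate proved in the course of Theorem \ref{Cest-d>1}: beta-mixing block-coupling reduces to independent blocks, symmetrization plus the Ledoux–Talagrand contraction principle deal with the subGaussian spatial noise $\delta_i$, and the chaining/entropy bound exploits that $\ker(\nabla_{G_K})$-quotients of TV balls on the $K$-NN graph have a metric entropy controlled by Lemma \ref{Lemma-nabla}.

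With this bound plugged into the basic inequality, choose $\lambda \geq 4 b_{nm}$, which combined with Lemma \ref{Lemma-nabla} ($\|\nabla_G \theta^*\|_1 \asymp (nm)^{1-1/d}$) is precisely the stated calibration $\lambda \asymp r_{n,m}\|\nabla_G \theta^*\|_1^{-1}$. The $2 b_{nm}\|\nabla_{G_K} v\|_1$ term is then absorbed via the triangle inequality $\|\nabla_{G_K} v\|_1 \leq \|\nabla_{G_K}\widehat{\theta}_\lambda\|_1 + \|\nabla_{G_K}\theta^*\|_1$ into the penalty difference; Young's inequality on $2 a_{nm}\|v_0\|_{nm}$ converts it into $a_{nm}^2 + \tfrac{1}{2}\|v\|_{nm}^2$. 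Rearranging yields the empirical MSE bound (\ref{PT-eq1}) with $r_{n,m} = \log^{\alpha_1}(nm)/n + \log^{\alpha_2}(nm)/(nm)^{1/d}$. The $L_2$ bound (\ref{PT-eq2}) follows by translating the empirical loss to the continuous norm via the extension rule (\ref{eqn:est1})--(\ref{eqn:est2}): the additional $K$-NN interpolation error contributes a term of order $\log^{(4+2l)/d}(nm)/(nm)^{1/d}$ (entering squared because the interpolation radius enters $L_2$ squared), which produces the slightly larger logarithmic exponent $\widetilde{\alpha}_2 = 7 + 2l + (4+2l)/d$.

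The main obstacle is the uniform empirical-process bound in the second step, since it must simultaneously handle (i) beta-mixing temporal dependence across $i$, (ii) the subGaussian spatial noise $\delta_i$, which is not a pointwise i.i.d. term and so requires symmetrization and contraction rather than a direct Bernstein argument, and (iii) the random geometry of the $K$-NN graph $G_K$ entering through the pseudo-inverse of $\nabla_{G_K}$. Fortunately, the ingredients assembled for Theorem \ref{Cest-d>1} — in particular the block-independence coupling and the graph-theoretic entropy estimate behind Lemma \ref{Lemma-nabla} — carry over essentially verbatim, so the penalized case reduces to a careful bookkeeping of constants and logarithmic factors rather than a new concentration argument.
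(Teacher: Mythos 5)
Your overall architecture --- basic inequality for the penalized estimator, a dual-type bound on the noise inner product, the calibration $\lambda \asymp r_{n,m}\left\|\nabla_G \theta^*\right\|_1^{-1}$, and the same interpolation argument for the $L_2$ bound \eqref{PT-eq2} --- is sound and close in spirit to the paper, which likewise reduces Theorem \ref{Pest-d>1} to the empirical-process machinery of Theorem \ref{Cest-d>1} plus a Theorem-\ref{Penalized}-style auxiliary step. One organizational difference: the paper does not use your classical absorption device with $\lambda \geq 4 b_{nm}$. Instead it introduces the event $\mathcal{E}_1=\left\{\sup_{\theta\in\Theta:\,\|\theta-\theta^*\|_2^2\le\eta^2}\left\|\nabla_{G_K}\theta\right\|_1\ge 5\left\|\nabla_{G_K}\theta^*\right\|_1\right\}$ and treats $\mathcal{E}_1$ and $\mathcal{E}_1^c$ separately via a convexity/rescaling trick along the segment $\Theta$ joining $\widehat{\theta}_{\widetilde\lambda}$ and $\theta^*$, in both cases landing on suprema localized to $\|\Delta\|_2\le\eta$ and $\left\|\nabla_{G_K}\Delta\right\|_1\le 4\left\|\nabla_{G_K}\theta^*\right\|_1$ (respectively $6\left\|\nabla_{G_K}\theta^*\right\|_1$), to which the bounds from Theorem \ref{Cest-d>1} apply verbatim. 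Your absorption route could deliver the same rate, but be aware that the spatial-noise bound actually proved in the paper is genuinely localized (a Markov bound on an expected supremum over $\mathcal{F}_{\text{lat}}(\eta,2V_n)$, obtained by symmetrization and contraction); to use it in your uniform linear form $a_{nm}\|v\|+b_{nm}\left\|\nabla_{G_K}v\right\|_1+c_{nm}$ you would need either a peeling argument or the segment-localization device the paper employs.

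The one step that fails as stated is your dual bound ``exploiting the pseudo-inverse of the incidence matrix of the $K$-NN graph.'' No bound on the inverse scaling factor $\rho=\max_v\left\|D^{\dagger}_{\cdot,v}\right\|_2$ is established for the random graph $G_K$; the H\"utter--Rigollet estimates $\rho\lesssim\sqrt{\log N}$ (Lemmas \ref{H-R-prop4} and \ref{H-R-prop6}) hold only for regular grid graphs. This is precisely why the paper builds the lattice embedding: it passes from $v$ to the cell-level objects $v_I$ and $v^I$, using Lemma \ref{lemma8-Oscar} (namely $\left|e^T(v-v_I)\right|\le 2\|e\|_\infty\left\|\nabla_{G_K}v\right\|_1$ and $\left\|Dv^I\right\|_1\le\left\|\nabla_{G_K}v\right\|_1$ for the grid incidence matrix $D$) together with Lemmas \ref{lemma9-oscar} and \ref{lemma11-oscar-1}, and only then applies the dual decomposition $\langle\breve{\zeta},v^I\rangle\le\|\Pi\breve{\zeta}\|_2\|v^I\|_2+\left\|(D^{\dagger})^{\top}\breve{\zeta}\right\|_\infty\left\|Dv^I\right\|_1$, at the cost of the $\max_r|I_r|\asymp K$ factor that generates the powers of $K$ (hence of $\log(nm)$) in the final rate. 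Your kernel decomposition $v=v_0+v_1$ on $G_K$ itself would stall at exactly this point. Relatedly, Lemma \ref{Lemma-nabla} is not an entropy estimate: it is the high-probability bound $\left\|\nabla_{G_K}\theta^*\right\|_1=O_{\mathbb{P}}\left((nm)^{1-1/d}\log^{2+l+\frac{2+l}{d}}(nm)\right)$, whose role is to calibrate $\lambda$ and convert the $\left\|\nabla_{G_K}\theta^*\right\|_1$-dependent bound into the stated $r_{n,m}$; the concentration inputs are Lemmas \ref{Omega-Oscar-Paper} through \ref{lemma11-oscar}.
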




Similarly to the constrained estimator, the penalized $K$-NN-FL for spatio-temporal data achieves nearly minimax rate,  $(nm)^{\frac{-1}{d}} +  n^{-1}$  except for logarithm factors, for estimating functions that have bounded variation and meet an additional condition, which extends piecewise Lipschitz continuity or satisfy the piecewise Lipschitz property. The advantage over the constrained estimator is that the penalized estimator can be efficiently found using our proposed ADMM.

\section{Experiments}\label{simu-data}

This section provides numerical evidence to support our theoretical results. We start with empirical experiments to evaluate the performance of the penalized estimators studied in Theorem \ref{Penalized} and Theorem \ref{Pest-d>1}, denoted here as {\color{black}{Locally Adaptive Regression Splines}} and $K$-NN-FL, respectively, for both settings $d=1$ and $d>1$. To establish a comparison, we include various competing methods as benchmarks. Finally, we present a real-world data example to demonstrate the practical applicability of our approach. The code to replicate all of our experiments can
be found at \url{https://github.com/cmadridp/K-NN-FL}.

\subsection{Simulated data}
\label{simu-data-1}

To conduct our simulations, we consider different model setups, which we call scenarios, for both settings $d=1$ and $d>1$. We describe these scenarios in the following subsections. Next, we outline and explain the shared features of our simulations for both the univariate and multivariate settings.

For the data created under each scenario, we perform a $75\%/25\% $ split of the data into a training set
and a test set. All competing models are then fit on the training data, with $5$-fold cross-validation to select tuning parameter
values. The performance of each method is evaluated by the MSE obtained when using the test data set. We then report the average MSE over 100 Monte Carlo simulations.



For each scenario below, we vary $n$ in a range of values. For each fixed value $n$ we then vary a parameter $m_{\text{mult}}$ in $\{0.5,1,1.5,2\}$, to obtain different sets of values for $\{m_i\}_{i=1}^n$. Specifically, we consider:
\begin{itemize}
  \item $m_1=...=m_{n/4}=10(m_{\text{mult}}+2)$,
  \item $m_{n/4+1}=...=m_{n/2}=10m_{\text{mult}}$,
  \item $m_{n/2+1}=...=m_{3n/4}=10(m_{\text{mult}}+3)$,
  \item $m_{3n/4+1}=...=m_{n}=10(m_{\text{mult}}+1)$.
\end{itemize}
The spatial noise is generated as $\delta_i (x)=0.5\delta_{i-1} (x)+\sum_{t=1}^{50} t^{-1} b_{t,i} h_{t}(x)$, where $$\{h_t(x)=\prod_{j=1}^{d}(1/\sqrt{2})\pi\sin(tx^{(j)})\}_{t=1}^{50},$$ are basis functions and  $\{b_{t,i}\}_{t = 1, i=1}^{ 50,n}$ are considered to be i.i.d.~$\mathcal{N}(0,1)$  through different scenarios on this section.  The measurement error is generated as follows. Set $M=10(m_{\text{mult}}+3)$. For each $i\in[n]$, the vector $\widetilde\epsilon_{i}\in\mathbb{R}^M$  is generated recursively as, $\widetilde\epsilon_{i}=0.3\widetilde\epsilon_{i-1}+\xi_i$, where $\{\xi_i\}_{i = 1}^n$ are i.i.d.~$\mathcal{N}(0,0.5\mathbf{I}_{M})$. To form the measurement errors, for each $i\in[n]$, we take the first $m_i$ entries of the vector $\widetilde\epsilon_{i}$ and denote the resulting vector as $\epsilon_i$. We observe the noisy spatio-temporal data $\{y_{i,j}\}_{i=1,j=1}^{n,m_i}$ at design points $\{x_{i,j}\}_{i=1,j=1}^{n,m_i}$  sampled  as follows. First we generate $\{x_{1,j}\}_{j=1}^{m_1}\sim\mathrm{Unif}([0,1]^d)$, then for any $1<i\le n$,
$$x_{i,j}=\begin{cases}
    x_{i-1,j},&\ \text{with probability}\ \phi=0.1.\\
    X_{i,j}\sim\mathrm{Unif}([0,1]^d), & \ \text{otherwise},
\end{cases},\ \forall j=1,...,m_i.$$

Additional numerical experiments are presented in Appendix \ref{ExtraSim}, for some of the scenarios described below. First, in Appendix \ref{ExtraSimD}, we adjust the spatial noise in different ways. We do this by looking at two aspects. To begin, we consider $\{b_{t,i}\}_{t = 1, i=1}^{ 150,n}$ and $\{b_{t,i}\}_{t = 1, i=1}^{ 250,n}$ to be i.i.d.~$\mathcal{N}(0,3)$, and $\mathcal{N}(0,1)$, respectively. Furthermore we modify $\{b_{t,i}\}_{t = 1, i=1}^{ 50,n}$ to be i.i.d.~$\mathcal{N}(0,2)$. In these experiments, we also vary the probability of the mixture $\phi$, to be $0.2$ and $0.05$. We also experiment with the measurement errors in this appendix. We set $\xi_i$ to be $\mathcal{N}(0, \mathbf{I}_{m_i})$ and $\mathcal{N}(0,2 \mathbf{I}_{m_i})$. Moreover, Appendix \ref{sec-DetailedRDataE} provides more details of the real data example presented in this Section.

\subsubsection{Univariate simulation study} 
All experiments were performed in \texttt{R}. To employ the {\color{black}{Locally Adaptive Regression Splines}} estimator, we use the \texttt{trendfilter} function in the \texttt{glmgen} package. We compared our method with additive smoothing splines in a wide range of scenarios, using the \texttt{smooth.spline} function. For all these scenarios, we consider $f^*=\widetilde{f}-\int_{[0,1]}\widetilde{f}(x)dx$ for functions $\widetilde{f}$ defined below. Moreover, we vary $n\in\{200,400,600,800\}$.

{$\bullet$ {\bf{Scenario 1} }} Piece-wise constant. \begin{equation*}
\widetilde{f}(x)=
    \begin{cases}
        5 & \text{if } x\le 0.2,\\
        2 & \text{if } 0.2<x \le 0.4,\\
        8 & \text{if } 0.4<x \le 0.6,\\
        1 & \text{if } 0.6<x. 
    \end{cases}
\end{equation*}

{$\bullet$ {\bf{Scenario 2} }}  Piece-wise linear. 
\begin{equation*}
\widetilde{f}(x)=
    \begin{cases}
        2.5x & \text{if } x\le 0.4,\\
        45x-17 & \text{if } 0.4<x \le 0.6,\\
        -40x+34 & \text{if } 0.6<x \le 0.8,\\
        30x-22 & \text{if } 0.8<x. 
    \end{cases}
\end{equation*}

{$\bullet$ {\bf{Scenario 3} } } Lagrange polynomial.
\begin{equation*}
    \widetilde{f}(x)=-\frac{2125}{6}x^{4}+\frac{2050}{3}x^{3}-\frac{2465}{6}x^{2}+\frac{248}{3}x.
\end{equation*}

{$\bullet$ {\bf{Scenario 4} }}  Sinusoidal function.
\begin{equation*}
\widetilde{f}(x)=2\sin\Big(\frac{2\pi}{(x+0.1)^{\frac{5}{10}}}\Big)+5.
\end{equation*}

Notice that one appealing aspect of our method in the univariate setting is that it can be adapted to different degrees of smoothness. In {\bf{Scenario 1}} to {\bf{Scenario 4}}, the way we generate data covers a wide range of simulation setups, with different levels of smoothness. This comprehensive approach ensures that our findings are robust across a variety of simulation settings, highlighting the versatility and relevance of our method in practical applications.

We apply 5-fold cross-validation to select the {\color{black}{Locally Adaptive Regression Splines}} order \( k \). The selected values align with the known smoothness of the true function \( f^* \): \( k = 1 \) for Scenario 1, \( k = 2 \) for Scenario 2, and \( k = 3 \) for Scenarios 3 and 4. This is consistent with standard practice (e.g., \citet{tibshirani2014adaptive}), where \( k \) is chosen using either cross-validation or structural assumptions.

The performance of each method is showcased in Figure \ref{fig:S1-d=1} and \ref{fig:S3-d=1}. From these visuals, one can discern that our proposed method consistently outperforms the competitors across all scenarios. This superiority in performance aligns with our expectations. As previously established in our theoretical findings, the {\color{black}{Locally Adaptive Regression Splines}} estimator for spatio-temporal data achieves nearly minimax rates in very general classes of functions.


\begin{figure}
    \centering
    \includegraphics[width=0.95\textwidth]{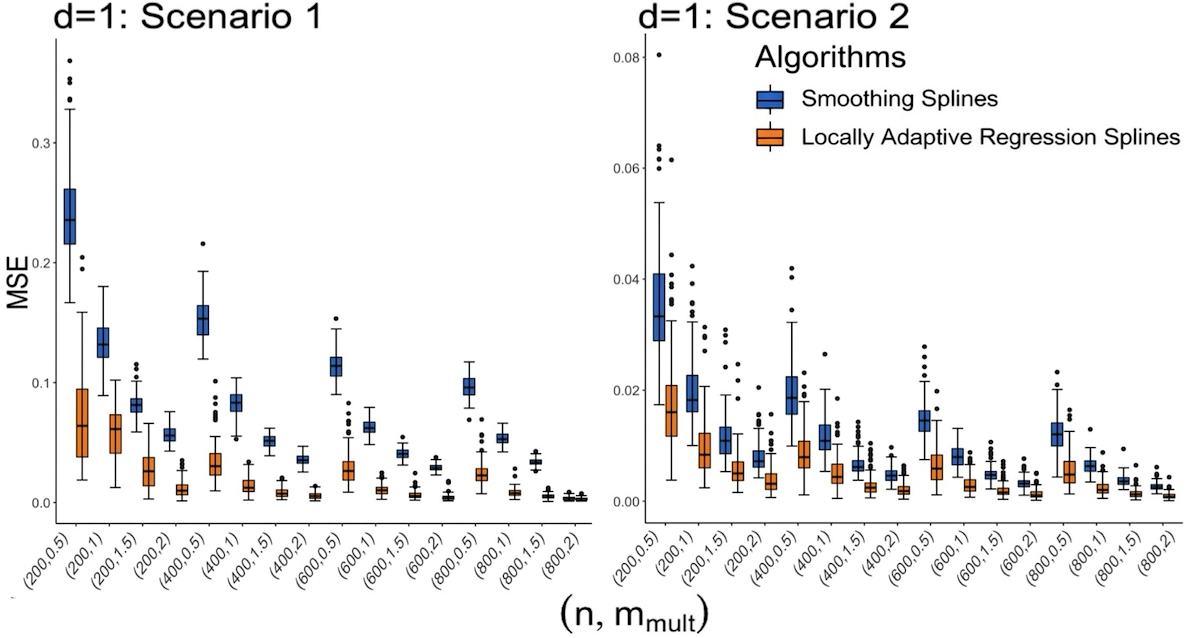}
    
    \caption{Box plots summarizing the results for {\bf{Scenarios 1}}  and {\bf{2}}.} 
    \label{fig:S1-d=1}

\end{figure}

\begin{figure}
    \centering
    \includegraphics[width=0.95\textwidth]{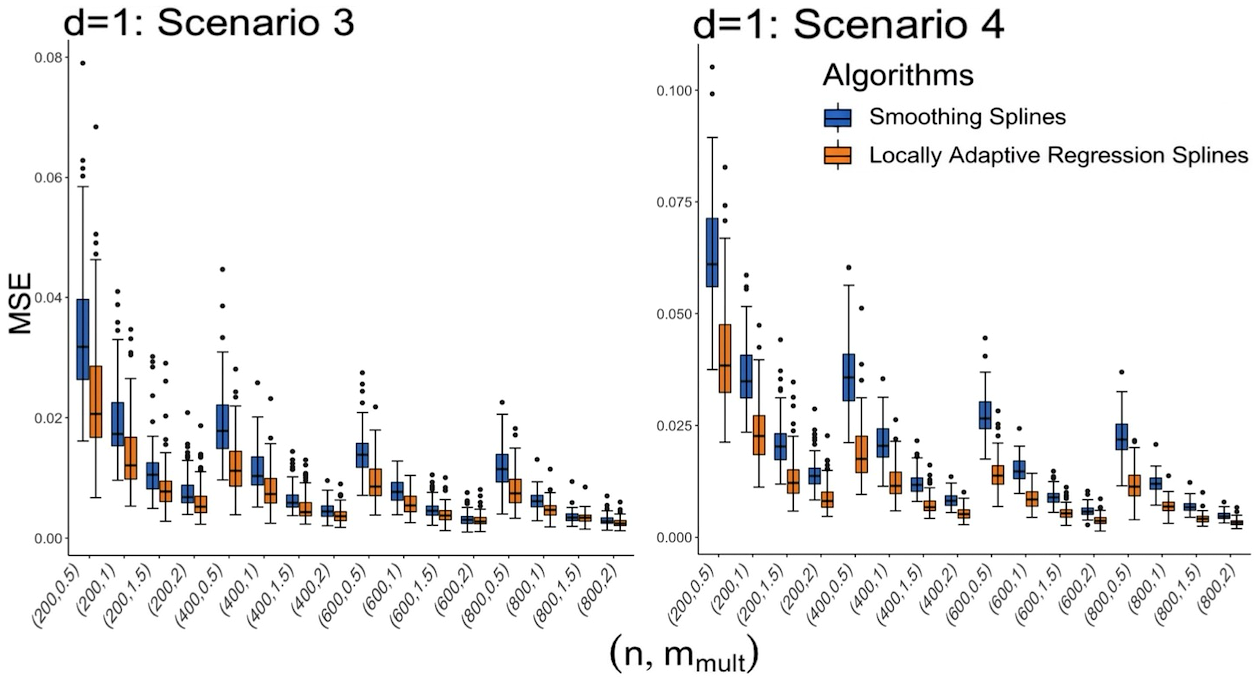}
    
    \caption{Box plots summarizing the results for {\bf{Scenarios 3}}  and {\bf{4}}.} 
    \label{fig:S3-d=1}

\end{figure}

\subsubsection{Multivariate simulation study}\label{Mult-Sim}

Following \cite{madrid2020adaptive}, we let  the number of neighbors in $K$-NN-FL be $K=5$. \textcolor{black}{This choice follows \cite{madrid2020adaptive} and has been found to work well in practice; see Appendix \ref{app:sensitivity-K} for a sensitivity analysis with respect to $K$.}  We compare our estimator with six competitors, 
\begin{itemize}
    \item CART (\cite{breiman1984richard}), implemented in the \texttt{R} package \texttt{rpart}, with the complexity parameter treated as a tuning parameter. CART is a decision tree learning technique that continually splits the data into subsets to aid in predictions. An essential feature of CART is the complexity parameter, which controls the size and complexity of the tree. 
    \item MARS (\cite{friedman1991multivariate}), implemented in the \texttt{R} package \texttt{earth}, with the penalty parameter treated as a tuning parameter. MARS is a non-linear regression approach that uses spline functions to model data. A penalty parameter in MARS decides the flexibility of the splines. 
    \item Random forests (\cite{breiman2001random}), implemented in the \texttt{R} package \texttt{randomForest}, with the number of trees fixed at 500, and with the minimum size of each terminal node treated as a tuning parameter. When deploying Random Forests, the number of trees is a crucial parameter, often set at 500 in many applications. Another tunable parameter is the minimum size of each terminal node, which ensures sufficient data in each decision-making leaf of the trees.
     \item KNN Regression (\cite{stone1977consistent}), implemented in  \texttt{Matlab}, with the number of neighbors treated as a tuning parameter.
\end{itemize}
Additionally, for the case $d=2,$ we also compare our proposed method with,
\begin{itemize}

      \item Local Linear Estimator (LLE). Originally implemented in the univariate case in \cite{li2010uniform}. A slight modification for the $2$-dimensional case can be achieved easily. Let $x\in[0,1]^2$, then a local-linear estimator of $f^*(x)$, is given by $\widehat{a}_0$, where
      $$\left(\widehat{a}_0, \widehat{a}_1\right)=\underset{a_0\in{\mathbb{R}}, a_1\in\mathbb{R}^2}{\arg \min } \frac{1}{n} \sum_{i=1}^n \frac{1}{m_i} \sum_{j=1}^{m_i}\left\{y_{i,j}-a_0-a_1\left(x_{i,j}-x\right)\right\}^2 \mathcal{K}_{h}\left(x_{i,j}-x\right)$$
      with $\mathcal{K}$ a given kernel with bandwidth $h$. Here $h$ is the tuning parameter, and the chosen kernel is Gaussian.
      \item Voronoigram. Proposed in \cite{hu2022voronoigram}, it is considered with the penalty parameter $\lambda$ and the weighted penalty operator treated as tuning parameters. The code kindly provided by the authors for this algorithm is only available for the two-dimensional setting. 
\end{itemize}
We consider scenarios of spatio-temporal data inspired by \cite{madrid2020adaptive}. Specifically, we investigate the subsequent scenarios: for {\bf{Scenario 5}} and {\bf{Scenario 6}}, we set $d=2$ and consider $n \in \{1000,2000\}$;  while for {\bf{Scenario 7}} and {\bf{Scenario 8}}, we keep $n$ fixed at 1000 and vary $d$ between $3$ and $10$.

{$\bullet$ {\bf{Scenario 5} }}
The function $f^*:[0,1]^2 \rightarrow \mathbb{R}$ is piecewise constant,
$$
f^*(x)=\mathbf{1}_{\left\{t \in \mathbb{R}^2:\left\|t-\frac{3}{4}(1,1)^T\right\|_2<\left\|t-\frac{1}{2}(1,1)^T\right\|_2\right\}}(x) .
$$

{$\bullet$ {\bf{Scenario 6} }}
The function $f^*:[0,1]^2 \rightarrow \mathbb{R}$ is piecewise constant,
$$
f^*(x)=\mathbf{1}_{\left\{\left\|x-\frac{1}{2}(1,1)^T\right\|_2^2 \leq \frac{2}{1000}\right\}}(x).
$$

{$\bullet$ {\bf{Scenario 7} }}
The function $f^*:[0,1]^d \rightarrow \mathbb{R}$ is defined as,
$$
f^*(x)= \begin{cases}1 & \text { if }\left\|x-\frac{1}{4} \mathbf{1}_d\right\|_2<\left\|x-\frac{3}{4} \mathbf{1}_d\right\|_2, \\ -1 & \text { otherwise. }\end{cases}
$$

{$\bullet$ {\bf{Scenario 8} }} The function $f^*:[0,1]^d \rightarrow \mathbb{R}$ is defined as
$$
f^*(x)= \begin{cases}2 & \text { if }\left\|x-q_1\right\|_2<\min \left\{\left\|x-q_2\right\|_2,\left\|x-q_3\right\|_2,\left\|x-q_4\right\|_2\right\}, \\ 1 & \text { if }\left\|x-q_2\right\|_2<\min, \left\{\left\|x-q_1\right\|_2,\left\|x-q_3\right\|_2,\left\|x-q_4\right\|_2\right\}, \\ 0 & \text { if }\left\|x-q_3\right\|_2<\min \left\{\left\|x-q_1\right\|_2,\left\|x-q_2\right\|_2,\left\|x-q_4\right\|_2\right\}, \\ -1 & \text { otherwise, }\end{cases}
$$
where $q_1=\left(\frac{1}{4} \mathbf{1}_{\lfloor d / 2\rfloor}^T, \frac{1}{2} \mathbf{1}_{d-\lfloor d / 2\rfloor}^T\right)^T, q_2=\left(\frac{1}{2} \mathbf{1}_{\lfloor d / 2\rfloor}^T, \frac{1}{4} \mathbf{1}_{d-\lfloor d / 2\rfloor}^T\right)^T, q_3=\left(\frac{3}{4} \mathbf{1}_{\lfloor d / 2\rfloor}^T, \frac{1}{2} \mathbf{1}_{d-\lfloor d / 2\rfloor}^T\right)^T$ and $q_4=\left(\frac{1}{2} \mathbf{1}_{\lfloor d / 2\rfloor}^T, \frac{3}{4} \mathbf{1}_{d-\lfloor d / 2\rfloor}^T\right)^T$.

\begin{figure}[h]
    \centering
     \includegraphics[width=1.\textwidth]{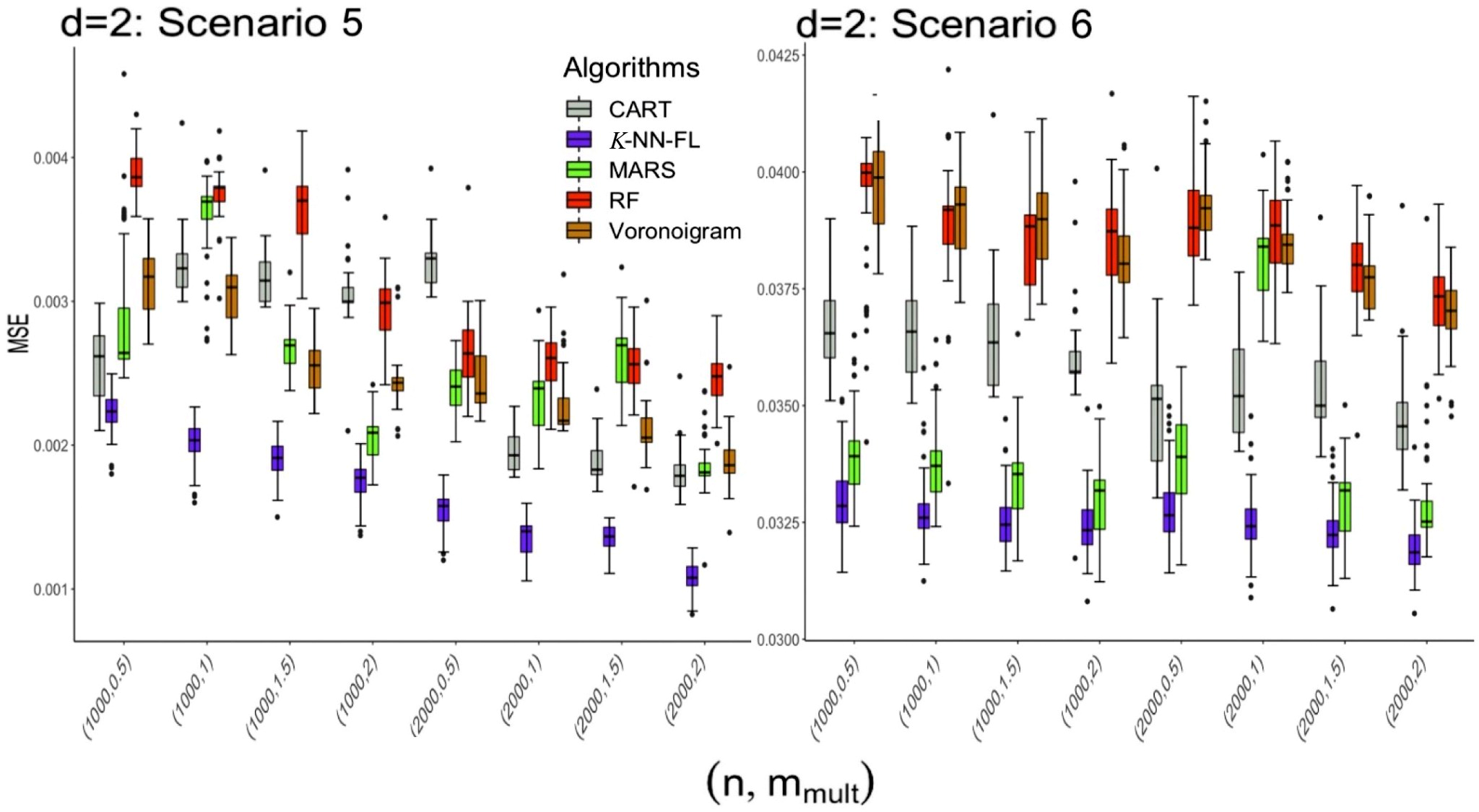}
    
    \caption{
    From left to right, we present a box plot of the performance of the top 5 methods under {\bf{Scenario 5}} and {\bf{Scenario 6}}, over $100$  repetitions, respectively. In each figure, the performance of each competitor is measured by the MSE, averaged over the $100$ repetitions.}
    \label{fig:S5S6}

\end{figure}

\begin{figure}
    \centering
    \includegraphics[width=0.9\textwidth]{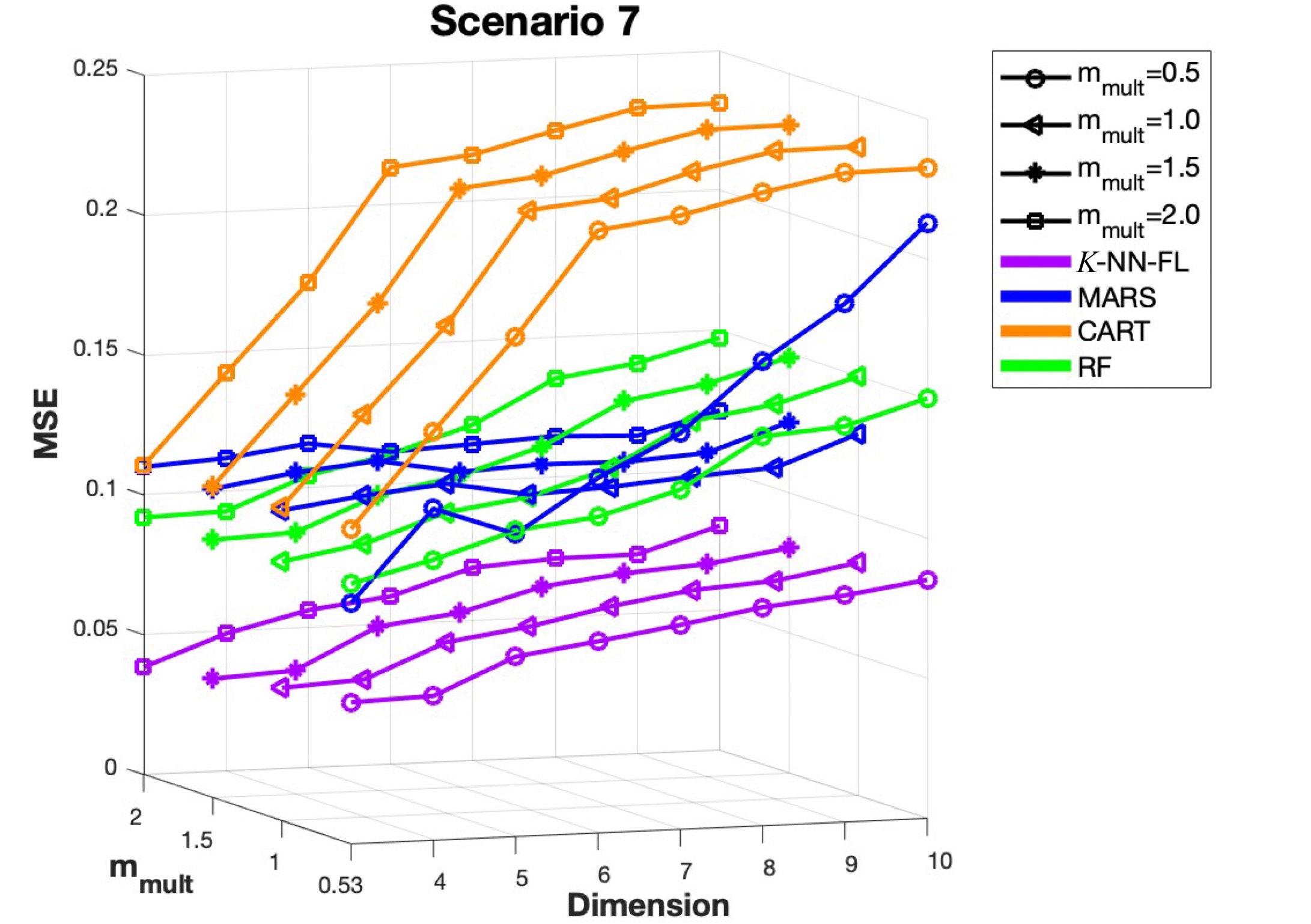}
    
    \caption{MSE, averaged over 100 Monte Carlo simulations, for {\bf{Scenario 7}}.}
    \label{fig:S7}

\end{figure} 

\begin{figure}
    \centering
    \includegraphics[width=0.9\textwidth]{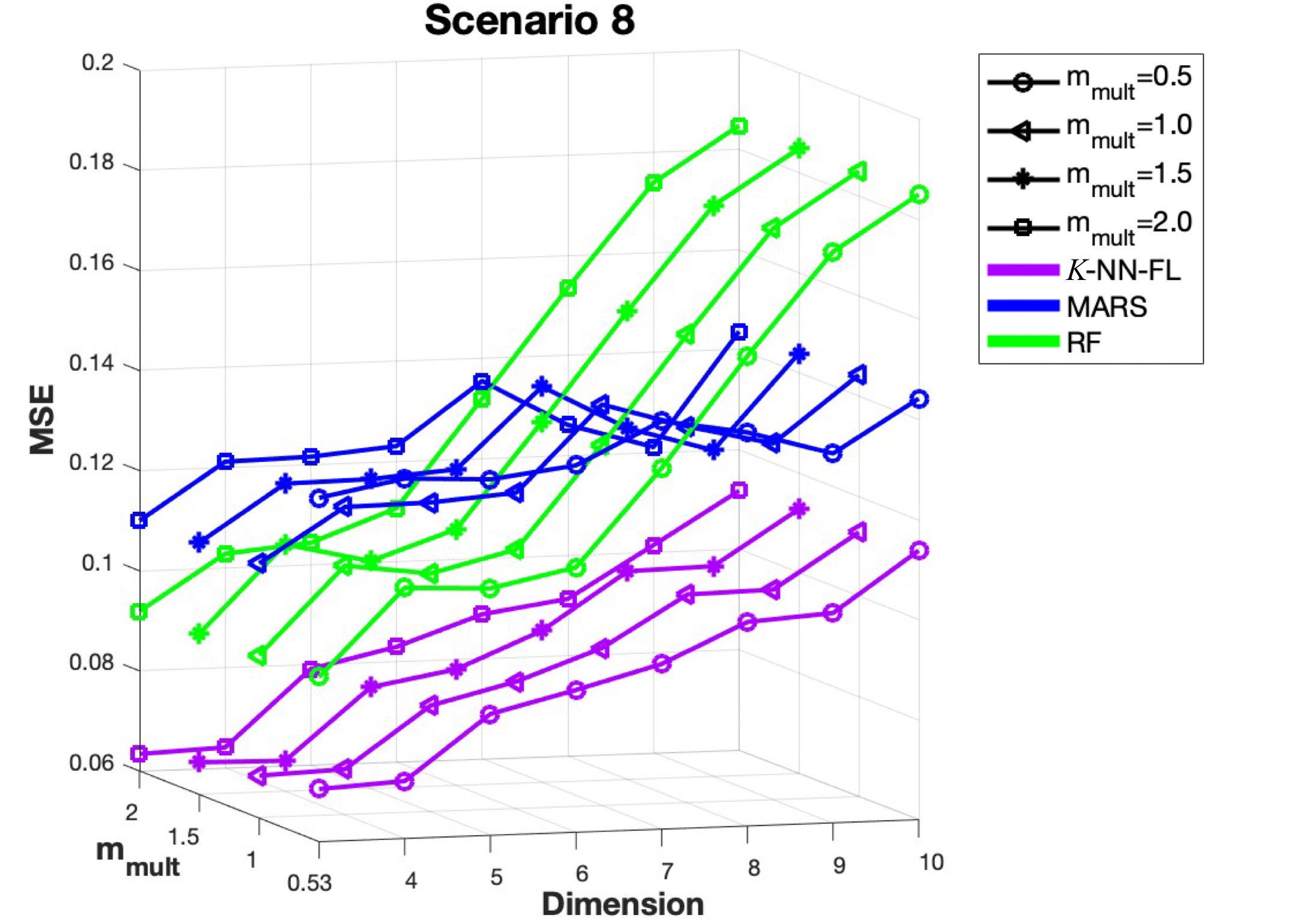}
    
    \caption{MSE, averaged over 100 Monte Carlo simulations, for {\bf{Scenario 8}}.}
    \label{fig:S4-d>2-1}

\end{figure}

 Figures \ref{fig:S5S6}--\ref{fig:S4-d>2-1} provide a visualization of the performance best performing methods in each scenario, while more comprehensive results can be found in Section \ref{ExtraSim}. Remarkably, $K$-NN-FL  outperforms all other competitors. It is worth noting that across all scenarios, the MSE for each competitor method decreases as the sample size increases. However, in Section \ref{ExtraSimD}, where spatial noise, measurement error, and probability of the mixture are varied, $K$-NN-FL is the only method that maintains this behavior.

\begin{table}[h]
	\caption{Summary of the result of the Real Data Example.  In this table, we present the averaged Prediction Error (and standard deviation), both multiplied by $100$.  In each setting, we highlight the best result in \textbf{bold} and
        the second best result in \textbf{\textit{bold and italic}}}
	\label{RdTable}
	\begin{center}
		\begin{small}
			\begin{sc}
				\begin{tabular}{lccr}
					\hline
					
					Method   & June & July & August  \\
					\hline
					\\
					& \multicolumn{3}{c}{Region $\mathbb{I}$}\\
					\multicolumn{4}{c}{average (standard deviation) of Prediction Error} \\
					$K$-NN-FL & \textbf{0.740 (0.287)}& \textbf{0.660 (0.257)} &\textbf{ 0.590 (0.251)} \\
					MARS & \textbf{\textit{0.745 (0.296)}} & \textbf{\textit{0.671 (0.258)}} &\textbf{\textit{ 0.595 (0.251) }}\\
					CART& 4.454 (0.985) & 4.556 (0.994) &  4.048 (0.901) \\
					RF & 0.914 (0.522) & 0.888 (0.472) & 0.788 (0.321) \\
					KNN-R & 63.121 (10.731) & 53.818 (9.010) & 49.831 (8.678) \\
					LLE & 314.162 (53.941) & 298.510 (50.918) & 201.399 (78.510) \\
     Voronoigram  & 5.478 (0.311) & 7.301 (0.292) & 12.590 (0.563) \\
					\\
					& \multicolumn{3}{c}{Region $\mathbb{II}$}\\
					\multicolumn{4}{c}{average (standard deviation) of Prediction Error}  \\
					$K$-NN-FL & \textbf{0.610 (0.227)}& \textbf{0.810 (0.280)} & \textbf{0.880 (0.309)} \\
					MARS & \textbf{\textit{0.622 (0.231)}} & \textbf{\textit{0.839 (0.292)}} & \textbf{\textit{0.938 (0.330)}} \\
					CART& 2.383 (0.472) & 2.534 (0.509) &  2.506 (0.489) \\
					RF & 1.224 (3.010) & 1.412 (0.490) & 0.788 (0.321) \\
					KNN-R & 34.219 (0.728) & 39.810 (0.791) & 70.291 (10.033) \\
					LLE & 130.511 (39.610) & 149.190 (39.999) & 159.211 (40.192) \\
     Voronoigram  & 12.594 (1.160) & 27.742 (1.213) & 12.285 (1.090) \\
					\\
					& \multicolumn{3}{c}{Region $\mathbb{III}$}\\
					\multicolumn{4}{c}{average (standard deviation) of Prediction Error}  \\
					$K$-NN-FL & \textbf{0.200 (0.181)}& \textbf{0.180} \textbf{\textit{(0.182)}} & \textbf{0.160 (0.160) }\\
					MARS & \textbf{0.200} \textbf{\textit{(0.182)}} & \textbf{\textit{0.189}} \textbf{(0.170)} &\textbf{\textit{ 0.167 (0.162)}} \\
					CART& 0.218 (0.280) & 0.196 (0.200) &  0.178 (0.168) \\
					RF & 0.274 (0.310) & 0.259 (0.215) & 0.240 (0.190) \\
					KNN-R & 2.349 (0.689) & 2.197 (0.612) & 2.003 (0.519) \\
					LLE & 58.300 (1.630) & 46.109 (0.930) & 42.851 (0.901) \\
     Voronoigram  & 1.740 (0.419) & 2.317 (0.520) & 2.624 (0.529) \\
					\\
					\hline
				\end{tabular}
			\end{sc}
		\end{small}
	\end{center}
\end{table}

\begin{figure}
	\centering
	\includegraphics[width=0.79\textwidth]{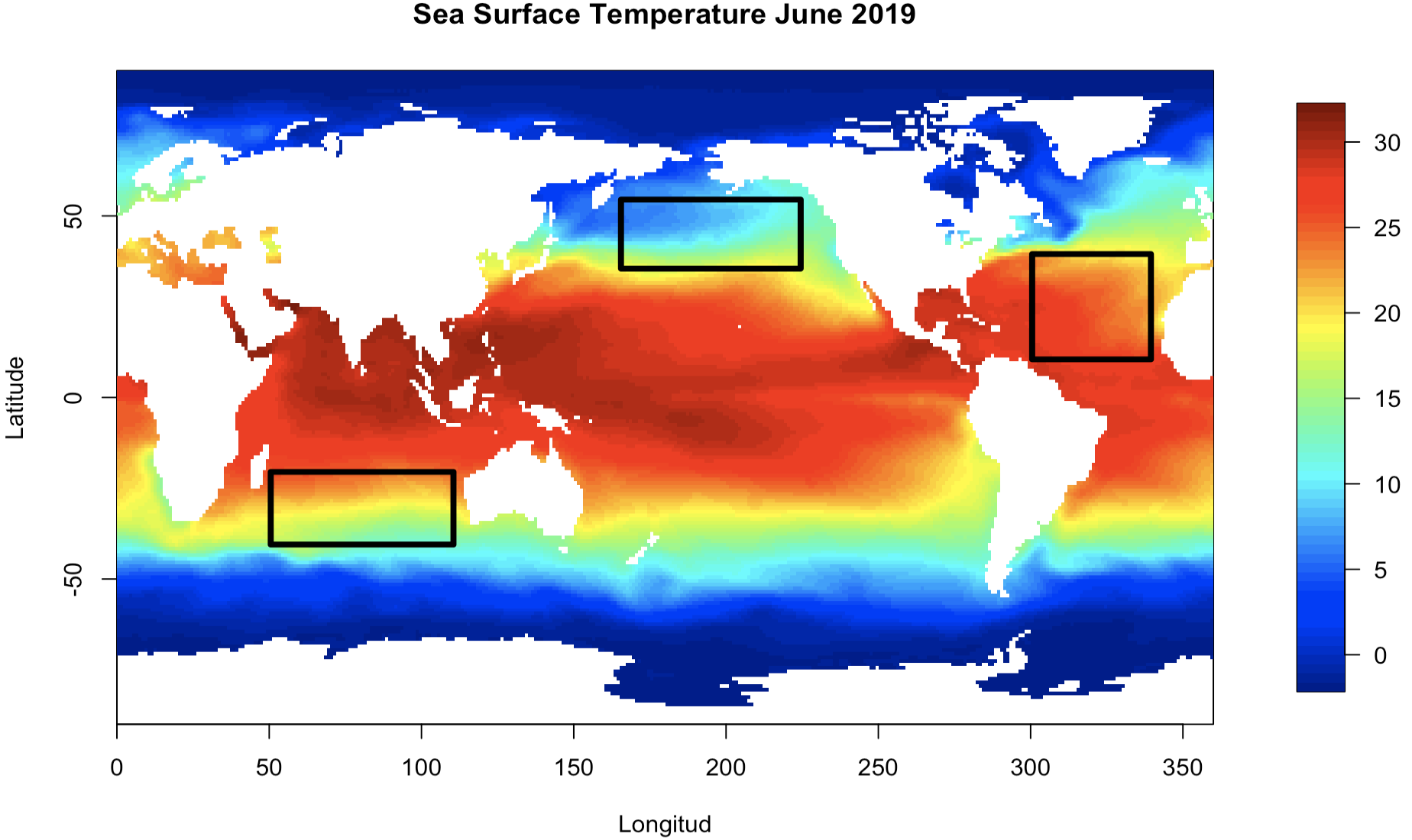}
	
	\caption{SST data captured for the specific time period of June 2019. Highlighted in the graphic are three black rectangles, each symbolizing one of the distinct designs under consideration in our study.} 
	\label{Regions}
	
\end{figure}

\begin{figure}
	\centering
	\includegraphics[width=0.79\textwidth]{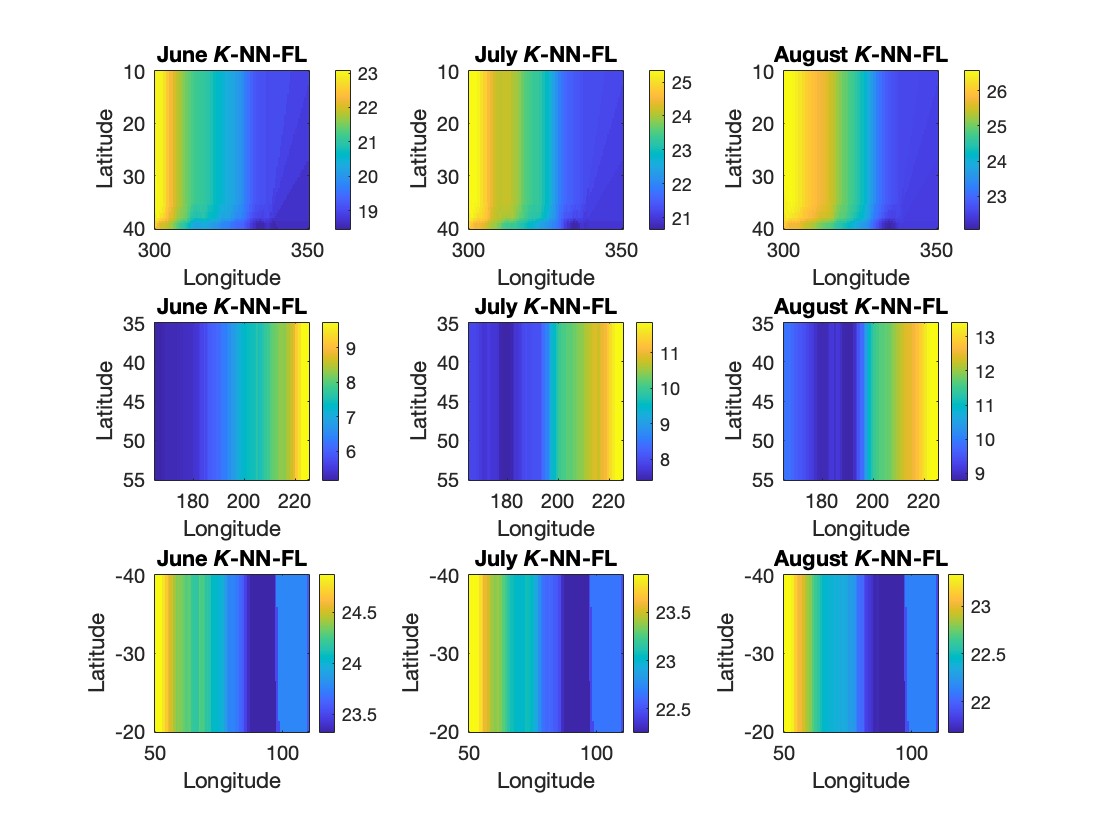}
	\caption{From top to bottom, each row presents Heatmaps of the SST function estimated by our proposed method in the regions $\mathbb{I}$, $\mathbb{II}$ and $\mathbb{III}$, respectively. Each column corresponds to the months of June, July, and August, respectively.} 
	\label{Est-Func-R1}
\end{figure}

\subsection{Real data application}\label{Real-data-ex}
We consider the COBE-SSTE dataset \citep{dataset}, which consists of monthly average sea surface temperature (SST) from 1850 to 2019, on three different design points of $1$ degree latitude by $1$ degree longitude $(30 \times 40)$, $(20 \times 60)$, and $(21 \times 61)$, denoted by $\mathbb{I}, \mathbb{II}$ and $\mathbb{III}$ respectively. The specific coordinates are latitude $10$S-$39$S and longitude $300$E-$340$E for a design located in the Atlantic Ocean (Region $\mathbb{I}$). A second design that we can find in the Pacific Ocean with coordinates latitude $35$S-$55$S and longitude $165$E-$225$E is considered (Region $\mathbb{II}$). Finally, we consider a grid design in the Indian Ocean with coordinates latitude -$40$S-$20$S and longitude $50$E-$110$E (Region $\mathbb{III}$). See Figure \ref{Regions} for a visualization of these three regions.


To make sea surface temperature predictions, taking into account seasonality, we have applied our proposed method along with the other methods CART, MARS, RF, KNN-R, LLE, and Voronoigram, using data from the months of June, July, and August for each year from 1850 to 2019, separately. The total number of years is $n=170$, and we have considered $m_i=1200$ for each year $i=1,...,n$ for the regions $\mathbb{I}$ and $\mathbb{II}$, while $m_i=1281 \,\forall i=1,...,n$ has been considered for the region $\mathbb{III}$.

To implement our analysis, we have followed the methodology described in Section \ref{simu-data-1}. First, the data is split into training and test sets using $30$ $75\%/25\%$ splits for all the regions $\mathbb{I}$, $\mathbb{II}$, and $\mathbb{III}$. All models have been fitted using the training data with $5$-fold cross-validation to determine the tuning parameter values and prediction performance has been evaluated on the test set.


To compare the performance of our proposed method against the competitors, we use the average test set prediction error (across the 30 test sets) as the performance metric. The results have been highly promising; our proposed method has not only performed well but has actually outperformed all other methods in this comparative study. Importantly, these results are consistent across different months and geographical regions. All these comparative details have been meticulously tabulated and can be found in Table \ref{RdTable}. To offer a visualization of the estimated SST function, we have plotted the predicted SST function using our $K$-NN-FL method. These
visualizations are provided in Figure \ref{Est-Func-R1} for each of the different geographical regions. The interested reader in a more detailed comparison, we have also included a section that compares our estimated SST function with that of the most competitive alternative method, MARS. This section is referenced as Appendix \ref{sec-DetailedRDataE}.

\section{Outline of the proof of Theorems \ref{thm:main tv} and \ref{Penalized}}
 \label{OutlineT1-2}
\ \ \ \
{\bf{Theorem \ref{thm:main tv}}}:
The proof of Theorem \ref{thm:main tv} proceeds in two main steps. The first step analyzes the estimation error in terms of the $L_2$ norm. Here, we rely on two key ingredients: a minimization property and a coupling between the empirical norm and the $L_2$ norm. The second step establishes deviation bounds, which are based on Rademacher complexities and account for both measurement errors and spatial noise.

A central difficulty in both steps comes from the interaction between time and space, what we refer to as spatio-temporal dependence. To address this, we introduce a blocking strategy described in Appendix~\ref{BetaMsection}. The idea is to convert the sum of the original $\beta$-mixing sequence into a sum of  approximately independent blocks. The total number of such blocks is of order $O(\log(n))$. This reduction allows us to analyze each block separately, treating them as independent, and to reduce the complexity of the full sequence to a process of effective length $O(\log(n))$.

{\bf{Decomposition of estimation error}}:
Let \(\eta^2 = \eta_1^2 + V_n^2 \eta_2^2 + \frac{1}{n}\) denote the target rate, where \(\eta_1^2 \asymp \frac{\log n}{n}\) and \(\eta_2^2 \asymp \frac{\log^{4k/(2k+1)}(n)}{(nm)^{2k/(2k+1)}}\). In this step, our goal is to control the probability \(\mathbb{P}\left(\|\widehat{f}_{k,V_n} - f^*\|_2^2 > 4\eta^2\right)\).

To proceed, we center the true function \( f^* \) by subtracting its empirical mean. This is needed because the estimator \(\widehat{g}_{k,V_n}\) is constructed to have empirical mean zero. We define
\[
\mathcal{A}_{f^*} := \frac{1}{\sum_{i=1}^n m_i} \sum_{i=1}^n \sum_{j=1}^{m_i} f^*\left(x_{i, j}\right),
\quad \widetilde{f}^* := f^* - \mathcal{A}_{f^*},
\]
so that \(\widetilde{f}^*\) lies in the same centered function space as the estimator.

We now introduce the set \(\Lambda\) consisting of all convex combinations of \(\widehat{g}_{k,V_n}\) and \(\widetilde{f}^*\). For any \(f \in \Lambda\), we define the deviation \(\Delta_f := f - \widetilde{f}^*\), which measures how far \(f\) is from the centered true function. Using the fact that \(\widehat{g}_{k,V_n}\) minimizes the empirical squared error over the total variation ball \(B_{TV}^{k-1}(V_n)\), and that this set is convex, we can derive the following inequality,
\begin{align}
\label{Equation1-maintext}
	\|f^* - \bar{y} - f\|_{nm}^2
&\lesssim  \underbrace{\frac{1}{n} \sum_{i=1}^n \frac{1}{m_i} \sum_{j=1}^{m_i} \left[\Delta_f(x_{i,j}) \delta_i(x_{i,j}) - \langle \Delta_f, \delta_i \rangle_{\mathcal{L}_2} \right]}_{\text{Term (i): spatial noise contribution}} \\
&\quad + \underbrace{\frac{1}{n} \sum_{i=1}^n \frac{1}{m_i} \sum_{j=1}^{m_i} \Delta_f(x_{i,j}) \epsilon_{i,j}}_{\text{Term (ii): measurement error contribution}} 
+  \underbrace{\frac{1}{n} \sum_{i=1}^n \frac{1}{m_i} \sum_{j=1}^{m_i} \langle \Delta_f, \delta_i \rangle_{\mathcal{L}_2}}_{\text{Term (iii): spatial mean effect}}. \nonumber
\end{align}

Armed with Inequality~\eqref{Equation1-maintext}, the next step is to relate the empirical norm to the $L_2$ norm. This transition is guided by Lemma~\ref{lemma2}, which applies to function classes that are bounded both in total variation and in the $L_\infty$ norm. To apply the lemma, we note that for any \(f \in \Lambda\), the deviation \(\Delta_f\) satisfies \(J_k(\Delta_f) \leq 2V_n\). What remains is to show that \(\Delta_f\) also lies in an $L_\infty$ ball. In fact, with high probability, we can guarantee that
\begin{equation}
\label{family}
\{\Delta_f : f \in \Lambda\} \subset B_{TV}^k(2V_n) \cap B_\infty(L_n) =: \mathcal{F},
\end{equation}
where \(L_n \asymp V_n\) is a quantity of the same order as the tuning parameter \(V_n\). Given this, Lemma~\ref{lemma2} allows us to control the $L_2$ norm of \(\Delta_f\) in terms of its empirical norm. Specifically, with high probability,
\[
\|\Delta_f\|_2^2 \lesssim \|\Delta_f\|_{nm}^2 + L_n^2 \eta_2^2.
\]

To derive the full bound, we further use that the response mean \(\bar{y}\) and the empirical mean \(\mathcal{A}_{f^*}\) of the true signal differ by at most \(\eta_1\) with high probability. This contributes the additional term \(\eta_1^2\). Moreover, Lemma~\ref{expected-value-b} helps to control the average of the inner products 
\(\langle \Delta_f, \delta_i \rangle_{\mathcal{L}_2}\), corresponding to Term~(iii) in 
Equation~\eqref{Equation1-maintext}, and contributes another term of order \(\frac{1}{n}\). 

Putting everything together, we obtain that for any \(f \in \Lambda\), with high probability,
\begin{align}
\label{u-b-2-maintext}
\|\Delta_f\|_2^2
&\lesssim 
\underbrace{\eta_1^2}_{\text{mean centering}} 
+ \underbrace{V_n^2 \eta_2^2}_{\text{empirical-to-}L_2\text{ bound}} 
+ \underbrace{\frac{1}{n}}_{\text{Term~(iii) bound}} 
 + \text{Term~(i)} +\text{Term~(ii)}.
\end{align}

To conclude this step, let \(\Omega\) denote the high-probability event where the previously mentioned
inequality holds. Then, for any \(\varepsilon > 0\), we can write
\[
\mathbb{P}\left(\|\widehat{f}_{k,V_n} - f^*\|_2^2 > 4\eta^2\right) \leq A_1 + A_2 + \frac{\varepsilon}{2}.
\]
Here, \(A_1\) is the probability that the supremum over the class \(\mathcal{F}\) of the empirical inner product between the spatial noise and \(\Delta_f\) exceeds \(\eta^2\). The term \(A_2\) represents the analogous probability involving measurement errors.

\textbf{Deviation bounds for measurement errors and spatial noise.} 
In the full argument presented in Appendix~\ref{sec-proof-thm2}, each of the terms $A_1$
  and $A_2$ is controlled using standard concentration inequalities from empirical process theory. In particular, the analysis relies critically on the fact that the deviations \(\Delta_f\) lie within the class \(\mathcal{F}\), defined as the intersection of a total variation ball and a uniform bound in Equation~\eqref{family}. This structural constraint is key to deriving uniform bounds over the function class.


{\bf{Theorem \ref{Penalized}}}:
The proof of Theorem \ref{Penalized} follows a similar path, with the incorporation of an additional step into the first phase of Theorem \ref{thm:main tv}. This relies on the fact that the constraint-based on $V_n$ is no longer available for Theorem \ref{Penalized}. We point out the main difference in Appendix \ref{OutLineThm2}.

\section{Outline of the proof of Theorems \ref{Cest-d>1} and \ref{Pest-d>1}} \label{OutlineT3-4}
In the course of proving both theorems, addressing the spatio-temporal dependence poses a significant challenge. We employ the strategy delineated in Section \ref{OutlineT1-2} to confront this complexity.

We shall now explain the basic ideas behind the proofs of both Theorems. We start with a discussion on embedding a grid graph into a $K$-NN graph, under Assumptions \ref{assume:tv functions}. This construction will then be utilized in the subsequent sections to derive an upper bound for the MSE of the 
$K$-NN-FL estimator.

In the flow-based proof of Theorem 4 presented by \cite{JMLR:v15:vonluxburg14a}, concerning commute distance on $K$-NN-graphs, a concept of embedding was introduced. Within this context, the authors explained the idea of a ``valid grid" as described in Definition 17. Given a predetermined set of design points, a grid graph $G$ is considered valid if it meets the following criteria:
\begin{itemize}
    \item The grid width must be sufficiently large such that every cell within the grid contains at least one design point.
    \item Conversely, the grid width should not be too large. Points situated within the same grid cell, or those in adjacent cells, maintain connectivity within the 
   $K$-NN graph.
\end{itemize}
The concept of a valid grid was originally formulated for fixed design points. However, through a minor modification, \cite{madrid2020adaptive} constructed a grid graph for random design points. This graph, with a high probability, meets the criteria for a valid grid as defined by \cite{JMLR:v15:vonluxburg14a}, assuming that the design points are independent. We expand on this construction by including the spatio-temporal dependence assumption as detailed in Assumption \ref{assume:tv functions}. Comprehensive details on this adaptation are provided in Appendix \ref{Lat-section}. We shall now delineate the methodological framework behind the grid embedding. This embedding, when applied to any particular signal, ensures a lower bound on the total variation along the $K$-NN graph.

Given $N \in \mathbb{N}$, we formulate a $d$-dimensional grid graph $G_{\text {lat }}=\left(V_{\text {lat }}, E_{\text {lat }}\right)$, which represents a lattice graph within the $[0,1]^d$ characterized by equal side lengths. This graph contains a total of $\left|V_{\text {lat }}\right|=N^d$ nodes. For simplification purposes, we assume that the nodes of the grid correspond to the points in $P_{\text {lat }}(N)$, where $P_{\text {lat }}(N)$ is the set of centers of the elements of the partition $\mathcal{P}_{N^{-1}}$, adhering to the notations established in Section \ref{sec:notation}. Furthermore, $z, z^{\prime} \in P_{\text {lat }}(N)$ share an edge in the graph $G_{\text {lat }}(N)$ if and only if $\left\|z-z^{\prime}\right\|_2=$ $N^{-1}$. If the nodes corresponding to $z, z^{\prime}$ share an edge, then we will write $\left(z, z^{\prime}\right) \in E_{\mathrm{lat}}(N)$.

Using $P_{\text {lat }}(N)$, for any signal $\theta \in \mathbb{R}^{\sum_{i=1}^nm_i}$, we can construct two vectors, specifically $\theta_I \in  \mathbb{R}^{\sum_{i=1}^nm_i}$ and $\theta^I \in \mathbb{R}^{N^d}$. The vector $\theta_I$ is a signal vector that is constant within lattice cells. While the vector $\theta^I$ comprises coordinates aligned with the various nodes of the lattice, which correspond to the centers of the cells. The precise definitions of $\theta_I$ and $\theta^I$ are given in Section \ref{Lat-section}. 

Since $\theta$ and $\theta_I$ have the same dimension, it is natural to ask how these two relate to each other, at least for the purpose of understanding the empirical process associated with the $K$-NN-FL estimator. Moreover, given that $\theta^I \in \mathbb{R}^{N^d}$, one can try to relate the total variation of $\theta^I$ along a $d$-dimensional grid with $N^d$ nodes, with the total variation of the original signal $\theta$ along the $K$-NN graph. 

Our next discussions will focus on clarifying these relationships. Notably, Lemma \ref{lemma8-Oscar} offers immediate insights on how to manage the empirical process tied to the $K$-NN-FL estimator. Specifically, by using the argument about minimizing properties, similar to what we talked about in Section \ref{OutlineT1-2}, it becomes crucial to obtain bounds for the expressions:
$$
 \sum_{i=1}^n \sum_{j=1}^{m_i}\left(\theta_{i, j}-\theta_{i, j}^*\right) \epsilon_{i,j},\ \sum_{i=1}^n \sum_{j=1}^{m_i}\left(\theta_{i, j}-\theta_{i, j}^*\right) \delta_i\left(x_{i,j}\right).
$$
To this end, let $\{\epsilon_{i,j}\}_{i=1,j=1}^{n,m_i}=:\epsilon\in\mathbb{R}^{\sum_{i=1}^nm_i}$ using the notation detailed in Section \ref{sec:notation}. The following relation can be expressed, 
$$
 \epsilon^T\left(\hat{\theta}-\theta^*\right)= \epsilon^T\left(\hat{\theta}-\hat{\theta}_I\right)+ \epsilon^T\left(\hat{\theta}_I-\theta_I^*\right)+ \epsilon^T\left(\theta_I^*-\theta^*\right) .
$$
A similar approach is used for handling spatial noise. As a result, in proving Theorem \ref{Cest-d>1} as detailed in Appendix \ref{proofT3}, our strategy focuses on setting bounds for each term on the right-hand side of the aforementioned equation. For the proof of Theorem \ref{Pest-d>1}, our methodology incorporates an auxiliary step, mirroring the strategy delineated in Section \ref{OutlineT1-2} pertinent to the proof of Theorem \ref{Penalized}. Comprehensive elaboration on this is available in Appendix \ref{proofT4}.

\section{Conclusion}\label{sec-conclusion} 
We tackle the problem of estimating the non-parametric regression function from data that exhibit both temporal and spatial dependence, which has not been studied in the literature. In the univariate case, our proposed estimator is built using the univariate {\color{black}{Locally Adaptive Regression Splines}} estimator, i.e., defined by penalizing according to the sum of the total variation operator of the weak derivative of the component function. We have demonstrated that this estimator is minimax optimal for estimating functions in the class of signals in a total variation ball. Furthermore, we have expanded our research to include the multivariate scenario. In this context, our proposed $K$-NN-FL estimator proves to be minimax optimal for estimating functions in the class of signals in a total variation along the $K$-NN graph.

\subsection{Future Directions}


  A limitation of our work relates to  
 Theorem \ref{Cest-d>1}, which deals with estimating a multivariate function from arbitrary design points with bounded variation. A natural open question is how to extend this to higher-order versions of bounded variation. However, this remains an open problem even in the case of independent data, not to mention in the spatio-temporal setting that we study.


Exploring the properties of mixing coefficients under more relaxed decay assumptions, like polynomial decay, opens up additional research opportunities. Relevant insights on this topic are provided next. We addressed the dependence assumption by creating independent block copies from a given $\beta$-mixing sequence. This process is fully explained in Appendix \ref{BetaMsection}.


In our analysis, we consistently applied two main techniques. For concrete examples where these two techniques were used see for instance Lemma \ref{lemma1} and Lemma \ref{lemma2}, respectively. Both techniques, when dealing with a
$\beta$-mixing sequence required rigorous analysis within designated blocks that inherently retain independence. The first approach centers on the ``Special Event", where block copies match exactly with their originals. In cases of exponential decay, as discussed in Appendix \ref{BetaMsection}, this ``Special Event" occurs with high probability. Nevertheless, when adopting polynomial decay assumptions, challenges arise regarding the block size, $L$, which must not exceed $n,$ as detailed in Appendix \ref{BetaMsection}. The plausible occurrence of this event with high probability is now contingent upon the condition  $L\asymp n$. Such conditions, combined with the cost implication of $O(L)$ in the rate, see Appendix \ref{sec-proof-thm2}, culminate in achieving a rate that is suboptimal.


Identifying this challenge highlighted the need for a different approach. Numerous parts of our proofs included a second technique. Instead of only focusing on the previously mentioned ``Special Event", we strategically incorporate processes linked to the block copies pertinent to the immediate problem at hand, as exemplified in Lemma \ref{lemma1} and the proof in Appendix \ref{sec-proof-thm2}. However, this technique comes with its own set of challenges. Specifically, there is the task of quantifying the distance between the process associated with the original blocks and the process related to their copies, see Equation (\ref{helpCon-Eq2}) in Lemma \ref{lemma1}, or Equation (\ref{eqn:e500}) in Appendix \ref{sec:aux_lemmas} for concrete examples. While general bounds can be easily established, deriving optimal bounds is still a complicated task.

\acks{ The second author was partially supported by 2023-2024 Hellman Fellowship.}

\newpage
\appendix
\section{Additional Notation}
\label{additionaNot}
Throughout the proof of Theorem \ref{Cest-d>1} and Theorem \ref{Pest-d>1}, some additional notation is needed. First, in the case of the space $\mathbb{R}^d$, we will use the notation $\|\cdot\|_{2}$ for the usual euclidean $l_2$ norm, and we will write $B_\varepsilon(x)$ for the ball $B_\varepsilon\left(x,\|\cdot\|_2\right)$. 
Furthermore, for $\varepsilon>0$ small enough, we denote by $\mathcal{P}_\varepsilon$ a rectangular partition of $(0,1)^d$ induced by $0, \varepsilon, 2 \varepsilon, \ldots, \varepsilon(\lfloor 1 / \varepsilon\rfloor-1), 1$, so that all the elements of $\mathcal{P}_\epsilon$ have volume of order $\epsilon^d$.
We write
$$
\Omega_\varepsilon=[0,1]^d \backslash B_\varepsilon\left(\partial[0,1]^d\right).
$$
Thus, $\Omega_\varepsilon$ is the set of points in the interior of $[0,1]^d$ such that balls of radius $\varepsilon$ with center in such points are also contained in $[0,1]^d$.

\section{Outline of the proof of Theorem \ref{Penalized}}
\label{OutLineThm2}
The proof of Theorem \ref{Penalized} follows a path similar to that of Theorem \ref{thm:main tv}, but it includes an extra step in the first phase. This relies on the fact that the constraint-based on $V_n$ is no longer available for Theorem \ref{Penalized}. We point out the main difference in the following, where some notation from Section \ref{OutlineT1-2} is used. 
\\
{\bf{Decomposition of estimation error}}
\\
Similar to the proof of Theorem \ref{thm:main tv}, for any $f\in\Lambda$, the expression attained in this case is 
{\small{\begin{align}
\label{equation3-maintext}
\vert\vert f-f^*+\mathcal{A}_{f^*}\vert\vert_{nm}^2
&\lesssim  \frac{1}{n} \sum_{i=1}^n \frac{1}{m_i} \sum_{j=1}^{m_i} \left[\Delta_f(x_{i,j}) \delta_i(x_{i,j}) - \langle \Delta_f, \delta_i \rangle_{\mathcal{L}_2} \right]+\frac{1}{n} \sum_{i=1}^n \frac{1}{m_i} \sum_{j=1}^{m_i} \Delta_f(x_{i,j}) \epsilon_{i,j}  \nonumber\\
&\quad + \left| \frac{1}{n} \sum_{i=1}^n \frac{1}{m_i} \sum_{j=1}^{m_i} \langle \Delta_f, \delta_i \rangle_{\mathcal{L}_2} \right|+ \lambda (J_{k}(f^*)-J_{k}(f)).
\end{align}}}
Recalling the proof strategy from Theorem \ref{thm:main tv}, the next step requires the application of Lemma \ref{lemma2}. The purpose of this is to transition from the empirical norm to
the $L_2$ norm within the context of Inequality (\ref{equation3-maintext}). Given that the constraint based on $V_n$ is no longer available for bounding $J_k(\Delta_{f})$, key factor that enables the use of Lemma \ref{lemma2}, an extra step is required.
To be specific, we consider the event $$\Omega^{\text{aux}}=\Big\{\sup_{f\in \Lambda: \vert\vert\Delta_f\vert\vert_2^2\le\eta^2}\Big(J_k(f)\Big)\le 5J_{k}(f^*)\Big\}.$$ Under the event $\Omega^{\text{aux}}$, it turns out that $J_k(\Delta_{f})$ is bounded by $6J_{k}(f^*)$ and we can now make use of Lemma \ref{lemma2} as in Theorem \ref{thm:main tv}. 
Consequently, under the event $\Omega^{\text{aux}}$ the analysis performed in the proof of Theorem \ref{thm:main tv} can be replicated. It remains to analyze the complement of the event $\Omega^{\text{aux}}$. To 
this end, observe that under $(\Omega^{\text{aux}})^c$, it is satisfied that
$\sup_{f\in \Lambda: \vert\vert\Delta_{f}\vert\vert_2^2\le\eta^2,J_{k}(\Delta_f)\le 5J_{k}(f^*)}\Big(J_k(\Delta_{f})\Big)\ge 4J_{k}(f^*).$
This, together with Equation (\ref{equation3-maintext}), 
and the choice of the tuning parameter $\lambda$
leads to
\begin{align*}
    \mathbb{P}\left(\vert\vert \widehat f_{k,\lambda}-f^*\vert\vert_2^2>2\eta^2\right)
    \lesssim & \widetilde{A_1}+\widetilde{A_2}+\frac{\varepsilon}{2}.
\end{align*}
where $\widetilde{A_1}$ and $\widetilde{A_2}$, differ from $A_1$ and $A_2$ by considering $B_{T V}^k\left(6J_{k}(f^*)\right)$ instead of $B_{T V}^k\left(2V_n\right).$
\\
{\bf{Deviation bounds for measurement errors and spatial noise}}
\\
This step is exactly the same as in the proof of Theorem \ref{thm:main tv}.

\section{Additional numerical results}\label{ExtraSim}
\subsection{Variation of the noises and dependence in the Multivariate Case}\label{ExtraSimD}

In Section \ref{Mult-Sim} (the multivariate simulations), we compared the performance of MARS, CART, RF, LLE, KNN-R, Voronoigram, and $K$-NN-FL in {\bf{Scenarios 5, 6, 7}} and {\bf{8}}. Here, the spatial noise was generated using the expression $\delta_i(x)=\sum_{t=1}^{50} t^{-1} b_{t,i} h_{t}(x)$, where $\{h_t(x)\}_{t=1}^{50}$ are basis functions, and $\{b_{t,i}\}_{t = 1, i=1}^{ 50,n}$ are considered to be i.i.d.~$\mathcal{N}(0,1)$. Moreover, for the measurement errors, the $\xi_{i}$ were considered to be $\mathcal{N}(0,0.5\mathbf{1}_{m_i})$ and the designs points were generated by a mixture with mixture probability $\phi=0.1.$

This section assesses the performance of our proposed algorithm and its competitors concerning the number of elements in the basis $\{h_t(x)\},$ the choice of the distribution of $\{b_{t,i}\}$, $\{\xi_i\}$ and the value of $\phi$. Specifically, we consider
\begin{itemize}
    \item {\bf{Scenario $5_{\text{sup}}$ }}: {\bf{Scenario 5}} with $\{b_{t,i}\}_{t = 1, i=1}^{ 50,n}\sim \mathcal{N}(0,1)$, $\xi_i\sim \mathcal{N}(0,0.5\mathbf{1}_{M})$ and $\phi=0.2$,
    \item {\bf{Scenario $6_{\text{sup}}$ }}: {\bf{Scenario 6}} with $\{b_{t,i}\}_{t = 1, i=1}^{ 250,n}\sim \mathcal{N}(0,1)$, $\xi_i\sim \mathcal{N}(0,0.5\mathbf{1}_{M})$ and $\phi=0.2$,
    \item {\bf{Scenario $7_{\text{sup}}$ }}: {\bf{Scenario 7}} with $\{b_{t,i}\}_{t = 1, i=1}^{ 50,n}\sim \mathcal{N}(0,2)$, $\xi_i\sim \mathcal{N}(0,1\mathbf{1}_{M})$ and $\phi=0.05$, and 
    \item {\bf{Scenario $8_{\text{sup}}$ }}: {\bf{Scenario 8}} with $\{b_{t,i}\}_{t = 1, i=1}^{ 150,n}\sim \mathcal{N}(0,3)$, $\xi_i\sim \mathcal{N}(0,2\mathbf{1}_{M})$ and $\phi=0.05$.
\end{itemize}

The results show that with respect to the MSE performance,  MARS, CART, Random Forest, Voronoigram, and $K$-NN-FL are the best five competitors, in all four scenarios. Furthermore, in all scenarios, $K$-NN-FL is the only method that demonstrates a decreasing MSE as the sample size increases.

Overall, $K$-NN-FL demonstrated a superior MSE performance in comparison to its competitors in every scenario. However, there were specific exceptions observed in {\bf{Scenario $7_{\text{sup}}$ }}. For $m_{\text{mult}}=0.5$ in dimensions $d=3$ and $4$, and for $m_{\text{mult}}=1$ in dimensions $d=9$ and $10$. In these exceptional cases, RF method slightly outperformed $K$-NN-FL. However, this advantage was exceedingly marginal and rapidly receded as $m_{\text{mult}}$ increases.

In {\bf{Scenario $7_{\text{sup}}$ }} and {\bf{Scenario $8_{\text{sup}}$ }}, MARS and RF are consistently the closest competitors to $K$-NN-FL. This aligns with the findings for {\bf{Scenario 7 }}and {\bf{Scenario 8 }} in Section \ref{Mult-Sim}, here
illustrated in Figure \ref{S3-FN} and Figure \ref{S4-FN}. Nevertheless, the results are quite different.

We now detail the key performance distinctions of the methods in {\bf{Scenario $7_{\text{sup}}$ }} and {\bf{Scenario $8_{\text{sup}}$ }} as compared to {\bf{Scenario 7 }} and {\bf{Scenario 8 }}, respectively.

In a side-by-side comparison of {\bf{Scenario $7_{\text{sup}}$ }} and {\bf{Scenario 7}}, the former presents increased variance in both spatial noise and measurement errors. Additionally, covariates in {\bf{Scenario $7_{\text{sup}}$ }} exhibit reduced dependence. In this scenario, the efficacy of both the MARS and CART models decreases. These models tend to amplify slight variances or disturbances, mistakenly recognizing them as significant trends, which distorts the genuine data architecture. Conversely, the RF model, employing a collective strategy of numerous decision trees, maintains its effectiveness. It manages to distinguish real data configurations from simple disturbances, avoiding the issues associated with overfitting. It is important to mention that in both contexts, the $K$-NN-FL consistently maintained a strong and consistent performance.

In a comparison between {\bf{Scenario $8_{\text{sup}}$}} and {\bf{Scenario 8}}, we see that even though the covariates in {\bf{Scenario $8_{\text{sup}}$}} are less closely related, the increased variation in spatial noise and measurement errors present more significant challenges. MARS shows an admirable capability in handling these challenges within {\bf{Scenario $8_{\text{sup}}$}}, while CART and RF encounter notable difficulties. The observed reduction in performance for CART and RF in {\bf{Scenario $8_{\text{sup}}$}} compared to {\bf{Scenario 8}} highlights their vulnerability to increased spatial noise, which obscures true patterns and leads to less precise model predictions. Importantly, $K$-NN-FL kept its performance consistent and managed well in both scenarios.

The performance of the top five competitors in {\bf{Scenario $5_{\text{sup}}$ }} and {\bf{Scenario $6_{\text{sup}}$ }} is presented in Figure \ref{S1-FN} and \ref{S2-FN}, respectively. Contrasting these with the results from {\bf{Scenario 5 }} and {\bf{Scenario 5 }}, detailed in Section \ref{Mult-Sim}, we note a marked distinction. Specifically, Random Forest sees a decline in its performance in Scenario {\bf{Scenario $5_{\text{sup}}$ }}, whereas MARS experiences a similar decrease in {\bf{Scenario $6_{\text{sup}}$}}. In both cases, the performance of $K$-NN-FL remains commendable. 

A reasonable explanation for the results observed in {\bf{Scenario $5_{\text{sup}}$}} stems from the interdependency among the covariates. The RF algorithm normally uses a mix of different decision trees, but when the covariates are more dependent on each other, this mix is not as varied. Consequently, the trees grow more similar and biased towards these interdependencies, which weakens the ability of RF to apply its rules broadly. This tendency makes the model more prone to overfitting, thus reducing its effectiveness.



Moving to {\bf{Scenario $6_{\text{sup}}$}}, the stronger connections between covariates significantly challenge the ability of MARS to detect real interactions. These complex relationships among covariates might lead MARS to recognize false associations, thus weakening its natural strength. Moreover, the addition of increased spatial noise adds to the complexity of modeling. This type of noise inherently introduces periodic fluctuations to the dataset, complicating the task of differentiating true data trends from these added waves.

\begin{figure}
    \centering
    \includegraphics[width=0.85\textwidth]{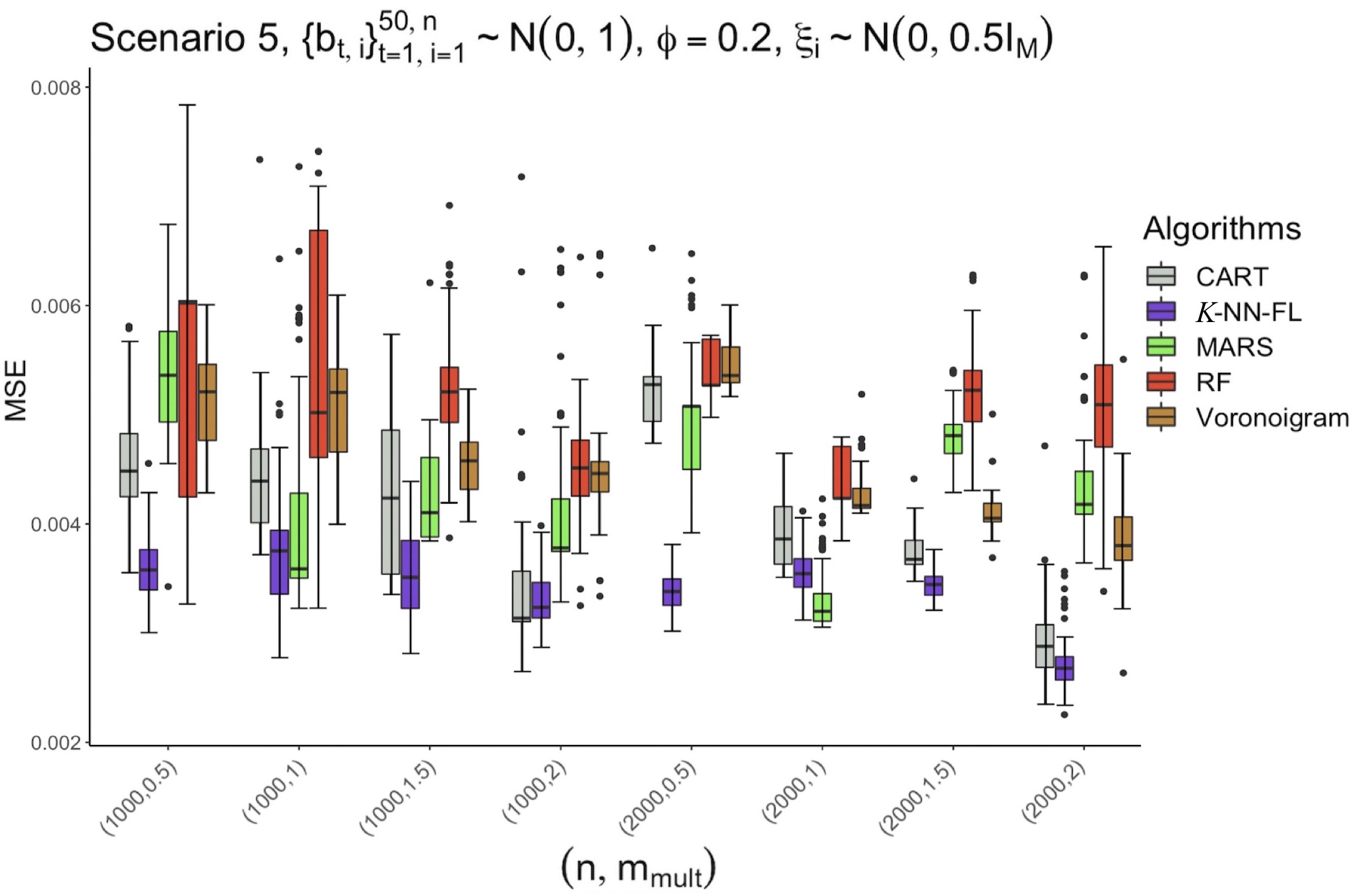}
    \caption{Best five competitors.} 
    \label{S1-FN}
\end{figure} 

\begin{figure}
    \centering
    \includegraphics[width=0.8\textwidth]{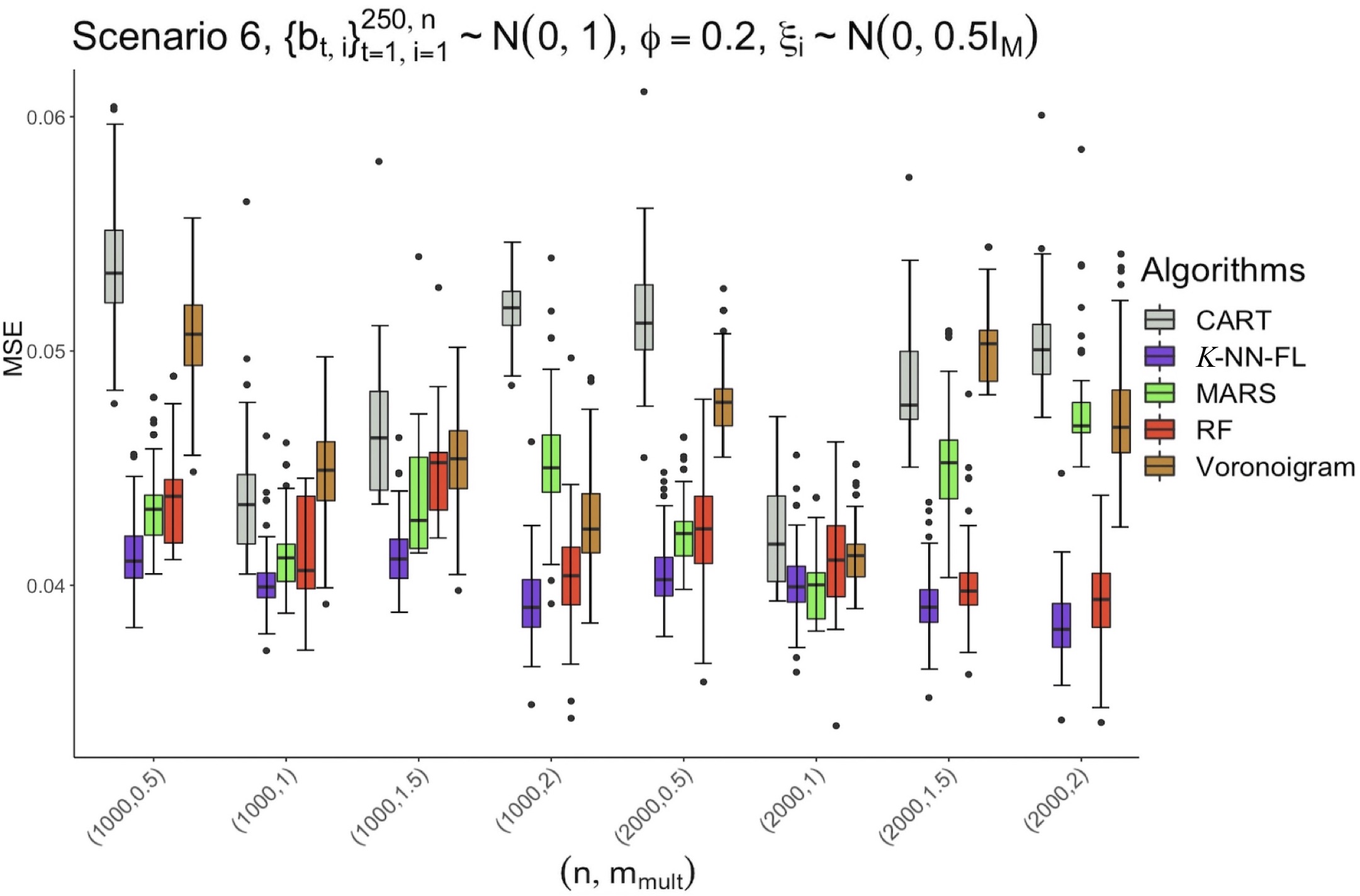}
    \caption{Best five competitors.} 
    \label{S2-FN}
\end{figure} 

\begin{figure}
    \centering
    \includegraphics[width=0.85\textwidth]{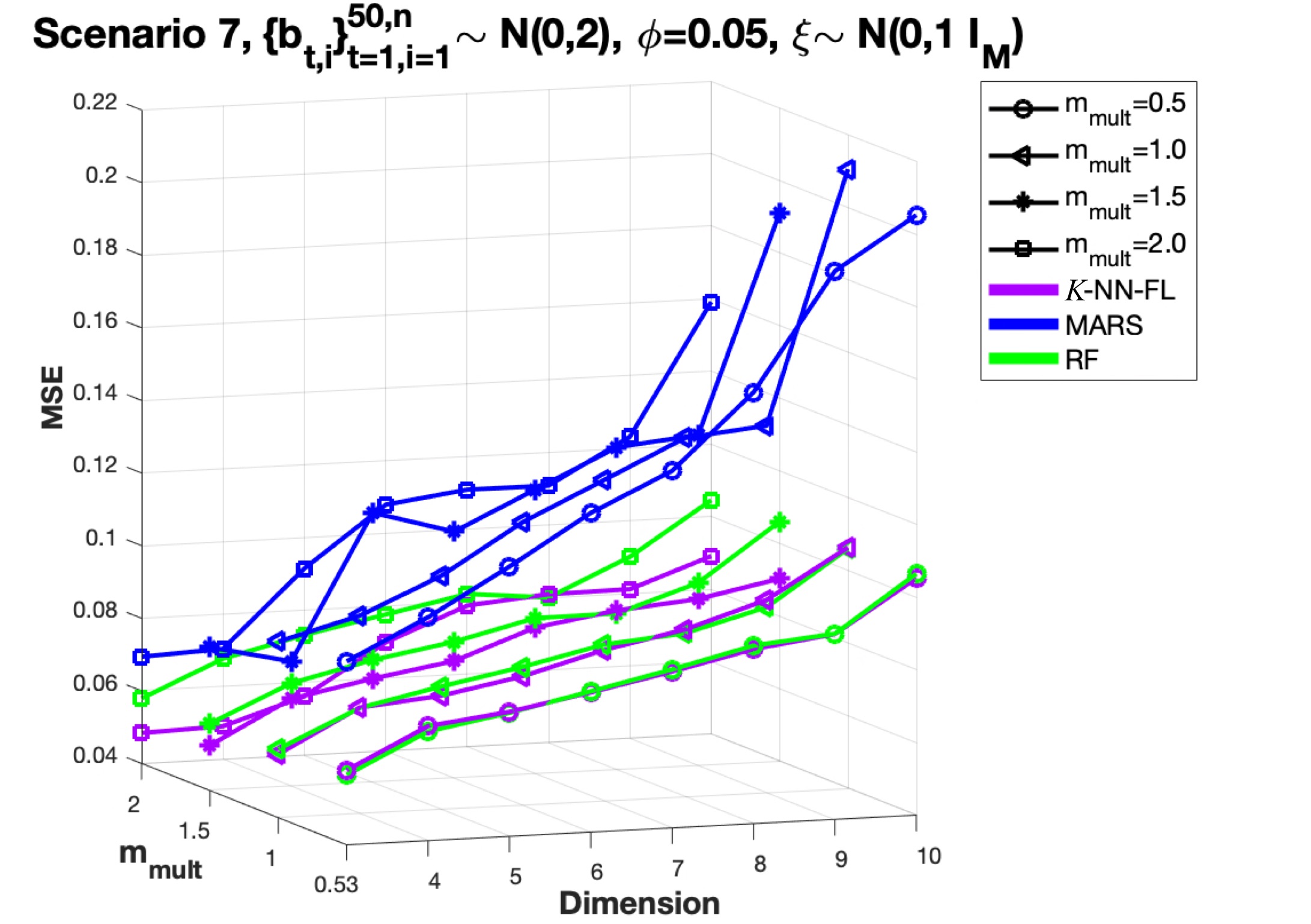}
    \caption{Best three competitors.} 
    \label{S3-FN}
\end{figure} 

\begin{figure}
    \centering
    \includegraphics[width=0.8\textwidth]{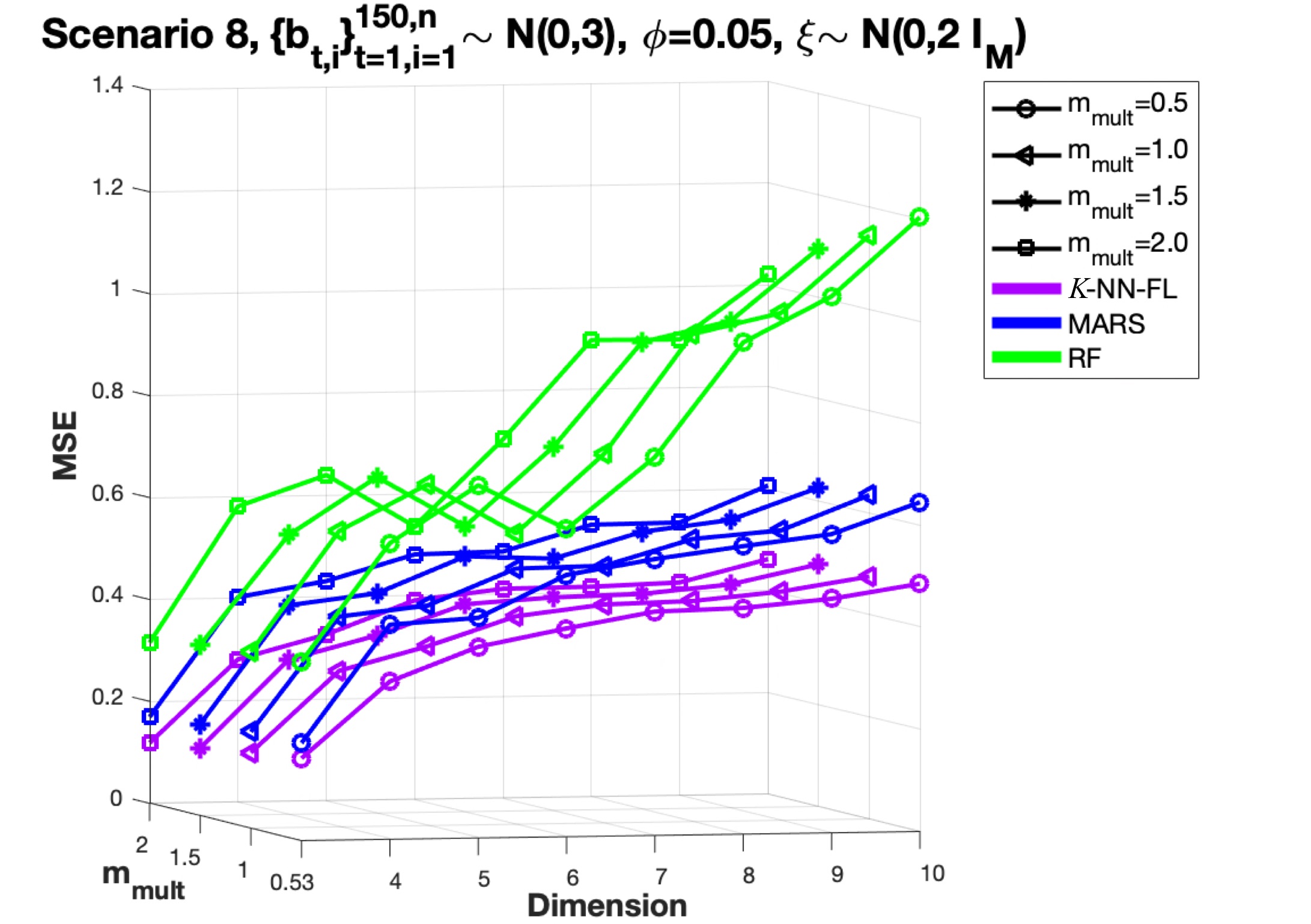}
    \caption{Best three competitors.}  
    \label{S4-FN}
\end{figure} 
\subsection{Comparison of Estimated SST functions}\label{sec-DetailedRDataE}

In this section, we delve into the comprehensive results of the Real Data example discussed in Section \ref{Real-data-ex}. We closely analyzed the estimated SST functions produced by $K$-NN-FL and MARS, noting that both methods exhibited superior performance in the real data application, as suggested by Table \ref{RdTable}. In Region $\mathbb{I}$, shown in Figure \ref{Comp-R1}, the two functions appear to be quite similar. This is consistent with the outcomes in Table \ref{RdTable}, where the performances of both 
$K$-NN-FL and MARS were very similar across the evaluated months. The contrast becomes more pronounced in Region $\mathbb{III}$. However, it is for Region $\mathbb{II}$, as showcased in Figure \ref{Comp-R2}, where the disparity in the estimated SST functions becomes most evident. This finding aligns with Table \ref{RdTable}, indicating a more regular and significant difference in the MSE of $K$-NN-FL and MARS for this particular Region.
\begin{figure}
    \centering
    \includegraphics[width=0.64\textwidth]{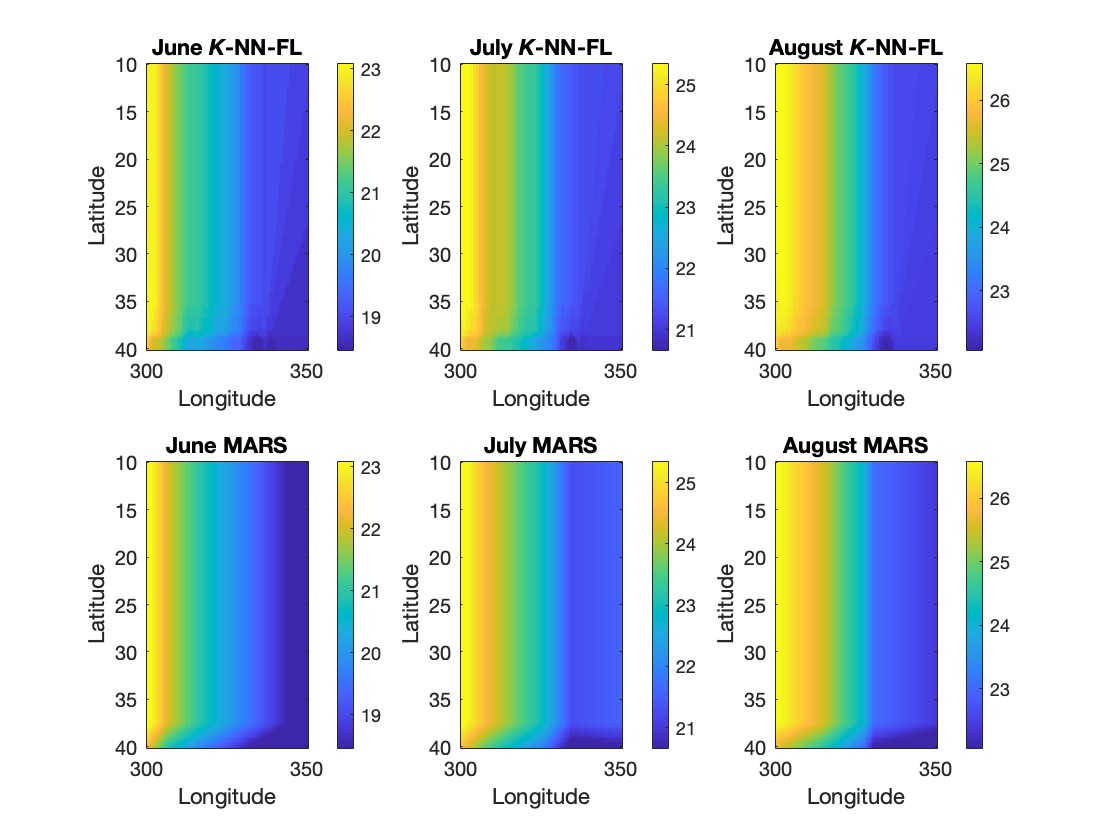}
    \caption{Region $\mathbb{I}$. The figures presented in the top row depict the estimated functions as determined by our proposed method, $K$-NN-FL. Conversely, the second row contains the functions estimated by MARS, which remains our most proximate competitor with regard to MSE. Sequentially, from left to right, each column of illustrations pertains to the successive months of June, July, and August.} 
    \label{Comp-R1}
\end{figure}

\begin{figure}
    \centering
    \includegraphics[width=0.64\textwidth]{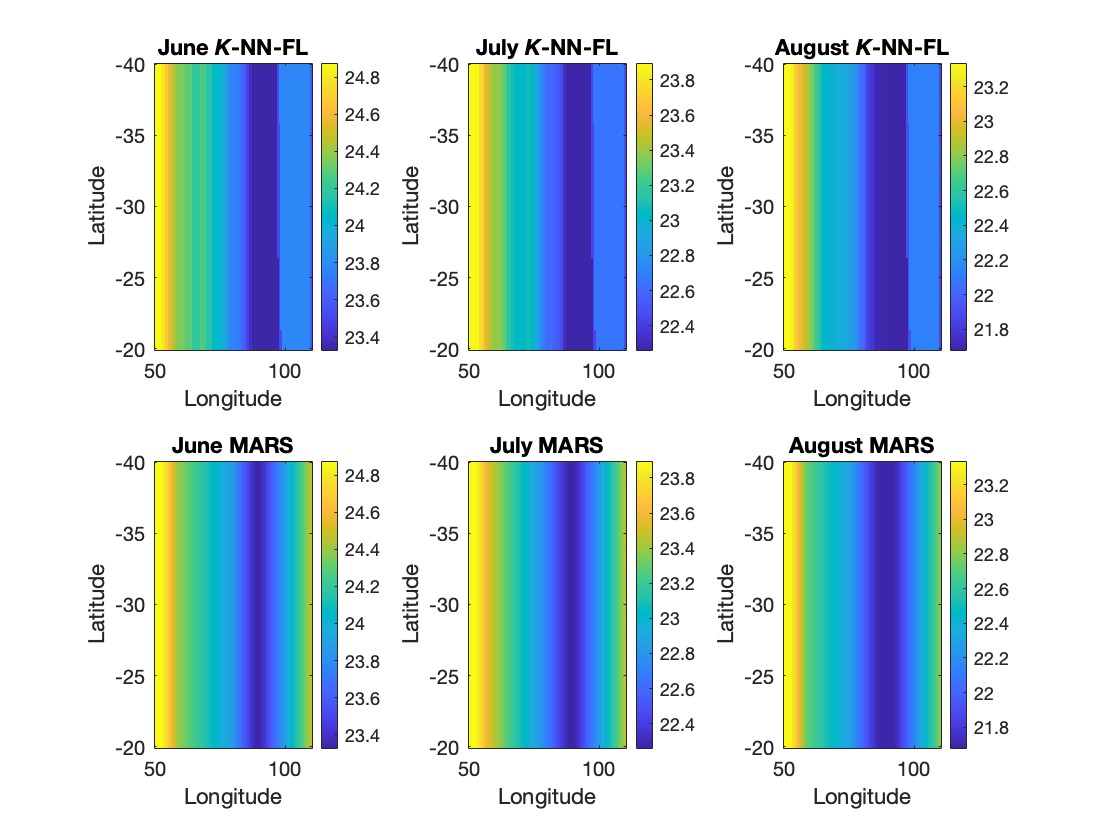}
    \caption{Region $\mathbb{III}$. The figures presented in the top row depict the estimated functions as determined by our proposed method, $K$-NN-FL. Conversely, the second row contains the functions estimated by MARS, which remains our most proximate competitor with regard to MSE. Sequentially, from left to right, each column of illustrations pertains to the successive months of June, July, and August. } 
    \label{Comp-R3}
\end{figure}

\begin{figure}
    \centering
    \includegraphics[width=.9\textwidth]{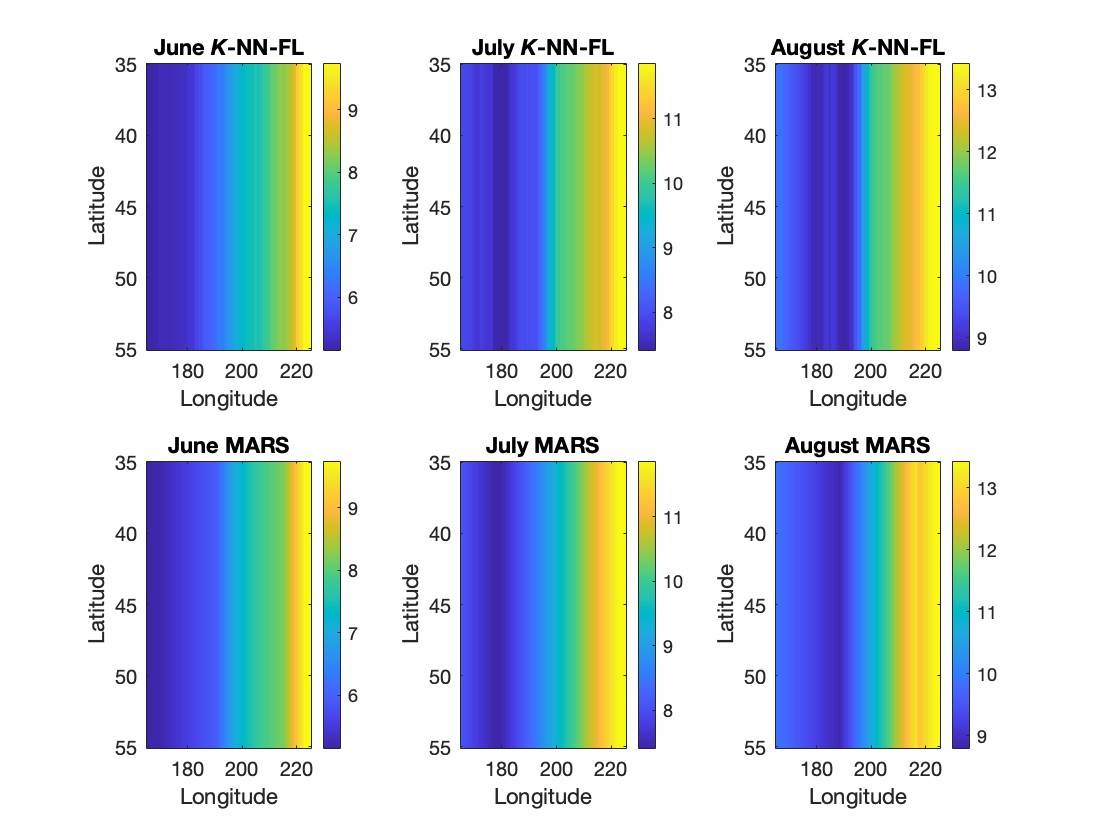}
    \caption{Region $\mathbb{II}$. The figures presented in the top row depict the estimated functions as determined by our proposed method, $K$-NN-FL. Conversely, the second row contains the functions estimated by MARS, which remains our most proximate competitor with regard to MSE. Sequentially, from left to right, each column of illustrations pertains to the successive months of June, July, and August.} 
    \label{Comp-R2}
\end{figure} 

\begin{figure}
    \centering
    \includegraphics[width=0.8\textwidth]{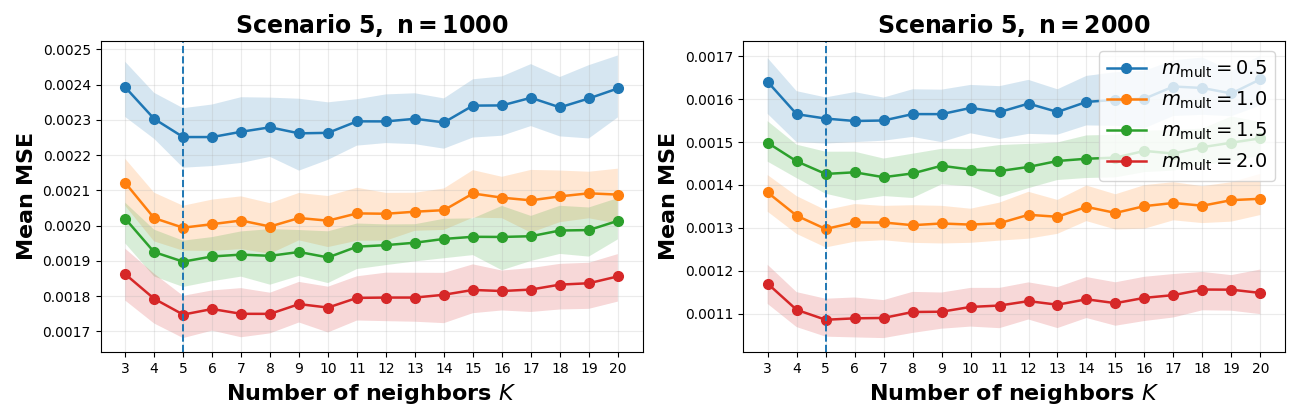}
    \caption{MSE average  over 100 Monte Carlo simulations for different values of $K$. The data has been generated as in Scenario~5 ($d=2$) with $n=1000$ (left) and $n=2000$ (right).
Shaded regions denote interquartile ranges across repetitions, and the vertical
dashed line indicates $K=5$, used in Section \ref{simu-data}.
}
    \label{fig:appendix-s5-K}
\end{figure}

{\color{black}{
\subsection{Sensitivity Analysis with Respect to the Number of Neighbors $K$}
\label{app:sensitivity-K}

In this appendix, we study the sensitivity of the proposed multidimensional method, $K$-NN-FL, to the choice of the number of nearest neighbors $K$. In the Experiments section (Section~\ref{simu-data}), simulation results for the multidimensional settings (Scenarios 5–8) are reported using a fixed value $K=5$, and the corresponding boxplots are constructed solely at this choice. To complement those results, we analyze how the mean squared error (MSE) varies as $K$ ranges from 3 to 20.

We focus on scenarios with fixed dimension $d=2$, namely \textbf{Scenario~5} and \textbf{Scenario~6}, which are representative of the low-dimensional multivariate settings considered in the simulation study. Following the simulation design described in Section~\ref{simu-data-1}, for each scenario we consider sample sizes $n \in \{1000,2000\}$. For each fixed value of $n$, we vary the multiplier parameter $m_{\mathrm{mult}} \in \{0.5,1,1.5,2\}$, which determines the corresponding values of $\{m_i\}_{i=1}^n$.

For each fixed value of the number of neighbors $K$, we report the performance of the $K$-NN-FL estimator by computing the Monte Carlo average of the test mean squared error (MSE) over $100$ independent simulation repetitions. In addition, we include shaded bands representing the interquartile range of the MSE across these repetitions, obtained from the $25$th and $75$th percentiles at each value of $K$. By examining these summaries as functions of $K$, we assess the sensitivity of the $K$-NN-FL estimator to the choice of the neighborhood size.

Figure~\ref{fig:appendix-s5-K} reports the results for Scenario~5. Across all values of
$m_{\mathrm{mult}}$ and both sample sizes, the MSE decreases when moving from very small
neighborhoods (e.g., $K=3$) to moderate values, and the curves are relatively flat around
$K=5$. In particular, the vertical dashed line at $K=5$ falls at or very near the minimum
of each curve, and the MSE changes only mildly for nearby choices of $K$. As $K$ increases
further, the error shows a gradual upward trend, consistent with mild oversmoothing, but
the overall magnitude of the change remains limited. Importantly, the ordering across
different values of $m_{\mathrm{mult}}$ is stable as $K$ varies.

Figure~\ref{fig:appendix-s6-K} shows a similar sensitivity pattern for Scenario~6. The method
exhibits stable performance over a broad range of $K$, with no abrupt deterioration as $K$
increases. The choice $K=5$ either attains the minimum MSE or lies very close to it across
all configurations, placing it firmly within a low-error region of the curves. As in
Scenario~5, very small neighborhood sizes such as $K=3$ tend to yield slightly larger
errors than $K=4$ or $5$, reflecting increased variability due to overly local
neighborhoods. For larger values of $K$, the MSE shows a mild upward trend, consistent with
moderate oversmoothing.

Overall, the changes across $K$ are modest in both scenarios, indicating that $K$-NN-FL is
not overly sensitive to the precise choice of the neighborhood size in these low-dimensional
settings. This supports $K=5$ as a reasonable default choice for the multivariate experiments
reported in the main text in Section \ref{simu-data}.

}}

\begin{figure}
    \centering
    \includegraphics[width=0.8\textwidth]{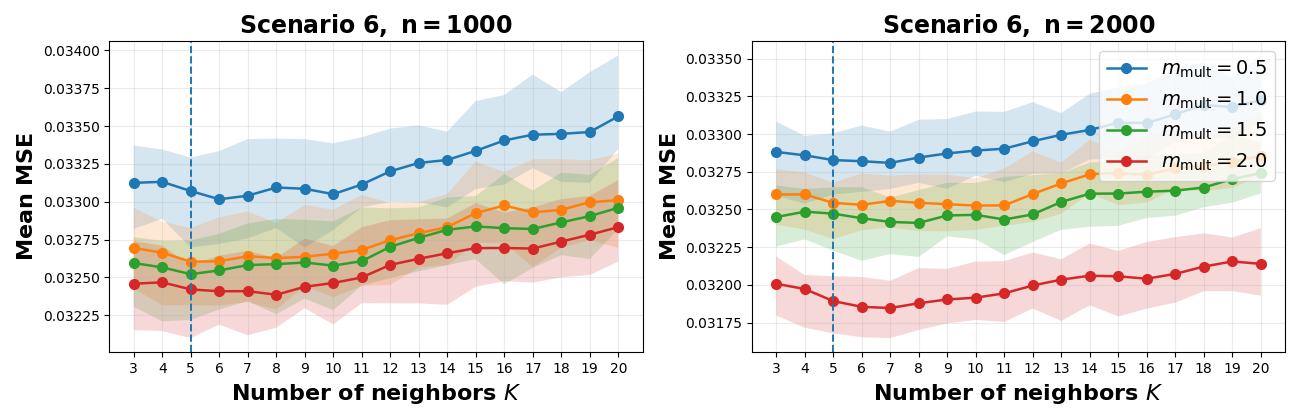}
    \caption{MSE average  over 100 Monte Carlo simulations for different values of $K$. The data has been generated as in Scenario~6 ($d=2$) with $n=1000$ (left) and $n=2000$ (right).
Shaded regions denote interquartile ranges across repetitions, and the vertical
dashed line indicates $K=5$, used in Section \ref{simu-data}}
    \label{fig:appendix-s6-K}
\end{figure}

\newpage

\section{Auxiliaries for proof of Theorem \ref{thm:main tv}}
\label{sec:aux_lemmas}

Throughout we assume that Assumption \ref{assume:tv functions} holds. We also invoke Assumption 1{\bf{f}}. Specifically, we rely on the condition that $cm\le m_i\le Cm$ for some positive constants $c,C.$ Additionally, we define \begin{equation}
     \label{Lchoicethm1}
     L=\frac{1}{C_{\beta}}2\log(n),
 \end{equation} leading us to the conclusion that 
 \begin{equation}
 \label{betamixcoef-ineq}
     \beta_x(L)\le\frac{1}{n^2}.
 \end{equation} This bound is equivalently satisfied by  $\beta_\delta(L)$ and $\beta_\epsilon(L).$ Subsequently, throughout our discussion, we make frequent references to the copies of the random variables $x_{i,j}$, $\epsilon_{i,j}$ and $\delta_{i}$ as they are constructed in Appendix \ref{BetaMsection}. These copies are denoted as $x_{i,j}^*$, $\epsilon_{i,j}^*$ and $\delta_{i}^*$, respectively. Furthermore, as delineated in Appendix \ref{BetaMsection}, during the formation of these copies the quantities $ | \mathcal J_{e,l}|$ and $ | \mathcal J_{o,l}|$, which represent the count of even and odd blocks respectively, satisfy $ | \mathcal J_{e,l}|,| \mathcal J_{o,l}|  \asymp n/L$. Thus, there exist positive constants $c_1$ and $c_2$ such that  
 \begin{equation}
     \label{BoundBlocksize}
      c_1n/L\le| \mathcal J_{e,l}|,| \mathcal J_{o,l}| \le  c_2n/L.
 \end{equation}

 Also, using a union bound argument and using Lemmas \ref{Ind-cop}, \ref{Ind-cop-measerrors}, and \ref{Ind-cop-func}, we note that the following sets, $\Omega_x=\cap_{i=1}^n\{x_{i,j}=x_{i,j}^*\ \forall\ j\in[m_i]\}$, $\Omega_\epsilon=\cap_{i=1}^n\{\epsilon_{i,j}=\epsilon_{i,j}^*\ \forall\ j\in[m_i]\}$ and $\Omega_\delta=\cap_{i=1}^n\{\delta_{i}=\delta_{i}^*\}$, satisfy the conditions $$\mathbb{P}(\Omega_x^c)\le n\beta_{x}(L)\le\frac{1}{n}, \ \mathbb{P}(\Omega_\epsilon^c)\le n\beta_{\epsilon}(L)\le\frac{1}{n},\  \text{and}\  \mathbb{P}(\Omega_\delta^c)\le n\beta_{\delta}(L)\le\frac{1}{n}.$$

\begin{lemma}
    \label{lem:aux1}
    The event 
$ \mathcal{E}_1\,:=\,  \left\{  \frac{1}{n}  \sum_{i=1}^n  \frac{1}{m_i}\sum_{j=1}^{m_i} \epsilon_{i,j}^2  \le 6c_2 \varpi _\epsilon^2  \right\},  $  
happens  with probability at least $$1-  8\frac{C\log^2(n)}{c^2c_2\varpi_\epsilon^2nm}\frac{1}{C_{\beta}^2}-\frac{\log(n)}{2\varpi_{\epsilon}^2 c_2n^2}-\frac{1}{ 2n^{1/\max{\{256\varpi_\epsilon^4,16\varpi_\epsilon^2\}}}\max{\{256\varpi_\epsilon^4,16\varpi_\epsilon^2\}}\varpi_{\epsilon}^2 c_2 },$$
 with $C_{\mathcal{B}} $ and $\varpi_{\epsilon}$ as in Assumption \ref{assume:tv functions}.
\end{lemma}

\begin{proof}
Consider $\epsilon_{i,j}^*$, the copies of $\epsilon_{i,j}$, as in Section \ref{BetaMsection}. Observe that for $c_2>0$, 
 \begin{align*}
    &\mathbb{P}\left(    \frac{1}{n}\sum_{i=1}^{n}\frac{1}{m_i} \sum_{j=1}^{m_i}    \epsilon_{i,j}^2   \geq    6\varpi_{\epsilon}^2  c_2  \right)
    \\
    \le
    &\mathbb{P}\left(    \frac{1}{n}\sum_{i=1}^{n}\frac{1}{m_i} \sum_{j=1}^{m_i}    (\epsilon_{i,j}^2-  (\epsilon_{i,j}^{*})^2) \geq    2\varpi_{\epsilon}^2 c_2  \right)
    +\mathbb{P}\left(   \frac{1}{n}\sum_{l=1}^{L} \sum_{i \in \mathcal{J}_{e,l}}\frac{1}{m_i}\sum_{j=1}^{m_i}(\epsilon_{i,j}^{*})^2\geq    2\varpi_{\epsilon}^2 c_2  \right)
    \\
     +&
    \mathbb{P}\left(    \frac{1}{n}\sum_{l=1}^{L} \sum_{i \in \mathcal{J}_{o,l}}\frac{1}{m_i}\sum_{j=1}^{m_i}(\epsilon_{i,j}^{*})^2 \geq    2\varpi_{\epsilon}^2c_2   \right),
\end{align*}
and we proceed to upper bound each of these terms. 
By Markov's inequality
\begin{equation}
    \label{eqn:e500}
    \begin{array}{lll}
\displaystyle     \mathbb{P}\left(    \frac{1}{n}\sum_{i=1}^{n}\frac{1}{m_i} \sum_{j=1}^{m_i}    (\epsilon_{i,j}^2-  (\epsilon_{i,j}^{*})^2) \geq    2\varpi_{\epsilon}^2 c_2  \right)&\le& \displaystyle \frac{\mathbb{E}\left(    \frac{1}{n}\sum_{i=1}^{n}\frac{1}{m_i} \sum_{j=1}^{m_i}    (\epsilon_{i,j}^2-  (\epsilon_{i,j}^{*})^2)  \right)}{   2\varpi_{\epsilon}^2 c_2 } \\
&  = &\displaystyle \frac{ \frac{1}{n}\sum_{i=1}^{n}\frac{1}{m_i} \sum_{j=1}^{m_i}  \mathbb{E}\left(   \mathbf{1}_{ \{ \epsilon_{i,j} \not=\epsilon_{i,j}^* \} }  (\epsilon_{i,j}^2-  (\epsilon_{i,j}^{*})^2)  \right)}{   2\varpi_{\epsilon}^2 c_2 }.\\
    \end{array}
\end{equation}
Notice that
\begin{align*}
    &\mathbb{E}\left(   \mathbf{1}_{ \{ \epsilon_{i,j} \not=\epsilon_{i,j}^* \} }  (\epsilon_{i,j}^2-  (\epsilon_{i,j}^{*})^2)  \right)
    \\
    \le& \int_{0}^{\log(n)}P(\{ \epsilon_{i,j} \not=\epsilon_{i,j}^* \})dt+\int_{\log(n)}^{\infty}P(\epsilon_{i,j}^2-  (\epsilon_{i,j}^{*})^2>t)dt, 
\end{align*}
and by Assumption 1 we have that $\epsilon_{i,j}^2$ and $(\epsilon_{i,j}^*)^2$ are both sub-exponential with parameter $(32\varpi_\epsilon^4,4\varpi_\epsilon^2)$, see for instance Appendix B in \cite{honorio2014tight}. Moreover using Cauchy Schwartz inequality we have that $\epsilon_{i,j}^2-(\epsilon_{i,j}^*)^2$ is sub-exponential of parameter $(128\varpi_\epsilon^4,8\varpi_\epsilon^2).$ Therefore, using Tail bound inequality for sub-exponential random variables, see Proposition 2.9 in \cite{wainwright2019high}, it follows that
\begin{align*}
    &\int_{0}^{\log(n)}P(\{ \epsilon_{i,j} \not=\epsilon_{i,j}^* \})dt+\int_{\log(n)}^{\infty}P(\epsilon_{i,j}^2-  (\epsilon_{i,j}^{*})^2>t)dt
    \\
    \le&
    \log(n)\beta_\epsilon(L)+\frac{1}{\max{\{256\varpi_\epsilon^4,16\varpi_\epsilon^2\}}}\exp(-\log(n)/\max{\{256\varpi_\epsilon^4,16\varpi_\epsilon^2\}}).
\end{align*}
Thus
\begin{equation}
    \label{eqn:501}
    \begin{array}{l}
        \arraycolsep=1.4pt\def\arraystretch{2}
 \displaystyle  \mathbb{P}\left(    \frac{1}{n}\sum_{i=1}^{n}\frac{1}{m_i} \sum_{j=1}^{m_i}    (\epsilon_{i,j}^2-  (\epsilon_{i,j}^{*})^2) \geq    2\varpi_{\epsilon}^2 c_2  \right) \\
 \leq  \displaystyle  \frac{\log(n)\beta_\epsilon(L)+\frac{1}{\max{\{256\varpi_\epsilon^4,16\varpi_\epsilon^2\}}}\exp(-\log(n)/\max{\{256\varpi_\epsilon^4,16\varpi_\epsilon^2\}})}{ 2\varpi_{\epsilon}^2 c_2 } \\
   \leq \displaystyle \,\,\,\,\,  \frac{\log(n)}{2\varpi_{\epsilon}^2 c_2n^2}+\frac{1}{ 2n^{1/\max{\{256\varpi_\epsilon^4,16\varpi_\epsilon^2\}}}\max{\{256\varpi_\epsilon^4,16\varpi_\epsilon^2\}}\varpi_{\epsilon}^2 c_2 }.\\          
    \end{array}
\end{equation}
Next, by union bound,
\begin{align*}
    &\mathbb{P}\left(   \frac{1}{n}\sum_{l=1}^{L} \sum_{i \in \mathcal{J}_{e,l}}\frac{1}{m_i}\sum_{j=1}^{m_i}(\epsilon_{i,j}^{*})^2\geq    2\varpi_{\epsilon}^2c_2   \right)
    \le \sum_{l=1}^{L} \mathbb{P}\left(   \frac{1}{n}\sum_{i \in \mathcal{J}_{e,l}}\frac{1}{m_i}\sum_{j=1}^{m_i}(\epsilon_{i,j}^{*})^2\geq    2\varpi_{\epsilon}^2c_2/L   \right).
\end{align*}
Given that $ | \mathcal J_{e,l}| \asymp n/L$ (see Inequality (\ref{BoundBlocksize})) and Assumption \ref{assume:tv functions}{\bf{f}} we further have that
\begin{align*}
    \mathbb{E}\Big(\frac{1}{n}\sum_{i \in \mathcal{J}_{e,l}}\frac{1}{m_i}\sum_{j=1}^{m_i}(\epsilon_{i,j}^{*})^2\Big)\le \frac{|\mathcal{J}_{e,l}|\varpi_\epsilon^2}{n}\leq \frac{\varpi_\epsilon^2c_2}{L},
\end{align*}
and
\[
\text{var}\Big(\frac{1}{n}\sum_{i \in \mathcal{J}_{e,l}}\frac{1}{m_i}\sum_{j=1}^{m_i}(\epsilon_{i,j}^{*})^2\Big) = \sum_{i \in \mathcal{J}_{e,l}}\sum_{j=1}^{m_i}\frac{1}{(nm_i)^2} \text{var} ((\epsilon_{i,j}^{*})^2) \leq   \frac{|\mathcal{J}_{e,l}|Cm\varpi_\epsilon^2}{n^2c^2m^2}\le \frac{c_2C\varpi_\epsilon^2}{nc^2mL}.
\]
From  Chebyshev's inequality it follows that
\begin{equation}
    \label{eqn:502}
   \begin{array}{l}
      \displaystyle \mathbb{P}\left(   \frac{1}{n}\sum_{i \in \mathcal{J}_{e,l}}\frac{1}{m_i}\sum_{j=1}^{m_i}(\epsilon_{i,j}^{*})^2\geq    2\varpi_{\epsilon}^2c_2/L   \right)  \\
        \displaystyle    \leq  \mathbb{P}\Big(  \Big| \frac{1}{n}\sum_{i \in \mathcal{J}_{e,l}}\frac{1}{m_i}\sum_{j=1}^{m_i}(\epsilon_{i,j}^{*})^2 -\mathbb{E}\Big(\frac{1}{n}\sum_{i \in \mathcal{J}_{e,l}}\frac{1}{m_i}\sum_{j=1}^{m_i}(\epsilon_{i,j}^{*})^2\Big)\Big|\geq    \varpi_{\epsilon}^2c_2/L   \Big)\\
     \displaystyle     \leq  \frac{c_2C\varpi_\epsilon^2}{nc^2mL}\frac{L^2}{\varpi_{\epsilon}^4c_2^2 }=\frac{CL}{c^2c_2\varpi_\epsilon^2nm}=\frac{2c_2C\log(n)}{c^2c_2^2\varpi_\epsilon^2nm}\frac{1}{C_{\beta}}.
   \end{array}
\end{equation}
The last equality is followed by the choice of $L$ in Equation (\ref{Lchoicethm1}).
Similarly we have that 
\begin{equation}
    \label{eqn:503}
       \mathbb{P}\left(   \frac{1}{n}\sum_{i \in \mathcal{J}_{o,l}}\frac{1}{m_i}\sum_{j=1}^{m_i}(\epsilon_{i,j}^{*})^2\geq    2\varpi_{\epsilon}^2c_2/L   \right)\le \frac{2C\log(n)}{c^2c_2\varpi_\epsilon^2nm}\frac{1}{C_{\beta}}.
\end{equation}
The claim follows combining (\ref{eqn:501}), (\ref{eqn:502}) and (\ref{eqn:503}).

\end{proof}

\begin{lemma}
    \label{lem:aux2}
    There exists a constant $\widetilde{C}_1>0$ such that  the event 
\[
   \mathcal{E}_2  \,:=\,\left\{ \frac{1}{n}  \sum_{i=1}^n  \frac{1}{m_i}\sum_{j=1}^{m_i} \delta_i ^2  (x_{i,j})   \le   6c_2 \widetilde C_1   \varpi_\delta^2\right\},
\]
 satisfies
$ \p( \mathcal{E}_2 )\geq  1- 8\frac{\log^2(n)}{c_2n}\frac{1}{C_{\beta}^2}-\frac{\log(n)}{2\varpi_{\delta}^2 c_2\widetilde C_1n^2}-\frac{1}{ 2n^{1/\max{\{256\varpi_\delta^4,16\varpi_\delta^2\}}}\max{\{256\varpi_\delta^4,16\varpi_\delta^2\}}\varpi_{\delta}^2 c_2 \widetilde C_1} $,  with $\varpi_{\delta}$ as in Assumption \ref{assume:tv functions}.
\end{lemma}

\begin{proof} By Assumption 1, for a positive constant $\widetilde C_1>0$, we have that 
\begin{equation}
    \label{MomentBoundDelta}
     \mathbb{E}  \big \{  \delta_i ^2  (x )  \big \} \le \widetilde C_1\varpi_\delta^2 \quad \text{and} \quad \mathbb{E}  \big \{  \delta_i ^4  (x )  \big \} \le \widetilde C_1^2 \varpi_\delta^4 ,
\end{equation}
for any $x\in [0,1]$.
Consider from Lemma \ref{Ind-cop-func} the corresponding $\{\delta_i^*\}_{i=1}^n$ copies of $\{\delta_i\}_{i=1}^n.$ 
Then
 \begin{align}
    &\displaystyle 	 \mathbb{P}\left(    \frac{1}{n}\sum_{i=1}^{n}\frac{1}{m_i} \sum_{j=1}^{m_i}    \delta_i ^2  (x_{i,j})    \geq    6\varpi_{\delta}^2  c_2\widetilde C_1 \right)\nonumber
    \\
    \le&
    \mathbb{P}\left(    \frac{1}{n}\sum_{i=1}^{n}\frac{1}{m_i} \sum_{j=1}^{m_i}    (\delta_i ^2  (x_{i,j}) -  \delta_i^{*} (x_{i,j} )^2) \geq    2\varpi_{\delta}^2 c_2\widetilde C_1  \right)\nonumber
    \\
    +&\mathbb{P}\left(   \frac{1}{n}\sum_{l=1}^{L} \sum_{i \in \mathcal{J}_{e,l}}\frac{1}{m_i}\sum_{j=1}^{m_i}\delta_i^{*} (x_{i,j} )^2\geq    2\varpi_{\delta}^2 c_2\widetilde C_1 \right)\nonumber
    \\
     +&
    \mathbb{P}\left(    \frac{1}{n}\sum_{l=1}^{L} \sum_{i \in \mathcal{J}_{o,l}}\frac{1}{m_i}\sum_{j=1}^{m_i}\delta_i^{*} (x_{i,j} )^2 \geq    2\varpi_{\delta}^2c_2\widetilde C_1   \right).
    \label{eqn:e506}
    \end{align}
Next  we analyze each term. By Markov's inequality,
\begin{align*}
   & \mathbb{P}\left(    \frac{1}{n}\sum_{i=1}^{n}\frac{1}{m_i} \sum_{j=1}^{m_i}    (\delta_i ^2  (x_{i,j}) -  \delta_i^{*} (x_{i,j} )^2) \geq    2\varpi_{\delta}^2 c_2\widetilde C_1  \right) 
   \\
    \le& \frac{ \frac{1}{n}\sum_{i=1}^{n}\frac{1}{m_i} \sum_{j=1}^{m_i}  \mathbb{E}\left(   \mathbf{1}_{ \{ \delta_{i} \not=\delta_{i}^* \} }  (\delta_i ^2  (x_{i,j}) -  \delta_i^{*} (x_{i,j} )^2)  \right)}{   2\varpi_{\delta}^2 c_2 \widetilde C_1}.
\end{align*}
Conditioning on $x_{i,j}$, we have that
\begin{align*}
    \mathbb{E}\left(   \mathbf{1}_{ \{ \delta_{i} \not=\delta_{i}^*\} }  (\delta_i ^2  (x_{i,j}) -  \delta_i^{*} (x_{i,j} )^2)  |x_{i,j}\right)\le& \int_{0}^{\log(n)}P(\{ \delta_{i} \not=\delta_{i}^* \})dt
    \\
    +&\int_{\log(n)}^{\infty}P(\delta_i ^2  (x_{i,j}) -  \delta_i^{*} (x_{i,j} )^2>t|x_{i,j} )dt.
\end{align*}
By Assumption 1, both $\delta_i^2(x_{i,j})$ and $\delta_i^*(x_{i,j})^2$ are both sub-exponential with a parameter $(32\varpi_\delta^4,4\varpi_\delta^2)$, see for instance Appendix B in \cite{honorio2014tight}. Then using Cauchy Schwartz inequality, we have that $\delta_i^2(x_{i,j})-\delta_{i}^{*}(x_{i,j})^2$ is sub-exponential of parameter $(128\varpi_\delta^4,8\varpi_\delta^2).$ Therefore, using Tail bound inequality for sub-exponential random variables, see Proposition 2.9 in \cite{wainwright2019high},
\begin{align*}
    &\int_{0}^{\log(n)}P(\{ \delta_{i} \not=\delta_{i}^* \})dt+\int_{\log(n)}^{\infty}P(\delta_i ^2  (x_{i,j}) -  \delta_i^{*} (x_{i,j} )^2>t|x_{i,j})dt
    \\
    \le&
    \log(n)\beta_\delta(L)+\frac{1}{\max{\{256\varpi_\delta^4,16\varpi_\delta^2\}}}\exp(-\log(n)/\max{\{256\varpi_\delta^4,16\varpi_\delta^2\}}).
\end{align*}
Thus, using Assumption 1{\bf{c}}
\begin{align*}
    &\mathbb{P}\left(    \frac{1}{n}\sum_{i=1}^{n}\frac{1}{m_i} \sum_{j=1}^{m_i}    (\delta_i ^2  (x_{i,j}) -  \delta_i^{*} (x_{i,j} )^2) \geq    2\varpi_{\delta}^2 c_2  \right)
    \\
    \le& \frac{\log(n)\beta_\delta(L)+\frac{1}{\max{\{256\varpi_\delta^4,16\varpi_\delta^2\}}}\exp(-\log(n)/\max{\{256\varpi_\delta^4,16\varpi_\delta^2\}})}{ 2\varpi_{\delta}^2 c_2 \widetilde C_1}
    \\
    \le&\frac{\log(n)}{2\varpi_{\delta}^2 c_2\widetilde C_1n^2}+\frac{1}{ 2n^{1/\max{\{256\varpi_\delta^4,16\varpi_\delta^2\}}}\max{\{256\varpi_\delta^4,16\varpi_\delta^2\}}\varpi_{\delta}^2 c_2 \widetilde C_1}.
    \label{eq:test}
\end{align*}
The last inequality is followed by Inequality (\ref{betamixcoef-ineq}).
To analyze the second term  in the right hand side of (\ref{eqn:e506}), we observe that
\begin{align*}
    &\mathbb{P}\left(   \frac{1}{n}\sum_{l=1}^{L} \sum_{i \in \mathcal{J}_{e,l}}\frac{1}{m_i}\sum_{j=1}^{m_i}\delta_i^{*} (x_{i,j} )^2\geq    2\varpi_{\delta}^2c_2 \widetilde C_1  \right)
    \\
    \le& \sum_{l=1}^{L} \mathbb{P}\left(   \frac{1}{n}\sum_{i \in \mathcal{J}_{e,l}}\frac{1}{m_i}\sum_{j=1}^{m_i}\delta_i^{*} (x_{i,j} )^2\geq    2\varpi_{\delta}^2c_2\widetilde C_1/L   \right).
\end{align*}
Moreover by Inequality (\ref{BoundBlocksize}) and Assumption 1{\bf{f}}
\begin{align*}
    \mathbb{E}\Big(\frac{1}{n}\sum_{i \in \mathcal{J}_{e,l}}\frac{1}{m_i}\sum_{j=1}^{m_i}\delta_i^{*} (x_{i,j} )^2\Big)\le \frac{|\mathcal{J}_{e,l}|\widetilde C_1\varpi_\delta^2}{n}=\frac{\widetilde C_1\varpi_\delta^2c_2}{L},
\end{align*}
and, Cauchy Schwarz inequality together with Inequality (\ref{BoundBlocksize}) and Assumption 1{\bf{f}} imply that
\begin{align*}
    &\text{var}\Big(\frac{1}{n}\sum_{i \in \mathcal{J}_{e,l}}\frac{1}{m_i}\sum_{j=1}^{m_i}\delta_i^{*} (x_{i,j} )^2\Big)=\sum_{i \in \mathcal{J}_{e,l}}\frac{1}{(nm_i)^2}\text{var}\Big(\sum_{j=1}^{m_i} \delta_i^{*} (x_{i,j} )^2\Big)
    \\
     \le&\sum_{i \in \mathcal{J}_{e,l}}\frac{1}{(nm_i)^2}\sum_{j, j' =1}^{m_i} \mathbb{E} \big\{ \delta_i^*(x_{i,j} )^2 \delta_i^*(x_{i,j'} )^2 \big\} 
     \\
    \le &
    \sum_{i \in \mathcal{J}_{e,l}}\frac{1}{(nm_i)^2}\sum_{j, j' =1}^{m_i} 
    \sqrt { \mathbb{E} \big\{ \delta_i^*(x_{i,j} )^4  \big \} \mathbb{E} \big\{  \delta_i^*(x_{i,j'} )^4  \big\}} 
\\
    \le&
    \frac{|\mathcal{J}_{e,l}|\widetilde C_1^2\varpi_\delta^4}{n^2}\le \frac{c_2\widetilde C_1^2\varpi_\delta^4}{nL}.
\end{align*}
Using Chebyshev's inequality we get that
\begin{align*}
    &\mathbb{P}\left(   \frac{1}{n}\sum_{i \in \mathcal{J}_{e,l}}\frac{1}{m_i}\sum_{j=1}^{m_i}\delta_i^{*} (x_{i,j} )^2\geq    2\varpi_{\delta}^2c_2\widetilde C_1/L   \right)
    \\
    &\le\mathbb{P}\left(   \frac{1}{n}\sum_{i \in \mathcal{J}_{e,l}}\frac{1}{m_i}\sum_{j=1}^{m_i}\delta_i^{*} (x_{i,j} )^2 -\mathbb{E}\Big(\frac{1}{n}\sum_{i \in \mathcal{J}_{e,l}}\frac{1}{m_i}\sum_{j=1}^{m_i}\delta_i^{*} (x_{i,j} )^2\Big)\geq    \varpi_{\delta}^2c_2\widetilde C_1/L   \right)
    \\
    &\le\frac{c_2\widetilde C_1^2\varpi_\delta^4}{nL}\frac{L^2}{\varpi_{\delta}^4c_2^2\widetilde C_1^2 }=\frac{L}{c_2n}=\frac{\log(n)}{c_2n}\frac{2}{C_{\beta}}.
\end{align*}
The last equality is followed by the choice of $L$ in Equation (\ref{Lchoicethm1}).
Similarly we have that 
\begin{align*}
    \mathbb{P}\left(   \frac{1}{n}\sum_{i \in \mathcal{J}_{o,l}}\frac{1}{m_i}\sum_{j=1}^{m_i}\delta_i^{*} (x_{i,j} )^2\geq    2\varpi_{\delta}^2c_2 \widetilde C_1/L   \right)\le \frac{\log(n)}{c_2n}\frac{2}{C_{\beta}},
\end{align*}  
and the claim is obtained.
\end{proof}

\begin{lemma}
    \label{lem:aux3}
With the notation from Lemma \ref{lem:aux2},    for $\eta > 0$ it holds that 
    \[
     \mathbb{P}( \vert \bar{y}   -   \mathcal{A}_{f^*}  \vert >  \eta  )    \,\leq\,  \frac{128 \log n }{  c C_{\beta} \eta^2  n }  c_2 C  \widetilde C_1(\varpi_\delta^2  +\varpi_\epsilon^2 )\,+\, \frac{3}{n},
    \]
    where $\bar{y} \,:=\, \frac{1}{ \sum_{i=1}^n m_i }    \sum_{i=1}^n \sum_{j=1}^{m_i} y_{i,j}$ and $\mathcal{A}_{f^*} =\frac{1}{\sum_{i=1}^n m_i}\sum_{i=1}^n\sum_{j=1}^{m_i}f^*(x_{i,j})$.
\end{lemma}

\begin{proof}
Using the notation from the beginning of Appendix \ref{sec:aux_lemmas}, 
\[   
\begin{array}{lll}
 \mathbb{P}( \vert \bar{y}   -   \mathcal{A}_{f^*}  \vert > \eta  )  & \leq  & \displaystyle \mathbb{P}\left(  \bigg\vert \frac{1}{  \sum_{i=1}^n m_i  }  \sum_{i=1}^n \sum_{j=1}^{m_i}  (  \epsilon_{i,j}^* + \delta_{i}^*(x_{i,j}^* )   )    \bigg\vert > \eta    \right)  \,+\, \frac{3}{n}.  
\end{array}
\]
Hence, by union bound,
\begin{equation}
    \label{eqn:511}
    \begin{array}{lll}
 \mathbb{P}( \vert \bar{y}   -   \mathcal{A}_{f^*}  \vert > \eta  )  & \leq  & \displaystyle \mathbb{P}\bigg(  \bigg\vert \sum_{l=1}^{L}\sum_{i\in\mathcal{J}_{e,l}}\sum_{j=1}^{m_i} \epsilon_{i,j}^*        \bigg\vert > \frac{\eta}{4}  \sum_{i=1}^n m_i   \bigg) + \\
  && \displaystyle \mathbb{P}\bigg(  \bigg\vert \sum_{l=1}^{L}\sum_{i\in\mathcal{J}_{o,l}}\sum_{j=1}^{m_i} \epsilon_{i,j}^*        \bigg\vert > \frac{\eta}{4}  \sum_{i=1}^n m_i   \bigg) + \\
    && \displaystyle   \mathbb{P}\bigg(  \bigg\vert \sum_{l=1}^{L}\sum_{i\in\mathcal{J}_{e,l}}\sum_{j=1}^{m_i} \delta_i^*(x_{i,j})       \bigg\vert > \frac{\eta}{4}  \sum_{i=1}^n m_i   \bigg) +\\
  && \displaystyle  \mathbb{P}\bigg(  \bigg\vert \sum_{l=1}^{L}\sum_{i\in\mathcal{J}_{o,l}}\sum_{j=1}^{m_i} \delta_i^*(x_{i,j})         \bigg\vert > \frac{\eta}{4}  \sum_{i=1}^n m_i   \bigg)  + \frac{3}{n}\\
 &\leq& \displaystyle \sum_{l=1}^L \mathbb{P}\bigg(  \bigg\vert \sum_{i\in\mathcal{J}_{e,l}}\sum_{j=1}^{m_i} \epsilon_{i,j}^*        \bigg\vert > \frac{\eta}{4L}  \sum_{i=1}^n m_i   \bigg) + \\
  & &\displaystyle  \sum_{l=1}^L\mathbb{P}\bigg(  \bigg\vert \sum_{i\in\mathcal{J}_{o,l}}\sum_{j=1}^{m_i} \epsilon_{i,j}^*        \bigg\vert > \frac{\eta}{4L}  \sum_{i=1}^n m_i   \bigg) +\\
    & &\displaystyle \sum_{l=1}^L\mathbb{P}\bigg(  \bigg\vert \sum_{i\in\mathcal{J}_{e,l}}\sum_{j=1}^{m_i} \delta_i^*(x_{i,j})       \bigg\vert > \frac{\eta}{4L}  \sum_{i=1}^n m_i   \bigg) +\\
     && \displaystyle \sum_{l=1}^{L} \mathbb{P}\bigg(  \bigg\vert \sum_{i\in\mathcal{J}_{o,l}}\sum_{j=1}^{m_i} \delta_i^*(x_{i,j})         \bigg\vert > \frac{\eta}{4L}  \sum_{i=1}^n m_i   \bigg)  + \frac{3}{n}\\
\end{array}
\end{equation}
Next, Then we observe that by Assumption 1 we have that $$\mathbb{E}\Big( \sum_{i\in \mathcal{J}_{e,l}} \sum_{j=1}^{m_i}\delta_i^*(x_{i,j}^*)\Big)=0.$$ 
Therefore, by Chebyshev's inequality
\[
\begin{array}{l}
\displaystyle  \mathbb{P}\Big(\Big| \sum_{i\in \mathcal{J}_{e,l}} \sum_{j=1}^{m_i}\delta_i^*(x_{i,j}^*)\Big|> \frac{\eta}{4L}  \sum_{i=1}^n m_i   \Big)\\
=   \displaystyle  \mathbb{P}\Big(\Big| \sum_{i\in \mathcal{J}_{e,l}} \sum_{j=1}^{m_i}\delta_i^*(x_{i,j}^*)-\mathbb{E}\Big(\sum_{i\in \mathcal{J}_{e,l}} \sum_{j=1}^{m_i}\delta_i^*(x_{i,j}^*)\Big)\Big|> \frac{\eta}{4L}  \sum_{i=1}^n m_i   \Big) \\
\displaystyle \leq \frac{16 L^2}{\eta^2 (\sum_{i=1}^n  m_i )^2   }   \text{var}\Big(  \sum_{i\in \mathcal{J}_{e,l}}\sum_{j=1}^{m_i}\delta_i^*(x_{i,j}^*)\Big).   
\end{array}
\]
However, due to the independence of the variables $\{\delta_i^*\}_{i\in\mathcal{J}_{e,l}},$
\[
\begin{array}{lll}
 \displaystyle     \text{var}\Big(  \sum_{i\in \mathcal{J}_{e,l}}\sum_{j=1}^{m_i}\delta_i^*(x_{i,j}^*)\Big)   &   = &    \displaystyle   \sum_{i\in \mathcal{J}_{e,l}}  \text{var}\Big(  \sum_{j=1}^{m_i}\delta_i^*(x_{i,j}^*)\Big)     \\
     &\leq &     \displaystyle   \sum_{i\in \mathcal{J}_{e,l}}     \sum_{j, j' =1}^{m_i} \mathbb{E} \big\{ \delta_i^*(x_{i,j}^* ) \delta_i^*(x_{i,j'}^* ) \big\} \\
          &\leq &     \displaystyle   \sum_{i\in \mathcal{J}_{e,l}}     \sum_{j, j' =1}^{m_i}   \sqrt { \mathbb{E} \big\{ \delta_i^*(x_{i,j}^* )^2  \big \} \mathbb{E} \big\{  \delta_i^*(x_{i,j'}^* )^2  \big\}} \\
           &\leq& \displaystyle C |\mathcal{J}_{e,l}|  m^2 \widetilde C_1\varpi_\delta^2
\end{array}
\]
where the second inequality follows from Cauchy-Schwarz inequality, and the last inequality follows from (\ref{MomentBoundDelta}) and $C>0$ as in the beginning of Appendix \ref{sec:aux_lemmas}.

Therefore, 
we obtain that 
\begin{align}
\label{eqn:512}
  \mathbb{P}\Big(\Big| \sum_{i\in \mathcal{J}_{e,l}} \sum_{j=1}^{m_i}\delta_i^*(x_{i,j}^*)\Big|> \frac{\eta}{4L}  \sum_{i=1}^n m_i   \Big)  \leq \frac{16 L^2}{\eta^2 (\sum_{i=1}^n  m_i )^2   }  C |\mathcal{J}_{e,l}|  m^2 \widetilde C_1\varpi_\delta^2
\end{align}
Similarly we get that
\begin{equation}
\label{eqn:513}
    \mathbb{P}\Big(\Big| \sum_{i\in \mathcal{J}_{o,l}} \sum_{j=1}^{m_i}\delta_i^*(x_{i,j}^*)\Big|> \frac{\eta}{4L}  \sum_{i=1}^n m_i   \Big)  \leq \frac{16 L^2}{\eta^2 (\sum_{i=1}^n  m_i )^2   }  C |\mathcal{J}_{e,l}|  m^2 \widetilde C_1\varpi_\delta^2.
\end{equation}
Proceeding in a similar way with the events involving $\{\epsilon_{i,j}^*\}$ and using (\ref{BoundBlocksize}) along with (\ref{Lchoicethm1}) we arrive to the claim. 


\end{proof}

\begin{lemma}
    \label{lem:aux4}
    Let $\left\{\xi_{i,j}\right\}_{\left\{ i\in\mathcal{J}_{e,l},j\in [m_i]\right\}}$ independent and identically distributed Rademacher random variables, independent of  $\left(\left\{x_{i,j}^*\right\}_{\left\{ i\in\mathcal{J}_{e,l},j\in [m_i]\right\}},\left\{\delta_i^*\right\}_{\left\{ i\in\mathcal{J}_{e,l},j\in [m_i]\right\}}\right)$. Let
    \[
    Z_{i,j}^*(\Delta):=\left\{\Delta\left(x_{i,j}^*\right)\right\} \delta_i^*\left(x_{i,j}^*\right)-\left\langle \Delta, \delta_i^*\right\rangle_{2}.
    \]
    Then, for any function class $\mathcal{F}$ and  for any $\eta^{\prime} > 0$ it holds that  
    \[
     \begin{array}{l}
    \displaystyle        \mathbb{E}\Big( \sup_{\Delta\in \mathcal{F},\, \vert\vert \Delta \vert\vert_2\le \eta^{\prime}}\frac{1}{n}\sum_{i\in\mathcal{J}_{e,l}}\frac{1}{m_i}\sum_{j=1}^{m_i}Z_{i,j}^*(\Delta)\Big)\\
     \displaystyle    \,\leq\, \frac{2Cc_2}{cL} \mathbb{E}\left(\mathbb{E}\left(\delta^*_{\mathcal{J}_{e,l}}\mathbb{E}\left(\sup_{\Delta \in \mathcal{F},\, \vert\vert \Delta \vert\vert_2\le \eta^{\prime}} \frac{1}{\sum_{i\in\mathcal{J}_{e,l}}m_i}  \sum_{i\in\mathcal{J}_{e,l}} \sum_{j=1}^{m_i} \xi_{i,j}\Delta\left(x_{i,j}^*\right)\mid\left\{\delta_i^*\right\},\left\{x_{i,j}^*\right\}\right) \mid\left\{\{\delta_i^*\}\right\}\right)\right), 
     \end{array}
    \]
    where $\delta^*_{\mathcal{J}_{e,l}} =\max_{i\in\mathcal{J}_{e,l},j\in[ m_i]}\vert\delta_{i}^*(x_{i,j}^*)\vert $.
\end{lemma}

\begin{proof}

\begin{equation}
    \label{eqn:535}
   \begin{array}{l}
     \displaystyle      \mathbb{E}\Big( \sup_{\Delta\in \mathcal{F},\, \vert\vert \Delta \vert\vert_2\le \eta^{\prime}}\frac{1}{n}\sum_{i\in\mathcal{J}_{e,l}}\frac{1}{m_i}\sum_{j=1}^{m_i}Z_{i,j}^*(\Delta)\Big) \\
   \displaystyle   \,=\, \mathbb{E}\left(\mathbb{E}\Big( \sup_{\Delta \in \mathcal{F},\, \vert\vert \Delta \vert\vert_2\le \eta^{\prime} }\frac{1}{n}\sum_{i\in\mathcal{J}_{e,l}}\frac{1}{m_i}\sum_{j=1}^{m_i}Z_{i,j}^*(\Delta)\mid\left\{\{\delta_i^*\}_{i\in\mathcal{J}_{e,l}}\right\}\Big) \right) \\
   \displaystyle \leq  2\mathbb{E}\left(\mathbb{E}\left( \sup_{\Delta \in \mathcal{F},\, \vert\vert \Delta \vert\vert_2\le \eta^{\prime} } \frac{1}{n} \sum_{i\in\mathcal{J}_{e,l}} \frac{1}{m_i}\sum_{j=1}^{m_i} \xi_{i,j} \delta_i^*\left(x_{i,j}^*\right)\Delta\left(x_{i,j}^*\right) \mid\left\{\{\delta_i^*\}_{i\in\mathcal{J}_{e,l}}\right\}\right)\right)\\
   \end{array}
\end{equation}
where the second equation follows from the law of iterated expectations, and the inequality by Lemma \ref{lemma6}.

Now, we observe that we can write the inner most term in the previous equation as 
{\small{\begin{align*}
    &\mathbb{E}\left(\sup_{ \Delta \in \mathcal{F},\, \vert\vert \Delta \vert\vert_2\le \eta^{\prime} } \frac{1}{m_i}\sum_{j=1}^{m_i} \xi_{i,j} \delta_i^*\left(x_{i,j}^*\right)\Delta\left(x_{i,j}^*\right)\mid\left\{\{\delta_i^*\}_{i\in\mathcal{J}_{e,l}}\right\}\right)
    \\
    &=\mathbb{E}\left(\sup_{\Delta \in \mathcal{F},\, \vert\vert \Delta \vert\vert_2\le \eta^{\prime} } \frac{1}{\sum_{i\in\mathcal{J}_{e,l}}m_i} \sum_{i\in\mathcal{J}_{e,l}} \sum_{j=1}^{m_i} \xi_{i,j} \frac{\sum_{i\in\mathcal{J}_{e,l}}m_i}{nm_i}\delta_i^*\left(x_{i,j}^*\right)\Delta\left(x_{i,j}^*\right) \mid\left\{\{\delta_i^*\}_{i\in\mathcal{J}_{e,l}}\right\}\right).
\end{align*}}}
Next, using Assumption 1{\bf{f}} and Inequality (\ref{BoundBlocksize}) we have that $\frac{\sum_{i\in\mathcal{J}_{e,l}}m_i}{nm_i}\le \frac{Cc_2}{cL}$. Let 
 $$A_{i,j}^{\Delta} =\delta_{i}^*(x_{i,j}^*)\Delta(x_{i,j}^*),\ \delta^*_{\mathcal{J}_{e,l}}=\max_{i\in\mathcal{J}_{e,l},j\in[ m_i]}\vert\delta_{i}^*(x_{i,j}^*)\vert .$$

  By Ledoux–Talagrand contraction inequality (see Proposition 5.28 in \cite{wainwright2019high}),
 {\footnotesize{
\begin{align*}
&\mathbb{E}\left(\sup_{ \Delta\in \mathcal{F},\, \vert\vert \Delta \vert\vert_2\le \eta^{\prime} } \frac{1}{\sum_{i\in\mathcal{J}_{e,l}}m_i} \sum_{i\in\mathcal{J}_{e,l}} \sum_{j=1}^{m_i} \xi_{i,j} \frac{\sum_{i\in\mathcal{J}_{e,l}}m_i}{nm_i}A_{i,j}^{\Delta} \mid\left\{\{\delta_i^*\}_{i\in\mathcal{J}_{e,l}}\right\}\right)
    \\
    =&\mathbb{E}\left(\mathbb{E}\left(\sup_{\Delta\in \mathcal{F},\, \vert\vert \Delta \vert\vert_2\le \eta^{\prime} } \frac{1}{\sum_{i\in\mathcal{J}_{e,l}}m_i} \sum_{i\in\mathcal{J}_{e,l}} \sum_{j=1}^{m_i} \xi_{i,j} \frac{\sum_{i\in\mathcal{J}_{e,l}}m_i}{nm_i}A_{i,j}^{\Delta} \mid\left\{\delta_i^*\right\},\left\{x_{i,j}^*\right\}\right) \mid\left\{\{\delta_i^*\}\right\}\right) \\
    \leq& \mathbb{E}\left(\frac{Cc_2}{cL}\delta^*_{\mathcal{J}_{e,l}}\mathbb{E}\left(\sup_{\Delta \in \mathcal{F},\, \vert\vert \Delta \vert\vert_2\le \eta^{\prime} } \frac{1}{\sum_{i\in\mathcal{J}_{e,l}}m_i}  \sum_{i\in\mathcal{J}_{e,l}} \sum_{j=1}^{m_i} \xi_{i,j}\Delta\left(x_{i,j}^*\right)\mid\left\{\delta_i^*\right\},\left\{x_{i,j}^*\right\}\right) \mid\left\{\{\delta_i^*\}\right\}\right)
\end{align*}
}}
and so the claim follows.

\end{proof}

\begin{lemma}
\label{lem:aux5}
Define
\begin{equation*}
Y_{i,j}^*(\Delta):=\left\{\Delta\left(x_{i,j}^*\right)\right\} \epsilon_{i,j}^*.
\end{equation*}
for any real valued function $\Delta$. Then there exists $\left\{\xi_{i,j}\right\}_{\left\{ i\in\mathcal{J}_{e,l},j\in [m_i]\right\}}$ independent Rademacher variables independent of $\left(\{x_{i,j}^*\}_{\{ i\in\mathcal{J}_{e,l},j\in [m_i]\}},\left\{\delta_i^*\right\}_{\left\{ i\in\mathcal{J}_{e,l},j\in [m_i]\right\}}\right),$  such that for any class of functions $\mathcal{F}$ with $0\in \mathcal{F}$, it holds that 
\[
  \begin{array}{l}
         \displaystyle       \mathbb{P}\left(\sup_{\Delta \in \mathcal{F} } \frac{1}{n} \sum_{i=1}^n \frac{1}{m_i}\sum_{j=1}^{m_i}Y_{i,j}^*(\Delta)>\eta^2\right) \leq \\
            \leq \displaystyle\sum_{l=1}^{L}  \frac{2 Cc_2}{c\eta^2}\mathbb{E}\left(  \mathbb{E}\left(\epsilon^*_{\mathcal{J}_{e,l}} \mathbb{E}\left(\sup_{\Delta \in \mathcal{F}} \frac{1}{\sum_{i\in\mathcal{J}_{e,l}}m_i}  \sum_{i\in\mathcal{J}_{e,l}} \sum_{j=1}^{m_i} \xi_{i,j}\Delta\left(x_{i,j}^*\right)\mid\left\{\epsilon_{i,j}^*\right\},\left\{x_{i,j}^*\right\}\right) \mid\{\epsilon_{i,j}^*\}\right)   \right)     \,+\,\\
 \,\,\,\,\,\,\,\, \displaystyle \sum_{l=1}^{L} \frac{2 Cc_2}{c\eta^2} \mathbb{E}\left(\mathbb{E}\left(\epsilon^*_{\mathcal{J}_{o,l}} \mathbb{E}\left(\sup_{\Delta \in \mathcal{F}} \frac{1}{\sum_{i\in\mathcal{J}_{0,l}}m_i}  \sum_{i\in\mathcal{J}_{0,l}} \sum_{j=1}^{m_i} \xi_{i,j}\Delta\left(x_{i,j}^*\right)\mid\left\{\epsilon_{i,j}^*\right\},\left\{x_{i,j}^*\right\}\right) \mid\{\epsilon_{i,j}^*\}\right)\right) \\
  \end{array}
\]
for any $\eta >0$, and where $\epsilon^*_{\mathcal{J}_{e,l}}=\max_{i\in\mathcal{J}_{e,l},j\in[ m_i]}\vert\epsilon_{i,j}^*\vert$ and $\epsilon^*_{\mathcal{J}_{0,l}}=\max_{i\in\mathcal{J}_{0,l},j\in[ m_i]}\vert\epsilon_{i,j}^*\vert$.

\end{lemma}

\begin{proof}
Notice that by union bound 
\[
\begin{array}{l}
 \displaystyle   \mathbb{P}\left(\sup_{\Delta \in \mathcal{F} } \frac{1}{n} \sum_{i=1}^n \frac{1}{m_i}\sum_{j=1}^{m_i}Y_{i,j}^*(\Delta)>\eta^2\right) \\
   \leq \displaystyle\sum_{l=1}^{L}\mathbb{P}\left(\sup_{\Delta \in \mathcal{F}} \frac{1}{n} \sum_{i\in\mathcal{J}_{e,l}} \frac{1}{m_i}\sum_{j=1}^{m_i}Y_{i,j}^*(\Delta)>\eta^2 / (2L)\right)\,+\,\\
 \,\,\,\,\,\,\,\, \displaystyle \sum_{l=1}^{L}\mathbb{P}\left(\sup_{\Delta\in \mathcal{F}}\frac{1}{n} \sum_{i\in\mathcal{J}_{o,l}} \frac{1}{m_i}\sum_{j=1}^{m_i}Y_{i,j}^*(\Delta)>\eta^2 / (2L)\right)\\
\end{array}
\]
By Markov's inequality,
\begin{align}
\label{A1bound3}
    &\mathbb{P}\left( \sup_{\Delta  \mathcal{F} }\frac{1}{n} \sum_{i\in\mathcal{J}_{e,l}} \frac{1}{m_i}\sum_{j=1}^{m_i}Y_{i,j}^*(\Delta)>\eta^2 / (2L)\right) 
    \\
    &\leq \frac{2L}{\eta^2}\mathbb{E}\left( \sup_{\Delta \in \mathcal{F}} \frac{1}{n} \sum_{i\in\mathcal{J}_{e,l}} \frac{1}{m_i}\sum_{j=1}^{m_i}Y_{i,j}^*(\Delta)\right)\nonumber,
\end{align}
and a similar inequality is obtained for the sum over $\mathcal{J}_{o,l}.$

Next,
\begin{align}
\label{Fourth-bound-1-1}
 & \frac{2L}{\eta^2}\mathbb{E}\Big( \sup_{\Delta \in \mathcal{F}}\frac{1}{n}\sum_{i\in\mathcal{J}_{e,l}}\frac{1}{m_i}\sum_{j=1}^{m_i}Y_{i,j}^*(\Delta)\Big) \nonumber\\ 
& \leq \frac{2L}{\eta^2} \mathbb{E}\left(\mathbb{E}\left( \sup_{\Delta \in \mathcal{F}} \frac{1}{n} \sum_{i\in\mathcal{J}_{e,l}} \frac{1}{m_i}\sum_{j=1}^{m_i} \xi_{i,j} \epsilon_{i,j}^*\Delta\left(x_{i,j}^*\right) \right)\right)\nonumber\\ 
 & =   \frac{2L}{\eta^2} \mathbb{E}\left(\mathbb{E}\left(\sup_{\Delta \in \mathcal{F}}\frac{1}{n} \sum_{i\in\mathcal{J}_{e,l}} \frac{1}{m_i}\sum_{j=1}^{m_i} \xi_{i,j} \epsilon_{i,j}^*\Delta\left(x_{i,j}^*\right)\mid\{\epsilon_{i,j}^*\}\right)\right)\nonumber\\
  & =   \frac{2L}{\eta^2}  \mathbb{E}\left( \mathbb{E}\left(\sup_{\Delta \in \mathcal{F}} \frac{1}{\sum_{i\in\mathcal{J}_{e,l}}m_i} \sum_{i\in\mathcal{J}_{e,l}} \sum_{j=1}^{m_i} \xi_{i,j} \frac{\sum_{i\in\mathcal{J}_{e,l}}m_i}{nm_i}\epsilon_{i,j}^*\Delta\left(x_{i,j}^*\right) \mid\{\epsilon_{i,j}^*\}\right)\right),\\
\end{align}
where the inequality follows by Lemma \ref{lemma6}.  

 Next, using Assumption 1{\bf{f}} and Inequality (\ref{BoundBlocksize}) we have that $\frac{\sum_{i\in\mathcal{J}_{e,l}}m_i}{nm_i}\le \frac{Cc_2}{cL}$. Let
 $$B_{i,j}^{\Delta} =\epsilon_{i,j}^*\Delta(x_{i,j}^*),\ \epsilon^*_{\mathcal{J}_{e,l}}=\max_{i\in\mathcal{J}_{e,l},j\in[ m_i]}\vert\epsilon_{i,j}^*\vert.$$
 Thus, by Ledoux–Talagrand contraction inequality (see Proposition 5.28 in \cite{wainwright2019high}),
 {\footnotesize	{
\begin{align*}
&\mathbb{E}\left(\sup_{\Delta \in \mathcal{F}} \frac{1}{\sum_{i\in\mathcal{J}_{e,l}}m_i} \sum_{i\in\mathcal{J}_{e,l}} \sum_{j=1}^{m_i} \xi_{i,j} \frac{\sum_{i\in\mathcal{J}_{e,l}}m_i}{nm_i}B_{i,j}^{\Delta} \mid\{\epsilon_{i,j}^*\}\right)
    \\
    =&\mathbb{E}\left(\mathbb{E}\left(\sup_{\Delta \in \mathcal{F}} \frac{1}{\sum_{i\in\mathcal{J}_{e,l}}m_i} \sum_{i\in\mathcal{J}_{e,l}} \sum_{j=1}^{m_i} \xi_{i,j} \frac{\sum_{i\in\mathcal{J}_{e,l}}m_i}{nm_i}B_{i,j}^{\Delta} \mid\left\{\epsilon_{i,j}^*\right\},\left\{x_{i,j}^*\right\}\right) \mid\{\epsilon_{i,j}^*\}\right) \\
    \leq& \mathbb{E}\left(\frac{Cc_2}{cL}\epsilon^*_{\mathcal{J}_{e,l}} \mathbb{E}\left(\sup_{\Delta \in \mathcal{F}} \frac{1}{\sum_{i\in\mathcal{J}_{e,l}}m_i}  \sum_{i\in\mathcal{J}_{e,l}} \sum_{j=1}^{m_i} \xi_{i,j}\Delta\left(x_{i,j}^*\right)\mid\left\{\epsilon_{i,j}^*\right\},\left\{x_{i,j}^*\right\}\right) \mid\{\epsilon_{i,j}^*\}\right),
\end{align*}
}}
the claim follows combining all of the above.

\end{proof}

\section{Proof of Theorem \ref{thm:main tv}}\label{sec-proof-thm2}

\begin{proof}
Let $\varepsilon >0$. We can assume without loss of generality that $\nu$ is the uniform distribution on the interval $[0,1]$. Let us define $\widetilde{f^*}=f^*- \mathcal{A}_{f^*} $ with $\mathcal{A}_{f^*} =\frac{1}{\sum_{i=1}^n m_i}\sum_{i=1}^n\sum_{j=1}^{m_i}f^*(x_{i,j})$,  and note that $J_{k}(\widetilde{f^*})\le V_n$. Let $\Lambda := \{ t\widehat g_{k,V_n} + (1-t)\widetilde{f^*}: t \in [0,1] \}$ be the set of functions defined as a convex combination of $\widehat g_{k,V_n}$ and $\widetilde{f^*}$. Given the minimization property of $\widehat g_{k,V_n}$ and the convexity of the $B_{TV}^{k-1}(V_n)$ ball, for any function $f \in \Lambda$, it follows that
\begin{align*}
\displaystyle	\frac{1}{ n } \sum_{i=1}^n  \frac{1}{m_i}\sum_{j=1}^{m_i} \big (  y_{i,j} -  \bar{y} -  f(x_{i,j})  \big)^2  
	\le   \frac{1}{ n }  \sum_{i=1}^n  \frac{1}{m_i}\sum_{j=1}^{m_i} \big (  y_{i,j} -  \bar{y} -  \widetilde{f^*}  (x_{i,j})  \big)^2 .
\end{align*}
This implies that for all $f \in \Lambda$ it holds that 
\begin{equation}
    \label{eqn:main_eq}
     \begin{array}{lll}
     \displaystyle   \frac{1}{2}\|  f^* -  \bar{y} - f\|_{nm}^2 & \leq&  \displaystyle\frac{1}{n }\sum_{i=1}^n   \frac{1}{m_i} \sum_{j=1}^{m_i} \Big(f(x_{i,j})-f^*(x_{i,j})+\mathcal{A}_{f^*}\Big) \Big(\delta_i(x_{i,j}) +\epsilon_{
i,j}\Big) \\ 
 & = &  T_1(f)+T_2(f)+T_3(f),
 \end{array}
\end{equation}
where 
{\small{\begin{align}
&T_1(f):=\frac{1}{n } \sum_{i=1}^n \frac{1}{m_i}\sum_{j=1}^{m_i}\left\{f\left(x_{i,j}\right)-f^*\left(x_{i,j}\right) +\mathcal{A}_{f^*}  \right\} \epsilon_{i,j}, \label{T1}\\
&T_2(f):=\frac{1}{n} \sum_{i=1}^n \frac{1}{m_i}\sum_{j=1}^{m_i}\left[\left\{f\left(x_{i,j}\right)-f^*\left(x_{i,j}\right)+\mathcal{A}_{f^*}\right\} \delta_i\left(x_{i,j}\right)-\left\langle f-f^*+\mathcal{A}_{f^*}, \delta_i\right\rangle_{\mathcal{L}_2}\right],\label{T2}
\\
&T_3(f):=\frac{1}{n} \sum_{i=1}^n \frac{1}{m_i}\sum_{j=1}^{m_i}\left\langle f-f^*+\mathcal{A}_{f^*}, \delta_i\right\rangle_{\mathcal{L}_2},
\label{T3}  
\end{align}}}

Inequality (\ref{eqn:main_eq}) is at the heart of our proof and it will be used to obtain different localization properties of our estimator.  
\\
{\bf Step 1.} Next, let $\Delta_f = f-  f^*   + \mathcal{A}_{f^*}$  for any function $f$. Also, with the notation of Lemmas \ref{lem:aux1} and \ref{lem:aux2}, if $\mathcal{E}_1\cap \mathcal{E}_2$ holds then  for any $\Delta \,:\, [0,1] \rightarrow \mathbb{R}$, Cauchy-Schwarz inequality, it holds that 
 \begin{equation}
\label{Bound2EmpNorm} \frac{1}{n  }  \sum_{i=1}^n   \frac{1}{m_i}\sum_{j=1}^{m_i}\Delta(x_{i,j}) \epsilon_{i,j}   \le C_{1}'\varpi_\epsilon \| \Delta\|_{nm },\end{equation}
 for a positive constant $C_1^{\prime }>0$ and 
\begin{equation}
\label{Bound3EmpNorm}\frac{1}{n  }  \sum_{i=1}^n  \frac{1}{m_i}\sum_{j=1}^{m_i} \Delta(x_{i,j}) \delta_i  (x_{i,j})    \le   C_1 ''   \varpi_\delta  \| \Delta\|_{nm } ,\end{equation}
for some constant  $C_1 '' >0 $.
Therefore in $\mathcal{E}_1\cap \mathcal{E}_2$, using Equations (\ref{Bound2EmpNorm}) and (\ref{Bound3EmpNorm}), we get that
{\small{\begin{equation}
\label{Bound1EmpNorm}
    \frac{1}{2}\|  f^* -  \bar{y} - f\|_{nm}^2  \le C_1 '\varpi_\epsilon\vert\vert \Delta_f\vert\vert_{nm}+C_1 ''   \varpi_\delta  \| \Delta_f\|_{nm }.
\end{equation}}}

{\bf Step 2.} Suppose that  $\mathcal{E}_1 \cap \mathcal{E}_2\cap\Omega_1 \cap \Omega_{y}(\eta_1)$ holds, where the event $\Omega_1$ is defined as
\[
	 \Omega_1=\Big\{ \| p\|_{\infty} ^2\,\leq\,  \frac{4CC_{k}^2}{cn}\sum_{i=1}^{n} \frac{1}{m_i} \sum_{j=1}^{m_i} p(x_{i,j})^2 \ \text{for all}\  p(x)  \ \text{polynomial of  degree} \ k-1\Big\}.
\]  
for $C_{k}$  an absolute positive constant, 
and
\[
  \Omega_y(\eta_1)\,:= \, \{   \vert \bar{y} - \mathcal{A}_{f^*}\vert  \leq \eta_1  \}
\]
for
\begin{equation}
    \label{eqn:eta_cond}
    \eta_1 \,\geq\,     26 \sqrt{\frac{ \log n}{n}\frac{c_2 C \widetilde C_1 (\varpi_\delta^2  +\varpi_\epsilon^2) }{ c C_{\beta}}     }.
\end{equation}
. We observe that,
\begin{align*}
    \mathbb{P}\Big(\Omega_1\Big)\ge& 1 - \frac{32\log^3(n)}{c_1^2c^2n^2m^2}\frac{1}{C_{\beta}^3}- \frac{\log(n)}{n^2}
\end{align*}
by Remark \ref{Rem1HPE}, and  $\mathbb{P}(\Omega_y(\eta)) \geq 1- \varepsilon/10$, for large enough $n$, by Lemma \ref{lem:aux3}.

{\bf Step 3.} In this step, we will show that for any function $f \in \Lambda$, its $\|\Delta_f\|_{\infty}$ is bounded with high probability.

For this purpose, let $f \in \Lambda$ let 
\begin{equation}
\label{normalizing}
\widetilde{\Delta}_f:=  \frac{\Delta_{f}}{ 2 V_n }.
\end{equation}
 Since $\mathcal{E}_1$ and $\mathcal{E}_2$  hold, (\ref{Bound2EmpNorm}), and (\ref{Bound3EmpNorm}) and (\ref{Bound1EmpNorm})  yield 
\begin{equation}
\label{auxequation-1-thm1}
     \frac{1}{2}\|  f^* -  \bar{y} - f\|_{nm}^2 
\le  C_1' \varpi_\epsilon \| \Delta_{f}\|_{nm } + C_1''   \varpi_\delta  \| \Delta_{f}\|_{nm }    
\end{equation}
Further,  considering the inequality $\vert\vert f_1+f_2\vert\vert_{nm}^2\ge \frac{\vert\vert f_1\vert\vert_{nm}^2}{2}-\vert\vert f_2\vert\vert_{nm}^2$, it follows that
\begin{equation}
    \label{eqn:ineq_2}
    \frac{1}{2}(\bar{y} -  \mathcal{A}_{f^*})^2 +\frac{1}{2}\|  f^* -  \bar{y} - f \|_{ nm} ^2\ge \frac{1}{4}\| \Delta_f\|_{nm}^2 .
\end{equation}
Therefore, using Inequality (\ref{auxequation-1-thm1}), and the definition of $\Omega_y(\eta_1)$ we get that
\begin{equation}
    \label{eqn:_new_basic}
  \begin{array}{lll}
\displaystyle         \frac{1}{4}\| \Delta_f\|_{nm}^2 &\leq&\displaystyle      C_1' \varpi_\epsilon \| \Delta_{f}\|_{nm } + C_1''   \varpi_\delta  \| \Delta_{f}\|_{nm }     +  \frac{\eta_1^2}{2}\\
    &\leq & \displaystyle   16( C_1' )^2 \varpi_\epsilon^2   +   \frac{\|\Delta_{f}\|_{nm } ^2 }{16 }  +   16( C_1'' )^2\varpi _\delta^2  +   \frac{\| \Delta_{f}\|_{nm } ^2 }{16 } +\frac{\eta_1^2}{2}
  \end{array}
\end{equation}
for  $\eta_1 = O(1)$ satisfying (\ref{eqn:eta_cond}).

As a result, using Inequality (\ref{eqn:_new_basic}), we obtain that 
\begin{align}
	\label{eq:l infty bound in step 1}\|  \Delta_{f}\|_{ nm} ^2  \le    \widetilde C_2 \big ( \varpi_\epsilon^2 + \varpi _\delta^ 2 +1\big   ).
\end{align} 
where $\widetilde C_2>0$ is a constant. Notice that $ V_n \ge 1 $ by assumption. Then, Equation (\ref{eq:l infty bound in step 1}) implies that  
\begin{equation}
\label{eq:l infty bound in step 1-aux}
   \| \widetilde{\Delta}_{f} \|_{nm }  ^2  \le   \widetilde C_2 \big ( \varpi_\epsilon^2 + \varpi _\delta^2 + 1  \big   ) .
\end{equation}
Moreover, by definition of $\widetilde{\Delta}_{f}$ in Equation (\ref{normalizing}) it holds that 
$$ J_{k}( \widetilde{\Delta}_{f}) = \frac{J_{k}(\Delta_{f})}{ 2V_n }\le 1.$$
Observe that $$\sum_{i=1}^{n}\sum_{j=1}^{m_i}{\widetilde{\Delta}}_{f}(x_{i,j})=0.$$ Therefore 
\begin{equation}
\label{Equality-projection}
{\widetilde{\Delta}}_{f}=\mathcal{P}({\widetilde{\Delta}}_{f})   +  \mathcal{G}({\widetilde{\Delta}}_{f}),
\end{equation}
where  $\mathcal{P}(\widetilde\Delta_f)  $ is a polynomial with  $\mathcal{P}(\widetilde\Delta_f)   \in \text{span}\left\{1,x,x^2,\ldots,x^{k-1}\right\}$ and  $\mathcal{G}(\widetilde\Delta_f)$ a  function orthogonal to $\mathcal{P}(\widetilde\Delta_f)  $  in the inner product in $L_2([0,1])$. Then, 
\begin{equation}
\label{Equality-projection-1}
 \|   \widetilde{\Delta}_{f}\|_{\infty} \,\leq \,      \|\mathcal{P}( \widetilde{\Delta}_{f}) \|_{\infty}  +  \| \mathcal{G}(\widetilde{\Delta}_{f})\|_{\infty}, 
\end{equation}
and we proceed to bound each term.

First, since we have that $J_{k}(\widetilde{\Delta}_{f}) \leq 1$, by  Lemma 5 in   \cite{sadhanala2019additive}, there exists $R_0>0$ such that
\begin{equation}
\label{Ineq-Lemma5-tibs}
    \| \mathcal{G}(\widetilde\Delta_{f})\|_{\infty}   \,\leq  \,   R_0.
\end{equation}

Since $\Omega_1$ holds, by Lemma \ref{lemma1}, we have that 
\[
   \begin{array}{lll} 
   	\|  \mathcal{P}(\widetilde{\Delta}_{f}) \|_{\infty}&\leq& \sqrt{4\frac{CC_{k}^2}{c}} \|   \mathcal{P}(\widetilde{\Delta}_{f}) \|_{nm}.
   \end{array}
\]
In consequence, using Equation (\ref{Equality-projection}) and Inequality (\ref{Ineq-Lemma5-tibs})
\[
   \begin{array}{lll} 
   	\|  \mathcal{P}(\widetilde{\Delta}_{f}) \|_{\infty}
   	&\leq&\sqrt{4\frac{CC_{k}^2}{c}} \|   \widetilde{\Delta}_{g_f}\|_{nm}\,+\, \sqrt{4\frac{CC_{k}^2}{c}} \|   \mathcal{G}(\widetilde{\Delta}_{f}) \|_{nm}\\
   	&\leq&\sqrt{4\frac{CC_{k}^2}{c}} \|   \widetilde{\Delta}_{f}\|_{nm}\,+\, \sqrt{4\frac{CC_{k}^2}{c}} R_0\\
   \end{array}
\]
Further, Inequality (\ref{eq:l infty bound in step 1-aux}) leads to
\[
   \begin{array}{lll} 
   	\|  \mathcal{P}(\widetilde{\Delta}_{f}) \|_{\infty}
   	&\leq&  \sqrt{4\frac{CC_{k}^2}{c}  \widetilde C_2 \big ( \varpi_\epsilon^2 + \varpi_\delta^ 2  +1\big   ) } \,+\, \sqrt{4\frac{CC_{k}^2}{c}} R_0,\\
   \end{array}
\]
Now we make use of the above inequality together with Inequality (\ref{Equality-projection-1}) and Inequality (\ref{Ineq-Lemma5-tibs}) to obtain
\begin{equation}
	\label{eqn:e8}
\begin{array}{lll}
 \|   \widetilde{\Delta}_{f}\|_{\infty}&\leq &      \|\mathcal{P}(\widetilde{\Delta}_{f}) \|_{\infty}  +  \| \mathcal{G}(\widetilde{\Delta}_{f})\|_{\infty}\\
 &\leq&	\sqrt{C_4}\left(  \sqrt{ \varpi_\epsilon^2 + \varpi _\delta^ 2+1 }+ 1  \right), \\ 
\end{array}
\end{equation}
for some sufficiently large constant $C_4>0$ such that 
\begin{equation}
\label{LinfConst}
\sqrt{C_4}\left(  \sqrt{ \sigma_\epsilon^2 + \sigma _\delta^ 2 +C_{f^*}^2}+ 1  \right)\ge1.
\end{equation}
Inequality (\ref{eqn:e8}) and the definition of $\widetilde{{\widetilde{\Delta}}}_{g_f}$ in Equation (\ref{normalizing}) imply that
\begin{equation}
    \label{auxequation-3-thm1}
     \|  \Delta_f\|_{\infty}\le L_n := 2V_{n}\sqrt{C_4}\left(  \sqrt{ \sigma_\epsilon^2 + \sigma _\delta^ 2 +1}+ 1  \right).
\end{equation}
\

{\bf{Step 4.}}

Next, notice that 
\begin{equation}
    \label{eqn:f_tilde}
\{\widetilde{\Delta}_f=\frac{\Delta_f}{L_n}:f\in\Lambda\}\subset   \widetilde{\mathcal{F}} \,:=\,  B_{T V}^{k-1}\left(1\right) \cap B_{\infty}\left(1\right).
\end{equation}
Also, for this step, we utilize Lemma \ref{lemma2} to show that the event
\begin{align*}
    \Omega(\eta_2):=&\left\{
    \begin{array}{l}
        \frac{1}{C}\|\widetilde{\Delta}_f\|_{nm}^2-\|\widetilde{\Delta}_f\|_2^2\leq \frac{1}{2}\|\widetilde{\Delta}_f\|_2^2+\frac{1}{2}\eta_2^2, \\
        \|\widetilde{\Delta}_f\|_2^2-\frac{1}{c}\|\widetilde{\Delta}_f\|_{nm}^2\leq \frac{1}{2}\|\widetilde{\Delta}_f\|_2^2+\frac{1}{2}\eta_2^2,\ \forall \ f\in\Lambda
    \end{array}
    \right\}
\end{align*}
holds with high probability, for any $\eta_2\ge C_1\log(n)^{\frac{2k}{2k+1}}(nm)^{-\frac{k}{2k+1}}$. 
In fact, from Lemma \ref{lemma2}, we can conclude that
\begin{equation*}
    \mathbb{P}\Big(\Omega(\eta_2)\Big)\ge 1-C_2\log(n) \exp \left(-C_3 nm \eta_2^2/\log(n)\right)-\frac{1}{n},
\end{equation*}
where $C_1$, $C_2$ and $C_3$ are positive constants. 
\\
\\

{\bf{Step 5.}} On this step we provide a basic inequality for   $\left\|\Delta_f\right\|_{2}$ 
for any $f\in{\Lambda}$. To this end, assume that $\Omega(\eta_2)\cap\mathcal{E}_1 \cap \mathcal{E}_2\cap\Omega_1 \cap \Omega_y(\eta_1)$ holds with $\eta >0$. Since $\Omega(\eta_2)$ holds, we have that 
\begin{equation*}
    \frac{\vert\vert \widetilde{\Delta}_f\vert\vert_2^2}{2}\le \frac{1}{c}\vert\vert \widetilde{\Delta}_f\vert\vert_{nm}^2+\frac{1}{2}\eta_2^2.
\end{equation*}
Using that $\widetilde{\Delta}_f=\frac{\Delta_f}{L_n}$ we obtain
\begin{equation*}
    \frac{\vert\vert \Delta_f\vert\vert_2^2}{2}\le \frac{1}{c}\vert\vert \Delta_f\vert\vert_{nm}^2+\frac{L_n^2}{2}\eta_2^2.
\end{equation*}
However,  from the (\ref{eqn:ineq_2}) and (\ref{eqn:main_eq}),
we have that
\begin{equation}
    \label{eqn:520}
    \frac{\vert\vert \Delta_f\vert\vert_2^2}{2}   \,\leq\,\frac{4}{c} [T_1(f)+T_2(f)+T_3(f)] \,+\, \frac{2 }{c} \eta_1^2  +\frac{L_n^2}{2}  \eta_2^2
\end{equation}

In the following the term $T_3(f)$ defined in Equation (\ref{T3}) is bounded. 
First, observe that by Lemma \ref{expected-value-b}
$$
\left\|\frac{1}{n} \sum_{i=1}^n \delta_i\right\|_{2} \leq \frac{ C_5}{\sqrt{n}}
$$
holds for a positive constant $C_5$. Hence,
$$
T_3(f) \leq\left\|\frac{1}{n} \sum_{i=1}^n \delta_i\right\|_{2}\left\|\Delta_f\right\|_{2} \leq \frac{ C_5\left\|\Delta_f\right\|_{2}}{\sqrt{n}} \leq \frac{16C_5 }{cn}+c\frac{\left\|\Delta_f\right\|_{2}^2}{16}.
$$
 Therefore, from (\ref{eqn:520}), it follows that 
\begin{align}
\label{u-b-1}
 \frac{\vert\vert \Delta_f\vert\vert_2^2}{4}   \,\leq\,\frac{4}{c} [T_1(f)+T_2(f)] \,+\, \frac{2 }{c}\eta_1^2 +\frac{L_n^2}{2} \eta_2^2 \,+\,  \frac{64 C_5}{c^2 n},\,\,\,\,\ \forall f \in \Lambda.
\end{align}

{\bf{Step 6}.} With the notation from the beginning of Appendix \ref{sec:aux_lemmas}, let $\Omega=\mathcal{E}_1 \cap \mathcal{E}_2\cap\Omega_1\cap \Omega_x\cap\Omega_\delta\cap\Omega_\epsilon\cap \Omega(\eta_2) \cap \Omega_y(\eta_1)$. Also let $\widetilde{\eta} = \sqrt{8}\eta_3$ where  
\[
\eta_3^2 \,:=\, \frac{2 }{c}\eta_1^2 +\frac{L_n^2}{2} \eta_2^2 \,+\,  \frac{64 C_5}{c^2 n},
\]
where $\eta_1$ and $\eta_2$ will be  chosen later.

Assume that the event
$
\{\| \Delta_{\widehat g_{k,V_n}}\|_{2}>\widetilde{\eta}\}
$
holds.
Then, $\kappa=\frac{\widetilde{\eta}}{\vert\vert \Delta_{\widehat g_{k,V_n}}\vert\vert_2}$, satisfies  $\kappa \in [0,1].$ Moreover observe that for $f=\kappa \widehat g_{k,V_n}+(1-\kappa)f^*$ we have that $\vert\vert \Delta_f \vert\vert_2^2=\widetilde{\eta}^2.$ Additionally, if $\Omega$ holds we have that $\mathcal{E}_1 \cap \mathcal{E}_2\cap\Omega_1$ holds. Therefore,
\begin{align}
\label{Req-bound4}
    &\mathbb{P}\left(\left\{\left\|\widehat g_{k,V_n}-f^* + \mathcal{A}_{f^*}\right\|_{2}^2>\widetilde{\eta}^2\right\}\cap \Omega\right)\nonumber
    \\
    \le&
    \mathbb{P}\left(\left\{\exists \Delta_f \in \mathcal{F} \  \text{with}\ f\in\Lambda :\left\|\Delta_f\right\|_{2}^2=\widetilde{\eta}^2\right\} \cap \Omega\right)\nonumber
    \\
    \le&\mathbb{P}\Big(\Big\{\exists \Delta_f \in \mathcal{F} \  \text{with}\ f\in\Lambda  : T_1(f)+T_2(f)> \widetilde{\eta}^2 \cdot \frac{c}{32} ,\nonumber
    \\
      & \text { and } \  \left\|\Delta_f\right\|_{2}^2\le\widetilde{\eta}^2\Big\} \cap \Omega\Big)\nonumber
    \\
    \le& A_{1}+A_{2},
\end{align}
where 
$$
A_j:=\mathbb{P}\left(\left\{\sup_{ f\,:\,\Delta_f \in \mathcal{F}, \left\|\Delta_f\right\|_{2}^2 \leq \widetilde{\eta}^2} T_{j}(f)>\widetilde{\eta}^2 \cdot \frac{c}{64} \right\} \cap \Omega\right),
$$
for $j=1,2$. We proceed to set bounds on each term 
\\
{\bf{Step 7}.} Let let us now bound $A_2$.  Towards that end, define $C^{\prime} =  c/64$ and 
$$
Z_{i,j}^*(f):=\left\{\Delta_f\left(x_{i,j}^*\right)\right\} \delta_i^*\left(x_{i,j}^*\right)-\left\langle \Delta_f, \delta_i^*\right\rangle_{2}.
$$
Observe that since $\Omega$ holds, $\Omega_x$ and $\Omega_\delta$ hold. In consequence, letting $$T_2^*(f)=\frac{1}{n} \sum_{i=1}^n \frac{1}{m_i}\sum_{j=1}^{m_i}\left[\left\{\Delta_f\left(x_{i,j}^*\right)\right\} \delta_i^*\left(x_{i,j}^*\right)-\left\langle \Delta_f, \delta_i^*\right\rangle_{\mathcal{L}_2}\right]=\frac{1}{n} \sum_{i=1}^n \frac{1}{m_i}\sum_{j=1}^{m_i}Z_{i,j}^*(f),$$
we have that
\begin{align}
\label{A2bound2}
    A_2
    =&\mathbb{P}\left(\left\{\sup_{ f\,:\,\Delta_f \in \mathcal{F},\left\|\Delta_f\right\|_{2}^2 \leq \widetilde{\eta}^2} \frac{1}{n} \sum_{i=1}^n \frac{1}{m_i}\sum_{j=1}^{m_i}Z_{i,j}^*(f)>\widetilde{\eta}^2  C' \right\} \cap \Omega\right)\nonumber\\
    \le&\sum_{l=1}^{L}\mathbb{P}\left(\left\{\sup_{f\,:\, \Delta_f \in \mathcal{F},\left\|\Delta_f\right\|_{2}^2 \leq \widetilde{\eta}^2} \frac{1}{n} \sum_{i\in\mathcal{J}_{e,l}} \frac{1}{m_i}\sum_{j=1}^{m_i}Z_{i,j}^*(f)>\widetilde{\eta}^2 C'/ (2L)\right\} \cap \Omega\right)\nonumber
    \\
    +&\sum_{l=1}^{L}\mathbb{P}\left(\left\{\sup_{f\,:\,\Delta_f \in \mathcal{F},\left\|\Delta_f\right\|_{2}^2 \leq \widetilde{\eta}^2}\frac{1}{n} \sum_{i\in\mathcal{J}_{o,l}} \frac{1}{m_i}\sum_{j=1}^{m_i}Z_{i,j}^*(f)>\widetilde{\eta}^2 C'/ (2L)\right\} \cap \Omega\right) \nonumber\\
\end{align}
\begin{align}
          \le& \sum_{l=1}^{L}\frac{2L C'}{\widetilde{\eta}^2}\mathbb{E}\Big( \sup_{f\,:\,\Delta_f \in \mathcal{F},\left\|\Delta_f\right\|_{2}^2 \leq \widetilde{\eta}^2}\frac{1}{n}\sum_{i\in\mathcal{J}_{e,l}}\frac{1}{m_i}\sum_{j=1}^{m_i}Z_{i,j}^*(f)\Big)\nonumber
    \\
    +&\sum_{l=1}^{L}\frac{2L C'}{\widetilde{\eta}^2}\mathbb{E}\Big( \sup_{f\,:\,\Delta_f \in \mathcal{F},\left\|\Delta_f\right\|_{2}^2 \leq \widetilde{\eta}^2}\frac{1}{n}\sum_{i\in\mathcal{J}_{o,l}}\frac{1}{m_i}\sum_{j=1}^{m_i}Z_{i,j}^*(f)\Big) \nonumber
\end{align}
where the first inequality follows from union bound, and the second by Markov's inequality. Next we proceed to bound each term in the last equation. 

Furthermore,  there exists $\left\{\xi_{i,j}\right\}_{\left\{ i\in\mathcal{J}_{e,l},j\in [m_i]\right\}}$ independent and identically distributed Rademacher random variables, independent of  $\left(\left\{x_{i,j}^*\right\}_{\left\{ i\in\mathcal{J}_{e,l},j\in [m_i]\right\}},\left\{\delta_i^*\right\}_{\left\{ i\in\mathcal{J}_{e,l},j\in [m_i]\right\}}\right)$, such that by Lemma \ref{lem:aux4}, it holds that 
{\small{\begin{align}
\label{Fourth-bound-1}
 & \frac{2L  }{\widetilde{\eta}^2 C' }\mathbb{E}\Big( \sup_{f\,:\,\Delta_f \in \mathcal{F},\left\|\Delta_f\right\|_{2}^2 \leq \widetilde{\eta}^2}\frac{1}{n}\sum_{i\in\mathcal{J}_{e,l}}\frac{1}{m_i}\sum_{j=1}^{m_i}Z_{i,j}^*(f)\Big)  \leq  \nonumber\\ 
&\frac{4Cc_2}{\widetilde{\eta}^2 cC' } \mathbb{E}\left( \mathbb{E}\left(\delta^*_{\mathcal{J}_{e,l}}\mathbb{E}\left(\sup_{f\,:\,\Delta_f \in \mathcal{F},\left\|\Delta_f\right\|_{2}^2 \leq \widetilde{\eta}^2} \frac{1}{\sum_{i\in\mathcal{J}_{e,l}}m_i}  \sum_{i\in\mathcal{J}_{e,l}} \sum_{j=1}^{m_i} \xi_{i,j}\Delta_f\left(x_{i,j}^*\right)\mid\left\{\delta_i^*\right\},\left\{x_{i,j}^*\right\}\right) \mid\left\{\{\delta_i^*\}\right\}\right)\right).
\end{align}}}

Then, we have that $\widetilde{\Delta}_f=\frac{\Delta_f}{L_n}$ satisfies that, $L_n\ge 2V_n$,   $\vert\vert \widetilde{\Delta}_f\vert\vert_{\infty}\le 1$ and $J_k(\widetilde{ \Delta}_f)\le 1.$ Moreover, for some constant $\widetilde{C}$, we have that $L_n\le\widetilde{C}V_n$.
Furthermore, since $L_n\ge 2V_n$ it is follows that $\vert\vert \widetilde{\Delta}_f\vert\vert_{2}\le \frac{1}{2V_n}\widetilde{\eta}$. In consequence, the term in the right hand side of (\ref{Fourth-bound-1}) can be  bounded by
{\footnotesize{\begin{align*}
\frac{\widetilde{V}_n}{ \widetilde{\eta}^2 C'   }\mathbb{E}\left(\delta^*_{\mathcal{J}_{e,l}}\mathbb{E}\left(\sup _{f\,:\,\Delta_f \in \mathcal{F},\left\|\Delta_f\right\|_{2} \leq \eta^{\prime}} \frac{1}{\sum_{i\in\mathcal{J}_{e,l}}m_i}  \sum_{i\in\mathcal{J}_{e,l}} \sum_{j=1}^{m_i} \xi_{i,j}\Delta_f\left(x_{i,j}^*\right)\mid\left\{\delta_i^*\right\},\left\{x_{i,j}^*\right\}\right) \mid\left\{\{\delta_i^*\}_{i\in\mathcal{J}_{e,l}}\right\}\right), 
\end{align*}}}
where, $\eta^{\prime}=\frac{1}{2V_n}\widetilde{\eta}$, $\widetilde{V}_n=\frac{4\widetilde{C}Cc_2V_n}{c}$. 
Performing an analysis akin to the one carried out in the proof of 
Lemma \ref{lemma2}, we get that
\begin{align*}
&\mathbb{E}\left(\sup _{f\,:\,\Delta_f \in \mathcal{F},\left\|\Delta_f\right\|_{2} \leq \eta^{\prime}} \frac{1}{\sum_{i\in\mathcal{J}_{e,l}}m_i}  \sum_{i\in\mathcal{J}_{e,l}} \sum_{j=1}^{m_i} \xi_{i,j}\Delta_f\left(x_{i,j}^*\right)\mid\left\{\delta_i^*\right\},\left\{x_{i,j}^*\right\}\right)
\\
\le&\frac{C_{12}\sqrt{\log(n)}}{\sqrt{nm}}(\eta^{\prime})^{1-1/(2k)},
\end{align*}
for a positive constant $C_{12}.$ By Inequality (\ref{Fourth-bound-1}) it follows that
\begin{align*}
     &  \frac{2L  }{\widetilde{\eta}^2 C' }\mathbb{E}\Big( \sup_{\Delta_f \in \mathcal{F}:\left\|\Delta_f\right\|_{2}^2 \leq \widetilde{\eta}^2}\frac{1}{n}\sum_{i\in\mathcal{J}_{e,l}}\frac{1}{m_i}\sum_{j=1}^{m_i}Z_{i,j}^*(f)\Big) 
     \\ 
     \le& \frac{L}{\widetilde{\eta}^2   C' }\frac{\widetilde{V}_nC_{13}\sqrt{\log(n)}}{\sqrt{nm}L}(\eta^{\prime})^{1-1/(2k)} \mathbb{E}\Big(\max_{i\in\mathcal{J}_{e,l},1\le j\le m_i}\vert\delta_{i}^*(x_{i,j})\vert\Big),
\end{align*}
for a positive constant $C_{13}.$
Then, by Assumption 1, we have that 
\begin{align}
\label{eqn:delta_inf}
\mathbb{E}\Big(\max_{i\in\mathcal{J}_{e,l},1\le j\le m_i}\vert\delta_{i}^*(x_{i,j})\vert\Big)&\le \varpi_\delta\sqrt{2\log(\sum_{i\in\mathcal{J}_{e,l}}m_i)}\nonumber
\\
&\le\varpi_\delta\sqrt{2\log(\frac{C_{\beta}Cc_2nm}{2\log(n)})}.
\end{align}

Therefore, 
\begin{equation}
    \label{eqn:532}
    A_2 \,\leq\, C_{14}    \frac{L}{\widetilde{\eta}^2   C' }\frac{\widetilde{V}_n\sqrt{\log(n)}}{\sqrt{nm}}(\eta^{\prime})^{1-1/(2k)} \sqrt{
    \log(nm)}, 
\end{equation}
for some positive constant $C_{14}>0$.\\

{\bf{Step 8}.} Now we bound $A_1$. Recall part of the analysis employed in {\bf{Step 5}}. Here, we have that $\widetilde{\Delta}_f=\frac{\Delta_f}{L_n}$ satisfies that  $\vert\vert \widetilde{\Delta}_f\vert\vert_{\infty}\le 1$ and $J_k(\widetilde{ \Delta}_f)\le 1.$ Moreover, we have that $L_n\le\widetilde{C}V_n$.
Furthermore, since $L_n\ge 2V_n$ it is follows that $\vert\vert \widetilde{\Delta}_f\vert\vert_{2}\le \frac{1}{2V_n}\widetilde{\eta}$. In consequence, based on Lemma \ref{lem:aux5}, we have that it is enough to bound 
{\small{\begin{align*}
\frac{\widetilde{V}_n}{L}\mathbb{E}\left(\epsilon^*_{\mathcal{J}_{e,l}}\mathbb{E}\left(\sup _{f\,:\,\Delta_f \in \mathcal{F},\left\|\Delta_f\right\|_{2} \leq \eta^{\prime}} \frac{1}{\sum_{i\in\mathcal{J}_{e,l}}m_i}  \sum_{i\in\mathcal{J}_{e,l}} \sum_{j=1}^{m_i} \xi_{i,j}\Delta_f\left(x_{i,j}^*\right)\mid\{\epsilon_{i,j}^*\},\left\{x_{i,j}^*\right\}\right) \mid\{\epsilon_{i,j}^*\}\right), 
\end{align*}}}
where, $\eta^{\prime}=\frac{1}{2V_n}\widetilde{\eta}$, $\widetilde{V}_n=\frac{\widetilde{C}Cc_2V_n}{c}$.

Performing an analysis akin to the one carried out in the proof of 
Lemma \ref{lemma2}, we get that
\begin{align*}
&\mathbb{E}\left(\sup _{\Delta_f \in \widetilde{\mathcal{F}}:\left\|\Delta_f\right\|_{2} \leq \eta^{\prime}} \frac{1}{\sum_{i\in\mathcal{J}_{e,l}}m_i}  \sum_{i\in\mathcal{J}_{e,l}} \sum_{j=1}^{m_i} \xi_{i,j}\Delta_f\left(x_{i,j}^*\right)\mid\{\epsilon_{i,j}^*\},\left\{x_{i,j}^*\right\}\right)
\\
\le&\frac{C_{15}\sqrt{\log(n)}}{\sqrt{nm}}(\eta^{\prime})^{1-1/(2k)},
\end{align*} 
for a positive constant $C_{15}.$ And the same inequality holds replacing $\mathcal{J}_{e,l}$ with $\mathcal{J}_{o,l}$.

Proceeding as in (\ref{eqn:delta_inf}) and invoking Lemma \ref{lem:aux5} we arrive at 
\begin{equation}
    \label{eqn:533}
    A_1 \,\leq\, C_{15}    \frac{L}{\widetilde{\eta}^2   C' }\frac{\widetilde{V}_n\sqrt{\log(n)}}{\sqrt{nm}}(\eta^{\prime})^{1-1/(2k)} \sqrt{
    \log(nm)}, 
\end{equation}
for a positive constant $C_{15}$.

{\bf{Step 9}.}

Finally, from the inequality 
\[
\|  f_{k,V_n}-f^* \|_2^2 \,\leq \,  2 ( \bar{y} -  \mathcal{A}_{f^*} )^2 \,+\, 2 \|  g_{k,V_n} + \mathcal{A}_{f^*} -f^*  \|_2^2  
\]
and combining (\ref{Req-bound4}), the fact that $\Omega$ holds with probability approaching one as $n$ goes to $\infty$,  (\ref{eqn:532}) and (\ref{eqn:533}), we obtain that for a given $\epsilon$ there exists a choice $\eta$ satisfying
\[
\eta^2 \asymp
\left( V_n^2\log^\frac{4k}{2k+1}(n)(nm)^{-\frac{2k}{2k+1}}+\frac{\log^{5/2}(n)}{n}\right),
\]
by choosing appropriate $\eta_1$ and $\eta_2$ with
\[
\eta_1 \asymp \sqrt{\frac{\log n}{n}}  \,\,\,\text{and}\,\,\, \eta_2 \asymp \log(n)^{\frac{2k}{2k+1}}(nm)^{-\frac{k}{2k+1}},
\]
such that 
\[
    \mathbb{P}(\|  f_{k,V_n}-f^* \|_2^2  \geq  C_{5} \eta^2 ) \,\leq \,\varepsilon,
\]
for some positive constant $C_5$

\end{proof}
\newpage
\section{Proof of Theorem \ref{Penalized}}

\begin{proof}
Throughout, let $\varepsilon>0$ be given. We use the notation from the proof of Theorem \ref{thm:main tv}. Specifically,  $\widetilde{f^*}=f^*- \mathcal{A}_{f^*} $ with $\mathcal{A}_{f^*} =\frac{1}{\sum_{i=1}^n m_i}\sum_{i=1}^n\sum_{j=1}^{m_i}f^*(x_{i,j})$,  $\Lambda := \{ t\widehat{g}_{k,\lambda}  + (1-t)\widetilde{f^*}: t \in [0,1] \}$. Hence, by the basic inequality,  
\begin{align*}
\displaystyle	\frac{1}{ n } \sum_{i=1}^n  \frac{1}{m_i}\sum_{j=1}^{m_i} \big (  y_{i,j} -  \bar{y} -  f(x_{i,j})  \big)^2  \,+\, J_k(f ) 
	\le   \frac{1}{ n }  \sum_{i=1}^n  \frac{1}{m_i}\sum_{j=1}^{m_i} \big (  y_{i,j} -  \bar{y} -  \widetilde{f^*}  (x_{i,j})  \big)^2  \,+\, J_k( f^* ) .
\end{align*}

This implies that for all $f \in \Lambda$ it holds that 
\begin{equation}
    \label{eqn:main_eq2}
     \begin{array}{lll}
     \displaystyle   \frac{1}{2}\|  f^* -  \bar{y} - f\|_{nm}^2 & \leq&  \displaystyle\frac{1}{n }\sum_{i=1}^n   \frac{1}{m_i} \sum_{j=1}^{m_i} \Big(f(x_{i,j})-f^*(x_{i,j})+\mathcal{A}_{f^*}\Big) \Big(\delta_i(x_{i,j}) +\epsilon_{
i,j}\Big) \\ 
 & = &  T_1(f)+T_2(f)+T_3(f)+    \lambda( J_k( f^* ) -  J_k( f ) ) ,
 \end{array}
\end{equation}
where 
{\small{\begin{align}
&T_1(f):=\frac{1}{n } \sum_{i=1}^n \frac{1}{m_i}\sum_{j=1}^{m_i}\left\{f\left(x_{i,j}\right)-f^*\left(x_{i,j}\right) +\mathcal{A}_{f^*}  \right\} \epsilon_{i,j}, \label{T12}\\
&T_2(f):=\frac{1}{n} \sum_{i=1}^n \frac{1}{m_i}\sum_{j=1}^{m_i}\left[\left\{f\left(x_{i,j}\right)-f^*\left(x_{i,j}\right)+\mathcal{A}_{f^*}\right\} \delta_i\left(x_{i,j}\right)-\left\langle f-f^*+\mathcal{A}_{f^*}, \delta_i\right\rangle_{\mathcal{L}_2}\right],\label{T22}
\\
&T_3(f):=\frac{1}{n} \sum_{i=1}^n \frac{1}{m_i}\sum_{j=1}^{m_i}\left\langle f-f^*+\mathcal{A}_{f^*}, \delta_i\right\rangle_{\mathcal{L}_2},
\label{eqn:570}  
\end{align}}}
and we let $\Delta_f = f -f^* +\mathcal{A}_{f^*}  $.

Now, denoting $  \Omega_y(\eta_1)$  for \begin{equation}
    \label{eqn:eta_cond2}
    \eta_1 \,\geq\,     26 \sqrt{\frac{ \log n}{n}\frac{c_2 C \widetilde C_1 (\varpi_\delta^2  +\varpi_\epsilon^2) }{ c C_{\beta}}     }
\end{equation} 
as in Step 2 of the proof of Theorem \ref{thm:main tv}, from (\ref{eqn:main_eq2}), we obtain that 
\begin{equation}
    \label{eqn:main_eq3}
     \begin{array}{lll}
     \displaystyle   \frac{1}{4}\|  \Delta_f\|_{nm}^2 
 & \leq &  T_1(f)+T_2(f)+T_3(f) +    \lambda( J_k( f^* ) -  J_k( f ) )  \,+\,\frac{\eta_1^2}{2}.
 \end{array}
\end{equation}

 {\bf{Step 2}}

 Now, let  $f \in \Lambda$ such that 
$\left\|\Delta_f\right\|_{2}^2 \leq \widetilde{\eta}^2$ 
and 
$J_k\left(\Delta_{f}\right) \geq 4 J_{k}(f^*)$.  Then, there exists $ h\in \Lambda$ such that $J_{k}(h ) \leq J_k\left(\Delta_{h}\right)+J_{k}(f^*)=5 J_{k}(f^*)$ and $\left\|\Delta_h\right\|_{2}^2 \leq \widetilde{\eta}^2$. This follows by letting $t:=\frac{4 J_k\left(f^*\right)}{J_k\left(\Delta_{f}\right)} \in[0,1]$ and $ h:=t  f+(1-t)f^* .$ In fact
$$
J_k\left(\Delta_{ h}\right)=J_k\left(t\left(f-f^*\right)\right)=t J_k\left(f-f^*\right)=\frac{4 J_k\left(f^*\right)}{J_k\left(f-f^*\right)} J_k\left(f-f^*\right)=4 J_k\left(f^*\right),
$$
from where $J_{k}( h) \leq 5 J_{k}(f^*)$. Moreover $ h=tt_1\widehat{g}_{k,\lambda}+(1-tt_1)f^*$ where $t_1\in[0,1]$ is such that $f=t_1 \widehat{g}_{k,\lambda}+(1-t_1)f^*.$ Therefore $ h_f\in\Lambda.$ It is also satisfied that
\[
\begin{array}{lll}
     \|\Delta_h\|_2^2 & = &\| h - f^* +  \mathcal{A}_{f^*} \|_2^2  \\
     & =&\| (t f+  (1-t)(f^* -  \mathcal{A}_{f^*} )    )- f^* +  \mathcal{A}_{f^*} \|_2^2  \\
&= &t^2 \| f+  f^* -  \mathcal{A}_{f^*}  \|_2^2  \\
&\leq & \| \Delta_f\|_2^2\\
 & \leq&\widetilde{\eta}^2.
\end{array}
\]
\\
  \\

 {\bf{Step 3}}
 
For $\widetilde{\eta}>0$, with $\eta = O(1)$, let
\[
\mathcal{E}_3\,:=\, \left\{ \underset{ f\,:\, \|\Delta_f\|_2 \leq \widetilde{\eta}  }{ \sup}\, J_k(f) \,\leq\, 5J_k(f^*)  \right\}.
\]
Then, with $\Omega_1$,  $\mathcal{E}_1$, and $\mathcal{E}_2$ as in the proof of Theorem \ref{thm:main tv}, suppose that $ \Omega_1 \cap \mathcal{E}_1\cap\mathcal{E}_2 \cap \mathcal{E}_3$
holds. Then  for any $f\in \Lambda$, as in Step 3 of the proof of Theorem \ref{thm:main tv}, see (\ref{eqn:_new_basic}), it holds that 
\begin{equation}
    \label{eqn:_new_basic2}
  \begin{array}{lll}
\displaystyle         \frac{1}{4}\| \Delta_f\|_{nm}^2 &\leq&\displaystyle      C_1' \varpi_\epsilon \| \Delta_{f}\|_{nm } + C_1''   \varpi_\delta  \| \Delta_{f}\|_{nm }     +  \frac{\eta_1^2}{2} \,+\, \lambda J_k(f^*)\\
    &\leq & \displaystyle   16( C_1' )^2 \varpi_\epsilon^2   +   \frac{\|\Delta_{f}\|_{nm } ^2 }{16 }  +   16( C_1'' )^2\varpi _\delta^2  +   \frac{\| \Delta_{f}\|_{nm } ^2 }{16 } +\frac{\eta_1^2}{2}\,+\, \lambda J_k(f^*)\\
  \end{array}
\end{equation}
which implies  
\begin{equation*}
  \begin{array}{lll}
\displaystyle   \|\Delta_f\|_{nm}^2\,\leq \, \widetilde{C}_2(\varpi_\epsilon^2 \,+\,  \varpi_\delta^2  \,+\, \lambda J_k(f^*) )      
  \end{array}
\end{equation*}
for a constant $\widetilde{C}_2>0$. Hence, if we choose $\lambda$ such that $\lambda J_K(f^*) =  O(1)$ we obtain that  for all $f\in \Lambda$,
\begin{equation}
    \label{eqn:_new_basic22}
  \begin{array}{lll}
\displaystyle   \|\Delta_f\|_{nm}^2\,\leq \, \widetilde{C}_2(\varpi_\epsilon^2 \,+\,  \varpi_\delta^2  \,+\, 1)      
  \end{array}
\end{equation}
by setting $\widetilde{C}_2$ large enough.

Next, define $\widetilde{\Delta}_f =  \Delta_f/(6 J_k(f^*))$. Then,  
\[
   J_k( \widetilde{\Delta}_f  ) \,\leq\,  \frac{ J_k(f) +  J_k(f^*)  }{  6 J_k(f^*)  }\,\leq\,  \frac{ 5J_k(f^*) +  J_k(f^*)  }{  6 J_k(f^*)  } \,\leq \, 1.
\]
Also, (\ref{eqn:_new_basic22}) implies 
\begin{equation}
    \label{eqn:_new_basic23}
  \begin{array}{lll}
\displaystyle   \|\widetilde\Delta_f\|_{nm}^2\,\leq \, \frac{1}{36J_k(f^*)^2 }\widetilde{C}_2(\varpi_\epsilon^2 \,+\,  \varpi_\delta^2  \,+\, 1).      
  \end{array}
\end{equation}

Now observe that $\sum_{i=1}^{n}\sum_{j=1}^{m_i}\Delta_{f}(x_{i,j})=0$. In consequence,
\begin{equation}
    \label{DecompPolOrt}
  \Delta_{f}=\mathcal{P}(\Delta_f)   +  \mathcal{G}(\Delta_{f}),
\end{equation}
where  $\mathcal{P}(\Delta_{f})  $ is a polynomial with  $\mathcal{P}(\Delta_{f})   \in \text{span}\left\{1,x,x^2,\ldots,x^{k-1}\right\}$ and  $\mathcal{G}(\Delta_{f})$ a  function orthogonal to $\mathcal{P}(f)  $  in the inner product in $L_2(U)$, with $U$ is the uniform distribution on $[0, 1]$. Then we  have that
\begin{equation}
    \label{eqn:600}
     \|   \widetilde{\Delta_{f}}\|_{\infty} \,\leq \,      \|\mathcal{P}(\widetilde{\Delta}_{f}) \|_{\infty}  +  \| \mathcal{G}(\widetilde{\Delta}_{f})\|_{\infty}, 
\end{equation}
and we proceed to bound each term. First using that $J_{k}(\widetilde{\Delta}_{f}) \leq 1$, by  Lemma 5 in \cite{sadhanala2019additive} there exists $R_0>0$ such that
\begin{equation}
    \label{Lemma5TibsBound}
    \| \mathcal{G}(\widetilde{\Delta}_{f})\|_{\infty}   \,\leq  \,   R_0.
\end{equation}
Since $\Omega_1$ holds we have that 
   \begin{align}
      \label{PolyBound}
       \|  \mathcal{P}(\widetilde{\Delta}_{f}) \|_{\infty}&\leq \sqrt{\frac{4CC_{k}^2}{c}} \|   \mathcal{P}(\widetilde{\Delta}_{f}) \|_{nm}.
   \end{align}
   Therefore, 
\[
   \begin{array}{lll} 
   	\|  \mathcal{P}(\widetilde{\Delta}_{f}) \|_{\infty}&\leq& \sqrt{\frac{4CC_{k}^2}{c}} \|   \mathcal{P}(\widetilde{\Delta}_{f}) \|_{nm}\\
   	&\leq&\sqrt{\frac{4CC_{k}^2}{c}} \|  \widetilde{\Delta}_{f}\|_{nm}\,+\, \sqrt{\frac{4CC_{k}^2}{c}} \|   \mathcal{G}(\widetilde{\Delta}_{f}) \|_{nm}\\
   	&\leq&\sqrt{\frac{4CC_{k}^2}{c}} \|  \widetilde{\Delta}_{f}\|_{nm}\,+\, \sqrt{\frac{4CC_{k}^2}{c}} R_0.
   \end{array}
\]
As a result,  from (\ref{eqn:_new_basic23})  and (\ref{eqn:600}),
\begin{equation}
\label{AuxThm2-1}
\begin{array}{lll}
 \| \widetilde{\Delta}_{f}\|_{\infty}&\leq &      \|\mathcal{P}(\widetilde{\Delta}_{f}) \|_{\infty}  +  \| \mathcal{G}(\widetilde{\Delta}_{f})\|_{\infty}\\
 &\leq&\sqrt{\frac{4CC_{k}^2}{c}}\frac{1}{6J_k(f^*)}\sqrt{ \widetilde{C}_2 \left(\varpi_\epsilon^2+\varpi_\delta^2+1\right)}+\sqrt{\frac{4CC_{k}^2}{c}} R_0 + R_0
 \\
 &\leq&	\sqrt{\widetilde C_4}\frac{1}{6J_k(f^*)}\left(  \sqrt{ \varpi_\epsilon^2 + \varpi _\delta^ 2 +1}  \right)+ \sqrt{\widetilde C_4}, 
\end{array}
\end{equation}
for some constant $\widetilde C_4>1$.

Therefore, for any $f\in \Lambda$, 
\begin{equation}
\label{AuxThm2-3}
\begin{array}{lll}
 \| \Delta_{f}\|_{\infty}
 &\leq&	\sqrt{\widetilde C_4}\left(  \sqrt{ \varpi_\epsilon^2 + \varpi _\delta^ 2 +1}  \right)+  6J_k(f^*)\sqrt{\widetilde C_4}.
\end{array}
\end{equation}

In consequence
{\small{\begin{equation}
\label{LambdaFm}
 \{\Delta_f:f\in\Lambda\  \text{with} \ J_{k}(f)\le5J_{k}(f^*), \vert\vert \Delta_f\vert\vert_2^2\le \widetilde{\eta}^2\}\subset B_{T V}^{k-1}\left(6 J_{k}(f^*)\right) \cap B_{\infty}\left(L_n\right)=:\mathcal{F},
\end{equation}}}
with
\begin{equation}
    \label{L_nchoiceT2}
    L_n:=6J_{k}(f^*)\sqrt{\widetilde C_4}+\sqrt{\widetilde C_4}\left(  \sqrt{ \varpi_\epsilon^2 + \varpi _\delta^ 2 + 1}  \right)\ge 6J_{k}(f^*),
\end{equation}
for large enough $\widetilde C_4$.

We now proceed to achieve the second goal of this step. To this end define $\widetilde{\widetilde{\Delta}}_f:=\frac{\Delta_f}{L_n}\in B_\infty(1)$. Making use of the above analysis, we obtain that $\vert\vert \widetilde{\widetilde{\Delta}}_f\vert\vert_{\infty}\le 1$, and by Inequality (\ref{AuxThm2-3}) and the definition of $L_n$, we further have that $J_{k}(\widetilde{\widetilde{\Delta}}_f)\le 1.$ Thus
\begin{equation}
\label{LambdaTilde}
 \{\widetilde{\widetilde{\Delta}}_f:f\in\Lambda\  \text{with} \ J_{k}(f)\le5J_{k}(f^*), \vert\vert \Delta_f\vert\vert_2^2\le\widetilde{\eta}^2\}\subset B_{T V}^{k-1}\left(1\right) \cap B_{\infty}\left(1\right)=:\widetilde{\mathcal{F}}.
\end{equation}

 Suppose also that the event   $\Omega(\eta_2)$ defined in Step 4 of the proof of Theorem \ref{thm:main tv} holds, for $\eta_2$  satisfying $\eta_2\ge C_1\log(n)^{\frac{2k}{2k+1}}(nm)^{-\frac{k}{2k+1}}$ with $C_1$ a positive constant also as in Theorem \ref{thm:main tv}. 

Then, proceeding as in Step 5 of the proof of Theorem \ref{thm:main tv}, and using (\ref{eqn:main_eq3}), we have that
\begin{equation}
    \label{eqn:700}
    \frac{\vert\vert \Delta_f\vert\vert_2^2}{2}   \,\leq\,\frac{4}{c} [T_1(f)+T_2(f)] \,+\,  \frac{64 C_5}{c^2 n}  \,+\, \lambda( J_k( f^* ) -  J_k( f ) ) ] \,+\,    \frac{2 }{c} \eta_1^2  +\frac{L_n^2}{2}  \eta_2^2.
\end{equation}

{\bf{Step 4}.} With the notation from the beginning of Appendix \ref{sec:aux_lemmas} and that of the proof of Theorem \ref{thm:main tv}, let $\Omega=\mathcal{E}_1 \cap \mathcal{E}_2\cap\Omega_1\cap \Omega_x\cap\Omega_\delta\cap\Omega_\epsilon\cap \Omega(\eta_2) \cap \Omega_y(\eta_1)$. Also let $\widetilde{\eta} = 2 \eta_3$ where  
\[
\eta_3^2 \,:=\, \frac{2 }{c}\eta_1^2 +\frac{L_n^2}{2} \eta_2^2 \,+\,  \frac{64 C_5}{c^2 n} +   \frac{4}{c}\lambda J_k( f^* ) ,
\]
with $\eta_1$ and $\eta_1$ to be chosen later and where 
\begin{equation}
    \label{eqn:lamb}
     \lambda\,:=\, \frac{c}{ 2J_k(f^*)  }\left[ \frac{2 }{c}\eta_1^2 +\frac{L_n^2}{2} \eta_2^2 \,+\,  \frac{64 C_5}{c^2 n}  \right].
\end{equation}
 Then, as in the proof of Theorem 1
\begin{align}
\label{zero-bound-thm2}
    \mathbb{P}\Big(\left\{ \left\|\widehat g_{k,\lambda}-f^* + \mathcal{A}_{f^*}\right\|_{2}^2>\widetilde{\eta}^2\right\}\Big)
    \le&\mathbb{P}\Big(\left\{ \left\|\widehat g_{k,\lambda}-f^* + \mathcal{A}_{f^*}\right\|_{2}^2>\widetilde{\eta}^2\right\}\cap \Omega\Big)+  \mathbb{P}(\Omega^c) \nonumber
    \\
    =&\mathbb{P}\Big(\left\{ \left\|\widehat g_{k,\lambda}-f^* + \mathcal{A}_{f^*}\right\|_{2}^2>\widetilde{\eta}^2\right\}\cap  \mathcal{E}_3\cap \Omega\Big)\nonumber
    \\
    +&\mathbb{P}\Big(\left\{ \left\|\widehat g_{k,\lambda}-f^* + \mathcal{A}_{f^*}\right\|_{2}^2>\widetilde{\eta}^2\right\}\cap  \mathcal{E}_3^c\cap\Omega\Big)+\mathbb{P}(\Omega^c),
\end{align}
and we proceed to bound the right hand side of (\ref{zero-bound-thm2}).

{\bf{Step 5}.} For the first term, we observe that 
\begin{align}
\label{eqn:702}
    &\mathbb{P}\left(\left\{\left\|\widehat g_{k,\lambda}-f^* + \mathcal{A}_{f^*}\right\|_{2}^2>\widetilde{\eta}^2\right\}\cap \mathcal{E}_3\cap \Omega\right)\nonumber
    \\
    \le&
    \mathbb{P}\left(\left\{\exists \Delta_f \in \mathcal{F} \  \text{with}\ f\in\Lambda :\left\|\Delta_f\right\|_{2}^2=\widetilde{\eta}^2\right\} \cap \mathcal{E}_3\cap \Omega\right)\nonumber
    \\
    \le&\mathbb{P}\Big(\Big\{\exists \Delta_f \in \mathcal{F} \  \text{with}\ f\in\Lambda  : T_1(f)+T_2(f)> \tilde{\eta}^2 \cdot \frac{c}{16} ,\nonumber
    \\
      & \text { and } \  \left\|\Delta_f\right\|_{2}^2\le\widetilde{\eta}^2\Big\} \cap \mathcal{E}_3\cap \Omega\Big)\nonumber
    \\
    \le& A_{1}+A_{2},
\end{align}
where 
$$
A_j:=\mathbb{P}\left(\left\{\sup_{ f\,:\,\Delta_f \in \mathcal{F}, \left\|\Delta_f\right\|_{2}^2 \leq \widetilde{\eta}^2} T_{j}(f)>\widetilde{\eta}^2 \cdot \frac{c}{32} \right\} \cap \Omega\right),
$$
for $j=1,2$. 

Then, as in the proof of Theorem \ref{thm:main tv}, we can obtain that 
\begin{equation}
    \label{eqn:aj1}
    A_j \,\leq\, \frac{\varepsilon}{8}
\end{equation}
for $j=1,2$ for  choices of $\eta_1$ and $\eta_2$ satisfying
\begin{equation}
    \label{eqn:choices}
    \eta_1 = C^{(1)}  \sqrt{\frac{\log n}{n}}  \,\,\,\text{and}\,\,\, \eta_2 =   C^{(2)}\log(n)^{\frac{2k}{2k+1}}(nm)^{-\frac{k}{2k+1}},
\end{equation}
with $C^{(1)}$ and $C^{(2)}$ large enough constants.

{\bf{Step 6}.} 
In this step we analyze  the second term in (\ref{eqn:702}). For this, by definition of the event $\mathcal{E}_3$ we obtain that
\begin{align*}
    &\mathbb{P}\Big(\left\{ \left\|\widehat g_{k,\lambda}-f^* + \mathcal{A}_{f^*}\right\|_{2}^2>\widetilde{\eta}^2\right\}\cap  \mathcal{E}_3^c\cap\Omega\Big)\\
    \le& \mathbb{P}\left(\left\{\sup _{f \in \Lambda:\left\|\Delta_f\right\|_{2} \leq  \widetilde\eta^2 }J_k(f)> 5 J_k\left(f^*\right)\right\}\cap \Omega\right)\\
    \leq&  \mathbb{P}\Big( \Big\{ \sup _{f\in \Lambda:\left\|\Delta_f\right\|_{2} \leq \widetilde\eta}J_{k}(f-f^*)\ge4J_{k}(f^*)\Big\}\cap \Omega\Big)\\
    \leq &   \mathbb{P}\Big( \Big\{ \sup _{f\in \Lambda:\left\|\Delta_f\right\|_{2} \leq \widetilde\eta, J_{k}(f)\le5J_{k}(f^*))}J_{k}(f-f^*)\ge4J_{k}(f^*)\Big\}\cap \Omega\Big)
\end{align*}
where the second  inequality follows from the inequality   $J_{k}(f-f^*)\ge J_{k}(f)-J_{k}(f^*)$, and the third from {\bf{Step 2}}.

Now notice that $\Omega \subset \mathcal{E}_1 \cap \mathcal{E}_2 \cap \mathcal{E}_4$. From this, using the study conducted in
{\bf{Step 3}}, we obtain that 
\begin{align*}
    &\mathbb{P}\Big( \Big\{ \sup _{f\in \Lambda:\left\|\Delta_f\right\|_{2} \leq \widetilde\eta, J_{k}(f)\le5J_{k}(f^*))}J_{k}(f-f^*)\ge4J_{k}(f^*)\Big\}\cap \Omega\Big)
    \\
= &
    \mathbb{P}\Big( \Big\{ \sup _{f\in \Lambda:\left\|\Delta_f\right\|_{2} \leq \widetilde\eta, J_{k}(f)\le5J_{k}(f^*),\vert\vert \Delta_f\vert\vert_{\infty}\le L_n}J_{k}(f-f^*)\ge4J_{k}(f^*))\Big\}\cap \Omega\Big).
\end{align*}
However, noticing that if $\Omega$ holds then from (\ref{eqn:700}), for any $f\in \Lambda$, it holds that 
$$
J_{k}(f) \leq \frac{4(T_1(f)+T_2(f))}{c\lambda}+J_{k}(f^*) \,+\,  \frac{64 C_5}{c^2 n  \lambda }  \,+\,\frac{2\eta_1^2}{c \lambda}+\frac{L_n^2\eta_2^2}{2\lambda}  .
$$
Hence, 
\begin{equation}
    \label{eqn:main_eq4}
    J_k(f-f^*) \,\leq\,  J_k(f)  +   J_k(f^*) \,\leq\, \frac{4(T_1(f)+T_2(f))}{c\lambda}+2J_{k}(f^*) \,+\,  \frac{64 C_5}{c^2 n  \lambda }  \,+\,\frac{2\eta_1^2}{c \lambda}+\frac{L_n^2\eta_2^2}{2\lambda}. 
\end{equation}
Therefore, from our choice of $\lambda$ in (\ref{eqn:lamb}), 
\begin{align*}
    &\mathbb{P}\Big( \Big\{ \sup _{f\in \Lambda:\left\|\Delta_f\right\|_{2} \leq \widetilde\eta, J_{k}(f)\le5J_{k}(f^*))}J_{k}(f-f^*)\ge4J_{k}(f^*)\Big\}\cap \Omega\Big)
    \\
\leq  &
    \mathbb{P}\Big( \Big\{ \sup _{f\in \Lambda:\left\|\Delta_f\right\|_{2} \leq \widetilde\eta, J_{k}(f)\le5J_{k}(f^*),\vert\vert \Delta_f\vert\vert_{\infty}\le L_n}  T_1(f) + T_2(f)   \ge  \frac{c}{4}\left[\frac{64 C_5}{c^2 n   }  \,+\,\frac{2\eta_1^2}{c }+\frac{L_n^2\eta_2^2}{2}\right]  \Big\}\cap \Omega\Big)\\
    \leq  &
    \mathbb{P}\Big( \Big\{ \sup _{f\,:\,\Delta_f \in \mathcal{F}, \left\|\Delta_f\right\|_{2} \leq \widetilde{\eta}}  T_1(f) + T_2(f)   \ge  \frac{c}{4}\left[\frac{64 C_5}{c^2 n   }  \,+\,\frac{2\eta_1^2}{c }+\frac{L_n^2\eta_2^2}{2}\right]  \Big\}\cap \Omega\Big)\\
    \leq & B_1+B_2,
\end{align*}
where 
\[
B_j \,=\,     \mathbb{P}\Big( \Big\{ \sup _{f\,:\,\Delta_f \in \mathcal{F}, \left\|\Delta_f\right\|_{2} \leq \widetilde{\eta}}  T_j(f)  \ge  \frac{c}{8}\left[\frac{64 C_5}{c^2 n   }  \,+\,\frac{2\eta_1^2}{c }+\frac{L_n^2\eta_2^2}{2}\right]  \Big\}\cap \Omega\Big)
\]
Then, as in the proof of Theorem \ref{thm:main tv} where we provided upper bounds for the corresponding $A_1$ and $A_2$, we can obtain that 
\begin{equation}
    \label{eqn:aj}
    B_j \,\leq\, \frac{\varepsilon}{8}
\end{equation}
for $j=1,2$ for $\eta_1$ and $\eta_2$ as in (\ref{eqn:choices}) by making $C^{(1)}$ and $C^{(2)}$ large enough.

{\bf{Step 7}.} 

Finally, as in the proof of Theorem \ref{thm:main tv}, we can arrive at  $\mathbb{P}(\Omega^c) \leq\varepsilon/3 $. Then,  from (\ref{zero-bound-thm2}), and {\bf{Steps 5 and 6}},  and as in {\bf{Step 9}} of the proof of Theorem \ref{thm:main tv},  it follows that  
\[
 \mathbb{P}\Big(\left\{\|  f_{k,V_n}-f^* \|_2^2  \geq \tilde{C}\widetilde{\eta}^2\right\}\Big)  \,\leq\, \varepsilon,
\]
for some constant $\tilde{C}>0$ and the claim follows.

\end{proof}
\newpage

\section{Proof of Theorem \ref{Cest-d>1}}\label{proofT3}
\begin{proof}
The proof of Theorem 3 is divided into two main steps. Throughout the proof we make use of the abuse of notation for vectors in $\mathbb{R}^{\sum_{i=1}^nm_i}$ as explained in Section \ref{sec:notation} in Chapter 1.
\\
{\bf{Step A.}} In this step the assertion of Equation (\ref{eqn:e41}) in Chapter 2 is shown. First 
let $\varepsilon>0.$ By Assumption 1{\bf{f}} we have that
\begin{align*}
    \frac{1}{cnm}\sum_{i=1}^n \sum_{j=1}^{m_i}\left((\widehat{\theta}_{V_n})_{i, j}-\theta_{i, j}^*\right)^2\ge&\frac{1}{n}\sum_{i=1}^n \frac{1}{m_i}\sum_{j=1}^{m_i}\left((\widehat{\theta}_{V_n})_{i, j}-\theta_{i, j}^*\right)^2.
\end{align*}
It suffices to show that for some sufficiently large constant $C_{ d}$ only depending on  $d$, it holds that
$$
\mathbb{P}\left(\left\|\widehat{\theta}_{V_n}-\theta^*\right\|_2^2>C_{ d}\left\{m K^{2}+ K^{2} \log^3 (n m)+K \log ^{\frac{14}{4}}(n m)+ K^2 \log^{\frac{1}{2}} (n m) V_n\right\}\right) \leq \varepsilon,
$$
where 
$
\left\|\widehat{\theta}_{V_n}-\theta^*\right\|_2^2=\sum_{i=1}^n \sum_{j=1}^{m_i}\left((\widehat{\theta}_{V_n})_{i, j}-\theta_{i, j}^*\right)^2 .
$
Let
\begin{equation}
\label{eta-def}
\eta^2=C_{ d}\left\{m K^{2}+ K^{2} \log^3 (n m)+K \log ^{\frac{14}{4}}(n m)+ K^2 \log^{\frac{1}{2}} (n m) V_n\right\}.
\end{equation}\break
{\bf{Step 0A.}}
To proceed denote 
$$
\Theta=\left\{\theta \in \mathbb{R}^{\sum_{i=1}^n m_i}: \theta=t\widehat{\theta}_{V_n}+(1-t) \theta^*, t \in[0,1]\right\},
$$
and for any $\theta_1,\theta_2\in\mathbb{R}^{\sum_{i=1}^nm_i}$ consider
$\vert\vert \theta_1-\theta_2\vert\vert_{nm}^2=\frac{1}{n}\sum_{i=1}^n\frac{1}{m_i}\sum_{j=1}^{m_i}((\theta_1)_{i,j}-(\theta_2)_{i,j})^2.$
Observe that any $\theta=t\widehat{\theta}_{V_n}+(1-t)\theta^*\in \Theta$ satisfies,
\begin{equation}
    \label{aux-eq-T3-Vn-prop}
    \vert\vert \nabla_{G_K}\theta\vert\vert_1\le t\vert\vert \nabla_{G_K}\widehat{\theta}_{V_n}\vert\vert_1+ (1-t)\vert\vert \nabla_{G_K}\theta^*\vert\vert_1\le V_n,
\end{equation}
where the last inequality is followed by the constraint assumption. Using Inequality (\ref{aux-eq-T3-Vn-prop}) and the constraint assumption,
\begin{equation}
    \label{aux-eq-T3-Vn-prop-1}
    \vert\vert \nabla_{G_K}(\theta-\theta^*)\vert\vert_1\le \vert\vert \nabla_{G_K}\theta\vert\vert_1+ \vert\vert \nabla_{G_K}\theta^*\vert\vert_1\le2V_n.
\end{equation}
By convexity of the constrain, the objective functions and the minimizer property, for any $\theta \in \Theta$,
\begin{align}
\label{aux-eq-T3-0}
    \vert\vert y_{i,j}-\theta\vert\vert_{nm}^2\le\vert\vert y_{i,j}-\theta^*\vert\vert_{nm}^2.
\end{align}
Expanding in Inequality (\ref{aux-eq-T3-0}) leads to
\begin{align}
\label{aux-eq-T3-3}
\frac{1}{2}\left\|\theta-\theta^*\right\|_{nm}^2 \leq& \frac{1}{n}\sum_{i=1}^n \frac{1}{m_i}\sum_{j=1}^{m_i}\left(\theta_{i, j}-\theta_{i, j}^*\right) \epsilon_{i,j}+\frac{1}{n}\sum_{i=1}^n \frac{1}{m_i}\sum_{j=1}^{m_i}\left(\theta_{i, j}-\theta_{i, j}^*\right) \delta_i\left(x_{i,j}\right).
\end{align}
Using Assumption 1{\bf{f}} we have that
\begin{equation}
\label{aux-eq-T3-1}
    \frac{1}{Cnm}\left\|\theta-\theta^*\right\|_{2}^2\le\left\|\theta-\theta^*\right\|_{nm}^2,
\end{equation}
where $C$ is positive constants from Assumption 1{\bf{f}}. From Inequalities (\ref{aux-eq-T3-3}) and (\ref{aux-eq-T3-1})
\begin{equation}
    \label{aux-eq-T3-4}
     \frac{1}{2}\left\|\theta-\theta^*\right\|_{2}^2 \leq\sum_{i=1}^n \sum_{j=1}^{m_i}\frac{Cm}{m_i}\left(\theta_{i, j}-\theta_{i, j}^*\right) \epsilon_{i,j}+\sum_{i=1}^n \sum_{j=1}^{m_i} \frac{Cm}{m_i}\left(\theta_{i, j}-\theta_{i, j}^*\right) \delta_i\left(x_{i,j}\right)
\end{equation}
From now along the proof we define 
\begin{equation}
\label{Tildevariables}
    \widetilde{\epsilon}_{i,j}=\frac{Cm}{m_i}\epsilon_{i,j} \ \text{and}\  \widetilde{\delta}_{i}=\frac{Cm}{m_i}\delta_{i}.
\end{equation}
 Some observations concerning these random variables are presented. First, by Assumption 1{\bf{d}} and {\bf{f}} we have that $\widetilde{\epsilon}_{i,j}$ are sub-Gaussians of parameter $\frac{C^2}{c^2}\varpi_\epsilon^2$. Similarly using Assumption 1{\bf{d}} and {\bf{f}} we have that $\widetilde{\delta}_{i}$ are sub-Gaussians of parameter $\frac{C^2}{c^2}\varpi_\delta^2$.
To conclude {\bf{Step 0A}} observe the following.
If $\left\|\widehat{\theta}_{V_n}-\theta^*\right\|_2>\eta$, then $\widetilde{\theta}=\frac{\eta}{\left\|\widehat{\theta}_{V_n}-\theta^*\right\|_2}\widehat{\theta}_{V_n}+\Big(1-\frac{\eta}{\left\|\widehat{\theta}_{V_n}-\theta^*\right\|_2}\Big)\theta^* \in \Theta$ satisfies
$$
\left\|\widetilde{\theta}-\theta^*\right\|_2=\eta
.$$
Therefore,
\begin{align*}
    \frac{\eta^2}{2}\le\sum_{i=1}^n \sum_{j=1}^{m_i}\left(\widetilde{\theta}_{i, j}-\theta_{i, j}^*\right) \widetilde\epsilon_{i,j}+\sum_{i=1}^n \sum_{j=1}^{m_i}\left(\widetilde{\theta}_{i, j}-\theta_{i, j}^*\right) \widetilde\delta_i\left(x_{i,j}\right),
\end{align*}
and in consequence, the event $\left\{\left\|\widehat{\theta}_{V_n}-\theta^*\right\|_2>\eta\right\},$ satisfies
{\footnotesize{\begin{align*}
&\left\{\left\|\widehat{\theta}_{V_n}-\theta^*\right\|_2>\eta\right\} \\
& \subset\left\{\exists \widetilde{\theta} \in \Theta,\left\|\widetilde{\theta}-\theta^*\right\|_2=\eta, \sum_{i=1}^n \sum_{j=1}^{m_i}\left(\widetilde{\theta}_{i, j}-\theta_{i, j}^*\right) \widetilde\epsilon_{i,j}+\sum_{i=1}^n \sum_{j=1}^{m_i}\left(\widetilde{\theta}_{i, j}-\theta_{i, j}^*\right) \widetilde\delta_i\left(x_{i,j}\right) \geq \frac{\eta^2}{2}\right\} \\
& \subset\left\{\sup _{\|\Delta\|_2 \leq \eta, \vert\vert \nabla_{G_K}(\Delta)\vert\vert_1\le \vert\vert \nabla_{G_K}\theta\vert\vert_1+ \vert\vert \nabla_{G_K}\theta^*\vert\vert_1\le2V_n} \sum_{i=1}^n \sum_{j=1}^{m_i} \Delta_{i,j} \widetilde\epsilon_{i,j}+\sum_{i=1}^n \sum_{j=1}^{m_i} \Delta_{i,j} \widetilde\delta_i\left(x_{i,j}\right) \geq \frac{\eta^2}{2}\right\} ,
\end{align*}}}
where $\Delta=\theta-\theta^*.$ The first contention is followed by Inequality (\ref{aux-eq-T3-4}) and Equation (\ref{Tildevariables}). The second one is due to Inequality (\ref{aux-eq-T3-Vn-prop-1}).
It follows that,
{\footnotesize{\begin{align}
\label{theequation-Thm3}
 &\mathbb{P}\left\{\left\|\widehat{\theta}_{V_n}-\theta^*\right\|_2>\eta\right\} \nonumber
 \\
\leq& \mathbb{P}\left\{\sup _{\|\Delta\|_2 \leq \eta,\vert\vert \nabla_{G_K}(\Delta)\vert\vert_1\le \vert\vert \nabla_{G_K}\theta\vert\vert_1+ \vert\vert \nabla_{G_K}\theta^*\vert\vert_1\le2V_n} \sum_{i=1}^n \sum_{j=1}^{m_i} \Delta_{i,j} \widetilde\epsilon_{i,j}+\sum_{i=1}^n \sum_{j=1}^{m_i} \Delta_{i,j} \widetilde\delta_i\left(x_{i,j}\right) \geq \frac{\eta^2}{2}\right\}\nonumber \\
 \leq& \mathbb{P}\left\{\sup _{\|\Delta\|_2 \leq \eta,\vert\vert \nabla_{G_K}(\Delta)\vert\vert_1\le \vert\vert \nabla_{G_K}\theta\vert\vert_1+ \vert\vert \nabla_{G_K}\theta^*\vert\vert_1\le2V_n} \sum_{i=1}^n \sum_{j=1}^{m_i} \Delta_{i,j} \widetilde\epsilon_{i,j} \geq \frac{\eta^2}{4}\right\} \nonumber\\
+&\mathbb{P}\left\{\sup _{\|\Delta\|_2 \leq \eta,\vert\vert \nabla_{G_K}(\Delta)\vert\vert_1\le2V_n} \sum_{i=1}^n \sum_{j=1}^{m_i} \Delta_{i,j}\widetilde \delta_i\left(x_{i,j}\right) \geq \frac{\eta^2}{4}\right\} \text {. }
\end{align}}}
{\bf{Step 1A.}} In this step the term $$\mathbb{P}\left\{\sup _{\|\Delta\|_2 \leq \eta,\vert\vert \nabla_{G_K}(\Delta)\vert\vert_1\le \vert\vert \nabla_{G_K}\theta\vert\vert_1+ \vert\vert \nabla_{G_K}\theta^*\vert\vert_1\le2V_n} \sum_{i=1}^n \sum_{j=1}^{m_i} \Delta_{i,j} \widetilde\epsilon_{i,j} \geq \frac{\eta^2}{4}\right\},$$ in Inequality (\ref{theequation-Thm3}) is analyzed.
From Lemma \ref{lemma11-oscar} and the observation after Equation (\ref{Tildevariables}), with a probability of at least 
{\footnotesize{$$1-2 \frac{1}{\sqrt{\log(nm)}}-\frac{4}{C_{\beta}}C\log(n)nm\exp(-\frac{\widetilde{C}}{24}\frac{K}{\log(n)})-3\frac{1}{n}-C_2\log(n)\frac{nm}{K}\exp\Big(-\frac{C_{\beta}}{48} \widetilde{b}_2\frac{K}{\log(n)}\Big)-\frac{1}{nm},$$}} it is satisfied that
\begin{align*}
& \sup _{\|\Delta\|_2 \leq \eta,\left\|\nabla_{G_K} \Delta\right\|_1 \leq 2 V_n} \sum_{i=1}^n \sum_{j=1}^{m_i} \Delta_{i,j} \widetilde\epsilon_{i,j} \leq C_3\varpi_\epsilon K\log^{\frac{1}{4}}(nm)\eta+ C_4\varpi_\epsilon K \log (n m)  V_n \leq \frac{\eta^2}{ 4},
\end{align*}
where $C_2>0$ is a constant depending on $d, C_\beta$ and $v_{\max }$. Moreover, $\widetilde{b}_2>0$ is a constant depending on $d, v_{\min }$, and $v_{\max }$. Furthermore, $\widetilde{C}=\frac{C_\beta c_1 c}{c_2 C}$ with $c_1$ and $c_2$ positive constants. Here $v_{\max }, v_{\min }$, $C, C_\beta$ and $c$ are from Assumption 1. The positive constants $C_3$ and $C_4$ only depend on $d$ and the last inequality follows from the choice of $\eta$ in Equation \ref{eta-def} for sufficiently large $C_d$. Hence,
{{\begin{align}
    &\mathbb{P}\left\{\sup _{\|\Delta\|_2 \leq \eta,\left\|\nabla_{G_K} \Delta\right\|_1 \leq 2 V_n} \sum_{i=1}^n \sum_{j=1}^{m_i} \Delta_{i,j} \widetilde\epsilon_{i,j} \geq \frac{\eta^2}{4}\right\} \nonumber
    \\
    \leq &2 \frac{1}{\sqrt{\log(nm)}}+\frac{4}{C_{\beta}}C\log(n)nm\exp(-\frac{\widetilde{C}}{24}\frac{K}{\log(n)})\nonumber
    \\
    -&3\frac{1}{n}+C_2\log(n)\frac{nm}{K}\exp\Big(-\frac{C_{\beta}}{48} \widetilde{b}_2\frac{K}{\log(n)}\Big)+\frac{1}{nm}\nonumber
    \\
    <&\frac{\varepsilon}{2}.\label{eps-bound}
\end{align}}}
\\
\\
{\bf{Step 2A}}. In this step the term $$\mathbb{P}\left\{\sup _{\|\Delta\|_2 \leq \eta,\vert\vert \nabla_{G_K}(\Delta)\vert\vert_1\le2V_n} \sum_{i=1}^n \sum_{j=1}^{m_i} \Delta_{i,j} \widetilde\delta_i\left(x_{i,j}\right) \geq \frac{\eta^2}{4}\right\},$$ in Inequality (\ref{theequation-Thm3}) is analyzed. To this end, we consider $\left\{\mathcal{I}_r\right\}_{r=1}^{N^d}$ the collection of the cells in the lattice graph $G_{\text {lat }}$ as explained in Section \ref{Lat-section}. The number of the nodes of this graph is $N^d$ where 
\begin{equation}
\label{choiceN-T3}
N=\left\lceil\frac{3 \sqrt{d}  (nm)^{1 / d}}{ aK^{1 / d}}\right\rceil,
\end{equation}
where  $a=1 /\left(  4Cc_2c_{2,d}v_{\text{max} }\right)^{1 / d}$ with $c_{2,d}>0$ is a constant depending on $d.$ The remaining constants found in the definition of 
$a$ were initially introduced in \textbf{Step 1A}.
For any $\Delta \in \mathbb{R}^{\sum_{i=1}^n m_i}$ such that $\|\Delta\|_2 \leq \eta,\vert\vert \nabla_{G_K}(\Delta)\vert\vert_1\le 2V_n,$ the vector $\Delta^I \in \mathbb{R}^{N^d}$, defined in Definition \ref{def-1-E}, has a unique value in each of the cell $\mathcal{I}_r$ in $G_{\text {lat }}$. Denote $\widetilde{\Delta}:[0,1]^d \rightarrow \mathbb{R}$ to be such that
\begin{equation}
\label{Deltatildefunct-def}
\widetilde{\Delta}(x)=\Delta_r^I \text { when } x \in \mathcal{I}_r.
\end{equation}
For convenience, denote
$$
\widetilde{\Delta}\left(\mathcal{I}_r\right)=\widetilde{\Delta}(x)=\Delta_r^I \text { when } x \in \mathcal{I}_r.
$$
Thus
$$
\widetilde{\Delta}\left(x_{i,j}\right)=\Delta_r^I=\left(\Delta_I\right)_{ i, j } \text { when } x_{i,j} \in \mathcal{I}_r,
$$
where the definition of $\Delta_I$ can be found in Definition \ref{def-2-E}.
Then for any $\Delta \in \mathbb{R}^{\sum_{i=1}^n m_i}$ such that $\|\Delta\|_2 \leq \eta,\left\|\nabla_{G_K} \Delta\right\|_1 \leq 2 V_n$, it holds that
\begin{align}
\sum_{i=1}^n \sum_{j=1}^{m_i} \Delta_{i,j} \widetilde\delta_i\left(x_{i,j}\right) & =\sum_{i=1}^n \sum_{j=1}^{m_i}\left\{\Delta_{i,j} \widetilde\delta_i\left(x_{i,j}\right)-\left\langle\widetilde{\Delta}, \widetilde\delta_i\right\rangle_{2}\right\}+ \sum_{i=1}^nm_i\left\langle\widetilde{\Delta}, \widetilde\delta_i\right\rangle_{2}\nonumber \\
& =\sum_{i=1}^n \sum_{j=1}^{m_i}\left\{\Delta_{i,j} \widetilde\delta_i\left(x_{i,j}\right)-\widetilde{\Delta}\left(x_{i,j}\right) \widetilde\delta_i\left(x_{i,j}\right)\right\} \nonumber\\
& +\sum_{i=1}^n \sum_{j=1}^{m_i}\left\{\widetilde{\Delta}\left(x_{i,j}\right) \widetilde\delta_i\left(x_{i,j}\right)-\left\langle\widetilde{\Delta}, \widetilde\delta_i\right\rangle_{2}\right\} \nonumber\\
& + \sum_{i=1}^nm_i\left\langle\widetilde{\Delta}, \widetilde\delta_i\right\rangle_{2}.\nonumber
\end{align}
In consequence,
\begin{align}
    &\mathbb{P}\left\{\sup _{\|\Delta\|_2 \leq \eta,\left\|\nabla_{G_K} \Delta\right\|_1 \leq 2 V_n} \sum_{i=1}^n \sum_{j=1}^{m_i} \Delta_{i,j} \widetilde\delta_i\left(x_{i,j}\right) \geq \frac{\eta^2}{4}\right\} \nonumber
    \\
    \le&\mathbb{P}\Big(\sup _{\|\Delta\|_2 \leq \eta,\left\|\nabla_{G_K} \Delta\right\|_1 \leq 2 V_n}\sum_{i=1}^n \sum_{j=1}^{m_i}\left\{\Delta_{i,j} \widetilde\delta_i\left(x_{i,j}\right)-\widetilde{\Delta}\left(x_{i,j}\right) \widetilde\delta_i\left(x_{i,j}\right)\right\} >\frac{\eta^2}{12}\Big)\label{delta-1}
    \\
    +&\mathbb{P}\Big(\sup _{\|\Delta\|_2 \leq \eta,\left\|\nabla_{G_K} \Delta\right\|_1 \leq 2 V_n}\sum_{i=1}^n \sum_{j=1}^{m_i}\left\{\widetilde{\Delta}\left(x_{i,j}\right) \widetilde\delta_i\left(x_{i,j}\right)-\left\langle\widetilde{\Delta}, \widetilde\delta_i\right\rangle_{2}\right\}>\frac{\eta^2}{12}\Big)\label{delta-2}
    \\
    +&\mathbb{P}\Big(\sup _{\|\Delta\|_2 \leq \eta,\left\|\nabla_{G_K} \Delta\right\|_1 \leq 2 V_n}\sum_{i=1}^nm_i\left\langle\widetilde{\Delta}, \widetilde\delta_i\right\rangle_{2}>\frac{\eta^2}{12}\Big)\label{delta-3}.
\end{align}
\\
{\bf{Step 2A-a.}} In this step we bound the term $$\mathbb{P}\Big(\sup _{\|\Delta\|_2 \leq \eta,\left\|\nabla_{G_K} \Delta\right\|_1 \leq 2 V_n}\sum_{i=1}^nm_i\left\langle\widetilde{\Delta}, \widetilde\delta_i\right\rangle_{2}>\frac{\eta^2}{12}\Big)$$ in Equation
(\ref{delta-3}). Observe that with probability at least $1-\frac{1}{\log(n)}$, for a positive constant $\widetilde{C}$, it holds that
\begin{align*}
\sum_{i=1}^nm_i\left\langle\widetilde{\Delta},\widetilde \delta_i\right\rangle_{2} & =\sum_{i=1}^n\left\langle\widetilde{\Delta}, m_i\widetilde\delta_i\right\rangle_{2}=\sqrt{n}\left\langle\widetilde{\Delta}, \frac{1}{\sqrt{n}} \sum_{i=1}^n m_i\widetilde\delta_i\right\rangle_{2} \\
& \le\sqrt{n}\|\widetilde{\Delta}\|_{2}\left\|\frac{1}{\sqrt{n}} \sum_{i=1}^n m_i\widetilde\delta_i\right\|_{2} \\
& =\sqrt{n}\|\widetilde{\Delta}\|_{2}\left\|\frac{1}{\sqrt{n}} \sum_{i=1}^n Cm\delta_i\right\|_{2} \\
& \leq \widetilde{C}m \sqrt{n}\log^{1/2}(n)\|\widetilde{\Delta}\|_{2},
\end{align*}
where the last inequality follows from Lemma \ref{lemma7}. Moreover the first inequality is followed by Holder’s inequality while the third equality due to Equation (\ref{Tildevariables}).  Note that since $$
N=\left\lceil\frac{3 \sqrt{d}  (nm)^{1 / d}}{ aK^{1 / d}}\right\rceil,
$$
we get that
$$
\|\widetilde{\Delta}\|_{2}^2=\frac{1}{N^d} \sum_{r=1}^{N^d} \widetilde{\Delta}^2\left(\mathcal{I}_r\right)=\frac{1}{N^d} \sum_{r=1}^{N^d}\left(\Delta_r^I\right)^2 \leq C_5\frac{K}{n m}\|\Delta\|_2^2\le C_5\frac{K}{n m} \eta^2,
$$
where the first inequality follows from Lemma \ref{lemma8}, and $C_5>0$ is a positive constant. So with probability at least $1-\frac{1}{\log(n)}$,
$$
\sum_{i=1}^nm_i\left\langle\widetilde{\Delta}, \widetilde\delta_i\right\rangle_{2} \leq C_5 \log^{1/2}(n)\sqrt{K m} \eta.
$$
Thus,
\begin{align}
\label{T3Cons-5}
    &\mathbb{P}\Big(\sup _{\|\Delta\|_2 \leq \eta,\left\|\nabla_{G_K} \Delta\right\|_1 \leq 2 V_n}\sum_{i=1}^nm_i\left\langle\widetilde{\Delta}, \widetilde\delta_i\right\rangle_{2}>\frac{\eta^2}{12}\Big)\nonumber
    \\
    \le& \mathbb{P}\Big(\sup _{\|\Delta\|_2 \leq \eta,\left\|\nabla_{G_K} \Delta\right\|_1 \leq 2 V_n}\sum_{i=1}^nm_i\left\langle\widetilde{\Delta}, \widetilde\delta_i\right\rangle_{2}>C_5 \log^{1/2}(n)\sqrt{K m} \eta\Big)\nonumber
    \\
    \le& \frac{1}{\log(n)}\le\frac{\varepsilon}{6}.
\end{align}
{\bf{Step 2A-b.}} For Equation (\ref{delta-1}), observe that for any $\Delta \in \mathbb{R}^{\sum_{i=1}^n m_i}$ such that $\|\Delta\|_2 \leq \eta,\left\|\nabla_{G_K} \Delta\right\|_1 \leq 2 V_n,$ we have that
\begin{align*}
& \sum_{i=1}^n \sum_{j=1}^{m_i}\left\{\Delta_{i,j} \widetilde\delta_i\left(x_{i,j}\right)-\widetilde{\Delta}\left(x_{i,j}\right) \widetilde\delta_i\left(x_{i,j}\right)\right\} \\
= & \sum_{i=1}^n \sum_{j=1}^{m_i}\left\{\Delta_{i,j}-\left(\Delta_I\right)_{ i, j }\right\} \widetilde\delta_i\left(x_{i,j}\right) \\
\leq & 2\max_{1\le i\le n,1\le j\le m_i}\vert\widetilde\delta_{i}(x_{i,j})\vert\left\|\nabla_{G_K} \Delta\right\|_1 \leq \frac{2C}{c}\varpi_\delta \sqrt{2\log(nm)} V_n,
\end{align*}
where the first inequality follows from  Lemma \ref{lemma8-Oscar}, and the second one by Assumption 1 together with the fact that,
the event $\Big\{ \max_{1\le i\le n,1\le j\le m_i}\vert \delta_i(x_{i,j})\vert\le\varpi_\delta \sqrt{2\log(nm)} \Big\}$ happens with probability at least $1-\frac{1}{nm}.$
Thus,
{{\begin{align}
\label{T3Cons-4}
&\mathbb{P}\Big(\sup _{\|\Delta\|_2 \leq \eta,\left\|\nabla_{G_K} \Delta\right\|_1 \leq 2 V_n}\sum_{i=1}^n \sum_{j=1}^{m_i}\left\{\Delta_{i,j} \widetilde\delta_i\left(x_{i,j}\right)-\widetilde{\Delta}\left(x_{i,j}\right) \widetilde\delta_i\left(x_{i,j}\right)\right\} >\frac{\eta^2}{12}\Big)\nonumber
    \\
\le&\mathbb{P}\Big(\sup _{\|\Delta\|_2 \leq \eta,\left\|\nabla_{G_K} \Delta\right\|_1 \leq 2 V_n}\sum_{i=1}^n \sum_{j=1}^{m_i}\left\{\Delta_{i,j} \widetilde\delta_i\left(x_{i,j}\right)-\widetilde{\Delta}\left(x_{i,j}\right) \widetilde\delta_i\left(x_{i,j}\right)\right\}\nonumber
\\
& >\frac{2C}{c}\varpi_\delta \sqrt{2\log(nm)} V_n\Big)\nonumber
    \\
    \le&
    \frac{1}{nm}\le \frac{\varepsilon}{6}.
\end{align}}}
\\
\\
{\bf{Step 2A-c}}. In this step we bound the term $$\mathbb{P}\Big(\sup _{\|\Delta\|_2 \leq \eta,\left\|\nabla_{G_K} \Delta\right\|_1 \leq 2 V_n}\sum_{i=1}^n \sum_{j=1}^{m_i}\left\{\Delta_{i,j} \widetilde\delta_i\left(x_{i,j}\right)-\widetilde{\Delta}\left(x_{i,j}\right) \widetilde\delta_i\left(x_{i,j}\right)\right\} >\frac{\eta^2}{12}\Big),$$ in Equation (\ref{delta-2}). For this, we introduce some notation. First, let $\mathcal{F}_{\text {lat }}$ denote the class of functions $f:[0,1]^d \rightarrow \mathbb{R}$ such that $f$ takes constant value in each cell $\mathcal{I}_r$ of the lattice graph $G_{\text {lat }}$. Thus $f$ is a  piecewise constant function. Denote $v(f) \in \mathbb{R}^{N^d}$ to be such that
\begin{equation}
\label{f-Flat-def}
v(f)_r=f\left(\mathfrak{c}_r\right)=f\left(\mathcal{I}_r\right),
\end{equation}
where $\mathfrak{c}_r$ is the center of $\mathcal{I}_r$. For more details, we refer to Section \ref{Lat-section}. Let $D$ be the incident matrix of $G_{\text {lat }}$. Then $\|D(v(f))\|_1$ can be viewed as the discrete total variation of $f$ and $\|v(f)\|_2$ is the discrete $\mathcal{L}_2$ norm of $f$. Denote
$$
\mathcal{F}_{\text {lat }}(A, B)=\left\{f \in \mathcal{F}_{\text {lat }}:\|v(f)\|_2 \leq A,\|D(v(f))\|_1 \leq B\right\},
$$
where $\|v(f)\|_2$ is the Euclidean norm of the vector $v(f) \in \mathbb{R}^{N^d}$. Now we are ready to continue analyzing the quantity in (\ref{delta-2}). Observe that
for any $\Delta \in \mathbb{R}^{\sum_{i=1}^n m_i}$ such that $\|\Delta\|_2 \leq \eta,\left\|\nabla_{G_K} \Delta\right\|_1 \leq 2 V_n$, note that  the function $\widetilde{\Delta}$ defined in Equation (\ref{Deltatildefunct-def}) satisfies $v(\widetilde{\Delta})=\Delta^I \in \mathbb{R}^{N^d}$. Therefore,
$$
\|v(\widetilde{\Delta})\|_2^2=\left\|\Delta^I\right\|_2^2 \leq\|\Delta\|_2^2 \leq \eta^2,
$$
where the first inequality follows from Lemma \ref{lemma8}. In addition, it holds that
$$
\|D(v(\widetilde{\Delta}))\|_1=\left\|D \Delta^I\right\|_1 \leq\left\|\nabla_{G_K} \Delta\right\|_1 \leq 2 V_n.
$$
The first inequality follows from Lemma \ref{lemma8-Oscar}. To achieve this inequality we need the following two events to hold,
$$
\mathcal{E}_3=\left\{\max _{r=1,...,N^d }|I_r| \geq  \frac{1}{2} \widetilde{b}_2 K\right\}
$$
and $\Omega$ defined as ``If $x_{i,j} \in C(x_{i,j}^{\prime})$ and $x_{s,t} \in C(x_{s,t}^{\prime})$ for $x_{i,j}^{\prime}, x_{s,t}^{\prime} \in P_{\text{lat}}(N)$ with $\left\|x_{i,j}^{\prime}-x_{s,t}^{\prime}\right\|_2 \leq$ $N^{-1}$, then $x_{i,j}$ and $x_{s,t}$ are connected in the $K$-NN graph". By Lemma \ref{Omega-Oscar-Paper}, the events happen with probability at least,
$$1-\frac{4}{C_{\beta}}\log(n)N^d\exp\Big(-\frac{C_{\beta}}{48} \widetilde{b}_2\frac{K}{\log(n)}\Big)-\frac{1}{n},$$
and
$$1-\frac{4}{C_{\beta}}C\log(n)nm\exp(-\frac{\widetilde{C}}{24}\frac{K}{\log(n)})-\frac{1}{n},$$
Respectively. All the constants above were previously introduced in {\bf{Step 1A}}. Then, for any $\Delta \in \mathbb{R}^{\sum_{i=1}^n m_i}$ such that $\|\Delta\|_2 \leq \eta,\left\|\nabla_{G_K} \Delta\right\|_1 \leq 2 V_n,$ we have that
$
\widetilde{\Delta} \in \mathcal{F}_{\text {lat }}(\eta, 2 V_n),
$
under the event event $\mathcal{E}_3\cap\Omega.$
Now, we perform an analysis that combines what we have done in the proof of Theorem 1 and 
the proof of Lemma \ref{Omega-Oscar-Paper}. Specifically,  observe that the events $\Omega_x=\bigcap_{i=1}^{n}\Big\{x_{i,j}=x_{i,j}^* \forall j=1,...,m_i\Big\}$
and $\Omega_\delta=\bigcap_{i=1}^{n}\Big\{\delta_{i}=x_{i}^* \Big\}$ hold with probability at least $1-\frac{2}{\log(n)}$.
Here $L=\frac{2}{C_{\beta}}\log(n)$ is the choice of the size of each block, as in Section \ref{BetaMsection}. We also use the fact that $c_3\frac{n}{L}\le \vert \mathcal{J}_{e,l}\vert,\vert\mathcal{J}_{o,l}\vert\le c_2\frac{n}{L}$ for some positive constants $c_3$ and $c_2$. 
Then, we have that
{\footnotesize{\begin{align}
\label{T3Cons-0}
    &\mathbb{P}\Big(\sup _{\|\Delta\|_2 \leq \eta,\left\|\nabla_{G_K} \Delta\right\|_1 \leq 2 V_n}\sum_{i=1}^n \sum_{j=1}^{m_i}\left\{\widetilde{\Delta}\left(x_{i,j}\right) \widetilde\delta_i\left(x_{i,j}\right)-\left\langle\widetilde{\Delta}, \widetilde\delta_i\right\rangle_{2}\right\}>\frac{\eta^2}{12}\Big)\nonumber
    \\
    \le&\mathbb{P}\Big(\sup _{f \in \mathcal{F}_{\text {lat }}(\eta, 2 V_n)} \sum_{i=1}^n \sum_{j=1}^{m_i}\left\{f\left(x_{i,j}\right) \widetilde\delta_i\left(x_{i,j}\right)-\left\langle f, \widetilde\delta_i\right\rangle_{2}\right\}>\frac{\eta^2}{12}\Big)+\frac{\varepsilon}{24}\nonumber
    \\
    \le& \mathbb{P}\Big(\sup _{f \in \mathcal{F}_{\text {lat }}(\eta, 2 V_n)} \sum_{i=1}^n \sum_{j=1}^{m_i}\left\{f\left(x_{i,j}^*\right) \widetilde\delta_i^*\left(x_{i,j}\right)-\left\langle f, \widetilde\delta_i^*\right\rangle_{2}\right\}>\frac{\eta^2}{12}\Big)+\frac{\varepsilon}{24}+\frac{\varepsilon}{24}\nonumber
    \\
    =& A_1+\frac{\varepsilon}{12},
\end{align}}}
where,
\begin{equation}
    \label{tildevariables1}
    \widetilde\delta_i^*=\frac{Cm}{m_i}\delta_i^*,\ \text{for} \ i\in[n].
\end{equation}
The first inequality is followed by using that $\Omega\cap\mathcal{E}_3$ holds with high probability while the second one due to the event $\Omega_x\cap\Omega_\delta.$
We proceed to bound the term $A_1$. Using a union bound argument,
\begin{align}
\label{last-step-t3-aux1}
    &\mathbb{P}\Big(\sup _{f \in \mathcal{F}_{\text {lat }}(\eta, 2 V_n)} \sum_{i=1}^n \sum_{j=1}^{m_i}\left\{f\left(x_{i,j}^*\right) \widetilde\delta_i^*\left(x_{i,j}\right)-\left\langle f, \widetilde\delta_i^*\right\rangle_{2}\right\}>\frac{\eta^2}{12}\Big)\nonumber
    \\
    \le&
    \sum_{l=1}^L\mathbb{P}\Big(\sup _{f \in \mathcal{F}_{\text {lat }}(\eta, 2 V_n)} \sum_{i\in \mathcal{J}_{e,l}} \sum_{j=1}^{m_i}\left\{f\left(x_{i,j}^*\right) \widetilde\delta_i^*\left(x_{i,j}\right)-\left\langle f, \widetilde\delta_i^*\right\rangle_{2}\right\}>\frac{\eta^2}{24L}\Big)\nonumber
    \\
    +&\sum_{l=1}^L\mathbb{P}\Big(\sup _{f \in \mathcal{F}_{\text {lat }}(\eta, 2 V_n)} \sum_{i\in \mathcal{J}_{o,l}} \sum_{j=1}^{m_i}\left\{f\left(x_{i,j}^*\right) \widetilde\delta_i^*\left(x_{i,j}\right)-\left\langle f, \widetilde\delta_i^*\right\rangle_{2}\right\}>\frac{\eta^2}{24L}\Big),
\end{align}
By Markov's inequality we have that
\begin{align}
    \label{T3Cons-1}&\mathbb{P}\Big(\sup _{f \in \mathcal{F}_{\text {lat }}(\eta, 2 V_n)} \sum_{i=1}^n \sum_{j=1}^{m_i}\left\{f\left(x_{i,j}^*\right) \widetilde\delta_i^*\left(x_{i,j}\right)-\left\langle f, \widetilde\delta_i^*\right\rangle_{2}\right\}>\frac{\eta^2}{12}L\Big)\nonumber
    \\
    \le&
    \sum_{l=1}^L\frac{24L\mathbb{E}\Big(\sup _{f \in \mathcal{F}_{\text {lat }}(\eta, 2 V_n)} \sum_{i\in \mathcal{J}_{e,l}} \sum_{j=1}^{m_i}\left\{f(x_{i,j}^*) \widetilde\delta_i^*(x_{i,j}^*)-\left\langle f, \widetilde\delta_i^*\right\rangle_{2}\right\}\Big)}{\eta^2}\nonumber
    \\
    +&\sum_{l=1}^L\frac{24L\mathbb{E}\Big(\sup _{f \in \mathcal{F}_{\text {lat }}(\eta, 2 V_n)} \sum_{i\in \mathcal{J}_{o,l}} \sum_{j=1}^{m_i}\left\{f(x_{i,j}^*) \widetilde\delta_i^*(x_{i,j}^*)-\left\langle f, \widetilde\delta_i^*\right\rangle_{2}\right\}\Big)}{\eta^2}.
\end{align}
Therefore, it is enough to control the terms $$\mathbb{E}\Big(\sup _{f \in \mathcal{F}_{\text {lat }}(\eta, 2 V_n)} \sum_{i\in \mathcal{J}_{e,l}} \sum_{j=1}^{m_i}\left\{f(x_{i,j}^*) \widetilde\delta_i^*(x_{i,j}^*)-\left\langle f, \widetilde\delta_i^*\right\rangle_{2}\right\}\Big),$$ and using the same line of arguments the terms $$\mathbb{E}\Big(\sup _{f \in \mathcal{F}_{\text {lat }}(\eta, 2 V_n)} \sum_{i\in \mathcal{J}_{o,l}} \sum_{j=1}^{m_i}\left\{f(x_{i,j}^*) \widetilde\delta_i^*(x_{i,j}^*)-\left\langle f, \widetilde\delta_i^*\right\rangle_{2}\right\}\Big),$$ are bounded.
To this end observe that conditioning on $\left\{\delta_i^*\right\}_{i=1}^n$, the collection \break$\left\{\delta_i^*\left(x_{i,j}^*\right)\right\}_{i\in \mathcal{J}_{e,l}, 1\le j\le m_i}$ are independent. Using Equation (\ref{tildevariables1}), it follows that
\begin{align}
\label{Expectedvaluewithrespectx-t3}
&  \mathbb{E}_{x,\omega}\left(\sup _{f \in \mathcal{F}_{\text {lat }}(\eta, 2 V_n)} \sum_{i\in \mathcal{J}_{e,l}} \sum_{j=1}^{m_i}\left\{f\left(x_{i,j}^*\right) \widetilde\delta_i^*\left(x_{i,j}^*\right)-\left\langle f, \widetilde\delta_i\right\rangle_{2}\right\}\right)\nonumber \\
\leq & 2 \mathbb{E}_{x,\omega}\left(\sup _{f \in \mathcal{F}_{\text {lat }}(\eta, 2 V_n)} \sum_{i\in \mathcal{J}_{e,l}} \sum_{j=1}^{m_i} \omega_{i,j} f\left(x_{i,j}^*\right) \widetilde\delta_i^*\left(x_{i,j}^*\right)\right),
\end{align}
where the inequality follows from a symmetrization argument, as in Lemma \ref{lemma6} and Lemma \ref{lemma2}. Here, $\left\{\omega_{i,j}\right\}_{i\in \mathcal{J}_{e,l}, 1\le j\le m_i}$ are i.i.d. Rademacher random variables, independent of $$\left(\left\{x_{i,j}^*\right\}_{i\in \mathcal{J}_{e,l}, 1\le j\le m_i},\left\{\delta_{i}^*\right\}_{i\in \mathcal{J}_{e,l}}\right).$$ 
Now, consider the event $\mathcal{E}_0=\{\sup _{1 \leq r \leq N^d}\left|\left\{x_{i,j}^*\right\}_{i\in\mathcal{J}_{e,l}, 1\le j\le m_i} \cap \mathcal{I}_r\right|\le \frac{3}{2}  \widetilde{b}_1 K\}$, where $\widetilde{b}_1$ is a constant that only depends on $d.$
Observe that by the proof of Lemma \ref{Omega-Oscar-Paper},
$$\mathbb{P}(\mathcal{E}_0)\ge1-\frac{4}{C_{\beta}}\log(n)N^d\exp\Big(-\frac{C_{\beta}}{48} \widetilde{b}_2\frac{K}{\log(n)}\Big).$$
Then,
\begin{align}
\label{T3Cons-3-ad3}
&\mathbb{E}_{x,\omega}\left(\sup _{f \in \mathcal{F}_{\text {lat }}(\eta, 2 V_n)} \sum_{i\in \mathcal{J}_{e,l}} \sum_{j=1}^{m_i} \omega_{i,j} f\left(x_{i,j}^*\right) \widetilde\delta_i^*\left(x_{i,j}^*\right)\right) \nonumber\\
=& \mathbb{E}_{x,\omega}\left(\Big(\sup _{f \in \mathcal{F}_{\text {lat }}(\eta, 2 V_n)} \sum_{i\in \mathcal{J}_{e,l}} \sum_{j=1}^{m_i} \omega_{i,j} f\left(x_{i,j}^*\right) \widetilde\delta_i^*\left(x_{i,j}^*\right)\Big)\mathbf{1}_{\mathcal{E}_0}\right)\nonumber
\\
+&\mathbb{E}_{x,\omega}\left(\Big(\sup _{f \in \mathcal{F}_{\text {lat }}(\eta, 2 V_n)} \sum_{i\in \mathcal{J}_{e,l}} \sum_{j=1}^{m_i} \omega_{i,j} f\left(x_{i,j}^*\right) \widetilde\delta_i^*\left(x_{i,j}^*\right)\Big)\mathbf{1}_{\mathcal{E}_0^c}\right).
\end{align}
For the term $\mathbb{E}_{x,\omega}\left(\Big(\sup _{f \in \mathcal{F}_{\text {lat }}(\eta, 2 V_n)} \sum_{i\in \mathcal{J}_{e,l}} \sum_{j=1}^{m_i} \omega_{i,j} f(x_{i,j}^*) \widetilde\delta_i^*(x_{i,j}^*)\Big)\mathbf{1}_{\mathcal{E}_0^c}\right)$. Using Equation (\ref{f-Flat-def}) observe that any $f\in\mathcal{F}_{\text {lat }}(\eta, 2 V_n) $ satisfies $\vert\vert f\vert\vert_{\infty}\le \vert\vert v(f)\vert\vert_2\le \eta$. Therefore using the fact that the sum of the supremum is smaller than the supremum of the sum together with Cauchy Schwartz inequality, the choice of $L$, Assumption 1{\bf{c}} and {\bf{f}}, we get that
\begin{align}
\label{complement_E0}
&\mathbb{E}\left(\Big(\sup _{f \in \mathcal{F}_{\text {lat }}(\eta, 2 V_n)} \sum_{i\in \mathcal{J}_{e,l}} \sum_{j=1}^{m_i} \omega_{i,j} f\left(x_{i,j}^*\right) \widetilde\delta_i^*\left(x_{i,j}^*\right)\Big)\mathbf{1}_{\mathcal{E}_0^c}\right)\nonumber
\\
\le& \eta\frac{c_2CC_{\beta}nm}{2\log(n)} \frac{4}{C_{\beta}}\log(n)N^d\exp\Big(-\frac{C_{\beta}}{48} \widetilde{b}_2\frac{K}{\log(n)}\Big).
\end{align}
Next we analyze $\mathbb{E}_{x,\omega}\left(\Big(\sup _{f \in \mathcal{F}_{\text {lat }}(\eta, 2 V_n)} \sum_{i\in \mathcal{J}_{e,l}} \sum_{j=1}^{m_i} \omega_{i,j} f\left(x_{i,j}^*\right) \widetilde\delta_i^*\left(x_{i,j}^*\right)\Big)\mathbf{1}_{\mathcal{E}_0}\right).$
Let $\zeta_{i ,j}=\omega_{i,j} \widetilde\delta_i^*\left(x_{i,j}^*\right)$.  Then for any $f \in \mathcal{F}_{\text {lat }}(\eta, 2 V_n)$, under the event $\mathcal{E}_0,$
\begin{align}
\label{equation-final-T3-aux}
 &\sum_{i\in\mathcal{J}_{e,l}} \sum_{j=1}^{m_i} \omega_{i,j} f\left(x_{i,j}^*\right) \widetilde\delta_i^*\left(x_{i,j}\right)\nonumber
 \\
&=  \sum_{i=1}^n \sum_{j=1}^{m_i} \zeta_{i ,j} f\left(x_{i,j}^*\right) \nonumber\\
& =\sum_{r=1}^{N^d} \sum_{\left\{(i ,j) \in \{\mathcal{J}_{e,l}\}\times [m_i] : x_{i,j}^* \in \mathcal{I}_r\right\}} \zeta_{i ,j} f\left(\mathcal{I}_r\right) \nonumber\\
& =\frac{3}{2}  \widetilde{b}_1 K\Big(\max_{i\in\mathcal{J}_{e,l},1\le j\le m_i}\vert \widetilde\delta_i^*(x_{i,j}^*)\vert+1\Big)\sum_{l=1}^{N^d} f\left(\mathcal{I}_r\right) \breve{\zeta}_r\nonumber
\\
&=\frac{3}{2}  \widetilde{b}_1 K\Big(\max_{i\in\mathcal{J}_{e,l},1\le j\le m_i}\vert \widetilde\delta_i^*(x_{i,j}^*)\vert+1\Big)
\langle\breve{\zeta}, v(f)\rangle,
\end{align}
where $\breve{\zeta} \in \mathbb{R}^{N^d}$ has the form
$$
\breve{\zeta}_r=\Big(\frac{3}{2}  \widetilde{b}_1 K\Big)^{-1}\Big(\max_{i\in\mathcal{J}_{e,l},1\le j\le m_i}\vert \widetilde\delta_i^*(x_{i,j}^*)\vert+1\Big)^{-1}\sum_{\left\{(i ,j)\in \mathcal{J}_{e,l}\times[m_i]: x_{i,j}^* \in \mathcal{I}_r\right\}} \zeta_{i ,j}.
$$
The third equality is obtained by the definition of $f$ in Equation (\ref{f-Flat-def}). 
Observe that $\left\{\zeta_{i ,j}\right\}_{i\in \mathcal{J}_{e,l}, 1\le j\le m_i}$ are independent conditioning on
\break
$\left\{\delta_i^*\right\}_{i\in \mathcal{J}_{e,l}, 1\le j\le m_i},\left\{x_{i,j}^*\right\}_{i\in \mathcal{J}_{e,l}, 1\le j\le m_i}$. Therefore, using basic addition and scalar multiplication properties for independent sub-Gaussian random variables we have that $\left\{\breve{\zeta}_r\right\}_{r=1}^{N^d}$ is a collection of sub-Gaussian random variables with parameters bounded uniformly by $1$. 
Let $\Pi$ denote the projection matrix onto the span of $\mathbf{1}_{\mathbb{R}^{N^d}} \in \mathbb{R}^{N^d}$. Denote by $D^{\dagger}$  the pseudoinverse
of the incidence matrix $D$.  By Holder’s inequality and triangle inequality,
\begin{align}
\label{aux-thm3-eq1}
\langle\breve{\zeta}, v(f)\rangle & =\langle(\Pi+(I-\Pi)) \breve{\zeta}, v(f)\rangle \nonumber\\
& =\langle\Pi \breve{\zeta}, v(f)\rangle+\langle(I-\Pi) \breve{\zeta}, v(f)\rangle \nonumber\\
& \le\|\Pi \breve{\zeta}\|_2\|v(f)\|_2+\left\|\left(D^{\dagger}\right)^{\top} \breve{\zeta}\right\|_{\infty}\|D(v(f))\|_1.
\end{align}
 By the proof of Theorem 2 in \cite{hutter2016optimal}, with probability at least $1-\frac{1}{(nm)^2},$
$$
\left\|\left(D^{\dagger}\right)^{\top} \breve{\zeta}\right\|_{\infty} \leq  \rho \sqrt{2 \log (e N^d(nm)^2)},$$
and
$$
 \quad\|\Pi \breve{\zeta}\|_2 \leq 2  \sqrt{2 \log (e (nm)^2)},
$$
where
$\rho=\max _{1 \leq v \leq\left|E_{\text {lat }}\right|}\left\|D_{, v}^{\dagger}\right\|_2. $ Additionally, using Lemmas \ref{H-R-prop4} and \ref{H-R-prop6} (correspondingly Propositions 4 and 6 in \cite{hutter2016optimal}), we have that $\rho\le C(d)\sqrt{\log (N)},$ where $C(d)$ is a positive constant depending on $d.$
Using our choice of $N$ in Equation (\ref{choiceN-T3}), it is satisfied that
\begin{equation}
\label{H-R-T3-1}
\left\|\left(D^{\dagger}\right)^{\top} \breve{\zeta}\right\|_{\infty} \leq C(d) \sqrt{ \log(\frac{nm}{K})\log ( \frac{(nm)^3}{K} )} ),\end{equation}
and,
\begin{equation}
\label{H-R-T3-2} \quad\|\Pi \breve{\zeta}\|_2 \leq 2  \sqrt{ \log((nm)^2)}.
\end{equation}
with probability at least $1-\frac{1}{(nm)^2}.$
Let
\begin{align*}
\mathcal{E}_1=&\left\{ \frac{3}{2}  \widetilde{b}_1 K\Big(\max_{i\in\mathcal{J}_{e,l},1\le j\le m_i}\vert \widetilde\delta_i^*(x_{i,j}^*)\vert+1\Big)\left\|\left(D^{\dagger}\right)^{\top} \breve{\zeta}\right\|_{\infty} \right.
\\
\leq& \left.\frac{3}{2}  \widetilde{b}_1 K\Big(\max_{i\in\mathcal{J}_{e,l},1\le j\le m_i}\vert \widetilde\delta_i^*(x_{i,j}^*)\vert+1\Big)C(d) \sqrt{ \log(\frac{nm}{K})\log ( \frac{(nm)^3}{K} )} )\right\},
\end{align*}
and,
\begin{align*}
  \mathcal{E}_2=&\left\{\frac{3}{2}  \widetilde{b}_1 K\Big(\max_{i\in\mathcal{J}_{e,l},1\le j\le m_i}\vert \widetilde\delta_i^*(x_{i,j}^*)\vert+1\Big)\|\Pi \breve{\zeta}\|_2 \right.
 \\
 \leq& \left.2 \frac{3}{2}  \widetilde{b}_1 K\Big(\max_{i\in\mathcal{J}_{e,l},1\le j\le m_i}\vert \widetilde\delta_i^*(x_{i,j}^*)\vert+1\Big) \sqrt{ \log((nm)^2)}\right\}.
\end{align*}
By Inequalities (\ref{H-R-T3-1}) and (\ref{H-R-T3-2}) 
 these two events happen with probability at least $1-\frac{1}{(nm)^2}.$
Under events $\mathcal{E}_1$ and $\mathcal{E}_2$, using Inequality (\ref{equation-final-T3-aux}) and Inequality (\ref{aux-thm3-eq1}), uniformly for all $f \in \mathcal{F}_{\text {lat }}(\eta, 2 V_n)$, it holds that
\begin{align}
\label{Aux-T3-equation-Last}
    &\sup _{f \in \mathcal{F}_{\text {lat }}(\eta, 2 V_n)} \sum_{i=1}^n \sum_{j=1}^m \omega_{i,j} f\left(x_{i,j}\right) \widetilde\delta_i\left(x_{i,j}\right) \nonumber
    \\
    \leq& 2\eta X_{\widetilde{\delta}^*} \sqrt{ \log((nm)^2)}
    +2V_n X_{\widetilde{\delta}^*} C(d) \sqrt{ \log(\frac{nm}{K})\log ( \frac{(nm)^3}{K} )} ).
\end{align}
where we have denoted by 
\begin{equation}
\label{X-delta-aux}
    X_{\widetilde{\delta}^*}:=\frac{3}{2}  \widetilde{b}_1 K\Big(\max_{i\in\mathcal{J}_{e,l},1\le j\le m_i}\vert \widetilde\delta_i^*(x_{i,j}^*)\vert+1\Big)
\end{equation}
Hence, 
\begin{align}
\label{T3Cons-3}
& \mathbb{E}_\omega\left(\sup _{f \in \mathcal{F}_{\text {lat }}(\eta, 2 V_n)} \sum_{i\in \mathcal{J}_{e,l}} \sum_{j=1}^{m_i} \omega_{i,j} f\left(x_{i,j}^*\right) \widetilde\delta_i^*\left(x_{i,j}^*\right)\mathbf{1}_{\mathcal{E}_0}\right)\nonumber \\
=&  \mathbb{E}_\omega\left(\sup _{f \in \mathcal{F}_{\text {lat }}(\eta, 2 V_n)} \sum_{i\in \mathcal{J}_{e,l}} \sum_{j=1}^{m_i} \omega_{i,j} f\left(x_{i,j}^*\right) \widetilde\delta_i^*\left(x_{i,j}^*\right)\mathbf{1}_{\mathcal{E}_1\cap\mathcal{E}_2}\mathbf{1}_{\mathcal{E}_0}\right)\nonumber
\\
+&\mathbb{E}_\omega\left(\sup _{f \in \mathcal{F}_{\text {lat }}(\eta, 2 V_n)} \sum_{i\in \mathcal{J}_{e,l}} \sum_{j=1}^{m_i} \omega_{i,j} f\left(x_{i,j}^*\right) \widetilde\delta_i^*\left(x_{i,j}^*\right)\mathbf{1}_{\mathcal{E}_1^c\cup\mathcal{E}_2^c}\mathbf{1}_{\mathcal{E}_0}\right).
\end{align}
From Inequality (\ref{Aux-T3-equation-Last}) and Inequality (\ref{T3Cons-3}),
\begin{align}
\label{T3Cons-3-ad1}
& \mathbb{E}_\omega\left(\sup _{f \in \mathcal{F}_{\text {lat }}(\eta, 2 V_n)} \sum_{i\in \mathcal{J}_{e,l}} \sum_{j=1}^{m_i} \omega_{i,j} f\left(x_{i,j}^*\right) \widetilde\delta_i^*\left(x_{i,j}^*\right)\mathbf{1}_{\mathcal{E}_0}\right) \nonumber\\
\leq& \widetilde{C}(d) X_{\widetilde{\delta}^*}\Big(\eta  \sqrt{ \log (n m)}+V_n \sqrt{K} \log(n m)\Big)\nonumber
\\
+&\mathbb{E}_\omega\left(\sup _{f \in \mathcal{F}_{\text {lat }}(\eta, 2 V_n)} \sum_{i\in \mathcal{J}_{e,l}} \sum_{j=1}^{m_i} \omega_{i,j} f\left(x_{i,j}^*\right) \widetilde\delta_i^*\left(x_{i,j}^*\right)\mathbf{1}_{\mathcal{E}_1^c\cup\mathcal{E}_2^c}\mathbf{1}_{\mathcal{E}_0}\right).
\end{align}
Using Equation (\ref{f-Flat-def}) observe that any $f\in\mathcal{F}_{\text {lat }}(\eta, 2 V_n) $ satisfies $\vert\vert f\vert\vert_{\infty}\le \vert\vert v(f)\vert\vert_2\le \eta$. Therefore,
\begin{align*}
    &\mathbb{E}_\omega\left(\sup _{f \in \mathcal{F}_{\text {lat }}(\eta, 2 V_n)} \sum_{i\in \mathcal{J}_{e,l}} \sum_{j=1}^{m_i} \omega_{i,j} f\left(x_{i,j}^*\right) \widetilde\delta_i^*\left(x_{i,j}^*\right)\mathbf{1}_{\mathcal{E}_1^c\cup\mathcal{E}_2^c}\mathbf{1}_{\mathcal{E}_0}\right)
    \\
    \le& \frac{Cc_2C_{\beta}\eta \Big(\max_{i\in\mathcal{J}_{e,l},1\le j\le m_i}\vert \widetilde\delta_i^*(x_{i,j}^*)\vert\Big)}{nm\log(n)}.
\end{align*}
Here we have used Assumption 1{\bf{f}}, the choice of $L$ and the fact that $\mathcal{E}_1\cap\mathcal{E}_2$ holds with probability at least $1-\frac{1}{(nm)^2}$.
From the last display and Inequality (\ref{T3Cons-3-ad1})
\begin{align}
\label{T3Cons-3-ad2}
& \mathbb{E}_\omega\left(\sup _{f \in \mathcal{F}_{\text {lat }}(\eta, 2 V_n)} \sum_{i\in \mathcal{J}_{e,l}} \sum_{j=1}^{m_i} \omega_{i,j} f\left(x_{i,j}^*\right) \widetilde\delta_i^*\left(x_{i,j}^*\right)\mathbf{1}_{\mathcal{E}_0}\right) \nonumber\\
\leq&  \widetilde{C}(d) X_{\widetilde{\delta}^*}\Big(\eta  \sqrt{ \log (n m)}+V_n \sqrt{K} \log(n m)\Big)\nonumber
\\
+&
\frac{Cc_2C_{\beta}\eta \Big(\max_{i\in\mathcal{J}_{e,l},1\le j\le m_i}\vert \widetilde\delta_i^*(x_{i,j}^*)\vert\Big)}{nm\log(n)}.
\end{align}
Finally, we observe that $X_{\widetilde{\delta}^*}$ defined in Equation (\ref{X-delta-aux}), satisfies
$$
    X_{\widetilde{\delta}^*}:=\frac{3}{2}  \widetilde{b}_1 K\Big(\max_{i\in\mathcal{J}_{e,l},1\le j\le m_i}\vert \widetilde\delta_i^*(x_{i,j}^*)\vert+1\Big)\le \frac{3}{2}  \widetilde{b}_1 K\Big(\frac{C}{c}\max_{i\in\mathcal{J}_{e,l},1\le j\le m_i}\vert \delta_i^*(x_{i,j}^*)\vert+1\Big),
$$
by Assumption 1{\bf{f}} and Equation (\ref{tildevariables1}). Moreover, using Assumption 1{\bf{c}}, we obtain that $\mathbb{E}( X_{\widetilde{\delta}^*})\le \widetilde{\widetilde{C}}(d)K\sqrt{\log(nm)}$ with $\widetilde{\widetilde{C}}(d)>0$ a constant that depends on $d.$ Thus, by Inequalities (\ref{last-step-t3-aux1}), (\ref{T3Cons-1}), (\ref{Expectedvaluewithrespectx-t3}), (\ref{T3Cons-3-ad3}), (\ref{complement_E0}) and (\ref{T3Cons-3-ad2})
{{\begin{align}
    \label{T3-f1}
    &\mathbb{P}\Big(\sup _{f \in \mathcal{F}_{\text {lat }}(\eta, 2 V_n)} \sum_{i=1}^n \sum_{j=1}^{m_i}\left\{f\left(x_{i,j}^*\right) \widetilde\delta_i^*\left(x_{i,j}\right)-\left\langle f, \widetilde\delta_i^*\right\rangle_{2}\right\}>\frac{\eta^2}{12}L\Big)\nonumber
    \\
    \le&
    \frac{24 \widetilde{C}(d)\widetilde{\widetilde{C}}(d)\log^2(n)K\sqrt{\log(nm)} \Big(\eta  \sqrt{ \log (n m)}+V_n \sqrt{K} \log(n m)\Big)}{\eta^2}\nonumber
    \\
    +&
    \frac{24C_6\varpi_{\delta}\eta\log^2(n)\sqrt{\log(nm)}}{\eta^2nm\log(n)}\nonumber
    \\
    +&\frac{\eta}{\eta^2}\frac{c_2CC_{\beta}nm}{2\log(n)} \frac{4}{C_{\beta}}\log(n)N^d\exp\Big(-\frac{C_{\beta}}{48} \widetilde{b}_2\frac{K}{\log(n)}\Big).
\end{align}}}
Here $C_6>0$ is a constant. By our choice of $\eta$ in equation (\ref{eta-def}), the choice of $N$ in Equation (\ref{choiceN-T3}) and Inequality (\ref{T3-f1}), we conclude that
\begin{equation}
    \label{T3Final2}
    \mathbb{P}\Big(\sup _{f \in \mathcal{F}_{\text {lat }}(\eta, 2 V_n)} \sum_{i=1}^n \sum_{j=1}^{m_i}\left\{f\left(x_{i,j}^*\right) \widetilde\delta_i^*\left(x_{i,j}\right)-\left\langle f, \widetilde{\delta}_i^*\right\rangle_{2}\right\}>\frac{\eta^2}{12}L\Big)
    \le \frac{\varepsilon}{12}.
\end{equation}
Using Inequality (\ref{T3Cons-0}) and Inequality (\ref{T3Final2}),
\begin{align}
\label{T3-f3}
    &\mathbb{P}\Big(\sup _{\|\Delta\|_2 \leq \eta,\left\|\nabla_{G_K} \Delta\right\|_1 \leq 2 V_n}\sum_{i=1}^n \sum_{j=1}^{m_i}\left\{\widetilde{\Delta}\left(x_{i,j}\right) \widetilde\delta_i\left(x_{i,j}\right)-\left\langle\widetilde{\Delta}, \widetilde\delta_i\right\rangle_{2}\right\}>\eta^2/12\Big)\nonumber
    \\
    \le& \frac{\varepsilon}{12}+\frac{\varepsilon}{12}=\frac{\varepsilon}{6}.
\end{align}
{\bf{Step 2A-d.}} We conclude the {\bf{Step 2A}}. From Equation (\ref{T3Cons-5}), (\ref{T3Cons-4}) and (\ref{T3-f3}) we conclude that,
\begin{align}
    \label{delta-bound}
    \mathbb{P}\left\{\sup _{\|\Delta\|_2 \leq \eta,\left\|\nabla_{G_K} \Delta\right\|_1 \leq 2 V_n} \sum_{i=1}^n \sum_{j=1}^m \Delta_{i,j} \widetilde\delta_i\left(x_{i,j}\right) \geq \frac{\eta^2}{4}\right\} \leq \frac{\varepsilon}{2} .
\end{align}
\\
\\
{\bf{Step 3A}}
The desired result follows from Equation (\ref{eps-bound}) and Equation (\ref{delta-bound}).
\\
\\

{\bf{Step B.}} Now the assertion of Equation (\ref{eqn:e5}) in Chapter 2 is demonstrated. Through this step, notation from Section \ref{Lat-section} is used.
We start noticing that,
\begin{align*}
& \vert\vert f^*- \widehat{f}_{V_n}\vert\vert_2^2=\int\left\{f^*(x)- \widehat{f}_{V_n}(x)\right\}^2 v(x) d x.
\end{align*}
Then, adding and subtracting $\frac{1}{\left|\mathcal{N}_K(x)\right|} \sum_{x_{i,j} \in \mathcal{N}_K(x)} f^*\left(x_{i,j}\right)$, it can be written as
\begin{align*} 
&\int\Big\{f^*(x)-\frac{1}{\Big|\mathcal{N}_K(x)\Big|} \sum_{x_{i,j} \in \mathcal{N}_K(x)} f^*\left(x_{i,j}\right)+\frac{1}{\left|\mathcal{N}_K(x)\right|} \sum_{x_{i,j} \in \mathcal{N}_K(x)} f^*\left(x_{i,j}\right)
\\
-&\frac{1}{\Big|\mathcal{N}_K(x)\Big|} \sum_{x_{i,j} \in \mathcal{N}_K(x)}  \widehat{f}_{V_n}\left(x_{i,j}\right)\Big\}^2 v(x) d x \\
& \leq 2 \int\left\{f^*(x)-\frac{1}{\left|\mathcal{N}_K(x)\right|} \sum_{x_{i,j} \in \mathcal{N}_K(x)} f^*\left(x_{i,j}\right)\right\}^2 v(x) d x \\
& +2 \int\left\{\frac{1}{\left|\mathcal{N}_K(x)\right|} \sum_{x_{i,j} \in \mathcal{N}_K(x)} f^*\left(x_{i,j}\right)-\frac{1}{\left|\mathcal{N}_K(x)\right|} \sum_{x_{i,j} \in \mathcal{N}_K(x)}  \widehat{f}_{V_n}\left(x_{i,j}\right)\right\}^2 v(x) d x .
\end{align*}
Now we proceed to bound each of the terms 
\begin{equation}
    \label{Sec-T3-1}
    \text{Int}_1=\int\left\{f^*(x)-\frac{1}{\left|\mathcal{N}_K(x)\right|} \sum_{x_{i,j} \in \mathcal{N}_K(x)} f^*\left(x_{i,j}\right)\right\}^2 v(x) d x,
\end{equation}
and,
\begin{equation}
     \label{Sec-T3-2}
     \text{Int}_2=\int\left\{\frac{1}{\left|\mathcal{N}_K(x)\right|} \sum_{x_{i,j} \in \mathcal{N}_K(x)} f^*\left(x_{i,j}\right)-\frac{1}{\left|\mathcal{N}_K(x)\right|} \sum_{x_{i,j} \in \mathcal{N}_K(x)}  \widehat{f}_{V_n}\left(x_{i,j}\right)\right\}^2 v(x) d x.
\end{equation}
For the second term $\text{Int}_2$ in Equation (\ref{Sec-T3-2}), we observe that
\begin{align*}
&\int\left\{\frac{1}{\left|\mathcal{N}_K(x)\right|} \sum_{x_{i,j} \in \mathcal{N}_K(x)} f^*\left(x_{i,j}\right)-\frac{1}{\left|\mathcal{N}_K(x)\right|} \sum_{x_{i,j} \in \mathcal{N}_K(x)}  \widehat{f}_{V_n}\left(x_{i,j}\right)\right\}^2 v(x) d x \\
& =\sum_{r=1}^{N^d} \int_{\mathcal{I}_r}\left\{\frac{1}{\left|\mathcal{N}_K(x)\right|} \sum_{x_{i,j} \in \mathcal{N}_K(x)} f^*\left(x_{i,j}\right)-\frac{1}{\left|\mathcal{N}_K(x)\right|} \sum_{x_{i,j} \in \mathcal{N}_K(x)}  \widehat{f}_{V_n}\left(x_{i,j}\right)\right\}^2 v(x) d x .
\end{align*}
Moreover, for $x'\in[0,1]^d$, there exists $\mathfrak{c}_r(x') \in P_{\text{lat}}(N)$ that satisfies $x' \in C(\mathfrak{c}_r(x'))$ and
$$
\left\|x'-P_I(x')\right\|_2 \leq \frac{1}{2 N} .
$$
Even more, by Lemma \ref{Omega-Oscar-Paper}, with high probability, there exists $(i\left(x^{\prime}\right),j\left(x^{\prime}\right) )\in[n]\times[m_i]$ such that $x_{i\left(x^{\prime}\right),j\left(x^{\prime}\right)} \in C(\mathfrak{c}_l(x'))$. Hence, 
$$
\left\|x^{\prime}-x_{i\left(x^{\prime}\right),j\left(x^{\prime}\right)}\right\|_2 \leq \frac{1}{N} .
$$
Denote by  $x_{i_1,j_1}, \ldots, x_{i_K,j_K}, i_1, \ldots, i_K \in[n]$ and $(j_1, \ldots, j_K)
\in[m_{i_1}]\times...\times [m_{i_K}]$ $K$-nearest neighbors corresponding to $x_{i\left(x^{\prime}\right),j\left(x^{\prime}\right)}$. Observe that
$$
\vert\vert x^{\prime}- x_{i_r,j_r}\vert\vert_2 \leq \frac{1}{ N}+R_{K, \max }, \quad r=1, \ldots, K
$$
where 
\begin{align*}
& R_{K, \max }=\max _{1 \leq i \leq n,1\le j\le m_i} R_K\left(x_{i,j}\right), 
\end{align*}
with 
$R_K(x)$ denoting  the distance from $x \in [0,1]^d$ to its $K$th nearest neighbor in the set $\left\{x_{1,1}, \ldots, x_{n,m_n}\right\}$. 
Thus, with high probability, for any $x^{\prime} \in [0,1]^d$,
\begin{align*}
\mathcal{N}_K\left(x^{\prime}\right) & \subset\left\{x_{i,j}: \vert\vert x^{\prime}- x_{i,j} \vert\vert_2\leq \frac{1}{ N}+R_{K, \max }\right\} \\
& \subset\left\{x_{i,j}:\left\|\mathfrak{c}_r(x')-x_{i,j}\right\|_2 \leq\left(\frac{1}{2}+1\right) \frac{1}{N}+ R_{K, \max }\right\} .
\end{align*}

Hence, we set
$$
\widetilde{\mathcal{N}}\left(\mathfrak{c}_r(x')\right)=\left\{x_{i,j}:\left\|\mathfrak{c}_r(x')-x_{i,j}\right\|_2 \leq\left(\frac{1}{2}+1\right) \frac{1}{N}+R_{K, \max }\right\} .
$$
As a result, for $r \in\left[N^d\right]$,
\begin{align*}
& \int_{C\left(\mathfrak{c}_r\right)}\left\{\frac{1}{\left|\mathcal{N}_K(x)\right|} \sum_{x_{i,j} \in \mathcal{N}_K(x)} f^*\left(x_{i,j}\right)-\frac{1}{\left|\mathcal{N}_K(x)\right|} \sum_{x_{i,j} \in \mathcal{N}_K(x)}  \widehat{f}_{V_n}\left(x_{i,j}\right)\right\}^2 v(x) d x  ,
\end{align*}
can  be bounded as,
\begin{align*}
&\int_{C\left(\mathfrak{c}_r\right)} \frac{1}{\left|\mathcal{N}_K(x)\right|} \sum_{x_{i,j} \in \mathcal{N}_K(x)}\left\{f^*\left(x_{i,j}\right)- \widehat{f}_{V_n}\left(x_{i,j}\right)\right\}^2 v(x) d x \\
& \leq \frac{1}{K} \int_{C\left(\mathfrak{c}_r\right)} \sum_{x_{i,j} \in \widetilde{\mathcal{N}}\left(\mathfrak{c}_r\right)}\left\{f^*\left(x_{i,j}\right)- \widehat{f}_{V_n}\left(x_{i,j}\right)\right\}^2 v(x)d x \\
& =\frac{1}{K}\left[\sum_{x_{i,j} \in \widetilde{\mathcal{N}}\left(\mathfrak{c}_r\right)}\left\{f^*\left(x_{i,j}\right)- \widehat{f}_{V_n}\left(x_{i,j}\right)\right\}^2\right] \int_{C\left(\mathfrak{c}_r\right)} v(x) d x .
\end{align*}
The above leads to a upper bound for the term 
$$\int\left\{\frac{1}{\left|\mathcal{N}_K(x)\right|} \sum_{x_{i,j} \in \mathcal{N}_K(x)} f^*\left(x_{i,j}\right)-\frac{1}{\left|\mathcal{N}_K(x)\right|} \sum_{x_{i,j} \in \mathcal{N}_K(x)}  \widehat{f}_{V_n}\left(x_{i,j}\right)\right\}^2 v(x) d x.$$
This upper bound has the form,
\begin{align*}
    \sum_{r=1}^{N^d}\left(\frac{1}{K}\left[\sum_{x_{i,j} \in \widetilde{\mathcal{N}}\left(\mathfrak{c}_r\right)}\left\{f^*\left(x_{i,j}\right)- \widehat{f}_{V_n}\left(x_{i,j}\right)\right\}^2\right] \int_{C\left(\mathfrak{c}_r\right)} v(x) d x\right),
\end{align*}
which can be bounded by
\begin{align*}
& \left[\sum_{r=1}^{N^d} \frac{1}{K} \sum_{x_{i,j} \in \widetilde{\mathcal{N}}\left(\mathfrak{c}_r\right)}\left\{f^*\left(x_{i,j}\right)- \widehat{f}_{V_n}\left(x_{i,j}\right)\right\}^2\right] \max _{1 \leq r \leq N^d} \int_{C\left(\mathfrak{c}_r\right)} v(x) d x \\
& \leq\left[\sum_{i=1}^n\sum_{j=1}^{m_i} \frac{1}{K} \sum_{r \in\left[N^d\right]:\left\|x_{i,j}-\mathfrak{c}_r\right\|_2 \leq\left(\frac{1}{2}+1\right) \frac{1}{N}+ R_{K, \max }}\left\{f^*\left(x_{i,j}\right)- \widehat{f}_{V_n}\left(x_{i,j}\right)\right\}^2\right] 
\\
&\cdot\Big[\max _{1 \leq r \leq N^d} \int_{C\left(\mathfrak{c}_r\right)} v(x) d x \Big]\\
& \leq\left[\sum_{i=1}^n\sum_{j=1}^{m_i}\left\{f^*\left(x_{i,j}\right)- \widehat{f}_{V_n}\left(x_{i,j}\right)\right\}^2\right]\Big[\frac{1}{K} \max _{1 \leq r \leq N^d} \int_{C\left(\mathfrak{c}_r\right)} v(x) d x\Big] \\
& \cdot\left[\max _{(i,j) \in[n]\times[m_i]}\left|\left\{k:\left\|x_{i,j}-\mathfrak{c}_r\right\|_2 \leq\left(\frac{1}{2}+1\right) \frac{1}{N}+ R_{K, \max }\right\}\right|\right].
\end{align*}
Next, for a set $A \subset \mathbb{R}^d$ and positive constant $\widetilde{r}$, we define the packing and external covering numbers as
\begin{align*}
\mathrm{N}_{\widetilde{r}}^{\text {pack }}(A) & :=\max \left\{r \in \mathbb{N}: \exists q_1, \ldots, q_r \in A, \right. \\
&\quad \left. \text { such that }\left\|q_j-q_{j^{\prime}}\right\|_2>\widetilde{r} \quad \forall j \neq j^{\prime}\right\}, \\
\mathrm{N}_{\widetilde{r}}^{\text {ext }} & :=\min \left\{r \in \mathbb{N}: \exists q_1, \ldots, q_r \in \mathbb{R}^d, \right. \\
&\quad \left. \text { such that } \forall x \in A \text { there exists } r_x \text { with }\left\|x-q_{r_x}\right\|_2<\widetilde{r}\right\}.
\end{align*}
Furthermore, by Lemma \ref{dist-bet-Knn}, there exists a constant $\widetilde{c}$ such that $R_{K, \max } \leq \widetilde{c} / N$ with high probability. This implies that with high probability, for a positive constant $\widetilde{C}$, we have that
\begin{align*}
& \max _{(i,j) \in[n]\times[m_i]}\left|\left\{r \in\left[N^d\right]:\left\|x_{i,j}-\mathfrak{c}_r\right\|_2 \leq\left(\frac{1}{2}+1\right) \frac{1}{N}+ R_{K, \max }\right\}\right| \\
& \leq  \max _{(i,j) \in[n]\times[m_i]}\left|\left\{r \in\left[N^d\right]:\left\|x_{i,j}-\mathfrak{c}_r\right\|_2 \leq \frac{\widetilde{C}}{N}\right\}\right| \\
& \leq \max _{(i,j) \in[n]\times[m_i]} \mathrm{N}_{\frac{1}{N}}^{\mathrm{pack}}\left(B_{\frac{\widetilde{C}}{N}}\left(x_{i,j}\right)\right) \\
& \leq \mathrm{N}_{\frac{1}{N}}^{\mathrm{ext}}\left(B_{\frac{\widetilde{C}}{N}}(0)\right) \\
& =\mathrm{N}_1^{\mathrm{ext}}\left(B_{\widetilde{C}}^{\widetilde{N}}(0)\right) \\
& <C^{\prime} \text {, } 
\end{align*}
for some positive constant $C^{\prime}$, where the first inequality follows from $R_{K, \max } \leq \widetilde{c} / N$, the second from the definition of packing number, and the remaining inequalities from basic properties of packing and external covering numbers.
Therefore, there exists a constant $\widetilde{C}_1>0$ such that
\begin{align*}
& \text{Int}_2=\int\left\{\frac{1}{\left|\mathcal{N}_K(x)\right|} \sum_{x_{i,j} \in \mathcal{N}_K(x)} f^*\left(x_{i,j}\right)-\frac{1}{\left|\mathcal{N}_K(x)\right|} \sum_{x_{i,j} \in \mathcal{N}_K(x)}  \widehat{f}_{V_n}\left(x_{i,j}\right)\right\}^2 v(x) \mu(d x) \\
& \leq\left[\sum_{i=1}^n\sum_{j=1}^{m_i}\left\{f^*\left(x_{i,j}\right)- \widehat{f}_{V_n}\left(x_{i,j}\right)\right\}^2\right] \frac{C^{\prime}}{K} \max _{1 \leq r \leq N^d} \int_{C\left(\mathfrak{c}_r\right)} v(x) d x \\
& \leq\left[\sum_{i=1}^n\sum_{j=1}^{m_i}\left\{f^*\left(x_{i,j}\right)- \widehat{f}_{V_n}\left(x_{i,j}\right)\right\}^2\right] \frac{C^{\prime} c_{2,d}v_{\max }}{K} Vol\left(B_{\frac{\left(\tilde{b}_1\right)^{1 / d}}{(2c_{2,d}v_{\max }Cc_2)^{1 / d}}\left(\frac{K}{nm}\right)^{1 / d}}(x)\right)
\\
&\le \frac{1}{nm}\left[\sum_{i=1}^n\sum_{j=1}^{m_i}\left\{\theta_{i,j}^*- (\widehat{\theta}_{V_n})_{i,j}\right\}^2\right] \widetilde{C}_1 v_{\max },
\end{align*}
where, the second inequality is followed by Equation (\ref{Omega-Oscar-paper-0}). Then, making use of {\bf{Step A}}, we conclude that
\begin{align}
    \label{first-boudn-L2}
   & \text{Int}_2\nonumber
   =\int\left\{\frac{1}{\left|\mathcal{N}_K(x)\right|} \sum_{x_{i,j} \in \mathcal{N}_K(x)} f^*\left(x_{i,j}\right)-\frac{1}{\left|\mathcal{N}_K(x)\right|} \sum_{x_{i,j} \in \mathcal{N}_K(x)}  \widehat{f}_{V_n}\left(x_{i,j}\right)\right\}^2 v(x) d x\nonumber\\
    =&O_\mathbb{P}(r_{n,m}),
\end{align}
where
$$r_{n,m}=\frac{\log^{8+2l}(nm)}{n}+\left\{\frac{\log ^{5+l}(n m)}{n m}\right\}  V_n.$$ It remains to analyze the expression in Equation (\ref{Sec-T3-1}), this is,
$$\text{Int}_1=\int\left\{f^*(x)-\frac{1}{\left|\mathcal{N}_K(x)\right|} \sum_{x_{i,j} \in \mathcal{N}_K(x)} f^*\left(x_{i,j}\right)\right\}^2 v(x) d x.$$
To this end,
we use the notation from Section \ref{plsection}. Observe that, if $\mathcal{S}$ corresponds to the piecewise definition of $f^*,$ the term 
{\small{$$\int\left\{f^*(x)-\frac{1}{\left|\mathcal{N}_K(x)\right|} \sum_{x_{i,j} \in \mathcal{N}_K(x)} f^*\left(x_{i,j}\right)\right\}^2 v(x) d x,$$}}
can be written as,
{\small{$$\sum_{r \in\left[N^d\right]} \int_{C\left(\mathfrak{c}_r\right)}\left\{f^*(x)-\frac{1}{K} \sum_{x_{i,j} \in \mathcal{N}_K(x)} f^*\left(x_{i,j}\right)\right\}^2 v(x) d x,$$}}
which by convexity can be bounded by,
{\footnotesize{\begin{align*} 
&\sum_{r \in\left[N^d\right]} \int_{C\left(\mathfrak{c}_r\right)} \sum_{x_{i,j} \in \mathcal{N}_K(x)} \frac{1}{K}\left\{f^*(x)-f^*\left(x_{i,j}\right)\right\}^2 v(x) d x \\
& = \sum_{r \in\left[N^d\right]:C\left(\mathfrak{c}_r\right) \cap \mathcal{S}=\emptyset} \int_{C\left(\mathfrak{c}_r\right)} \sum_{x_{i,j} \in \mathcal{N}_K(x)} \frac{1}{K}\left\{f^*(x)-f^*\left(x_{i,j}\right)\right\}^2 v(x) d x \\
&+ \sum_{r \in\left[N^d\right]:C\left(\mathfrak{c}_r\right) \cap \mathcal{S}\neq\emptyset} \int_{C\left(\mathfrak{c}_r\right)} \sum_{x_{i,j} \in \mathcal{N}_K(x)} \frac{1}{K}\left\{f^*(x)-f^*\left(x_{i,j}\right)\right\}^2 v(x) d x.
\end{align*}}}
Therefore, the term 
$$\int\left\{f^*(x)-\frac{1}{\left|\mathcal{N}_K(x)\right|} \sum_{x_{i,j} \in \mathcal{N}_K(x)} f^*\left(x_{i,j}\right)\right\}^2 v(x) d x,$$
is bounded by
\begin{align*}
 & 4\left|\left\{r \in\left[N^d\right]: C\left(\mathfrak{c}_r\right) \cap \mathcal{S} \neq \emptyset\right\}\right|\left\|f_0\right\|_{\infty}^2 \max _{1 \leq r \leq N^d} \int_{C\left(\mathfrak{c}_r\right)} v(x) d x \\ & +\sum_{r\in\left[N^d\right]:C\left(\mathfrak{c}_r\right) \cap \mathcal{S}=\emptyset} \int_{C\left(\mathfrak{c}_r\right)} \sum_{x_{i,j} \in \mathcal{N}_K(x)} \frac{1}{K}\left\{f_0(x)-f_0\left(x_{i,j}\right)\right\}^2 v(x) d x ,
 \end{align*}
which by Equation (\ref{Omega-Oscar-paper-0}) is smaller than 
 \begin{align*}
  &Vol\left(B_{\frac{\widetilde{b}_1^{1 / d}}{(2c_{2,d}Cc_2v_{\max })^{1 / d}}\left(\frac{K}{nm}\right)^{1 / d}}(x)\right)\left(4 v_{\max }\left\|f_0\right\|_{\infty}^2\right)\left|\left\{r\in\left[N^d\right]: C\left(\mathfrak{c}_r\right) \cap \mathcal{S} \neq \emptyset\right\}\right|
  \\
   & +\sum_{r \in\left[N^d\right]:C\left(\mathfrak{c}_r\right) \cap \mathcal{S}=\emptyset} \int_{C\left(\mathfrak{c}_r\right)} \sum_{x_{i,j} \in \mathcal{N}_K(x)} \frac{1}{K}\left\{f^*(x)-f^*\left(x_{i,j}\right)\right\}^2 v(x) d x. 
 \end{align*}
 Then, using the bound for volume and the piecewise lipschitz  condition, it is bounded by
\begin{align*}&\left( 4 \widetilde{b}_1\left\|f^*\right\|_{\infty}^2\right) \frac{K}{nm}\left|\left\{r \in\left[N^d\right]: C\left(\mathfrak{c}_r\right) \cap \mathcal{S} \neq \emptyset\right\}\right| \\ & +\sum_{r \in\left[N^d\right]:C\left(\mathfrak{c}_r\right) \cap \mathcal{S}=\emptyset} \int_{C\left(\mathfrak{c}_l\right)} \sum_{x_{i,j} \in \mathcal{N}_K(x)} \frac{L_0}{K}\left\|x-x_{i,j}\right\|_2^2 v(x) d x \\ & \leq\left( 4 \widetilde{b}_1^d\left\|f^*\right\|_{\infty}^2\right) \frac{K}{nm}\left|\left\{r \in\left[N^d\right]: C\left(\mathfrak{c}_r\right) \cap \mathcal{S} \neq \emptyset\right\}\right| \\ & +L_0\left\{\sum_{r \in\left[N^d\right]:C\left(\mathfrak{c}_r\right) \cap \mathcal{S}=\emptyset}  \int_{\left(\mathfrak{c}_r\right)} v(x) d x\right\}\left( \frac{1}{N}+R_{K, \max }\right)^2 \\ & \leq\left( 4 \widetilde{b}_1^d\left\|f^*\right\|_{\infty}^2\right) \frac{K}{nm}\left|\left\{r\in\left[N^d\right]: C\left(\mathfrak{c}_r\right) \cap \mathcal{S} \neq \emptyset\right\}\right| +L_0\left( \frac{1}{N}+ R_{K, \max }\right)^2.
\end{align*}
Finally, by the choice of $N$, Lemma \ref{dist-bet-Knn} and following the conclusion on Lemma 15 of \cite{madrid2020adaptive}, we get that
\begin{align}
\label{boudforAERR}
    \text{Int}_2=\int\left\{f^*(x)-\frac{1}{\left|\mathcal{N}_K(x)\right|} \sum_{x_{i,j} \in \mathcal{N}_K(x)} f^*\left(x_{i,j}\right)\right\}^2 v(x) d x=O_\mathbb{P}\Big(\Big(\frac{K}{nm}\Big)^{\frac{1}{d}}\Big).
\end{align}
From Equation (\ref{first-boudn-L2}) and (\ref{boudforAERR}) we conclude the claim.
\end{proof}

\newpage
\section{Proof of Theorem \ref{Pest-d>1}}\label{proofT4}
\begin{proof}
The proof of Theorem 4 is divided into two main steps.
\\
{\bf{Step A.}} In this step, the assertion of Equation (\ref{PT-eq1}) in Chapter 2 is shown.
Let $\varepsilon>0.$
By Assumption 1{\bf{f}}, we have that
\begin{align*}
    \frac{1}{cnm}\sum_{i=1}^n \sum_{j=1}^{m_i}\left(\widehat{\theta}_{i, j}-\theta_{i, j}^*\right)^2\ge&\frac{1}{n}\sum_{i=1}^n \frac{1}{m_i}\sum_{j=1}^{m_i}\left(\widehat{\theta}_{i, j}-\theta_{i, j}^*\right)^2.
\end{align*}
It suffices to show that for some sufficiently large constant $C_{d}$ only depending on  $d$, it holds that
$$
\mathbb{P}\left(\left\|\widehat{\theta}_{\widetilde\lambda}-\theta^*\right\|_2^2>\eta^2\right) \leq \varepsilon,
$$
where 
$
\left\|\widehat{\theta}-\theta^*\right\|_2^2=\sum_{i=1}^n \sum_{j=1}^{m_i}\left(\widehat{\theta}_{i, j}-\theta_{i, j}^*\right)^2 ,
$ and
\begin{align}
\label{etaT4}
\eta^2=C_{ d}\left\{m K^{2}+ K^{2} \log^3 (n m)+K \log ^{\frac{14}{4}}(n m)+ K^2 \log^{\frac{1}{2}} (n m) \left\|\nabla_{G_K} \theta^*\right\|_1\right\}.
\end{align}
The tuning parameter is considered to be
\begin{equation}
    \label{lambda}
    \widetilde{\lambda}=\frac{\eta^2}{4Cnm\left\|\nabla_{G_K} \theta^*\right\|_1},
\end{equation}
and let
\begin{equation}
    \label{lambda-T4}
    \lambda=\frac{\eta^2}{4\left\|\nabla_{G_K} \theta^*\right\|_1}.
\end{equation}
\break
{\bf{Step 0A.}}
To proceed denote 
$$
\Theta=\left\{\theta \in \mathbb{R}^{\sum_{i=1}^n m_i}: \theta=t\widehat{\theta}_{\widetilde{\lambda}}+(1-t) \theta^*, t \in[0,1]\right\},
$$
and for any $\theta_1,\theta_2\in\mathbb{R}^{\sum_{i=1}^nm_i}$ consider
$\vert\vert \theta_1-\theta_2\vert\vert_{nm}^2=\frac{1}{n}\sum_{i=1}^n\frac{1}{m_i}\sum_{j=1}^{m_i}((\theta_1)_{i,j}-(\theta_2)_{i,j})^2.$
By convexity of the constraint, the objective functions and the minimizer property, for any $\theta \in \Theta$,
\begin{align}
\label{aux-eq-T4-0}
    \vert\vert y_{i,j}-\theta\vert\vert_{nm}^2\le\vert\vert y_{i,j}-\theta^*\vert\vert_{nm}^2+\widetilde{\lambda}\left\|\nabla_{G_K} \theta^*\right\|_1-\widetilde{\lambda}\left\|\nabla_{G_K} \theta\right\|_1.
\end{align}
Expanding in Inequality (\ref{aux-eq-T4-0}) leads to
\begin{align}
\label{aux-eq-T4-3}
&\frac{1}{2}\left\|\theta-\theta^*\right\|_{nm}^2 \nonumber
\\
\leq& \frac{1}{n}\sum_{i=1}^n \frac{1}{m_i}\sum_{j=1}^{m_i}\left(\theta_{i, j}-\theta_{i, j}^*\right) \epsilon_{i,j}+\frac{1}{n}\sum_{i=1}^n \frac{1}{m_i}\sum_{j=1}^{m_i}\left(\theta_{i, j}-\theta_{i, j}^*\right) \delta_i\left(x_{i,j}\right)\nonumber
\\
+&\widetilde{\lambda}\left\|\nabla_{G_K} \theta^*\right\|_1-\widetilde{\lambda}\left\|\nabla_{G_K} \theta\right\|_1.
\end{align}
Using Assumption 1{\bf{f}} we have that
\begin{equation}
\label{aux-eq-T4-1}
    \frac{1}{Cnm}\left\|\theta-\theta^*\right\|_{2}^2\le\left\|\theta-\theta^*\right\|_{nm}^2,
\end{equation}
where $C$ is positive constants from Assumption 1{\bf{f}}. From Inequalities (\ref{aux-eq-T4-3}) and (\ref{aux-eq-T4-1})
\begin{align}
    \label{aux-eq-T4-4}
     &\frac{1}{2}\left\|\theta-\theta^*\right\|_{2}^2\nonumber \\
     \leq&\sum_{i=1}^n \sum_{j=1}^{m_i}\frac{Cm}{m_i}\left(\theta_{i, j}-\theta_{i, j}^*\right) \epsilon_{i,j}+\sum_{i=1}^n \sum_{j=1}^{m_i} \frac{Cm}{m_i}\left(\theta_{i, j}-\theta_{i, j}^*\right) \delta_i\left(x_{i,j}\right)\nonumber
     \\
     +&\lambda\left\|\nabla_{G_K} \theta^*\right\|_1-\lambda\left\|\nabla_{G_K} \theta\right\|_1
\end{align}
From now along the proof we define 
\begin{equation}
\label{T4ildevariables}
    \widetilde{\epsilon}_{i,j}=\frac{Cm}{m_i}\epsilon_{i,j} \ \text{and}\  \widetilde{\delta}_{i}=\frac{Cm}{m_i}\delta_{i}.
\end{equation}
 Some observations concerning these random variables are presented. First, by Assumption 1{\bf{d}} and {\bf{f}} we have that $\widetilde{\epsilon}_{i,j}$ are sub-Gaussians of parameter $\frac{C^2}{c^2}\varpi_\epsilon^2$. Similarly using Assumption 1{\bf{d}} and {\bf{f}} we have that $\widetilde{\delta}_{i}$ are sub-Gaussians of parameter $\frac{C^2}{c^2}\varpi_\delta^2$.
To conclude {\bf{Step 0A}} observe the following.
From Inequality (\ref{aux-eq-T4-4}) and Equation (\ref{T4ildevariables}) we also have that
$$
\left\|\nabla_{G_K} \theta\right\|_1 \leq \frac{\sum_{i=1}^n \sum_{j=1}^{m_i}\left(\theta_{i, j}-\theta_{i, j}^*\right) \widetilde\epsilon_{i,j}+\sum_{i=1}^n \sum_{j=1}^{m_i}
\left(\theta_{i, j}-\theta_{i, j}^*\right) \widetilde\delta_i\left(x_{i,j}\right)}{\lambda}+\left\|\nabla_{G_K} \theta^*\right\|_1.
$$
Therefore,
\begin{align*}
&\left\|\nabla_{G_K} \theta-\nabla_{G_K} \theta^*\right\|_1
\\
\leq& \frac{\sum_{i=1}^n \sum_{j=1}^{m_i}\left(\theta_{i, j}-\theta_{i, j}^*\right) \widetilde\epsilon_{i,j}+\sum_{i=1}^n \sum_{j=1}^{m_i}\left(\theta_{i, j}-\theta_{i, j}^*\right) \widetilde\delta_i\left(x_{i,j}\right)}{\lambda}+2\left\|\nabla_{G_K} \theta^*\right\|_1 .
\end{align*}
Denote
$$
\mathcal{T}_\epsilon\left(\theta-\theta^*\right)=\sum_{i=1}^n \sum_{j=1}^{m_i}\left(\theta_{i, j}-\theta_{i, j}^*\right) \widetilde\epsilon_{i,j} \quad \text { and } \quad \mathcal{T}_\delta\left(\theta-\theta^*\right)=\sum_{i=1}^n \sum_{j=1}^{m_i}\left(\theta_{i, j}-\theta_{i, j}^*\right) \widetilde\delta_i\left(x_{i,j}\right).
$$
Then for any $\theta \in \Theta$,
\begin{equation}
\label{eq-2}
\left\|\nabla_{G_K} \theta-\nabla_{G_K} \theta^*\right\|_1 \leq \frac{\mathcal{T}_\epsilon\left(\theta-\theta^*\right)+\mathcal{T}_\delta\left(\theta-\theta^*\right)}{\lambda}+2\left\|\nabla_{G_K} \theta^*\right\|_1.
\end{equation}
Similarly to the proof of Theorem 2, the rest of the proof consists of a two-step procedure. Specifically, we separately analyze {\footnotesize{$
\mathcal{E}_1=\left\{\sup _{\theta \in \Theta:\left\|\theta-\theta^*\right\|_2^2 \leq \eta^2}\left\|\nabla_{G_K} \theta\right\|_1 \geq 5\left\|\nabla_{G_K} \theta^*\right\|_1\right\}.
$}} and its complement by using Inequality (\ref{eq-2}) and Inequality (\ref{aux-eq-T4-3}), respectively.
\\
\\
{\bf{Step 1.}} In this step we analyze the event $\mathcal{E}_1.$ To this end, some observations are presented in the following. Let $\theta \in \Theta$ be such that $\left\|\theta-\theta^*\right\|_2^2 \leq \eta^2$. Suppose in addition that
$$
\left\|\nabla_{G_K} \theta\right\|_1 \geq 5\left\|\nabla_{G_K} \theta^*\right\|_1.
$$
By triangle inequality
$$
\left\|\nabla_{G_K} \theta-\nabla_{G_K} \theta^*\right\|_1 \geq\left\|\nabla_{G_K} \theta\right\|_1-\left\|\nabla_{G_K} \theta^*\right\|_1 \geq 4\left\|\nabla_{G_K} \theta^*\right\|_1.
$$
Let $\widetilde{\theta}=t \theta+(1-t) \theta^*$, where
$$
t=\frac{4\left\|\nabla_{G_K} \theta^*\right\|_1}{\left\|\nabla_{G_K} \theta-\nabla_{G_K} \theta^*\right\|_1} \in[0,1].
$$
Then it follows that $\widetilde{\theta} \in \Theta$ and that $\left\|\widetilde{\theta}-\theta^*\right\|_2^2 \leq \eta^2$. In addition, it holds that
\begin{align}
    \label{eq-1}
&\left\|\nabla_{G_K} \widetilde{\theta}-\nabla_{G_K} \theta^*\right\|_1\nonumber
\\
=&t\left\|\nabla_{G_K} \theta-\nabla_{G_K} \theta^*\right\|_1=\frac{4\left\|\nabla_{G_K} \theta^*\right\|_1}{\left\|\nabla_{G_K} \theta-\nabla_{G_K} \theta^*\right\|_1}\left\|\nabla_{G_K} \theta-\nabla_{G_K} \theta^*\right\|_1\nonumber
\\
=&4\left\|\nabla_{G_K} \theta^*\right\|_1 .
\end{align}
Equation (\ref{eq-2}) and (\ref{eq-1}) imply that,
$$
2\left\|\nabla_{G_K} \theta^*\right\|_1 \leq \frac{\mathcal{T}_\epsilon\left(\widetilde{\theta}-\theta^*\right)+\mathcal{T}_\delta\left(\widetilde{\theta}-\theta^*\right)}{\lambda}.
$$
By the choice of $\lambda$ in Equation (\ref{lambda}), the above display leads to
\begin{equation}
    \label{eq-3}
\frac{\eta^2}{2} \leq \mathcal{T}_\epsilon\left(\widetilde{\theta}-\theta^*\right)+\mathcal{T}_\delta\left(\widetilde{\theta}-\theta^*\right).
\end{equation}
We are now ready to complete the analysis for the event
$$
\mathcal{E}_1=\left\{\sup _{\theta \in \Theta:\left\|\theta-\theta^*\right\|_2^2 \leq \eta^2}\left\|\nabla_{G_K} \theta\right\|_1 \geq 5\left\|\nabla_{G_K} \theta^*\right\|_1\right\}.
$$
The properties of $\widetilde{\theta}$ in  Equation (\ref{eq-1}) and Equation (\ref{eq-3}) imply that
\begin{align*}
 \mathcal{E}_1 &\subset\left\{\sup _{\widetilde{\theta} \in \Theta:\left\|\widetilde{\theta}-\theta^*\right\|_2^2 \leq \eta^2,\left\|\nabla_{G_K} \widetilde{\theta}-\nabla_{G_K} \theta^*\right\|_1 \leq 4\left\|\nabla_{G_K} \theta^*\right\|_1} \mathcal{T}_\epsilon\left(\widetilde{\theta}-\theta^*\right)+\mathcal{T}_\delta\left(\widetilde{\theta}-\theta^*\right) \geq \frac{\eta^2}{2}\right\} \\
& \subset\left\{\sup _{\|\Delta\|_2^2 \leq \eta^2,\left\|\nabla_{G_K} \Delta\right\|_1 \leq 4\left\|\nabla_{G_K} \theta^*\right\|_1} \mathcal{T}_\epsilon(\Delta)+\mathcal{T}_\delta(\Delta) \geq \frac{\eta^2}{2}\right\} . 
\end{align*}
Therefore
\begin{align*}
 \mathbb{P}\left(\mathcal{E}_1\right)& \leq \mathbb{P}\left(\left\{\sup _{\|\Delta\|_2^2 \leq \eta^2,\left\|\nabla_{G_K} \Delta\right\|_1 \leq 4\left\|\nabla_{G_K} \theta^*\right\|_1} \mathcal{T}_\epsilon(\Delta)+\mathcal{T}_\delta(\Delta) \geq \frac{\eta^2}{2}\right\}\right) \\
& \leq \mathbb{P}\left(\left\{\sup _{\|\Delta\|_2^2 \leq \eta^2,\left\|\nabla_{G_K} \Delta\right\|_1 \leq 4\left\|\nabla_{G_K} \theta^*\right\|_1} \mathcal{T}_\epsilon(\Delta) \geq \frac{\eta^2}{4}\right\}\right) \\
 &+\leq \mathbb{P}\left(\left\{\sup _{\|\Delta\|_2^2 \leq \eta^2,\left\|\nabla_{G_K} \Delta\right\|_1 \leq 4\left\|\nabla_{G_K} \theta^*\right\|_1} \mathcal{T}_\delta(\Delta) \geq \frac{\eta^2}{4}\right\}\right) .
\end{align*}
Using $V_n=4\left\|\nabla_{G_K} \theta^*\right\|_1$ in Lemma \ref{lemma11-oscar}, for some positive constants $C_3$ and $C_4$  we have that   
{\footnotesize{\begin{align*}
& \sup _{\|\Delta\|_2 \leq \eta,\left\|\nabla_{G_K} \Delta\right\|_1 \leq 4\left\|\nabla_{G_K} \theta^*\right\|_1} \sum_{i=1}^n \sum_{j=1}^m \Delta_{i,j} \widetilde\epsilon_{i,j} \leq C_3\varpi_\epsilon K\log^{1/4}(nm)\eta+ C_4\varpi_\epsilon K \log (n m)  \left\|\nabla_{G_K} \theta^*\right\|_1, 
\end{align*}}}
with a probability of at least 
{\footnotesize{$$1-2 \frac{1}{\sqrt{\log(nm)}}-\frac{4}{C_{\beta}}C\log(n)nm\exp(-\frac{\widetilde{C}}{24}\frac{K}{\log(n)})-3\frac{1}{n}-C_2\log(n)\frac{nm}{K}\exp\Big(-\frac{C_{\beta}}{48} \widetilde{b}_2\frac{K}{\log(n)}\Big)-\frac{1}{nm},$$}} 
where $C_2>0$ is a constant depending on $d, C_\beta$ and $v_{\max }$. Moreover, $\widetilde{b}_2>0$ is a constant depending on $d, v_{\min }$, and $v_{\max }$. Furthermore, $\widetilde{C}=\frac{C_\beta c_1 c}{c_2 C}$ with $c_1$ and $c_2$ positive constants. Here $v_{\max }, v_{\min }$, $C, C_\beta$ and $c$ are from Assumption 1. The positive constants $C_3$ and $C_4$ only depend on $d$.
By the choice of $\eta$ in Equation (\ref{etaT4}) it therefore holds that
\begin{equation}
\label{eps-bound-P}
\mathbb{P}\left(\left\{\sup _{\|\Delta\|_2^2 \leq \eta^2,\left\|\nabla_{G_K} \Delta\right\|_1 \leq 4\left\|\nabla_{G_K} \theta^*\right\|_1} \mathcal{T}_\epsilon(\Delta) \geq \frac{\eta^2}{4}\right\}\right) \leq \frac{\varepsilon}{4} .
\end{equation}
Then, following the exact same line of arguments as in {\bf{Step 2A}} in the proof of Theorem 3 with the use of $V_n=4\left\|\nabla_{G_K} \theta^*\right\|_1$, by our choice of $\eta$ in Equation (\ref{etaT4}), we get that
\begin{equation}
    \label{delta-b-p}
    \mathbb{P}\left(\left\{\sup _{\|\Delta\|_2^2 \leq \eta^2,\left\|\nabla_{G_K} \Delta\right\|_1 \leq 4\left\|\nabla_{G_K} \theta^*\right\|_1} \mathcal{T}_\delta(\Delta) \geq \frac{\eta^2}{4}\right\}\right) \leq \frac{\varepsilon}{4} .
\end{equation}
Therefore Inequality (\ref{eps-bound-P}) and Inequality (\ref{delta-b-p}) give that
\begin{align}
 \mathbb{P}\left(\mathcal{E}_1\right) &\leq \mathbb{P}\left(\left\{\sup _{\left\|\theta-\theta^*\right\|_2^2 \leq \eta^2,\left\|\nabla_{G_K} \theta-\nabla_{G_K} \theta^*\right\|_1 \leq 4\left\|\nabla_{G_K} \theta^*\right\|_1} \mathcal{T}_\epsilon\left(\theta-\theta^*\right) \geq \frac{\eta^2}{4}\right\}\right) \nonumber\\
& +\mathbb{P}\left(\left\{\sup _{\left\|\theta-\theta^*\right\|_2^2 \leq \eta^2,\left\|\nabla_{G_K} \theta-\nabla_{G_K} \theta^*\right\|_1 \leq 4\left\|\nabla_{G_K} \theta^*\right\|_1} \mathcal{T}_\delta\left(\theta-\theta^*\right) \geq \frac{\eta^2}{4}\right\}\right)\nonumber \\
& \leq \frac{\varepsilon}{2} \text {. }\label{f1-T4}
\end{align}
{\bf{Step 2}}.  In this step, the analysis for the event $\mathcal{E}_1^c$ is presented. We start by noticing that if $\left\|\widehat{\theta}_{\widetilde{\lambda}}-\theta^*\right\|_2>\eta$, there exists $\breve{\theta}=t \widehat{\theta}_{\widetilde{\lambda}}+(1-t) \theta^* \in \Theta$ with $t=\frac{\eta}{\left\|\widehat{\theta}_{\widetilde{\lambda}}-\theta^*\right\|_2}$ such that
\begin{equation}
\label{theta_aux}
    \left\|\breve{\theta}-\theta^*\right\|_2=\eta.
\end{equation}
Therefore Inequality (\ref{aux-eq-T4-4}) and  Equation (\ref{theta_aux}) imply that
$$
\sum_{i=1}^n \sum_{j=1}^{m_i}\left(\breve{\theta}_{i, j}-\theta_{i, j}^*\right) \widetilde\epsilon_{i,j}+\sum_{i=1}^n \sum_{j=1}^{m_i}\left(\breve{\theta}_{i, j}-\theta_{i, j}^*\right) \widetilde\delta_i\left(x_{i,j}\right)+\lambda\left\|\nabla_{G_K} \theta^*\right\|_1-\lambda\left\|\nabla_{G_K} \breve{\theta}\right\|_1 \geq \frac{\eta^2}{2}.
$$
The above display together with the choice of $\lambda$ in Equation (\ref{lambda-T4}) and the fact that $-\lambda\left\|\nabla_{G_K} \breve{\theta}\right\|_1\le 0$, implies that
\begin{equation}
\label{aux-T4-1}
\sum_{i=1}^n \sum_{j=1}^{m_i}\left(\breve{\theta}_{i, j}-\theta_{i, j}^*\right) \widetilde\epsilon_{i,j}+\sum_{i=1}^n \sum_{j=1}^{m_i}\left(\breve{\theta}_{i, j}-\theta_{i, j}^*\right) \widetilde\delta_i\left(x_{i,j}\right) \geq \frac{\eta^2}{4}.
\end{equation}
Therefore,
{\footnotesize{\begin{align*}
& \left\{\left\|\widehat{\theta}_{\widetilde\lambda}-\theta^*\right\|_2>\eta\right\} \cap \mathcal{E}_1^c \\
& \subset\left\{\exists \breve{\theta} \in \Theta,\left\|\breve{\theta}-\theta^*\right\|_2=\eta, \sum_{i=1}^n \sum_{j=1}^{m_i}\left(\breve{\theta}_{i, j}-\theta_{i, j}^*\right) \widetilde\epsilon_{i,j}+\sum_{i=1}^n \sum_{j=1}^{m_i}\left(\breve{\theta}_{i, j}-\theta_{i, j}^*\right) \widetilde\delta_i\left(x_{i,j}\right) \geq \frac{\eta^2}{4}\right\} \cap \mathcal{E}_1^c \\
& \subset\left\{\sup _{\|\Delta\|_2 \leq \eta,\left\|\nabla_{G_K} \Delta\right\|_1 \leq 6\left\|\nabla_{G_K} \theta^*\right\|_1} \mathcal{T}_\epsilon(\Delta) +\mathcal{T}_\delta(\Delta)  \geq \frac{\eta^2}{4}\right\} .
\end{align*}}}
The first contagion is directly followed by Inequality (\ref{aux-T4-1}). For the second one observe that that in $\mathcal{E}_1^c$, using the triangle inequality, $\left\|\nabla_{G_K} \Delta\right\|_1 \leq 6\left\|\nabla_{G_K} \theta^*\right\|_1$. 
In consequence,
\begin{align*}
& \mathbb{P}\left(\left\{\left\|\widehat{\theta}_{\widetilde\lambda}-\theta^*\right\|_2>\eta\right\} \cap \mathcal{E}_1^c\right) \\
& \leq \mathbb{P}\left\{\sup _{\|\Delta\|_2 \leq \eta,\left\|\nabla_{G_K} \Delta\right\|_1 \leq 6\left\|\nabla_{G_K} \theta^*\right\|_1} \mathcal{T}_\epsilon(\Delta)+\mathcal{T}_\delta(\Delta) \geq \frac{\eta^2}{4}\right\} \\
& \leq\mathbb{P}\left\{\sum_{\|\Delta\|_2 \leq \eta,\left\|\nabla_{G_K} \Delta\right\|_1 \leq 6\left\|\nabla_{G_K} \theta^*\right\|_1} \mathcal{T}_\epsilon(\Delta)  \geq \frac{\eta^2}{8}\right\} \\
&+\mathbb{P}\left\{\sum_{\|\Delta\|_2 \leq \eta,\left\|\nabla_{G_K} \Delta\right\|_1 \leq 6\left\|\nabla_{G_K} \theta^*\right\|_1} \mathcal{T}_\delta(\Delta)  \geq \frac{\eta^2}{8}\right\}.
\end{align*}
By the same argument that leads to Inequality \ref{eps-bound-P} and Inequality \ref{delta-b-p}, it holds that
\begin{align*}
& \mathbb{P}\left\{\sup _{\|\Delta\|_2 \leq \eta,\left\|\nabla_{G_K} \Delta\right\|_1 \leq 6\left\|\nabla_{G_K} \theta^*\right\|_1} \mathcal{T}_\epsilon(\Delta) \geq \frac{\eta^2}{8}\right\} \leq \frac{\varepsilon}{4} , \\
& \mathbb{P}\left\{\sup _{\|\Delta\|_2 \leq \eta,\left\|\nabla_{G_K} \Delta\right\|_1 \leq 6\left\|\nabla_{G_K} \theta^*\right\|_1} \mathcal{T}_\delta(\Delta) \geq \frac{\eta^2}{8}\right\} \leq \frac{\varepsilon}{4} .
\end{align*}
Therefore
\begin{equation}
\label{f2-T4}
\mathbb{P}\left(\left\{\left\|\widehat{\theta}_{\widetilde\lambda}-\theta^*\right\|_2>\eta\right\} \cap \mathcal{E}_1^c\right) \leq  \frac{\varepsilon}{2} .
\end{equation}
{\bf{Step 3.}} It follows from Inequality (\ref{f1-T4}) and Inequality (\ref{f2-T4}) that
$$
\mathbb{P}\left(\left\{\left\|\widehat{\theta}_{\widetilde\lambda}-\theta^*\right\|_2>\eta\right\}\right) \leq \mathbb{P}\left(\left\{\left\|\widehat{\theta}_{\widetilde\lambda}-\theta^*\right\|_2>\eta\right\} \cap \mathcal{E}_1^c\right)+\mathbb{P}\left(\mathcal{E}_1\right) \leq  \varepsilon ,
$$
concluding the claim.
\\
{\bf{Step B.}} Now the assertion of Equation (\ref{PT-eq2}) in Chapter 2 is demonstrated. The proof is completely the same as the one for {\bf{Step B}} of Theorem 3 and is omitted.

\end{proof}

\newpage
\section{Proof of Lemma \ref{Mlb-M}}\label{ProofLemma1}
\begin{proof} First, we consider the simpler model where $\delta_i(x) = 0$ for all $i\in[n],$ and $x\in[0,1]$, i.e., there is no spatial noise. This is, 
$$y_{i,j}=f^*(x_{i,j})+\epsilon_{i,j}.$$
Denote by $\mathcal{F}(L),$ the family of piecewise Lipschitz functions defined in Appendix \ref{plsection}. By the discussion in Appendix \ref{plsection}, there exists $C_{\text{opt,1}}$, a positive constant, such that
$$
\limsup _{n \rightarrow \infty} \inf _{\widetilde{f}} \sup _{f^*\in \mathcal{F}(L)} \mathbb{P}\left(\left\|\widetilde{f}-f^*\right\|_{_2}^2>C_{\text{opt,1}}(nm)^{-1/d}\right)>0.
$$
 It now suffices to show that there exist a constant $C_{\text{opt,2}}>0,$ such that
$$
\limsup _{n \rightarrow \infty} \inf _{\widetilde{f}} \sup _{f^*\in \mathcal{F}(L)} \mathbb{P}\left(\left\|\widetilde{f}-f^*\right\|_{\mathcal{L}_2}^2>C_{\text{opt,2}} n^{-1}\right)>0.
$$
To this end, we assume that $\epsilon_{i,j}=0, \ \forall i=1,...,n;j=1,...,m_i.$ Therefore, we obtain a simple functional mean model given by,
\begin{equation}
\label{model-only-spnoi}
y_{i,j}(x_{i,j})=X(x_{i,j})=f^*(x_{i,j})+\delta_i(x_{i,j}), i\in[n],\ j\in [m_i].
\end{equation}
Here $\{\delta_{i}\}_{i=1}^{n}$ denotes the spatial noise and follow Assumption \ref{assume:tv functions}. Consider $L=\frac{2}{C_{\beta}}\log(n)$ with $C_{\beta}>0$ a constant stated in Assumption \ref{assume:tv functions}. Take into account the copies of the design points and measurement errors, denoted as $\{x_{i,j}^*\}_{i=1,j=1}^{n,m_i}$ and $\{\epsilon_{i,j}^*\}_{i=1,j=1}^{n,m_i}$, which were generated in Appendix \ref{BetaMsection}.  During the formation of these copies in Appendix \ref{BetaMsection}, the quantities $ | \mathcal J_{e,l}|$ and $ | \mathcal J_{o,l}|$, which represent the count of even and odd blocks respectively, satisfy $ | \mathcal J_{e,l}|,| \mathcal J_{o,l}|  \asymp n/L$. Thus, there exist positive constants $c_1$ and $c_2$ such that  
 \begin{equation}
     \label{BoundBlocksize-K-ap-lemma1}
      c_1n/L\le| \mathcal J_{e,l}|,| \mathcal J_{o,l}| \le  c_2n/L.
 \end{equation}
By Lemma \ref{Ind-cop}, we have that $\{x_{i,j}^*\}_{i\in\mathcal{J}_{e,l},j\in[m_i]}$ and $\{\epsilon_{i,j}^*\}_{i\in\mathcal{J}_{o,l},j\in[m_i]}$ are independent. Further, for any $i\in[n]$ it follows that $\mathbb{P}(x_{i,j}\neq x_{i,j}^*\ \text{for some}\ j\in[m_i])\le \beta_{x}(L).$ Moreover, by Assumption $\ref{assume:tv functions}$, we note that $\mathbb{P}(x_{i,j}\neq x_{i,j}^*\ \text{for some}\ j\in[m_i])\le \beta_{x}(L)\le \frac{1}{n^2}.$  Therefore, the event $\Omega_x=\bigcap_{i=1}^{n}\{x_{i,j}=x_{i,j}^*\ \forall\ j\in[m_i]\}$ happens with probability at least $\frac{1}{n}$. The same is satisfied for  $\{\delta_{i}^*\}_{i=1}^{n}$ with $\Omega_\delta=\bigcap_{i=1}^{n}\{\delta_{i}=\delta_{i}^*\}$. We denote by $\Omega$ the event $\Omega_x\cap\Omega_\delta.$

Let $\widetilde{f}$ be an estimator for the model described in Equation (\ref{model-only-spnoi}) based on the observations $\{(x_{i,j},y_{i,j})\}_{i=1,j=1}^{n,m_i}$. We let $\theta^*\in\mathbb{R}^{\sum_{i=1}^nm_i}$ given by $\theta_{i,j}^*=f^*(x_{i,j})$ for any $(i,j)\in [n]\times[m_i]$, where the abuse of notation explained in Section \ref{sec:notation} is used. Similarly, we consider $\mu^*,\widetilde{\theta},\widetilde{\mu}\in\mathbb{R}^{\sum_{i=1}^nm_i}$ with each entry given by $\mu_{i,j}^*=f^*(x_{i,j}^*)$, $\widetilde{\theta}_{i,j}=\widetilde{f}(x_{i,j})$, and $\widetilde{\mu}_{i,j}=\widetilde{f}(x_{i,j}^*)$ for any $(i,j)\in [n]\times[m_i]$, respectively.

Let $l\in[L]$. Under the event $\Omega$ we notice that $\widetilde{f}$ is an estimator for the model
\begin{equation}
\label{aux-model-lemma1}
y_{i,j}=f^*(x_{i,j}^*)+\delta_i^*(x_{i,j}^*),\ i\in{\mathcal{J}_{e,l}},\ j\in[m_i].
\end{equation}
Specifically, we have that under $\Omega$ the truncated vector $\widetilde{\mu}^{\mathcal{J}_{e,l}}\in \mathbb{R}^{\sum_{i\in\mathcal{J}_{e,l}}m_i}$, given by $\widetilde{\mu}^{\mathcal{J}_{e,l}}_{i,j}=\widetilde{\mu}_{i,j}$ for any $(i,j)\in\mathcal{J}_{e,l}\times[m_i]$, is an estimator of the truncated vector $\mu^{*,\mathcal{J}_{e,l}}\in \mathbb{R}^{\sum_{i\in\mathcal{J}_{e,l}}m_i}$ given by $\mu^{*,\mathcal{J}_{e,l}}_{i,j}=\mu^*_{i,j}$ for any $(i,j)\in\mathcal{J}_{e,l}\times[m_i]$.
Now observe that when looking for the minimax lower bound for the model described in Equation (\ref{aux-model-lemma1}), for each time $i$ only one observation among the $m_i$ available observations contributes to the determination of such minimax lower bound. To see this, suppose that conditioning in $x_{i,j}^*$, we examine the distributions
$P_1$ and $P_2$ originating from $Z_{i,1}=(z_{i,1},...,z_{i,1})^{t}\in \mathbb{R}^{m_i}$ and $Z_{i,2}=(z_{i,1},...,z_{i,1})^{t}\in \mathbb{R}^{m_i}$, where $z_{i,l} \sim \mathcal{N}(\mu_l,1)$ for $l=1$ and $l=2.$ Note that the Kullback-Leibler divergence  $KL(P_1,P_2)$ is solely dependent on $KL(\mathcal{N}(\mu_1,1),\mathcal{N}(\mu_2,1))$. By invoking Le Cam's lemma, the problem reduces to the estimation of the mean based on $\vert\mathcal{J}_{e,l}\vert$ observations. Given the independence of the copies of the design points and spatial noises, it is established that the minimax lower bound for this task is $\frac{1}{\vert \mathcal{J}_{e,l}\vert}$. Thus,
\[
    \underset{ n \rightarrow \infty}{\lim\sup}\,\underset{f^* \in \mathcal{F}     }{ \sup     }\,\mathbb{P}\left(  \| \widetilde{\mu}^{\mathcal{J}_{e,l}}-\mu^{*,\mathcal{J}_{e,l}} \|_{2}^2 \,\geq \,      \frac{C_{\text{opt,2},\mathcal{J}_{e,l}}}{\vert \mathcal{J}_{e,l}\vert }  \right)\,>0,\,
   \]
   for a positive constant $C_{\text{opt,2},\mathcal{J}_{e,l}}.$ 
   Assume that $Cm\ge m_i\ge cm,$ for a positive constant $c,C$, and $m=\Big(\frac{1}{n}\sum_{i=1}^n\frac{1}{m_i}\Big)^{-1}$. From this fact and Inequality (\ref{BoundBlocksize-K-ap}) we obtain that 
   \begin{equation}
   \label{AppendixK-eq1-lemma1}
    \underset{ n \rightarrow \infty}{\lim\sup}\,\underset{f^* \in \mathcal{F}     }{ \sup     }\,\mathbb{P}\left( \| \widetilde{\mu}^{\mathcal{J}_{e,l}}-\mu^{*,\mathcal{J}_{e,l}} \|_{2}^2 \,\geq \,       \frac{C_{\text{opt,2},\mathcal{J}_{e,l}}L}{  Cc_2n } \right)\,>C_0>0.\,
   \end{equation}
   for a constant $C_0.$
Similarly, under the event $\Omega$, $\widetilde{f}$ is an estimator for the model
$$y_{i,j}=f^*(x_{i,j}^*)+\delta_{i}^*(x_{i,j}^*),\ i\in{\mathcal{J}_{o,l}},\ j\in[m_i],$$ 
with
\begin{equation}
   \label{AppendixK-eq2-lemma1}
    \underset{ n \rightarrow \infty}{\lim\sup}\,\underset{f^* \in \mathcal{F}     }{ \sup     }\,\mathbb{P}\left(  \| \widetilde{\mu}^{\mathcal{J}_{o,l}}-\mu^{*,\mathcal{J}_{o,l}} \|_{2}^2 \,\geq \,       \frac{C_{\text{opt,2},\mathcal{J}_{o,l}}L}{  Cc_2n } \right)\,>C_0>0.\,
   \end{equation}
   for a positive constant $C_{\text{opt,2},\mathcal{J}_{o,l}}.$ 

   We are now going to elaborate on the final observations of this discussion. Let $C_{\text{opt,2}}=\frac{1}{Cc_2}\min_{l\in[L]}\{{C_{\text{opt,2},\mathcal{J}_{e,l}},C_{\text{opt,2},\mathcal{J}_{o,l}}}\}$. Then,
\begin{align*}
     \underset{ n \rightarrow \infty}{\lim\sup}\,\underset{f^* \in \mathcal{F}     }{ \sup     }\,\mathbb{P}\left(  \| \widetilde{\theta}-\theta^* \|_{2}^2 \,\geq \,       \frac{C_{\text{opt,2}}}{  n } \right)\ge \underset{ n \rightarrow \infty}{\lim\sup}\,\underset{f^* \in \mathcal{F}     }{ \sup     }\,\mathbb{P}\left( \Big\{ \|\widetilde{\theta}-\theta^* \|_{2}^2 \,\geq \,       \frac{C_{\text{opt,2}}}{  n } \Big\}\cap\Omega\right) >0.
\end{align*}
The logic behind the second inequality is as follows. For any $f^*\in\mathcal{F},$
\begin{align}
\label{AppendixK-eq3-lemma1}
      &\mathbb{P}\left( \Big\{ \| \widetilde{\theta}-\theta^* \|_{2}^2 \,\geq \,       \frac{C_{\text{opt,2}}}{  n } \Big\}\cap\Omega\right)\nonumber
      \\
      \ge &\mathbb{P}\left( \Big\{ \| \widetilde{\theta}-\theta^* \|_{2}^2 \,\geq \,       \frac{LC_{\text{opt,2}}}{  n } \Big\}\cap\Omega\right)\nonumber
      \\
      \ge&\mathbb{P}\left(\bigcup_{l=1}^L \Big\{\Big\{ \| \widetilde{\mu}^{\mathcal{J}_{e,l}}-\mu^{*,\mathcal{J}_{e,l}} \|_{2}^2 \,\geq \,       \frac{LC_{\text{opt,2},\mathcal{J}_{e,l}}}{  Cc_2n } \Big\}\bigcup \Big\{ \| \widetilde{\mu}^{\mathcal{J}_{o,l}}-\mu^{*,\mathcal{J}_{o,l}} \|_{2}^2 \,\geq \,       \frac{LC_{\text{opt,2},\mathcal{J}_{o,l}}}{  Cc_2n } \Big\}\Big\}\cap\Omega\right)\nonumber
      \\
      >&C_0,
\end{align}
where the first inequality is followed by the fact that $L=\frac{1}{C_\beta}\log(n)\ge1.$ Moreover the third inequality is achieved from Inequality (\ref{AppendixK-eq1-lemma1}) and (\ref{AppendixK-eq2-lemma1}). The derivation of the second inequality is explained below.
Under the event $\Omega$ the relation, $$\| \widetilde{\theta}-\theta^* \|_{2}^2=\|\widetilde{\mu}-\mu^*\|_2^2=\sum_{l=1}^L\Big(\| \widetilde{\mu}^{\mathcal{J}_{e,l}}-\mu^{*,\mathcal{J}_{e,l}} \|_{2}^2 +\| \widetilde{\mu}^{\mathcal{J}_{o,l}}-\mu^{*,\mathcal{J}_{o,l}} \|_{2}^2\Big),$$ is satisfied. Therefore, $$\| \widetilde{\theta}-\theta^*  \|_{2}^2\ge \Big\{\| \widetilde{\mu}^{\mathcal{J}_{e,l}}-\mu^{*,\mathcal{J}_{e,l}} \|_{2}^2,\| \widetilde{\mu}^{\mathcal{J}_{o,l}}-\mu^{*,\mathcal{J}_{o,l}} \|_{2}^2\Big\},$$ for any $l\in[L].$ These specific conditions allow us to conclude that the event $\{\| \widetilde{\theta}-\theta^* \|_{2}^2< \frac{LC_{\text{opt,2}}}{  n } \}$ is contained in the event
$$\bigcap_{l=1}^L \Big\{\Big\{ \| \widetilde{\mu}^{\mathcal{J}_{e,l}}-\mu^{*,\mathcal{J}_{e,l}} \|_{2}^2 \,< \,       \frac{LC_{\text{opt,2},\mathcal{J}_{e,l}}}{  Cc_2n } \Big\}\bigcap \Big\{ \| \widetilde{\mu}^{\mathcal{J}_{o,l}}-\mu^{*,\mathcal{J}_{o,l}} \|_{2}^2 \,< \,       \frac{LC_{\text{opt,2},\mathcal{J}_{o,l}}}{  Cc_2n } \Big\}\Big\},$$
and consequently the second inequality in Inequality (\ref{AppendixK-eq3-lemma1}) is satisfied. Specifically, suppose that  $\|\widetilde{\theta}-\theta^*  \|_{2}^2 \,< \,       \frac{LC_{\text{opt,2}}}{  n } $. For any $l\in[L]$ by the definition of $C_{\text{opt,2}}$ we have that $\| \widetilde{\theta}-\theta^*  \|_{2}^2 \,< \,       \frac{LC_{\text{opt,2},\mathcal{J}_{e,l}}}{  Cc_2n }$. It follows that $\| \widetilde{\mu}^{\mathcal{J}_{e,l}}-\mu^{*,\mathcal{J}_{e,l}} \|_{2}^2 \,< \,       \frac{LC_{\text{opt,2},\mathcal{J}_{e,l}}}{  Cc_2n }$. Similarly, $\| \widetilde{\mu}^{\mathcal{J}_{o,l}}-\mu^{*,\mathcal{J}_{o,l}}\|_{2}^2 \,< \,       \frac{LC_{\text{opt,2},\mathcal{J}_{o,l}}}{  Cc_2n }$, concluding the contention of the aforementioned events.

The claim is then followed by taking $C_{\text{opt}}=\frac{1}{2}\min\{C_{\text{opt,1}},C_{\text{opt,2}}\}$.
\end{proof}
\newpage
\section{Additional Technical Results for proof of Theorem \ref{thm:main tv} and Theorem \ref{Penalized}}\label{AppendixI}

\begin{lemma}\label{lemma1}
Let $\{x_{i,j}\}_{i=1,j=1}^{n,m_i}$ be $\beta$-mixing random variables in $[0,1]$ with coefficients having exponential decay as in Assumption \ref{assume:tv functions}. Assume that $cm\le m_i\le Cm$ for any $i\in\{1,...,n\}$ with $m=\Big(\frac{1}{n}\sum_{i=1}^{n}\frac{1}{m_i}\Big)^{-1}$ and some constants $c,C>0$.
The event $\Omega$  such that 
	\[
	  \| p\|_{\infty} ^2\,\leq\,  \frac{4C_k^2}{cnm}\sum_{i=1}^{n}  \sum_{j=1}^{m_i} p(x_{i,j})^2,
	\]
	for all  $p(x)$  polynomial of  degree $k-1$ happens, with probability at least $1 - \frac{32\log^3(n)}{C_{\beta}^3c_1^2c^2n^2m^2}- \frac{\log(n)}{n^2}$, where $c_1>0$ and $C_k>0$ that only depends on $k,$ are absolute constants.
\end{lemma}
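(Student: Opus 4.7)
The proof plan reduces to establishing that the empirical Gram matrix of an orthonormal polynomial basis stays close to its deterministic limit. I would fix an $L^2([0,1])$-orthonormal basis $\{\phi_0,\ldots,\phi_{k-1}\}$ of the $k$-dimensional space $\mathcal{P}_{k-1}$ of polynomials of degree at most $k-1$ (for instance, suitably rescaled Legendre polynomials). Since $\mathcal{P}_{k-1}$ is finite-dimensional, norm equivalence supplies a constant $C_k>0$ with $\|p\|_\infty \le C_k \|a\|_2$ whenever $p=\sum_\ell a_\ell\phi_\ell$. Writing
\[
\sum_{i=1}^n\sum_{j=1}^{m_i} p(x_{i,j})^2 \;=\; \Bigl(\sum_i m_i\Bigr)\,a^{\top}\hat{G}\, a, \qquad \hat{G}_{\ell\ell'}=\frac{1}{\sum_i m_i}\sum_{i,j}\phi_\ell(x_{i,j})\phi_{\ell'}(x_{i,j}),
\]
and using $\sum_i m_i\ge c n m$, the lemma will follow once I show $\hat{G}\succeq \tfrac12 I$ on the required event, because then $\sum_{i,j}p(x_{i,j})^2 \ge \tfrac{cnm}{2}\|a\|_2^2 \ge \tfrac{cnm}{2 C_k^2}\|p\|_\infty^2$, matching the stated bound up to adjustment of the constant $C_k$.

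Because $\hat{G}$ has fixed dimension $k\times k$, operator-norm control of $\hat{G}-G$ reduces to a uniform entrywise bound $|\hat{G}_{\ell\ell'}-G_{\ell\ell'}| \le \varepsilon_k$, where $G$ is the population Gram matrix (equal to $I$ after the uniform-design reduction performed in Step~0 of the proof of Theorem~\ref{thm:main tv}). To handle each entry I would invoke the blocking construction of Appendix~\ref{BetaMsection} with block length $L=(2/C_\beta)\log n$, producing independent copies $\{x_{i,j}^*\}$ such that $\Omega_x:=\bigcap_i\{x_{i,j}=x_{i,j}^*\ \forall j\}$ satisfies $\mathbb{P}(\Omega_x^c)\lesssim \log(n)/n^2$ by the exponential decay of $\beta_x$; this contributes the second term of the stated probability. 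On $\Omega_x$ the empirical Gram matrix coincides with its copied analogue $\hat G^*$, which splits into the $L$ even-block sums (mutually independent) plus the $L$ odd-block sums (mutually independent).

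For each block I would apply Chebyshev's inequality to the block sum $Z_l^e:=\sum_{i\in\mathcal J_{e,l}}\sum_{j=1}^{m_i}\phi_\ell(x_{i,j}^*)\phi_{\ell'}(x_{i,j}^*)$. Uniform boundedness of $\phi_\ell\phi_{\ell'}$ on $[0,1]$, together with $|\mathcal J_{e,l}|\asymp n/L$, $m_i\le Cm$, and the independence of $\{x_{i,j}\}_j$ for each fixed $i$ from Assumption~\ref{assume:tv functions}\textbf{a}, bound $\mathrm{Var}(Z_l^e)$ by a multiple of $nm/L$. A Chebyshev bound at threshold of order $1/L$ on each of the $2L$ block-normalized deviations, followed by a union bound over the blocks and the $O(1)$ entries of $\hat G$, yields a failure probability matching the first term $\tfrac{32\log^3(n)}{C_\beta^3 c_1^2 c^2 n^2 m^2}$; combining with $\mathbb{P}(\Omega_x^c)$ completes the proof.

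The main obstacle I anticipate is the variance calculation for $Z_l^e$: within one block the sequence $\{x_{i,j}^*\}_{i\in\mathcal J_{e,l}}$ inherits the $\beta$-mixing dependence of the original process across $i$, so $\mathrm{Var}(Z_l^e)$ is not cleanly a sum of independent terms. I would control the cross-covariances for $i\ne i'$ in the block using the uniform boundedness of $\phi_\ell\phi_{\ell'}$ and the exponential decay of $\beta_x$, in the same spirit as the bounds on the events $\mathcal E_1,\mathcal E_2$ in the proof of Theorem~\ref{thm:main tv}; this only perturbs the final constants by factors independent of $n$ and $m$.
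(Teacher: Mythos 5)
Your overall skeleton (orthonormal polynomial basis, norm equivalence $\|p\|_\infty \le C_k\|a\|_2$, blocking with $L\asymp \log n$, coupling to independent copies) matches the paper's, but your proof as written cannot deliver the probability bound claimed in the lemma, for two distinct quantitative reasons. First, Chebyshev's inequality on each block sum only gives polynomial tails: with $|\mathcal J_{e,l}|m \asymp nm/L$ bounded summands and a deviation threshold of order $nm/L$ per block, Chebyshev yields a per-block failure probability of order $L/(nm)$, hence order $\log^2(n)/(nm)$ after the union bound over $2L$ blocks --- far larger than the stated $\log^3(n)/(n^2m^2)$. The paper instead gets an exponential tail by applying the Dvoretzky--Kiefer--Wolfowitz--Massart inequality to the order statistics of the i.i.d.\ copies within each block, comparing them to evenly spaced points $t_{i,j}$ and using the mean value theorem with $\|(p^2)'\|_\infty \lesssim \sum_l a_l^2$; choosing the DKW threshold $\sqrt{\log(N_l)/N_l}$ gives per-block failure probability $\asymp L^2/(n^2m^2)$ and the stated $32\log^3(n)/(C_\beta^3 c_1^2 c^2 n^2m^2)$ overall. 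Since your summands $\phi_\ell\phi_{\ell'}(x^*_{i,j})$ are bounded, you could salvage your Gram-matrix route by replacing Chebyshev with Hoeffding's inequality per block, which gives tails $\exp(-c\,nm/\log n)$, amply below the target.

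Second, your treatment of the coupling error also overshoots the budget: the union bound $\mathbb{P}(\Omega_x^c)\le n\beta_x(L)\le 1/n$ (not $\log(n)/n^2$ as you assert), so conditioning on the full event $\Omega_x$ can only prove the lemma with failure probability $O(1/n)$. The paper avoids this by never requiring exact coincidence of all copies: it bounds the coupling contribution $I_3 = \bigl|\frac{1}{Cnm}\sum_{i,j}\{p(x_{i,j})^2 - p(x^*_{i,j})^2\}\bigr| \le \|(p^2)'\|_\infty \cdot \frac{1}{Cnm}\sum_{i,j}|x_{i,j}-x^*_{i,j}|$ and applies Markov's inequality to the average coupling distance, whose expectation is $O(n^{-2})$ since $\mathbb{P}(x_{i,j}\ne x^*_{i,j})\le n^{-2}$ and $|x_{i,j}-x^*_{i,j}|\le 1$; this gives the event $\{\frac{1}{Cnm}\sum|x_{i,j}-x^*_{i,j}|\le 2/\log n\}$ with probability $1-\log(n)/n^2$, and the resulting $O(1/\log n)\cdot\sum_l a_l^2$ perturbation is absorbed in the final self-bounding rearrangement. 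Finally, your anticipated ``main obstacle'' is not an obstacle at all: by Lemma~\ref{Ind-cop}, the copies $\{x^*_{i,j}\}_{i\in\mathcal J_{e,l}}$ within a given even (or odd) index set are \emph{jointly independent} --- that is precisely what the $2L$-spacing of the blocking construction buys --- so no cross-covariance control via mixing coefficients is needed inside a block.
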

\begin{proof} Let
\begin{equation}
\label{Lchoice-lemma1-t1-t2}
L=\frac{2}{C_{\beta}}\log(n).
\end{equation} 
Using the exponential decay in Assumption 1{\bf{f}} we have that
 \begin{equation}
 \label{betamixcoef-ineq-aux-lemma1-t1-t2}
     \beta_x(L)\le\frac{1}{n^2}.
 \end{equation} Subsequently throughout our discussion we make frequent references to the copies of the random variables $x_{i,j}$ as they are constructed in Section \ref{BetaMsection}. These copies are denoted as $x_{i,j}^*$. Furthermore as delineated in Section \ref{BetaMsection} during the formation of these copies the quantities $ | \mathcal J_{e,l}|$ and $ | \mathcal J_{o,l}|$, which represent the count of even and odd blocks respectively, satisfy $$ | \mathcal J_{e,l}|,| \mathcal J_{o,l}|  \asymp n/L.$$ Thus there exist positive constants $c_1$ and $c_2$ such that  
\begin{equation}
\label{aux-lemma1-t1-t2-block}    
 c_1 \frac{n}{L}\le|\mathcal{J}_{o,l}|,|\mathcal{J}_{e,l}|\le c_2 \frac{n}{L}.
\end{equation}
	Let  $\{ q_l\}_{l=0}^{k-1}$  be  an orthonormal  basis of the set of polynomials in $[0,1]$  with respect to the usual inner product in $\mathcal{L}_2([0,1])$. Thus,\[
	 \int_{0}^1    q_l(x) q_{  l^{\prime} }(x)dx  \,=\,  1_{  \{l= l^{\prime}\} },
	\]
	for all $l,l^{\prime} \in \{0,\ldots,k-1\}$.  Then any polynomial $p(x)$ of degree $k-1$ can be written as
      \[
      p\,=\,  \sum_{l=0}^{k-1}  a_l q_l,
      \]
	 for some constants $a_0,\ldots, a_{k-1} \in \mathbb{R}$. We notice that 
    \begin{equation}
	\label{Leqn:e0}
	 \sum_{l=0}^{k-1}  a_l^2\,=\,\int_0^1  p(x)^2 dx.
 \end{equation}
By applying the triangle inequality and adding and subtracting the term $$\frac{1}{Cnm} \sum_{i=1}^{n} \sum_{j=1}^{m_i} p(x_{i,j})^2,$$ we obtain the following
\begin{align*}
    &\displaystyle \left	\vert	\frac{1}{Cnm}\sum_{i=1}^{n}   \sum_{j=1}^{m_i}\int_{0}^{1}  p(x)^2       \,-\,    \frac{1}{Cnm}\sum_{i=1}^{n}   \sum_{j=1}^{m_i}   p(x_{i,j})^2 \right\vert 
    \\
    \le&
    \displaystyle \left	\vert	\frac{1}{Cnm}\sum_{l=1}^{L}\sum_{i\in \mathcal{J}_{e,l}}   \sum_{j=1}^{m_i}\int_{0}^{1}  p(x)^2       \,-\,    \frac{1}{Cnm}\sum_{l=1}^{L}\sum_{i\in \mathcal{J}_{e,l}}   \sum_{j=1}^{m_i}  p(x_{i,j}^*)^2 \right\vert 
    \\
    +&\displaystyle \left	\vert	\frac{1}{Cnm}\sum_{l=1}^{L}\sum_{i\in \mathcal{J}_{o,l}}   \sum_{j=1}^{m_i}\int_{0}^{1}  p(x)^2       \,-\,    \frac{1}{Cnm}\sum_{l=1}^{L}\sum_{i\in \mathcal{J}_{o,l}}   \sum_{j=1}^{m_i}  p(x_{i,j}^*)^2 \right\vert 
    \\
    +&\left\vert\frac{1}{Cnm}\sum_{i=1}^{n}   \sum_{j=1}^{m_i}  \{ p(x_{i,j})^2-p(x_{i,j}^*)^2\}\right\vert=: I_1+I_2+I_3.
\end{align*}
Now we analyze each of the terms $I_1,I_2$ and $I_3.$ We start with $I_1.$ The analysis for $I_2$ is similar and is omitted. Applying triangle inequality and adding and subtracting $\frac{1}{Cnm}\sum_{l=1}^{L}\sum_{i\in \mathcal{J}_{e,l}}   \sum_{j=1}^{m_i}   p(t_{i,j})^2,$
\begin{equation}
	\label{Leqn:e1}
		\begin{array}{lll}
		&\displaystyle \left	\vert	\frac{1}{Cnm}\sum_{l=1}^{L}\sum_{i\in \mathcal{J}_{e,l}}   \sum_{j=1}^{m_i}\int_{0}^{1}  p(x)^2       \,-\,    \frac{1}{Cnm}\sum_{l=1}^{L}\sum_{i\in \mathcal{J}_{e,l}}   \sum_{j=1}^{m_i}  p(x_{i,j}^*)^2 \right\vert  
  \\
  \leq&	\displaystyle \left	\vert	\frac{1}{Cnm}\sum_{l=1}^{L}\sum_{i\in \mathcal{J}_{e,l}}   \sum_{j=1}^{m_i}\int_{0}^{1}  p(x)^2       \,-\,    \frac{1}{Cnm}\sum_{l=1}^{L}\sum_{i\in \mathcal{J}_{e,l}}   \sum_{j=1}^{m_i}   p(t_{i,j})^2 \right\vert \\
		+ &  \displaystyle \,\,\left	\vert	 \frac{1}{Cnm}\sum_{l=1}^{L}\sum_{i\in \mathcal{J}_{e,l}}   \sum_{j=1}^{m_i}   p(x^*_{i,j})^2      \,-\,    \frac{1}{Cnm}\sum_{l=1}^{L}\sum_{i\in \mathcal{J}_{e,l}}   \sum_{j=1}^{m_i}  p(t_{i,j})^2 \right\vert\\
		 =:& A+B,
	\end{array}
\end{equation}
where for each $l\in\{1,...,L\}$, $\{t_{i,j}\}_{i\in{\mathcal{J}_{e,l}},j=1,...,m_i}$ are in increasing order and evenly spaced in $(0,1)$. We now proceed to bound $A$ and $B$. 
First, we conduct the analysis for $B$.
Denote by $\{x^*_{(i),(j)} \}_{i\in{\mathcal{J}_{e,l}},j=1,...,m_i}$ be the order statistics of $\{x^*_{i,j} \}_{i\in{\mathcal{J}_{e,l}},j=1,...,m_i}$. Since $\{x^*_{i,j} \}_{i\in{\mathcal{J}_{e,l}},j=1,...,m_i}$ are independent, by Dvoretzky–Kiefer–Wolfowitz–Massart inequality the event
\[
  \underset{i\in{\mathcal{J}_{e,l}},j=1,...,m_i}{\max}\, \vert      x^*_{(i) ,(j)}  - t_{i,j}\vert \,\leq\,    \sqrt{\frac{\log (\sum_{i\in\mathcal{J}_{e,l}}m_i)}{\sum_{i\in\mathcal{J}_{e,l}}m_i}}\le \sqrt{\frac{\log(\sum_{i\in \mathcal{J}_{e,l}}m_i)L}{cc_1nm}},
\]
happens with a probability of at least 
\begin{align*}
    &1 - 2\exp\left(   - 2 \log\Big(\sum_{i\in\mathcal{J}_{e,l}}m_i\Big) \right)
    \\
    \ge& 1 - 2\exp\left(   - 2 \log\Big(cc_1nm/L\Big) \right)
    \\
    =&1-\frac{2L^2}{c_1^2c^2n^2m^2}=1-\frac{8\log^2(n)}{C_{\beta}^2c_1^2c^2n^2m^2}.
\end{align*}
The inequality is followed by Equation (\ref{Lchoice-lemma1-t1-t2}), Inequality (\ref{aux-lemma1-t1-t2-block}), and Assumption 1{\bf{f}}. Hence, by union bound and our choice of $L$ in Equation (\ref{Lchoice-lemma1-t1-t2}) we get that
the event
\[
  \underset{l=0,...,L-1}{\max}\, \underset{i\in{\mathcal{J}_{e,l}},j=1,...,m_i}{\max}\, \vert      x^*_{(i), (j)}  - t_{i,j}\vert \,\leq\,    \sqrt{\frac{\log (\sum_{i\in\mathcal{J}_{e,l}}m_i)}{\sum_{i\in\mathcal{J}_{e,l}}m_i}}\le \sqrt{\frac{\log(\sum_{i\in \mathcal{J}_{e,l}}m_i)L}{cc_1nm}},
\]
holds with a probability of at least 
$$1-\frac{16\log^3(n)}{C_{\beta}^3c_1^2c^2n^2m^2}.$$
Denote this event by $\widetilde{\Omega}$. Therefore, if  $\widetilde{\Omega}$ holds, it follows that
\begin{equation}
	\label{Leqn:e2}
	   \begin{array}{lll}
		B    & \leq&     \displaystyle  \frac{1}{Cnm}\sum_{l=1}^{L}\sum_{i\in \mathcal{J}_{e,l}}   \sum_{j=1}^{m_i} \left\vert   p(x_{(i),(j)}^*)^2 - p(t_{i,j})^2 \right\vert  \\
		&\leq &\displaystyle  \frac{1}{Cnm}\sum_{l=1}^{L}\sum_{i\in \mathcal{J}_{e,l}}   \sum_{j=1}^{m_i}   \| (p^2){\prime} \|_{\infty}   \left\vert  x_{(i),(j)}^* - t_{i,j} \right\vert  \\
		&\leq &\displaystyle  \frac{ \| (p^2){\prime} \|_{\infty}   \sqrt{\log (Cc_2nm/L)L} }{Cnm\sqrt{cc_1nm}}\sum_{l=1}^{L}\sum_{i\in \mathcal{J}_{e,l}}   \sum_{j=1}^{m_i}   1
  \\
		& \leq&\displaystyle  \frac{ \| (p^2){\prime} \|_{\infty}   \sqrt{\log (Cc_2nm/L)L} }{Cnm\sqrt{cc_1nm}}Lc_2\frac{n}{L}Cm
  \\
		& \leq&\displaystyle      c_2\| (p^2){\prime} \|_{\infty}     \sqrt{\frac{2\log (Cc_2nm)\log(n)}{C_{\beta}cc_1nm}},
	\end{array}
\end{equation}
where the second inequality follows from the mean value theorem, and the third one is due to the event $\widetilde{\Omega}$. The fourth and the last inequality are obtained by Equation (\ref{Lchoice-lemma1-t1-t2}), Inequality (\ref{aux-lemma1-t1-t2-block}), and Assumption 1{\bf{f}}.
\\
Now the analysis for $A$ is presented. Let $l\in[L].$ For $i \in\mathcal{J}_{e,l}$ and $j\in\{1,...,m_i\}$ define  $A_1^{1} =(0, t_{1,1} )$, $A_j^i =  ( t_{i,j-1 }, t_{i,j}  )  $, and $A_{m_{L\lfloor n/L\rfloor+l}}^{L\lfloor n/L\rfloor+l}=(t_{L\lfloor n/L\rfloor+l,m_{L\lfloor n/L\rfloor+l}},1)$. Then we observe the following. First we use triangle inequality and multiply and divide by $\sum_{i\in{\mathcal{J}_{e,l}}}m_i$. Next observe that $$\frac{1}{\sum_{i\in{\mathcal{J}_{e,l}}}m_i} \sum_{i\in \mathcal{J}_{e,l}}   \sum_{j=1}^{m_i}   \int_{0}^1  p(x)^2 dx=\int_{0}^1  p(x)^2 dx.$$ Further by our choice of the intervals $A_{i}^j$ we have that $$\frac{1}{\sum_{i\in{\mathcal{J}_{e,l}}}m_i}=\int_{A_{i}^j}1dx.$$ Thus, 
\begin{equation}
	\label{Leqn:e31}
	\begin{array}{lll}
		 A & \le&  \displaystyle \sum_{l=1}^{L}\frac{\sum_{i\in{\mathcal{J}_{e,l}}}m_i}{Cnm}  \left\vert\frac{1}{\sum_{i\in{\mathcal{J}_{e,l}}}m_i} \sum_{i\in \mathcal{J}_{e,l}}   \sum_{j=1}^{m_i}   \{ \int_{0}^1  p(x)^2 dx   \,-\,      p(t_{i,j})^2  \}  \right\vert
   \\
&=& \displaystyle \sum_{l=1}^{L}\frac{\sum_{i\in{\mathcal{J}_{e,l}}}m_i}{Cnm}  \left\vert   \int_{0}^1  p(x)^2 dx   \,-\,\frac{1}{\sum_{i\in{\mathcal{J}_{e,l}}}m_i} \sum_{i\in \mathcal{J}_{e,l}}   \sum_{j=1}^{m_i}      \{ p(t_{i,j})^2  \}  \right\vert
  \\
&=& \displaystyle \sum_{l=1}^{L}\frac{\sum_{i\in{\mathcal{J}_{e,l}}}m_i}{Cnm}  \left\vert   \int_{0}^1  p(x)^2 dx   \,-\, \sum_{i\in \mathcal{J}_{e,l}}   \sum_{j=1}^{m_i}      \{ \int_{A_j^i}p(t_{i,j})^2 dx \}  \right\vert
  \\
&=& \displaystyle \sum_{l=1}^{L}\frac{\sum_{i\in{\mathcal{J}_{e,l}}}m_i}{Cnm}  \left\vert   \sum_{i\in \mathcal{J}_{e,l}}   \sum_{j=1}^{m_i}      \{ \int_{A_j^i}p(x)^2 -p(t_{i,j})^2dx  \}  \right\vert.
 	\end{array}
\end{equation}
From Inequality (\ref{Leqn:e31}) and using the same inequalities as for $B$ we have that
\begin{equation}
	\label{Leqn:e3}
	\begin{array}{lll}
		 A 
&\le& \displaystyle \sum_{l=1}^{L}\frac{\sum_{i\in{\mathcal{J}_{e,l}}}m_i}{Cnm}     \sum_{i\in \mathcal{J}_{e,l}}   \sum_{j=1}^{m_i}      \{ \int_{A_j^i}\left\vert p(x)^2 -p(t_{i,j})^2\right\vert dx \}  

\\
&\le& \displaystyle \sum_{l=1}^{L}\frac{\sum_{i\in{\mathcal{J}_{e,l}}}m_i}{Cnm}     \sum_{i\in \mathcal{J}_{e,l}}   \sum_{j=1}^{m_i}       \| (p^2){\prime} \|_{\infty}    \int_{A_j^i}        \vert   x -  t_{i,j} \vert  dx 

\\
&\le& \displaystyle \sum_{l=1}^{L}\frac{\sum_{i\in{\mathcal{J}_{e,l}}}m_i}{Cnm}     \sum_{i\in \mathcal{J}_{e,l}}   \sum_{j=1}^{m_i}   \frac{1}{\sum_{i\in{\mathcal{J}_{e,l}}}m_i}    \| (p^2){\prime} \|_{\infty}    \int_{A_j^i}          dx  

\\
&=& \displaystyle \sum_{l=1}^{L}\frac{\sum_{i\in{\mathcal{J}_{e,l}}}m_i}{Cnm}     \sum_{i\in \mathcal{J}_{e,l}}   \sum_{j=1}^{m_i}   \frac{1}{\Big(\sum_{i\in{\mathcal{J}_{e,l}}}m_i\Big)^2}    \| (p^2){\prime} \|_{\infty}       
\\
		  	  	  	   & =& 
              \displaystyle  \sum_{l=1}^{L}\frac{ \| (p^2){\prime} \|_{\infty}  }{Cnm}
               \\
		& =&\displaystyle   \| (p^2){\prime} \|_{\infty}     \frac{2\log(n)}{C_\beta Cnm}.
 	\end{array}
\end{equation}
Then combining (\ref{Leqn:e1})--(\ref{Leqn:e3}) we obtain 
\begin{align}
	\label{Leqn:e4}
	\begin{array}{lll}
		&\displaystyle \left	\vert	\frac{1}{Cnm}\sum_{l=1}^{L}\sum_{i\in \mathcal{J}_{e,l}}   \sum_{j=1}^{m_i}\int_{0}^{1}  p(x)^2       \,-\,    \frac{1}{Cnm}\sum_{l=1}^{L}\sum_{i\in \mathcal{J}_{e,l}}   \sum_{j=1}^{m_i}  p(x_{i,j}^*)^2 \right\vert  \\&\leq  \displaystyle      \displaystyle      c_2\| (p^2){\prime} \|_{\infty}     \sqrt{\frac{2\log (Cc_2nm)\log(n)}{C_{\beta}cc_1nm}}+\displaystyle   \| (p^2){\prime} \|_{\infty}     \frac{2C_{\beta}\log(n)}{Cnm}.
	\end{array}  
\end{align}
Similarly we have that
\begin{align}
	\label{Leqn:e5}
	\begin{array}{lll}
	&	\displaystyle \left	\vert	\frac{1}{Cnm}\sum_{l=1}^{L}\sum_{i\in \mathcal{J}_{o,l}}   \sum_{j=1}^{m_i}\int_{0}^{1}  p(x)^2       \,-\,    \frac{1}{Cnm}\sum_{l=1}^{L}\sum_{i\in \mathcal{J}_{o,l}}   \sum_{j=1}^{m_i}  p(x_{i,j}^*)^2 \right\vert \\
 &\leq \displaystyle      \displaystyle      c_2\| (p^2){\prime} \|_{\infty}     \sqrt{\frac{2\log (Cc_2nm)\log(n)}{C_{\beta}cc_1nm}}+\displaystyle   \| (p^2){\prime} \|_{\infty}     \frac{2\log(n)}{C_{\beta}Cnm},
	\end{array}  
\end{align}
with a probability of at least
$$1-\frac{16\log^3(n)}{C_{\beta}^3c_1^2c^2n^2m^2}.$$
Now we analyze the term $I_3.$ We observe that, 
\begin{align}
\label{helpCon-Eq2}
    \left\vert\frac{1}{Cnm}\sum_{i=1}^{n}   \sum_{j=1}^{m_i}   \{p(x_{i,j})^2-p(x_{i,j}^*)^2\}\right\vert\le\| (p^2){\prime} \|_{\infty}\frac{1}{Cnm}\sum_{i=1}^{n}   \sum_{j=1}^{m_i}   \left\vert x_{i,j}-x_{i,j}^*\right\vert,
\end{align}
and by Markov's inequality
\begin{align*}
    \mathbb{P}\Big(\frac{1}{Cnm}\sum_{i=1}^{n}   \sum_{j=1}^{m_i}   \left\vert x_{i,j}-x_{i,j}^*\right\vert\ge2\frac{1}{\log(n)}\Big)\le \frac{\mathbb{E}\Big(\frac{1}{Cnm}\sum_{i=1}^{n}   \sum_{j=1}^{m_i}   \left\vert x_{i,j}-x_{i,j}^*\right\vert\Big)\log(n)}{2},
\end{align*}
which can be bounded as,
\begin{align*}
  \frac{1}{2Cnm}\sum_{i=1}^{n}   \sum_{j=1}^{m_i}    \mathbb{E}\Big(\vert x_{i,j}-x_{i,j}^*\vert\Big)\log(n).
\end{align*}
Then by Lemma \ref{Ind-cop} and Inequality (\ref{betamixcoef-ineq-aux-lemma1-t1-t2}) we have that $\mathbb{P}(\{ x_{i,j} \not=x_{i,j}^* \})\le \frac{1}{n^2}$. It follows that,
\begin{align*}
    &\frac{1}{2Cnm}\sum_{i=1}^{n}   \sum_{j=1}^{m_i}    \mathbb{E}\Big(\vert x_{i,j}-x_{i,j}^*\vert\Big)\log(n)
    \\
    =&\frac{1}{2Cnm}\sum_{i=1}^{n}   \sum_{j=1}^{m_i}    \mathbb{E}\Big(\vert x_{i,j}-x_{i,j}^*\vert   \mathbf{1}_{ \{ x_{i,j} \not=x_{i,j}^* \} }  \Big)\log(n)
    \\
    +&\frac{1}{2Cnm}\sum_{i=1}^{n}   \sum_{j=1}^{m_i}    \mathbb{E}\Big(\vert x_{i,j}-x_{i,j}^*\vert   \mathbf{1}_{ \{ x_{i,j} =x_{i,j}^* \} }  \Big)\log(n)
    \\
    =&\frac{1}{2Cnm}\sum_{i=1}^{n}   \sum_{j=1}^{m_i}    \mathbb{E}\Big(\vert x_{i,j}-x_{i,j}^*\vert   \mathbf{1}_{ \{ x_{i,j} \not=x_{i,j}^* \} }  \Big)\log(n)+0
    \\
    \le&\frac{1}{2Cnm}\sum_{i=1}^{n}   \sum_{j=1}^{m_i}    \mathbb{E}\Big(2  \mathbf{1}_{ \{ x_{i,j} \not=x_{i,j}^* \} }  \Big)\log(n)
    \\
    \le&\frac{\log(n)}{Cnm}\sum_{i=1}^{n}   \sum_{j=1}^{m_i} \mathbb{P}(\{ x_{i,j} \not=x_{i,j}^* \})
    \\
    \le& \frac{\log(n)}{n^2}.
\end{align*}
Therefore the event 
$$\frac{1}{Cnm}\sum_{i=1}^{n}   \sum_{j=1}^{m_i}   \left\vert x_{i,j}-x_{i,j}^*\right\vert\le2\frac{1}{\log(n)},$$
holds with probability at least $1- \frac{\log(n)}{n^2}.$ In consequence, 
\begin{align}
\label{Leqn:e6}
    \| (p^2){\prime} \|_{\infty}\frac{1}{Cnm}\sum_{i=1}^{n}   \sum_{j=1}^{m_i}   \left\vert x_{i,j}-x_{i,j}^*\right\vert\le 2\| (p^2){\prime} \|_{\infty}\frac{1}{\log(n)},
\end{align}
holds with probability at least $1- \frac{\log(n)}{n^2}.$
Furthermore, 
\[
    \begin{array}{lll}
    (p(x)^2)^{\prime} &=& \displaystyle \left[ \sum_{l=0}^{k-1}  a_l^2 q_l(x)^2  \,+\,  \sum_{l_1 \neq  l_2}  a_{l_1} a_{l_2}  q_{l_1}(x) q_{l_2}(x) \right]^{\prime} \\ 
    & &\displaystyle \left[ \sum_{l=0}^{k-1}  a_l^2 [q_l(x)^2]^{\prime}  \,+\,  \sum_{l_1 \neq  l_2}  a_{l_1} a_{l_2}  [q_{l_1}(x) q_{l_2}(x)]^{\prime}  \right],\\ 
    \end{array}
\]
and so 
\[
\|  (p(x)^2){\prime}\|_{\infty}  \,\leq \, c_k' \|a\|_{\infty}^2,
\]
for some constant $c_k'>0$ that only depends on $k$. Therefore, from (\ref{Leqn:e4}),  (\ref{Leqn:e5}) and  (\ref{Leqn:e6})
\begin{equation*}
\begin{array}{lll}
	& &	\displaystyle \left \vert \frac{1}{Cnm}\sum_{i=1}^{n}   \sum_{j=1}^{m_i} \sum_{l=0}^{k-1} a_l^2    \,-\,    \frac{1}{Cnm}\sum_{i=1}^{n}   \sum_{j=1}^{m_i}  p(x_{i,j})^2\right \vert\\ & \leq& \displaystyle 2c_k'c_2 \|a\|_{\infty}^2  \sqrt{\frac{2\log (Cc_2nm)\log(n)}{C_{\beta}cc_1nm}}+2\displaystyle   \| (p^2){\prime} \|_{\infty}     \frac{2\log(n)}{Cnm}+2\| (p^2){\prime} \|_{\infty}\frac{1}{\log(n)}\\
		 & \leq& \displaystyle  2c_k'c_2 \left(  \sum_{l=0}^{k-1} a_l^2    \right)\sqrt{\frac{2\log (Cc_2nm)\log(n)}{C_{\beta}cc_1nm}}
   \\
   &+& 2\displaystyle  c_k' \left(  \sum_{l=0}^{k-1} a_l^2    \right)   \frac{2\log(n)}{Cnm}+\displaystyle  2 c_k'\left(  \sum_{l=0}^{k-1} a_l^2    \right)\frac{1}{\log(n)}.\\
\end{array}
\end{equation*}
This implies
{\small{\begin{equation}
\label{Leqn:e7}
\begin{array}{lll}
		& &\displaystyle \sum_{l=0}^{k-1} a_l^2    
		\\
		& \leq&    \displaystyle \frac{1}{\frac{\sum_{i=1}^nm_i}{Cnm}  -  2c_k'c_2 \sqrt{\frac{2\log (Cc_2nm)\log(n)}{C_{\beta}cc_1nm}}-4c_k^{'} \frac{\log(n)}{Cnm} -2c_k'\frac{1}{\log(n)} }   \frac{1}{Cnm}\sum_{i=1}^{n}  \sum_{j=1}^{m_i}   p(x_{i,j})^2\\
		 & \leq&  \displaystyle \frac{4C}{cCnm}\sum_{i=1}^{n}  \sum_{j=1}^{m_i}   p(x_{i,j})^2,
   \end{array}
\end{equation}}}
since $(nm)^b \gtrsim \log^a (nm)$ for any positive numbers $a$ and $b$, and  $\log(n) \gtrsim 1$. Hence,
\[
   \begin{array}{lll}
       \| p\|_{\infty}   & \leq&  \displaystyle \|a\|_{\infty}     \underset{x\in [0,1]}{\max} \sum_{l=0}^{k-1} \vert  q_l(x) \vert \\
        & \leq& \displaystyle C_k \left(\sum_{l=0}^{k-1} a_l^2 \right)^{1/2} \\
         & \leq& \displaystyle C_k\sqrt{\frac{4C}{cCnm}\sum_{i=1}^{n}  \sum_{j=1}^{m_i}   p(x_{i,j})^2},
   \end{array}
  \]
  for a positive constant $C_k.$

\end{proof}
\begin{remark}
\label{Rem1HPE}
From Assumption \ref{assume:tv functions} it follows that 
$$\frac{4CC_{k}^2}{Ccnm}\sum_{i=1}^{n}  \sum_{j=1}^{m_i} p(x_{i,j})^2\le \frac{4CC_{k}^2}{cn}\sum_{i=1}^{n}  \frac{1}{m_i}\sum_{j=1}^{m_i} p(x_{i,j})^2,$$
and in consequence, from Lemma \ref{lemma1}, the event 
\[
	  \| p\|_{\infty} ^2\,\leq\,  \frac{4C_{k}^2C}{cn}\sum_{i=1}^{n}  \frac{1}{m_i}\sum_{j=1}^{m_i} p(x_{i,j})^2,
	\]
	for all  $p(x)$  polynomial of  degree $k-1$, holds with probability at least $1 - \frac{32\log^3(n)}{C_{\beta}^3c_1^2c^2n^2m^2}- \frac{\log(n)}{n^2}$.
\end{remark}

The following lemma is a simple consequence of Talagrand’s inequality in \cite{wainwright2019high}.
\begin{lemma}\label{lemma0}
    Let $\left\{x_{i,j}\right\}_{i\in \mathcal{J},1\le j\le m_i}$ be a collection of i.i.d. random variables where $\mathcal{J}\subset \mathbb{N}$ is a finite set of indexes. Consider a countable class of functions $\mathcal{F}$ uniformly bounded by $b$. Then for all $\delta>0$, the random variable 
    $$Z=\sup _{f \in \mathcal{F} }\frac{1}{\sum_{i\in \mathcal{J}}m_i}\sum_{i\in\mathcal{J}}\sum_{j=1}^{m_i}f(x_{i,j}),$$
    satisfies the upper tail bound
$$
\mathbb{P}[Z \geq \mathbb{E}[Z]+\delta] \leq 2 \exp \left(\frac{-\sum_{i\in \mathcal{J}}m_i \delta^2}{8 e \left[\Sigma^2\right]+16eb\mathbb{E}(Z)+4 b \delta}\right),
$$
where $\Sigma^2=\sup _{f \in \mathcal{F}}  f^2\left(X\right)$.
\end{lemma}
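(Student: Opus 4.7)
The plan is to recognize this statement as an immediate corollary of the standard functional Talagrand inequality for empirical processes (for instance Theorem 3.27 of \citet{wainwright2019high}), after a relabeling of the index set that exploits the i.i.d.\ structure of the design.

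First, I would set $N:=\sum_{i\in\mathcal{J}} m_i$, which is finite since $\mathcal{J}$ is finite and each $m_i<\infty$. Using the bracket notation already introduced in Section~\ref{sec:notation}, define a bijection $\llbracket\cdot,\cdot\rrbracket$ between the double index set $\{(i,j):i\in\mathcal{J},\,1\le j\le m_i\}$ and $\{1,2,\dots,N\}$, and let $X_k := x_{i,j}$ whenever $k=\llbracket i,j\rrbracket$. Because the collection $\{x_{i,j}\}$ is i.i.d., the relabeled sequence $X_1,\dots,X_N$ is also i.i.d.\ with the same common law, and
\[
Z \;=\; \sup_{f\in\mathcal{F}} \frac{1}{N}\sum_{k=1}^{N} f(X_k).
\]

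Next, I would verify the hypotheses of the standard Talagrand bound: the class $\mathcal{F}$ is countable (to avoid measurability issues for the supremum) and uniformly bounded by $b$, so in particular $\|f\|_{\infty}\le b$ and $f^2(X)\le b^2$ for all $f\in\mathcal{F}$. The variance envelope $\Sigma^2=\sup_{f\in\mathcal{F}}f^2(X)$ is therefore finite. Applying Theorem 3.27 of \citet{wainwright2019high} (Talagrand's concentration inequality for empirical processes) with sample size $N$ and envelope $b$, one obtains, for all $\delta>0$,
\[
\mathbb{P}\!\left[\,Z\ge \mathbb{E}[Z]+\delta\,\right] \;\le\; 2\exp\!\left(\frac{-N\delta^{2}}{8e\,[\Sigma^{2}]+16eb\,\mathbb{E}[Z]+4b\delta}\right).
\]

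Substituting $N=\sum_{i\in\mathcal{J}} m_i$ back into the exponent yields exactly the stated bound. There is no real obstacle here: the only subtlety is the bookkeeping that the double-indexed i.i.d.\ family can be viewed as a single i.i.d.\ sample of length $N$, which is what makes the lemma a direct consequence of Talagrand's inequality rather than requiring the blocking arguments of Appendix~\ref{BetaMsection}. (Indeed, this is why Lemma~\ref{lemma0} is invoked only after we have already passed to the independent block-copies $x_{i,j}^{*}$ within a fixed even or odd block $\mathcal{J}_{e,l}$ or $\mathcal{J}_{o,l}$.)
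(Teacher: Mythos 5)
Your proposal is correct and takes the same route as the paper: the paper's entire proof of Lemma~\ref{lemma0} is a direct citation of Theorem 3.27 in \cite{wainwright2019high}, and your relabeling of the double-indexed i.i.d.\ family $\{x_{i,j}\}$ as a single i.i.d.\ sample of size $N=\sum_{i\in\mathcal{J}}m_i$ is exactly the (trivial) bookkeeping the paper leaves implicit. Your closing remark correctly identifies why the lemma needs no blocking argument --- it is only ever applied after passing to the independent block-copies within a fixed $\mathcal{J}_{e,l}$ or $\mathcal{J}_{o,l}$.
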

\begin{proof}
    See Theorem 3.27 in \cite{wainwright2019high}.
\end{proof}
The following is a modified version of Theorem $14.1$ in \cite{wainwright2019high}.

\begin{lemma}\label{lemma2}
 Let $\left\{x_{i,j}\right\}_{i=1,j=1}^{n,m_i}$ be random variables satisfying Assumption \ref{assume:tv functions}, and let $\rho>0$. Moreover consider positive constants $C$ and $c$ such that $cm\le m_i\le Cm$. For any $t \geq C_1\rho^{\frac{2k}{2k+1}}\log(n)^{\frac{2k}{2k+1}}(nm)^{\frac{-k}{2 k+1}}$, with probability at least $1-C_2\log(n) \exp \left(-C_3 nm t^2/(\log(n)\rho^2)\right)-\frac{1}{n}$, it holds that
$$
\frac{1}{C}\|g\|_{nm}^2-\|g\|_2^2\leq \frac{1}{2}\|g\|_2^2+\frac{1}{2}t^2 \quad \text { for all } \quad g \in B_{T V}^{(k-1)}(1) \cap B_{\infty}(\rho) \text {,}
$$
and
$$
\|g\|_2^2-\frac{1}{c}\|g\|_{nm}^2\leq \frac{1}{2}\|g\|_2^2+\frac
{1}{2}t^2 \quad \text { for all } \quad g \in B_{T V}^{(k-1)}(1) \cap B_{\infty}(\rho) \text {,}
$$
where $C_1>0$ is a constant depending on $\rho$, and $C_2$ and $C_3$ are positive constants.
\end{lemma}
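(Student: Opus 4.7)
\medskip

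\noindent\textbf{Proof proposal for Lemma \ref{lemma2}.}
My plan is to reduce to the independent case via the blocking construction of Appendix \ref{BetaMsection}, and then invoke a localized empirical-process argument of the type used in Theorem 14.1 of \cite{wainwright2019high}. Set $L=\frac{2}{C_{\beta}}\log(n)$ and build the independent copies $\{x_{i,j}^{*}\}$. By Lemma \ref{Ind-cop} together with Assumption \ref{assume:tv functions}\textbf{a}, the event $\Omega_{x}=\bigcap_{i,j}\{x_{i,j}=x_{i,j}^{*}\}$ holds with probability at least $1-\tfrac{1}{n}$, and the copies within any block $\mathcal{J}_{e,l}$ or $\mathcal{J}_{o,l}$ are jointly independent across $i$ (and also independent across $j$ for each fixed $i$ by Assumption \ref{assume:tv functions}\textbf{a}). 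On $\Omega_{x}$ we may replace $\|g\|_{nm}^{2}$ by its copy-based analogue $\|g\|_{nm,*}^{2}:=\frac{1}{n}\sum_{i=1}^{n}\frac{1}{m_{i}}\sum_{j=1}^{m_{i}}g(x_{i,j}^{*})^{2}$. Using the block decomposition $[n]=\bigsqcup_{l=1}^{L}(\mathcal{J}_{e,l}\cup\mathcal{J}_{o,l})$, we split $\|g\|_{nm,*}^{2}-\mathbb{E}\|g\|_{nm,*}^{2}$ into $2L$ block-sums of independent variables and apply a union bound over the $2L\asymp \log n$ blocks.

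For each individual block I will run the standard coupling argument. Fix an even block $\mathcal{J}_{e,l}$, and let $M_{l}=\sum_{i\in\mathcal{J}_{e,l}}m_{i}$, which by Assumption \ref{assume:tv functions}\textbf{f} and the block-size bound $|\mathcal{J}_{e,l}|\asymp n/L$ satisfies $M_{l}\asymp nm/\log n$. Consider the function class $\mathcal{F}=B_{\mathrm{TV}}^{(k-1)}(1)\cap B_{\infty}(\rho)$, which is uniformly bounded by $\rho$ and, by Lemma 5 in \cite{sadhanala2019additive} (or classical entropy bounds for univariate $k$th-order TV balls), satisfies $\log\mathcal{N}(\varepsilon,\mathcal{F},\|\cdot\|_{\infty})\lesssim\varepsilon^{-1/k}$. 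Via the standard Dudley/localization calculation, the localized Rademacher complexity of $\mathcal{F}_{r}=\{g\in\mathcal{F}:\|g\|_{2}\leq r\}$ based on $M_{l}$ i.i.d. samples satisfies $\mathcal{R}_{M_{l}}(\mathcal{F}_{r})\lesssim M_{l}^{-1/2}\,r^{1-1/(2k)}$, so that the fixed point of $r^{2}\asymp\rho^{1/k}\mathcal{R}_{M_{l}}(\mathcal{F}_{r})$ is of order $\rho^{2k/(2k+1)}M_{l}^{-k/(2k+1)}$.

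Next I will feed this into Talagrand's concentration inequality (Lemma \ref{lemma0}) applied inside the block, exactly as in Theorem 14.1 of \cite{wainwright2019high}, combined with a peeling argument over shells $\{g\in\mathcal{F}:2^{j}t\leq\|g\|_{2}\leq 2^{j+1}t\}$. This yields that, conditional on the block, for every $t\geq C_{1}\rho^{2k/(2k+1)}M_{l}^{-k/(2k+1)}$,
\begin{equation*}
\Bigl|\tfrac{1}{M_{l}}\sum_{i\in\mathcal{J}_{e,l}}\sum_{j=1}^{m_{i}}\{g(x_{i,j}^{*})^{2}-\|g\|_{2}^{2}\}\Bigr|\leq \tfrac{1}{2}\|g\|_{2}^{2}+\tfrac{1}{2}t^{2}\qquad \forall g\in\mathcal{F},
\end{equation*}
with failure probability at most $C_{2}\exp(-C_{3}M_{l}t^{2}/\rho^{2})$. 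Plugging in $M_{l}\asymp nm/\log n$ turns the fixed point into $\rho^{2k/(2k+1)}(\log n)^{k/(2k+1)}(nm)^{-k/(2k+1)}$, and a union bound over the $2L\asymp\log n$ blocks multiplies the tail by a further $\log n$ factor, producing the stated rate $t\geq C_{1}\rho^{2k/(2k+1)}(\log n)^{2k/(2k+1)}(nm)^{-k/(2k+1)}$ and probability $1-C_{2}\log(n)\exp(-C_{3}nm\,t^{2}/(\rho^{2}\log n))-\tfrac{1}{n}$.

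Finally, I translate the centered block bounds into the statement of the lemma. Since $\mathbb{E}\|g\|_{nm,*}^{2}=\|g\|_{2}^{2}$ but the weights $1/(nm_{i})$ interact with the block-based averages through the pinching $cm\leq m_{i}\leq Cm$, the assembly across blocks produces the multiplicative constants $1/C$ and $1/c$ appearing on the left-hand sides: summing the per-block deviations gives $\tfrac{1}{C}\|g\|_{nm}^{2}\leq\|g\|_{2}^{2}+\tfrac{1}{2}\|g\|_{2}^{2}+\tfrac{1}{2}t^{2}$ and $\|g\|_{2}^{2}\leq\tfrac{1}{c}\|g\|_{nm}^{2}+\tfrac{1}{2}\|g\|_{2}^{2}+\tfrac{1}{2}t^{2}$. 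The main obstacle I expect is the careful bookkeeping of these weighting constants through the block decomposition, together with verifying that the entropy/Rademacher estimate for $\mathcal{F}$ survives the restriction to $B_{\infty}(\rho)$ uniformly in $\rho$; the dependence on $\rho$ enters Talagrand's inequality through both $\Sigma^{2}\lesssim\rho^{2}$ and $b\lesssim\rho^{2}$, and must be tracked to recover the stated $\rho^{2k/(2k+1)}$ scaling in the threshold for $t$.
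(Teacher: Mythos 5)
Your proposal follows essentially the same route as the paper's proof: the blocking construction of Appendix \ref{BetaMsection} with $L=\frac{2}{C_\beta}\log n$ and the coupling event of probability $1-\frac{1}{n}$, a union bound over the $2L\asymp\log n$ independent blocks, symmetrization plus Ledoux--Talagrand contraction for the squared process, per-block Talagrand concentration (Lemma \ref{lemma0}) in the style of Theorem 14.1 of \cite{wainwright2019high} with the variance and boundedness parameters $\Sigma^2\lesssim \rho^2 t^2$ and $b\lesssim\rho^2$ tracked as you indicate, a Dudley fixed-point computation giving $\delta_{nm}\lesssim \rho^{2k/(2k+1)}(\log n)^{k/(2k+1)}(nm)^{-k/(2k+1)}$, and the constants $1/C$ and $1/c$ arising from $cm\le m_i\le Cm$ exactly as you describe. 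Two cosmetic corrections: the entropy bound must be taken in the empirical $L_2$ norm (the paper's Lemma \ref{lemma5}, invoked after converting the $L_2$ ball into an empirical ball via Corollary 2.2 of \cite{bartlett2005local}) rather than in $\|\cdot\|_\infty$, since for $k=1$ the sup-norm covering numbers of $B_{\mathrm{TV}}^{(0)}(1)\cap B_\infty(\rho)$ are infinite; and where you propose dyadic peeling, the paper uses the one-shot rescaling $\widetilde g=(t/\|g\|_2)\,g$ to reduce everything to the event $\{Z_t\ge t^2/2\}$, which is the localization device of Theorem 14.1 itself --- neither change affects the conclusion.
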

\begin{proof}
    Let $\mathcal{F}= B_{T V}^{(k-1)}(1) \cap B_{\infty}(\rho)$. Consider $\left\{x_{i,j}^*\right\}_{i=1,j=1}^{n,m_i}$ the copies $\left\{x_{i,j}\right\}_{i=1,j=1}^{n,m_i}$ defined as in Section \ref{BetaMsection}. For the construction of these copies, we use 
    \begin{equation}
        \label{L-block-lemma3-t1-t2}
        L=\frac{2}{C_{\beta}}\log(n),\ c_1 \frac{n}{L}\le|\mathcal{J}_{o,l}|,|\mathcal{J}_{e,l}|\le c_2 \frac{n}{L},
    \end{equation}
    for some positive constants $c_1$ and $c_2$.
    Denote by $\delta_{nm}$ any common positive solution of the inequality
\begin{align}
\label{eq-Ev-proof}
\widetilde{\mathcal{R}}_l\left(\mathcal{F},\delta\right)\leq \frac{c\delta^2}{32c_2C\rho},\ \forall \ l=1,...,L,
\end{align}
where $
\widetilde{\mathcal{R}}_l\left(\mathcal{F},\delta\right)$ is defined as
{\footnotesize{\begin{align}
\label{l-RademacherComp}
  & \mathbb{E}\left[\frac{1}{\sum_{i\in \mathcal{I}_{e,l}}m_i}\sup _{g \in \mathcal{F} \cap B_2(\delta)}\left|\sum_{i\in\mathcal{I}_{e,l}}\sum_{j=1}^{m_i} \sigma_{i,j} g\left(x_{i,j}^*
\right)\right|\right] \nonumber
\\
+&\mathbb{E}\left[\frac{1}{\sum_{i\in \mathcal{I}_{o,l}}m_i}\sup _{g \in \mathcal{F} \cap B_2(\delta)}\left|\sum_{i\in\mathcal{I}_{o,l}}\sum_{j=1}^{m_i} \sigma_{i,j} g\left(x_{i,j}^*
\right)\right|\right].
\end{align}}}
Here $\{\sigma_{i,j}\}_{i=1,j=1}^ {i=n,m_i}$ are independent Rademacher variables, independent of $\left\{x_{i,j}^*\right\}_{i=1,j=1}^{n,m_i}$. Now,  for any $t\ge \delta_{nm}$, let $Z_t=\sup _{g \in \mathcal{F} \cap B_2(t)}\Big\vert \frac{1}{\sum_{i=1}^{n}m_i}\sum_{i=1}^{n}\sum_{j=1}^{m_i}(g(x_{i,j}^*)^2-\vert\vert g\vert\vert_{2}^2)\Big\vert$ and $\Omega=\bigcap_{i=1}^n\Big\{x_{i,j}=x_{i,j}^*, \forall j=1,...,m_i\Big\}$. By Lemma \ref{Ind-cop} and the choice of $L$ in Equation (\ref{L-block-lemma3-t1-t2}) we get that $$\mathbb{P}(\Omega^c)\le n\beta(L)\le \frac{1}{n}.$$ Therefore, letting $$I(t)=\Big\{\Big\vert \frac{1}{\sum_{i=1}^{n}m_i}\sum_{i=1}^{n}\sum_{j=1}^{m_i}(g(x_{i,j})^2-\vert\vert g\vert\vert_{2}^2)\Big\vert\le \frac{1}{2}\vert\vert g\vert\vert_2^2+\frac{1}{2}t^2,\ \text{for all} \ g \in \mathcal{F}\Big\}^c,$$ we get that 
\begin{equation}
\label{eq-proof-0}
    \mathbb{P}(I(t))=\mathbb{P}(I(t)\cap \Omega)+\mathbb{P}(I(t)\cap \Omega^c)\le \mathbb{P}(I(t)\cap \Omega)+\frac{1}{n}.
\end{equation}
Now we follow the proof of Theorem 14.1 in \cite{wainwright2019high}. First, observe that, 
\begin{align*}
    &\Omega\cap\Big\{\Big\vert \frac{1}{\sum_{i=1}^{n}m_i}\sum_{i=1}^{n}\sum_{j=1}^{m_i}(g(x_{i,j})^2-\vert\vert g\vert\vert_{2}^2)\Big\vert\le \frac{1}{2}\vert\vert g\vert\vert_2^2+\frac{1}{2}t^2,\ \text{for all} \ g \in \mathcal{F}\Big\}^c
    \\
    &\subset \Big\{\Big\vert \frac{1}{\sum_{i=1}^{n}m_i}\sum_{i=1}^{n}\sum_{j=1}^{m_i}(g(x_{i,j}^*)^2-\vert\vert g\vert\vert_{2}^2)\Big\vert\le \frac{1}{2}\vert\vert g\vert\vert_2^2+\frac{1}{2}t^2,\ \text{for all} \ g \in \mathcal{F}\Big\}^c
    \\
    &\subset \Big\{Z_t\ge \frac{1}{2}t^2 \Big\}
    =\mathcal{A}_0(t).
\end{align*}
For the last contention we divide the analysis into two cases. First suppose that there exists some function with norm $\|g\|_2 \leq t$ such that 
$$\Big\vert \frac{1}{\sum_{i=1}^{n}m_i}\sum_{i=1}^{n}\sum_{j=1}^{m_i}(g(x_{i,j})^2-\vert\vert g\vert\vert_{2}^2)\Big\vert> \frac{1}{2}\vert\vert g\vert\vert_2^2+\frac{1}{2}t^2.$$
For this function, we must have $\left|\|g\|_{\sum_{i=1}^{n}m_i}^2-\|g\|_2^2\right|>\frac{1}{2}t^2$, showing that $Z_t>\frac{1}{2}t^2.
$ So that $\mathcal{A}_0(t)$ must hold. Otherwise, suppose that the inequality $$\Big\vert \frac{1}{\sum_{i=1}^{n}m_i}\sum_{i=1}^{n}\sum_{j=1}^{m_i}(g(x_{i,j})^2-\vert\vert g\vert\vert_{2}^2)\Big\vert> \frac{1}{2}\vert\vert g\vert\vert_2^2+\frac{1}{2}t^2,$$ is satisfied by some function with $\|g\|_2>t$. Any such function satisfies the inequality $$\left|\|g\|_2^2-\|g\|_{\sum_{i=1}^{n}m_i}^2\right|>\|g\|_2^2 / 2.$$ We may then define the rescaled function $\widetilde{g}=\frac{t}{\|g\|_2} g$. By construction it holds that $\|\widetilde{g}\|_2=t $, and also belongs to $\mathcal{F}$ since $\frac{t}{\|g\|_2}<1$ . Hence reasoning as before we find that $\mathcal{A}_0(t)$ must also hold in this case. In fact the function $\widetilde{g}$ satisfies that $\vert\vert \widetilde{g}\vert\vert_2\le t$ and 
$$\left|\|\widetilde{g}\|_2^2-\|\widetilde{g}\|_{\sum_{i=1}^{n}m_i}^2\right|>\|\widetilde{g}\|_2^2 / 2=t^2/2.$$
Now we control the event $\mathcal{A}_0(t)$. Using triangle inequality,
\begin{align}
\label{eq-proof-6}
    \mathbb{P}(\mathcal{A}_0(t))&=\mathbb{P}\Big(\sup _{g \in \mathcal{F} \cap B_2(t)}\Big\vert \frac{1}{\sum_{i=1}^{n}m_i}\sum_{i=1}^{n}\sum_{j=1}^{m_i}(g(x_{i,j}^*)^2-\vert\vert g\vert\vert_{2}^2)\Big\vert> \frac{1}{2}t^2\Big)
    \\
    &\le\mathbb{P}\Big(\sup _{g \in \mathcal{F} \cap B_2(t)}\Big\vert \frac{1}{\sum_{i=1}^{n}m_i}\sum_{l=1}^{L}\sum_{i\in \mathcal{J}_{e,l}}\sum_{j=1}^{m_i}(g(x_{i,j}^*)^2-\vert\vert g\vert\vert_{2}^2)\Big\vert> \frac{1}{4}t^2\Big)\nonumber
    \\
    &+\mathbb{P}\Big(\sup _{g \in \mathcal{F} \cap B_2(t)}\Big\vert \frac{1}{\sum_{i=1}^{n}m_i}\sum_{l=1}^{L}\sum_{i\in \mathcal{J}_{o,l}}\sum_{j=1}^{m_i}(g(x_{i,j}^*)^2-\vert\vert g\vert\vert_{2}^2)\Big\vert> \frac{1}{4}t^2\Big).\nonumber
    \end{align}
    Since the supremum of the sum is smaller or equal to the sum of the supremums, using a union bound argument and Inequality (\ref{eq-proof-6}) we get that
\begin{align}  
\label{eq-proof-7}
    \mathbb{P}(\mathcal{A}_0(t))&\le \sum_{l=1}^{L}\mathbb{P}\Big(\sup _{g \in \mathcal{F} \cap B_2(t)}\Big\vert \frac{1}{\sum_{i\in \mathcal{J}_{e,l}}^{n}m_i}\sum_{i\in \mathcal{J}_{e,l}}\sum_{j=1}^{m_i}(g(x_{i,j}^*)^2-\vert\vert g\vert\vert_{2}^2)\Big\vert> \frac{1}{4L}t^2\Big)
    \\
    &+ \sum_{l=1}^{L}\mathbb{P}\Big(\sup _{g \in \mathcal{F} \cap B_2(t)}\Big\vert \frac{1}{\sum_{i=1}^{n}m_i}\sum_{i\in \mathcal{J}_{o,l}}\sum_{j=1}^{m_i}(g(x_{i,j}^*)^2-\vert\vert g\vert\vert_{2}^2)\Big\vert> \frac{1}{4L}t^2\Big).
\end{align}
The analysis of each of the terms $\mathbb{P}\Big(\sup _{g \in \mathcal{F} \cap B_2(t)}\Big\vert \frac{1}{\sum_{i=1}^{n}m_i}\sum_{i\in \mathcal{J}_{o,l}}\sum_{j=1}^{m_i}(g(x_{i,j}^*)^2-\vert\vert g\vert\vert_{2}^2)\Big\vert> \frac{1}{4L}t^2\Big)$ follows the same line of the arguments as for \break $\mathbb{P}\Big(\sup _{g \in \mathcal{F} \cap B_2(t)}\Big\vert \frac{1}{\sum_{i\in \mathcal{J}_{e,l}}^{n}m_i}\sum_{i\in \mathcal{J}_{e,l}}\sum_{j=1}^{m_i}(g(x_{i,j}^*)^2-\vert\vert g\vert\vert_{2}^2)\Big\vert> \frac{1}{4L}t^2\Big)$, and therefore it is omitted below. To analyze $$\mathbb{P}\Big(\sup _{g \in \mathcal{F} \cap B_2(t)}\Big\vert \frac{1}{\sum_{i=1}^{n}m_i}\sum_{i\in \mathcal{J}_{e,l}}\sum_{j=1}^{m_i}(g(x_{i,j}^*)^2-\vert\vert g\vert\vert_{2}^2)\Big\vert> \frac{1}{4L}t^2\Big),$$
we observe that,
\begin{align}
\label{eq-proof-8}
  &\mathbb{P}\Big(\sup _{g \in \mathcal{F} \cap B_2(t)}\Big\vert \frac{1}{\sum_{i=1}^{n}m_i}\sum_{i\in \mathcal{J}_{e,l}}\sum_{j=1}^{m_i}(g(x_{i,j}^*)^2-\vert\vert g\vert\vert_{2}^2)\Big\vert> \frac{1}{4L}t^2\Big)
\\
=&
  \mathbb{P}\Big(\sup _{g \in \mathcal{F} \cap B_2(t)}\Big\vert \frac{1}{\sum_{i\in \mathcal{J}_{e,l}}m_i}\sum_{i\in \mathcal{J}_{e,l}}\sum_{j=1}^{m_i}(g(x_{i,j}^*)^2-\vert\vert g\vert\vert_{2}^2)\Big\vert> \frac{\sum_{i=1}^{n}m_i}{\sum_{i\in \mathcal{J}_{e,l}}m_i}\frac{1}{4L}t^2\Big)\nonumber
  \\
  \le&\mathbb{P}\Big(\sup _{g \in \mathcal{F} \cap B_2(t)}\Big\vert \frac{1}{\sum_{i\in \mathcal{J}_{e,l}}m_i}\sum_{i\in \mathcal{J}_{e,l}}\sum_{j=1}^{m_i}(g(x_{i,j}^*)^2-\vert\vert g\vert\vert_{2}^2)\Big\vert> \frac{c}{4c_2C}t^2\Big),\nonumber
\end{align}
where the inequality is followed by the fact that $m_i\asymp m$ in the statement of the Lemma, and Equation (\ref{L-block-lemma3-t1-t2}).
We now control
$$\mathbb{P}\Big(\sup _{g \in \mathcal{F} \cap B_2(t)}\Big\vert \frac{1}{\sum_{i\in \mathcal{J}_{e,l}}m_i}\sum_{i\in \mathcal{J}_{e,l}}\sum_{j=1}^{m_i}(g(x_{i,j}^*)^2-\vert\vert g\vert\vert_{2}^2)\Big\vert> \frac{c}{4c_2C}t^2\Big).$$
To this end, we make use of a classical symmetrization argument and Ledoux–Talagrand contraction inequality (see \cite{wainwright2019high}) to bound $$\mathbb{E}\Big(\sup _{g \in \mathcal{F} \cap B_2(t)}\Big\vert \frac{1}{\sum_{i\in \mathcal{J}_{e,l}}m_i}\sum_{i\in \mathcal{J}_{e,l}}\sum_{j=1}^{m_i}(g(x_{i,j}^*)^2-\vert\vert g\vert\vert_{2}^2)\Big\vert\Big).$$ Let $\left\{\widetilde{x}_{i,j}\right\}_{i=1,j=1}^{n,m_i}$ to be an i.i.d. sequence, independent copy  of $\left\{x_{i,j}^*\right\}_{i=1,j=1}^{n,m_i}$;  and $\left\{\sigma_{i,j}\right\}_{i=1,j=1}^{n,m_i}$ independent Rademacher variables independent of $\left\{\widetilde{x}_{i,j}\right\}_{i=1,j=1}^{n,m_i}$  and \break $\left\{x_{i,j}^*\right\}_{i=1,j=1}^{n,m_i}$. Then 
\begin{align*}
&\mathbb{E}\Big(\sup _{g \in \mathcal{F} \cap B_2(t)}\Big\vert \frac{1}{\sum_{i\in \mathcal{J}_{e,l}}m_i}\sum_{i\in \mathcal{J}_{e,l}}\sum_{j=1}^{m_i}(g(x_{i,j}^*)^2-\vert\vert g\vert\vert_{2}^2)\Big\vert\Big)
    \\
    =&\mathbb{E}_x\Big[\sup _{g \in \mathcal{F} \cap B_2(t)}\Big\vert \frac{1}{\sum_{i\in \mathcal{J}_{e,l}}m_i}\sum_{i\in \mathcal{J}_{e,l}}\sum_{j=1}^{m_i}(g(x_{i,j}^*)^2-\mathbb{E}_{\widetilde{x}}(g(\widetilde{x}_{i,j}))^2)\Big\vert\Big]
    \\
    \le&\mathbb{E}_{x}\Big[\sup _{g \in \mathcal{F} \cap B_2(t)}\mathbb{E}_{\widetilde{x}}\Big(\Big\vert \frac{1}{\sum_{i\in \mathcal{J}_{e,l}}m_i}\sum_{i\in \mathcal{J}_{e,l}}\sum_{j=1}^{m_i}(g(x_{i,j}^*)^2-g(\widetilde{x}_{i,j})^2)\Big\vert\Big)\Big].
\end{align*}
Since expected value of the supremum is an upper bound of the supremum of the expected value, 
\begin{align}
\label{aux-lemma3-t1-t2-eq0}
&\mathbb{E}\Big(\sup _{g \in \mathcal{F} \cap B_2(t)}\Big\vert \frac{1}{\sum_{i\in \mathcal{J}_{e,l}}m_i}\sum_{i\in \mathcal{J}_{e,l}}\sum_{j=1}^{m_i}(g(x_{i,j}^*)^2-\vert\vert g\vert\vert_{2}^2)\Big\vert\Big)\nonumber
\\
\le&\mathbb{E}_{x,\widetilde{x}}\Big[\sup _{g \in \mathcal{F} \cap B_2(t)}\Big(\Big\vert \frac{1}{\sum_{i\in \mathcal{J}_{e,l}}m_i}\sum_{i\in \mathcal{J}_{e,l}}\sum_{j=1}^{m_i}(g(x_{i,j}^*)^2-g(\widetilde{x}_{i,j})^2)\Big\vert\Big)\Big]\nonumber
    \\
=&\mathbb{E}_{x,\widetilde{x},\sigma}\Big[\sup _{g \in \mathcal{F} \cap B_2(t)}\Big(\Big\vert \frac{1}{\sum_{i\in \mathcal{J}_{e,l}}m_i}\sum_{i\in \mathcal{J}_{e,l}}\sum_{j=1}^{m_i}\sigma_{i,j}(g(x_{i,j}^*)^2-g(\widetilde{x}_{i,j})^2)\Big\vert\Big)\Big]\nonumber
\\
\le& 2\mathbb{E}_{x,\sigma}\Big[\sup _{g \in \mathcal{F} \cap B_2(t)}\Big\vert\frac{1}{\sum_{i\in \mathcal{J}_{e,l}}m_i}\sum_{i\in \mathcal{J}_{e,l}}\sum_{j=1}^{m_i}\sigma_{i,j}g(x_{i,j}^*)^2\Big\vert\Big],
\end{align}
in the last inequality, the independence between $\sigma_{i,j}$, $x_{i,j}^*$ and $\widetilde{x}_{i,j}$ is used.
Then we make use of Ledoux–Talagrand contraction inequality (see Proposition 5.28 in \cite{wainwright2019high}) to bound $$\mathbb{E}_{x,\sigma}\Big[\sup _{g \in \mathcal{F} \cap B_2(t)}\Big\vert \frac{1}{\sum_{i\in \mathcal{J}_{e,l}}m_i}\sum_{i\in \mathcal{J}_{e,l}}\sum_{j=1}^{m_i}\sigma_{i,j}g(x_{i,j}^*)^2\Big\vert\Big].$$ Specifically consider $\phi_\rho:\mathbb{R}\rightarrow \mathbb{R}$ given by $\phi_\rho(t):= \begin{cases}z^2 /(2 \rho) & \text { if }|z| \leq \rho \\ \rho / 2 & \text { otherwise }\end{cases}$, and let $$\mathcal{F}^2(x^*)=\Big\{(g^2(x_{l,1}^*),...,g^2(x_{2KL+l,m_{2KL+l}}^*)):g\in\mathcal{F}\cap B_2(t)\Big\}\subset \mathbb{R}^{\sum_{i\in \mathcal{J}_{e,l}}m_i},$$
with $K=\lfloor n/L\rfloor.$ It is straightforward to check that $\phi_\rho$ is a $1$-Lipschitz function. Therefore, a trivial application of Ledoux–Talagrand contraction inequality (see Proposition 5.28 in \cite{wainwright2019high}) yields to
\begin{align}
\label{aux-lemma3-t1-t2-eq1}
   &\mathbb{E}_{x,\sigma}\Big[\sup _{g \in \mathcal{F} \cap B_2(t)}\Big\vert \frac{1}{\sum_{i\in \mathcal{J}_{e,l}}m_i}\sum_{i\in \mathcal{J}_{e,l}}\sum_{j=1}^{m_i}\sigma_{i,j}g(x_{i,j}^*)^2\Big\vert\Big]\nonumber
   \\
   \le& 2\rho\mathbb{E}_{x,\sigma}\Big[\sup _{g \in \mathcal{F} \cap B_2(t)}\Big\vert \frac{1}{\sum_{i\in \mathcal{J}_{e,l}}m_i}\sum_{i\in \mathcal{J}_{e,l}}\sum_{j=1}^{m_i}\sigma_{i,j}g(x_{i,j}^*)\Big\vert\Big].
\end{align}
In consequence, from Inequality (\ref{aux-lemma3-t1-t2-eq0}) and Inequality (\ref{aux-lemma3-t1-t2-eq1})
\begin{align}
\label{eq-proof-2}
&\mathbb{E}\Big(\sup _{g \in \mathcal{F} \cap B_2(t)}\Big\vert \frac{1}{\sum_{i\in \mathcal{J}_{e,l}}m_i}\sum_{i\in \mathcal{J}_{e,l}}\sum_{j=1}^{m_i}(g(x_{i,j}^*)^2-\vert\vert g\vert\vert_{2}^2)\Big\vert\Big) \nonumber
    \\
    \le&4\rho\mathbb{E}_{x,\sigma}\Big[\sup _{g \in \mathcal{F} \cap B_2(t)}\Big\vert \frac{1}{\sum_{i\in \mathcal{J}_{e,l}}m_i}\sum_{i\in \mathcal{J}_{e,l}}\sum_{j=1}^{m_i}\sigma_{i,j}g(x_{i,j}^*)\Big\vert\Big].
    \end{align}
Even more, we observe the following. Let $g\in \mathcal{F}\cap B_2(t)$. Since $t\ge \delta_{nm}$, then the function $\widetilde{g}=\frac{\delta_{nm}}{t}g$ satisfies $\widetilde{g}\in \mathcal{F}\cap B_2(\delta_{nm})$. Therefore for any $g\in \mathcal{F}\cap B_2(t)$ we have that
\begin{align*}
  \frac{\delta_{nm}}{t} \left[\frac{1}{\sum_{i\in \mathcal{I}_{e,l}}m_i}\left|\sum_{i\in\mathcal{I}_{e,l}}\sum_{j=1}^{m_i} \sigma_{i,j} g\left(x_{i,j}^*
\right)\right|\right] 
&\le\left[\frac{1}{\sum_{i\in \mathcal{I}_{e,l}}m_i}\sup _{g \in \mathcal{F} \cap B_2(\delta_{nm})}\left|\sum_{i\in\mathcal{I}_{e,l}}\sum_{j=1}^{m_i} \sigma_{i,j} g\left(x_{i,j}^*
\right)\right|\right] ,
\end{align*}
from where, taking the supremum followed by the expectation, we obtain that
\begin{align}
\label{eq-proof-3}
  &\frac{\delta_{nm}}{t} \mathbb{E}\left[\frac{1}{\sum_{i\in \mathcal{I}_{e,l}}m_i}\sup _{g \in \mathcal{F} \cap B_2(t)}\left|\sum_{i\in\mathcal{I}_{e,l}}\sum_{j=1}^{m_i} \sigma_{i,j} g\left(x_{i,j}^*
\right)\right|\right]  \nonumber
\\
&\le\mathbb{E}\left[\frac{1}{\sum_{i\in \mathcal{I}_{e,l}}m_i}\sup _{g \in \mathcal{F} \cap B_2(\delta_{nm})}\left|\sum_{i\in\mathcal{I}_{e,l}}\sum_{j=1}^{m_i} \sigma_{i,j} g\left(x_{i,j}^*
\right)\right|\right] .
\end{align}
As a result, by Inequality (\ref{eq-Ev-proof}), Inequality (\ref{eq-proof-2}), and Inequality (\ref{eq-proof-3})
we have that
\begin{align}
\label{eq-proof-4}
    \mathbb{E}\Big(\sup _{g \in \mathcal{F} \cap B_2(t)}\Big\vert \frac{1}{\sum_{i\in \mathcal{J}_{e,l}}m_i}\sum_{i\in \mathcal{J}_{e,l}}\sum_{j=1}^{m_i}(g(x_{i,j}^*)^2-\vert\vert g\vert\vert_{2}^2)\Big\vert\Big)\le \frac{c\delta_{nm}t}{8c_2C}.
\end{align}
Next, our purpose is to use Lemma \ref{lemma0}. To this end, it remains to bound the expression 
$$\Sigma^2= \sup _{g \in \mathcal{F} \cap B_2(t)}\mathbb{E}\Big((g(x_{i,j}^*)^2-\vert\vert g\vert\vert_{2}^2)^2\Big).$$
 Let $g$ be an arbitrary member of $\mathcal{F} \cap B_2(t)$. Since $\|g\|_{\infty} \le \rho$ for all $g \in \mathcal{F}$, the  function $f=g^2-\mathbb{E}\left[g^2(x_{i,j}^*)\right]$ is bounded as $\|f\|_{\infty} \le 2\rho^2$, and moreover
$$
\operatorname{var}(f) \leq \rho^4\mathbb{E}\left[\frac{g^4}{\rho^4}\right] \leq \rho^4\mathbb{E}\left[\frac{g^2}{\rho^2}\right] \leq \rho^2t^2.
$$
It follows that $\Sigma^2$ is upper bounded and we are ready to use Lemma \ref{lemma0}. First, by Inequality (\ref{eq-proof-4}) we have that
\begin{align}
    \label{aux-lemma3-eq-proof-5}
    &\mathbb{P}\Big[\sup _{g \in \mathcal{F} \cap B_2(t)}\Big\vert \frac{1}{\sum_{i\in \mathcal{J}_{e,l}}m_i}\sum_{i\in \mathcal{J}_{e,l}}\sum_{j=1}^{m_i}(g(x_{i,j}^*)^2-\vert\vert g\vert\vert_{2}^2)\Big\vert> \frac{c}{4c_2C}t^2\Big]\nonumber
    \\
    \le&
    \mathbb{P}\Big[\sup _{g \in \mathcal{F} \cap B_2(t)}\Big\vert \frac{1}{\sum_{i\in \mathcal{J}_{e,l}}m_i}\sum_{i\in \mathcal{J}_{e,l}}\sum_{j=1}^{m_i}(g(x_{i,j}^*)^2-\vert\vert g\vert\vert_{2}^2)\Big\vert> \frac{c}{8c_2C}t^2\nonumber
    \\
    +&\mathbb{E}\Big(\sup _{g \in \mathcal{F} \cap B_2(t)}\Big\vert  \frac{1}{\sum_{i\in \mathcal{J}_{e,l}}m_i}\sum_{i\in \mathcal{J}_{e,l}}\sum_{j=1}^{m_i}(g(x_{i,j}^*)^2-\vert\vert g\vert\vert_{2}^2)\Big\vert\Big)\Big]
\end{align}
By Lemma \ref{lemma0}, Inequality (\ref{eq-proof-4}), using that $t\ge\delta_{nm}$, and the upper bound for $\Sigma^2$ we find that there is a universal constant $\widetilde{c}$ such that
\begin{align}
    \label{eq-proof-5}
    &\mathbb{P}\Big[\sup _{g \in \mathcal{F} \cap B_2(t)}\Big\vert \frac{1}{\sum_{i\in \mathcal{J}_{e,l}}m_i}\sum_{i\in \mathcal{J}_{e,l}}\sum_{j=1}^{m_i}(g(x_{i,j}^*)^2-\vert\vert g\vert\vert_{2}^2)\Big\vert> \frac{c}{4c_2C}t^2\Big]
    \\
    \le&2 \exp \left(-\frac{\Big(\sum_{i\in \mathcal{J}_{e,l}}m_i\Big) t^4}{\widetilde{c}\rho^2\left(t^2+t\delta_{nm}\right)}\right) \leq 2e^{-c_3\frac{1}{\rho^2} \Big(\sum_{i\in \mathcal{J}_{e,l}}m_i\Big) t^2}, \nonumber
\end{align}
for some positive constant $c_3.$ Finally use Equation (\ref{L-block-lemma3-t1-t2}) to get that
\begin{equation}
    \label{eq-proof-9}
    e^{-c_3\frac{1}{\rho^2} \Big(\sum_{i\in \mathcal{J}_{e,l}}m_i\Big) t^2}\le e^{-C_{\beta}c_3c_1nm  t^2/(2\log(n)\rho^2)}.
\end{equation}
Consequently, by Inequality (\ref{eq-proof-6}), (\ref{eq-proof-7}), (\ref{eq-proof-8}), (\ref{eq-proof-5}) and (\ref{eq-proof-9}) we obtain that 
\begin{equation}
    \label{eq-proof-10}
    \mathbb{P}\left[Z_t \geq \frac{t^2}{2}\right]  \leq \frac{8}{C_{\beta}}\log(n)e^{-C_{\beta}c_3c_1nm  t^2/(2\log(n)\rho^2)}.
\end{equation}
Thus, by Inequality (\ref{eq-proof-10}) and (\ref{eq-proof-0}) we obtain that for any $t\ge \delta_{nm}$
\begin{equation}
\label{main-eq-proof}
    \mathbb{P}(I(t))\le \frac{8}{C_{\beta}}\log(n)e^{-C_{\beta}c_3c_1nm  t^2/(2\log(n))}+\frac{1}{n}=C_2\log(n)e^{-C_3nm  t^2/(\log(n)\rho^2)}+\frac{1}{n},
\end{equation}
for positive constants $C_2$ and $C_3.$
It turns out that it suffices to show that
 $\delta_{nm} \leq C_1 (nm)^{-\frac{k}{2k+1}}$ for a constant $C_1>0$ to conclude the desired result.
To this end, let $l\in[L]$ and consider the empirical local Rademacher complexity
\begin{equation}
\label{EmpiricalRadCom}
\widetilde{\mathcal{R}}_{l,nm}\left(\mathcal{F} \cap B_2(t)\right)=\mathbb{E}_\sigma\left[\frac{1}{\sum_{i\in \mathcal{I}_{e,l}}m_i}\sup _{g \in \mathcal{F} \cap B_2(\delta)}\left|\sum_{i\in\mathcal{I}_{e,l}}\sum_{j=1}^{m_i} \sigma_{i,j} g\left(x_{i,j}^*
\right)\right|\right] .
\end{equation}
Denote by $\vert\vert g\vert\vert_{\sum_{i\in \mathcal{I}_{e,l}}m_i}=\frac{1}{\sum_{i\in \mathcal{I}_{e,l}}m_i}\sum_{i\in \mathcal{I}_{e,l}}\sum_{j=1}^{m_i} g\left(x_{i,j}^*
\right)^2.$
As we are considering $t \geq \delta_{nm}$, Corollary 2.2 of \cite{bartlett2005local} gives
$$
\mathcal{F} \cap B_2(t) \subseteq \mathcal{F} \cap B_{\vert\vert \cdot\vert\vert_{\sum_{i\in \mathcal{I}_{e,l}}m_i}}(\sqrt{2}t), 
$$
with probability at least $1-\frac{1}{\sum_{i\in \mathcal{I}_{e,l}}m_i}\ge 1-\frac{1}{C_{\beta}}\frac{2\log(n)}{cc_1nm}$. 
Define this good event as $\mathcal{E}$. 
On $\mathcal{E}$ using Dudley's entropy integral (see \cite{dudley1967sizes}), Equation (\ref{L-block-lemma3-t1-t2}), and the assumption $m_i\asymp m$ specified in the statement of the Lemma, we bound the empirical Rademacher complexity defined in Equation (\ref{EmpiricalRadCom}) as follows
\begin{align*}
\widetilde{\mathcal{R}}_{l,nm}\left(\mathcal{F} \cap B_2(t)\right) & \leq \widetilde{\mathcal{R}}_{l,nm}\left(\Big\{ g\in\mathcal{F}: \frac{1}{\sum_{i\in \mathcal{J}_{e,l}}m_i}\sum_{i\in\mathcal{J}_{e,l}}\sum_{j=1}^{m_i} g\left(x_{i,j}^*
\right)^2\le2t^2 \Big\}\right) \\
& \leq \frac{C_{\text{Dud}}}{\sqrt{\sum_{i\in \mathcal{J}_{e,l}}m_i}} \int_0^{\sqrt{2} t} \sqrt{\log N\left(\epsilon,\|\cdot\|_{\sum_{i\in \mathcal{J}_{e,l}}m_i}, \mathcal{F}\right)} d \epsilon
\\
&\le
\frac{C_{\text{Dud}}\sqrt{2\log(n)}}{\sqrt{C_{\beta}cc_1nm}} \int_0^{\sqrt{2} t} \sqrt{\log N\left(\epsilon,\|\cdot\|_{\sum_{i\in \mathcal{J}_{e,l}}m_i}, \mathcal{F}\right)} d \epsilon. 
\end{align*}
Using that $\mathcal{F}= B_{T V}^{(k-1)}(1) \cap B_{\infty}(\rho)$, by Lemma \ref{lemma5}, we obtain that
\begin{align*}
&\frac{C_{\text{Dud}}\sqrt{2\log(n)}}{\sqrt{C_{\beta}cc_1nm}} \int_0^{\sqrt{2} t} \sqrt{\log N\left(\epsilon,\|\cdot\|_{\sum_{i\in \mathcal{J}_{e,l}}m_i}, \mathcal{F}\right)} d \epsilon
\\
& \leq \frac{AC_{\text{Dud}}\rho^{\frac{1}{2k}}\sqrt{2\log(n)}}{\sqrt{C_{\beta}cc_1nm}} \int_0^{\sqrt{2} t} \epsilon^{-1 /(2 k)} d \epsilon \\
& =c_4\frac{\rho^{\frac{1}{2k}}\sqrt{2\log(n)}}{\sqrt{C_{\beta}cc_1nm}} t^{1-1 /(2 k)},
\end{align*}
for some positive constant $c_4,$ and $A>0$ from Lemma \ref{lemma5}.
Under the event $\mathcal{E}^c,$ given that $\mathcal{F}\subset B_{\infty}(\rho),$ we have that
$ \widetilde{\mathcal{R}}_{l,nm}\left(\mathcal{F} \cap B_2(t)\right) \leq \rho$. Therefore, splitting the expectation over $\mathcal{E}$ and $\mathcal{E}^c,$ the rademacher complexity defined in Equation (\ref{l-RademacherComp}) is bounded by
$$
\widetilde{\mathcal{R}}_l\left(\mathcal{F} \cap B_2(t)\right)=E_x \widetilde{\mathcal{R}}_{l,nm}\left(\mathcal{F} \cap B_2(t)\right) \leq c_4\frac{\rho^{\frac{1}{2k}}\sqrt{2\log(n)}}{\sqrt{C_{\beta}cc_1nm}} t^{1-1 /(2 k)}+\frac{2\rho\log(n)}{C_{\beta}cc_1nm}.
$$
So $\delta_{nm}$ is the minimal solution of
$$
c_4\frac{\rho^{\frac{1}{2k}}\sqrt{2\log(n)}}{\sqrt{C_{\beta}cc_1nm}} t^{1-1 /(2 k)}+\frac{2\rho\log(n)}{C_{\beta}cc_1nm}<\frac{t^2}{\rho}.
$$
Without lost of generality assume that $t\ge (nm)^{-\frac{1}{2}}$. Since $\frac{k}{2k-1}>\frac{1}{2}$, we have that $t\ge (nm)^{-\frac{k}{2k-1}}$ and in consequence $t^{\frac{2k-1}{k}}\ge (nm)^{-1}$. Therefore $t^{\frac{2k-1}{2k}}\ge (nm)^{-\frac{1}{2}}$, which can be written as $t^{1-\frac{1}{2k}}\ge (nm)^{-\frac{1}{2}}$. This implies $\frac{t^{1-\frac{1}{2k}}}{\sqrt{nm}}\ge \frac{1}{nm}$. Thus for some constant $c_5>0$ we have that
\begin{equation}
\label{aux-lemma3-equation}
c_5\rho\frac{2\log(n)}{\sqrt{C_{\beta}cc_1nm}} t^{1-1 /(2 k)}<\frac{t^2}{\rho}.
\end{equation}
To end observe the following. If $\rho\ge 1$, from Inequality (\ref{aux-lemma3-equation}),
$$
c_5\rho\frac{2\log(n)}{\sqrt{C_{\beta}cc_1nm}} t^{1-1 /(2 k)}<t^2,$$
which implies
\begin{equation*}
    \frac{C_1\rho^{\frac{2k}{2k+1}}\log(n)^{\frac{2k}{2k+1}}}{(nm)^{\frac{k}{2k+1}}}\le t,
\end{equation*}
for a positive constant $C_1.$
Moreover if $\rho< 1$, from Inequality (\ref{aux-lemma3-equation}) we have that 
$$
c_5\frac{2\log(n)}{\sqrt{C_{\beta}cc_1nm}} t^{1-1 /(2 k)}<\frac{t^2}{\rho}$$
from where
\begin{equation*}
    \frac{C_1\rho^{\frac{2k}{2k+1}}\log(n)^{\frac{2k}{2k+1}}}{(nm)^{\frac{k}{2k+1}}}\le t,
\end{equation*}
for a positive constant $C_1.$
Thus, with probability at least $1-C_2\log(n)e^{-C_3nm  t^2/(\rho^2\log(n))}-\frac{1}{n}$,
$$\Big\vert \frac{1}{\sum_{i=1}^{n}m_i}\sum_{i=1}^{n}\sum_{j=1}^{m_i}(g(x_{i,j})^2-\vert\vert g\vert\vert_{2}^2)\Big\vert\le \frac{1}{2}\vert\vert g\vert\vert_2^2+\frac{1}{2}t^2,\ \text{for all} \ g \in \mathcal{F},$$
when $$
\frac{C_1\rho^{\frac{2k}{2k+1}}\log(n)^{\frac{2k}{2k+1}}}{(nm)^{\frac{k}{2k+1}}}\le t
.$$
Finally from Assumption 1{\bf{f}} it holds that
\begin{align*}
    \vert\vert g\vert\vert_{2}^2-\frac{1}{\sum_{i=1}^{n}m_i}\sum_{i=1}^{n}\sum_{j=1}^{m_i}g(x_{i,j})^2\ge \vert\vert g\vert\vert_{2}^2-\frac{1}{cnm}\sum_{i=1}^{n}\sum_{j=1}^{m_i}g(x_{i,j})^2,
\end{align*}
which implies,
$$
\|g\|_2^2-\frac{1}{c}\|g\|_{nm}^2\leq \frac{1}{2}\|g\|_2^2+\frac{1}{2}t^2 \quad \text { for all } g \in \mathcal{F},
$$
with probability at least $1-C_2\log(n)e^{-C_3nm  t^2/(\rho^2\log(n))}-\frac{1}{n}$. Similarly we conclude 
$$
\frac{1}{C}\|g\|_{nm}^2-\|g\|_2^2\leq \frac{1}{2}\|g\|_2^2+\frac{1}{2}t^2 \quad \text { for all } g \in \mathcal{F},
$$
with probability at least $1-C_2\log(n)e^{-C_3nm  t^2/(\rho^2\log(n))}-\frac{1}{n}$. 
\end{proof}

\begin{lemma}\label{lemma5}
 Let $\left\{x_{i,j}\right\}_{i=1,j=1}^{n,m_i}$ be a collection of random variables in $[0,1]$. Denote
$$
\|f\|_{\sum_{i=1}^nm_i}^2=\frac{1}{\sum_{i=1}^nm_i} \sum_{i=1}^n\sum_{i=1}^{m_i} f^2\left(x_{i,j}\right) .
$$
Then, there exists an absolute constant $A$ such that for $k \geq 1$,
$$
\log N\left(\varepsilon,\|\cdot\|_{\sum_{i=1}^nm_i}, B_{T V}^{(k-1)}(1) \cap B_{\infty}(1)\right) \leq A \varepsilon^{-1 /k} .
$$

\end{lemma}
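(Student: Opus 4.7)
\textbf{Proof plan for Lemma \ref{lemma5}.}

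The plan is to reduce the empirical covering problem to a covering problem in the sup norm, and then invoke a classical metric entropy estimate for total variation balls. The first observation is that for any function $f:[0,1]\to\mathbb R$ and any realization of the design points,
\[
\|f\|_{\sum_{i=1}^n m_i}^2 \,=\, \frac{1}{\sum_{i=1}^n m_i}\sum_{i=1}^n\sum_{j=1}^{m_i} f^2(x_{i,j}) \,\leq\, \|f\|_\infty^2,
\]
so every $\varepsilon$-cover of $B_{TV}^{(k-1)}(1)\cap B_\infty(1)$ in the sup norm is automatically an $\varepsilon$-cover in $\|\cdot\|_{\sum_{i=1}^n m_i}$. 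This reduces matters to bounding $\log N(\varepsilon,\|\cdot\|_\infty, B_{TV}^{(k-1)}(1)\cap B_\infty(1))$.

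Next, I would invoke the classical Birman--Solomyak / Mammen--van de Geer type entropy estimate for $k$-th order bounded variation classes: for a universal constant $A_0>0$,
\[
\log N\bigl(\varepsilon,\|\cdot\|_\infty, B_{TV}^{(k-1)}(1)\cap B_\infty(1)\bigr) \,\leq\, A_0\,\varepsilon^{-1/k},\qquad \varepsilon\in(0,1].
\]
This is standard for $k=1$ (monotone decomposition of bounded variation functions, plus the well-known $\varepsilon^{-1}$ entropy of monotone functions bounded in sup norm), and for general $k\geq 2$ one integrates $k-1$ times starting from the $k=1$ bound, using that the constraints $\mathrm{TV}(f^{(k-1)})\leq 1$ and $\|f\|_\infty\leq 1$ force $\|f^{(l)}\|_\infty$ to be controlled for $l=0,\ldots,k-1$ by an easy induction. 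Equivalently, one can quote Lemma 5 of Mammen--van de Geer (1997) or Lemma A.2 of \cite{tibshirani2014adaptive}, which gives exactly this rate.

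Combining the two displays, the sup-norm cover transfers to an empirical cover of the same cardinality, yielding the claimed bound
\[
\log N\bigl(\varepsilon,\|\cdot\|_{\sum_{i=1}^n m_i}, B_{TV}^{(k-1)}(1)\cap B_\infty(1)\bigr) \,\leq\, A\,\varepsilon^{-1/k}
\]
with $A=A_0$. The only nontrivial ingredient is the sup-norm entropy bound itself; the rest is a one-line domination inequality, so I do not foresee any real obstacle beyond citing (or, if preferred, spelling out) the Birman--Solomyak style estimate for $k$-th order total variation classes.
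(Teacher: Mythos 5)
Your opening domination inequality is fine, but the sup-norm entropy bound you then rely on is false precisely at $k=1$, which the lemma must cover. For $k=1$ the class $B_{\mathrm{TV}}^{(0)}(1)\cap B_{\infty}(1)$ contains the step functions $\tfrac{1}{2}\mathbf{1}_{[a,1]}$ for every jump location $a\in[0,1]$, and any two of these with distinct jumps are at sup-distance $\tfrac{1}{2}$; hence $N\bigl(\varepsilon,\|\cdot\|_{\infty},B_{\mathrm{TV}}^{(0)}(1)\cap B_{\infty}(1)\bigr)=\infty$ for all $\varepsilon<\tfrac{1}{4}$, and no finite constant $A_0$ exists. The same defect sinks your justification via monotone functions: the well-known $\varepsilon^{-1}$ entropy of uniformly bounded monotone functions is an $L_p(Q)$ statement for $p<\infty$ (indeed a bracketing bound uniform over measures $Q$), not a sup-norm statement — in sup norm the monotone class again has infinite covering numbers, by the same family of indicators. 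This is exactly why the Birman--Solomyak and Mammen--van de Geer results for total variation classes are formulated in $L_p$ with $p<\infty$. (Your route does go through for $k\ge 2$, since then $f^{(k-2)}$ is Lipschitz and the class sits inside a $C^{k-2,1}$-type ball whose Kolmogorov--Tikhomirov sup-norm entropy is of order $\varepsilon^{-1/k}$; but the lemma, and the paper's use of it in the fused-lasso-adjacent regime, requires $k=1$ as well.)

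The repair — and what the paper actually does — is to skip the sup-norm detour entirely and bound the entropy directly in the empirical norm: $\|\cdot\|_{\sum_{i=1}^n m_i}$ is just the $L_2(Q)$ norm for the empirical measure of the design points, and the bound
\[
\log N\bigl(\varepsilon,\|\cdot\|_{L_2(Q)},\,B_{\mathrm{TV}}^{(k-1)}(1)\cap B_{\infty}(1)\bigr)\le A\,\varepsilon^{-1/k}
\]
holds uniformly over (discrete) probability measures $Q$; the paper's proof is a one-line citation of this fact to page 3050 of \cite{sadhanala2019additive}, which holds for every realization of the $x_{i,j}$ with no distributional assumption. Your alternative citations (Mammen--van de Geer, or the entropy lemma in \cite{tibshirani2014adaptive}) would equally serve, but only if quoted in this empirical-$L_2$/uniform form rather than routed through sup-norm covers.
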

\begin{proof}
This is a direct consequence of \cite{sadhanala2019additive}, see page 3050 of the mentioned reference.
\end{proof}
\begin{lemma}
    \label{lemma6}
    Let $\{x_{i,j}^*\}_{i \in\{\mathcal{J}_{e,l},1\le j\le m_i}$ i.i.d random variables and $\{\delta_{i}^*\}_{i \in\mathcal{J}_{e,l}}$ i.i.d random functions independent of $\{x_{i,j}^*\}_{i \in\mathcal{J}_{e,l},1\le j\le m_i}$.
 Let $\mathcal{F}$ be a set of functions and for any $f \in \mathcal{F}$ define
$$
Z_{i,j}^*(f):=\left\{f\left(x_{i,j}^*\right)-f^*\left(x_{i,j}^*\right)\right\} \delta_i^*\left(x_{i,j}^*\right)-\left\langle f-f^*, \delta_i^*\right\rangle_{2}
$$
for all $i \in\{\mathcal{J}_{e,l}\}$ and $j \in\{1 \ldots, m_i\}$. Then,  we have that
\begin{align*}
& \mathbb{E}\left( \sup _{f \in \mathcal{F}} \frac{1}{n } \sum_{i\in\mathcal{J}_{e,l}} \frac{1}{m_i}\sum_{j=1}^{m_i} Z_{i,j}^*(f) \mid\left\{\{\delta_i^*\}_{i\in\mathcal{J}_{e,l}}\right\}\right) \\
& \leq 2 \mathbb{E}\left( \sup _{f \in \mathcal{F}} \frac{1}{n} \sum_{i\in\mathcal{J}_{e,l}} \frac{1}{m_i}\sum_{j=1}^{m_i} \xi_{i,j}\left\{f\left(x_{i,j}^*\right)-f^*\left(x_{i,j}\right)\right\} \delta_i^*\left(x_{i,j}^*\right) \mid\left\{\{\delta_i^*\}_{i\in\mathcal{J}_{e,l}}\right\}\right),
\end{align*}
for Rademacher independent random variables $\left\{\xi_{i,j}\right\}$ independent of $$\left(\left\{x_{i,j}^*\right\}_{i\in\mathcal{J}_{e,l},1\le j\le m_i},\{\delta_i^*\}_{i\in\mathcal{J}_{e,l}}\right).$$
\end{lemma}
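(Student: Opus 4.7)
The plan is to carry out a standard symmetrization argument, paying careful attention to the conditioning on $\{\delta_i^*\}_{i \in \mathcal{J}_{e,l}}$ throughout. First I would observe that, conditional on $\{\delta_i^*\}_{i\in\mathcal{J}_{e,l}}$, the variables $\{x_{i,j}^*\}$ remain i.i.d.\ (since $\{x_{i,j}^*\}$ and $\{\delta_i^*\}$ are independent by hypothesis), and
\[
\bigl\langle f-f^*,\delta_i^*\bigr\rangle_{2} \;=\; \int (f-f^*)(x)\,\delta_i^*(x)\,dQ(x) \;=\; \mathbb{E}\bigl[\,(f-f^*)(x_{i,j}^*)\,\delta_i^*(x_{i,j}^*) \,\bigm|\, \delta_i^*\bigr].
\]
Consequently $Z_{i,j}^*(f)$ is precisely the conditional-mean-zero version of the product $(f-f^*)(x_{i,j}^*)\,\delta_i^*(x_{i,j}^*)$ given $\delta_i^*$.

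Next, I would introduce an independent i.i.d.\ copy $\{\widetilde{x}_{i,j}\}$ of $\{x_{i,j}^*\}$, taken independent of the $\delta_i^*$'s and of any Rademacher variables to be introduced later. Writing $\langle f-f^*,\delta_i^*\rangle_{2} = \mathbb{E}_{\widetilde{x}}[(f-f^*)(\widetilde{x}_{i,j})\,\delta_i^*(\widetilde{x}_{i,j})\mid\delta_i^*]$, pulling this inner expectation outside the supremum via Jensen's inequality yields
\[
\mathbb{E}\!\left[\sup_{f\in\mathcal{F}} \tfrac{1}{n}\!\!\sum_{i\in\mathcal{J}_{e,l}}\!\tfrac{1}{m_i}\!\sum_{j=1}^{m_i} Z_{i,j}^*(f)\,\Bigm|\,\{\delta_i^*\}\right] \;\le\; \mathbb{E}\!\left[\sup_{f\in\mathcal{F}} \tfrac{1}{n}\!\!\sum_{i\in\mathcal{J}_{e,l}}\!\tfrac{1}{m_i}\!\sum_{j=1}^{m_i} \bigl\{(f{-}f^*)(x_{i,j}^*)\delta_i^*(x_{i,j}^*)\,-\,(f{-}f^*)(\widetilde{x}_{i,j})\delta_i^*(\widetilde{x}_{i,j})\bigr\}\,\Bigm|\,\{\delta_i^*\}\right].
\]

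The remaining step is to insert Rademacher variables. Since, conditional on $\{\delta_i^*\}$, the pair $(x_{i,j}^*,\widetilde{x}_{i,j})$ has the same joint distribution as $(\widetilde{x}_{i,j},x_{i,j}^*)$, each difference $(f-f^*)(x_{i,j}^*)\delta_i^*(x_{i,j}^*)-(f-f^*)(\widetilde{x}_{i,j})\delta_i^*(\widetilde{x}_{i,j})$ has a symmetric conditional distribution. Hence for i.i.d.\ Rademacher variables $\{\xi_{i,j}\}$ independent of the other variables, the displayed expectation equals
\[
\mathbb{E}\!\left[\sup_{f\in\mathcal{F}}\tfrac{1}{n}\!\!\sum_{i\in\mathcal{J}_{e,l}}\!\tfrac{1}{m_i}\!\sum_{j=1}^{m_i}\xi_{i,j}\bigl\{(f{-}f^*)(x_{i,j}^*)\delta_i^*(x_{i,j}^*)\,-\,(f{-}f^*)(\widetilde{x}_{i,j})\delta_i^*(\widetilde{x}_{i,j})\bigr\}\,\Bigm|\,\{\delta_i^*\}\right].
\]
A triangle-inequality split of the supremum into the two symmetric halves, followed by using that the distribution of $\{\widetilde{x}_{i,j}\}$ matches that of $\{x_{i,j}^*\}$ conditional on $\{\delta_i^*\}$, produces the factor~$2$ and gives the claimed bound.

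There is no serious obstacle here: the argument is the classical Gin\'e--Zinn symmetrization specialized to the conditional law given $\{\delta_i^*\}$. The only point requiring care is bookkeeping --- verifying at each step that the conditioning on $\{\delta_i^*\}_{i\in\mathcal{J}_{e,l}}$ is preserved and that the independence between $\{x_{i,j}^*\}$, $\{\widetilde{x}_{i,j}\}$, $\{\delta_i^*\}$ and $\{\xi_{i,j}\}$ is correctly invoked so that the swap which introduces the Rademacher signs is legitimate conditionally.
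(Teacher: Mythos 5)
Your proposal is correct and follows essentially the same route as the paper's own proof: introduce an independent copy $\{\widetilde{x}_{i,j}\}$ of $\{x_{i,j}^*\}$, rewrite $\langle f-f^*,\delta_i^*\rangle_{2}$ as a conditional expectation over the copy, pull that expectation out of the supremum by Jensen's inequality, insert Rademacher signs using the conditional symmetry of the difference (justified by the stated independence), and split by the triangle inequality to obtain the factor $2$. The conditional bookkeeping with respect to $\{\delta_i^*\}_{i\in\mathcal{J}_{e,l}}$ that you flag is exactly the point the paper also handles, so no gap remains.
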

\begin{proof}
 Let $\left\{x_{i,j}^{\prime}\right\}$ be an independent copy of $\left\{x_{i,j}^*\right\}$. Then for Rademacher independent random variables $\left\{\xi_{i,j}\right\}$ independent of $\left(\left\{x_{i,j}^*\right\},\left\{\delta_i^*\right\}\right)$ we have that
\begin{align*}
& \mathbb{E}\left( \sup _{f \in \mathcal{F}} \frac{1}{n} \sum_{i\in\mathcal{J}_{e,l}} \frac{1}{m_i}\sum_{j=1}^{m_i} Z_{i,j}^*(f) \mid\left\{\delta_i^*\right\}\right) \\
& =\mathbb{E}\Bigg( \sup _{f \in \mathcal{F}} \frac{1}{n } \sum_{i\in\mathcal{J}_{e,l}} \frac{1}{m_i}\sum_{j=1}^{m_i}\Big[\left\{f\left(x_{i,j}^*\right)-f^*\left(x_{i ,j}^*\right)\right\} \delta_i^*\left(x_{i ,j}^*\right) \\
&  -\mathbb{E}\left(\left\{f\left(x_{i, j}^{\prime}\right)-f^*\left(x_{i, j}^{\prime}\right)\right\} \delta_i^*\left(x_{i,j}^{\prime}\right)\right)\Big] \mid\left\{\delta_i^*\right\}\Bigg).
\end{align*}
Using that the supremum of the expected is smaller or equal to the expected value of the supremum, it can be bounded by
\begin{align*}
 &\mathbb{E}\Bigg(\mathbb{E}\Bigg( \sup _{f \in \mathcal{F}} \frac{1}{n } \sum_{i\in\mathcal{J}_{e,l}} \frac{1}{m_i}\sum_{j=1}^{m_i}\left[\left\{f\left(x_{i,j}^*\right)-f^*\left(x_{i, j}^*\right)\right\} \delta_i^*\left(x_{i ,j}^*\right) \right.\\
 &\quad \left. -\left\{f\left(x_{i,j}^{\prime}\right)-f^*\left(x_{i,j}^{\prime}\right)\right\} \delta_i^*\left(x_{i,j}^{\prime}\right)\right] \mid\left\{\delta_i^*\right\},\left\{x_{i,j}^*\right\}\Bigg) \mid\left\{\delta_i^*\right\}\Bigg) \\
& =\mathbb{E}\Bigg(\sup _{f \in \mathcal{F}} \frac{1}{n } \sum_{i\in\mathcal{J}_{e,l}} \frac{1}{m_i}\sum_{j=1}^{m_i}\left[\left\{f\left(x_{i,j}^*\right)-f^*\left(x_{i,j}^*\right)\right\} \delta_i^*\left(x_{i,j}^*\right) \right.\\
&\quad \left.-\left\{f\left(x_{i,j}^{\prime}\right)-f^*\left(x_{i,j}^{\prime}\right)\right\} \delta_i^*\left(x_{i,j}^{\prime}\right)\right] \mid\left\{\delta_i^*\right\}\Bigg) \\
& =\mathbb{E}\Bigg(\sup _{f \in \mathcal{F}} \frac{1}{n } \sum_{i\in\mathcal{J}_{e,l}} \frac{1}{m_i}\sum_{j=1}^{m_i} \xi_{i,j}\left[\left\{f\left(x_{i,j}^*\right)-f^*\left(x_{i,j}^*\right)\right\} \delta_i^*\left(x_{i,j}^*\right) \right.\\
&\quad \left.-\left\{f\left(x_{i,j}^{\prime}\right)-f^*\left(x_{i,j}^{\prime}\right)\right\} \delta_i^*\left(x_{i,j}^{\prime}\right)\right] \mid\left\{\delta_i^*\right\}\Bigg),
\end{align*}
where the last equation is followed by the independence assumption. Therefore 
\begin{align*}
& \mathbb{E}\left(\sup _{f \in \mathcal{F}} \frac{1}{n} \sum_{i\in\mathcal{J}_{e,l}} \frac{1}{m_i}\sum_{j=1}^{m_i} Z_{i,j}^*(f) \mid\left\{\delta_i^*\right\}\right) 
 \\
& \leq \mathbb{E}\left(\sup _{f \in \mathcal{F}} \frac{1}{n } \sum_{i\in\mathcal{J}_{e,l}} \frac{1}{m_i}\sum_{j=1}^{m_i}\xi_{i,j}\left\{f\left(x_{i,j}^*\right)-f^*\left(x_{i,j}^*\right)\right\} \delta_i^*\left(x_{i,j}^*\right) \mid\left\{\delta_i^*\right\}\right)
\\
& +\mathbb{E}\left(\sup _{f \in \mathcal{F}} \frac{1}{n } \sum_{i\in\mathcal{J}_{e,l}} \frac{1}{m_i}\sum_{j=1}^{m_i}-\xi_{i,j}\left\{f\left(x_{i,j}^{\prime}\right)-f^*\left(x_{i,j}^{\prime}\right)\right\} \delta_i^*\left(x_{i,j}^{\prime}\right) \mid\left\{\delta_i^*\right\}\right). 
\end{align*}    
Hence
\begin{align*}
    &\mathbb{E}\left(\sup _{f \in \mathcal{F}} \frac{1}{n} \sum_{i\in\mathcal{J}_{e,l}} \frac{1}{m_i}\sum_{j=1}^{m_i} Z_{i,j}^*(f) \mid\left\{\delta_i\right\}\right) 
    \\
   & \leq 2 \mathbb{E}\left(\sup _{f \in \mathcal{F}} \frac{1}{n } \sum_{i\in\mathcal{J}_{e,l}} \frac{1}{m_i}\sum_{j=1}^{m_i} \xi_{i,j}\left\{f\left(x_{i,j}^*\right)-f^*\left(x_{i,j}^*\right)\right\} \delta_i^*\left(x_{i,j}^*\right) \mid\left\{\delta_i^*\right\}\right),
\end{align*}
concluding the claim.
\end{proof}

\begin{lemma}
\label{lemma7BetaMing}
Suppose Assumption \ref{assume:tv functions} holds. Then 
$$ \big|  \mathbb E(   \langle \delta_0, \delta_i\rangle    )  \big| \le 2\beta^{1/q}(i) \mathbb E(\|\delta_0\|_\lt^{2p})^{1/p},$$
for all $ i\in \mathbb Z^+$ and $p,q>1$ such that $p^{-1} + q^{-1} =1$.

\end{lemma}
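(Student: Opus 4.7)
\medskip

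\noindent\textbf{Proof proposal.} The plan is to carry out a standard coupling-plus-H\"older argument tailored to the $\mathcal{L}^2([0,1]^d)$-valued setting of the spatial noise. The starting point is the coupling property of $\beta$-mixing established in Appendix \ref{BetaMsection} of the paper: I would obtain a random element $\delta_i^*$ with the same distribution as $\delta_i$, independent of $\delta_0$, and satisfying
\[
\mathbb{P}(\delta_i \neq \delta_i^*) \le \beta(i).
\]
Because $\delta_i^*$ is independent of $\delta_0$ and, by Assumption \ref{assume:tv functions}\textbf{c}, $\mathbb{E}(\delta_i^*(x))=0$ for every $x\in[0,1]^d$, Fubini's theorem yields
\[
\mathbb{E}\bigl(\langle \delta_0, \delta_i^*\rangle\bigr)
\,=\, \int_{[0,1]^d} \mathbb{E}(\delta_0(x))\,\mathbb{E}(\delta_i^*(x))\,dx \,=\,0.
\]

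The second step is the decomposition
\[
\mathbb{E}\bigl(\langle \delta_0,\delta_i\rangle\bigr)
\,=\,\mathbb{E}\bigl(\langle \delta_0,\delta_i\rangle\bigr)-\mathbb{E}\bigl(\langle \delta_0,\delta_i^*\rangle\bigr)
\,=\,\mathbb{E}\bigl(\langle \delta_0,\delta_i\rangle\,\mathbf{1}_{\{\delta_i\neq \delta_i^*\}}\bigr)
-\mathbb{E}\bigl(\langle \delta_0,\delta_i^*\rangle\,\mathbf{1}_{\{\delta_i\neq \delta_i^*\}}\bigr),
\]
where the contributions from $\{\delta_i=\delta_i^*\}$ cancel. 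I would then bound each term with Cauchy--Schwarz in $\mathcal{L}^2([0,1]^d)$ followed by H\"older's inequality (with conjugate exponents $p,q$) applied to the pairing with the indicator. For the first term this gives
\[
\bigl|\mathbb{E}\bigl(\langle \delta_0,\delta_i\rangle\,\mathbf{1}_{\{\delta_i\neq \delta_i^*\}}\bigr)\bigr|
\le \mathbb{E}\bigl(\|\delta_0\|_\lt^{p}\|\delta_i\|_\lt^{p}\bigr)^{1/p}\,\mathbb{P}(\delta_i\neq\delta_i^*)^{1/q}.
\]
A further Cauchy--Schwarz in the expectation, combined with the identical distribution of $\delta_0$ and $\delta_i$, turns the first factor into $\mathbb{E}(\|\delta_0\|_\lt^{2p})^{1/p}$, and the second factor is at most $\beta(i)^{1/q}$ by the coupling. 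The term involving $\delta_i^*$ is handled identically, since $\delta_i^*$ shares the marginal distribution of $\delta_0$. Adding the two contributions produces the factor $2$ in the stated bound.

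The only delicate point is ensuring that the coupling $\delta_i^*$ (as a random element of $\mathcal{L}^2$) can be constructed on a common probability space with $\delta_0$ so that both independence from $\delta_0$ and $\mathbb{P}(\delta_i\neq \delta_i^*)\le \beta(i)$ hold simultaneously; this is precisely what the $\beta$-mixing block-coupling device set up in Appendix \ref{BetaMsection} provides when specialized to a single index gap, so no new machinery is needed. Everything else is routine Cauchy--Schwarz and H\"older, and the proof is essentially a few lines once the coupling is in hand.
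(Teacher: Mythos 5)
Your proposal is correct and follows essentially the same route as the paper's proof: the same coupling lemma from Appendix \ref{BetaMsection} to produce $\delta_i^*$ independent of $\delta_0$ with $\mathbb{P}(\delta_i \neq \delta_i^*)\le\beta(i)$, the same observation that independence and mean-zero kill $\mathbb{E}(\langle\delta_0,\delta_i^*\rangle)$, and the same restriction to the event $\{\delta_i\neq\delta_i^*\}$ followed by Cauchy--Schwarz, H\"older with exponents $p,q$, and identical distribution to reach $2\beta^{1/q}(i)\,\mathbb{E}(\|\delta_0\|_{\lt}^{2p})^{1/p}$. The only cosmetic difference is that you bound the $\delta_i$ and $\delta_i^*$ contributions separately where the paper applies the triangle inequality to $\|\delta_i-\delta_i^*\|_{\lt}$; both yield the factor $2$, and your variant is, if anything, slightly cleaner.
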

\begin{proof}
By Lemma \ref{lemma:coupling} there exists  $\delta_i^*$ such that $ \delta_0 $ and $\delta_i^*$ are independent, that  $\delta_i^*$   are identically distributed to    $\delta_i $, and that 
\begin{equation}
    \label{Lemma7aux1}
\mathbb P( \delta_i^* \not = \delta_i) \le \beta(i).
\end{equation}
Now observe that 
\begin{align*}
   \big| \mathbb E(\langle \delta_0, \delta_i\rangle )\big|  = \big| \mathbb E(\langle \delta_0, \delta_i-\delta_i^*\rangle ) +  \mathbb E(\langle \delta_0  ,  \delta_i^*\rangle )  \big| .
\end{align*}
Moreover by the independence between $\delta_0$ and $\delta_i^*$,
$$\mathbb E( \langle \delta_0 , \delta_i^*\rangle)=0.$$
Therefore, 
\begin{align*}
    &\big| \mathbb E(\langle \delta_0, \delta_i\rangle )\big|\le\big| \mathbb E(\langle \delta_0, \delta_i-\delta_i^*\rangle ) \big| 
    = \big|  \mathbb E \big( \mathbf{1}_{ \{ \delta_i \neq\delta_i^* \} }\langle \delta_0, \delta_i-\delta_i^*\rangle   \big) \big| 
    \le   \mathbb E \big( \mathbf{1}_{ \{ \delta_i \neq\delta_i^* \} }\|  \delta_0\|_{\mathcal{L}_2} \|  \delta_i-\delta_i^*\|_{\mathcal{L}_2}  \big).
\end{align*}
Then, using inequality and the fact that $\delta_i$ and $\delta_i^*$ have the same distribution, we obtain 
\begin{align*}
    \big| \mathbb E(\langle \delta_0, \delta_i\rangle )\big|
    \le & 2 \mathbb E \big( \mathbf{1}_{ \{ \delta_i \neq\delta_i^* \} }\|  \delta_0\|_{\mathcal{L}_2} \|  \delta_i  \|_{\mathcal{L}_2}  \big).
\end{align*}
 Hölder's inequality, followed by Cauchy Schwarz inequality, and that $\delta_0$ and $\delta_i$ have the same distribution, lead to
\begin{align*}
    \big| \mathbb E(\langle \delta_0, \delta_i\rangle )\big|\le 2\mathbb P(\delta_i^* \neq  \delta_i ) ^{1/q} \mathbb E(\|\delta_0\|_{\mathcal{L}_2}^{2p})^{1/p} \le 2\beta^{1/q}(i) \mathbb E(\|\delta_0\|_{\mathcal{L}_2}^{2p})^{1/p}.
\end{align*}
where the last inequality is followed by Inequality (\ref{Lemma7aux1}).
\end{proof}

\begin{lemma}
\label{expected-value-b}
Suppose Assumption \ref{assume:tv functions} holds.
Let  $p, q >1$ be  such that   $\frac{1}{p} + \frac{1}{q} =1 $.
Then assuming $\mathbb  \mathbb{E}(\|\delta_0\|_\lt^{ 2p})<\infty $, it is followed that
$$ \mathbb E \bigg (\big  \| \frac{1}{n} \sum_{i=1}^n \delta_i\big  \|_\lt^2 \bigg) \lesssim    \frac{1}{n} .$$
    
\end{lemma}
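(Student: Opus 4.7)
The plan is to expand the squared $\mathcal{L}_2$ norm and exploit the $\beta$-mixing decay supplied by Assumption \ref{assume:tv functions}\textbf{c} together with Lemma \ref{lemma7BetaMing}. Concretely, by linearity
\[
\mathbb{E}\left(\left\|\frac{1}{n}\sum_{i=1}^n \delta_i\right\|_{\mathcal{L}_2}^2\right) = \frac{1}{n^2}\sum_{i=1}^n\sum_{j=1}^n \mathbb{E}\bigl(\langle \delta_i,\delta_j\rangle\bigr).
\]
Since the $\delta_i$'s are identically distributed (Assumption \ref{assume:tv functions}\textbf{c}), the diagonal contributes $\tfrac{1}{n}\mathbb{E}(\|\delta_0\|_{\mathcal{L}_2}^2)$, which is $O(1/n)$ because the $2p$-th moment assumption $\mathbb{E}(\|\delta_0\|_{\mathcal{L}_2}^{2p})<\infty$ implies the second moment is finite.

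For the off-diagonal contribution I would bound $|\mathbb{E}\langle \delta_i,\delta_j\rangle|$ for $i<j$ using Lemma \ref{lemma7BetaMing}, which (by stationarity in distribution) gives
\[
\bigl|\mathbb{E}\langle \delta_i,\delta_j\rangle\bigr| \le 2\,\beta_\delta(j-i)^{1/q}\,\mathbb{E}\bigl(\|\delta_0\|_{\mathcal{L}_2}^{2p}\bigr)^{1/p}.
\]
Summing, and reindexing by $k=j-i$,
\[
\frac{2}{n^2}\sum_{1\le i<j\le n}\bigl|\mathbb{E}\langle \delta_i,\delta_j\rangle\bigr|
\le \frac{4\,\mathbb{E}(\|\delta_0\|_{\mathcal{L}_2}^{2p})^{1/p}}{n^2}\sum_{k=1}^{n-1}(n-k)\,\beta_\delta(k)^{1/q}
\le \frac{4\,\mathbb{E}(\|\delta_0\|_{\mathcal{L}_2}^{2p})^{1/p}}{n}\sum_{k=1}^{\infty}\beta_\delta(k)^{1/q}.
\]
The exponential decay $\beta_\delta(k)\le e^{-C_\beta k}$ guarantees that $\sum_{k\ge 1}\beta_\delta(k)^{1/q}\le \sum_{k\ge 1}e^{-C_\beta k/q}<\infty$, so this entire off-diagonal contribution is also $O(1/n)$.

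\textbf{Main obstacle.} There is essentially no deep obstacle here; the only mild care required is that Lemma \ref{lemma7BetaMing} is stated for the pair $(\delta_0,\delta_i)$, so I would justify applying it to the generic pair $(\delta_i,\delta_j)$ by invoking the identical distribution of the $\delta_i$'s together with the stationarity of the $\beta$-mixing coefficients $\beta_\delta(\cdot)$, as encoded in Assumption \ref{assume:tv functions}\textbf{c}. Adding the diagonal and off-diagonal bounds yields the claim.
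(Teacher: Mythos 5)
Your proposal is correct and follows essentially the same route as the paper: expand the squared norm into diagonal and off-diagonal sums, bound the diagonal by the second moment, and control the cross terms via Lemma \ref{lemma7BetaMing} together with the summability of $\beta_\delta(\cdot)^{1/q}$ coming from the exponential decay in Assumption \ref{assume:tv functions}. The only (welcome) addition is that you explicitly justify applying Lemma \ref{lemma7BetaMing} to a generic pair $(\delta_i,\delta_j)$ via identical distribution and stationarity of the mixing coefficients, a point the paper passes over silently.
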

\begin{proof} 
Note that 
\begin{align*}
    \mathbb E \bigg (\big  \| \frac{1}{n} \sum_{i=1}^n \delta_i\big  \|_\lt^2 \bigg) =  \frac{1}{n^2 } \sum_{i=1}^n\mathbb E \big\{ \|\delta_i\|_\lt ^2  \big\} + \frac{ 2}{n^2 }  \sum_{i < j }  \mathbb E \big\{    \langle \delta_i, \delta_j \rangle  \big\}.
\end{align*}
and that $\frac{1}{n^2 } \sum_{i=1}^n\mathbb E \big\{ \|\delta_i\|_\lt ^2  \big\} \lesssim \frac{1}{n}.$
In addition, 
\begin{align*} 
&\frac{1}{n^2 }\sum_{i < j }  \mathbb E \big\{    \langle \delta_i, \delta_j \rangle  \big\} \le \frac{1}{n^2 }   \frac{1}{n^2 }\sum_{i =1 }^n \sum_{j=i+1}^\infty  \big|  \mathbb E \big\{    \langle \delta_i, \delta_j \rangle   \big\} \big| 
\\\le   & \frac{1}{n^2 } \sum_{i =1 }^n \sum_{j=i+1}^\infty 2 \beta^{1/q}(j-i) \mathbb E(\|\delta_0\|_\lt^{2p})^{1/p}
 \lesssim   \frac{1}{n^2 } \sum_{i =1 }^n    \mathbb E(\|\delta_0\|_\lt^{2p})^{1/p} \lesssim \frac{1}{n},
\end{align*}
where the first inequality is followed by Lemma \ref{lemma7BetaMing} and the second inequality by the fact that from Assumption \ref{assume:tv functions} we have $ \sum_{l=1}^\infty \beta_{\delta}^{1/q}(l) < \infty $.

\end{proof}

\newpage
\section{\texorpdfstring{$\beta$-}{K}mixing coefficients}\label{BetaMsection}
In the present section, we point out some results of mixing. For more proofs and details, we refer to \cite{doukhan2012mixing}.
Recall that if $X$ is a random variable in $\mathbb{R}^d$, we represent the $\sigma$-algebra generated by $X$ as $\sigma(X)$. Furthermore, for a collection of random variables ${X_t : t \in I} \subseteq \mathbb{R}^d$, we use $\sigma(X_t : t \in I)$ to denote the $\sigma$-algebra generated by the set of random variables ${X_t : t \in I}$. Now a sequence of random variables $\left(X_t, t \in \mathbb{Z}\right)\subset \mathbb{R}^d$ is said to be $\boldsymbol{\beta}$-mixing if
$$
\beta_l=\sup _{t \in \mathbb{Z}} \beta\left(\sigma\left(X_s, s \leq t\right), \sigma\left(X_s, s \geq t+l\right)\right) \underset{l \rightarrow+\infty}{\longrightarrow} 0,
$$
where the $\beta$ coefficients are defined as,
$$ \beta(\sigma\left(X_s, s \leq t\right), \sigma\left(X_s, s \geq t+l\right))=\mathbb{E}\sup _{C \in \sigma\left(X_s, s \geq t+l\right)}|\mathbb{P}(C)-\mathbb{P}(C \mid \sigma\left(X_s, s \leq t\right))|.$$
\begin{theorem} [\cite{doukhan2012mixing} Theorem 1]\label{lemma:coupling}
Let $E$ and $F$ be two polish spaces and $(X, Y)$ some $E\times F $-valued random variables. A random variable $Y^*$ can be defined with the same probability distribution as $Y$, independent of $X$ and such that 
$$\mathbb{P}
(Y\not = Y^*) = \beta(\sigma(X), \sigma(Y)).$$
For some measurable function $f$ on $E\times F \times [0,1]$, and some uniform random variable $\Delta $ on the interval $[0,1]$, $Y^*$ 
takes the form $Y^* = f(X, Y, \Delta)$.
\end{theorem}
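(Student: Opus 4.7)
The plan is to construct $Y^*$ via a measurable, pointwise maximal coupling between the conditional law of $Y$ given $X$ and the marginal law of $Y$. Since $F$ is polish, the regular conditional distribution $Q_x(\cdot) := \mathbb{P}(Y \in \cdot \mid X = x)$ exists as a Borel kernel, and $P_Y(\cdot) = \int Q_x(\cdot)\, P_X(dx)$. The first step is to recognise the identity
$$\beta(\sigma(X), \sigma(Y)) = \int d_{\mathrm{TV}}(Q_x, P_Y)\, P_X(dx),$$
which follows directly from the definition $\beta(\mathcal{F}, \mathcal{G}) = \mathbb{E} \sup_{C \in \mathcal{G}} |\mathbb{P}(C) - \mathbb{P}(C \mid \mathcal{F})|$ recalled in Appendix \ref{BetaMsection}, since for each fixed $x$ the inner supremum, taken over $C \in \sigma(Y)$, coincides with the total-variation distance between $Q_x$ and $P_Y$.

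For the coupling step, I would write the Lebesgue-type decomposition
$$Q_x = \lambda_x + R_x^{(1)}, \qquad P_Y = \lambda_x + R_x^{(2)},$$
with $\lambda_x := Q_x \wedge P_Y$ of total mass $1 - d_{\mathrm{TV}}(Q_x, P_Y)$ and with $R_x^{(1)}, R_x^{(2)}$ mutually singular residuals of common mass $d_{\mathrm{TV}}(Q_x, P_Y)$. Enlarging the probability space so as to carry $\Delta \sim \mathrm{Unif}[0,1]$ independent of $(X, Y)$, I set
$$Y^* \;=\; \mathbf{1}\{\Delta \leq h(X, Y)\}\, Y \;+\; \mathbf{1}\{\Delta > h(X, Y)\}\, G(X, \Delta),$$
where $h(x,y) := \frac{d\lambda_x}{dQ_x}(y) \in [0,1]$ and $G(x, \cdot)$ is a measurable sampler of the probability measure $R_x^{(2)}/|R_x^{(2)}|$ fed by the renormalised value of $\Delta$ on the rejection event. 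A direct conditional computation then gives, for every Borel $A \subseteq F$,
$$\mathbb{P}(Y^* \in A \mid X = x) \;=\; \int_A h(x,y)\, Q_x(dy) \;+\; d_{\mathrm{TV}}(Q_x, P_Y) \cdot \frac{R_x^{(2)}(A)}{|R_x^{(2)}|} \;=\; \lambda_x(A) + R_x^{(2)}(A) \;=\; P_Y(A),$$
so $Y^* \sim P_Y$ regardless of $x$, hence $Y^* \perp X$; and $\mathbb{P}(Y^* \neq Y \mid X = x) = \int (1 - h(x,y))\, Q_x(dy) = d_{\mathrm{TV}}(Q_x, P_Y)$, which integrates in $x$ to $\beta(\sigma(X), \sigma(Y))$.

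The main obstacle is joint measurability of the construction in $(x, y, \delta)$, so that $Y^* = f(X, Y, \Delta)$ is a single Borel map on $E \times F \times [0,1]$ as claimed; this is precisely where the polish hypothesis is essential. For $h$, I would dominate both $Q_x$ and $P_Y$ by the kernel $\tfrac{1}{2}(Q_x + P_Y)$ and invoke a measurable Radon--Nikodym theorem for kernels on polish spaces (e.g.\ via disintegration, or by a martingale argument on a countable generating $\pi$-system, which automatically produces a jointly Borel density). For $G$, using that any polish $F$ is Borel-isomorphic to a Borel subset of $[0,1]$, I would take $G(x, \cdot)$ to be the inverse-CDF sampler of $R_x^{(2)}/|R_x^{(2)}|$; joint measurability in $x$ then follows from a standard measurable selection argument. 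Finally, the two independent uniforms required to drive the accept/reject step and the sampler $G$ can be extracted from the single $\Delta$ via a Borel bijection $[0,1] \leftrightarrow [0,1]^2$, completing the construction of the function $f$.
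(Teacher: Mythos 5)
There is nothing in the paper to compare against here: the paper does not prove this statement at all. It is quoted verbatim as Theorem 1 of \cite{doukhan2012mixing} (Berbee's coupling lemma) in Appendix \ref{BetaMsection}, with the reader referred to that reference for the proof. Your construction is correct and is, in substance, the classical maximal-coupling/rejection-sampling proof of Berbee's lemma: the identity $\beta(\sigma(X),\sigma(Y))=\int d_{\mathrm{TV}}(Q_x,P_Y)\,P_X(dx)$ (valid on polish spaces, where the supremum in the definition of $\beta$ can be computed along a countable generating algebra, which also gives measurability of $x\mapsto d_{\mathrm{TV}}(Q_x,P_Y)$), followed by the decomposition $Q_x=\lambda_x+R_x^{(1)}$, $P_Y=\lambda_x+R_x^{(2)}$ with $\lambda_x=Q_x\wedge P_Y$, acceptance with probability $h(x,y)=\tfrac{d\lambda_x}{dQ_x}(y)$, and resampling from $R_x^{(2)}/|R_x^{(2)}|$ otherwise. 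Your treatment of the measurability issues (jointly Borel Radon--Nikodym density via domination by $\tfrac12(Q_x+P_Y)$ and a countable $\pi$-system, inverse-CDF sampling after a Borel embedding of $F$ into $[0,1]$) is the right way to produce the single Borel map $f$ demanded by the statement.

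Two small points deserve to be made explicit. First, your computation $\mathbb{P}(Y^*\neq Y\mid X=x)=\int(1-h(x,y))\,Q_x(dy)$ silently identifies $\{Y^*\neq Y\}$ with the rejection event; without further argument one only gets $\mathbb{P}(Y^*\neq Y\mid X=x)\leq d_{\mathrm{TV}}(Q_x,P_Y)$, since $G(x,\cdot)$ could in principle reproduce the value of $Y$. The missing line is exactly where the mutual singularity you set up is used: given $X=x$ and rejection, $Y\sim R_x^{(1)}/|R_x^{(1)}|$ and $Y^*\sim R_x^{(2)}/|R_x^{(2)}|$ independently (the renormalised uniform is independent of $Y$ given rejection), and if $S$ is a Borel set with $R_x^{(1)}(S^c)=0=R_x^{(2)}(S)$ then $\{Y^*=Y\}\subseteq\{Y\in S^c\}\cup\{Y^*\in S\}$ has conditional probability zero, which upgrades the inequality to the stated equality (on $\{d_{\mathrm{TV}}(Q_x,P_Y)=0\}$ define $G$ arbitrarily; the rejection event is then null). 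Second, the final step of extracting two independent uniforms from the single $\Delta$ requires a \emph{measure-preserving} Borel isomorphism $[0,1]\leftrightarrow[0,1]^2$, pushing Lebesgue measure to the product of Lebesgue measures (e.g.\ binary digit interleaving, modulo a null set); a generic Borel bijection is not enough. With these two lines added, your proof is complete and matches the standard argument behind the cited result.
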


For any  $L \in \{1,\ldots, n\} $ and 
$ K=\lfloor n/L \rfloor$,  denote 
$$ \I_s = \begin{cases}
    ((s-1) L, \ s  L] & \text{when } 1 \le s \le K,
    \\
    ( K  L, n] &\text{when } s=K+1  .
\end{cases}$$
Denote for $ 1\le l \le L$,
\begin{align*} \mathcal J_{e,l} =&\{ i \in \mathcal I_s \text{ for }  s  \% 2=0 \text{ and } i    \%   L= l\} \quad \text{and}
\\ \mathcal J_{o, l} = &\{ i \in \mathcal I_s \text{ for }  s  \% 2=1 \text{ and } i    \%   L = l\} .
\end{align*}
   Then 
   \begin{align} \label{eq:temporal spacial effective sample size}\frac{K-1}{2 }\le |\mathcal J_{e,l}|  \le \frac{K+2 }{2} \text{ and }
   \frac{K-1}{2 }\le |\mathcal J_{o,l}|  \le \frac{K+2 }{2},
   \end{align} 
   and that 
   $$\bigcup _{l=1}^{L}\mathcal J_{e,l} \cup  \bigcup_{l=1}^{L}\mathcal J_{o,l} =[1,\ldots, n].$$

\begin{lemma}
\label{Ind-cop}Let $\{x_{i,j}\}_{i=1,j=1}^{n,m_i}$ in $[0,1]^d$ are identically distributed random variables such that $\{\sigma_x(i)\}_{i=0}^n$ are $\beta$-mixing, where $\sigma_x(i)=\sigma(x_{i,j},j\in[m_i])$. Moreover, suppose that for any $i\in[n]$ fixed, $\{x_{i,j}\}_{j=1}^{m_i}$ are independent.
For any $l\in [1,\ldots L]$, there exists a collection of independent identically distributed random variables 
$\{   x^*_{i,j}   \}_{1\le j \le m_i ,i \in \mathcal J_{e,l}} $, with same distribution as $x_{1,1}$, such that for any $i \in \mathcal J_{e,l}$,
$$ \mathbb{P} (x_{i,j}^* \not =  x_{i,j} \text{ for some } j \in [1,\ldots,m_i] ) \le \beta(L).$$
The exact same result is attained for $\mathcal{J}_{o,l}.$
\end{lemma}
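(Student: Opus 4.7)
I would prove \Cref{Ind-cop} by iteratively applying the coupling theorem (\Cref{lemma:coupling}) along the enumerated indices of $\mathcal{J}_{e,l}$, treating the whole block of measurements at each index $i$ as a single $\sigma_x(i)$-measurable random element. First I would enumerate $\mathcal{J}_{e,l}=\{i_1<i_2<\cdots<i_M\}$ and, for each $k$, bundle the observations into the vector $Y_k:=(x_{i_k,1},\ldots,x_{i_k,m_{i_k}})$, which is $\sigma_x(i_k)$-measurable and lives in the polish space $([0,1]^d)^{m_{i_k}}$. The key geometric fact I would exploit is that by the definition of $\mathcal{J}_{e,l}$ (one index per even block, separated by one odd block), consecutive indices satisfy $i_k-i_{k-1}=2L\ge L$.

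Next I would construct the copies recursively. Set $Y_1^*:=Y_1$. Having built $(Y_1^*,\ldots,Y_{k-1}^*)$ using auxiliary independent uniforms $\Delta_2,\ldots,\Delta_{k-1}$, apply \Cref{lemma:coupling} to the pair $(X,Y):=((Y_1^*,\ldots,Y_{k-1}^*),Y_k)$ with a fresh uniform $\Delta_k$ independent of everything previously constructed and of $(Y_j)_{j\le M}$. This produces $Y_k^*=f_k(Y_1^*,\ldots,Y_{k-1}^*,Y_k,\Delta_k)$, distributed identically to $Y_k$, independent of $(Y_1^*,\ldots,Y_{k-1}^*)$, and with
\[
\mathbb{P}(Y_k^*\neq Y_k)=\beta\!\left(\sigma(Y_1^*,\ldots,Y_{k-1}^*),\,\sigma(Y_k)\right).
\]
Setting $x_{i_k,j}^*:=(Y_k^*)_j$ recovers the claimed collection. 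Since within each block $\{x_{i,j}\}_{j=1}^{m_i}$ are i.i.d.\ (by \Cref{assume:tv functions}\textbf{a}) with marginal equal to the law of $x_{1,1}$, and the $Y_k^*$'s are independent across $k$ with $Y_k^*\stackrel{d}{=}Y_k$, the full family $\{x_{i,j}^*\}_{i\in\mathcal{J}_{e,l},\,1\le j\le m_i}$ is i.i.d.\ with the correct marginal.

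The main technical point, and the only delicate step, is verifying the inequality
\[
\beta\!\left(\sigma(Y_1^*,\ldots,Y_{k-1}^*),\,\sigma(Y_k)\right)\le \beta_x(L).
\]
Here I would use two facts about $\beta$-coefficients: (i) $Y_j^*$ is measurable with respect to $\sigma(x_s:s\le i_j)\vee\sigma(\Delta_2,\ldots,\Delta_j)$, and the $\Delta_r$'s are jointly independent of the entire $x$-process; and (ii) augmenting one of the $\sigma$-algebras by an independent $\sigma$-algebra does not increase the $\beta$-coefficient. Together these give $\beta(\sigma(Y_1^*,\ldots,Y_{k-1}^*),\sigma(Y_k))\le\beta(\sigma(x_s:s\le i_{k-1}),\sigma_x(i_k))\le\beta_x(i_k-i_{k-1})=\beta_x(2L)\le\beta_x(L)$, where the last inequality uses monotonicity of $\beta_x$. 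The advertised bound $\mathbb{P}(x_{i_k,j}^*\neq x_{i_k,j}\text{ for some }j)=\mathbb{P}(Y_k^*\neq Y_k)\le\beta_x(L)$ then follows, and the identical argument with $\mathcal{J}_{e,l}$ replaced by $\mathcal{J}_{o,l}$ handles the second assertion. I expect the only potential obstacle to be making property (ii) above fully rigorous, but this is a standard consequence of the variational characterization of the $\beta$-coefficient and is available in \cite{doukhan2012mixing}.
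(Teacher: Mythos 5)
Your proof is correct, but it inverts the paper's division of labor, so it is worth recording how the two arguments differ. The paper applies \Cref{lemma:coupling} with $X$ equal to the original process up to time $i-L$ and $Y$ the block $\{x_{i,j}\}_{j=1}^{m_i}$, so the coupling probability is immediately controlled by $\beta_x(L)$ straight from the definition of the sequence coefficient; the joint independence of the resulting copies across $\mathcal{J}_{e,l}$ is then \emph{not} automatic, and the paper establishes it by a separate induction over the enumeration $\{L+l,3L+l,5L+l,\dots\}$, using that each new copy is independent of the process up to the preceding index (which lies $2L$ earlier) together with the auxiliary uniforms being independent of $\sigma_x$. You instead couple each block directly against the previously constructed copies $(Y_1^*,\dots,Y_{k-1}^*)$, so joint independence comes for free from the coupling theorem, and the work is transferred to the probability bound: you must dominate $\beta\bigl(\sigma(Y_1^*,\dots,Y_{k-1}^*),\sigma(Y_k)\bigr)$ by $\beta_x(L)$, which you do correctly via measurability of the copies with respect to $\sigma(x_s:s\le i_{k-1})\vee\sigma(\Delta_2,\dots,\Delta_{k-1})$, invariance of the $\beta$-coefficient under augmentation by an independent $\sigma$-algebra, monotonicity under sub-$\sigma$-fields, and monotonicity of $l\mapsto\beta_x(l)$; your route in fact yields the slightly sharper bound $\beta_x(2L)\le\beta_x(L)$. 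Both arguments are sound: the paper's makes the coupling probability an identity at the cost of a nontrivial independence induction (whose reliance on the uniforms being independent of the whole process is left somewhat implicit there), while yours makes independence structural at the cost of invoking two standard but nontrivial properties of $\beta$-coefficients, which you rightly source to \cite{doukhan2012mixing}.

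One small fix: in your construction paragraph you require $\Delta_k$ to be independent only of the sampled blocks $(Y_j)_{j\le M}$, whereas your verification step (i) correctly needs the uniforms independent of the \emph{entire} $x$-process, since $\mathcal{A}=\sigma(x_s:s\le i_{k-1})$ contains the odd blocks as well. State the stronger independence requirement when you introduce $\Delta_k$; this is cosmetic, not a gap.
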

\begin{proof}
By Lemma \ref{lemma:coupling}, for any $i\in [1,\ldots,n]$,  there exists $  \{x_{i,j} ^*\}_{j=1}^{m_i} $ such that 
\begin{align} \label{eq:temporal spacial coupling error}\mathbb{P}(x_{i,j}^*\not = x_{i,j} 
\text{ for some } j \in [0, \ldots, m_i]) =\beta(L),
\end{align} 
that $  \{ x_{i,j}\}_{j=1}^{m_i}  $ and  $  \{x_{i,j} ^*\}_{j=1}^{m_i}   $ have the same marginal distribution,
and that $     \{x_{i,j} ^*\}_{j=1}^{m_i}     $ is independent of $     \sigma( \{x_{t,j}  \}_{ t\le i-L ,1\le j \le m_i  } )  .$ In addition, there exists a uniform random variable $\Delta_i$ independent of $\sigma_x$ such that  $  \{x_{i,j} ^*\}_{j=1}^{m_i }  $ 
 is measurable with respect to $\sigma(\{ x_{i,j}\}_{j=1}^{m_i} , \{x_{tj}  \}_{ t\le i-L ,1\le j \le m_i  }, \Delta  ) $.
 \\
 \\
Note that  $\{ x^*_{i,j}\}_{i \in \mathcal{J}_{e,l}}$ are identically distributed because $\{ x^*_{i,j}\}_{i \in \mathcal{J}_{e,l}}$ have the same marginal density. Therefore it suffices to justify the independence. 
\\
\\
 Note that 
$$\mathcal{J}_{e,l} = \{L+l, 3L+l, 5L+l, \dots  \} \cap [1,  n].$$
 Denote 
$$ \mathcal{J}_{e,l, K} = \{L+l, 3L+l, 5L+l, \dots, (2K+1)L+l  \}\cap [1,  n],$$
 and so $ \bigcup _{K \ge 0 } \mathcal{J}_{e,l, K} = \mathcal{J}_{e,l} $.
 By induction, suppose 
 $ \{ x^*_{i,j}\}_{i\in \mathcal J_{e, l, K}, 1\le j \le m_i } $ are jointly independent. Let $ \{ B_{i,j} \}_{i\in \mathcal{J}_{e,l}, 1\le j \le m_i}$  be any collection of Borel sets in $ [0,1]^d$. Then 
 \begin{align*}
     &\mathbb{P}\bigg( x^*_{i, j} \in B_{i,j}  
     , i \in \mathcal J_{e,l,K+1}, 1\le j \le m_i
     \bigg) 
     \\
     =&\mathbb{P} \bigg(  x^*_{i, j} \in B_{i,j}  
     , i \in \mathcal J_{e,l,K}, 1\le j \le m_i  \bigg)  
     \mathbb{P} \bigg(  x^*_{(2K+3)L+l, j} \in B_{(2K+3)L+l,j}   
      , 1\le j \le m  \bigg) , 
 \end{align*}
 where equation follows because $\{ x^*_{(2K+3)L+l, j}\}_{1\le j \le m_i}$ are independent of $     \sigma( \{x_{t,j}  \}_{ t\le (2K+1)L+l ,1\le j \le m_i  } )  .$
 By induction,
 $$\mathbb{P} \bigg(  x^*_{i,j} \in B_{i,j} , 
     i \in \mathcal J_{e,l,K}, 1\le j \le m_i  \bigg)=  \prod_{i \in \mathcal{J}_{e, l, K}, 1\le j \le m_i} \mathbb{P}\bigg(  x^*_{i,j} \in B_{i,j}\bigg).$$
     Since for any $i \in [1, \ldots, n ]$,
       $  \{ x_{i,j}\}_{j=1}^{m_i}  $ and  $  \{x_{i,j} ^*\}_{j=1}^{m_i}   $ have the same marginal distribution,
  \begin{align*}
       &\mathbb{P} \bigg(  x^*_{(2K+3)L+l, j} \in B_{(2K+3)L+l, j} , 
      1\le j \le m_i  \bigg)
     \\
     =&\prod_{1\le j \le m_i }\mathbb{P} \bigg(  x^*_{(2K+3)L+l, j} \in B_{(2K+3)L+l, j}  \bigg) .
  \end{align*}    
  So \begin{align*}
     &\mathbb{P}\bigg( x^*_{i,j} \in B_{i,j}  ,
     i \in \mathcal J_{e,l,K+1}, 1\le j \le m_i
     \bigg)  =  \prod_{i \in \mathcal{J}_{e, l, K+1} , 1\le j \le m_i} \mathbb{P}\bigg(  x^*_{i, j} \in B_{i,j}\bigg).
      \end{align*}
      This implies that  $ \{ x^*_{i,j}\}_{i\in \mathcal J_{e, l, K+1}, 1\le j \le m_i } $ are jointly independent.
      To analyze the case where the indices are considered in $\mathcal{J}_{o,l}$, we follow the same line of arguments, but now we note that
$$\mathcal{J}_{o,l} = \{l, 2L+l, 4L+l, \dots  \} \cap [1,  n],$$
and denote,
$$ \mathcal{J}_{o,l, K} = \{l, 2L+l, 4L+l, \dots, (2K)L+l  \}\cap [1,  n],$$
which implies $ \bigcup _{K \ge 0 } \mathcal{J}_{o,l, K} = \mathcal{J}_{o,l} $. Therefore, the result is followed by replicating the analysis performed above.
\end{proof}

\begin{lemma}
\label{Ind-cop-measerrors}Let $\{\epsilon_{i,j}\}_{i=1,j=1}^{n,m_i}$ in $\mathbb{R}$ are identically distributed random variables such that $\{\sigma_\epsilon(i)\}_{i=0}^n$ are $\beta$-mixing, where $\sigma_\epsilon(i)=\sigma(\epsilon_{i,j},j\in[m_i])$. Moreover, suppose that for any $i\in[n]$ fixed, $\{\epsilon_{i,j}\}_{j=1}^{m_i}$ are independent.
For any $l\in [1,\ldots L]$, there exists a collection of independent identically distributed  random variables 
$\{   \epsilon^*_{i,j}   \}_{1\le j \le m_i ,i \in \mathcal J_{e,l}} $, with same distribution as $\epsilon_{1,1}$, such that for any $i \in \mathcal J_{e,l}$,
$$ \mathbb{P} (\epsilon_{i,j}^* \not =  \epsilon_{i,j} \text{ for some } j \in [1,\ldots,m_i] ) \le \beta(L).$$
The exact same result is attained for $\mathcal{J}_{o,l}.$
\end{lemma}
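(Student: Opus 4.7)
The proof will mirror the argument for Lemma \ref{Ind-cop} almost verbatim, since the only structural difference is that we are coupling measurement errors rather than design points; the $\beta$-mixing hypothesis on $\{\sigma_\epsilon(i)\}$ plays exactly the role of the one on $\{\sigma_x(i)\}$. The plan is to apply Theorem \ref{lemma:coupling} block by block, and then use the spacing of the index sets $\mathcal J_{e,l}$ and $\mathcal J_{o,l}$ to upgrade pairwise coupling into full joint independence.

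First, for each $i\in[n]$, I would apply Theorem \ref{lemma:coupling} with $X=\{\epsilon_{t,j}\}_{t\le i-L,\,1\le j\le m_t}$ and $Y=\{\epsilon_{i,j}\}_{j=1}^{m_i}$. This produces a random vector $\{\epsilon_{i,j}^*\}_{j=1}^{m_i}$ which has the same joint law as $\{\epsilon_{i,j}\}_{j=1}^{m_i}$, is independent of $\sigma(\{\epsilon_{t,j}\}_{t\le i-L,\,1\le j\le m_t})$, and satisfies
\[
\mathbb{P}\bigl(\epsilon_{i,j}^*\neq \epsilon_{i,j}\text{ for some }j\in[m_i]\bigr)\le \beta_\epsilon(L),
\]
using the $\beta$-mixing assumption on $\{\sigma_\epsilon(i)\}$. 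This immediately gives the coupling bound stated in the lemma. Moreover, since $\epsilon_{1,1}, \ldots, \epsilon_{n, m_n}$ are identically distributed and the marginal law is preserved by the coupling, each $\{\epsilon_{i,j}^*\}_{j=1}^{m_i}$ has the same distribution as $\{\epsilon_{1,j}\}_{j=1}^{m_1}$, and within a fixed $i$ the coordinates $\{\epsilon_{i,j}^*\}_{j=1}^{m_i}$ inherit independence from the assumption that $\{\epsilon_{i,j}\}_{j=1}^{m_i}$ are independent for fixed $i$.

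The main step is establishing that $\{\epsilon_{i,j}^*\}_{i\in\mathcal J_{e,l},\,1\le j\le m_i}$ are jointly independent across $i$. I would write
\[
\mathcal J_{e,l,K}=\{L+l,\,3L+l,\,5L+l,\ldots,(2K+1)L+l\}\cap[1,n],
\]
so that $\bigcup_{K\ge 0}\mathcal J_{e,l,K}=\mathcal J_{e,l}$, and proceed by induction on $K$. Assuming joint independence over $\mathcal J_{e,l,K}$, the new index $(2K+3)L+l$ exceeds the previous index by $2L$, so the block $\{\epsilon_{(2K+3)L+l,j}^*\}_{j}$ is, by the coupling construction, independent of $\sigma(\{\epsilon_{t,j}\}_{t\le (2K+1)L+l,\,1\le j\le m_t})$, hence in particular independent of $\{\epsilon_{i,j}^*\}_{i\in\mathcal J_{e,l,K},\,1\le j\le m_i}$ since the latter is measurable with respect to that $\sigma$-algebra (together with the auxiliary uniforms, which can be taken mutually independent of everything). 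Factoring the joint probability over Borel sets then extends the product structure to $\mathcal J_{e,l,K+1}$, closing the induction. The case of $\mathcal J_{o,l}=\{l,\,2L+l,\,4L+l,\ldots\}\cap[1,n]$ is handled identically, with the starting index being $l$ rather than $L+l$ but with the same spacing of $2L$ between consecutive elements.

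I do not anticipate a genuine obstacle: the argument is a line-for-line transcription of the proof of Lemma \ref{Ind-cop}. The only place where care is needed is to make sure that the auxiliary uniform random variables $\Delta_i$ used inside Theorem \ref{lemma:coupling} can be chosen mutually independent (and independent of $\sigma_\epsilon$) so that the measurability of $\{\epsilon_{i,j}^*\}$ with respect to the enlarged $\sigma$-algebra does not destroy independence across blocks; this is standard and identical to the treatment in Lemma \ref{Ind-cop}.
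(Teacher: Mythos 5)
Your proposal is correct and matches the paper exactly: the paper's proof of this lemma simply states that the argument parallels that of Lemma \ref{Ind-cop} verbatim, with the only change being Borel sets in $\mathbb{R}$ rather than $[0,1]^d$, which is precisely the block-coupling-plus-induction argument you reproduce (including the spacing-$2L$ induction over $\mathcal{J}_{e,l,K}$ and $\mathcal{J}_{o,l,K}$). Your added remark about choosing the auxiliary uniforms mutually independent is a reasonable point of care that the paper leaves implicit.
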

\begin{proof}
    The arguments used in this proof parallel to those employed in the proof of Lemma  \ref{Ind-cop}. The key distinction lies in our utilization of Borel sets within $\mathbb{R}$ for the present context.
\end{proof}

\begin{lemma}
\label{Ind-cop-func} Let $\{\delta_{i}\}_{i=1}^{n}$ be identically distributed random functions in $[0,1]^d,$ such that $\{\sigma_{\delta}(i)\}_{i=1}^n$ are $\beta$-mixing, where $\sigma_{\delta}(i)=\sigma(\delta_i).$
For any $l\in [0,\ldots L-1]$, there exists a collection $\{   \delta^*_{i}   \}_{i \in \mathcal J_{e,l}} $  of independent identically distributed  random functions in $[0,1]^d,$  with same distribution as $\delta_1,$ 
such that for any $i \in \mathcal J_{e,l}$,
$$ \mathbb{P} (\delta_{i}^* \not =  \delta_{i} ) \le \beta(L).$$
The exact same result is attained for $\mathcal{J}_{o,l}.$
\end{lemma}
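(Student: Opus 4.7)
The plan is to mirror the proof of Lemma \ref{Ind-cop} essentially verbatim, with the simplification that each time index $i$ now carries a single random object $\delta_i \in \mathcal{L}_2([0,1]^d)$ rather than an $m_i$-tuple $\{x_{i,j}\}_{j=1}^{m_i}$. First, for each $i \in [1,\ldots,n]$, I would invoke Theorem \ref{lemma:coupling} applied to the pair $(X,Y)=(\{\delta_t\}_{t\le i-L},\,\delta_i)$ to obtain a random element $\delta_i^*$, measurable with respect to $\sigma(\delta_i,\{\delta_t\}_{t\le i-L},\Delta_i)$ for some auxiliary uniform $\Delta_i$ independent of $\sigma_\delta$, satisfying: (a) $\delta_i^*$ has the same marginal distribution as $\delta_i$ (hence as $\delta_1$), (b) $\delta_i^*$ is independent of $\sigma(\{\delta_t\}_{t\le i-L})$, and (c) $\mathbb{P}(\delta_i^*\ne \delta_i) = \beta(\sigma(\{\delta_t\}_{t\le i-L}),\sigma(\delta_i)) \le \beta_\delta(L)$, where the bound uses the $\beta$-mixing assumption on $\{\sigma_\delta(i)\}$.

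Next, I would establish joint independence of $\{\delta_i^*\}_{i \in \mathcal{J}_{e,l}}$ by induction on the truncated index set $\mathcal{J}_{e,l,K}=\{L+l,3L+l,\ldots,(2K+1)L+l\}\cap[1,n]$, exactly as in Lemma \ref{Ind-cop}. The structural fact that drives the induction is that consecutive elements of $\mathcal{J}_{e,l}$ are separated by exactly $2L$: the new index $(2K+3)L+l$ exceeds the largest previous index $(2K+1)L+l$ by $2L\ge L$, so by (b) the newly added $\delta_{(2K+3)L+l}^*$ is independent of $\sigma(\{\delta_t\}_{t\le (2K+1)L+l})$, and the latter $\sigma$-algebra (enlarged by the independent auxiliaries $\{\Delta_i\}$) contains $\{\delta_i^*\}_{i\in\mathcal{J}_{e,l,K}}$ by construction. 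Taking the Borel sets $\{B_i\}_{i\in\mathcal{J}_{e,l,K+1}}$ in $\mathcal{L}_2([0,1]^d)$ and factoring, one obtains
\[
\mathbb{P}\big(\delta_i^*\in B_i,\,i\in\mathcal{J}_{e,l,K+1}\big)=\prod_{i\in\mathcal{J}_{e,l,K+1}}\mathbb{P}(\delta_i^*\in B_i),
\]
which upgrades to full joint independence on $\mathcal{J}_{e,l}=\bigcup_K \mathcal{J}_{e,l,K}$ by a standard monotone class argument. The identically distributed property follows from (a), and the coupling bound from (c). The parallel claim on odd blocks $\mathcal{J}_{o,l}$ follows by replacing the index skeleton $\{L+l,3L+l,\ldots\}$ with $\{l,2L+l,4L+l,\ldots\}$; consecutive elements still differ by $2L$, so the same inductive step applies.

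I do not foresee any real obstacle. The only point that deserves a moment of care is the measurability clause in Theorem \ref{lemma:coupling}: one needs $\delta_i^*$ to be a measurable function of $(\{\delta_t\}_{t\le i-L},\delta_i,\Delta_i)$ so that the $\sigma$-algebra containment used in the inductive step is valid. This is precisely what Theorem \ref{lemma:coupling} guarantees (the cited $f(X,Y,\Delta)$ form), and it is legitimate here because $\mathcal{L}_2([0,1]^d)$ is a Polish space, matching the hypothesis of that theorem. Everything else is bookkeeping identical to Lemma \ref{Ind-cop}.
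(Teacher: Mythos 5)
Your proposal is correct and follows essentially the same route as the paper's own proof: invoke the coupling theorem with $X=\{\delta_t\}_{t\le i-L}$ to produce $\delta_i^*$ with the stated marginal, independence, and coupling properties, then establish joint independence by induction over the truncated sets $\mathcal{J}_{e,l,K}$ using the $2L$ spacing between consecutive indices, with the identical argument repeated for $\mathcal{J}_{o,l}$. The only cosmetic difference is that the paper certifies independence through finite-dimensional events $\{\delta_i^*(t_i)\in B_i\}$ with points $t_i\in[0,1]^d$ and Borel sets $B_i\subset\mathbb{R}$, whereas you factor events given by Borel sets of $\mathcal{L}_2([0,1]^d)$ directly and close with a monotone class argument; both are legitimate since the coupling theorem applies in Polish spaces.
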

\begin{proof}
    
By Lemma \ref{lemma:coupling}, for any $i\in [1,\ldots,n]$,  there exists $  \delta_{i} ^* $ such that 
\begin{align} \label{eq:temporal spacial coupling error-func}\mathbb{P}(\delta_{i}^{*}\not = \delta_{i} 
) =\beta(L),
\end{align} 
that $  \delta_i  $ and  $  \delta_i^*  $ have the same marginal distribution,
and that $     \delta_{i} ^*     $ is independent of $     \sigma( \{\delta_{t}  \}_{ t\le i-L   } )  .$ In addition, there exists a uniform random variable $\Delta_i$ independent of $\sigma_\delta$ such that  $  \delta_{i} ^* $ 
 is measurable with respect to $\sigma(\{ \delta_{i}, \{\delta_{t}  \}_{ t\le i-L }, \Delta  ) $.
 \\
 \\
Note that  $\{ \delta^*_{i}\}_{i \in \mathcal{J}_{e,l}}$ are identically distributed because $\{ \delta^*_{i}\}_{i \in \mathcal{J}_{e,l}}$ have the same marginal density. Therefore it suffices to justify the independence. 
\\
\\
 Note that 
$$\mathcal{J}_{e,l} = \{L+l, 3L+l, 5L+l, \dots  \} \cap [1,  n].$$
 Denote 
$$ \mathcal{J}_{e,l, K} = \{L+l, 3L+l, 5L+l, \dots, (2K+1)L+l  \}\cap [1,  n],$$
 and so $ \bigcup _{K \ge 0 } \mathcal{J}_{e,l, K} = \mathcal{J}_{e,l} $.
 By induction, suppose 
 $ \{ \delta^*_{i}\}_{i\in \mathcal J_{e, l, K}} $ are jointly independent. Let $\{t_i\}_{i\in \mathcal{J}_{e,l}} \subset [0,1]^d$ and $ \{ B_{i} \}_{i\in \mathcal{J}_{e,l}}$  be any collection of Borel sets in $\mathbb{R}$. Then 
 \begin{align*}
     &\mathbb{P}\bigg( \delta^*_{i}(t_i) \in B_i  
     , i \in \mathcal J_{e,l,K+1}
     \bigg) 
     \\
     =&\mathbb{P} \bigg(  \delta^*_{i}(t_i) \in B_{i}  
     , i \in \mathcal J_{e,l,K} \bigg)  
     \mathbb{P} \bigg(  \delta^*_{(2K+3)L+l}(t_{(2K+3)L+l}) \in B_{(2K+3)L+l}   
     \bigg) , 
 \end{align*}
 where equation follows because $\delta^*_{(2K+3)L+l}$ is independent of $     \sigma( \{\delta_{t}  \}_{ t\le 2(K+1)L+l  } )  .$
 By induction,
 $$\mathbb{P} \bigg(  \delta^*_{i}(t_i) \in B_{i} , 
     i \in \mathcal J_{e,l,K}  \bigg)=  \prod_{i \in \mathcal{J}_{e, l, K} } \mathbb{P}\bigg(  \delta^*_{i}(t_i) \in B_{i}\bigg).$$  
  Therefore \begin{align*}
     &\mathbb{P}\bigg( \delta^*_{i}(t_i) \in B_{i}  ,
     i \in \mathcal J_{e,l,K+1}
     \bigg)  =  \prod_{i \in \mathcal{J}_{e, l, K+1} } \mathbb{P}\bigg(  \delta^*_{i}(t_i) \in B_{i}\bigg).
      \end{align*}
      This implies that  $ \{ \delta^*_{i}\}_{i\in \mathcal J_{e, l, K+1}} $ are jointly independent.
      To analyze the case where the indices are considered in $\mathcal{J}_{o,l}$, we follow the same line of arguments, but now we note that
$$\mathcal{J}_{o,l} = \{l, 2L+l, 4L+l, \dots  \} \cap [1,  n],$$
and denote,
$$ \mathcal{J}_{o,l, K} = \{l, 2L+l, 4L+l, \dots, (2K)L+l  \}\cap [1,  n],$$
which implies $ \bigcup _{K \ge 0 } \mathcal{J}_{o,l, K} = \mathcal{J}_{o,l} $. Therefore, the result is followed by replicating the analysis performed above.
\end{proof}

\newpage
\section{Piecewise Lipschitz functions}\label{plsection}
For a set $A \subset [0,1]^d$, we write $B_\varepsilon(A)=\left\{a\right.$ : exists $a^{\prime} \in$ $A$, with $\left.\vert\vert a- a^{\prime}\vert\vert_2 \leq \varepsilon\right\}$. We let $\partial A$ denote the boundary of the set $A$. 
\begin{definition}
\label{Piecewise-def}
For $\varepsilon>0$  let $\Omega_\varepsilon:=[0,1]^d \backslash B_\varepsilon\left(\partial[0,1]^d\right)$. We say that a bounded function $f:[0,1]^d \rightarrow \mathbb{R}$ is piecewise Lipschitz if there exists a set $\mathcal{S} \subset(0,1)^d$ that has the following properties:
 \begin{itemize}
     \item The set $\mathcal{S}$ has Lebesgue measure zero.
     \item For some constants $C_{\mathcal{S}}, \epsilon_0>0$, we have that $Vol\left(B_\epsilon(\mathcal{S}) \cup\left([0,1]^d \backslash \Omega_\epsilon\right)\right) \leq C_{\mathcal{S}} \varepsilon$ for all $0<\varepsilon<$ $\varepsilon_0$.
     \item There exists a positive constant $L_f$ such that if $z$ and $z^{\prime}$ belong to the same connected component of $\Omega_\varepsilon \backslash B_\varepsilon(\mathcal{S})$, then $\left|g(z)-g\left(z^{\prime}\right)\right| \leq L_{f}\left\|z-z^{\prime}\right\|_2$.
 \end{itemize}
\end{definition}

We also require a broader class of functions, which generalizes and relaxes the piecewise Lipschitz condition in Definition~\ref{Piecewise-def}. This class, introduced in \cite{madrid2020adaptive}, allows for functions with weaker regularity while still admitting control over their total variation along local neighborhoods.

Let \( \epsilon > 0 \) and let \( \mathcal{P}_\epsilon \) denote the regular grid partition of the unit cube \( [0,1]^d \) into rectangles of side length \( \epsilon \), that is, induced by the points \( \{0, \epsilon, 2\epsilon, \dots, 1\} \) along each coordinate. Let \( \Omega_{2\epsilon} := [0,1]^d \setminus B_{2\epsilon}(\partial[0,1]^d) \) denote the interior region of the domain away from the boundary. Given a singular set \( \mathcal{S} \subset (0,1)^d \), we define the restricted grid:

\[
\mathcal{P}_{\epsilon, \mathcal{S}} := \left\{ A \cap (\Omega_{2\epsilon} \setminus B_{2\epsilon}(\mathcal{S})) : A \in \mathcal{P}_\epsilon, \; A \cap (\Omega_{2\epsilon} \setminus B_{2\epsilon}(\mathcal{S})) \neq \emptyset \right\},
\]
which restricts attention to grid cells fully contained in the non-singular interior region.

Then, for any bounded measurable function \( g: [0,1]^d \to \mathbb{R} \), we define two complexity functionals over the partition \( \mathcal{P}_{\epsilon, \mathcal{S}} \):
\begin{itemize}
    \item A local Lipschitz average:
    \[
    S_1(g, \mathcal{P}_{\epsilon, \mathcal{S}}) := \sum_{A \in \mathcal{P}_{\epsilon, \mathcal{S}}} \sup_{z_A \in A} \frac{1}{\epsilon} \int_{B_\epsilon(z_A)} |g(z_A) - g(z)| \, dz,
    \]
    which measures how much \( g \) varies over small neighborhoods across regular grid cells.
    
    \item A local smoothness norm:
    \[
    S_2(g, \mathcal{P}_{\epsilon, \mathcal{S}}) := \sum_{A \in \mathcal{P}_{\epsilon, \mathcal{S}}} \sup_{z_A \in A} T(g, z_A) \epsilon^d,
    \]
    where
    \[
    T(g, z_A) := \sup_{z \in B_\epsilon(z_A)} \sum_{l=1}^d \left| \int_{\|z'\|_2 \leq \epsilon} \frac{\partial \psi(z'/\epsilon)}{\partial z_l} \left\{ \frac{g(z_A - z') - g(z - z')}{\|z - z_A\|_2 \epsilon^d} \right\} dz' \right|,
    \]
    and \( \psi \) is a smooth, compactly supported test function used to localize the analysis.
\end{itemize}

If \( g \) is piecewise Lipschitz, then both \( S_1(g, \mathcal{P}_{\epsilon, \mathcal{S}}) \) and \( S_2(g, \mathcal{P}_{\epsilon, \mathcal{S}}) \) are uniformly bounded over small \( \epsilon \); see Appendix~C.1 of \cite{madrid2020adaptive}. Thus, the piecewise Lipschitz class is strictly contained within this more general class.

\begin{definition}
\label{GeneralizedPiecewiseDef}
Let \( \Omega_\varepsilon := [0,1]^d \setminus B_\varepsilon(\partial[0,1]^d) \). A bounded function \( f: [0,1]^d \rightarrow \mathbb{R} \) satisfies the generalized piecewise regularity condition if there exists a set \( \mathcal{S} \subset (0,1)^d \) such that:
\begin{itemize}
    \item The set \( \mathcal{S} \) has Lebesgue measure zero.
    \item For some constants \( C_{\mathcal{S}} > 0 \) and \( \varepsilon_0 > 0 \), we have
    \[
    \text{Vol}\left( B_\varepsilon(\mathcal{S}) \cup \left( [0,1]^d \setminus \Omega_\varepsilon \right) \right) \leq C_{\mathcal{S}} \varepsilon
    \quad \text{for all } 0 < \varepsilon < \varepsilon_0.
    \]
    \item The following complexity quantities are uniformly bounded:
    \[
    \sup_{0 < \varepsilon < \varepsilon_0} \max \left\{ S_1(f, \mathcal{P}_{\varepsilon, \mathcal{S}}), \; S_2(f, \mathcal{P}_{\varepsilon, \mathcal{S}}) \right\} < \infty,
    \]
    where \( S_1 \) and \( S_2 \) are defined as in the previous discussion.
\end{itemize}
\end{definition}

Denote by $\mathcal{F}(L) $, the set of piecewise Lipschitz functions $f:[0,1]^d\rightarrow\mathbb{R}$ such that $L_f\le \mathbb{L}$ for some finite constant $\mathbb{L}.$ A direct consequence of \cite{willett2005faster} is the following. Let $\{x_{i,j}\}_{i=1,j=1}^{n,m_i}\subset [0,1]^d$ denoting the random locations from Assumption \ref{assume:tv functions}, where the spatio-temporal data $\{y_{i,j}\}_{i=1,j=1}^{n,m_i},$ defined by
\begin{equation}
\label{Model-K-appendix}
y_{i,j}=f^*(x_{i,j})+\epsilon_{i,j},
\end{equation}
is observed. Here $\{\epsilon_{i,j}\}_{i=1,j=1}^{n,m_i}$ denotes the measurements error and follow Assumption \ref{assume:tv functions}. Consider $L=\frac{1}{C_{\beta}}\log^d(n)$ with $C_{\beta}>0$ a constant stated in Assumption \ref{assume:tv functions}. Take into account the copies of the design points and measurement errors, denoted as $\{x_{i,j}^*\}_{i=1,j=1}^{n,m_i}$ and $\{\epsilon_{i,j}^*\}_{i=1,j=1}^{n,m_i}$, which were generated in Appendix \ref{BetaMsection}.  During the formation of these copies in Appendix \ref{BetaMsection}, the quantities $ | \mathcal J_{e,l}|$ and $ | \mathcal J_{o,l}|$, which represent the count of even and odd blocks respectively, satisfy $ | \mathcal J_{e,l}|,| \mathcal J_{o,l}|  \asymp n/L$. Thus, there exist positive constants $c_1$ and $c_2$ such that  
 \begin{equation}
     \label{BoundBlocksize-K-ap}
      c_1n/L\le| \mathcal J_{e,l}|,| \mathcal J_{o,l}| \le  c_2n/L.
 \end{equation}
By Lemma \ref{Ind-cop}, we have that $\{x_{i,j}^*\}_{i\in\mathcal{J}_{e,l},j\in[m_i]}$ and $\{\epsilon_{i,j}^*\}_{i\in\mathcal{J}_{o,l},j\in[m_i]}$ are independent. Further, for any $i\in[n]$ it follows that $\mathbb{P}(x_{i,j}\neq x_{i,j}^*\ \text{for some}\ j\in[m_i])\le \beta_{x}(L).$ Moreover, by Assumption $\ref{assume:tv functions}$ and that $L=\frac{1}{C_\beta}\log^d(n)\ge \frac{2}{C_{\beta}}\log(n)$, we note that $\mathbb{P}(x_{i,j}\neq x_{i,j}^*\ \text{for some}\ j\in[m_i])\le \beta_{x}(L)\le \frac{1}{n^2}.$  Therefore, the event $\Omega_x=\bigcap_{i=1}^{n}\{x_{i,j}=x_{i,j}^*\ \forall\ j\in[m_i]\}$ happens with probability at least $\frac{1}{n}$. The same is satisfied for  $\{\epsilon_{i,j}^*\}_{i=1,j=1}^{n,m_i}$ with $\Omega_\epsilon=\bigcap_{i=1}^{n}\{\epsilon_{i,j}=\epsilon_{i,j}^*\ \forall\ j\in[m_i]\}$. We denote by $\Omega$ the event $\Omega_x\cap\Omega_\epsilon.$

Let $\widetilde{f}$ be an estimator for the model described in Equation (\ref{Model-K-appendix}) based on the observations $\{(x_{i,j},y_{i,j})\}_{i=1,j=1}^{n,m_i}$. We let $\theta^*\in\mathbb{R}^{\sum_{i=1}^nm_i}$ given by $\theta_{i,j}^*=f^*(x_{i,j})$ for any $(i,j)\in [n]\times[m_i]$, where the abuse of notation explained in Section \ref{sec:notation} is used. Similarly, we consider $\mu^*,\widetilde{\theta},\widetilde{\mu}\in\mathbb{R}^{\sum_{i=1}^nm_i}$ with each entry given by $\mu_{i,j}^*=f^*(x_{i,j}^*)$, $\widetilde{\theta}_{i,j}=\widetilde{f}(x_{i,j})$, and $\widetilde{\mu}_{i,j}=\widetilde{f}(x_{i,j}^*)$ for any $(i,j)\in [n]\times[m_i]$, respectively.

Let $l\in[L]$. Under the event $\Omega$ we notice that $\widetilde{f}$ is an estimator for the model
$$y_{i,j}=f^*(x_{i,j}^*)+\epsilon_{i,j}^*,\ i\in{\mathcal{J}_{e,l}},\ j\in[m_i].$$
Specifically, we have that under $\Omega$ the truncated vector $\widetilde{\mu}^{\mathcal{J}_{e,l}}\in \mathbb{R}^{\sum_{i\in\mathcal{J}_{e,l}}m_i}$, given by $\widetilde{\mu}^{\mathcal{J}_{e,l}}_{i,j}=\widetilde{\mu}_{i,j}$ for any $(i,j)\in\mathcal{J}_{e,l}\times[m_i]$, is an estimator of the truncated vector $\mu^{*,\mathcal{J}_{e,l}}\in \mathbb{R}^{\sum_{i\in\mathcal{J}_{e,l}}m_i}$ given by $\mu^{*,\mathcal{J}_{e,l}}_{i,j}=\mu^*_{i,j}$ for any $(i,j)\in\mathcal{J}_{e,l}\times[m_i]$.
Using the independence of the copies of the design points and measurement errors, a direct consequence of \cite{willett2005faster} is that 
\[
    \underset{ n \rightarrow \infty}{\lim\sup}\,\underset{f^* \in \mathcal{F}     }{ \sup     }\,\mathbb{P}\left(  \| \widetilde{\mu}^{\mathcal{J}_{e,l}}-\mu^{*,\mathcal{J}_{e,l}} \|_{2}^2 \,\geq \,      \frac{C_{\text{opt,1},\mathcal{J}_{e,l}}}{  \Big(\sum_{i\in\mathcal{J}_{e,l}}m_i\Big)^{1/d} }  \right)\,>0,\,
   \]
   for a positive constant $C_{\text{opt,1},\mathcal{J}_{e,l}}.$ 
   Assume that $Cm\ge m_i\ge cm,$ for a positive constant $c,C$, and $m=\Big(\frac{1}{n}\sum_{i=1}^n\frac{1}{m_i}\Big)^{-1}$. From this fact and Inequality (\ref{BoundBlocksize-K-ap}) we obtain that 
   \begin{equation}
   \label{AppendixK-eq1}
    \underset{ n \rightarrow \infty}{\lim\sup}\,\underset{f^* \in \mathcal{F}     }{ \sup     }\,\mathbb{P}\left( \| \widetilde{\mu}^{\mathcal{J}_{e,l}}-\mu^{*,\mathcal{J}_{e,l}} \|_{2}^2 \,\geq \,       \frac{C_{\text{opt,1},\mathcal{J}_{e,l}}L^{1/d}}{  (Cc_2nm)^{1/d} } \right)\,>C_0>0.\,
   \end{equation}
   for a constant $C_0.$
Similarly, under the event $\Omega$, $\widetilde{f}$ is an estimator for the model
$$y_{i,j}=f^*(x_{i,j}^*)+\epsilon_{i,j}^*,\ i\in{\mathcal{J}_{o,l}},\ j\in[m_i],$$ 
with
\begin{equation}
   \label{AppendixK-eq2}
    \underset{ n \rightarrow \infty}{\lim\sup}\,\underset{f^* \in \mathcal{F}     }{ \sup     }\,\mathbb{P}\left(  \| \widetilde{\mu}^{\mathcal{J}_{o,l}}-\mu^{*,\mathcal{J}_{o,l}} \|_{2}^2 \,\geq \,       \frac{C_{\text{opt,1},\mathcal{J}_{o,l}}L^{1/d}}{  (Cc_2nm)^{1/d} } \right)\,>C_0>0.\,
   \end{equation}
   for a positive constant $C_{\text{opt,1},\mathcal{J}_{o,l}}.$ 

   We are now going to elaborate on the final observations of this discussion. Let $C_{\text{opt,1}}=\frac{1}{(Cc_2)^{1/d}}\min_{l\in[L]}\{{C_{\text{opt,1},\mathcal{J}_{e,l}},C_{\text{opt,1},\mathcal{J}_{o,l}}}\}$. Then,
\begin{align*}
     \underset{ n \rightarrow \infty}{\lim\sup}\,\underset{f^* \in \mathcal{F}     }{ \sup     }\,\mathbb{P}\left(  \| \widetilde{\theta}-\theta^* \|_{2}^2 \,\geq \,       \frac{C_{\text{opt,1}}}{  (nm)^{1/d} } \right)\ge \underset{ n \rightarrow \infty}{\lim\sup}\,\underset{f^* \in \mathcal{F}     }{ \sup     }\,\mathbb{P}\left( \Big\{ \|\widetilde{\theta}-\theta^* \|_{2}^2 \,\geq \,       \frac{C_{\text{opt,1}}}{  (nm)^{1/d} } \Big\}\cap\Omega\right) >0.
\end{align*}
The logic behind the second inequality is as follows. For any $f^*\in\mathcal{F},$
\begin{align}
\label{AppendixK-eq3}
      &\mathbb{P}\left( \Big\{ \| \widetilde{\theta}-\theta^* \|_{2}^2 \,\geq \,       \frac{C_{\text{opt,1}}}{  (nm)^{1/d} } \Big\}\cap\Omega\right)\nonumber
      \\
      \ge &\mathbb{P}\left( \Big\{ \| \widetilde{\theta}-\theta^* \|_{2}^2 \,\geq \,       \frac{L^{1/d}C_{\text{opt,1}}}{  (nm)^{1/d} } \Big\}\cap\Omega\right)\nonumber
      \\
      \ge&\mathbb{P}\left(\bigcup_{l=1}^L \Big\{\Big\{ \| \widetilde{\mu}^{\mathcal{J}_{e,l}}-\mu^{*,\mathcal{J}_{e,l}} \|_{2}^2 \,\geq \,       \frac{L^{1/d}C_{\text{opt,1},\mathcal{J}_{e,l}}}{  (Cc_2nm)^{1/d} } \Big\}\bigcup \Big\{ \| \widetilde{\mu}^{\mathcal{J}_{o,l}}-\mu^{*,\mathcal{J}_{o,l}} \|_{2}^2 \,\geq \,       \frac{L^{1/d}C_{\text{opt,1},\mathcal{J}_{o,l}}}{  (Cc_2nm)^{1/d} } \Big\}\Big\}\cap\Omega\right)\nonumber
      \\
      >&C_0,
\end{align}
where the first inequality is followed by the fact that $L=\frac{1}{C_\beta}\log^d(n)\ge1.$ Moreover the third inequality is achieved from Inequality (\ref{AppendixK-eq1}) and (\ref{AppendixK-eq2}). The derivation of the second inequality is explained below.

Under the event $\Omega$ the relation, $$\| \widetilde{\theta}-\theta^* \|_{2}^2=\|\widetilde{\mu}-\mu^*\|_2^2=\sum_{l=1}^L\Big(\| \widetilde{\mu}^{\mathcal{J}_{e,l}}-\mu^{*,\mathcal{J}_{e,l}} \|_{2}^2 +\| \widetilde{\mu}^{\mathcal{J}_{o,l}}-\mu^{*,\mathcal{J}_{o,l}} \|_{2}^2\Big),$$ is satisfied. Therefore, $$\| \widetilde{\theta}-\theta^*  \|_{2}^2\ge \Big\{\| \widetilde{\mu}^{\mathcal{J}_{e,l}}-\mu^{*,\mathcal{J}_{e,l}} \|_{2}^2,\| \widetilde{\mu}^{\mathcal{J}_{o,l}}-\mu^{*,\mathcal{J}_{o,l}} \|_{2}^2\Big\},$$ for any $l\in[L].$ These specific conditions allow us to conclude that the event $\{\| \widetilde{\theta}-\theta^* \|_{2}^2< \frac{L^{1/d}C_{\text{opt,1}}}{  (nm)^{1/d} } \}$ is contained in the event
$$\bigcap_{l=1}^L \Big\{\Big\{ \| \widetilde{\mu}^{\mathcal{J}_{e,l}}-\mu^{*,\mathcal{J}_{e,l}} \|_{2}^2 \,< \,       \frac{L^{1/d}C_{\text{opt,1},\mathcal{J}_{e,l}}}{  (Cc_2nm)^{1/d} } \Big\}\bigcap \Big\{ \| \widetilde{\mu}^{\mathcal{J}_{o,l}}-\mu^{*,\mathcal{J}_{o,l}} \|_{2}^2 \,< \,       \frac{L^{1/d}C_{\text{opt,1},\mathcal{J}_{o,l}}}{  (Cc_2nm)^{1/d} } \Big\}\Big\},$$
and consequently the second inequality in Inequality (\ref{AppendixK-eq3}) is satisfied. Specifically, suppose that  $\|\widetilde{\theta}-\theta^*  \|_{2}^2 \,< \,       \frac{L^{1/d}C_{\text{opt,1}}}{  (nm)^{1/d} } $. For any $l\in[L]$ by the definition of $C_{\text{opt,1}}$ we have that $\| \widetilde{\theta}-\theta^*  \|_{2}^2 \,< \,       \frac{L^{1/d}C_{\text{opt,1},\mathcal{J}_{e,l}}}{  (Cc_2nm)^{1/d} }$. It follows that $\| \widetilde{\mu}^{\mathcal{J}_{e,l}}-\mu^{*,\mathcal{J}_{e,l}} \|_{2}^2 \,< \,       \frac{L^{1/d}C_{\text{opt,1},\mathcal{J}_{e,l}}}{  (Cc_2nm)^{1/d} }$. Similarly, $\| \widetilde{\mu}^{\mathcal{J}_{o,l}}-\mu^{*,\mathcal{J}_{o,l}}\|_{2}^2 \,< \,       \frac{L^{1/d}C_{\text{opt,1},\mathcal{J}_{o,l}}}{  (Cc_2nm)^{1/d} }$, concluding the contention of the aforementioned events.
   \newpage

{\color{black}{
\subsection{Definition of Bounded Variation in Multiple Dimensions}
\label{app:BV-def}

In this appendix, we recall the notion of bounded variation for functions defined
on multidimensional domains. This definition is used throughout the paper when
referring to functions $f : [0,1]^d \to \mathbb{R}$.

Let $(0,1)^d \subset \mathbb{R}^d$. For a vector-valued function
$g = (g_1,\dots,g_d) : (0,1)^d \to \mathbb{R}^d$ with sufficiently smooth components,
the divergence operator is defined as
\[
\operatorname{div}(g)(x)
:= \sum_{j=1}^d \frac{\partial g_j(x)}{\partial x_j},
\qquad x \in (0,1)^d,
\]
whenever the partial derivatives exist.

We denote by $C_c^1((0,1)^d,\mathbb{R}^d)$ the class of continuously differentiable
vector fields with compact support in $(0,1)^d$. For such a vector field $g$, we
define
\[
\|g\|_\infty
:= \sup_{x \in (0,1)^d} \left( \sum_{j=1}^d g_j(x)^2 \right)^{1/2}.
\]

\begin{definition}[Bounded variation]
A function $f : [0,1]^d \to \mathbb{R}$ is said to have bounded variation if
\[
|f|_{\mathrm{BV}}
:=
\sup \left\{
\int_{(0,1)^d} f(x)\,\operatorname{div}(g)(x)\,dx
\;\middle|\;
g \in C_c^1((0,1)^d,\mathbb{R}^d),\ \|g\|_\infty \le 1
\right\}
< \infty.
\]
\end{definition}

The space of all functions with finite total variation on $[0,1]^d$ is denoted by
$\mathrm{BV}([0,1]^d)$. We refer the reader to \cite{ziemer2012weakly} for a full description of the class of functions of bounded variation
}}
   \newpage

\section{Examples of Functions with Bounded Total Variation along the $K$-NN Graph}

In this appendix, we present examples of functions that exhibit bounded total variation along the $K$-NN graph. To this end, we rely on the following result, which is a restatement of Lemma~\ref{Lemma-nabla}:

\begin{lemma}\label{Lemma-nabla1}
Let Assumption~\ref{assume:tv functions} hold. Consider \( K = \log^{2 + l}(nm) \) for some \( l > 0 \). Then
\begin{equation}
\label{eqn:tv_bound-1-aux}
\| \nabla_{G_K} \theta^* \|_1 = O_{\mathbb{P}}\left((nm)^{1 - 1/d} \log^{2 + l + \frac{2 + l}{d}}(nm)\right).
\end{equation}
\end{lemma}

From this lemma, it follows that any function satisfying Assumption~\ref{assume:tv functions} also satisfies the desired total variation bound along the $K$-NN graph. In particular, the class of piecewise Lipschitz functions described in Appendix~\ref{plsection} satisfies this assumption. Moreover, as shown in Appendix C.1 of \cite{madrid2020adaptive}, these functions are also contained within the generalized function class defined in Definition~\ref{GeneralizedPiecewiseDef} of Appendix~\ref{plsection}. Therefore, it suffices to focus on piecewise Lipschitz functions, which we do throughout this appendix.

\subsection{Step function with a jump}
\begin{example}
Let \( f_1 : [0,1]^2 \rightarrow \mathbb{R} \) be defined by
\( f_1(x_1, x_2) = \begin{cases}
1 & \text{if } x_1 \leq 0.5, \\
0 & \text{if } x_1 > 0.5.
\end{cases} \)
We observe that \( f_1 \in \mathcal{F}(1) \), the class of piecewise Lipschitz functions defined in Definition~\ref{Piecewise-def}. A graph of this function is shown in Figure~\ref{fig1-example}.

To verify this, we take \( \mathcal{S} := \{(x_1, x_2) \in [0,1]^2 : x_1 = 0.5\} \), the vertical line where \( f_1 \) jumps. This set has Lebesgue measure zero. For any \( \varepsilon > 0 \), the tubular neighborhood \( B_\varepsilon(\mathcal{S}) \) is a strip of width \( 2\varepsilon \) and unit height, so \( \operatorname{Vol}(B_\varepsilon(\mathcal{S})) = 2\varepsilon \).

The complement of \( \Omega_\varepsilon \) corresponds to a strip of width \( \varepsilon \) around the boundary of \([0,1]^2\), whose perimeter is 4. Thus, \( \operatorname{Vol}([0,1]^2 \setminus \Omega_\varepsilon) \leq 4 \cdot 2\varepsilon = 8\varepsilon \). Combining both, we have \( \operatorname{Vol}(B_\varepsilon(\mathcal{S}) \cup ([0,1]^2 \setminus \Omega_\varepsilon)) \leq 10\varepsilon \), so the second condition in Definition~\ref{Piecewise-def} holds with \( C_{\mathcal{S}} = 10 \) and \( \varepsilon_0 = 0.25 \), for instance.

Next, we verify local Lipschitz continuity. On each connected component of \( \Omega_\varepsilon \setminus B_\varepsilon(\mathcal{S}) \), the function \( f_1 \) is constant. Therefore, for any \( z, z' \) in the same component, \( |f_1(z) - f_1(z')| = 0 \leq L_f \|z - z'\|_2 \) for any \( L_f > 0 \); we may choose \( L_f = 1 \).

Hence, all conditions are satisfied with \( \mathcal{S} = \{x_1 = 0.5\} \), \( C_{\mathcal{S}} = 10 \), \( \varepsilon_0 = 0.25 \), \( L_f = 1 \), and we conclude that \( f_1 \in \mathcal{F}(1) \).
\end{example}
\begin{figure}[ht]
    \centering
    \includegraphics[width=0.55\textwidth]{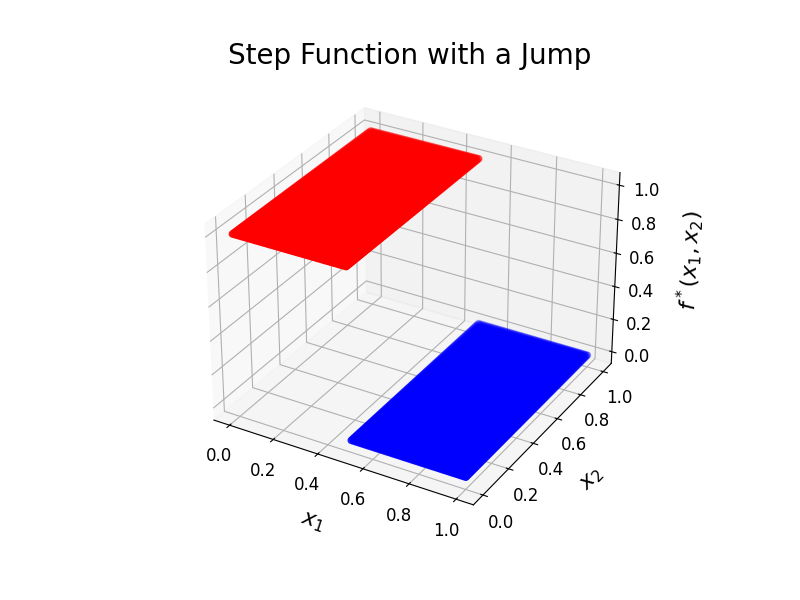}
    \caption{Graph of the function \( f_1(x_1, x_2) = \mathbf{1}\{x_1 \leq 0.5\} \) on the unit square.}
    \label{fig1-example}
\end{figure}

\subsection{Function with a flat disk core and nonlinear exterior}
\begin{example}
Let \( f_2 : [0,1]^2 \rightarrow \mathbb{R} \) be defined by
\[
f_2(x_1, x_2) = \begin{cases}
1 & \text{if } (x_1 - 0.5)^2 + (x_2 - 0.5)^2 \leq r^2, \\
x_1^2 + x_2 & \text{otherwise},
\end{cases}
\]
with \( r = 0.25 \). We claim that \( f_2 \in \mathcal{F}(2) \), the class of piecewise Lipschitz functions defined in Definition~\ref{Piecewise-def}. A graph of this function is shown in Figure~\ref{fig2-example}.

To verify this, define the singular set \( \mathcal{S} := \left\{(x_1, x_2) \in [0,1]^2 : (x_1 - 0.5)^2 + (x_2 - 0.5)^2 = r^2 \right\} \), which is the boundary of a disk where the function transitions from constant to nonlinear. This set has Lebesgue measure zero. The tubular neighborhood \( B_\varepsilon(\mathcal{S}) \) is an annular region of width \( 2\varepsilon \), with volume at most \( 2\pi r \varepsilon \). The boundary strip \( [0,1]^2 \setminus \Omega_\varepsilon \) has volume at most \( 8\varepsilon \). Thus, \( \operatorname{Vol}(B_\varepsilon(\mathcal{S}) \cup ([0,1]^2 \setminus \Omega_\varepsilon)) \leq (2\pi r + 8)\varepsilon \), so the second condition in Definition~\ref{Piecewise-def} holds with \( C_{\mathcal{S}} = 2\pi r + 8 \) and \( \varepsilon_0 = 0.25 \), for instance.

Next, we verify the Lipschitz condition. In the interior of the disk, \( f_2 \equiv 1 \) is constant. Outside the disk, we have
\(
|f_2(z) - f_2(z')| = |x_1^2 - \tilde{x}_1^2 + x_2 - \tilde{x}_2| \leq |x_1^2 - \tilde{x}_1^2| + |x_2 - \tilde{x}_2|.
\)
Using the identity \( |a^2 - b^2| = |a - b||a + b| \), we get \( |x_1^2 - \tilde{x}_1^2| = |x_1 - \tilde{x}_1||x_1 + \tilde{x}_1| \leq 2|x_1 - \tilde{x}_1| \) since \( x_1, \tilde{x}_1 \in [0,1] \). Hence,
\(
|f_2(z) - f_2(z')| \leq 2|x_1 - \tilde{x}_1| + |x_2 - \tilde{x}_2| \leq \sqrt{5} \|z - z'\|_2,
\)
so we may take \( L_f = \sqrt{5} \).

In conclusion, all conditions are satisfied with \( \mathcal{S} = \{ (x_1 - 0.5)^2 + (x_2 - 0.5)^2 = r^2 \} \), \( C_{\mathcal{S}} = 2\pi r + 8 \), \( \varepsilon_0 = 0.25 \), \( L_f = \sqrt{5} \), and we conclude that \( f_2 \in \mathcal{F}(\sqrt{5}) \).
\end{example}
\begin{figure}[ht]
    \centering
    \includegraphics[width=0.55\textwidth]{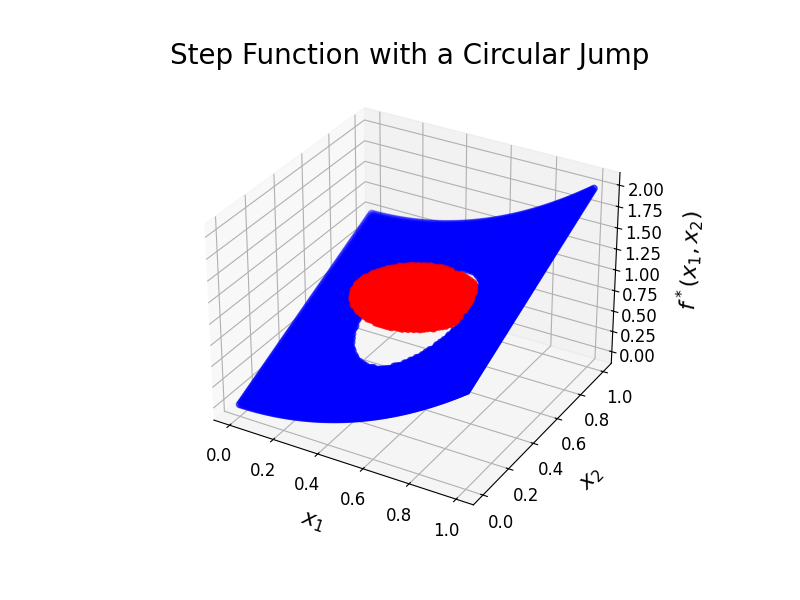}
    \caption{Graph of the function \( f_2(x_1, x_2)\) on the unit square.}
    \label{fig2-example}
\end{figure}

\subsection{Piecewise smooth function with three regions}

\begin{figure}[]
    \centering
    \includegraphics[width=0.55\textwidth]{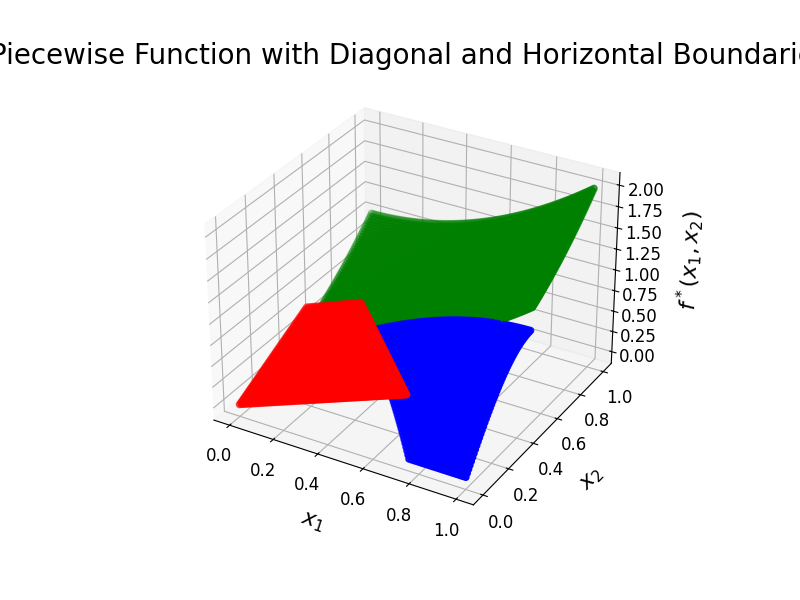}
    \caption{Graph of the function \( f_3(x_1, x_2) \) on the unit square.}
    \label{fig3-example}
\end{figure}
\begin{example}
Let \( f_3 : [0,1]^2 \rightarrow \mathbb{R} \) be defined by
\[
f_3(x_1, x_2) = \begin{cases}
x_1 + x_2 & \text{if } x_1 + x_2 \leq 0.75 \text{ and } x_2 \leq 0.5, \\
\sin(\pi x_1 x_2) & \text{if } x_1 + x_2 > 0.75 \text{ and } x_2 \leq 0.5, \\
x_1^2 + x_2^2 & \text{if } x_2 > 0.5.
\end{cases}
\]
We claim that \( f_3 \in \mathcal{F}(L_f) \), the class of piecewise Lipschitz functions defined in Definition~\ref{Piecewise-def}. A graph of this function is shown in Figure~\ref{fig3-example}.

To verify this, define the singular set
\(
\mathcal{S} := \{ (x_1, x_2) \in [0,1]^2 : x_2 = 0.5 \text{ or } x_1 + x_2 = 0.75 \}.
\)
This set consists of a horizontal line segment and a slanted line segment in \([0,1]^2\), both of Lebesgue measure zero. The tubular neighborhood \( B_\varepsilon(\mathcal{S}) \) includes two strips (a horizontal band and a slanted band) of width \( 2\varepsilon \), and has area at most \( 4\varepsilon \). The boundary strip \( [0,1]^2 \setminus \Omega_\varepsilon \) has area at most \( 8\varepsilon \). Therefore,
\(
\operatorname{Vol}(B_\varepsilon(\mathcal{S}) \cup ([0,1]^2 \setminus \Omega_\varepsilon)) \leq 12\varepsilon,
\)
so the second condition in Definition~\ref{Piecewise-def} holds with \( C_{\mathcal{S}} = 12 \) and \( \varepsilon_0 = 0.25 \), for instance.

Now consider the Lipschitz condition.

In the region \( x_1 + x_2 \leq 0.75 \) and \( x_2 \leq 0.5 \), we have \( f_3(x_1, x_2) = x_1 + x_2 \), a linear function. For any two points \( z = (x_1, x_2) \) and \( z' = (\tilde{x}_1, \tilde{x}_2) \), we have
\(
|f_3(z) - f_3(z')| = |x_1 - \tilde{x}_1 + x_2 - \tilde{x}_2| \leq |x_1 - \tilde{x}_1| + |x_2 - \tilde{x}_2| \leq \sqrt{2} \|z - z'\|_2.
\)

In the region \( x_1 + x_2 > 0.75 \) and \( x_2 \leq 0.5 \), the function is \( f_3(x_1, x_2) = \sin(\pi x_1 x_2) \). For any \( z = (x_1, x_2) \), \( z' = (\tilde{x}_1, \tilde{x}_2) \), we use the inequality
\(
|\sin(\pi a) - \sin(\pi b)| \leq \pi |a - b|
\)
to write
\(
|f_3(z) - f_3(z')| \leq \pi |x_1 x_2 - \tilde{x}_1 \tilde{x}_2|.
\)
We bound the product difference by
\(
|x_1 x_2 - \tilde{x}_1 \tilde{x}_2| = |x_1 x_2 - x_1 \tilde{x}_2 + x_1 \tilde{x}_2 - \tilde{x}_1 \tilde{x}_2| \leq |x_1||x_2 - \tilde{x}_2| + |\tilde{x}_2||x_1 - \tilde{x}_1| \leq |x_2 - \tilde{x}_2| + |x_1 - \tilde{x}_1|.
\)
Hence,
\(
|f_3(z) - f_3(z')| \leq \pi (|x_1 - \tilde{x}_1| + |x_2 - \tilde{x}_2|) \leq \pi \sqrt{2} \|z - z'\|_2.
\)

In the region \( x_2 > 0.5 \), the function is \( f_3(x_1, x_2) = x_1^2 + x_2^2 \). Then
\(
|f_3(z) - f_3(z')| \leq |x_1^2 - \tilde{x}_1^2| + |x_2^2 - \tilde{x}_2^2|.
\)
Using the identity \( |a^2 - b^2| = |a - b||a + b| \) and the fact that all coordinates are in \([0,1]\), we get
\(
|x_1^2 - \tilde{x}_1^2| \leq 2|x_1 - \tilde{x}_1|, \quad |x_2^2 - \tilde{x}_2^2| \leq 2|x_2 - \tilde{x}_2|,
\)
so
\(
|f_3(z) - f_3(z')| \leq 2|x_1 - \tilde{x}_1| + 2|x_2 - \tilde{x}_2| \leq 2\sqrt{2} \|z - z'\|_2.
\)

Thus, the global Lipschitz constant is \( L_f = \max\{\sqrt{2}, \pi \sqrt{2}, 2\sqrt{2}\} \leq \pi \sqrt{2} \). In conclusion, all conditions are satisfied with \( \mathcal{S} = \{x_2 = 0.5\} \cup \{x_1 + x_2 = 0.75\} \), \( C_{\mathcal{S}} = 12 \), \( \varepsilon_0 = 0.25 \), \( L_f = \pi \sqrt{2} \), and we conclude that \( f_3 \in \mathcal{F}(\pi \sqrt{2}) \).
\end{example}

   \newpage
\section{The lattice graph}
\label{Lat-section}
We introduce the Lattice graph, for more details, we refer readers to \cite{madrid2020adaptive}.

Given $N \in \mathbb{N}$, denote the equally spaced lattice graph in $[0,1]^d$ as $G_{\text {lat }}=\left(V_{\text {lat }}, E_{\text {lat }}\right)$, where the distance between any two neighbor lattice points is $1 / N$. Therefore, the number of nodes $|V|_{\text {lat }}=N^d$.
Without loss of generality, we assume that the nodes of the grid correspond to the points
\begin{equation}
\label{Plat-def}
I=P_{\text {lat }}(N)=\left\{\left(\frac{i_1}{N}-\frac{1}{2 N}, \ldots, \frac{i_d}{N}-\frac{1}{2 N}\right): i_1, \ldots, i_d \in\{1, \ldots, N\}\right\}.
\end{equation}
Let $\left\{\mathfrak{c}_r\right\}_{r=1}^{N^d}$ be the corresponding centers and define a collection of cells, $\left\{C(\mathfrak{c}_r)\right\}_{r=1}^{N^d}$, as
$$
\mathcal{I}_r=C(\mathfrak{c}_r)=\left\{z \in[0,1]^d: \mathfrak{c}_r=\underset{x^{\prime} \in P_{\mathrm{lat}}(N)}{\arg \min }\left\|z-x^{\prime}\right\|_{\infty}\right\} .
$$ 
Now, let $\theta \in \mathbb{R}^{\sum_{i=1}^{n}m_i}$ be a collection of signals observed on the sample points $\left\{x_{i,j}\right\}_{i=1, j=1}^{n, m_i}$. Let $\theta_I \in \mathbb{R}^{\sum_{i=1}^{n}m_i}$ be the modified version of $\theta$ according to the grid induced by the lattice graph of size $N^d$. Let $\theta^I \in \mathbb{R}^{N^d}$ be the modified version of $\theta$ that takes values on the centers of the lattice graph. See section E of \cite{madrid2020adaptive} for definitions of $\theta_I$ and $\theta^I$. For completion, the definitions are included as follows.

\begin{definition}
\label{def-2-E}
 Given $\theta \in \mathbb{R}^{\sum_{i=1}^{n}m_i}$, let $\left\{\mathcal{I}_r\right\}_{r=1}^{N^d}$ be the collection of the cells in the lattice graph $G_{l a t}$ and let $\left\{\mathfrak{c}_r\right\}_{r=1}^{N^d}$ be the corresponding centers. Let $\theta_I \in \mathbb{R}^{\sum_{i=1}^{n}m_i}$ be such that for any $1 \leq \alpha \le n, 1\le \beta\le m_\alpha$,
$$
\left(\theta_I\right)_{ \alpha, \beta }=\theta_{i, j},
$$
if $x_{\alpha, \beta}$ and $x_{i,j}$ are in the same cell $\mathcal{I}_r$, and that $(i, j)=\arg \min _{1 \leq i^{\prime} \leq n, 1 \leq j^{\prime} \leq m_{i'}}\left\|x_{i^{\prime} j^{\prime}}-\mathfrak{c}_r\right\|$. Here the abuse of notation explained in Section \ref{sec:notation} is used.
\end{definition}

Denote by $I_r=\{(i,j)\in [n]\times [m_i]: P_{I}(x_{i,j})=\mathfrak{c}_r \}$ for $r=1,...,N^d,$ where $P_{I}(x)$ denotes the point in $P_{\mathrm{lat}}(N)$ such that $x\in C(P_{I}(x))$.

\begin{definition}
\label{def-1-E}Let $\theta \in \mathbb{R}^{\sum_{i=1}^{n}m_i}$ be a collection of signals observed on the sample points $\left\{x_{i,j}\right\}_{i=1, j=1}^{n, m}$. Let $\theta^I \in \mathbb{R}^{N^d}$ be such that, for any $r\in[N^d]$
$$
\theta_r^I=\left\{\begin{array}{lll}
\theta_{i_r, j_r}, & \text { if } & I_{r} \neq \emptyset \quad \text { and } \quad(i_r, j_r)=\arg \min _{1 \leq i^{\prime} \leq n, 1 \leq j^{\prime} \leq m_{i'}}\left\|x_{i^{\prime} ,j^{\prime}}-\mathfrak{c}_r\right\| ; \\
0, & \text { if } & I_{r}=\emptyset .
\end{array}\right.
$$
\end{definition}
This means that $\theta_r^I$ take the value $\theta_{i
_r, j_r}$ if $x_{i_r,j_r}$ is the closest point to the center of $\mathfrak{c}_r$ among all $\left\{x_{i,j}\right\}_{i=1, j=1}^{n, m_i}$. 

\subsection{Auxiliary Lemmas for the proof of Theorem \ref{Cest-d>1} and \ref{Pest-d>1}}

The following series of results are generalizations of those in \cite{madrid2020adaptive} to the spatio-temporal data. 
The initial finding presented below is a widely recognized concentration inequality applicable to binomial random variables. It can be found as Proposition 27 in \cite{JMLR:v15:vonluxburg14a}.
\begin{lemma}
\label{BinCon}
 Let $m$ be a Binomial $(M, q)$ random variable. Then, for all $\delta \in(0,1]$,
$$
\begin{aligned}
& \mathbb{P}(m \leq(1-\delta) M q) \leq \exp \left(-\frac{1}{3} \delta^2 M q\right), \\
& \mathbb{P}(m \geq(1+\delta) M q) \leq \exp \left(-\frac{1}{3} \delta^2 M q\right) .
\end{aligned}
$$
\end{lemma}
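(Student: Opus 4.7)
The plan is to prove both tail bounds by the standard Chernoff/exponential moment method, since $m$ is a sum of $M$ i.i.d.\ Bernoulli$(q)$ variables. I would first record the moment generating function identity $\mathbb{E}[e^{tm}] = (1-q+qe^t)^M$ and combine it with the elementary inequality $1-q+qe^t \le \exp(q(e^t-1))$ (which follows from $1+x \le e^x$) to obtain the clean bound $\mathbb{E}[e^{tm}] \le \exp(Mq(e^t-1))$ valid for every $t \in \mathbb{R}$.

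For the upper tail, I would apply Markov's inequality to $e^{tm}$ with $t>0$, giving
\[
\mathbb{P}(m \ge (1+\delta)Mq) \le \exp\bigl(Mq(e^t-1) - t(1+\delta)Mq\bigr),
\]
and then optimize in $t$ by taking $t = \log(1+\delta)$. This reduces the problem to showing the deterministic inequality
\[
\delta - (1+\delta)\log(1+\delta) \le -\tfrac{\delta^2}{3}, \qquad \delta \in (0,1].
\]
For the lower tail, the symmetric argument with $t<0$ and the choice $t=\log(1-\delta)$ reduces matters to
\[
-\delta - (1-\delta)\log(1-\delta) \le -\tfrac{\delta^2}{3}, \qquad \delta \in (0,1].
\]

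The only nontrivial step is verifying these two one-variable inequalities, which I would handle via Taylor expansion: define $\varphi(\delta) = (1+\delta)\log(1+\delta) - \delta - \tfrac{\delta^2}{3}$ and check $\varphi(0)=0$, $\varphi'(0)=0$, together with $\varphi''(\delta) = \tfrac{1}{1+\delta} - \tfrac{2}{3} \ge 0$ on $(0,1]$; hence $\varphi \ge 0$ throughout the interval. The lower-tail analogue is handled similarly using $\psi(\delta) = -(1-\delta)\log(1-\delta) - \delta + \tfrac{\delta^2}{3}$ with $\psi'' = -\tfrac{1}{1-\delta} + \tfrac{2}{3}$, which requires slightly more care since $\psi''$ is negative for small $\delta$; this is the main (and really only) subtlety. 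I would circumvent it by working directly with the series $(1-\delta)\log(1-\delta) = -\delta + \sum_{k\ge 2}\frac{\delta^k}{k(k-1)}$ and bounding the tail $\sum_{k\ge 2}\frac{\delta^k}{k(k-1)} \ge \tfrac{\delta^2}{2} \ge \tfrac{\delta^2}{3}$ (in fact any constant $\le 1/2$ suffices, and $1/3$ is what the statement requires). Combining the two optimized Chernoff bounds with these calculus estimates yields the claim.
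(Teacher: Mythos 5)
The paper itself contains no proof of this lemma: it is established purely by citation to Proposition 27 of \cite{JMLR:v15:vonluxburg14a}. Your self-contained Chernoff argument is therefore necessarily a different route, and its architecture is the standard and correct one: the MGF bound $\mathbb{E}[e^{tm}] \le \exp(Mq(e^t-1))$, Markov's inequality, optimization at $t = \log(1\pm\delta)$, and reduction to two scalar inequalities. Your lower-tail step is also fine: the series identity $(1-\delta)\log(1-\delta) = -\delta + \sum_{k\ge 2}\frac{\delta^k}{k(k-1)}$ is correct, every term is nonnegative on $(0,1)$, and the $k=2$ term alone gives $\delta^2/2 \ge \delta^2/3$. (A small edge case: at $\delta = 1$ the optimizer $t=\log(1-\delta)$ is $-\infty$; either pass to the limit or note directly that $\mathbb{P}(m \le 0) = (1-q)^M \le e^{-Mq} \le e^{-Mq/3}$.)

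The genuine gap is in the upper tail, where you justify $\varphi(\delta) = (1+\delta)\log(1+\delta) - \delta - \frac{\delta^2}{3} \ge 0$ by claiming $\varphi''(\delta) = \frac{1}{1+\delta} - \frac{2}{3} \ge 0$ on $(0,1]$. This is false for $\delta > 1/2$; for instance $\varphi''(1) = \frac{1}{2} - \frac{2}{3} = -\frac{1}{6}$. The failure occurs exactly on the part of the interval where the constant $1/3$ matters (the upper-tail scalar inequality is false with $1/2$ in place of $1/3$ near $\delta = 1$, since $2\log 2 - 1 \approx 0.386 < 0.5$), so it cannot be dismissed as a boundary technicality. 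The conclusion nevertheless holds, and the repair is short: $\varphi'(\delta) = \log(1+\delta) - \frac{2\delta}{3}$ satisfies $\varphi'(0)=0$, is increasing on $[0,1/2]$ and decreasing on $[1/2,1]$, and $\varphi'(1) = \log 2 - \frac{2}{3} > 0$; hence $\varphi' \ge 0$ on all of $[0,1]$, so $\varphi$ is nondecreasing and $\varphi \ge \varphi(0) = 0$. Alternatively, invoke the Bennett-type bound $(1+\delta)\log(1+\delta) - \delta \ge \frac{\delta^2}{2(1+\delta/3)}$, whose right-hand side is at least $\frac{3\delta^2}{8} \ge \frac{\delta^2}{3}$ for $\delta \le 1$. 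With either patch your proof is complete.
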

We proceed to apply Lemma \ref{BinCon} in order to bound the distance between a given design point and its $K$-nearest neighbors.
\begin{lemma}[{\bf{see Lemma 6 in \cite{madrid2020adaptive}}}]
\label{dist-bet-Knn}
Under Assumption \ref{assume:tv functions}.
     Denote by $R_K(x)$ the distance from $x \in [0,1]^d$ to its $K$ th nearest neighbor in the set $\left\{x_{i,j}\right\}_{i=1,j=1}^{n,m_i}$. Setting
\begin{align*}
& R_{K, \max }=\max _{1 \leq i \leq n,1 \leq j \leq m_i} R_K\left(x_{i,j}\right), \\
& R_{K, \min }=\min _{1 \leq i \leq n,1 \leq j \leq m_i} R_K\left(x_{i,j}\right),
\end{align*}
we have that, 
\begin{align*}
&\mathbb{P}\left\{a\left(\frac{K}{nm}\right)^{1 / d} \leq R_{K, \min } \leq R_{K, \max } \leq \tilde{a}\left(\frac{K}{nm}\right)^{1 / d}\right\} 
\\
\geq& 1-\frac{4}{C_{\beta}}C\log(n)nm\exp(-\frac{\widetilde{C}}{24}\frac{K}{\log(n)})-\frac{4}{C_{\beta}}C\log(n)nm\exp(-\frac{C_{\beta}}{24}\frac{K}{\log(n)})-\frac{2}{n},
\end{align*}
 where $\widetilde{C}=\frac{C_{\beta}c_1c}{c_2C}$, $a=1 /\left(  4Cc_{2,d}c_2v_{\text{max} }\right)^{1 / d}$, and $\tilde{a}=\left(1/(cc_{1,d}c_1v_{\text{min} })\right)^{1 / d}$ for some positive constants $c_2$, and $c_1$, and, $v_{\max}$, $v_{\min}$, $C$, $C_{\beta},$ and $c$  from Assumption \ref{assume:tv functions}. Moreover $c_{1,d}$ and $c_{2,d}$ are positive constants that depends on $d.$
\end{lemma}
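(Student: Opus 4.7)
The plan is to adapt the classical argument that, for i.i.d.\ design points drawn from a density bounded between $v_{\min}$ and $v_{\max}$, the number of points falling in a Euclidean ball of radius $r$ concentrates around $\asymp nm\cdot r^d$ (Binomial/Chernoff), and then hit every point by a union bound. The temporal $\beta$-mixing is handled with the blocking device of Appendix~\ref{BetaMsection}: choose $L=\tfrac{2}{C_\beta}\log n$ so that $\beta_x(L)\le n^{-2}$, and build the copies $\{x_{i,j}^*\}$. On the event $\Omega_x=\bigcap_{i=1}^n\{x_{i,j}=x_{i,j}^*\ \forall j\}$, which has probability at least $1-1/n$, one may replace the original design points by the copies, which by Lemma~\ref{Ind-cop} are i.i.d.\ within each of the $2L$ blocks $\mathcal{J}_{e,l}$ and $\mathcal{J}_{o,l}$ (and within each block there are $\sim nm/L$ points, using Assumption~\ref{assume:tv functions}{\bf f} and $|\mathcal{J}_{e,l}|,|\mathcal{J}_{o,l}|\asymp n/L$).

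\textbf{Upper bound on $R_{K,\max}$.} Fix a design point $x_{i_0,j_0}$ and let $B$ be the ball of radius $r=\tilde a(K/nm)^{1/d}$ around it. Conditioning on $x_{i_0,j_0}$, the probability that an independent copy $x_{i,j}^*$ lies in $B$ is at least $v_{\min}c_{1,d}r^d=v_{\min}c_{1,d}\tilde a^{\,d}K/(nm)$, where $c_{1,d}$ is the standard lower bound on the volume of $B_r(x)\cap[0,1]^d$. Thus, for each block index $l$, the count $N_{e,l}=\sum_{i\in\mathcal{J}_{e,l}}\sum_{j=1}^{m_i}\mathbf 1\{x_{i,j}^*\in B\}$ dominates a Binomial with $M\ge c_1cnm/L$ trials and success probability $v_{\min}c_{1,d}\tilde a^{\,d}K/(nm)$, so $\mathbb E[N_{e,l}]\ge cc_1v_{\min}c_{1,d}\tilde a^{\,d}K/L$. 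I choose $\tilde a^{\,d}=1/(cc_1c_{1,d}v_{\min})$ (matching the stated constant) so that, summed over the $2L$ blocks, the expected total count is at least $2K$, and apply the Chernoff lower tail of Lemma~\ref{BinCon} with $\delta=1/2$ to each block, obtaining $N_{e,l}\ge\mathbb E[N_{e,l}]/2$ with probability $\ge 1-\exp(-\tfrac{1}{12}\mathbb E[N_{e,l}])=1-\exp(-C_\beta cc_1v_{\min}c_{1,d}\tilde a^{\,d}K/(24\log n))$. A union bound over the $2L\asymp\tfrac{4}{C_\beta}\log n$ blocks followed by a union bound over the $\le nm$ design points then gives the claimed upper tail factor $\tfrac{4}{C_\beta}C\log(n)\cdot nm\cdot\exp(-\widetilde C K/(24\log n))$, where $\widetilde C=C_\beta c_1c/(c_2C)$ emerges precisely from the bookkeeping of the block sizes $|\mathcal{J}_{e,l}|$ and the inhomogeneity $cm\le m_i\le Cm$.

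\textbf{Lower bound on $R_{K,\min}$.} The argument is symmetric: take $r=a(K/nm)^{1/d}$ with $a^d=1/(4Cc_2c_{2,d}v_{\max})$, bound the probability that a copy lies in the ball from above by $v_{\max}c_{2,d}r^d$, and apply the Chernoff upper tail of Lemma~\ref{BinCon} to each block. The total expected count across the $2L$ blocks is at most $K/2$, so the actual count is $<K$ with probability $\ge 1-2L\exp(-C_\beta K/(24\log n))$, producing the second exponential term in the stated bound. Intersecting this event with $\Omega_x$ and with the event from the previous step yields the lemma, with the $2/n$ term arising from the two coupling failures (one for each direction of the bound, one union-bounded across $i$).

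\textbf{Main obstacle.} The conceptual work is modest, but the delicate part is the constant-chase: extracting precisely the constants $a,\tilde a,\widetilde C$ stated in the lemma from the interplay of (i) the block-length $L=\tfrac{2}{C_\beta}\log n$, (ii) the lower/upper bounds $c_1n/L\le|\mathcal{J}_{e,l}|\le c_2n/L$, (iii) the inhomogeneity $cm\le m_i\le Cm$, and (iv) the volume constants $c_{1,d},c_{2,d}$. Additionally, a small technical point is that when conditioning on the center $x_{i_0,j_0}$, one should drop its own block (or subtract one from the count) to avoid self-counting; since $|\mathcal{J}_{e,l}|\asymp n/L$, this loses only a negligible constant factor and does not affect the rate.
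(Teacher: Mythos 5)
Your proposal is correct and follows essentially the same route as the paper's proof: the same blocking/coupling construction with $L=\tfrac{2}{C_\beta}\log n$ and the event $\Omega_x$ costing $2/n$, stochastic domination of the block-wise counts by Binomials, the Chernoff bounds of Lemma~\ref{BinCon}, and union bounds over the $2L$ blocks and the at most $Cnm$ design points. The only discrepancy is a harmless constant-attribution swap: in the paper the $\exp\bigl(-\tfrac{\widetilde C}{24}\tfrac{K}{\log n}\bigr)$ term arises from the $R_{K,\min}$ direction (small ball, upper-tail Chernoff with mean $\mu_{\max}\sum_{t\in\mathcal{J}_{e,l}}m_t\geq \tfrac{c_1c}{4Cc_2}\tfrac{K}{L}$), while $\exp\bigl(-\tfrac{C_\beta}{24}\tfrac{K}{\log n}\bigr)$ comes from the $R_{K,\max}$ direction via the choice $\tilde a^d=1/(cc_1c_{1,d}v_{\min})$ — the opposite of your labeling — but since both exponential terms appear in the stated bound, this bookkeeping slip does not affect the validity of your argument.
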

\begin{proof}
First note that for any $x \in [0,1]^d$, by Assumptions \ref{assume:tv functions} we have
\begin{equation}
    \label{aux-lemma1-eq0}
\mathbb{P}\left(x_{i,j} \in B_r(x)\right)=\int_{B_r(x)} v(t) d t \leq c_{2,d}r^d v_{\max }:=\mu_{\max }<1,
\end{equation}
for small enough $r$ and a positive constant $c_{2,d}$ that depends on $d.$ 
Let $L=\frac{2}{C_{\beta}}\log(n)$ and $\left\{x_{i,j}^*\right\}_{i=1,j=1}^{n,m_i}$ defined as in Appendix \ref{BetaMsection} and consider the event $\Omega=\bigcap_{i=1}^n\{x_{i,j}=x_{i,j}^*,\ \forall \ 1\le j\le m_i\}$. It is satisfied that
 $ | \mathcal J_{e,l}|,| \mathcal J_{o,l}|  \asymp n/L$, from where, there exist positive constants $c_1$ and $c_2$ such that  
 \begin{equation}
 \label{aux-lemma1-eq9} 
 c_1n/L\le| \mathcal J_{e,l}|,| \mathcal J_{o,l}| \le  c_2n/L. 
 \end{equation}
 Moreover, by Lemma \ref{Ind-cop} and Assumption \ref{assume:tv functions}, we have that $\mathbb{P}(\Omega^c)\le \frac{1}{n}.$ Then, 
\begin{align}
\label{Lem-K-dist-0}
    \mathbb{P}\Big(\left\{R_{K, \min } \leq a(K / (nm))^{1 / d}\right\}\Big)\le \mathbb{P}\Big(\left\{R_{K, \min } \leq a(K / (nm))^{1 / d}\right\}\cap \Omega\Big)+\frac{1}{n}.
\end{align}
To analyze the term $\mathbb{P}\Big(\left\{R_{K, \min } \leq a(K / nm)^{1 / d}\right\}\cap \Omega\Big),$ we observe that
\begin{align}
\label{Lem-K-dist-1-1}
    &\mathbb{P}\Big(\left\{R_{K, \min } \leq a(K / (nm))^{1 / d}\right\}\cap \Omega\Big)\nonumber
    \\
    \le& \mathbb{P}\Big(\left\{\exists (i,j)\in [n]\times [m_i] :R_K\left(x_{i,j}\right) \leq a(K / (nm))^{1 / d}\right\}\bigcap \Omega\Big).
\end{align}
 We also have that $R_K(x) \leq r$ if and only if there are at least $K$ observations among $\left\{x_{i,j}\right\}$ in $B_r(x)$. Therefore, letting 
 \begin{equation}
 \label{radiusLemma1}
 r=a(K / (nm))^{1 / d},
 \end{equation}
 we have that
\begin{align}
\label{aux-lemma1-eq7}
    &\mathbb{P}\Big(\left\{\exists (i,j)\in [n]\times [m_i] :R_K\left(x_{i,j}\right) \leq a(K / (nm))^{1 / d}\right\}\bigcap \Omega\Big)\nonumber
    \\
    \le&
    \sum_{i=1}^n\sum_{j=1}^{m_i}\mathbb{P}\Big(\left\{\sum_{t=1}^n\sum_{s=1}^{m_t}\mathbf{1}_{x_{t,s}\in B_r(x_{i,j})}\ge K\right\}\bigcap \Omega\Big)\nonumber
    \\
    \le&
    Cnm\mathbb{P}\Big(\sum_{t=1}^n\sum_{s=1}^{m_t}\mathbf{1}_{x_{t,s}^*\in B_r(x_{i,j})}\ge K\Big),
\end{align}
where the first inequality is followed by a union bound argument together with the last observation. The second inequality is followed from Assumption \ref{assume:tv functions}{\bf{f}} coupled with the definition of the event $\Omega$ at the beginning of the proof.
Then, using a union bound argument
\begin{align}
\label{aux-lemma1-eq1}
    &\mathbb{P}\Big(\sum_{t=1}^n\sum_{s=1}^{m_t}\mathbf{1}_{x_{t,s}^*\in B_r(x_{i,j})}\ge K\Big)\nonumber
    \\
    \le&
    \sum_{l=1}^{L}\mathbb{P}\Big(\sum_{t \in \mathcal{J}_{e,l}}\sum_{s=1}^{m_t}\mathbf{1}_{x_{t,s}^*\in B_r(x_{i,j})}\ge K/(2L)\Big)+\sum_{l=1}^{L}\mathbb{P}\Big(\sum_{t \in \mathcal{J}_{o,l}}\sum_{s=1}^{m_t}\mathbf{1}_{x_{t,s}^*\in B_r(x_{i,j})}\ge K/(2L)\Big)\nonumber
    \\
    =& I_1+I_2.
\end{align}
We now bound each of the terms $I_1$ and $I_2.$ The analysis employed for the second term follows the same line of arguments as for $I_1$ and is omitted.
Using Lemma \ref{Ind-cop} for any $1\le l\le L$ we have that \begin{equation}
    \label{aux-lemma1-eq2}
\sum_{t \in \mathcal{J}_{e,l}}\sum_{s=1}^{m_t}\mathbf{1}_{x_{t,s}^*\in B_r(x_{i,j})}\sim\operatorname{Binomial}\left(\sum_{t \in \mathcal{J}_{e,l}}m_t,  \mathbb{P}(x_{1,1}^*\in B_r(x_{i,j}))\right).
\end{equation}
Let $\text{Bin}\sim\operatorname{Binomial}\left(\sum_{t \in \mathcal{J}_{e,l}}m_t,\mu_{\max }\right).$ By Inequality (\ref{aux-lemma1-eq0}), 
\begin{align}
\label{aux-lemma1-eq4}
    \mathbb{P}\Big(\sum_{t \in \mathcal{J}_{e,l}}\sum_{s=1}^{m_t}\mathbf{1}_{x_{t,s}^*\in B_r(x_{i,j})}\ge K/(2L)\Big)\le \mathbb{P}\Big(\text{Bin}\ge  K/(2L)\Big).
\end{align}
By Inequality (\ref{aux-lemma1-eq9}), Equation (\ref{radiusLemma1}), Assumption \ref{assume:tv functions}{\bf{f}} and the definition of $\mu_{\max}$ in Inequality (\ref{aux-lemma1-eq0}), observe that
\begin{align}
\label{aux-lemma1-eq3}
    2\mu_{\max }\sum_{t \in \mathcal{J}_{e,l}}m_t \le 2\frac{c_{2,d}v_{\max}}{4Cc_{2,d}c_2v_{\max}}\frac{K}{nm}\frac{Cc_2nm}{L}=\frac{K}{2L}.
\end{align}
From Inequality (\ref{aux-lemma1-eq3}) for any $1\le l\le L,$
\begin{align}
\label{aux-lemma1-eq5}
    \mathbb{P}\Big(\text{Bin}\ge K/(2L)\Big)\le \mathbb{P}\Big(\text{Bin}\ge  2\mu_{\max }\sum_{t \in \mathcal{J}_{e,l}}m_t\Big).
\end{align}
Therefore, by Lemma \ref{BinCon}, Inequality (\ref{aux-lemma1-eq4}) and Inequality (\ref{aux-lemma1-eq5})
\begin{align}
\label{aux-lemma1-eq6}
    \mathbb{P}\Big(\sum_{t \in \mathcal{J}_{e,l}}\sum_{s=1}^{m_t}\mathbf{1}_{x_{t,s}^*\in B_r(x_{i,j})}\ge K/(2L)\Big)\nonumber
    \le&
    \exp(-\frac{1}{3}\mu_{\max }\sum_{t \in \mathcal{J}_{e,l}}m_t)\nonumber
    \\
    \le&\exp(-\frac{c_1c}{12c_2C}\frac{K}{L})=\exp(-\frac{c_1cC_{\beta}}{24c_2C}\frac{K}{\log(n)}).
\end{align}
where the second inequality is followed by Inequality (\ref{aux-lemma1-eq9}), the choice of $r$ in Equation (\ref{radiusLemma1}), Assumption \ref{assume:tv functions}{\bf{f}}, and the choice of $\mu_{\max}$ in Inequality (\ref{aux-lemma1-eq0}). The Equality is attained by the choice of $L$ at the beginning of the proof.
Therefore, using Inequality (\ref{aux-lemma1-eq4}), (\ref{aux-lemma1-eq5}), (\ref{aux-lemma1-eq6}) and the choice of $L$ at the beginning of the proof, the term $I_1$ is upper bounded as follows
\begin{align}
\label{Lem-K-dist-1}
    I_1\le \frac{2}{C_{\beta}}\log(n)\exp(-\frac{c_1c}{24c_2C}\frac{K}{L})=\frac{2}{C_{\beta}}\log(n)\exp(-\frac{c_1cC_{\beta}}{24c_2C}\frac{K}{\log(n)}).
\end{align}
Similarly we obtain that 
\begin{align}
\label{Lem-K-dist-2}
    I_2\le \frac{2}{C_{\beta}}\log(n)\exp(-\frac{C_{\beta}c_1c}{24c_2C}\frac{K}{\log(n)}).
\end{align}
In consequence, by Inequality (\ref{aux-lemma1-eq7}), (\ref{aux-lemma1-eq1}), (\ref{Lem-K-dist-1}) and (\ref{Lem-K-dist-2}) we obtain that 
\begin{align}
\label{Lem-K-dist-3}
    &\mathbb{P}\Big(\left\{\exists (i,j)\in [n]\times [m_i] :R_K\left(x_{i,j}\right) \leq a(K / (nm))^{1 / d}\right\}\bigcap \Omega\Big)\nonumber
    \\
    \le& \frac{4}{C_{\beta}}Cnm\log(n)\exp(-\frac{c_1cC_{\beta}}{24c_2C}\frac{K}{\log(n)}).
\end{align}
From Equation (\ref{Lem-K-dist-0}), (\ref{Lem-K-dist-1-1}) and (\ref{Lem-K-dist-3}) it follows that
\begin{equation}
    \mathbb{P}\Big(\left\{R_{K, \min } \leq a(K / (nm))^{1 / d}\right\}\Big)\le \frac{4}{C_{\beta}}C\log(n)nm\exp(-\frac{c_1cC_{\beta}}{24c_2C}\frac{K}{\log(n)})+\frac{1}{n}.
\end{equation}
Now we analyze $\mathbb{P}\Big(R_{K,\max}\ge \widetilde a \Big(\frac{K}{nm}\Big)^{1/d}\Big)$. To this end, we first observe that,
$$
\mathbb{P}\left\{x_{i,j} \in B_r(x)\right\}=\int_{B_r(x)} v(t) d t  \geq c_{1,d} r^d v_{\min }:=\mu_{\min }>0,
$$
for a positive constant $c_{1,d}$ that depends on $d.$
Following a similar argument as above, we arrive at 
\begin{align*}
    \mathbb{P}\Big(R_{K,\max}\ge \widetilde a \Big(\frac{K}{nm}\Big)^{1/d}\Big)\le \frac{4}{C_{\beta}}C\log(n)nm\exp(-\frac{C_{\beta}}{24}\frac{K}{\log(n)})+\frac{1}{n},
\end{align*}
concluding the proof.
\end{proof}
We provide the following lemma to characterize the minimum and maximum number of observations $\left\{x_{i,j}\right\}$ that fall within each cell $\mathcal{I}_r$, for $r=1,...,N^d$. Remembering that  $I_r=\{(i,j)\in [n]\times [m_i]: P_{I}(x_{i,j})=\mathfrak{c}_r \}$ for $r=1,...,N^d,$ where $P_{I}(x)$ denotes the point in $P_{\mathrm{lat}}(N)$ such that $x\in C(P_{I}(x))$.

 \begin{lemma}[{\bf{see Lemma 7 in \cite{madrid2020adaptive}}}]
     \label{Omega-Oscar-Paper}
Assume that $N$ in the construction of $P_{\text {lat }}(N)$ defined in Equation (\ref{Plat-def}) is chosen as
$$
N=\left\lceil\frac{3 \sqrt{d}  (nm)^{1 / d}}{ aK^{1 / d}}\right\rceil,
$$
where  $a=1 /\left(  4Cc_{2,d}c_2v_{\text{max} }\right)^{1 / d}$, for a positive constants $c_2$, and, $v_{\max}$, and $C$ are from Assumption \ref{assume:tv functions}. Moreover, $c_{2,d}$ is a positive constants that depends on $d.$
Then there exist positive constants $\widetilde{b}_1$ and $\widetilde{b}_2$ depending on $d, v_{\text{min}}$, and $ v_{\text{max} }$ with $ v_{\text{min}}$ from Assumption \ref{assume:tv functions}, such that for $C_{\beta}$ from Assumption 
 \ref{assume:tv functions},
$$ 
\begin{aligned}
& \mathbb{P}\left\{\max _{r=1,...,N^d }|I_r| \geq(1+\delta)  \widetilde{b}_1 K\right\} \leq \frac{4}{C_{\beta}}\log(n)N^d\exp\Big(-\frac{C_{\beta}}{12}\delta^2 \widetilde{b}_2\frac{K}{\log(n)}\Big)+\frac{1}{n}, \\
& \mathbb{P}\left\{\min_{r=1,...,N^d}|I_r| \leq(1-\delta)  \widetilde{b}_2 K\right\} \leq \frac{4}{C_{\beta}}\log(n)N^d\exp\Big(-\frac{C_{\beta}}{12}\delta^2 \widetilde{b}_2\frac{K}{\log(n)}\Big)+\frac{1}{n},
\end{aligned}
$$
for all $\delta \in(0,1)$. Moreover, the symmetric $K$-NN graph has maximum degree $d_{\max }$ satisfying
\begin{align*}
    &\mathbb{P}\left(d_{\max } \geq 3Cc_2c_{2,d} v_{\max }  \tilde{a}^d K\right)
    \\
    \leq& Cnm\left\{\frac{4}{C_{\beta}}\log(n)\exp(-\frac{C_{\beta}}{24}\frac{K}{\log(n)})+\frac{4}{C_{\beta}}C\log(n)nm\exp(-\frac{C_{\beta}}{24}\frac{K}{\log(n)})\right\}+\frac{1}{n},
\end{align*}
where $\tilde{a}=1 /\left(cc_1c_{1,d}v_{\text{min} }\right)^{1 / d}$, and $c_1$ is positive constants. Moreover, $c_{1,d}>0$ is a constant that depends on $d.$
Define the event $\Omega$ as: ``If $x_{i,j} \in C(x_{i,j}^{\prime})$ and $x_{s,t} \in C(x_{s,t}^{\prime})$ for $x_{i,j}^{\prime}, x_{s,t}^{\prime} \in P_{\text{lat}}(N)$ with $\left\|x_{i,j}^{\prime}-x_{s,t}^{\prime}\right\|_2 \leq$ $N^{-1}$, then $x_{i,j}$ and $x_{s,t}$ are connected in the $K$-NN graph". Then,
$$
\mathbb{P}(\Omega) \geq 1-\frac{4}{C_{\beta}}C\log(n)nm\exp(-\frac{\widetilde{C}}{24}\frac{K}{\log(n)})-\frac{1}{n},
$$
where $\widetilde{C}=\frac{C_{\beta}c_1c}{c_2C}$.
 \end{lemma}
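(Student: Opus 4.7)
\textbf{Proof proposal for Lemma \ref{Omega-Oscar-Paper}.} The strategy is to mirror the proof of Lemma 7 in \cite{madrid2020adaptive}, but replace every direct appeal to Binomial concentration (which required independence of the design) with the blocking/coupling device of Appendix \ref{BetaMsection} followed by concentration for the independent copies. Throughout, set $L=\tfrac{2}{C_\beta}\log(n)$, let $\{x_{i,j}^*\}$ be the independent copies constructed in Lemma \ref{Ind-cop}, and let $\Omega_x=\bigcap_{i=1}^n\{x_{i,j}=x_{i,j}^*\ \forall j\in[m_i]\}$; by Lemma \ref{Ind-cop} and the exponential decay of $\beta_x$, $\mathbb P(\Omega_x^c)\le n\beta_x(L)\le 1/n$. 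Assumption \ref{assume:tv functions}\textbf{f} gives $cm\le m_i\le Cm$ and Appendix \ref{BetaMsection} gives $c_1 n/L\le |\mathcal J_{e,l}|,|\mathcal J_{o,l}|\le c_2 n/L$, so each even/odd block contains $\asymp nm/L$ observations once we also sum over $j\in[m_i]$.

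For the cell count bounds I would first observe that $\mathrm{Vol}(\mathcal I_r)=N^{-d}$, so by Assumption \ref{assume:tv functions}\textbf{a} the probability that any single $x_{i,j}$ lands in $\mathcal I_r$ lies in $[v_{\min}N^{-d},v_{\max}N^{-d}]$; with the prescribed $N\asymp (nm/K)^{1/d}$ this probability is $\Theta(K/(nm))$, so the expected contribution of a block $\mathcal J_{e,l}$ to $|I_r|$ is of order $K/L$. On the event $\Omega_x$, $|I_r|=\sum_i\sum_j\mathbf 1_{x_{i,j}^*\in\mathcal I_r}$, and splitting over even/odd blocks followed by a union bound in $l$ reduces the tail to $O(L)$ binomial tails of parameters $(\sum_{i\in\mathcal J_{e,l}}m_i,\Theta(K/(nm)))$. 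Applying Lemma \ref{BinCon} to each block with deviation $\delta$, every such tail is at most $\exp(-c\delta^2 K/L)=\exp(-c\delta^2 C_\beta K/(2\log n))$, and a final union bound over the $N^d$ cells plus the $2L$ blocks, together with $\mathbb P(\Omega_x^c)\le 1/n$, yields the two displayed inequalities with the stated constants $\widetilde b_1,\widetilde b_2$ (these constants absorb $v_{\min},v_{\max}$ and the geometric constants of the cell).

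For the maximum-degree statement I would combine the cell bound with Lemma \ref{dist-bet-Knn}. On the high-probability event $\{R_{K,\max}\le \widetilde a(K/nm)^{1/d}\}$, every node $x_{i,j}$ has at most as many $K$-NN neighbors (in the symmetric graph) as points of $\{x_{s,t}\}$ lying in the ball $B_{\widetilde a(K/nm)^{1/d}}(x_{i,j})$. This ball is covered by a constant number of lattice cells $\mathcal I_r$ (the constant depending only on $d$ and on $\widetilde a/a$), so the preceding upper bound on $\max_r |I_r|$ immediately gives $d_{\max}\le 3Cc_2c_{2,d}v_{\max}\widetilde a^d K$ with the stated probability, after a union bound over the $nm$ nodes (which is where the extra $Cnm$ factor and the Lemma \ref{dist-bet-Knn} error term come from).

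For the connectivity event $\Omega$, the key geometric observation is that if $x_{i,j}^\prime,x_{s,t}^\prime\in P_{\mathrm{lat}}(N)$ satisfy $\|x_{i,j}^\prime-x_{s,t}^\prime\|_2\le N^{-1}$, then any $x_{i,j}\in C(x_{i,j}^\prime)$ and $x_{s,t}\in C(x_{s,t}^\prime)$ lie within distance $\tfrac{\sqrt d}{N}+N^{-1}\le \tfrac{2\sqrt d}{N}$ of each other; by the choice of $N$ this is much smaller than $R_{K,\min}\asymp (K/nm)^{1/d}$ up to the constant $a$. Hence it suffices to show that, with the stated probability, every ball of radius $\tfrac{2\sqrt d}{N}$ around some $x_{i,j}$ contains at least $K$ sample points, so that $x_{s,t}$ is necessarily among the $K$ nearest neighbors of $x_{i,j}$ (or vice versa). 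This is again a binomial-count lower bound that reduces, via the blocking decomposition and Lemma \ref{Ind-cop}, to $2L$ Bernstein-type tails on the independent copies; Lemma \ref{BinCon} combined with the union bound over the $nm$ centers and the $2L$ blocks delivers the stated probability. The main obstacle, and the reason the exponents in Lemma \ref{BinCon} degrade by a $\log n$ factor, is precisely this blocking step: after splitting, each effective sample size drops from $nm$ to $nm/L$, which is why the final exponents carry $K/\log n$ rather than $K$. Balancing this degradation against the union bound over $N^d\asymp nm/K$ cells is what forces the scaling $K\asymp\log^{2+l}(nm)$ assumed throughout Theorems \ref{Cest-d>1} and \ref{Pest-d>1}.
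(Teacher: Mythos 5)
Your treatment of the cell counts and of $d_{\max}$ is essentially the paper's argument (blocking via Lemma~\ref{Ind-cop} with $L=\tfrac{2}{C_\beta}\log n$, reduction to $2L$ binomial tails, Lemma~\ref{BinCon}, union bounds over cells/blocks/nodes, plus the coupling event of probability $\ge 1-1/n$); your covering-by-cells variant for the degree bound is a legitimate alternative to the paper's direct ball-count bound, at the cost of slightly different constants. The genuine problem is your argument for the connectivity event $\Omega$: you reduce it to showing that ``every ball of radius $2\sqrt{d}/N$ around $x_{i,j}$ contains at least $K$ sample points, so that $x_{s,t}$ is necessarily among the $K$ nearest neighbors of $x_{i,j}$.'' This implication is false, and its direction is reversed. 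If the ball contains at least $K$ points, you only learn $R_K(x_{i,j})\le 2\sqrt{d}/N$, and a specific $x_{s,t}$ inside the ball can easily fail to be among the $K$ nearest neighbors when many other points lie closer. The correct condition is the opposite one: $x_{s,t}$ is among the $K$ nearest neighbors of $x_{i,j}$ precisely when $\|x_{i,j}-x_{s,t}\|_2\le R_K(x_{i,j})$, so what you need is an \emph{upper} bound on the number of points in small balls, equivalently a \emph{lower} bound on $R_{K,\min}$. Your proposed ``ball contains at least $K$ points'' concentration step therefore proves the wrong inequality and the connectivity step would fail as written.

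The fix is short and is exactly what the paper does: from the geometry of the cells, any two points in adjacent cells satisfy $\|x_{i,j}-x_{s,t}\|_2< 3\sqrt{d}\,N^{-1}\le a(K/(nm))^{1/d}$, hence $\{a(K/(nm))^{1/d}\le R_{K,\min}\}\subset\Omega$, and $\mathbb{P}(\Omega)$ is bounded directly by the lower-tail bound on $R_{K,\min}$ already established in (the proof of) Lemma~\ref{dist-bet-Knn} — no new blocking or concentration is needed for this part. Notably, you state the right geometric comparison yourself (``much smaller than $R_{K,\min}$ up to the constant $a$'') before pivoting to the incorrect reduction; had you stopped there and invoked Lemma~\ref{dist-bet-Knn}, the step would be complete. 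One further small imprecision: the claim that the per-block tail degrades to $\exp(-c\delta^2 K/L)$ ``because each effective sample size drops from $nm$ to $nm/L$'' is the right mechanism, but you should make explicit that the threshold $(1\pm\delta)\widetilde b_i K$ must also be matched block-by-block against the block mean $\Theta(K/L)$, which is where the paper's careful choice of $\widetilde b_1,\widetilde b_2$ relative to $v_{\min},v_{\max},c,C,c_1,c_2$ enters.
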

\begin{proof}
Through this proof, we make use of the event $\widetilde{\Omega}=\bigcap_{i=1}^n\{x_{i,j}=x_{i,j}^*,\ \forall \ 1\le j\le m_i\}$, where $x_{i,j}^*$ are defined  in Appendix \ref{BetaMsection}. We use 
\begin{equation}
\label{aux-lemma2-Lchoice}
L=\frac{2}{C_{\beta}}\log(n),
\end{equation}
and for $l\in[L]$
\begin{equation}
\label{aux-lemma2-block}
\frac{c_1n}{L}\le |\mathcal{J}_{e,l}|, |\mathcal{J}_{o,l}|\le \frac{c_2n}{L},
\end{equation}
with $c_1$ and $c_2$ positive constants.
\\
First we analyze the event $\Omega$. To this end, let 
$x_{i,j} \in C(x_{i,j}^{\prime})$ and $x_{s,t} \in C(x_{s,t}^{\prime})$ for $x_{i,j}^{\prime}, x_{s,t}^{\prime} \in P_{\text{lat}}(N)$ with $\left\|x_{i,j}^{\prime}-x_{s,t}^{\prime}\right\|_2 \leq$ $N^{-1}$. Then the following holds,
\begin{align}
\label{aux-lemma2-eq1}
\left\| x_{i,j}- x_{s,t}\right\|_2 &\leq  \sqrt{d}\left\|x_{i,j}-x_{s,t}\right\|_{\infty} \nonumber\\
& \leq  \sqrt{d}\left\{\left\|x_{i,j}-x_{i,j}^{\prime}\right\|_{\infty}
+
\left\|x_{i,j}^{\prime}-x_{s,t}^{\prime}\right\|_{\infty}
+
\left\|x_{s,t}-x_{s,t}^{\prime}\right\|_{\infty}
\right\} \nonumber\\
& <3  \sqrt{d} N^{-1}\nonumber \\
& \leq a\left(\frac{K}{nm}\right)^{1 / d},
\end{align}
with $a=1 /\left(  2Cc_2v_{\text{max} }\right)^{1 / d}$ and $v_{\max}$ and $C$ are from Assumption \ref{assume:tv functions}. 
The first inequality is followed by a basic inequality between the $l_2$ and $l_\infty$ norms. The second is due to the triangle inequality, while the third inequality comes after using that $\left\|x_{i,j}^{\prime}-x_{s,t}^{\prime}\right\|_2 \leq$ $N^{-1}$ and $x_{i,j}^{\prime}, x_{s,t}^{\prime} \in P_{\text{lat}}(N)$. Using the definition of $N$ in the statement of the Lemma provides the last inequality.
From Inequality (\ref{aux-lemma2-eq1}) we observe that  $\{a\left(\frac{K}{nm}\right)^{1 / d} \leq R_{K, \min }\}\subset\Omega$. Therefore, using the result from the proof of Lemma \ref{dist-bet-Knn},
$$
\mathbb{P}(\Omega) \geq \mathbb{P}\left\{a\left(\frac{K}{nm}\right)^{1 / d} \leq R_{K, \min }\right\} \geq 1-\frac{4}{C_{\beta}}C\log(n)nm\exp(-\frac{\widetilde{C}}{24}\frac{K}{\log(n)})-\frac{1}{n}.
$$
Next we proceed to derive an upper bound on the counts $\{I_{l}\}_{
l=1}^{N^d}$.  Assume that $x \in P_{\text {lat }}(N)$, and $x^{\prime} \in C(x)$. Since $x^{\prime} \in C(x)$, by the definition of $P_{\text {lat }}(N)$ we have that $\left\|x-x^{\prime}\right\|_{\infty}\le\frac{1}{2N}$. Therefore, using the basic inequality between the $l_2$ and $l_\infty$ norms, 
\begin{align}
\label{aux-lemma2-eq2}
\| x- x^{\prime}\|_2  \le\sqrt{d}\left\|x-x^{\prime}\right\|_{\infty}  \leq \frac{\sqrt{d}}{2 N}  \leq \frac{a}{6}\left(\frac{K}{nm}\right)^{1 / d} =\frac{\left(\tilde{b}_1\right)^{1 / d}}{(c_{2,d}v_{\max }2Cc_2)^{1 / d}}\left(\frac{K}{nm}\right)^{1 / d},
\end{align}
where the third inequality is obtained from the choice of $N$ in the statement of the Lemma, and $\tilde{b}_1$ is an appropriate constant. Here $c_{2,d}>0$ is a constant that depends on $d$. Therefore, from Inequality (\ref{aux-lemma2-eq2})
\begin{equation}
    \label{Omega-Oscar-paper-0}
C(x) \subset B_{\frac{\left(\tilde{b}_1\right)^{1 / d}}{(2c_{2,d}v_{\max }Cc_2)^{1 / d}}\left(\frac{K}{nm}\right)^{1 / d}}(x) .
\end{equation}
On the other hand, let $\tilde{b}_2>0$ a constant such that
$
\frac{\left(\tilde{b}_2\right)^{1 / d}}{(2c_{1,d}cc_1v_{\min })^{1 / d}} \leq \frac{a}{2 } \frac{1}{3 \sqrt{d}}
$ with $c_{1,d}>0$ a constant that depends on $d.$ 
Observe that if the following holds
$$
\left\|x- x^{\prime}\right\|_2 \leq \frac{\left(\tilde{b}_2\right)^{1 / d}}{(2cc_1v_{\min })^{1 / d}}\left(\frac{K}{nm}\right)^{1 / d},
$$
we have that,
$$
\left\|x-x^{\prime}\right\|_{\infty}\leq \left\|x-x^{\prime}\right\|_{2} \leq\frac{\left(\tilde{b}_2\right)^{1 / d}}{(2cc_1v_{\min })^{1 / d}}\left(\frac{K}{nm}\right)^{1 / d}\leq \frac{a}{2 } \frac{1}{3 \sqrt{d}} \left(\frac{K}{nm}\right)^{1 / d}= \frac{1}{2 N} .
$$
The third inequality is followed by the choice of $\widetilde{b}_2$ and the Equality due to the choice of $N$ in the statement of the Lemma.
Therefore,
\begin{equation}
\label{Omega-Oscar-paper-1}
B_{\frac{\left(\tilde{b}_2\right)^{1 / d}}{(2cc_{1,d}c_1v_{\min })^{1 / d}}\left(\frac{K}{nm}\right)^{1 / d}}(x) \subset C(x).
\end{equation}
We are now ready to obtain the bound for $\mathbb{P}\left\{\max _{r=1,...,N^d }|I_r| \geq(1+\delta)  \widetilde{b}_1 K\right\}.$ Notice that
\begin{align}
\label{Omega-Oscar-paper-2}
 \mathbb{P}\left\{\max _{r=1,...,N^d }|I_r| \geq(1+\delta)  \widetilde{b}_1 K\right\}&\le
\mathbb{P}\Big(\left\{\max _{r=1,...,N^d }|I_r| \geq(1+\delta)  \widetilde{b}_1 K\right\} \cap \widetilde{\Omega}\Big)+\frac{1}{n},
\end{align}
with $\widetilde{\Omega}$ defined at the beginning of the proof. Then, by a union bound argument
\begin{align}
\label{Omega-Oscar-paper-31}
  &\mathbb{P}\Big(\left\{\max _{l=1,...,N^d }|I_l| \geq(1+\delta)  \widetilde{b}_1 K\right\} \cap \widetilde{\Omega}\Big)  \nonumber
  \\
& \leq \sum_{r=1}^{N^d} \mathbb{P}\Big(\left\{|I_r| \geq(1+\delta)  \tilde{b}_1 K\right\}\cap \widetilde{\Omega}\Big) \nonumber
\\
=& \sum_{x \in P_{\text{lat}}(N)} \mathbb{P}\left[\Big\{|C(x)| \geq(1+\delta)  \tilde{b}_1 K\Big\}\cap \widetilde{\Omega}\right],
\end{align}
where $|C(x)|=\{(i,j)\in [n]\times[m_i]:P_I(x_{i,j})=x\}$.
 Moreover, for any $x \in P_{\text{lat}}(N)$ using Equation (\ref{Omega-Oscar-paper-0}) we have that
\begin{align*}
    \mathbb{P}\left(x_{1,1} \in C(x)\right)\le  \mathbb{P}\left(x_{1,1} \in B_{\frac{\left(\tilde{b}_1\right)^{1 / d}}{(2c_{2,d}v_{\max }Cc_2)^{1 / d}}\left(\frac{K}{nm}\right)^{1 / d}}(x)\right)\le \frac{\left(\tilde{b}_1\right)}{(2c_{2,d}v_{\max }Cc_2)}\left(\frac{K}{nm}\right)c_{2,d}v_{\max},
\end{align*}
where the last inequality follows the same line of arguments as in Inequality (\ref{aux-lemma1-eq0}) in the proof of Lemma \ref{dist-bet-Knn}. Therefore,
\begin{align*}
    2Cc_2nm\mathbb{P}\left(x_{1,1} \in C(x)\right)  \le 2Cc_2nm\frac{\left(\tilde{b}_1\right)}{(2c_{2,d}v_{\max }Cc_2)}\left(\frac{K}{nm}\right)c_{2,d}v_{\max}=\widetilde{b}_1K.
\end{align*}
The above Inequality and Inequality (\ref{Omega-Oscar-paper-31}) lead to
\begin{align}
\label{Omega-Oscar-paper-3}
&  \sum_{x \in P_{\text{lat}}(N)} \mathbb{P}\left[\Big\{|C(x)| \geq(1+\delta)  \tilde{b}_1 K\Big\}\cap \widetilde{\Omega}\right] \nonumber\\
& \leq \sum_{x \in P_{\text{lat}}(N)} \mathbb{P}\left[\Big\{|C(x)| \geq(1+\delta) 2Cc_2nm \mathbb{P}\left(x_{1,1} \in C(x)\right)\Big\}\cap \widetilde{\Omega}\right],
\end{align}
Then, using the definition of the event $\widetilde{\Omega}$ at the beginning of the proof and a union bound argument, 
\begin{align}
\label{Omega-Oscar-paper-5}
   &\mathbb{P}\left[\Big\{|C(x)| \geq(1+\delta) 2Cc_2nm \mathbb{P}\left(x_{1,1} \in C(x)\right)\Big\}\cap \widetilde{\Omega}\right]\nonumber
   \\
   \le& \sum_{l=1}^L \mathbb{P}\Big(\sum_{i\in\mathcal{J}_{e,l}}\sum_{j=1}^{m_i}\mathbf{1}_{x_{i,j}^*\in C(x)}\ge(1+\delta) 2Cc_2nm \mathbb{P}\left(x_{1,1}^* \in C(x)\right)/(2L) \Big)\nonumber
   \\
   +&\sum_{l=1}^L \mathbb{P}\Big(\sum_{i\in\mathcal{J}_{o,l}}\sum_{j=1}^{m_i}\mathbf{1}_{x_{i,j}^*\in C(x)}\ge (1+\delta)2Cc_2nm \mathbb{P}\left(x_{1,1}^* \in C(x)\right)/(2L) \Big).
\end{align}
Now we analyze each of the terms $\mathbb{P}\Big(\sum_{i\in\mathcal{J}_{e,l}}\sum_{j=1}^{m_i}\mathbf{1}_{x_{i,j}^*\in C(x)}\ge Cc_2nm \mathbb{P}\left(x_{1,1}^* \in C(x)\right)/L \Big).$ By Inequality (\ref{aux-lemma2-block}) and Assumption \ref{assume:tv functions}{\bf{f}}
\begin{align*}
    &\mathbb{P}\Big(\sum_{i\in\mathcal{J}_{e,l}}\sum_{j=1}^{m_i}\mathbf{1}_{x_{i,j}^*\in C(x)}\ge(1+\delta) Cc_2nm \mathbb{P}\left(x_{1,1}^* \in C(x)\right)/L \Big)
    \\
    \le&
    \mathbb{P}\Big(\sum_{i\in\mathcal{J}_{e,l}}\sum_{j=1}^{m_i}\mathbf{1}_{x_{i,j}^*\in C(x)}\ge (1+\delta) \Big(\sum_{i\in \mathcal{J}_{e,l}}m_i\Big)\mathbb{P}\left(x_{1,1}^* \in C(x)\right) \Big).
\end{align*}
By Lemma \ref{BinCon} it follows that
\begin{align}
    \label{aux-lemma2-eq4}&\mathbb{P}\Big(\sum_{i\in\mathcal{J}_{e,l}}\sum_{j=1}^{m_i}\mathbf{1}_{x_{i,j}^*\in C(x)}\ge (1+\delta) \Big(\sum_{i\in \mathcal{J}_{e,l}}m_i\Big)\mathbb{P}\left(x_{1,1}^* \in C(x)\right) \Big)
    \\
    \le& \exp\Big(-\frac{1}{3}\delta^2\Big(\sum_{i\in \mathcal{J}_{e,l}}m_i\Big)\mathbb{P}\left(x_{1,1}^* \in C(x)\right)\Big).
\end{align}
Now Equation (\ref{Omega-Oscar-paper-1}) implies that 
$$\mathbb{P}\left(x_{1,1}^* \in C(x)\right)\ge\mathbb{P}\left(x_{1,1}^*\in B_{\frac{\left(\tilde{b}_2\right)^{1 / d}}{(2cc_{1,d}c_1v_{\min })^{1 / d}}\left(\frac{K}{nm}\right)^{1 / d}}(x)\right)\ge c_{1,d}v_{\min}\frac{\left(\tilde{b}_2\right)}{(2cc_{1,d}c_1v_{\min })}\left(\frac{K}{nm}\right),$$
from where, using Inequality (\ref{aux-lemma2-block}) and Assumption \ref{assume:tv functions}{\bf{f}} we get that
$$\Big(\sum_{i\in \mathcal{J}_{e,l}}m_i\Big)\mathbb{P}\left(x_{1,1}^* \in C(x)\right)\ge cc_1\frac{n}{L}mc_{1,d}v_{\min}\frac{\left(\tilde{b}_2\right)}{(2cc_{1,d}c_1v_{\min })}\left(\frac{K}{nm}\right).$$
Thus, by our choice of $L$ in Equation (\ref{aux-lemma2-Lchoice}) and Inequality (\ref{aux-lemma2-eq4})
\begin{align}
    \label{Omega-Oscar-paper-4}
&\mathbb{P}\Big(\sum_{i\in\mathcal{J}_{e,l}}\sum_{j=1}^{m_i}\mathbf{1}_{x_{i,j}^*\in C(x)}\ge (1+\delta) \Big(\sum_{i\in \mathcal{J}_{e,l}}m_i\Big)\mathbb{P}\left(x_{1,1}^* \in C(x)\right) \Big)
    \\
    &\le\exp\Big(-\frac{1}{3}\delta^2c_1c\frac{n}{L}m\frac{\widetilde{b}_2}{2c_{1,d}v_{\min} cc_1}\frac{K}{nm}c_{1,d}v_{\min}\Big)=\exp\Big(-\frac{C_{\beta}}{12}\delta^2 \widetilde{b}_2\frac{K}{\log(n)}\Big).
\end{align}
A similar analysis is done for $$\mathbb{P}\Big(\sum_{i\in\mathcal{J}_{o,l}}\sum_{j=1}^{m_i}\mathbf{1}_{x_{i,j}^*\in C(x)}\ge (1+\delta)2Cc_2nm \mathbb{P}\left(x_{1,1}^* \in C(x)\right)/(2L) \Big).$$ Thus, by Equation (\ref{Omega-Oscar-paper-2}), (\ref{Omega-Oscar-paper-31}), (\ref{Omega-Oscar-paper-3}), (\ref{Omega-Oscar-paper-5}) and (\ref{Omega-Oscar-paper-4}), we obtain that 
\begin{align*}
   & \mathbb{P}\left\{\max _{r=1,...,N^d }|I_r| \geq(1+\delta)  \widetilde{b}_1 K\right\}
    \\
    \le& \frac{4}{C_{\beta}}\log(n)N^d\exp\Big(-\frac{C_{\beta}}{12}\delta^2 \widetilde{b}_2\frac{K}{\log(n)}\Big)+\frac{1}{n}.
\end{align*}
On the other hand, with a similar argument, we have that
\begin{align*}
\mathbb{P}\left\{\min _{r=1,...,N^d}|I_r| \leq(1-\delta)  \widetilde{b}_2 K\right\} & \leq \sum_{x \in P_{\text{lat}}(N)} \mathbb{P}\Big(\left\{|C(x)| \leq(1-\delta)  \tilde{b}_2 K\right\}\cap \widetilde{\Omega}\Big)+\frac{1}{n} \\
& \leq \sum_{x \in P_{\text{lat}}(N)} \mathbb{P}\left[\left\{|C(x)| \leq(1-\delta) 2cc_1nm \mathbb{P}\left(x_{1,1} \in C(x)\right)\right\}\cap \widetilde{\Omega}\right] +\frac{1}{N}\\
& \le \frac{4}{C_{\beta}}\log(n)N^d\exp\Big(-\frac{C_{\beta}}{12}\delta^2 \widetilde{b}_2\frac{K}{\log(n)}\Big)+\frac{1}{n}.
\end{align*}
Next we proceed to find an upper bound on the maximum degree of the $K$-NN graph. To this end, we proceed as in the proof of Lemma \ref{dist-bet-Knn}. First, define the sets
$$
B_{i,j}(x)=\left\{(i',j') \in[n]\times[m_{i'}], (i',j')\neq (i,j): x_{i',j'}^* \in B_{\tilde{a}(K / nm)^{1 / d}}(x)\right\}
$$
and
$$B_{i,j}=\left\{(i',j') \in[n]\times[m_{i'}],(i',j')\neq (i,j): x_{i',j'}^* \in B_{\tilde{a}(K / nm)^{1 / d}}\left(x_{i,j}\right)\right\},$$
for $(i,j) \in[n]\times[m_i]$, and $x \in [0,1]^d$. Observe that
\begin{align}
\label{maxdegree-1}
    &\mathbb{P}\Big(\vert B_{i,j}(x)\vert>3 c_2Cc_{2,d}v_{\max }  \tilde{a}^d K\Big)\nonumber
    \\
    \le&\mathbb{P}\Big(\sum_{(i',j')\in[n]\times [m_{i'}],(i',j')\neq(i,j)}\mathbf{1}_{\{x_{i',j'}^*\in B_{\tilde{a}(K / nm)^{1 / d}}(x)\}}\ge \frac{3}{2} c_2Cc_{2,d}v_{\max }  \tilde{a}^d K\Big).
\end{align}
To bound the term
$$\mathbb{P}\Big(\sum_{(i',j')\in[n]\times [m_{i'}],(i',j')\neq(i,j)}\mathbf{1}_{\{x_{i',j'}^*\in B_{\tilde{a}(K / nm)^{1 / d}}(x)\}}\ge 3 c_2Cc_{2,d}v_{\max }  \tilde{a}^d K\Big),$$
we use a union bound arguments to get
\begin{align*}
    &\mathbb{P}\Big(\sum_{(i',j')\in[n]\times [m_{i'}],(i',j')\neq(i,j)}\mathbf{1}_{\{x_{i',j'}^*\in B_{\tilde{a}(K / nm)^{1 / d}}(x)\}}\ge 3 c_2Cc_{2,d}v_{\max }  \tilde{a}^d K\Big)
    \\
    \le&
    \sum_{l=1}^L\mathbb{P}\Big(\sum_{(i',j')\in \mathcal{J}_{e,l}\times [m_{i'}],(i',j')\neq(i,j)}\mathbf{1}_{\{x_{i',j'}^*\in B_{\tilde{a}(K / nm)^{1 / d}}(x)\}}\ge 3 c_2Cc_{2,d}v_{\max }  \tilde{a}^d K/(2L)\Big)
    \\
    +&\sum_{l=1}^L\mathbb{P}\Big(\sum_{(i',j')\in \mathcal{J}_{o,l}\times [m_{i'}],(i',j')\neq(i,j)}\mathbf{1}_{\{x_{1,1}^*\in B_{\tilde{a}(K / nm)^{1 / d}}(x)\}}\ge 3 c_2Cc_{2,d}v_{\max }  \tilde{a}^d K/(2L)\Big).
\end{align*}
Further, using Lemma \ref{Ind-cop} for any $l\in[L]$  we have that {\small{$$\sum_{(i',j')\in \mathcal{J}_{e,l}\times [m_{i'}],(i',j')\neq(i,j)}\mathbf{1}_{\{x_{i',j'}^*\in B_{\tilde{a}(K / nm)^{1 / d}}(x)\}}\sim\operatorname{Binomial}\left(\sum_{i' \in \mathcal{J}_{e,l}}m_{i'}-1, \mathbb{P}\Big(\{x_{i',j'}^*\in B_{\tilde{a}(K / nm)^{1 / d}}(x)\}\Big)\right).$$}}
By Equation (\ref{aux-lemma2-Lchoice}), Inequality (\ref{aux-lemma2-block}), Assumption \ref{assume:tv functions}{\bf{f}} and the Inequality (\ref{aux-lemma1-eq0}) in the proof of Lemma \ref{dist-bet-Knn},
\begin{align*}
    \frac{3}{2}\mathbb{P}\Big(\{x_{1,1}^*\in B_{\tilde{a}(K / nm)^{1 / d}}(x)\}\Big)\Big(\sum_{i' \in \mathcal{J}_{e,l}}m_{i'}-1 \Big)\le \frac{3}{2}c_{2,d}v_{\max}\tilde{a}^d\frac{K}{nm}\frac{Cc_2nm}{L}.
\end{align*}
Thus, for any $1\le l\le L,$
\begin{align}
\label{aux-lemma2-eq10}
    &\mathbb{P}\Big(\sum_{(i',j')\in \mathcal{J}_{e,l}\times [m_{i'}],(i',j')\neq(i,j)}\mathbf{1}_{\{x_{i',j'}^*\in B_{\tilde{a}(K / nm)^{1 / d}}(x)\}}\ge 3 c_2Cc_{2,d}v_{\max }  \tilde{a}^d K/(2L)\Big)
    \\
    \le& \mathbb{P}\Big(\sum_{(i',j')\in \mathcal{J}_{e,l}\times [m_{i'}],(i',j')\neq(i,j)}\mathbf{1}_{\{x_{i',j'}^*\in B_{\tilde{a}(K / nm)^{1 / d}}(x)\}}\ge \frac{3}{2}\mathbb{P}\Big(\{x_{1,1}^*\in B_{\tilde{a}(K / nm)^{1 / d}}(x)\}\Big)\Big(\sum_{i' \in \mathcal{J}_{e,l}}m_{i'}-1 \Big)\Big)\nonumber.
\end{align}
Now, from Equation (\ref{aux-lemma2-Lchoice}), Inequality (\ref{aux-lemma2-block}), and Assumption \ref{assume:tv functions}{\bf{f}} we have that
$$
\frac{cc_1}{2} c_{1,d}v_{\min}\tilde{a}^d\frac{K}{L}\le c_{1,d}v_{\min}\tilde{a}^d\frac{K}{nm}\frac{cc_1nm-L}{L}\le\mathbb{P}\Big(\{x_{1,1}^*\in B_{\tilde{a}(K / nm)^{1 / d}}(x)\}\Big)\Big(\sum_{i' \in \mathcal{J}_{e,l}}m_{i'}-1 \Big).$$ 
Thus, using Lemma \ref{BinCon} and the definition of $\widetilde{a}$ in the statement of the Lemma,
\begin{align*}
   & \mathbb{P}\Big(\sum_{(i',j')\in \mathcal{J}_{e,l}\times [m_{i'}],(i',j')\neq(i,j)}\mathbf{1}_{\{x_{i',j'}^*\in B_{\tilde{a}(K / nm)^{1 / d}}(x)\}}\ge \frac{3}{2}\mathbb{P}\Big(\{x_{i',j'}^*\in B_{\tilde{a}(K / nm)^{1 / d}}(x)\}\Big)\Big(\sum_{i' \in \mathcal{J}_{e,l}}m_{i'}-1 \Big)\Big)
    \\
    \le&
    \exp(-\frac{1}{12}\mathbb{P}\Big(\{x_{1,1}^*\in B_{\tilde{a}(K / nm)^{1 / d}}(x)\}\Big)\Big(\sum_{t \in \mathcal{J}_{e,l}}m_t-1\Big))
    \\
    \le&\exp(-\frac{cc_1}{12} c_{1,d}v_{\min}\tilde{a}^d\frac{K}{L})=\exp(-\frac{C_{\beta}}{24}\frac{K}{\log(n)}).
\end{align*}
Analogously, the exact same bound can be achieved when the case of $\mathcal{J}_{o,l}.$
Hence, using the above inequality together with Equation (\ref{aux-lemma2-Lchoice}), Inequality (\ref{maxdegree-1}) and Inequality (\ref{aux-lemma2-eq10}) 
\begin{align*}
    \mathbb{P}\Big(\vert B_{i,j}(x)\vert>3 c_2Cc_{2,d}v_{\max }  \tilde{a}^d K\Big)
    \le&\frac{4}{C_{\beta}}\log(n)\exp(-\frac{C_{\beta}}{24}\frac{K}{\log(n)}).
\end{align*}
Therefore,
\begin{align}
\label{maxdegree-2}
    \mathbb{P}\left\{\left|B_{i,j}\right| \geq 3 c_2c_{2,d}Cv_{\max }  \tilde{a}^d K\right\}=&\int_{[0,1]^d} \mathbb{P}\left\{\left|B_{i,j}(x)\right| \geq 3 c_2Cv_{\max }  \tilde{a}^d K\right\} p(x) \mu(d x) \nonumber
    \\
    \leq& \frac{4}{C_{\beta}}\log(n)\exp(-\frac{C_{\beta}}{24}\frac{K}{\log(n)}).
\end{align}
Finally let $d_{i,j}$ be the degree associated with $x_{i,j}$. Using a union bound argument
$$
\begin{aligned}
\mathbb{P}\left(d_{\max} \leq 3 c_2Cc_{2,d}v_{\max }  \tilde{a}^d K\right)
\le&\mathbb{P}\left(\{d_{\max} \leq 3 c_2Cc_{2,d}v_{\max }  \tilde{a}^d K\}\cap\widetilde\Omega\right)+\frac{1}{n}
\\
\le&Cnm\mathbb{P}\left(\{d_{i,j} \leq 3 c_2Cc_{2,d}v_{\max }  \tilde{a}^d K\}\cap\widetilde\Omega\right)+\frac{1}{n}.
\end{aligned}
$$
Moreover,
$$
\begin{aligned}
&\mathbb{P}\left(\{d_{i,j} \leq 3 c_{2,d}c_2Cv_{\max }  \tilde{a}^d K\}\cap\widetilde\Omega\right) 
\\
\geq & \mathbb{P}\left[\Big\{\left|\left\{(i',j') \in[n]\times[m_{i'}], (i',j')\neq (i,j): \vert\vert x_{i,j}- x_{i',j'}\vert\vert_2 \leq R_{K, \max }\right\}\right| \leq 3 v_{\max } Cc_2c_{2, d} \tilde{a}^d K\Big\}\cap \widetilde\Omega\right] \\
\geq & \mathbb{P}\left[\Big\{\left|\left\{(i',j') \in[n]\times[m_{i'}], (i',j')\neq (i,j): \vert\vert x_{i,j}- x_{i',j'}\vert\vert_2 \leq R_{K, \max }\right\}\right| \leq  3 v_{\max } Cc_2c_{2, d} \tilde{a}^d K\Big\}\right. \\
& \left.\cap \Big\{R_{K, \max } \leq \tilde{a}\left(\frac{K}{nm}\right)^{1 / d}\Big\}\cap\widetilde\Omega\right] \\
\geq & \mathbb{P}\left[\Big\{\left|\left\{(i',j') \in[n]\times[m_{i'}], (i',j')\neq (i,j): \vert\vert x_{i,j}- x_{i',j'}\vert\vert_2 \leq \tilde{a}\left(\frac{K}{nm}\right)^{1 / d}\right\}\right| \leq  3 v_{\max } Cc_2c_{2, d} \tilde{a}^d K\Big\}\right. \\
& \left.\cap\Big\{R_{K, \max } \leq \tilde{a}\left(\frac{K}{nm}\right)^{1 / d}\Big\}\cap\widetilde\Omega\right] \\
\geq & 1-\mathbb{P}\Big(\left\{R_{K, \max }>\tilde{a}\left(\frac{K}{nm}\right)^{1 / d}\right\}\cap \widetilde\Omega\Big)-\mathbb{P}\left\{\left|B_{i,j}\right|>3 v_{\max } Cc_2c_{2, d} \tilde{a}^d K\right\},
\end{aligned}
$$
and the claim follows from Inequality (\ref{maxdegree-2}) and the proof of Lemma \ref{dist-bet-Knn}.
\end{proof}

The Lemma \ref{Omega-Oscar-Paper} demonstrated that the count of observations $x_{i,j}$ that fall within each cell $\mathcal{I}_l$ is proportional to $K$, with high probability. We leverage this understanding to derive an upper limit for the MSE in the following result.

\begin{lemma}[{\bf{see Lemma 8 in \cite{madrid2020adaptive}}}]
\label{lemma8-Oscar}
Consider the notation of Lemma \ref{Omega-Oscar-Paper}.  Assume that the event $\Omega$ from Lemma \ref{Omega-Oscar-Paper} intersected with
\begin{equation}
\label{lemma8-oscar-1}
    \min _{r=1,...,N^d}|\mathcal{I}_r| \geq \frac{1}{2}  \widetilde{b}_2 K,
\end{equation}
holds. Then for all $e \in \mathbb{R}^{\sum_{i=1}m_i}$, it holds that
\begin{equation}
    \label{lemma8-oscar-2}
    \left|e^T\left(\theta-\theta_I\right)\right| \leq 2\|e\|_{\infty}\left\|\nabla_{G_K} \theta\right\|_1, \quad \forall \theta \in \mathbb{R}^{\sum_{i=1}m_i}.
\end{equation}
Moreover,
\begin{equation}
     \label{lemma8-oscar-3}
     \left\|D \theta^I\right\|_1 \leq\left\|\nabla_{G_K} \theta\right\|_1, \quad \forall \theta \in \mathbb{R}^{\sum_{i=1}m_i},
\end{equation}
where $D$ is the incidence matrix of a d-dimensional grid graph $G_{\text {grid }}=\left(V_{\text {grid }}, E_{\text {grid }}\right)$ with $V_{\text {grid }}=\left[N^d\right]$, where $\left(r,r^{\prime}\right) \in E_{\text {grid }}$ if and only if
$$
\left\|P_I(x_{i_r,j_r})-P_I(x_{i_{r^{\prime}},j_{r^{\prime}}})\right\|_2=\left\|\mathfrak{c}_{r}-\mathfrak{c}_{r^{\prime}}\right\|_2=\frac{1}{N},
$$
where $x_{i_r,j_r}\in C(\mathfrak{c}_{r})$ and $x_{i_{r^{\prime}},j_{r^{\prime}}}\in C(\mathfrak{c}_{r^{\prime}})$.
\end{lemma}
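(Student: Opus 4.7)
The plan is to prove both inequalities as purely deterministic consequences of the conditional event, leveraging the key properties that (i) within each cell all design points are mutually connected in $G_K$, and (ii) across neighboring cells in the lattice the representative points are connected in $G_K$. I would unpack the construction in Appendix \ref{Lat-section}: for each $(i,j)$, let $r(i,j)$ denote the index of the cell $\mathcal{I}_{r(i,j)}$ containing $x_{i,j}$, and let $(i_r,j_r)$ be the representative satisfying $x_{i_r,j_r}\in C(\mathfrak{c}_r)$ and $(i_r,j_r)=\arg\min_{(i',j')}\|x_{i',j'}-\mathfrak{c}_r\|$. Under the min-occupancy assumption every cell is nonempty, so every representative is well-defined and $\theta^I_r=\theta_{i_r,j_r}$ (not $0$).

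For the first inequality, I would write
\[
e^T(\theta-\theta_I)=\sum_{i=1}^n\sum_{j=1}^{m_i}e_{i,j}\bigl(\theta_{i,j}-\theta_{i_{r(i,j)},j_{r(i,j)}}\bigr),
\]
apply $|e_{i,j}|\le\|e\|_\infty$, and discard the zero terms corresponding to $(i,j)=(i_r,j_r)$. For every remaining $(i,j)$ both $x_{i,j}$ and $x_{i_r,j_r}$ lie in the same cell $C(\mathfrak{c}_r)$, so their enclosing lattice centers coincide (distance $0\le N^{-1}$); by the definition of $\Omega$ the pair $\{x_{i,j},x_{i_r,j_r}\}$ is therefore an edge of $E_K$. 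Each distinct $(i,j)$ in cell $r$ gives a distinct such edge, so summing over all $(i,j)$ yields a sub-sum of $\|\nabla_{G_K}\theta\|_1$, and the factor $2$ absorbs the (harmless) possibility that an edge incident to a representative is counted from both sides.

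For the second inequality, I would bound term-by-term in the sum $\|D\theta^I\|_1=\sum_{(r,r')\in E_{\mathrm{lat}}}|\theta^I_r-\theta^I_{r'}|=\sum_{(r,r')}|\theta_{i_r,j_r}-\theta_{i_{r'},j_{r'}}|$. Because $(r,r')\in E_{\mathrm{lat}}$ forces $\|\mathfrak{c}_r-\mathfrak{c}_{r'}\|_2=N^{-1}$, the event $\Omega$ guarantees that the two representatives $x_{i_r,j_r}\in C(\mathfrak{c}_r)$ and $x_{i_{r'},j_{r'}}\in C(\mathfrak{c}_{r'})$ are adjacent in $G_K$, and this contribution already appears in $\|\nabla_{G_K}\theta\|_1$. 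Distinct lattice edges produce distinct $K$-NN edges because the representatives belong to disjoint cells, so the assignment from $E_{\mathrm{lat}}$ into $E_K$ is injective and the inequality $\|D\theta^I\|_1\le\|\nabla_{G_K}\theta\|_1$ follows.

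The main obstacle, though mostly bookkeeping, is the injectivity/no-overcount step: I need to argue carefully that the collection of $K$-NN edges $\{x_{i,j},x_{i_r,j_r}\}$ used in the first bound, and the collection $\{x_{i_r,j_r},x_{i_{r'},j_{r'}}\}$ used in the second, each injects into $E_K$ once the zero-contribution diagonal cases are removed. The min-occupancy hypothesis is crucial here (otherwise $\theta^I_r=0$ could create spurious jumps at empty cells uncontrolled by $\|\nabla_{G_K}\theta\|_1$), and the event $\Omega$ is crucial to convert lattice adjacency into $K$-NN adjacency; everything else is a triangle inequality and a counting argument.
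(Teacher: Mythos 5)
Your proposal is correct and follows essentially the same route as the paper's proof: for \eqref{lemma8-oscar-2} you split into the diagonal case $(i,j)=(i_r,j_r)$ versus the case where the point and its within-cell representative differ, use that coinciding cell centers satisfy $\left\|\mathfrak{c}_r-\mathfrak{c}_r\right\|_2=0\le N^{-1}$ so the event $\Omega$ turns each such pair into an edge of $E_K$, and then bound by $2\|e\|_\infty\left\|\nabla_{G_K}\theta\right\|_1$; for \eqref{lemma8-oscar-3} you map each lattice edge to the $K$-NN edge joining the two cell representatives (guaranteed by $\Omega$, with the min-occupancy condition \eqref{lemma8-oscar-1} ensuring every representative exists and $\theta^I_r$ is a genuine coordinate of $\theta$ rather than $0$) and note the injectivity of this assignment, exactly as the paper does.
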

\begin{proof}
    We start by introducing the notation $x_{i,j}^{\prime}=P_I\left(x_{i,j}\right)$. To prove (\ref{lemma8-oscar-2}) we proceed in cases.
    \\
{\bf{Case 1}}. If $\left(\theta_I\right)_{1,1}=\theta_{1,1}$. Then clearly $\left|e_{1,1}\right|\left|\theta_{1,1}-\left(\theta_I\right)_{1,1}\right|=0$.
\\
{\bf{Case 2}}. If $\left(\theta_I\right)_{1,1}=\theta_{i,j}$ for $(i,j) \neq (1,1)$. Then
$$
\left\|x_{1,1}^{\prime}-x_{i,j}\right\|_{\infty} \leq\left\|x_{1,1}^{\prime}-x_{1,1}\right\|_{\infty} \leq \frac{1}{2 N} .
$$
Thus, $x_{1,1}^{\prime}=x_{i,j}^{\prime}$, and so $((1, 1),(i,j)) \in E_K$ by the assumption that the event $\Omega$ holds.
Therefore for every $(i,j) \in[n]\times [m_i]$, there exists $(s_{(i,j)},t_{(i,j)}) \in[n]\times[m_i]$ such that $\left(\theta_I\right)_{i,j}=\theta_{{s_{(i,j)},t_{(i,j)}}}$ and either $(i,j)=(s_{(i,j)},t_{(i,j)})$ or $\left((i,j), (s_{(i,j)},t_{(i,j)})\right) \in E_K$. Hence,
\begin{align*}
\left|e^T\left(\theta-\theta_I\right)\right| & \leq \sum_{i=1}^n\sum_{j=1}^{m_i}\left|e_{i,j}\right|\left|\theta_{i,j}-\theta_{{s_{(i,j)},t_{(i,j)}}}\right| \\
& \leq 2\|e\|_{\infty}\left\|\nabla_{G_K} \theta\right\|_1 .
\end{align*}
To verify (\ref{lemma8-oscar-3}), we observe that
$$
\left\|D \theta^I\right\|_1=\sum_{\left(r, r^{\prime}\right) \in E_{\text {grid }}}\left|\theta_{i_r,j_r}-\theta_{i_{r^{\prime}},j_{r^{\prime}}}\right| .
$$
Now, if $\left(r, r^{\prime}\right) \in E_{\text {grid }}$, then $x_{i_r,j_r}$ and $x_{i_{r^{\prime}},j_{r^{\prime}}}$ are in neighboring cells in $I$. This implies that $\left((i_r,j_r), (i_{r^{\prime}},j_{r^{\prime}})\right)$ is an edge in the $K$-NN graph. Thus, every edge in the grid graph $G_{\text {grid }}$ corresponds to an edge in the $K$-NN graph and the mapping is injective. Note that here we have used the fact that (\ref{lemma8-oscar-1}) ensures that every cell has at least one point, provided that $K$ is large enough. The claim is then followed.
\end{proof}

\begin{lemma}[{\bf{see Lemma 9 in \cite{madrid2020adaptive}}}]
\label{lemma9-oscar}
 Consider the notation of Lemma \ref{Omega-Oscar-Paper} and assume that that event $\Omega$ happens. We have that for $e,\theta_1,\theta_2 \in \mathbb{R}^{\sum_{i=1}m_i}$, it holds that
$$
e^T\left(\theta_1-\theta_2\right) \leq 2\|e\|_{\infty}\left(\left\|\nabla_{G_K} \theta_2\right\|_1+\left\|\nabla_{G_K} \theta_2\right\|_1\right)+\varepsilon^T\left({\theta_1}_I-{\theta_2}_I\right).
$$
\end{lemma}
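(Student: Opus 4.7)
The statement reduces almost immediately to Lemma \ref{lemma8-Oscar} via a telescoping decomposition, so my plan is essentially to add and subtract the ``gridded'' versions $\theta_{1,I}$ and $\theta_{2,I}$ of the two signals and apply that lemma twice. Concretely, I would write
\[
e^T(\theta_1-\theta_2) \;=\; e^T(\theta_1-\theta_{1,I}) \;+\; e^T(\theta_{1,I}-\theta_{2,I}) \;+\; e^T(\theta_{2,I}-\theta_2),
\]
which is valid for any $\theta_1,\theta_2\in\mathbb{R}^{\sum_i m_i}$ without any probabilistic assumption.

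Next I would observe that, under the event $\Omega$ from Lemma \ref{Omega-Oscar-Paper} (which is the standing assumption), the hypotheses of Lemma \ref{lemma8-Oscar} are satisfied. Applying that lemma to $\theta_1$ yields $|e^T(\theta_1-\theta_{1,I})|\le 2\|e\|_\infty\|\nabla_{G_K}\theta_1\|_1$, and applying it to $\theta_2$ yields $|e^T(\theta_{2,I}-\theta_2)|\le 2\|e\|_\infty\|\nabla_{G_K}\theta_2\|_1$. Plugging these bounds back into the telescoping identity and retaining the middle term $e^T(\theta_{1,I}-\theta_{2,I})$ as is, the triangle inequality then gives the claimed bound.

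There is no genuine obstacle here; the work is entirely done by Lemma \ref{lemma8-Oscar}, and the role of Lemma \ref{lemma9-oscar} is simply to package that bound in a form that is convenient in the proofs of Theorems \ref{Cest-d>1} and \ref{Pest-d>1}, where $\theta_1$ will typically be taken to be the estimator and $\theta_2$ the target signal $\theta^*$, so that the empirical process $e^T(\theta_1-\theta_2)$ driven by noise can be transferred to the gridded counterpart $e^T(\theta_{1,I}-\theta_{2,I})$ at the cost of a term that is linear in the total-variation seminorms of $\theta_1$ and $\theta_2$ along $G_K$. I note that the statement as written contains two typos (the RHS has $\|\nabla_{G_K}\theta_2\|_1$ twice, where the first occurrence should be $\|\nabla_{G_K}\theta_1\|_1$, and the $\varepsilon$ on the right should be $e$); my proof handles the corrected version, which is what the decomposition naturally produces.
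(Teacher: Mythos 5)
Your proof is correct and is essentially identical to the paper's own argument, which uses exactly the same telescoping decomposition $e^T(\theta_1-\theta_2)=e^T(\theta_1-\theta_{1,I})+e^T(\theta_{1,I}-\theta_{2,I})+e^T(\theta_{2,I}-\theta_2)$ followed by two applications of Lemma \ref{lemma8-Oscar}. Your identification of the two typos in the statement (the repeated $\left\|\nabla_{G_K}\theta_2\right\|_1$, which should read $\left\|\nabla_{G_K}\theta_1\right\|_1$ in one occurrence, and the $\varepsilon$ that should be $e$) is also accurate, and the corrected version you prove is the one the paper actually uses.
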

\begin{proof}
We observe that
$$
e^T\left(\theta_1-\theta_2\right)=e^T\left({\theta_1}-{\theta_1}_I\right)+e^T\left({\theta_1}_I-{\theta_2}_I\right)+e^T\left({\theta_2}_I-\theta_2\right),
$$
and the claim is followed by Lemma \ref{lemma8-Oscar}.
\end{proof}
\begin{lemma}[{\bf{see Lemma 10 in \cite{madrid2020adaptive}}}]
\label{lemma11-oscar-1}
Assume that the event $\Omega$ from Lemma \ref{Omega-Oscar-Paper} intersected with
\begin{equation*}
    \min _{r=1,...,N^d}|I_r| \geq \frac{1}{2}  \widetilde{b}_2 K,
\end{equation*}
holds. Let $\theta_1,\theta_2\in \mathbb{R}^{\sum_{i=1}^{n}m_i}$ and $\{\epsilon_{i,j}\}_{i=1,j=1}^{n,m_i}=\epsilon\in \mathbb{R}^{\sum_{i=1}^n m_i}$ sub-Gaussian with parameter $\frac{C^2}{c^2}\varpi_\epsilon^2$. Then, we have that
$$
\epsilon^T\left({\theta_1}_I-{\theta_2}_I\right) \leq \max _{r=1,...,N^d} |I_r|\left(\|\Pi \tilde{\varepsilon}\|_2\left\|\theta_{1}-\theta_{2}\right\|_2+\left\|\left(D^{+}\right)^T \tilde{\epsilon}\right\|_{\infty}\left[\left\|\nabla_{G_K} \theta_{1}\right\|_1+\left\|\nabla_{G_K} \theta_{2_I}\right\|_1\right]\right),
$$
where $\tilde{\epsilon} \in \mathbb{R}^{N^d}$ is a mean zero vector whose coordinates are sub-Gaussian with the same constants $\frac{C^2}{c^2}\varpi_\epsilon^2$. Here, $\Pi$ is the orthogonal projection onto the span of $\mathbf{1}_{N^d} \in \mathbb{R}^{N^d}$, and $D^{+}$is the pseudoinverse of the incidence matrix $D$ from Lemma \ref{lemma8-Oscar}.
\end{lemma}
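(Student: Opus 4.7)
The plan is to rewrite the bilinear form $\epsilon^T\theta_I$ as an inner product on $\mathbb{R}^{N^d}$, extract the factor $\max_r |I_r|$ through a rescaling that yields a sub-Gaussian vector $\tilde\epsilon$ with the stated parameter, and then split the resulting quantity into a ``mean'' part and a ``total-variation'' part via the Moore--Penrose pseudoinverse of the grid incidence matrix $D$.

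First, using that $\theta_I$ is constant on each cell $\mathcal{I}_r$ with value $\theta^I_r$, I would write
$$
\epsilon^T\theta_I \;=\; \sum_{r=1}^{N^d} \theta^I_r \sum_{(i,j):\,x_{i,j}\in\mathcal{I}_r}\epsilon_{i,j} \;=\; \bar\epsilon^{\,T}\theta^I,
\qquad \bar\epsilon_r := \sum_{(i,j)\in I_r}\epsilon_{i,j},
$$
and then define $\tilde\epsilon_r := \bar\epsilon_r/\max_s|I_s|\in\mathbb{R}^{N^d}$, so that $\epsilon^T(\theta_{1I}-\theta_{2I}) = \max_s|I_s|\cdot \tilde\epsilon^{\,T}(\theta_1^I-\theta_2^I)$. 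Mean-zero-ness of $\tilde\epsilon_r$ follows from Assumption~\ref{assume:tv functions}, and the triangle inequality in the $\psi_2$-Orlicz norm gives $\|\tilde\epsilon_r\|_{\psi_2}\le\max_s|I_s|^{-1}\sum_{(i,j)\in I_r}\|\epsilon_{i,j}\|_{\psi_2}\le \tfrac{C}{c}\varpi_\epsilon$, so $\tilde\epsilon_r$ is sub-Gaussian with parameter $\tfrac{C^2}{c^2}\varpi_\epsilon^2$ (importantly, no independence within cells is required for this step).

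Second, setting $u:=\theta_1^I-\theta_2^I$, I would decompose $\tilde\epsilon=\Pi\tilde\epsilon+(I-\Pi)\tilde\epsilon$ and bound
$$
\tilde\epsilon^{\,T}u \;=\; \langle\Pi\tilde\epsilon,u\rangle + \langle(I-\Pi)\tilde\epsilon,u\rangle \;\le\; \|\Pi\tilde\epsilon\|_2\,\|u\|_2 \;+\; \langle(D^+)^T\tilde\epsilon,Du\rangle,
$$
where I use self-adjointness of $\Pi$ together with the identity $D^+D=I-\Pi$, which holds because the grid graph $G_{\mathrm{grid}}$ is connected (so $\ker D=\mathrm{span}(\mathbf 1_{N^d})$). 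Holder's inequality then gives $|\langle(D^+)^T\tilde\epsilon,Du\rangle|\le \|(D^+)^T\tilde\epsilon\|_\infty\|Du\|_1$.

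Third, I would apply inequality~(\ref{lemma8-oscar-3}) of Lemma~\ref{lemma8-Oscar} (which is precisely why the event in the hypothesis is imposed) to pass from the grid total variation to the $K$-NN total variation: $\|D\theta_1^I\|_1\le\|\nabla_{G_K}\theta_1\|_1$, and, using $(\theta_{2_I})^I=\theta_2^I$ (immediate from Definitions~\ref{def-1-E} and~\ref{def-2-E}), $\|D\theta_2^I\|_1 = \|D(\theta_{2_I})^I\|_1 \le \|\nabla_{G_K}\theta_{2_I}\|_1$. Combined with the trivial subsampling bound $\|u\|_2=\|\theta_1^I-\theta_2^I\|_2\le\|\theta_1-\theta_2\|_2$ and the triangle inequality $\|Du\|_1\le\|D\theta_1^I\|_1+\|D\theta_2^I\|_1$, this yields the claimed inequality. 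The main technical care-points, rather than real obstacles, are (i) verifying $D^+D=I-\Pi$ (connectedness of the grid, ensured by the event and the assumed cell-size lower bound) and (ii) choosing the normalization $\max_s|I_s|^{-1}$ so that $\tilde\epsilon$ simultaneously produces the prefactor $\max_r|I_r|$ and inherits exactly the sub-Gaussian parameter of $\epsilon$.
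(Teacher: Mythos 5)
Your proposal is correct and follows essentially the same route as the paper's proof: rewrite $\epsilon^T(\theta_{1I}-\theta_{2I})=\max_r|I_r|\,\tilde\epsilon^{\,T}(\theta_1^I-\theta_2^I)$ with $\tilde\epsilon_r=\{\max_s|I_s|\}^{-1}\sum_{(i,j)\in I_r}\epsilon_{i,j}$, split via $\Pi$ and $(D^+)^T$ with H\"older and the triangle inequality, then invoke inequality~(\ref{lemma8-oscar-3}) of Lemma~\ref{lemma8-Oscar} and the subsampling bound $\|\theta_1^I-\theta_2^I\|_2\le\|\theta_1-\theta_2\|_2$. You are in fact slightly more careful than the paper at two points it leaves implicit — justifying the sub-Gaussianity of $\tilde\epsilon$ via the $\psi_2$ triangle inequality without any within-cell independence, and verifying $D^+D=I-\Pi$ from connectedness of the grid (which holds deterministically for the lattice, not by virtue of the event) — so no gaps remain.
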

\begin{proof}
  Using the notation from the proof of Lemma \ref{lemma8-Oscar}, we have that
$$
\epsilon^T\left({\theta_1}_I-{\theta_2}_I\right)=\sum_{r=1}^{N^d} \sum_{(i^{\prime},j^{\prime}) \in I_r} \epsilon_{i^{\prime},j^{\prime}}\left({\theta_1}_{i_r,j_r}-{\theta_2}_{i_r,j_r}\right)=\max _{r=1,...,N^d}|I_r|\tilde{\epsilon}^T\left(\theta_1^I-\theta_2 ^I\right),
$$
where
$$
\tilde{\epsilon}_j=\left\{\max _{r=1,...,N^d}|I_r|\right\}^{-1} \sum_{(i^{\prime},j^{\prime}) \in I_r} \epsilon_{i^{\prime},j^{\prime}}.
$$
Clearly the $\tilde{\epsilon}_1, \ldots, \tilde{\epsilon}_{N^d}$ are also sub-Gaussian with the same constants as the original errors $\epsilon_{1,1}, \ldots, \epsilon_{n,m_n}$.
Moreover let $\Pi$ be the orthogonal projection onto the span of $\mathbf{1}_{N^d} \in \mathbb{R}^{N^d}$. By Hölder's inequality and by the triangle inequality,
\begin{align}
\label{lemma11-oscar-3}
\epsilon^T\left({\theta_1}_I-{\theta_2}_I\right) & \leq\left\{\max _{r=1,...N^d}|I_r|\right\}\left\{\|\Pi \tilde{\epsilon}\|_2\left\|\theta_1^I-\theta_2^{ I}\right\|_2+\left\|\left(D^{+}\right)^T \tilde{\epsilon}\right\|_{\infty}\left\|D\left(\theta_1^I-\theta_2^ I\right)\right\|_1\right\} \nonumber\\
& \leq\left\{\max _{r=1,...,N^d}|I_r|\right\}\left\{\|\Pi \tilde{\epsilon}\|_2\left\|\hat{\theta}^I-\theta^{*, I}\right\|_2+\left\|\left(D^{+}\right)^T \tilde{\epsilon}\right\|_{\infty}\left(\left\|D \theta_1^I\right\|_1+\left\|D \theta_2^ I\right\|_1\right)\right\} .
\end{align}
Next we observe that
\begin{align}
\label{lemma11-oscar-2}
\left\|\theta_1^I-\theta_2^ I\right\|_2=\left\{\sum_{r=1}^{N^d}\left({\theta_1}_{i_r,j_r}-{\theta_2}_{i_r,j_r}\right)^2\right\}^{1 / 2} \leq\left\{\sum_{i=1}^n\sum_{j=1}^{m_i}\left({\theta_1}_{i,j}-{\theta_2}_{i,j}\right)^2\right\}^{1 / 2} .
\end{align}
Therefore, combining (\ref{lemma11-oscar-2}), (\ref{lemma11-oscar-3}) and Lemma \ref{lemma8-Oscar} we arrive at
$$
\epsilon^T\left({\theta_1}_I-{\theta_2}_I\right) \leq \max _{r=1,...,N^d }|I_r|\left\{\|\Pi \tilde{\epsilon}\|_2\left\|\theta_1-\theta_2\right\|_2+\left\|\left(D^{+}\right)^T \tilde{\epsilon}\right\|_{\infty}\left(\left\|\nabla_{G_K} \theta_1\right\|_1+\left\|\nabla_{G_K} \theta_2\right\|_1\right)\right\} .
$$
\end{proof}
\begin{lemma}[{\bf{see Lemma 11 in \cite{madrid2020adaptive}}}]
    \label{lemma11-oscar}
Let $\{\epsilon_{i,j}\}_{i=1,j=1}^{n,m_i}=\epsilon\in \mathbb{R}^{\sum_{i=1}^n m_i}$ sub-Gaussian random variables with parameter $\frac{C^2}{c^2}\varpi_\epsilon^2$ and use the notation $\Delta=\theta_1-\theta_2$, for $\theta_1,\theta_2\in\mathbb{R}^{\sum_{i=1}^n m_i}$. For some positive constants $C_3$ and $C_4$  we have that    \begin{align*}
& \sup _{\|\Delta\|_2 \leq \eta,\left\|\nabla_{G_K} \Delta\right\|_1 \leq 2 V_n} \sum_{i=1}^n \sum_{j=1}^{m_i} \Delta_{i,j} \epsilon_{i,j} \leq C_3\varpi_\epsilon K\log^{1/4}(nm)\eta+ C_4\varpi_\epsilon K \log (n m)  V_n, 
\end{align*}
with a probability of at least 
{\small{$$1-2 \frac{1}{\sqrt{\log(nm)}}-\frac{4}{C_{\beta}}C\log(n)nm\exp(-\frac{\widetilde{C}}{24}\frac{K}{\log(n)})-3\frac{1}{n}-C_2\log(n)\frac{nm}{K}\exp\Big(-\frac{C_{\beta}}{24} \widetilde{b}_2\frac{K}{\log(n)}\Big)-\frac{1}{nm},$$}}
where $C_2>0$ is a constant depending on $d$, $C_{\beta}$ and $v_{\max}$. Moreover, $\widetilde{b}_2>0$ is a constant depending on $d, v_{\text{min} }$, and $ v_{\text{max} }$. Furthermore,
$\widetilde{C}=\frac{C_{\beta}c_1c}{c_2C}$ with $c_1 $ and $c_2$ positive constants. Here  $v_{\max}$, $v_{\min}$, $C$, $C_{\beta}$ and $c$ are from Assumption \ref{assume:tv functions}.
\end{lemma}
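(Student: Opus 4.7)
The plan is to combine the deterministic decomposition already supplied by Lemmas \ref{lemma8-Oscar}, \ref{lemma9-oscar}, and \ref{lemma11-oscar-1} with sharp concentration inequalities for three empirical-process quantities attached to the lattice $G_{\text{lat}}$. Concretely, for any $\Delta = \theta_1 - \theta_2$ satisfying the sup-constraints, I would first write
\[
\epsilon^{T}\Delta \;=\; \epsilon^{T}(\Delta-\Delta_I) \;+\; \epsilon^{T}\Delta_I ,
\]
and invoke Lemma \ref{lemma8-Oscar} to bound $|\epsilon^{T}(\Delta-\Delta_I)| \le 2\|\epsilon\|_\infty \|\nabla_{G_K}\Delta\|_1$. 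Then Lemma \ref{lemma11-oscar-1} (specialized with $\theta_{1}=\Delta$, $\theta_{2}=0$, and $\|\Delta^{I}\|_2 \le \|\Delta\|_2$, $\|D\Delta^{I}\|_1 \le \|\nabla_{G_K}\Delta\|_1$) yields
\[
\epsilon^{T}\Delta_I \;\le\; \max_{r} |I_r| \Bigl(\|\Pi\widetilde{\epsilon}\|_2\,\|\Delta\|_2 \;+\; \|(D^{+})^{T}\widetilde{\epsilon}\|_\infty\,\|\nabla_{G_K}\Delta\|_1\Bigr) .
\]
On the event $\Omega$ from Lemma \ref{Omega-Oscar-Paper} intersected with $\min_r |I_r|\ge \tfrac{1}{2}\widetilde b_2 K$ and $\max_r |I_r| \le \tfrac{3}{2}\widetilde b_1 K$, these two displayed inequalities together reduce the proof to producing high-probability bounds on $\|\epsilon\|_\infty$, $\|\Pi\widetilde{\epsilon}\|_2$, and $\|(D^{+})^{T}\widetilde{\epsilon}\|_\infty$.

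Among these three, $\|\epsilon\|_\infty \le C\,\varpi_\epsilon \sqrt{\log(nm)}$ follows from the standard sub-Gaussian maximal inequality and contributes to the $V_n$ term. For $\|(D^{+})^{T}\widetilde{\epsilon}\|_\infty$, I would apply Propositions 4 and 6 of \cite{hutter2016optimal} (cited as Lemmas \ref{H-R-prop4}, \ref{H-R-prop6} in the proof of Theorem \ref{Cest-d>1}), which give $\max_v \|D^{+}_{\cdot,v}\|_2 \lesssim \sqrt{\log N}$, combined with a union bound over the $|E_{\text{lat}}| \le dN^{d}$ edges. With $N \asymp (nm/K)^{1/d}$ and sub-Gaussian tails of the coordinates of $\widetilde{\epsilon}$, this yields $\|(D^{+})^{T}\widetilde{\epsilon}\|_\infty \lesssim \varpi_\epsilon \log(nm)$, which after multiplication by $\max_r|I_r| \le \tfrac{3}{2}\widetilde b_1 K$ produces the $C_4 \varpi_\epsilon K \log(nm) V_n$ term.

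The delicate step, and the main obstacle, is the bound $\|\Pi\widetilde{\epsilon}\|_2 \lesssim \varpi_\epsilon \log^{1/4}(nm)$, which is the source of the unusual $\log^{1/4}$ exponent. Since $\Pi$ is the rank-one projection onto $\mathrm{span}(\mathbf{1}_{N^d})$, $\|\Pi\widetilde{\epsilon}\|_2 = N^{-d/2}\,|\mathbf{1}^{T}\widetilde{\epsilon}|$ is a single scalar sum of the cell averages, but the cell averages inherit the $\beta$-mixing structure across the temporal index $i$. Here I would import the blocking construction of Appendix \ref{BetaMsection}: choose $L \asymp \log(n)$, pass to the independent copies $\{\epsilon_{i,j}^{*}\}$ via Lemmas \ref{Ind-cop-measerrors} at cost $O(1/n)$, decompose into even/odd blocks $\mathcal{J}_{e,l}, \mathcal{J}_{o,l}$, and apply a sub-Gaussian Hoeffding bound to each block conditional on the covariates. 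A union bound over the $2L = O(\log n)$ blocks then produces an extra $\sqrt{\log\log(nm)}$ factor; combined with a careful variance computation for $N^{-d/2}\mathbf{1}^{T}\widetilde{\epsilon}$ (which has variance of order $K^{-1}$ because each $\widetilde{\epsilon}_r$ averages $\Theta(K)$ entries of $\epsilon^{*}$) this delivers the claimed $\log^{1/4}(nm)$ scaling rather than the naive $\log^{1/2}(nm)$.

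Finally, I would collect the probability budget by a union bound: the $\beta$-mixing coupling contributes $O(1/n)$, the maximal cell size event uses Lemma \ref{Omega-Oscar-Paper} with $\delta = 1/2$ and contributes $C_2 \log(n)\,\tfrac{nm}{K}\exp(-C_\beta \widetilde b_2 K/(48\log n))$, the sub-$K$-NN edge-embedding event $\Omega$ contributes $\tfrac{4}{C_\beta}C\log(n)\,nm\,\exp(-\widetilde C K/(24\log n))$, the $\|\epsilon\|_\infty$ bound contributes $1/(nm)$, and the $\|\Pi\widetilde{\epsilon}\|_2$ and $\|(D^{+})^{T}\widetilde{\epsilon}\|_\infty$ bounds contribute $1/\sqrt{\log(nm)}$ and $1/n$ respectively. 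The resulting failure probability matches the statement of the lemma, and setting $\|\Delta\|_2 \le \eta$, $\|\nabla_{G_K}\Delta\|_1 \le 2V_n$ inside the deterministic bound gives exactly $C_3 \varpi_\epsilon K \log^{1/4}(nm)\eta + C_4 \varpi_\epsilon K \log(nm)V_n$.
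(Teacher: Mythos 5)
Your proposal is correct in substance and follows the same skeleton as the paper's proof: the decomposition $\epsilon^T\Delta = \epsilon^T(\Delta-\Delta_I)+\epsilon^T\Delta_I$ handled by Lemmas \ref{lemma8-Oscar}--\ref{lemma11-oscar-1}, the events $\Omega$, $\min_r|I_r|\ge\tfrac12\widetilde b_2K$, $\max_r|I_r|\le\tfrac32\widetilde b_1K$ from Lemma \ref{Omega-Oscar-Paper}, the sub-Gaussian maximal bound $\|\epsilon\|_\infty\lesssim\varpi_\epsilon\sqrt{\log(nm)}$ feeding the $V_n$ term, and the inverse-scaling bound $\rho\lesssim\sqrt{\log N}$ from Lemmas \ref{H-R-prop4} and \ref{H-R-prop6} for $\|(D^{+})^{T}\widetilde{\epsilon}\|_\infty$ — all of this matches the paper line by line. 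Where you diverge is the treatment of $\|\Pi\widetilde{\epsilon}\|_2$. The paper does not block: it simply invokes the tail bounds from the proof of Theorem 2 in \cite{hutter2016optimal} on a single event $E_4$ of probability $1-2/\sqrt{\log(nm)}$, obtaining $\|\Pi\widetilde{\epsilon}\|_2\le \tfrac{2C}{c}\varpi_\epsilon\sqrt{2\log(e\sqrt{\log(nm)})}$, so the $\log^{1/4}(nm)$ in the statement is nothing more than a crude majorization of $\sqrt{\log\log(nm)}$ arising from the deliberately generous confidence level $\delta=1/\sqrt{\log(nm)}$ — it has nothing to do with a variance computation of order $K^{-1}$, as you suggest. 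Your blocking-plus-coupling route for this term is legitimate and would in fact yield a sharper bound (of order $\varpi_\epsilon\sqrt{K}\sqrt{\log\log(nm)}\,\eta$ after multiplying by $\max_r|I_r|\asymp K$, since $\|\Pi\widetilde{\epsilon}\|_2=N^{-d/2}|\mathbf{1}^T\widetilde{\epsilon}|$ concentrates at scale $K^{-1/2}$), but you should correct the attribution: the exponent $1/4$ is produced by the choice of failure probability, not by the $K^{-1}$ variance, which the paper never exploits. One point in your favor worth noting: your worry about temporal dependence of the cell averages $\widetilde{\epsilon}_r$ is well founded, since the paper's direct appeal to the Hütter--Rigollet concentration tacitly treats the coordinates of $\widetilde{\epsilon}$ as independent, which the lemma's hypotheses (sub-Gaussianity alone) do not literally supply across the index $i$; your blocking step repairs this at the cost of an extra $O(\log n)$ union bound that is easily absorbed. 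Minor bookkeeping discrepancies (your $\exp(-C_\beta\widetilde b_2K/(48\log n))$ versus the statement's $24$, and assigning $1/n$ rather than $1/\sqrt{\log(nm)}$ to the $(D^{+})^{T}\widetilde{\epsilon}$ event) are harmless, and the first mirrors an inconsistency already present between the paper's lemma statement and its own usage.
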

\begin{proof} 
 Consider the event $\Omega$ from Lemma \ref{Omega-Oscar-Paper}, and
\begin{equation}
    E_1=\{\min _{r=1,...,N^d}|I_r| \geq \frac{1}{2}  \widetilde{b}_2 K\},
\end{equation}
and
\begin{equation}
    E_2=\{\max _{r=1,...,N^d}|I_r| \leq \frac{3}{2}  \widetilde{b}_1 K\},
\end{equation}
hold. Observe that by Lemma \ref{Omega-Oscar-Paper} these events hold with probability at least,
$$1-\frac{4}{C_{\beta}}C\log(n)nm\exp(-\frac{\widetilde{C}}{24}\frac{K}{\log(n)})-\frac{1}{n},$$
$$1-\frac{4}{C_{\beta}}\log(n)N^d\exp\Big(-\frac{C_{\beta}}{48} \widetilde{b}_2\frac{K}{\log(n)}\Big)-\frac{1}{n},$$
and
$$1-\frac{4}{C_{\beta}}\log(n)N^d\exp\Big(-\frac{C_{\beta}}{48} \widetilde{b}_2\frac{K}{\log(n)}\Big)-\frac{1}{n},$$
respectively, where the notation of Lemma \ref{Omega-Oscar-Paper} is used. Next, by Assumption \ref{assume:tv functions} we have that the event $E_3=\Big\{ \max_{1\le i\le n,1\le j\le m_i}\vert \epsilon_{i,j}\vert\le\frac{C}{c}\varpi_\epsilon \sqrt{2\log(nm)} \Big\}$ happens with probability at least $1-\frac{1}{nm}.$ Further, using the proof of Theorem 2 in \cite{hutter2016optimal} (see page 21 in that paper), if $\widetilde{\epsilon}\in \mathbb{R}^{N^d}$ has entries Sub-Gaussian with parameter $\frac{C^2}{c^2}\varpi_\epsilon^2$,  it yields that the following two inequalities hold simultaneously on an event $E_4$ of probability at least $1-2\frac{1}{\sqrt{\log(nm)}}$,
$$
\left\|\left(D^{\dagger}\right)^{\top} \widetilde{\epsilon}\right\|_{\infty} \leq \frac{C}{c}\varpi_\epsilon \rho \sqrt{2 \log (e N^d \sqrt{\log(nm)} )}, \quad\|\Pi \widetilde\epsilon\|_2 \leq \frac{2C}{c} \varpi_\epsilon \sqrt{2 \log (e \sqrt{\log(nm)})},
$$
where $\rho=\max _{1 \leq v \leq\left|E_{\text {lat }}\right|}\left\|D_{, v}^{\dagger}\right\|_2. $ Additionally, using Lemma \ref{H-R-prop4} and \ref{H-R-prop6} (correspondingly propositions 4 and 6 in \cite{hutter2016optimal}), we have that $\rho\le \widetilde{C(d)} \sqrt{\log (N)},$ where $\widetilde{C(d)}$ is a positive constant depending on $d.$ Hence, 
$$
\left\|\left(D^{\dagger}\right)^{\top} \widetilde{\epsilon}\right\|_{\infty} \leq \frac{C}{c}\varpi_\epsilon C(d) \sqrt{ \log(\frac{nm}{K})\log ( \frac{nm}{K}\sqrt{\log(nm)} )}, \quad\|\Pi \widetilde\epsilon\|_2 \leq  \frac{2C}{c}\varpi_\epsilon \sqrt{2 \log (e \sqrt{\log(nm)})},
$$
with probability at least $1-2\frac{1}{\sqrt{\log(nm)}}$. Here, $C(d)>0$ is a constant depending on $d,$ $C$, and $v_{\max}$. Finally, on the event $\Omega \cap E_1\cap E_2\cap E_3\cap E_4$ the following is satisfied. Let $\theta_1,\theta_2\in \mathbb{R}^{\sum_{i=1}^nm_i}$ such that $\vert\vert\Delta\vert\vert_2\le \eta$ and $\left\|\nabla_{G_K} \Delta\right\|_1 \leq 2 V_n$. First, by Lemma \ref{lemma9-oscar} and $E_3$
\begin{align*}
    \epsilon^T\Delta
    \leq \frac{4C}{c}\varpi_\epsilon \sqrt{2\log(nm)}V_n+\varepsilon^T\left({\theta_1}_I-{\theta_2}_I\right).
\end{align*}
Moreover, by Lemma \ref{lemma11-oscar-1}, $E_2$, and $E_4$
\begin{align*}
    \epsilon^T\Delta
    \leq &\frac{4C}{c}\varpi_\epsilon \sqrt{2\log(nm)}V_n
    + \max _{r=1,...,N^d }|I_r|\left\{\|\Pi \tilde{\epsilon}\|_2\eta+\left\|\left(D^{+}\right)^T \tilde{\epsilon}\right\|_{\infty}\left(4V_n\right)\right\}
    \\
    \le
    &
    \frac{4C}{c}\varpi_\epsilon \sqrt{2\log(nm)}V_n
    \\
    +& \frac{3}{2}  \widetilde{b}_1 K\left\{\frac{C}{c}\varpi_\epsilon C(d)  \log ( \frac{nm}{K}\sqrt{\log(nm)} )\left(4V_n\right)+\frac{2C}{c} \varpi_\epsilon \sqrt{2 \log (e \sqrt{\log(nm)})}\eta\right\}
    \\
    \le&
     C_3\varpi_\epsilon K\log^{1/4}(nm)\eta+ C_4\varpi_\epsilon K \log (n m)  V_n,
\end{align*}
for some positive constants $C_3$ and $C_4$. Here $C_4$ depends on $d$. Thus, 
\begin{align*}
& \sup _{\|\Delta\|_2 \leq \eta,\left\|\nabla_{G_K} \Delta\right\|_1 \leq 2 V_n} \sum_{i=1}^n \sum_{j=1}^{m_i} \Delta_{i,j} \epsilon_{i,j} \le C_3\varpi_\epsilon K\log^{1/4}(nm)\eta+ C_4\varpi_\epsilon K \log (n m)  V_n.
\end{align*}
\end{proof}
The following lemma controls $\left\|\nabla_G \theta^*\right\|_1$. For the grid graph, $G$, considered in \cite{sadhanala2016total}, $\left\|\nabla_G \theta^*\right\|_1 \asymp n^{1-1 / d}$. Subsequently, \cite{madrid2020adaptive} establish that 
$\left\|\nabla_G \theta^*\right\|_1 \asymp n^{1-1 / d}$ when $G =G_K$, provided independence and Lipschitz conditions are met. Notably, our subsequent result broadens the scope of these findings by encompassing the spatio-temporal dependence assumption articulated in Assumption \ref{assume:tv functions}.
 \begin{lemma}\label{Lemma-nabla}
 Let Assumption \ref{assume:tv functions} hold. Consider $K=\log^{2+l}(nm)$ for some $l>0$. It follows that
     \begin{equation}
    \label{eqn:tv_bound-1}
    \| \nabla_{G_K}\theta^*  \|_1 \,=\, O_{\mathbb{P}}\left((nm)^{1-1 / d}\log^{2+l+\frac{2+l}{d}}(nm) \right).
\end{equation} 
 \end{lemma}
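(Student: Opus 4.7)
The plan is to adapt the strategy used to prove the analogous bound in Theorem 1 of \cite{madrid2020adaptive} for the independent setting, while handling the temporal-spatial dependence via the blocking technique developed in Appendix \ref{BetaMsection}. The total variation can be written as
\[
\|\nabla_{G_K}\theta^*\|_1 = \sum_{\{(i_1,j_1),(i_2,j_2)\}\in E_K} |f^*(x_{i_1,j_1}) - f^*(x_{i_2,j_2})|,
\]
and the edges will be split into those whose endpoints lie in the same Lipschitz component of $\Omega_\varepsilon\setminus B_\varepsilon(\mathcal{S})$ (for suitable $\varepsilon$) and those whose endpoints lie near the discontinuity set $\mathcal{S}$ from Definition \ref{Piecewise-def}.

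First, I would invoke Lemmas \ref{dist-bet-Knn} and \ref{Omega-Oscar-Paper}, which already incorporate the $\beta$-mixing dependence, to obtain, with probability tending to one, the bounds $R_{K,\max}\lesssim (K/(nm))^{1/d}$ and the maximum degree $d_{\max}\lesssim K$ in the symmetric $K$-NN graph. This immediately yields $|E_K|\lesssim nm\cdot K$. For any edge $\{(i_1,j_1),(i_2,j_2)\}$ whose endpoints both lie in the same Lipschitz component, the piecewise Lipschitz property gives $|f^*(x_{i_1,j_1}) - f^*(x_{i_2,j_2})|\le L_f\|x_{i_1,j_1}-x_{i_2,j_2}\|_2\le L_f R_{K,\max}$, so the contribution from these ``good'' edges is bounded by $L_f\, |E_K|\cdot R_{K,\max} \lesssim K\cdot(nm)\cdot (K/(nm))^{1/d} = K^{1+1/d}(nm)^{1-1/d}$.

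For the remaining ``bad'' edges, I would use the observation that both endpoints must lie within distance $2R_{K,\max}$ of $\mathcal{S}$, and bound
\[
N_{\mathcal{S}} := \#\{(i,j)\in[n]\times[m_i] : x_{i,j}\in B_{2R_{K,\max}}(\mathcal{S})\cup([0,1]^d\setminus\Omega_{2R_{K,\max}})\}.
\]
Each such point contributes to at most $d_{\max}\lesssim K$ edges, each of value bounded by $2\|f^*\|_\infty\le 2C_{f^*}$, so the bad edge contribution is $\lesssim N_{\mathcal{S}}\cdot K$. The property $\mathrm{Vol}(B_\varepsilon(\mathcal{S})\cup([0,1]^d\setminus\Omega_\varepsilon))\le C_\mathcal{S}\varepsilon$ combined with $v_{\max}$-boundedness of the density gives $\mathbb{E}[\mathbf{1}\{x_{i,j}\in B_{\varepsilon}(\mathcal{S})\cup([0,1]^d\setminus\Omega_\varepsilon)\}]\le v_{\max}C_\mathcal{S}\varepsilon$. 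The key step, and the main obstacle, is to upgrade this to a high-probability bound on $N_{\mathcal{S}}$ despite the $\beta$-mixing dependence across $i$. Following the strategy used throughout the paper (and illustrated in, for instance, Step 1A of the proof of Theorem \ref{Cest-d>1}), I would choose block size $L\asymp\log(n)$, invoke Lemma \ref{Ind-cop} to generate independent copies $\{x_{i,j}^*\}$ on the even and odd blocks, show that the coupling event $\Omega_x$ holds with probability at least $1-1/n$, and then apply a Bernstein-type concentration bound (as in Lemma \ref{BinCon}) separately on each of the $O(\log n)$ independent block-sums to obtain $N_{\mathcal{S}}\lesssim nm\cdot R_{K,\max}\cdot \log(nm)$ with high probability.

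Combining the two contributions with $R_{K,\max}\lesssim (K/(nm))^{1/d}$ and $K=\log^{2+l}(nm)$ gives
\[
\|\nabla_{G_K}\theta^*\|_1 \;\lesssim\; K^{1+1/d}(nm)^{1-1/d}\cdot \mathrm{polylog}(nm) \;\lesssim\; (nm)^{1-1/d}\log^{2+l+(2+l)/d}(nm),
\]
as required. The argument in the bounded-variation plus generalized piecewise Lipschitz case proceeds identically, using the relevant variant of the volume bound on thickenings of the discontinuity set described in Appendix \ref{plsection}. The most delicate point is exactly the concentration step for $N_\mathcal{S}$ under dependence: all other pieces are deterministic consequences of the already-established Lemmas \ref{dist-bet-Knn} and \ref{Omega-Oscar-Paper}, which have themselves absorbed the dependence into logarithmic factors.
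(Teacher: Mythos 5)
Your proposal is correct and follows essentially the same route as the paper's proof: the paper likewise splits $\|\nabla_{G_K}\theta^*\|_1$ into edges with both endpoints in $J=\{(i,j): x_{i,j}\in\Omega_{4\varepsilon}\setminus B_{4\varepsilon}(\mathcal{S})\}$ (handled via the piecewise Lipschitz property together with $R_{K,\max}\lesssim (K/(nm))^{1/d}$ from Lemma \ref{dist-bet-Knn}) and edges touching the thickened discontinuity set, whose count $\widetilde{nm}$ is controlled exactly as you propose, by the $\beta$-mixing blocking construction of Appendix \ref{BetaMsection} with $L\asymp\log(n)$, the coupling of Lemma \ref{Ind-cop}, and the binomial concentration of Lemma \ref{BinCon} applied blockwise, with the degree bound $d_{\max}\lesssim K$ from Lemma \ref{Omega-Oscar-Paper} converting the point count into an edge-sum bound. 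Your final polylog accounting matches the paper's (including the same extra $\log$ factor from the blockwise union bound that the paper absorbs into the stated exponent), so there is no substantive difference between the two arguments.
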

 \begin{proof}
Let 
\begin{equation}
\label{varepsilon-lastlemma}
\varepsilon=\Big(\left\lceil\frac{3 \sqrt{d}  (nm)^{1 / d}}{ aK^{1 / d}}\right\rceil\Big)^{-1}=N^{-1},
\end{equation}
with $N$ as in Lemma \ref{Omega-Oscar-Paper}. Here $a=1 /\left(  4Cc_{2,d}c_2v_{\text{max} }\right)^{1 / d}$ for a positive constants $c_2$ and, $v_{\max}$ and $C$ are from Assumption \ref{assume:tv functions}. Moreover $c_{2,d}$ is a positive constant that depends on $d.$ For the rest of the proof we use the notation of Appendix \ref{plsection} and Appendix \ref{additionaNot}. Consider the event
$$
\Lambda_\varepsilon=\left\{x_{1,1} \in B_{4 \varepsilon}(\mathcal{S}) \cup\left[(0,1)^d \backslash \Omega_{4 \varepsilon}\right]\right\},
$$
and note that
\begin{align}
\label{equation1-nabla}
\mathbb{P}\left(\Lambda_\varepsilon\right) & =\int_{B_{4 \varepsilon}(\mathcal{S}) \cup(0,1)^d \backslash \Omega_{4 \epsilon}} v(z) \mu(d z)\nonumber \\
& \leq c_{2,d}v_{\max } Vol\left[\left\{B_{4 \varepsilon}(\mathcal{S}) \cup(0,1)^d \backslash \Omega_{4 \varepsilon}\right\}\right] \nonumber\\
& \leq c_{2,d}v_{\max } C_{\mathcal{S}} (4 \varepsilon),
\end{align}
for a positive constant $c_{2,d}$, where the last inequality follows from the Definition \ref{Piecewise-def}. Denote by,
$$
J=\left\{(i,j) \in[n]\times[m_i]: x_{i,j} \in \Omega_{4 \varepsilon} \backslash B_{4 \varepsilon}(\mathcal{S})\right\}.
$$
Observe that
\begin{align}
&\sum_{((i,j), (i',j')) \in E_K}\left|\theta_{i,j}^*-\theta_{i',j'}^*\right| \nonumber
\\
& =\sum_{((i,j),(i', j')) \in E_K, (i,j), (i',j') \in J}\left|\theta_{i,j}^*-\theta_{i',j'}^*\right|+\sum_{((i,j), (i',j')) \in E_K, (i,j) \notin J \text { or } (i',j') \notin J}\left|\theta_{i,j}^*-\theta_{i',j'}^*\right|\nonumber \\
& \leq \sum_{((i,j), (i',j')) \in E_K (i,j), (i',j') \in J}\left|f^*\left(x_{i,j}\right)-f^*\left(x_{i',j'}\right)\right|+2\left\|f^*\right\|_{\infty} K \tau_d \widetilde{nm},\label{equation0-nabla}
\end{align}
with $\tau_d$ a positive constant that depends on $d$ and 
\begin{equation}
\label{nmtilde}
    \widetilde{nm}=|\{(i,j)\in [n]\times[m_i]\} \backslash J|.
\end{equation}
The inequality happens with high probability as shown in Lemma \ref{Omega-Oscar-Paper}. In fact Lemma \ref{Omega-Oscar-Paper} provides the following inequality $$d_{\max}\le 3Cc_2c_{2,d} v_{\max }  \tilde{a}^d K,$$ 
with probability at least $$1-Cnm\left\{\frac{4}{C_{\beta}}\log(n)\exp(-\frac{C_{\beta}}{24}\frac{K}{\log(n)})+\frac{4}{C_{\beta}}C\log(n)nm\exp(-\frac{C_{\beta}}{24}\frac{K}{\log(n)})\right\}-\frac{1}{n},$$ where $\tilde{a}=1 /\left(cc_1c_{1,d}v_{\text{min} }\right)^{1 / d}$ and $c_1$ is positive constant. Moreover, $c_{1,d}>0$ is a constant that depends on $d,$ and the constants $c$ and $v_{\min}$ are from Assumption \ref{assume:tv functions}. Hence, the subsequent observation validates Inequality (\ref{equation0-nabla}). Let
 $x_{i,j} \notin \Omega_\epsilon \backslash B_\epsilon(\mathcal{S})$. By triangle inequality
$$
\left|f^*\left(x_{i,j}\right)-f^*\left(x_{i',j'}\right)\right| \leq 2\|f^*\|_{\infty} .
$$
Next we bound $\widetilde{nm}$. To this end, we make use of the event $\Omega=\bigcap_{i=1}^n\{x_{i,j}=x_{i,j}^*,\ \forall \ 1\le j\le m_i\}$, where $x_{i,j}^*$ are defined  in Appendix \ref{BetaMsection}. Here we consider \begin{equation}
\label{Lchoice-lastlemma}
    L=\frac{2}{C_{\beta}}\log(n)
\end{equation}
and
\begin{equation}
\label{Block-lastlemma}
\frac{c_1n}{L}\le |\mathcal{J}_{e,l}|, |\mathcal{J}_{o,l}|\le \frac{c_2n}{L}
\end{equation}
with $c_1$ and $c_2$ positive constants and $C_{\beta}$ from Assumption  \ref{assume:tv functions}, for $l=1,...,L.$ Then observe that
\begin{align}
\label{aux-eq1-lastlemma}
    &\mathbb{P}\Big(\widetilde{nm}\ge\frac{3}{2}(\frac{2}{C_{\beta}}\log(n))\Big(\sum_{i=1}^nm_{i}\Big)c_{2,d}v_{\max}C_{\mathcal{S}}4\varepsilon\Big)\nonumber
    \\
=&\mathbb{P}\Big(\sum_{i=1}^n\sum_{j=1}^{m_i}\mathbf{1}_{\{x_{i,j}\in B_{4 \varepsilon}(\mathcal{S}) \cup\left[(0,1)^d \backslash \Omega_{4 \varepsilon}\right]\}}\ge\frac{3}{2}(\frac{2}{C_{\beta}}\log(n))\Big(\sum_{i=1}^nm_{i}\Big)c_{2,d}v_{\max}C_{\mathcal{S}}4\varepsilon\Big)\nonumber
    \\
    \le& \mathbb{P}\Big(\Big\{\sum_{i=1}^n\sum_{j=1}^{m_i}\mathbf{1}_{\{x_{i,j}\in B_{4 \varepsilon}(\mathcal{S}) \cup\left[(0,1)^d \backslash \Omega_{4 \varepsilon}\right]\}}\ge\frac{3}{2}(\frac{2}{C_{\beta}}\log(n))\Big(\sum_{i=1}^nm_{i}\Big)c_{2,d}v_{\max}C_{\mathcal{S}}4\varepsilon\Big\}\cap \Omega \Big)+\frac{1}{n},
\end{align}
where the last inequality is followed by the exponential decay of the $\beta$-mixing coefficients, see Assumption \ref{assume:tv functions}.
Using a union bound argument and the definition of $L$ in Equation (\ref{Lchoice-lastlemma}),
\begin{align}
\label{aux-eq2-lastlemma}
    &\mathbb{P}\Big(\Big\{\sum_{i=1}^n\sum_{j=1}^{m_i}\mathbf{1}_{\{x_{i,j}\in B_{4 \varepsilon}(\mathcal{S}) \cup\left[(0,1)^d \backslash \Omega_{4 \varepsilon}\right]\}}\ge\frac{3}{2}(\frac{2}{C_{\beta}}\log(n))\Big(\sum_{i=1}^nm_{i}\Big)c_{2,d}v_{\max}C_{\mathcal{S}}4\varepsilon\Big\}\cap \Omega \Big)\nonumber
    \\
    \le&\sum_{l=1}^L\mathbb{P}\Big(\Big\{\sum_{i\in \mathcal{J}_{e,l}}\sum_{j=1}^{m_i}\mathbf{1}_{\{x_{i,j}^*\in B_{4 \varepsilon}(\mathcal{S}) \cup\left[(0,1)^d \backslash \Omega_{4 \varepsilon}\right]\}}\ge\frac{3}{2}\Big(\sum_{i\in\mathcal{J}_{e,l}}m_{i}\Big)c_{2,d}v_{\max}C_{\mathcal{S}}4\varepsilon\Big\} \Big)\nonumber
    \\
    +&\sum_{l=1}^L\mathbb{P}\Big(\Big\{\sum_{i\in \mathcal{J}_{o,l}}\sum_{j=1}^{m_i}\mathbf{1}_{\{x_{i,j}^*\in B_{4 \varepsilon}(\mathcal{S}) \cup\left[(0,1)^d \backslash \Omega_{4 \varepsilon}\right]\}}\ge\frac{3}{2}\Big(\sum_{i\in\mathcal{J}_{e,l}}m_{i}\Big)c_{2,d}v_{\max}C_{\mathcal{S}}4\varepsilon\Big\} \Big).
\end{align}
By Inequality (\ref{equation1-nabla}),
\begin{align}
\label{aux-eq3-lastlemma}
    &\mathbb{P}\Big(\Big\{\sum_{i\in \mathcal{J}_{e,l}}\sum_{j=1}^{m_i}\mathbf{1}_{\{x_{i,j}^*\in B_{4 \varepsilon}(\mathcal{S}) \cup\left[(0,1)^d \backslash \Omega_{4 \varepsilon}\right]\}}\ge\frac{3}{2}\Big(\sum_{i\in\mathcal{J}_{e,l}}m_{i}\Big)c_{2,d}v_{\max}C_{\mathcal{S}}4\varepsilon\Big\} \Big)\nonumber
    \\
    \le&
    \mathbb{P}\Big(\Big\{\sum_{i\in \mathcal{J}_{e,l}}\sum_{j=1}^{m_i}\mathbf{1}_{\{x_{i,j}^*\in B_{4 \varepsilon}(\mathcal{S}) \cup\left[(0,1)^d \backslash \Omega_{4 \varepsilon}\right]\}}\ge\frac{3}{2}\Big(\sum_{i\in\mathcal{J}_{e,l}}m_{i}\Big)\mathbb{P}\left(\Lambda_\varepsilon\right)\Big\} \Big).
\end{align}
Observe that
$$\sum_{i\in \mathcal{J}_{e,l}}\sum_{j=1}^{m_i}\mathbf{1}_{\{x_{i,j}^*\in B_{4 \varepsilon}(\mathcal{S}) \cup\left[(0,1)^d \backslash \Omega_{4 \varepsilon}\right]}\sim\operatorname{Binomial}\left(\sum_{i \in \mathcal{J}_{e,l}}m_{i}, \mathbb{P}\left(\Lambda_\varepsilon\right)\right).$$
Therefore if $(nm)_{e,l}^{\prime}\sim\operatorname{Binomial}\left(\sum_{i \in \mathcal{J}_{e,l}}m_{i},c_{2,d}v_{\max}C_{\mathcal{S}}(4\varepsilon) \right)$
by Lemma \ref{BinCon} and Inequality (\ref{equation1-nabla}), it follows that 
\begin{align*}
    &\mathbb{P}\Big(\Big\{\sum_{i\in \mathcal{J}_{e,l}}\sum_{j=1}^{m_i}\mathbf{1}_{\{x_{i,j}^*\in B_{4 \varepsilon}(\mathcal{S}) \cup\left[(0,1)^d \backslash \Omega_{4 \varepsilon}\right]\}}\ge\frac{3}{2}\Big(\sum_{i\in\mathcal{J}_{e,l}}m_{i}\Big)\mathbb{P}\left(\Lambda_\varepsilon\right)\Big\} \Big)
    \\
    &\le \mathbb{P}\Big((nm)_{e,l}^{\prime}\ge\frac{3}{2}\Big(\sum_{i\in\mathcal{J}_{e,l}}m_{i}\Big)c_{2,d}v_{\max}C_{\mathcal{S}}(4\varepsilon)\Big)
    \\
    &\le\exp\left(-\frac{1}{12}\Big(\sum_{i\in\mathcal{J}_{e,l}}m_{i}\Big)c_{2,d}v_{\max}C_{\mathcal{S}}(4\varepsilon)\right)
    \le\exp\left(-\frac{4}{12}\Big(\frac{cc_1nm}{L}\Big)c_{2,d}v_{\max}C_{\mathcal{S}}\Big(\frac{aK^{1 / d}}{3 \sqrt{d}  (nm)^{1 / d} }\Big)\right).
\end{align*}
where the last inequality due to Inequality (\ref{Block-lastlemma}), Assumption \ref{assume:tv functions}{\bf{f}}, and the choice of $\varepsilon$ in Equation (\ref{varepsilon-lastlemma}).
Thus
\begin{align}
\label{aux-eq4-lastlemma}
    &\mathbb{P}\Big(\Big\{\sum_{i\in \mathcal{J}_{e,l}}\sum_{j=1}^{m_i}\mathbf{1}_{\{x_{i,j}^*\in B_{4 \varepsilon}(\mathcal{S}) \cup\left[(0,1)^d \backslash \Omega_{4 \varepsilon}\right]\}}\ge\frac{3}{2}\Big(\sum_{i\in\mathcal{J}_{e,l}}m_{i}\Big)\mathbb{P}\left(\Lambda_\varepsilon\right)\Big\} \Big)
    \nonumber
    \\
    \le&\exp\left(-\frac{C_{\mathcal{S}}cc_1ac_{2,d}v_{\max}}{9\sqrt{d}}\Big(\frac{K^{1 / d}nm^{1-1/d}}{L}\Big)\right).
\end{align}
Similarly
\begin{align}
\label{aux-eq5-lastlemma}
    &\mathbb{P}\Big(\Big\{\sum_{i\in \mathcal{J}_{o,l}}\sum_{j=1}^{m_i}\mathbf{1}_{\{x_{i,j}^*\in B_{4 \varepsilon}(\mathcal{S}) \cup\left[(0,1)^d \backslash \Omega_{4 \varepsilon}\right]\}}\ge\frac{3}{2}\Big(\sum_{i\in\mathcal{J}_{e,l}}m_{i}\Big)\mathbb{P}\left(\Lambda_\varepsilon\right)\Big\} \Big)\nonumber
    \\
    \le&\exp\left(-\frac{C_{\mathcal{S}}cc_1ac_{2,d}v_{\max}}{9\sqrt{d}}\Big(\frac{K^{1 / d}nm^{1-1/d}}{L}\Big)\right).
\end{align}
By our choice of $L$ in Equation (\ref{Lchoice-lastlemma}), Inequalities (\ref{aux-eq1-lastlemma}), (\ref{aux-eq2-lastlemma}), (\ref{aux-eq3-lastlemma}), (\ref{aux-eq4-lastlemma}) and (\ref{aux-eq5-lastlemma})
\begin{align}
\label{equation2-nabla}
    &\mathbb{P}\Big(\widetilde{nm}\ge\frac{3}{2}(\frac{2}{C_{\beta}}\log(n))\Big(\sum_{i=1}^nm_{i}\Big)c_{2,d}v_{\max}C_{\mathcal{S}}(4\varepsilon)\Big)\nonumber
    \\
   \le& \frac{4}{C_\beta}\log(n)\exp\left(-\frac{C_{\beta}C_{\mathcal{S}}cc_1ac_{2,d}v_{\max}}{18\sqrt{d}}\Big(\frac{K^{1 / d}nm^{1-1/d}}{\log(n)}\Big)\right)+\frac{1}{n}.
\end{align}
Using (\ref{equation2-nabla}), Assumption \ref{assume:tv functions}{\bf{b}} and {\bf{f}} and the choice of $\varepsilon$ in Equation (\ref{varepsilon-lastlemma}) we obtain that
\begin{equation}
\label{equation3-nabla}
    2\left\|f^*\right\|_{\infty} K \tau_d \widetilde{nm}\le \frac{8}{C_{\beta}\sqrt{d}}a\tau_{d}c_{2,d}v_{\max}C_{\mathcal{S}}C_{f^*}C\log(n)nm^{1-1/d}K^{1+1/d},
\end{equation}
with high probability.
Therefore it remains to bound $$\sum_{((i,j), (i',j')) \in E_K (i,j), (i',j') \in J}\left|f^*\left(x_{i,j}\right)-f^*\left(x_{i',j'}\right)\right|,$$ the first term in the right hand side of Inequality (\ref{equation0-nabla}). To that end, we notice that if $((i,j), (i',j')) \in E_K$, then we observe that
$$
\left\|x_{i,j}-x_{i',j'}\right\|_2 \leq  R_{K, \max } \leq \frac{3\sqrt{d}\widetilde{a}}{a}\varepsilon,
$$
where the last inequality happens with probability at least $1-\frac{4}{C_{\beta}}C\log(n)nm\exp(-\frac{C_{\beta}}{24}\frac{K}{\log(n)})-\frac{1}{n}$, by Lemma \ref{dist-bet-Knn} and the choice of $\varepsilon$ in Equation (\ref{varepsilon-lastlemma}).
Since $f^*$ is 
piecewise Lipschitz (see Definition \ref{Piecewise-def}) and the choice of $\varepsilon$ in Equation (\ref{varepsilon-lastlemma}), 
\begin{align}
\label{equation4-nabla}
    \sum_{((i,j), (i',j')) \in E_K (i,j), (i',j') \in J}\left|f^*\left(x_{i,j}\right)-f^*\left(x_{i',j'}\right)\right|\le&   \widetilde{a} \tau_{d} L_{f^*}Cnm^{1-1/d}K^{1+1/d}.
\end{align}
Thus, from Inequality (\ref{equation0-nabla}), (\ref{equation3-nabla}) and (\ref{equation4-nabla}), we conclude that
\begin{equation*}
    \sum_{((i,j), (i',j')) \in E_K}\left|\theta_{i,j}^*-\theta_{i',j'}^*\right|=O_{\mathbb{P}}\left((nm)^{1-1 / d}\log^{2+l+\frac{2+l}{d}}(nm) \right).
\end{equation*}
 \end{proof}

\newpage

\section{Fixed design case}

For the data $\{(x_{i,j},y_{i,j})\}_{i=1,j=1}^{n,m} \subset [0,1] \times \mathbb{R}$  as in the main paper, we now consider the model
\begin{equation}
    \label{eqn:new_model}
    y_{i,j}\,=\,f^*(x_{i,j}) \,+\, \epsilon_{i,j} + \delta_i(x_{i,j})\\
\end{equation}
where the covariates $\{x_{i,j}\}_{i=1,j=1}^{n,m}$   are fixed and, following   \cite{cai2011optimal}, we assume that 
\begin{equation}
    \label{eqn:fixed_design}
    x_{1,j} \,=\, x_{2,j}\,=\, \ldots\,=\, x_{n,j},\,\,\,\,\forall j = 1,\ldots, m.
\end{equation}

Before presenting the Trend Filtering estimator for this fixed design case, we recall some  notation on discrete  total variation. For a vector $\theta \in  \mathbb{R}^m,$ define $D^{(0)}(\theta) = \theta, D^{(1)}(\theta) = (\theta_2 - \theta_1,\dots,\theta_m - \theta_{m - 1})^{\top}$ and $D^{(k)}(\theta)$, for $k \geq 2$, is recursively defined as $D^{(k)}(\theta) = D^{(1)}(D^{(k - 1)}(\theta))$, where $D^{(k)}(\theta) \in  \mathbb{R}^{m - k}$.  With this notation, for $k\geq 1$,  the $k$th order total variation of a vector $\theta$  is given as 
\begin{equation}
	\mathrm{TV}^{(k)}(\theta) = m^{k - 1} \|D^{(k)}(\theta)\|_{1}.
\end{equation}
Then, we define the estimator
\begin{equation}
    \label{eqn:fixed_design_est}
    \hat{\theta}\,=\,  \underset{ \theta \in  \mathbb{R}^m  }{\arg \min }\,\left\{ \frac{1}{nm}\sum_{i=1}^n\sum_{j=1}^m  (y_{i,j}  - \theta_{j} )^2  \,+\,\lambda  \mathrm{TV}^{(k)}(\theta)   \right\}
\end{equation}
for a tuning parameter $\lambda>0$, and then set $\hat{f}(x_{i,j} ) = \hat{\theta}_j$ for all $i,j$.

We now present our main result for the fixed design setting.

\begin{theorem}
    \label{thm:10}
    Let $\theta^* \in \mathbb{R}^m$ be given as $\theta^*_j =  f^*(x_{1,j})$ for $j=1,\ldots, m$.
    Suppose that $\{x_{i,j}\}_{i=1,j=1}^{n,m}$   are fixed, (\ref{eqn:fixed_design}) holds,  the $\{\epsilon_{i,j}\}_{i=1,j=1}^{n,m}$ are independent, the $\{\delta_{i}\}_{i=1}^{n}$ are independent, and $ \sigma_\delta$ is independent of  $\sigma_\epsilon$. 
    Then
    \[
       \frac{1}{m}\sum_{j=1}^m (  \theta_j^* -   \hat{\theta}_j )^2 \,=\, O_{\mathbb{P}}\left( \frac{(V^*)^{1/(2k+1)}}{ (nm)^{2k/(2k+1)}    }  \,+\,\frac{1}{n}  \right),
    \]
    where $V^* :=  \mathrm{TV}^{(k)}(\theta^*)$, provided that 
    \begin{equation}
        \label{eqn:815}
        \underset{n \rightarrow \infty}{\lim }\, \frac{\sqrt{\log m }}{ m^{1/(4k+2)} n^{(k+1)/(2k+1)}(V^*)^{1/(2k+1)}  } \,\rightarrow \,0, 
    \end{equation}
    and that $\lambda$ is chosen to satisfy 
    \[
     \lambda \,\asymp\, \frac{ m^{1/(2k+1)} n^{-2k/(2k+1)}(V^*)^{2/(2k+1)} \,+\, mn^{-1}  }{ m V^*  }.
    \]
\end{theorem}


\begin{remark}
    \label{rem_fd}
   We can see from Theorem \ref{thm:10} that the Trend Filtering estimator in the fixed design case attains the rate 
    \begin{equation}
        \label{eqn:rate_fd_est}
        \frac{(V^*)^{1/(2k+1)}}{ (nm)^{2k/(2k+1)}    }  \,+\,\frac{1}{n},
    \end{equation}
   in terms of the mean squared error.  This, in the canonical scaling $V^* = O(1)$, matches up to logarithm factors,  the  upper bounds obtained in the random design case as stated in Theorems \ref{thm:main tv} and \ref{Penalized}.  These upper bound is different to that in Theorem 2.2 from \cite{cai2011optimal} which considered the $\ell_2$ rather than the mean squared error as in Theorem \ref{thm:10} and also it considered a different function class. 

   Finally, while Theorem \ref{thm:10} requires the data to be independent, this can be relaxed to allow temporal dependence using similar arguments to those in  Theorems \ref{thm:main tv} and \ref{Penalized}.

\end{remark}
\begin{proof}
Let us write   $\delta_{i,j} := \delta_i(x_{i,j})$. We see that the basic inequality,  for any $\theta \in \Lambda \,:=\, \{ s \hat{\theta}+(1-s)\theta^* \,:\, s\in [0,1]  \}$,
\[
 \frac{1}{nm}\sum_{i=1}^n\sum_{j=1}^m  (y_{i,j}  - \theta_{j} )^2  \,+\,\lambda  \mathrm{TV}^{(k)}(\theta) \,\leq\,   \frac{1}{nm}\sum_{i=1}^n\sum_{j=1}^m  (y_{i,j}  - \theta_{j}^* )^2  \,+\,\lambda  \mathrm{TV}^{(k)}(\theta^*)
\]
 which implies
 \[
  \begin{array}{lll}
    \displaystyle    \frac{1}{m} \sum_{j=1}^m (\theta_j -  \theta^*_j)^2  & \leq    & \displaystyle  \frac{2}{nm}\sum_{i=1}^n\sum_{j=1}^m (\theta_{j}  - \theta_j^*   )( \epsilon_{i,j} + \delta_{i,j}  )  \,+\,  \lambda (  \mathrm{TV}^{(k)}(\theta^*) -   \mathrm{TV}^{(k)}(\theta)   )\\
       &= & \displaystyle  \frac{2}{m}\sum_{j=1}^m ( \theta_j - \theta_j^* )\left( \frac{1}{n} \sum_{j=1}^n \epsilon_{i,j}  \right) \,+\, \frac{2}{m}\sum_{j=1}^m ( \theta_j - \theta_j^* )\left( \frac{1}{n} \sum_{j=1}^n \delta_{i,j}  \right)  \,+\,\\
       &&\displaystyle \lambda (  \mathrm{TV}^{(k)}(\theta^*) -   \mathrm{TV}^{(k)}(\theta)   ).
  \end{array}
 \]
 Hence, letting 
 \[
   \bar{\delta}_{j} \,=\, \frac{1}{n} \sum_{i=1}^n \delta_{i,j}
   \]
 for all $j= 1,\ldots,m$, we obtain that 
  \[
  \begin{array}{lll}
    \displaystyle    \frac{1}{m} \sum_{j=1}^m (\theta_j -  \theta^*_j)^2  & \leq    & \displaystyle  \frac{2}{nm}\sum_{i=1}^n\sum_{j=1}^m (\theta_{j}  - \theta_j^*   )( \epsilon_{i,j} + \delta_{i,j}  )  \,+\,  \lambda (  \mathrm{TV}^{(k)}(\theta^*) -   \mathrm{TV}^{(k)}(\theta)   )\\
       &\leq  & \displaystyle  \frac{2}{m}\sum_{j=1}^m ( \theta_j - \theta_j^* )\left( \frac{1}{n} \sum_{j=1}^n \epsilon_{i,j}  \right) \,+\,  \frac{2}{m} \| \theta-\theta^*\| \cdot\| \bar{\delta}\|   \,+\,\\
       &&\displaystyle \lambda (  \mathrm{TV}^{(k)}(\theta^*) -   \mathrm{TV}^{(k)}(\theta)   )\\
          &\leq  & \displaystyle  \frac{2}{m}\sum_{j=1}^m ( \theta_j - \theta_j^* )\left( \frac{1}{n} \sum_{j=1}^n \epsilon_{i,j}  \right) \,+\,  \frac{1}{2m} \| \theta-\theta^*\|^2\,+\, \frac{2}{m}  \| \bar{\delta}\|^2   \,+\,\\
       &&\displaystyle \lambda (  \mathrm{TV}^{(k)}(\theta^*) -   \mathrm{TV}^{(k)}(\theta)   ).
  \end{array}
 \]

 This implies that 
 \begin{equation}
     \label{eqn:main_fd}
        \begin{array}{lll}
    \displaystyle    \frac{1}{2m} \sum_{j=1}^m (\theta_j -  \theta^*_j)^2  & \leq    &
    \displaystyle  \frac{2}{m}\sum_{j=1}^m ( \theta_j - \theta_j^* )\left( \frac{1}{n} \sum_{j=1}^n \epsilon_{i,j}  \right) \,+\,  \frac{2}{m}  \| \bar{\delta}\|^2   \,+\,\\
       &&\displaystyle \lambda (  \mathrm{TV}^{(k)}(\theta^*) -   \mathrm{TV}^{(k)}(\theta)   ).
  \end{array}
 \end{equation}

Therefore, from (\ref{eqn:main_fd}),  for $\theta \in \Lambda$, 
\begin{equation}
    \label{eqn:800}
       \begin{array}{lll}
    \displaystyle  \mathrm{TV}^{(k)}(\theta- \theta^*)  &\leq &\displaystyle \mathrm{TV}^{(k)}(\theta )\,+\,  \mathrm{TV}^{(r)}(\theta^*) \\
        &\leq  &\displaystyle  \frac{2}{m \lambda}\sum_{j=1}^m ( \theta_j - \theta_j^* )\left( \frac{1}{n} \sum_{j=1}^n \epsilon_{i,j}  \right) \,+\,  \frac{2}{m\lambda }  \| \bar{\delta}\|^2  \,+\,  2\mathrm{TV}^{(k)}(\theta^*).
   \end{array}
\end{equation}

Next, let $\theta \in \Lambda$ and suppose that $\| \theta - \theta^*\| \leq \eta $ for some $\eta>0$, and $\mathrm{TV}^{(k)}(\theta ) \geq 5 \mathrm{TV}^{(k)}(\theta^* )$. Then 
 \[
 \mathrm{TV}^{(k)}(\theta - \theta^* )   \,\geq \, \mathrm{TV}^{(k)}(\theta  ) - \mathrm{TV}^{(k)}(\theta^* )  \geq 4 \mathrm{TV}^{(k)}(\theta^* ). 
 \]
Hence, we can set, 
\[
s   =  \frac{4  \mathrm{TV}^{(k)}(\theta^* )}{ \mathrm{TV}^{(k)}(\theta - \theta^* )  } \in [0,1],
\]
and 
\[
   \theta^{\prime }\,=\, s\theta + (1-s) \theta^* \in \Lambda. 
\]
Then
\[
\|\theta^{\prime} - \theta^*\|^2 \,\leq\, \|\theta- \theta^*\|^2 \,\leq\, \eta^2,
\]
and
\[
 \mathrm{TV}^{(k)}(\theta^* -  \theta^{\prime }  ) \,=\, s\mathrm{TV}^{(k)}(\theta^* -  \theta  )\,=\, 4  \mathrm{TV}^{(k)}(\theta^* ).
\]
Hence, from (\ref{eqn:800}), 
 \[
 2  \mathrm{TV}^{(k)}(\theta^* ) \,\leq \,  \frac{2}{m \lambda}\sum_{j=1}^m ( \theta_j^{\prime } - \theta_j^* )\tilde{\epsilon}_j \,+\,  \frac{2}{m\lambda }  \| \bar{\delta}\|^2,
 \]
 where 
 \[
\tilde{\epsilon}_j \,:=\,  \frac{1}{n} \sum_{i=1}^n \epsilon_{i,j},
 \]
 for $j=1,\ldots,m$.
 
 Then, letting 
 \begin{equation}
     \label{eqn:fd_lambda}
      \lambda \,=\, \frac{\eta^2}{4 m \mathrm{TV}^{(k)}(\theta^*)   },
 \end{equation}
 we obtain 
 \[
  \frac{2}{m} (\theta^{\prime}-\theta^*)^{\top}\tilde{\epsilon}  +   \frac{2}{m }  \| \bar{\delta}\|^2 \,\geq\, \frac{\eta^2}{2m}.
 \]

Therefore, 
\begin{equation}
    \label{eqn:810}
    \Omega_1\,:=\, \left\{ \underset{\theta\,:\,  \| \theta -\theta^*   \|\leq \eta }{ \sup}\,  \,   \mathrm{TV}^{(k)}(\theta)  \geq\,  5\mathrm{TV}^{(k)}(\theta^*)   \right\} \,\subset \, \Omega_2,
\end{equation}
where
\[
\Omega_2\,:=\, \left\{ \underset{\theta\,:\,  \| \theta -\theta^*   \|\leq \eta,\, \,\mathrm{TV}^{(k)}(\theta - \theta^*)\,\leq\,  4\mathrm{TV}^{(k)}(\theta^*)     }{ \sup} 
 \frac{2}{m} (\theta-\theta^*)^{\top}\tilde{\epsilon}  +   \frac{2}{m }  \| \bar{\delta}\|^2 \,\geq\, \frac{\eta^2}{2m} \right\}.
\]
Next, if $\| \theta^* - \hat{\theta}\|>\eta$,  then there exists $\theta \in \Lambda$ such that $\| \theta - \theta^*\| =\eta$. Therefore, from (\ref{eqn:main_fd}),
\[
  \frac{\eta^2}{2m}\,\leq\,   \frac{2}{m} (\theta-\theta^*)^{\top}\tilde{\epsilon}\,  + \,  \frac{2}{m }  \| \bar{\delta}\|^2 \,+\, \lambda  \mathrm{TV}^{(k)}(\theta^*),
\]
which by our choice of $\lambda$,
\[
  \frac{\eta^2}{4m}\,\leq\,   \frac{2}{m} (\theta-\theta^*)^{\top}\tilde{\epsilon}\,  + \,  \frac{2}{m }  \| \bar{\delta}\|^2.
\]
Therefore, setting $\eta^2 = \eta_1^2 + \eta_2^2$ for some $\eta_1>0$ and $\eta_2>0$ to be chosen later, we obtain that 
\begin{equation}
    \label{eqn:811}
    \begin{array}{lll}
      \displaystyle   \mathbb{P}( \|\hat{\theta} -\theta^*\| >  \eta  )   &\leq&  \displaystyle \mathbb{P}( \{\|\hat{\theta} -\theta^*\| >  \eta \} \cap \Omega_1^c ) \,+\,\mathbb{P}(\Omega_1)  \\
         & \leq&    \displaystyle \mathbb{P}( \{\|\hat{\theta} -\theta^*\| >  \eta \} \cap \Omega_1^c ) \,+\,\mathbb{P}(\Omega_2)\\
          & \leq&   \displaystyle \mathbb{P}\bigg( \underset{\theta  \in \Lambda\,:\,\|\theta-\theta^*\|\leq \eta,\,\,\,  \mathrm{TV}^{(k)}(\theta)  \leq\,  5\mathrm{TV}^{(k)}(\theta^*)     }{\sup} \big\{ \frac{2}{m} (\theta-\theta^*)^{\top}\tilde{\epsilon}\,  + \,  \frac{2}{m }  \| \bar{\delta}\|^2\big\} \geq \frac{\eta^2}{4m}   \bigg)\,+\,\\
          & & \displaystyle \mathbb{P}(\Omega_2) \\
             & \leq&   \displaystyle 2\mathbb{P}\bigg( \underset{\theta  \in \Lambda\,:\,\|\theta-\theta^*\|\leq \eta,\,\,\,  \mathrm{TV}^{(k)}(\theta)  \leq\,  5\mathrm{TV}^{(k)}(\theta^*)     }{\sup} \big\{ \frac{2}{m} (\theta-\theta^*)^{\top}\tilde{\epsilon}\,  + \,  \frac{2}{m }  \| \bar{\delta}\|^2\big\} \geq \frac{\eta^2}{4m}   \bigg)\\
                         & \leq&   \displaystyle 2\mathbb{P}\bigg( \underset{\theta  \in \Lambda\,:\,\|\theta-\theta^*\|\leq \eta,\,\,\,  \mathrm{TV}^{(k)}(\theta)  \leq\,  5\mathrm{TV}^{(k)}(\theta^*)     }{\sup}  \frac{2}{m} (\theta-\theta^*)^{\top}\tilde{\epsilon}\, \geq \frac{\eta_1^2}{4m}   \bigg)\,+\,  \\
                    &&   \displaystyle 2\mathbb{P}\bigg(   \frac{2}{m }  \| \bar{\delta}\|^2\geq \frac{\eta_2^2}{4m}   \bigg)\\     
    \end{array}
\end{equation}
where the last inequality follows from union bound. 

As a result, from the previous inequality and Markov's inequality, 
\begin{equation}
    \label{eqn:812}
        \begin{array}{lll}
         \mathbb{P}( \|\hat{\theta} -\theta^*\| >  \eta  ) &\leq& \displaystyle \frac{16}{\eta_1^2} \mathbb{E}\left(  \underset{\theta  \in \Lambda\,:\,\|\theta-\theta^*\|\leq \eta,\,\,\,  \mathrm{TV}^{(r)}(\theta)  \leq\,  5\mathrm{TV}^{(k)}(\theta^*)     }{\sup}   (\theta-\theta^*)^{\top}\tilde{\epsilon} \right)\,+\,\\
         &&\displaystyle \frac{16}{\eta_2^2  }\mathbb{E}\left(  \| \bar{\delta}\|^2  \right)\\
          & =:&\displaystyle A_1+A_2,
     \end{array}
\end{equation}
and we proceed to bound $A_1$ and $A_2$. 

To bound $A_1$,  we observe that the entries of $\sqrt{n} \tilde{\epsilon} $ are independent and satisfy that $\sqrt{n} \tilde{\epsilon}_j$ is sub-Gaussian with parameter uniformly bounded in $j$. As a result, 
\begin{equation}
    \label{eqn:813}
   \begin{array}{lll}
         A_1 &\leq& \displaystyle\frac{16}{\eta_1^2  \sqrt{n} } \mathbb{E}\left(  \underset{\theta  \in \mathbb{R}^n\,:\,\|\theta-\theta^*\|\leq \eta_1,\,\,\,  \mathrm{TV}^{(k)}(\theta)  \leq\,  5\mathrm{TV}^{(k)}(\theta^*)     }{\sup}  (\theta-\theta^*)^{\top}  (\sqrt{n} \tilde{\epsilon}) \right)\\
    &\leq & \displaystyle  \frac{C}{\eta_1^2  \sqrt{n} } \left[   \eta_1\left(  \frac{\sqrt{m  }\cdot\mathrm{TV}^{(k)}(\theta^*)     }{\eta_1} \right)^{1/(2k)} \,+\,\eta_1 \sqrt{\log (me)}   \right]
   \end{array}
\end{equation}
where $C_1 >0$  is a positive constant and the second inequality holds from Lemma B.1 in \cite{guntuboyina2020adaptive}.

Hence, given $\varepsilon>0$, there exists $C_{1,\varepsilon}>0$ depending on $\varepsilon$ such that 
\[
\eta_1 \,=\, C_{1,\varepsilon}  m^{1/(4k+2)} n^{-k/(2k+1)}(V^*)^{1/(2k+1)}, 
\]
satisfies 
\begin{equation}
    \label{eqn:814}
    \frac{C}{\eta_1^2  \sqrt{n} } \left[   \eta_1\left(  \frac{\sqrt{m  }\cdot\mathrm{TV}^{(k)}(\theta^*)     }{\eta_1} \right)^{1/(2k)} \,+\,\eta_1 \sqrt{\log (me)}   \right] \,\leq \, \frac{\varepsilon}{2}
\end{equation}
provided that 
\[
\frac{\sqrt{\log m }}{\sqrt{n} \eta_1  } \,=\,\frac{\sqrt{\log m }}{\sqrt{n} \cdot  C_{1,\varepsilon}  m^{1/(4k+2)} n^{-k/(2k+1)}(V^*)^{1/(2k+1)}  } \,\rightarrow \,0, 
\]
which holds by our assumption on (\ref{eqn:815}).

Furthermore, 
\[
\begin{array}{lll}
  \displaystyle  \mathbb{E}\left(  \| \bar{\delta}\|^2  \right) &= &  \displaystyle \mathbb{E}\bigg( \sum_{j=1}^m \bigg(\frac{1}{n} \sum_{i=1}^n \delta_{i,j}  \bigg)^2  \bigg)\\
   & = & \displaystyle \mathbb{E}\bigg(  \frac{1}{n^2} \sum_{j=1}^m \sum_{i=1}^n \delta_{i,j}^2  \,+\, \frac{2}{n^2} \sum_{j=1}^m \sum_{i \neq i^{\prime}  }  \delta_{i,j} \delta_{ i^{\prime}, j }  \bigg)\\
      & = & \displaystyle \mathbb{E}\bigg(  \frac{1}{n^2} \sum_{j=1}^m \sum_{i=1}^n \delta_{i,j}^2  \bigg)\\
         & = &\displaystyle \frac{1}{n^2} \sum_{j=1}^m \displaystyle \mathbb{E}\bigg(  \sum_{i=1}^n \delta_{i,j}^2  \bigg)\\
               & = & \displaystyle O\bigg( \frac{m}{n}\bigg).
\end{array}
\]
Hence, there exists a  constant $C_{2,\varepsilon}>0$  such that $\eta_2$ given as 
\[
\eta_2 \,=\, \frac{ C_{2,\varepsilon}  \sqrt{m} }{ \sqrt{n} }
\]
satisfies 
\begin{equation}
    \label{eqn:816}
    \frac{16}{\eta_2^2  }\mathbb{E}\left(  \| \bar{\delta}\|^2  \right) \,\leq\, \frac{\varepsilon}{2}.
\end{equation}
Therefore, (\ref{eqn:812})--(\ref{eqn:816})  lead to 
 \[
     \mathbb{P}( \|\hat{\theta} -\theta^*\| >  \eta  )\,\leq\, \varepsilon
 \]
 and the claim follows. 
\end{proof}

\newpage

\section{Canonical scaling}

Now we elaborate on the convergence rates obtained in Theorems \ref{thm:main tv} and \ref{Penalized}. Specifically, recall that these show that, ignoring logarithmic factors, the {\color{black}{Locally Adaptive Regression Splines}} estimator attains the rate 
\[ 
\frac{ (J_k(f^*) +1)^2 }{(mn)^{\frac{ 2 k   }{2 k +1 }} }   +  \frac{1}{  n }
\]
in terms of the $\ell_2$ squared error. Furthermore, according to (\ref{eqn:lower2}), in the function class 
\[
    \{  f^*\,:\,  J_k(f^*) \leq  V \}
\]
for $V \asymp 1$, the optimal rate is 
\[
 \frac{1}{n }  \,+\, \frac{1}{  (nm)^{\frac{2k}{2k+1}} }.  
\]
Thus, the {\color{black}{Locally Adaptive Regression Splines}} estimator is minimax optimal when $J_k(f^*) \asymp 1$. A question that we leave open is whether or not the {\color{black}{Locally Adaptive Regression Splines}} estimator is optimal for general $J_k(f^*)$ in the model with temporal and spatial dependence that we study.  However, we now elaborate more on why the setting $J_k(f^*)\asymp 1 $ is natural, and it is often referred as canonical scaling. Towards that end, consider that the regression function $f^*$ is fixed, thus independent of $n$ and $m$ in Assumption \ref{assume:tv functions}. Then, we recall that 
\[
J_k(f^*)\,=\,\mathrm{TV}(f^{ (k-1) })
\] 
which does not depend on $n$ and $m$, and hence we can think of $J_k(f^*)$ as being a constant or  $J_k(f^*) \,\asymp\, 1$. This is known as the canonical setting, see the discussion on Page 23 of  
\cite{padilla2024variance} and also \cite{sadhanala2016total,sadhanala2017higher}, and \cite{madrid2022risk}.

\newpage 
\section{Additional Technical Results for proof of Theorem \ref{Cest-d>1} and Theorem \ref{Pest-d>1}}

\begin{lemma}\label{lemma7}
   Suppose Assumption \ref{assume:tv functions} holds with $\mathbb{E}\left(\left\|\delta_i\right\|_{2}\right)<C_\delta$. Then for any $\kappa>0$
$$
\mathbb{P}\left(\left\|\frac{1}{n} \sum_{i=1}^n \delta_i\right\|_{2} \geq  \sqrt{\frac{C_\delta}{n \kappa}}\right) \leq \kappa .
$$ 
\end{lemma}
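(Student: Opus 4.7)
The plan is to reduce the claim to a straightforward application of Markov's inequality combined with a second-moment bound that already appears in the excerpt, namely Lemma \ref{expected-value-b}, which asserts that
\[
\mathbb{E}\!\left(\Big\|\frac{1}{n}\sum_{i=1}^{n}\delta_i\Big\|_{2}^{2}\right)\lesssim \frac{1}{n}
\]
under Assumption \ref{assume:tv functions} together with the required moment control on $\|\delta_0\|_2$. (The quantity $C_\delta$ in the statement should be read as the absolute constant produced by that second-moment bound; in particular, the exponent $2p$ in Lemma \ref{lemma7BetaMing} can be accommodated because the sub-Gaussian hypothesis in Assumption \ref{assume:tv functions}{\bf c} gives all polynomial moments of $\|\delta_0\|_2$.)

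First I would observe that, since $\|\cdot\|_{2}\ge 0$, the event of interest can be squared:
\[
\Big\{\big\|\tfrac{1}{n}\textstyle\sum_{i=1}^{n}\delta_i\big\|_{2}\ge \sqrt{C_\delta/(n\kappa)}\Big\}
=\Big\{\big\|\tfrac{1}{n}\textstyle\sum_{i=1}^{n}\delta_i\big\|_{2}^{2}\ge C_\delta/(n\kappa)\Big\}.
\]
Then Markov's inequality applied to the nonnegative random variable $\|\frac{1}{n}\sum_{i}\delta_i\|_{2}^{2}$ gives
\[
\mathbb{P}\!\left(\Big\|\tfrac{1}{n}\textstyle\sum_{i=1}^{n}\delta_i\Big\|_{2}^{2}\ge \frac{C_\delta}{n\kappa}\right)
\le \frac{n\kappa}{C_\delta}\,\mathbb{E}\!\left(\Big\|\tfrac{1}{n}\textstyle\sum_{i=1}^{n}\delta_i\Big\|_{2}^{2}\right).
\]

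Second, I would invoke Lemma \ref{expected-value-b} (whose proof uses Lemma \ref{lemma7BetaMing} to handle the cross terms $\mathbb{E}\langle\delta_i,\delta_j\rangle$ via the coupling bound $|\mathbb{E}\langle\delta_i,\delta_j\rangle|\lesssim \beta^{1/q}(|i-j|)\,\mathbb{E}(\|\delta_0\|_2^{2p})^{1/p}$ together with the exponential decay of $\beta_\delta(l)$ coming from Assumption \ref{assume:tv functions}{\bf c}) to bound the last expectation by $C_\delta/n$. Plugging this into the Markov bound collapses the prefactor to $\kappa$, which is exactly the statement we want.

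There is no real obstacle here: the technical work was already done in Lemma \ref{lemma7BetaMing} and Lemma \ref{expected-value-b}. The only minor subtlety is matching constants, i.e.\ identifying the $C_\delta$ in the hypothesis with the implicit constant produced by Lemma \ref{expected-value-b}; this is purely a notational reconciliation and does not require any new argument.
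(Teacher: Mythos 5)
Your proposal is correct and matches the paper's own proof essentially verbatim: the paper likewise bounds $\mathbb{E}\bigl(\|\frac{1}{n}\sum_{i=1}^n\delta_i\|_{2}^{2}\bigr)$ by $C_\delta/n$ (citing Lemma \ref{expected-value-b}, whose cross-term control rests on the coupling bound of Lemma \ref{lemma7BetaMing} and the exponential decay of $\beta_\delta$) and then concludes by Markov's inequality applied to the squared norm. Your remark that $C_\delta$ must be read as the constant produced by the second-moment bound is a fair observation, since the paper's stated hypothesis $\mathbb{E}(\|\delta_i\|_2)<C_\delta$ is used in its proof as a bound on $\mathbb{E}(\|\delta_i\|_2^2)$, a notational slip your reconciliation handles correctly.
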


\begin{proof}
    Note that by Lemma \ref{expected-value-b}
$$ \mathbb E \bigg (\big  \| \frac{1}{n} \sum_{i=1}^n \delta_i\big  \|_\lt^2 \bigg) \lesssim    \frac{1}{n} .$$
    
$$
\mathbb{E}\left(\left\|\frac{1}{n} \sum_{i=1}^n \delta_i\right\|_{2}^2\right)=\frac{1}{n^2} \sum_{i, j=1}^n \mathbb{E}\left(\left\langle\delta_i, \delta_j\right\rangle_{2}\right)=\frac{1}{n^2} \sum_i^n \mathbb{E}\left(\left\langle\delta_i, \delta_i\right\rangle_{2}\right)\le\frac{C_\delta}{n} .
$$
The desired result follows by applying the Markov's inequality to $\mathbb{E}\left(\left\|\frac{1}{n} \sum_{i=1}^n \delta_i\right\|_{2}^2\right)$.
\end{proof}
\begin{lemma}
\label{lemma8}
For any signal $\theta \in \mathbb{R}^{\sum_{i=1}^n m_i}$ on the grid $\left\{x_{i,j}\right\}_{i=1, j=1}^{n, m_i}$, it holds that with probability 1,
$$
\left\|\theta^I\right\|_2 \leq\|\theta\|_2.
$$
\end{lemma}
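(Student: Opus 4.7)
The plan is to reduce the bound to a counting/injection argument based directly on the definition of $\theta^I$. By Definition \ref{def-1-E},
\[
\|\theta^I\|_2^2 \,=\, \sum_{r=1}^{N^d} (\theta_r^I)^2 \,=\, \sum_{r:\, I_r \neq \emptyset} \theta_{i_r, j_r}^2,
\]
where $(i_r, j_r) = \arg\min_{(i',j')} \|x_{i',j'} - \mathfrak{c}_r\|$ selects a single design index for each nonempty cell, and the empty cells contribute zero. The right-hand side will be bounded by $\sum_{i,j} \theta_{i,j}^2 = \|\theta\|_2^2$ as soon as we can show that the selected indices $(i_r, j_r)$ are pairwise distinct across the nonempty cells.

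The key geometric step I would verify is that, almost surely, $(i_r, j_r) \in I_r$ whenever $I_r \neq \emptyset$. Each cell $\mathcal{I}_r$ is an axis-aligned cube of side $1/N$ centered at $\mathfrak{c}_r$ and therefore also coincides with the $\ell_2$-Voronoi cell of $\mathfrak{c}_r$ in the lattice $P_{\mathrm{lat}}(N)$, since the perpendicular bisectors between adjacent centers are precisely the coordinate-aligned faces of the cubes. Consequently, any design point lying in $\mathcal{I}_r$ is at least as close to $\mathfrak{c}_r$ as any design point lying outside. Under Assumption \ref{assume:tv functions}\textbf{a} the design density $v$ is bounded, so with probability one no design point falls on a cell boundary and the inequality is strict. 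The closest design point to $\mathfrak{c}_r$ therefore lies inside $\mathcal{I}_r$, i.e., in $I_r$.

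With this step in hand, the map $r \mapsto (i_r, j_r)$ restricted to $\{r : I_r \neq \emptyset\}$ is injective, because the family $\{I_r\}_{r=1}^{N^d}$ partitions the set of design-point indices by construction of the partition $\mathcal{P}_{1/N}$. Each index $(i,j)$ therefore appears at most once in the sum, so
\[
\|\theta^I\|_2^2 \,=\, \sum_{r:\, I_r \neq \emptyset} \theta_{i_r, j_r}^2 \,\le\, \sum_{i=1}^{n}\sum_{j=1}^{m_i} \theta_{i,j}^2 \,=\, \|\theta\|_2^2,
\]
and the result follows upon taking square roots. The only mild technicality is the almost-sure avoidance of ties between design points and cell boundaries, which is exactly the content of the ``with probability one'' qualifier and is handled by the continuity of the sampling density.
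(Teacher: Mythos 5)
Your overall strategy---reduce the bound to an injection argument showing the selected indices are pairwise distinct, then compare sums of squares---is the right one, and it is essentially the content behind the paper's one-line proof (``It follows directly from the definition''). But your key geometric step is incorrect. The fact that $\mathcal{I}_r$ coincides with the $\ell_2$-Voronoi cell of $\mathfrak{c}_r$ says that any point \emph{inside} $\mathcal{I}_r$ is closer to $\mathfrak{c}_r$ than to any \emph{other lattice center}; it does \emph{not} say that the design point nearest to $\mathfrak{c}_r$ lies inside $\mathcal{I}_r$. These compare different quantities, and the second claim fails for $d\geq 2$: with $h=1/(2N)$ and $\mathfrak{c}_r$ at the origin, a design point at $(h-\epsilon,h-\epsilon)$ lies inside the cell at $\ell_2$ distance $\sqrt{2}(h-\epsilon)$, while a design point at $(h+\epsilon,0)$ lies in the neighboring cell at distance $h+\epsilon<\sqrt{2}(h-\epsilon)$ for small $\epsilon$. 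So $I_r\neq\emptyset$ and yet the global argmin is outside $I_r$; this is a corner-versus-face phenomenon, not a boundary tie, so your almost-sure no-ties remark cannot repair it. Worse, if the argmin in Definition \ref{def-1-E} is read literally as ranging over all design points, injectivity of $r\mapsto(i_r,j_r)$ genuinely fails and the lemma itself becomes false: place one design point at $(h-\delta,0)$ and one corner point in each of the two adjacent cells; then $(h-\delta,0)$ is the nearest design point to \emph{both} centers, and taking $\theta$ equal to $1$ there and $0$ elsewhere gives $\|\theta^I\|_2^2=2>1=\|\theta\|_2^2$.

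The resolution is that the argmin is meant to be restricted to design points in the cell, i.e.\ $(i_r,j_r)=\arg\min_{(i',j')\in I_r}\|x_{i',j'}-\mathfrak{c}_r\|$. This reading is forced by consistency with Definition \ref{def-2-E}, where the chosen $x_{i,j}$ is explicitly required to lie in the same cell as $x_{\alpha,\beta}$, and it matches the construction in \cite{madrid2020adaptive}. Under that reading no geometry is needed at all: the sets $\{I_r\}_{r=1}^{N^d}$ partition the index set $[n]\times[m_i]$ by construction, so distinct nonempty cells automatically select distinct indices, and $\|\theta^I\|_2^2=\sum_{r:\,I_r\neq\emptyset}\theta_{i_r,j_r}^2\leq\sum_{i=1}^n\sum_{j=1}^{m_i}\theta_{i,j}^2=\|\theta\|_2^2$ holds deterministically, with the probability-one qualifier covering only well-definedness (uniqueness of the argmin and of cell membership). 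That is exactly the paper's argument; your proposal reaches the correct conclusion but by attempting to justify the literal, global-argmin definition, under which the statement is not even true.
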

\begin{proof}
    It follows directly from the definition.
\end{proof}

 \begin{lemma}
     \label{H-R-prop6}
     Let $d\ge2.$ For the incidence matrix of the regular grid on $N^d$ nodes in dimensions, the inverse scaling factor $\rho \leq$ $C(d)$, for some $C(d)>0$.
 \end{lemma}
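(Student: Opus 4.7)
The plan is to identify the quantity $\rho^2 = \max_e \|D^\dagger_{,e}\|_2^2$ with a potential-theoretic quantity on the $d$-dimensional grid, and then use classical estimates from discrete harmonic analysis to show this quantity is bounded by a constant depending only on $d$ (for $d \geq 2$). This is exactly the content of Proposition 6 in \cite{hutter2016optimal}, which we would either cite directly or reprove using the following strategy.

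First, I would write $\|D^\dagger_{,e}\|_2^2 = \|L^\dagger(\chi_u - \chi_v)\|_2^2$ where $L = D^\top D$ is the graph Laplacian of the grid, $(u,v)$ are the endpoints of the edge $e$, and $\chi_u$ is the indicator vector of vertex $u$. This expresses $\rho^2$ as the squared $\ell_2$ norm of the electric potential induced by injecting unit current at $u$ and extracting it at $v$. Next, I would exploit the product structure of the grid graph: the Laplacian on the $d$-dimensional torus (or grid with appropriate boundary conditions) is diagonalized by the discrete Fourier/cosine transform, and has eigenvalues of the form $\lambda_\mathbf{k} = \sum_{j=1}^d 4\sin^2(\pi k_j/(2N))$, for $\mathbf{k} \in \{0,1,\dots,N-1\}^d \setminus \{\mathbf{0}\}$. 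Parseval's identity then lets me write $\|L^\dagger(\chi_u - \chi_v)\|_2^2$ as a sum involving $|\widehat{\chi_u - \chi_v}(\mathbf{k})|^2/\lambda_\mathbf{k}^2$.

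The remaining task is to bound this sum uniformly in $N$. The crucial observation is that since $u$ and $v$ are adjacent in the grid, $|\widehat{\chi_u - \chi_v}(\mathbf{k})|^2 = O(|\mathbf{k}|^2 / N^d)$ for small $|\mathbf{k}|$ (by the usual discrete gradient estimate) and is bounded above by $O(1/N^d)$ for all $\mathbf{k}$. Combined with $\lambda_\mathbf{k}^{-2} = O(|\mathbf{k}|^{-4})$ near the origin and $O(1)$ elsewhere, one sees that the sum behaves like a Riemann sum for $\int_{[-\pi,\pi]^d} |\xi|^{-2} d\xi$, which converges precisely when $d \geq 2$. Handling boundary effects (for the grid as opposed to the torus) only changes constants, and the conclusion is $\rho \leq C(d)$ for a constant depending only on $d$.

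The main obstacle is the careful estimate in the borderline case $d=2$, where the analogous continuous potential has a logarithmic singularity; there one needs to exploit the cancellation between the potential contributions from $u$ and $v$ (which are separated by a single edge) to see that the squared-$\ell_2$ norm of the discrete potential nevertheless remains bounded. All of this is carried out in \cite{hutter2016optimal}, so an acceptable proof is simply to invoke their Proposition 6 and note that the grid graph $G_{\text{lat}}$ used here satisfies their hypotheses.
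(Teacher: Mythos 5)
Your fallback conclusion coincides with the paper's entire proof: the paper disposes of Lemma~\ref{H-R-prop6} by citing Proposition~6 of \cite{hutter2016optimal}, and your reduction $\rho^2=\max_e\|L^\dagger(\chi_u-\chi_v)\|_2^2$ with $L=D^\top D$, followed by Fourier diagonalization on the torus, is indeed the mechanism behind that proposition. So the overall route is right.

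There is, however, a genuine error in your sketch at the dimension threshold. With the dipole estimate $|\widehat{\chi_u-\chi_v}(\mathbf{k})|^2\lesssim |\mathbf{k}|^2/(N^2N^d)$ and $\lambda_{\mathbf{k}}\asymp |\mathbf{k}|^2/N^2$, the summand is $\lesssim N^{2-d}|\mathbf{k}|^{-2}$, and the sum is a Riemann sum for $\int_{[-\pi,\pi]^d}|\xi|^{-2}\,d\xi$, which near the origin behaves like $\int_0 r^{d-3}\,dr$: it converges precisely when $d\ge 3$, \emph{not} $d\ge 2$ as you claim. In $d=2$ the sum is of order $\log N$, giving $\rho\asymp\sqrt{\log N}$, and the extra ``cancellation between the potential contributions from $u$ and $v$'' you propose to exploit is not available --- the dipole cancellation has already been spent in producing the factor $|\mathbf{k}|^2$ in the numerator (equivalently, the continuum dipole potential in two dimensions has squared $L_2$ norm $\asymp\int_1^N r^{-2}\,r\,dr=\log N$, which diverges). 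This is exactly why the paper carries the separate Lemma~\ref{H-R-prop4} (Proposition~4 of \cite{hutter2016optimal}) giving $\rho\lesssim\sqrt{\log N}$ in $d=2$; a constant bound in two dimensions would render it redundant. Note also that Proposition~6 of \cite{hutter2016optimal} is stated for $d\ge3$, so the ``$d\ge2$'' in the lemma as reproduced here is a slip in the statement itself (harmless downstream, since the paper only ever uses the combined bound $\rho\le C(d)\sqrt{\log N}$). In short: your citation covers $d\ge3$, your self-contained argument would fail at $d=2$, and the claimed constant bound is in fact false there.
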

 \begin{proof}
     This is Proposition 6 in \cite{hutter2016optimal}.
 \end{proof}

 \begin{lemma}
 \label{H-R-prop4}
      The incidence matrix $D_2$ of the $2 D$ grid on $N$ vertices has inverse scaling factor $\rho \lesssim \sqrt{\log (N)}$.
 \end{lemma}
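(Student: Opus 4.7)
My approach converts $\rho = \max_e \|D_2^\dagger e_e\|_2$ into a spectral sum over the graph Laplacian $L = D_2^T D_2$. Writing $D_2 = U\Sigma V^T$ for the singular value decomposition yields $D_2^\dagger = L^\dagger D_2^T$, so for any oriented edge $e = (u,v)$ the corresponding column is $D_2^\dagger e_e = L^\dagger(e_u - e_v)$, i.e.\ the discrete Green's potential for a unit source--sink pair at $(u,v)$. Diagonalizing $L = \sum_i \lambda_i \phi_i \phi_i^T$ and using $(L^\dagger)^2 = \sum_{\lambda_i > 0} \lambda_i^{-2} \phi_i \phi_i^T$, I obtain
\[
\|D_2^\dagger e_e\|_2^2 \;=\; (e_u - e_v)^T (L^\dagger)^2 (e_u - e_v) \;=\; \sum_{i:\,\lambda_i>0} \frac{(\phi_i(u) - \phi_i(v))^2}{\lambda_i^2}.
\]
It therefore suffices to bound this sum by $O(\log N)$ uniformly in $e$.

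For the $2$D grid on $N = M^2$ vertices with Neumann Laplacian, the spectrum is explicit: the eigenvectors are the normalized product cosines $\phi_{j,k}(x_1,x_2) = c_{j,k}\cos(\pi j(x_1+1/2)/M)\cos(\pi k(x_2+1/2)/M)$ with $c_{j,k}^2 \asymp 1/N$, and the eigenvalues are $\lambda_{j,k} = 4\sin^2(\pi j/(2M)) + 4\sin^2(\pi k/(2M))$. By Jordan's inequality, $\lambda_{j,k} \gtrsim (j^2 + k^2)/M^2$ uniformly for $(j,k) \in [0,M)^2 \setminus \{(0,0)\}$, and for an edge $e=(u,v)$ in the $x_1$ direction, the mean value theorem gives $(\phi_{j,k}(u) - \phi_{j,k}(v))^2 \lesssim c_{j,k}^2 \min(1, j^2/M^2)$; the symmetric bound holds for $x_2$-edges.

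Substituting these estimates and splitting into low-frequency ($\max(j,k) < M/2$) and high-frequency regions, the low-frequency contribution reduces to
\[
\sum_{(j,k)\ne(0,0),\; 0 \le j,k < M/2} \frac{j^2}{(j^2+k^2)^2},
\]
which I bound by $O(\log M)$ via comparison with the polar-coordinate integral $\int_1^M \int_0^{\pi/2} r^{-1}\cos^2\theta \, d\theta\,dr = (\pi/4)\log M + O(1)$. The high-frequency region contributes only $O(1)$: each term is at most $1/N$ (since $\lambda_{j,k} \ge 2$ is bounded away from zero and $(\phi_{j,k}(u)-\phi_{j,k}(v))^2 \le 4 c_{j,k}^2 \lesssim 1/N$) and there are $O(N)$ such terms. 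Combining, $\rho^2 \lesssim \log M \asymp \log N$, hence $\rho \lesssim \sqrt{\log N}$.

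The main technical obstacle is carrying out the spectral bounds uniformly over all modes --- in particular near the corner $(j,k)\sim(M,M)$, where the Taylor approximation $\sin^2 x \approx x^2$ is invalid --- and rigorously passing from the lattice sum to the logarithmic integral near the removed origin $(0,0)$. The low/high-frequency split, combined with Jordan's inequality in the low regime and a crude uniform bound (on both $\lambda_{j,k}$ from below and $c_{j,k}^2$ from above) in the high regime, addresses both issues cleanly; a routine integral comparison then yields the $\log M$ estimate on the low-frequency sum.
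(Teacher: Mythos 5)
The paper does not actually prove this lemma: its ``proof'' is a one-line citation to Proposition~4 of \cite{hutter2016optimal}. Your argument is therefore a self-contained reconstruction, and it is correct; it follows essentially the same spectral route as the cited source (diagonalize the grid Laplacian in the product-cosine basis and bound the resulting lattice sum). The key identities check out: with $D_2$ having rows $e_u - e_v$ over edges, $D_2^\dagger = L^\dagger D_2^T$, so the column of $D_2^\dagger$ indexed by $e=(u,v)$ is $L^\dagger(e_u-e_v)$ and $\|D_2^\dagger e_e\|_2^2 = \sum_{\lambda_i>0} (\phi_i(u)-\phi_i(v))^2/\lambda_i^2$; the eigenpairs you quote are the correct ones for the $M\times M$ grid (Cartesian product of paths), with $c_{j,k}^2 \in \{1,2,4\}/N$ absorbed into $\asymp 1/N$; the difference estimate $(\phi_{j,k}(u)-\phi_{j,k}(v))^2 \lesssim c_{j,k}^2\min(1, j^2/M^2)$ for an $x_1$-edge follows from the product-to-sum formula $\cos a - \cos b = -2\sin\big(\tfrac{a+b}{2}\big)\sin\big(\tfrac{a-b}{2}\big)$ rather than the mean value theorem, but either yields the stated bound; and the low-frequency sum $\sum_{(j,k)\neq(0,0)} j^2/(j^2+k^2)^2$ is indeed $O(\log M)$ by the polar-coordinate comparison, with the axis terms ($k=0$) summing to $O(1)$. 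The high-frequency bound $\lambda_{j,k}\ge 2$ for $\max(j,k)\ge M/2$ and the count of $O(N)$ terms each of size $O(1/N)$ are also correct, giving $\rho^2 \lesssim \log M \asymp \log N$.

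One small remark: your low/high-frequency split is actually unnecessary. Since you invoke Jordan's inequality ($\sin x \ge \tfrac{2}{\pi}x$ on $[0,\pi/2]$) rather than a Taylor expansion, the lower bound $\lambda_{j,k} \gtrsim (j^2+k^2)/M^2$ is valid uniformly on all of $[0,M)^2\setminus\{(0,0)\}$, including the corner $(j,k)\sim(M,M)$; combined with the numerator bound $c_{j,k}^2\, j^2/M^2$ this controls the \emph{entire} sum by $\sum_{(j,k)\neq(0,0)} j^2/(j^2+k^2)^2 = O(\log M)$ in one stroke. The split is harmless but the stated motivation for it (failure of the quadratic approximation near $(M,M)$) does not apply to the bound you actually use.
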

 \begin{proof}
     This is Proposition 4 in \cite{hutter2016optimal}.
 \end{proof}

\vskip 0.2in
\bibliography{references}

@article{sadhanala2017higher,
  title={Higher-order total variation classes on grids: Minimax theory and trend filtering methods},
  author={Sadhanala, Veeranjaneyulu and Wang, Yu-Xiang and Sharpnack, James L and Tibshirani, Ryan J},
  journal={Advances in Neural Information Processing Systems},
  volume={30},
  year={2017}
}

@article{padilla2024variance,
  title={Variance estimation in graphs with the fused lasso},
  author={Padilla, Oscar Hernan Madrid},
  journal={Journal of Machine Learning Research},
  volume={25},
  number={250},
  pages={1--45},
  year={2024}
}

@article{madrid2022risk,
  title={Risk bounds for quantile trend filtering},
  author={Madrid Padilla, Oscar Hernan and Chatterjee, Sabyasachi},
  journal={Biometrika},
  volume={109},
  number={3},
  pages={751--768},
  year={2022},
  publisher={Oxford University Press}
}

@article{sadhanala2016total,
  title={Total variation classes beyond 1d: Minimax rates, and the limitations of linear smoothers},
  author={Sadhanala, Veeranjaneyulu and Wang, Yu-Xiang and Tibshirani, Ryan J},
  journal={Advances in Neural Information Processing Systems},
  volume={29},
  year={2016}
}

@article{tibshirani2014adaptive,
  title={Adaptive piecewise polynomial estimation via trend filtering},
  author={Tibshirani, Ryan J},
  year={2014}
}

@article{guntuboyina2020adaptive,
	title={Adaptive risk bounds in univariate total variation denoising and trend filtering},
	author={Guntuboyina, Adityanand and Lieu, Donovan and Chatterjee, Sabyasachi and Sen, Bodhisattva},
		journal={The Annals of Statistics},
			volume={48},
				pages={205–229},
	year={2020}
}

@article{madrid2020adaptive,
  title={Adaptive nonparametric regression with the K-nearest neighbour fused lasso},
  author={Madrid Padilla, Oscar Hernan and Sharpnack, James and Chen, Yanzhen and Witten, Daniela M},
  journal={Biometrika},
  volume={107},
  number={2},
  pages={293--310},
  year={2020},
  publisher={Oxford University Press}
}

@article{wakayama2021trend,
  title={Trend filtering for functional data},
  author={Wakayama, Tomoya and Sugasawa, Shonosuke},
  journal={arXiv preprint arXiv:2104.02456},
  year={2021}
}

@article{rice2001nonparametric,
  title={Nonparametric mixed effects models for unequally sampled noisy curves},
  author={Rice, John A and Wu, Colin O},
  journal={Biometrics},
  volume={57},
  number={1},
  pages={253--259},
  year={2001},
  publisher={Wiley Online Library}
}

@article{kim2009ell_1,
	title={$\ell_1$ trend filtering},
	author={Kim, Seung-Jean and Koh, Kwangmoo and Boyd, Stephen and Gorinevsky, Dimitry},
	journal={SIAM review},
	volume={51},
	number={2},
	pages={339--360},
	year={2009},
	publisher={SIAM}
}

@article{tibshirani2005sparsity,
  title={Sparsity and smoothness via the fused lasso},
  author={Tibshirani, Robert and Saunders, Michael and Rosset, Saharon and Zhu, Ji and Knight, Keith},
  journal={Journal of the Royal Statistical Society: Series B (Statistical Methodology)},
  volume={67},
  number={1},
  pages={91--108},
  year={2005},
  publisher={Wiley Online Library}
}

@article{rudin1992nonlinear,
  title={Nonlinear total variation based noise removal algorithms},
  author={Rudin, Leonid I and Osher, Stanley and Fatemi, Emad},
  journal={Physica D: nonlinear phenomena},
  volume={60},
  number={1-4},
  pages={259--268},
  year={1992},
  publisher={Elsevier}
}

@article{mammen1997locally,
  title={Locally adaptive regression splines},
  author={Mammen, Enno and Van De Geer, Sara},
  journal={The Annals of Statistics},
  volume={25},
  number={1},
  pages={387--413},
  year={1997},
  publisher={Institute of Mathematical Statistics}
}

@article{rice1991estimating,
  title={Estimating the mean and covariance structure nonparametrically when the data are curves},
  author={Rice, John A and Silverman, Bernard W},
  journal={Journal of the Royal Statistical Society: Series B (Methodological)},
  volume={53},
  number={1},
  pages={233--243},
  year={1991},
  publisher={Wiley Online Library}
}

@article{opsomer2001nonparametric,
  title={Nonparametric regression with correlated errors},
  author={Opsomer, Jean and Wang, Yuedong and Yang, Yuhong},
  journal={Statistical Science},
  pages={134--153},
  year={2001},
  publisher={JSTOR}
}

@article{wang1996function,
  title={Function estimation via wavelet shrinkage for long-memory data},
  author={Wang, Yazhen},
  journal={The Annals of Statistics},
  volume={24},
  number={2},
  pages={466--484},
  year={1996},
  publisher={Institute of Mathematical Statistics}
}

@article{hall1990nonparametric,
  title={Nonparametric regression with long-range dependence},
  author={Hall, Peter and Hart, Jeffrey D},
  journal={Stochastic Processes and Their Applications},
  volume={36},
  number={2},
  pages={339--351},
  year={1990},
  publisher={Elsevier}
}

@article{ferradans2014regularized,
  title={Regularized discrete optimal transport},
  author={Ferradans, Sira and Papadakis, Nicolas and Peyr{\'e}, Gabriel and Aujol, Jean-Fran{\c{c}}ois},
  journal={SIAM Journal on Imaging Sciences},
  volume={7},
  number={3},
  pages={1853--1882},
  year={2014},
  publisher={SIAM}
}

@article{ye2021non,
  title={Non-parametric quantile regression via the k-nn fused lasso},
  author={Ye, Steven Siwei and Padilla, Oscar Hernan Madrid},
  journal={The Journal of Machine Learning Research},
  volume={22},
  number={1},
  pages={4953--4990},
  year={2021},
  publisher={JMLRORG}
}

@article{hu2022voronoigram,
  title={The Voronoigram: Minimax Estimation of Bounded Variation Functions From Scattered Data},
  author={Hu, Addison J and Green, Alden and Tibshirani, Ryan J},
  journal={arXiv preprint arXiv:2212.14514},
  year={2022}
}

@article{elmoataz2008nonlocal,
  title={Nonlocal discrete regularization on weighted graphs: a framework for image and manifold processing},
  author={Elmoataz, Abderrahim and Lezoray, Olivier and Bougleux, S{\'e}bastien},
  journal={IEEE transactions on Image Processing},
  volume={17},
  number={7},
  pages={1047--1060},
  year={2008},
  publisher={IEEE}
}

@article{chambolle2009total,
  title={On total variation minimization and surface evolution using parametric maximum flows},
  author={Chambolle, Antonin and Darbon, J{\'e}r{\^o}me},
  journal={International journal of computer vision},
  volume={84},
  number={3},
  year={2009}
}

@article{ramdas2016fast,
  title={Fast and flexible ADMM algorithms for trend filtering},
  author={Ramdas, Aaditya and Tibshirani, Ryan J},
  journal={Journal of Computational and Graphical Statistics},
  volume={25},
  number={3},
  pages={839--858},
  year={2016},
  publisher={Taylor \& Francis}
}

@article{barbero2018modular,
  title={Modular proximal optimization for multidimensional total-variation regularization},
  author={Barbero, Alvaro and Sra, Suvrit},
  journal={The Journal of Machine Learning Research},
  volume={19},
  number={1},
  pages={2232--2313},
  year={2018},
  publisher={JMLR. org}
}

@article{johnson2013dynamic,
  title={A dynamic programming algorithm for the fused lasso and l 0-segmentation},
  author={Johnson, Nicholas A},
  journal={Journal of Computational and Graphical Statistics},
  volume={22},
  number={2},
  pages={246--260},
  year={2013},
  publisher={Taylor \& Francis}
}

@inproceedings{wang2015trend,
  title={Trend filtering on graphs},
  author={Wang, Yu-Xiang and Sharpnack, James and Smola, Alex and Tibshirani, Ryan},
  booktitle={Artificial Intelligence and Statistics},
  pages={1042--1050},
  year={2015},
  organization={PMLR}
}

@article{johnstone1997wavelet,
  title={Wavelet threshold estimators for data with correlated noise},
  author={Johnstone, Iain M and Silverman, Bernard W},
  journal={Journal of the royal statistical society: series B (statistical methodology)},
  volume={59},
  number={2},
  pages={319--351},
  year={1997},
  publisher={Wiley Online Library}
}

@article{padilla2017dfs,
	title={The DFS Fused Lasso: Linear-Time Denoising over General Graphs.},
	author={Padilla, Oscar Hernan Madrid and Sharpnack, James and Scott, James G and Tibshirani, Ryan J},
	journal={Journal of Machine Learning Research},
	volume={18},
	number={1},
	pages={6410--6445},
	year={2018}
}

@article{sadhanala2021multivariate,
  title={Multivariate trend filtering for lattice data},
  author={Sadhanala, Veeranjaneyulu and Wang, Yu-Xiang and Hu, Addison J and Tibshirani, Ryan J},
  journal={arXiv preprint arXiv:2112.14758},
  year={2021}
}

@article{cai2011optimal,
  title={Optimal estimation of the mean function based on discretely sampled functional data: Phase transition},
  author={Cai, T Tony and Yuan, Ming},
  year={2011}
}

@article{padilla2018adaptive,
  title={Adaptive Non-Parametric Regression With the $ K $-NN Fused Lasso},
  author={Padilla, Oscar Hernan Madrid and Sharpnack, James and Chen, Yanzhen and Witten, Daniela M},
  journal={arXiv preprint arXiv:1807.11641},
  year={2018}
}

@article{boyd2011distributed,
  title={Distributed optimization and statistical learning via the alternating direction method of multipliers},
  author={Boyd, Stephen and Parikh, Neal and Chu, Eric and Peleato, Borja and Eckstein, Jonathan and others},
  journal={Foundations and Trends{\textregistered} in Machine learning},
  volume={3},
  number={1},
  pages={1--122},
  year={2011},
  publisher={Now Publishers, Inc.}
}

@misc{breiman1984richard,
  title={e Richard A Olshen},
  author={Breiman, Leo and Friedman, Jerome and Stone, Charles J},
  year={1984},
  publisher={Classification and regression trees. CRC press}
}

@article{friedman1991multivariate,
  title={Multivariate adaptive regression splines},
  author={Friedman, Jerome H},
  journal={The annals of statistics},
  volume={19},
  number={1},
  pages={1--67},
  year={1991},
  publisher={Institute of Mathematical Statistics}
}

@article{breiman2001random,
  title={Random forests},
  author={Breiman, Leo},
  journal={Machine learning},
  volume={45},
  pages={5--32},
  year={2001},
  publisher={Springer}
}

@article{sadhanala2019additive,
  title={Additive models with trend filtering},
  author={Sadhanala, Veeranjaneyulu and Tibshirani, Ryan J},
  journal={The Annals of Statistics},
  volume={47},
  number={6},
  pages={3032--3068},
  year={2019},
  publisher={JSTOR}
}

@book{wainwright2019high,
  title={High-dimensional statistics: A non-asymptotic viewpoint},
  author={Wainwright, Martin J},
  volume={48},
  year={2019},
  publisher={Cambridge university press}
}

@article{bartlett2005local,
  title={Local rademacher complexities},
  author={Bartlett, Peter L and Bousquet, Olivier and Mendelson, Shahar},
  year={2005}
}

@article{dudley1967sizes,
  title={The sizes of compact subsets of Hilbert space and continuity of Gaussian processes},
  author={Dudley, Richard M},
  journal={Journal of Functional Analysis},
  volume={1},
  number={3},
  pages={290--330},
  year={1967},
  publisher={Elsevier}
}

@inproceedings{hutter2016optimal,
  title={Optimal rates for total variation denoising},
  author={H{\"u}tter, Jan-Christian and Rigollet, Philippe},
  booktitle={Conference on Learning Theory},
  pages={1115--1146},
  year={2016},
  organization={PMLR}
}

@article{JMLR:v15:vonluxburg14a,
  author  = {Ulrike von Luxburg and Agnes Radl and Matthias Hein},
  title   = {Hitting and Commute Times in Large Random Neighborhood Graphs},
  journal = {Journal of Machine Learning Research},
  year    = {2014},
  volume  = {15},
  number  = {52},
  pages   = {1751--1798},
  url     = {http://jmlr.org/papers/v15/vonluxburg14a.html}
}

@MISC{dataset,
author = {{Physical Sciences Laboratory}},
title = {{COBE SST2 and Sea-Ice}},
month = {April},
year = {2020},
howpublished={\url{https://psl.noaa.gov/data/gridded/data.cobe2.html}}
}

@article{stone1977consistent,
  title={Consistent nonparametric regression},
  author={Stone, Charles J},
  journal={The annals of statistics},
  pages={595--620},
  year={1977},
  publisher={JSTOR}
}

@article{li2010uniform,
  title={Uniform convergence rates for nonparametric regression and principal component analysis in functional/longitudinal data},
  author={Li, Yehua and Hsing, Tailen},
  year={2010}
}

@article{willett2005faster,
  title={Faster rates in regression via active learning},
  author={Castro, Rui and  Willett, Rebecca and  Nowak, Robert},
  journal={In
Tech. Rep., University of Wisconsin, Madison, June 2005, ECE-05-3 Technical Report, available at {\url{https://nowak.ece.wisc.edu/ECE-05-03.pdf}} },
  year={2005}
}

@book{doukhan2012mixing,
  title={Mixing: properties and examples},
  author={Doukhan, Paul},
  volume={85},
  year={2012},
  publisher={Springer Science \& Business Media}
}

@book{cressie2015statistics,
  title={Statistics for spatio-temporal data},
  author={Cressie, Noel and Wikle, Christopher K},
  year={2015},
  publisher={John Wiley \& Sons}
}

@article{aswi2019bayesian,
  title={Bayesian spatial and spatio-temporal approaches to modelling dengue fever: a systematic review},
  author={Aswi, Aswi and Cramb, SM and Moraga, Paula and Mengersen, Kerrie},
  journal={Epidemiology \& Infection},
  volume={147},
  pages={e33},
  year={2019},
  publisher={Cambridge University Press}
}

@article{briz2020spatio,
  title={A spatio-temporal analysis for exploring the effect of temperature on COVID-19 early evolution in Spain},
  author={Briz-Red{\'o}n, {\'A}lvaro and Serrano-Aroca, {\'A}ngel},
  journal={Science of the total environment},
  volume={728},
  pages={138811},
  year={2020},
  publisher={Elsevier}
}

@article{tian2021assessing,
  title={Assessing spatiotemporal characteristics of urban heat islands from the perspective of an urban expansion and green infrastructure},
  author={Tian, Peng and Li, Jialin and Cao, Luodan and Pu, Ruiliang and Wang, Zhongyi and Zhang, Haitao and Chen, Huilin and Gong, Hongbo},
  journal={Sustainable Cities and Society},
  volume={74},
  pages={103208},
  year={2021},
  publisher={Elsevier}
}

@article{liu2020high,
  title={High-spatiotemporal-resolution mapping of global urban change from 1985 to 2015},
  author={Liu, Xiaoping and Huang, Yinghuai and Xu, Xiaocong and Li, Xuecao and Li, Xia and Ciais, Philippe and Lin, Peirong and Gong, Kai and Ziegler, Alan D and Chen, Anping and others},
  journal={Nature Sustainability},
  volume={3},
  number={7},
  pages={564--570},
  year={2020},
  publisher={Nature Publishing Group UK London}
}

@article{wen2019novel,
  title={A novel spatiotemporal convolutional long short-term neural network for air pollution prediction},
  author={Wen, Congcong and Liu, Shufu and Yao, Xiaojing and Peng, Ling and Li, Xiang and Hu, Yuan and Chi, Tianhe},
  journal={Science of the total environment},
  volume={654},
  pages={1091--1099},
  year={2019},
  publisher={Elsevier}
}

@inproceedings{dong20174d,
  title={4D crop monitoring: Spatio-temporal reconstruction for agriculture},
  author={Dong, Jing and Burnham, John Gary and Boots, Byron and Rains, Glen and Dellaert, Frank},
  booktitle={2017 IEEE International Conference on Robotics and Automation (ICRA)},
  pages={3878--3885},
  year={2017},
  organization={IEEE}
}

@article{jayasiri2022spatio,
  title={Spatio-temporal analysis of water quality for pesticides and other agricultural pollutants in Deduru Oya river basin of Sri Lanka},
  author={Jayasiri, MMJGCN and Yadav, Sudhir and Dayawansa, NDK and Propper, Catherine R and Kumar, Virender and Singleton, Grant R},
  journal={Journal of Cleaner Production},
  volume={330},
  pages={129897},
  year={2022},
  publisher={Elsevier}
}

@book{wikle2019spatio,
  title={Spatio-temporal statistics with R},
  author={Wikle, Christopher K and Zammit-Mangion, Andrew and Cressie, Noel},
  year={2019},
  publisher={CRC Press}
}

@inproceedings{honorio2014tight,
  title={Tight bounds for the expected risk of linear classifiers and PAC-Bayes finite-sample guarantees},
  author={Honorio, Jean and Jaakkola, Tommi},
  booktitle={Artificial Intelligence and Statistics},
  pages={384--392},
  year={2014},
  organization={PMLR}
}

@article{wang2022low,
  title={Low-rank covariance function estimation for multidimensional functional data},
  author={Wang, Jiayi and Wong, Raymond KW and Zhang, Xiaoke},
  journal={Journal of the American Statistical Association},
  volume={117},
  number={538},
  pages={809--822},
  year={2022},
  publisher={Taylor \& Francis}
}

@book{ziemer2012weakly,
  title={Weakly differentiable functions: Sobolev spaces and functions of bounded variation},
  author={Ziemer, William P},
  volume={120},
  year={2012},
  publisher={Springer Science \& Business Media}
}

@article{JMLR:v23:20-543,
  author  = {Rajarshi Guhaniyogi and Cheng Li and Terrance D. Savitsky and Sanvesh Srivastava},
  title   = {Distributed Bayesian Varying Coefficient Modeling Using a Gaussian Process Prior},
  journal = {Journal of Machine Learning Research},
  year    = {2022},
  volume  = {23},
  number  = {84},
  pages   = {1--59},
  url     = {http://jmlr.org/papers/v23/20-543.html}
}

@article{JMLR:v26:23-0505,
  author  = {Rajarshi Guhaniyogi and Laura Baracaldo and Sudipto Banerjee},
  title   = {Bayesian Data Sketching for Varying Coefficient Regression Models},
  journal = {Journal of Machine Learning Research},
  year    = {2025},
  volume  = {26},
  number  = {98},
  pages   = {1--29},
  url     = {http://jmlr.org/papers/v26/23-0505.html}
}

@article{10.1214/22-STS868,
author = {Rajarshi Guhaniyogi and Cheng Li and Terrance Savitsky and Sanvesh Srivastava},
title = {{Distributed Bayesian Inference in Massive Spatial Data}},
volume = {38},
journal = {Statistical Science},
number = {2},
publisher = {Institute of Mathematical Statistics},
pages = {262 -- 284},
keywords = {Distributed Bayesian inference, Gaussian process, low-rank Gaussian process, massive spatial data, Wasserstein barycenter},
year = {2023},
doi = {10.1214/22-STS868},
URL = {https://doi.org/10.1214/22-STS868}
}

\end{document}